\newcommand{\href}[1]{#1} % does nothing, but defines the command so the
\appto{\bibsetup}{\raggedright}
\newcommand{\kket}[1]{\vert #1 \rangle\rangle}
\newcommand{\bbra}[1]{\langle\langle #1 \vert}
\newcommand{\Mod}[1]{\ (\mathrm{mod}\ #1)}
\newcommand\mtiny[1]{\mbox{\tiny\ensuremath{#1}}}
\newcommand{\dbrac}[1]{\llbracket #1 \rrbracket}
\newcommand\xqed[1]{%
  \leavevmode\unskip\penalty9999 \hbox{}\nobreak\hfill
  \quad\hbox{#1}}
\newcommand\demo{\xqed{$\triangle$}}
\let\origdoublepage\cleardoublepage
\newcommand{\clearemptydoublepage}{%
  \clearpage{
  
  \pagestyle{empty}
\renewcommand{\headrulewidth}{0pt}

  \origdoublepage}}
\let\cleardoublepage\clearemptydoublepage
\newtheorem{theorem}{Theorem}[section]
\newtheorem{thm}[theorem]{Theorem}
\theoremstyle{definition}
\newtheorem{defn}[theorem]{Definition}
\theoremstyle{definition}
\newtheorem{definition}[theorem]{Definition}
\theoremstyle{definition}
\newtheorem{proposition}[theorem]{Proposition}
\newtheorem{lemma}[theorem]{Lemma}
\newtheorem{conjecture}[theorem]{Conjecture}
\newtheorem{corollary}{Corollary}[theorem]
\newtheorem{ex}{Example}[chapter]
\renewcommand*{\arraystretch}{1.2}
\newglossaryentry{AbsoluteValue}
{
name={$|z|$},
sort={label},
type=symbols,
description={The \textbf{absolute value} of $z \in \mathbb{C}$.}
}
\newglossaryentry{Adjoint}
{
name={$A^\dag$},
sort={label},
type=symbols,
description={The \textbf{adjoint} of an operator $A:\text{Dom}(A) \to \mathcal{H}$, see \cref{adjoint}.}
}
\newglossaryentry{B(H)}
{
name={\ensuremath{\mathcal{B}(\mathcal{H})}},
sort={label},
type=symbols,
description={The set of \textbf{bounded operators} on $\mathcal{H}$, see \cref{boundedOperators}.}
}
\newglossaryentry{bra}
{
name={$\bra{\psi}$},
sort={label},
type=symbols,
description={A \textbf{bra}, the linear functional on a Hilbert space, $\bra{\psi}:\mathcal{H} \to \mathbb{C}$, defined by $\bra{\psi}(\phi) = \braket{\psi|\phi}$.}
}
\newglossaryentry{ei}
{
name={\ensuremath{e_i}},
sort={label},
type=symbols,
description={An element of the \textbf{canonical basis} of $\mathbb{C}^n$, where $(e_i)_j = \delta^i_j$.}
}
\newglossaryentry{card}
{
name={\ensuremath{\text{card}}},
sort={label},
type=symbols,
description={The \textbf{cardinality} of a set.}
}
\newglossaryentry{ChebyshevFirst}
{
name={$T_n$},
sort={label},
type=symbols,
description={The $n$-th \textbf{Chebyshev polynomial of the first kind}. See \cref{ChebyshevTable}.}
}
\newglossaryentry{ChebyshevSecond}
{
name={$U_n$},
sort={label},
type=symbols,
description={The $n$-th \textbf{Chebyshev polynomial of the second kind}. See \cref{ChebyshevTable}.}
}
\newglossaryentry{cl}
{
name={\ensuremath{\text{cl}}},
sort={label},
type=symbols,
description={The \textbf{closure}. Given a topological space, $(X, \tau)$, and a subset $S\in X$, $\text{cl}(S)$ is the intersection of every closed set in $X$ containing $S$.}
}
\newglossaryentry{Complement}
{
name={$A^c$},
sort={label},
type=symbols,
description={The \textbf{complement} of a set.}
}
\newglossaryentry{ComplexVectors}
{
name = {$\mathbb{C}^n$},
sort = {label},
type = symbols,
description = {The $n$-dimensional \textbf{complex coordinate space}, which is the set of ordered $n$-tuples of complex numbers. $\mathbb{C}^n$ is a Hilbert space with the inner product $\braket{z|w} = \sum_{i = 1}^n z^*_i  w_i$, and a $C^*$-algebra where multiplication of $w, z \in \mathbb{C}^n$ is given by the dot product, $(w \cdot z)_i := w_i z_i$, and the involution is $(z^*)_i= \overline{z_i}$.}
}
\newglossaryentry{commutator}
{
name={$[A,B]_\pm$},
sort={label},
type=symbols,
description={The \textbf{commutator} (-) or \textbf{anti-commutator} (+) of two elements of an associative algebra, $[A,B]_\pm := AB \pm BA$.}
}
\newglossaryentry{C*algebra}
{
name={$\mathfrak{A}$},
sort={label},
type=symbols,
description={A $C^*$-\textbf{algebra}, see \cref{c*Defn}.}
}
\newglossaryentry{delta}
{
type=symbols,
name= $\delta^{i}_{j}$,
description={The \textbf{Kronecker Delta}, defined by $\delta^{i}_{j} := \begin{cases}1 & \text{ if } i = j\\ 0 & \text{ otherwise }. \end{cases}$}
}
\newglossaryentry{Derivative}
{
name={$\dot{f}$},
sort={label},
type=symbols,
description={The \textbf{derivative} of a function, $f$, where either $\text{Dom}(f) = \mathbb{R}$ or $\text{Dom}(f) = \mathbb{C}$.}
}
\newglossaryentry{det}
{
name={\ensuremath{\text{det}}},
sort={label},
type=symbols,
description={The \textbf{determinant}.}
}
\newglossaryentry{dimension}
{
name={$\text{dim}(V)$},
sort={label},
type=symbols,
description={The \textbf{dimension} of the vector space $V$.}
}
\newglossaryentry{DirectProduct}
{
name={$\times$},
sort={label},
type=symbols,
description={The \textbf{direct product}.}
}
\newglossaryentry{DirectSum}
{
name={$\oplus$},
sort={label},
type=symbols,
description={The \textbf{direct sum}.}
}
\newglossaryentry{Domain}
{
name={$\text{Dom}(f)$},
sort={label},
type=symbols,
description={The \textbf{domain} of a function, $f$.}
}
\newglossaryentry{GLnA}
{
name = {\ensuremath{\text{GL}_n(\mathfrak{A})}},
sort = {label},
type = symbols,
description = {The \textbf{general linear group} over $\gls{MnA}$, see \cref{inverseDefn} and \cref{GnAdefn}. I use shorthand $\text{GL}(\mathfrak{A}) = \text{GL}_1(\mathfrak{A})$.}
}
\newglossaryentry{HilbertSpace}
{
name={$\mathcal{H}$},
sort={label},
type=symbols,
description={A separable \textbf{Hilbert space}, see \cref{HilbertSpaceDefn}.}
}
\newglossaryentry{id}
{
name = {\ensuremath{\mathbb{1}}},
sort = {label},
type = symbols,
description = {The \textbf{identity map} on a vector space. $\mathbb{1}$ also refers to the \textbf{multiplicative identity} in a unital associative algebra.}
}
\newglossaryentry{Im}
{
name = {\ensuremath{\text{Im}(z)}},
sort = {label},
type = symbols,
description = {The \textbf{imaginary part} of the complex number $z \in \mathbb{C}$.}
}
\newglossaryentry{imagUnit}
{
type=symbols,
name= $\mathfrak{i}$,
description={The \textbf{imaginary unit}.}
}
\newglossaryentry{InnerProduct}
{
name={$\braket{\psi|\phi}$},
sort={label},
type=symbols,
description={An \textbf{inner product}, which is linear in $\phi$ and anti-linear in $\psi$. See \cref{innerProduct}.}
}
\newglossaryentry{ComplexConjugate}
{
name = {$z^*$},
sort = {label},
type = symbols,
description = {The \textbf{involution} applied to an element of a $C^*$-algebra. If $z \in \mathbb{C}$, then $z^*$ denotes its \textbf{complex conjugate}.}
}
\newglossaryentry{ker}
{
name={\ensuremath{\text{ker}(A)}},
sort={label},
type=symbols,
description={The \textbf{kernel} or \textbf{nullspace} of the operator $A$.}
}
\newglossaryentry{ket}
{
name={$\ket{\psi}$},
sort={label},
type=symbols,
description={A \textbf{ket}, an element of a Hilbert space.}
}
\newglossaryentry{MnA}
{
name = {\ensuremath{\mathfrak{M}_n(\mathfrak{A})}},
sort = {label},
type = symbols,
description = {The \textbf{algebra of} $n \times n$  \textbf{matrices} whose entries are elements of the associative algebra $\mathfrak{A}$, see \cref{matrixAlgebraExample}.}
}
\newglossaryentry{N}
{
name={\ensuremath{\mathbb{N}}},
sort={label},
type=symbols,
description={The set of \textbf{non-negative integers}, which includes zero.}
}
\newglossaryentry{norm}
{
name={$||\cdot||$},
sort={label},
type=symbols,
description={A \textbf{norm}, see \cref{normedSpace}.}
}
\newglossaryentry{ortho}
{
name={$S^\perp$},
sort={label},
type=symbols,
description={The \textbf{orthogonal complement} of a subset of an inner product space, see \cref{OrthoComplement-Defn}.}
}
\newglossaryentry{PositiveElements}
{
name={$\mathfrak{A}^+$},
sort={label},
type=symbols,
description={The set of \textbf{positive elements} of a $C^*$ algebra, $\mathfrak{A}$. See \cref{positivityDef}.}
}
\newglossaryentry{pwr}
{
name={\ensuremath{\mathbb{P}(S)}},
sort={label},
type=symbols,
description={The \textbf{power set} of a set $S$, also referred to as the \textbf{discrete topology}.}
}
\newglossaryentry{Product}
{
name={$\prod\limits_{i \in S} A_i$},
sort={label},
type=symbols,
description={The \textbf{product} over a finite, well-ordered set, $S$, of elements $A_i$ of an associative algebra. If $S = \emptyset$, we define $\prod_{i \in S} A_i = 1$. Since countable sets are totally ordered, we will define $\prod_{i \in S} A_i = A_l \prod_{i \in S \setminus \{l\}} A_i$, where $l \in S$ is the least element.}
}
\newglossaryentry{proj}
{
name={\ensuremath{\mathscr{P}_V}},
sort={label},
type=symbols,
description={The \textbf{orthogonal projection} onto a closed linear subspace of a Hilbert space, $V$, see \cref{orthoProjection}.}
}
\newglossaryentry{ran}
{
name = {\ensuremath{\mathcal{R}}},
sort = {label},
type = symbols,
description = {The \textbf{range}, or \textbf{image}, of a function.}
}
\newglossaryentry{Re}
{
name = {\ensuremath{\text{Re}(z)}},
sort = {label},
type = symbols,
description = {The \textbf{real part} of the complex number $z \in \mathbb{C}$.}
}
\newglossaryentry{rel-comp}
{
name={\ensuremath{\setminus}},
sort={label},
type=symbols,
description={The \textbf{relative complement} or \textbf{set difference}.}
}
\newglossaryentry{span}
{
name={\ensuremath{\text{span}}},
sort={label},
type=symbols,
description={The \textbf{linear span}.}
}
\newglossaryentry{spectrum}
{
name={$\sigma(A)$},
sort={label},
type=symbols,
description={The \textbf{spectrum} of an operator or an element of a unital associative algebra, see \cref{spectrumDefn}.}
}
\newglossaryentry{tensorProduct}
{
name={$\otimes$},
sort={label},
type=symbols,
description={The \textbf{tensor product}.}
}
\newglossaryentry{tr}
{
name={\ensuremath{\text{Tr}}},
sort={label},
type=symbols,
description={The \textbf{trace}, which is the sum of diagonal elements of a square matrix.}
}
\newglossaryentry{Z}
{
name={\ensuremath{\mathbb{Z}}},
sort={label},
type=symbols,
description={The set of \textbf{integers}.}
}
\newglossaryentry{Z+}
{
name={\ensuremath{\mathbb{Z}_+}},
sort={label},
type=symbols,
description={The set of \textbf{positive integers}, which excludes zero.}
}
\newglossaryentry{stMaryRd}
{
name={$\dbrac{m,n}$},
sort={label},
type=symbols,
description={Given $m, n \in \mathbb{Z}$, $\dbrac{m,n}$ is shorthand for $\{m, \dots, n\}$. I also use the shorthand $\dbrac{n} := \dbrac{1,n}$.}
}
\DeclarePairedDelimiter{\floor}{\lfloor}{\rfloor}
\begin{document}

% The "front pages" (i.e., title page,
% declaration, borrowers' page, abstract, acknowledgements,
% dedication, table of contents, list of tables, list of figures,
% nomenclature) are contained within the file "uw-ethesis-frontpgs.tex" which is
% included into the documen2t by the following statement.
%----------------------------------------------------------------------
% FRONT MATERIAL
%----------------------------------------------------------------------
% T I T L E   P A G E
% -------------------
% Last updated June 14, 2017, by Stephen Carr, IST-Client Services
% The title page is counted as page `i' but we need to suppress the
% page number. Also, we don't want any headers or footers.
\pagestyle{empty}
\pagenumbering{roman}

% The contents of the title page are specified in the "titlepage"
% environment.
\begin{titlepage}
        \begin{center}
        \vspace*{1.0cm}

        \Huge
        {\bf Locality and Exceptional Points in Pseudo-Hermitian Physics}

        \vspace*{1.0cm}

        \normalsize
        by \\

        \vspace*{1.0cm}

        \Large
        Jacob L. Barnett \\

        \vspace*{3.0cm}

        \normalsize
        A thesis \\
        presented to the University of Waterloo \\ 
        in fulfilment of the \\
        thesis requirement for the degree of \\
        Doctor of Philosophy \\
        in \\
        Physics \\

        \vspace*{2.0cm}

        Waterloo, Ontario, Canada, 2023 \\

        \vspace*{1.0cm}

        \copyright\ Jacob L. Barnett 2023 \\
        \end{center}
\end{titlepage}

% The rest of the front pages should contain no headers and be numbered using Roman numerals starting with `ii'
\pagestyle{plain}
\setcounter{page}{2}

\cleardoublepage % Ends the current page and causes all figures and tables that have so far appeared in the input to be printed.
% In a two-sided printing style, it also makes the next page a right-hand (odd-numbered) page, producing a blank page if necessary.

% E X A M I N I N G   C O M M I T T E E 
\begin{center}\textbf{Examining Committee Membership}\end{center}
  \noindent
The following served on the Examining Committee for this thesis. The decision of the Examining Committee is by majority vote.
  \bigskip
  
  \noindent
\begin{tabbing}
Internal-External Member: \=  \kill % using longest text to define tab length
External Examiner: \>  Fabio Bagarello, PhD \\ 
\> Professor \\ %, Dept. of Electrical Engineering and Telecommunications, Chemical Technology, Automatic, and Mathematical Models,
\> University of Palermo \\
\end{tabbing} 
  \bigskip
  
  \noindent
\begin{tabbing}
Internal-External Member: \=  \kill % using longest text to define tab length
Supervisors: \> James A. Forrest, PhD \\
\> Professor \\ %, Dept. of Physics and Astronomy,
\> University of Waterloo \\
\\
\> Yogesh N. Joglekar, PhD  \\
\> Professor \\ %, Dept. of Physics,
\> Indiana University-Purdue University Indianapolis \\
\end{tabbing}
  \bigskip
  
  \noindent
  \begin{tabbing}
Internal-External Member: \=  \kill % using longest text to define tab length
Internal Member: \> Roger Melko, PhD \\
\> Professor \\
\> University of Waterloo \\
\end{tabbing}
  \bigskip
  
  \noindent
\begin{tabbing}
Internal-External Member: \=  \kill % using longest text to define tab length
Internal-External Member: \> Marek Stastna, PhD \\
\> Professor\\ %, Dept. of Applied Mathematics, 
\> University of Waterloo \\
\end{tabbing}
  \bigskip
  
  \noindent
\begin{tabbing}
Internal-External Member: \=  \kill % using longest text to define tab length
Other Member: \> Sung-Sik Lee, PhD \\
\> Professor \\ %, Dept. of Physics and Astronomy,
\> McMaster University \\
\end{tabbing}

\cleardoublepage

% D E C L A R A T I O N   P A G E
% -------------------------------
  % The following is a sample Delaration Page as provided by the GSO
  % December 13th, 2006.  It is designed for an electronic thesis.
  \noindent
  
\begin{center}\textbf{Author's Declaration}\end{center}  
  
This thesis consists of material all of which I authored or co-authored: see the Statement of Contributions included in the thesis. This is a true copy of the thesis, including any required final revisions, as accepted by my examiners. %Should I somehow clarify that the writing is entirely my own then? Or is that obvious?

  \bigskip
  
  \noindent
I understand that my thesis may be made electronically available to the public.

\cleardoublepage

\cleardoublepage

% S T A T E M E N T  O F  C O N T R I B U T I O N S
% -------------------------------
  \noindent
  
%Friday at 2pm.
%Thursday 8:30pm.  
  
\begin{center}\textbf{Statement of Contributions}\end{center}  
  
Material from this thesis is presented in two manuscripts. Section~\ref{TPMSection}, \cref{FermionObservableAlgebraSection}, and \cref{FermionsIntro} are based on \cite{Barnett_2021}, which I am the sole author of. Chapter~\ref{toyModelsChapter} and \cref{epSurfaceSingularities} is based on \cite{Barnett2023}, co-authored with Yogesh N. Joglekar. In addition, this thesis contains unpublished original research in \cref{Tactics} and \cref{Local-Equivalence-Section}.
%(submit these drafts ... do I need to acknowledge that all of the writing is my own?)
%More explicit?

\cleardoublepage

% A B S T R A C T
% ---------------

\begin{center}\textbf{Abstract}\end{center}

Pseudo-Hermitian operators generalize the concept of Hermiticity. Included in this class of operators are the quasi-Hermitian operators, which define a generalization of quantum theory with real-valued measurement outcomes and unitary time evolution. This thesis is devoted to the study of locality in quasi-Hermitian theory, the symmetries and conserved quantities associated with non-Hermitian operators, and the perturbative features of pseudo-Hermitian matrices.

An implicit assumption of the tensor product model of locality is that the inner product factorizes with the tensor product. Quasi-Hermitian quantum theory generalizes the tensor product model by modifying the Born rule via a metric operator with nontrivial Schmidt rank. Local observable algebras and expectation values are examined in \cref{LocalityChapter}. Observable algebras of two one-dimensional fermionic quasi-Hermitian chains are explicitly constructed. Notably, there can be spatial subsystems with no nontrivial observables. 
Despite devising a new framework for local quantum theory, I show that expectation values of local quasi-Hermitian observables can be equivalently computed as expectation values of local Hermitian observables. 
Thus, quasi-Hermitian theories do not increase the values of nonlocal games set by Hermitian theories. Furthermore, Bell's inequality violations in quasi-Hermitian theories never exceed the Tsirelson bound of Hermitian quantum theory.

A perturbative feature present in pseudo-Hermitian curves which has no Hermitian counterpart is the \textit{exceptional point}, a branch point in the set of eigenvalues. An original finding presented in \cref{epSurfaceSingularities} is a correspondence between cusp singularities of algebraic curves and higher-order exceptional points.

Eigensystems of one-dimensional lattice models admit closed-form expressions that can be used to explore the new features of non-Hermitian physics.
One-dimensional lattice models with a pair of non-Hermitian defect potentials with balanced gain and loss, $\Delta \pm \mathfrak{i} \gamma$, are investigated in \cref{toyModelsChapter}. Conserved quantities and positive-definite metric operators are examined. When the defects are nearest neighbour, the entire spectrum simultaneously becomes complex when $\gamma$ increases beyond a second-order exceptional point. When the defects are at the edges of the chain and the hopping amplitudes are 2-periodic, as in the Su-Schrieffer-Heeger chain, the $\mathcal{PT}$-phase transition is dictated by the topological phase of the system. In the thermodynamic limit, $\mathcal{PT}$-symmetry spontaneously breaks in the topologically non-trivial phase due to the presence of edge states.

Chiral symmetry and representation theory are utilized in \cref{Tactics} to derive large classes of pseudo-Hermitian operators with closed-form intertwining operators. These intertwining operators include positive-definite metric operators in the quasi-Hermitian case. The $\mathcal{PT}$-phase transition is explicitly determined in a special case.

\cleardoublepage

% A C K N O W L E D G E M E N T S
% -------------------------------

\begin{center}\textbf{Acknowledgements}\end{center}

I appreciate the support of all the people who made this thesis possible.

I would like to thank \textbf{Yogesh Joglekar} for being there for me and giving guidance at every step along the way.

\textbf{Kasia Rejzner} listened and responded to my ramblings on $C^*$-algebras, for which I am very appreciative.

I would like to thank \textbf{James Forrest} for keeping me on track and for his patience. %More? %and ensuring this thesis was completed.

The extensive writing advice given by \textbf{Ma\"\i t\'e Dupuis} and \textbf{Alisa Klinger} has increased the clarity of this thesis; both the reader and I owe them our thanks. %(Marlowe? Corwin)(insert others) Dax Koh??

My sincere gratitude extends to \textbf{Rada} and \textbf{Yasha Neiman} for giving me courage and supporting my personal and intellectual development through a research internship program at the Okinawa Institute for Science and Technology.

\textbf{Rob Myers} and \textbf{Lee Smolin} have both provided financial support for my PhD studies, and I want to thank both of them for making my work possible.

I would also like to thank \textbf{Neil Turok} for giving me my first research problem at Perimeter Institute and for guidance through my early graduate career.

I would also like to thank the numerous people whose interactions with me led me to either new perspectives or ideas for calculations for this thesis. An incomplete list of people not already mentioned includes \textbf{Kaustubh Agarwal}, \textbf{Richard Cleve}, \textbf{Daniel Gottesman}, \textbf{Laurent Freidel}, \textbf{Tobias Fritz}, \textbf{Lauren Hayward}, \textbf{Marvin Kemple}, \textbf{Chia-Yi Ju}, \textbf{Sung-Sik Lee}, \textbf{Jacob Muldoon}, \textbf{Alexandru Nica}, \textbf{Daniel Ranard}, \textbf{Nic Shannon}, \textbf{Barbara \v{S}oda}, \textbf{Robert Spekkens}, \textbf{Marek Stastna}, and \textbf{Richard Ward}.

%Juan deserves thanks for helping me upload to the arxiv.
%Talk about GPTs!

\textbf{Mario Ariganello}, \textbf{Tung Bui}, \textbf{Fred Gower}, \textbf{Dave Graore}, \textbf{Jeffrey Schneider}, and \textbf{Steve Zeoke} believed in me, and my foosball skills, when nobody else did, so I cannot thank them enough.

I also thank my inner circle of friends for supporting me when I needed it most. Friends who have extended incredible amounts of generosity and who have not been mentioned above include %Troy? Saad? Brodie? Faroogh? Schroeppels? Gunjan? Oh Boy this is gonna be a long list...  Campbell Macdonald?
\textbf{Chris Adamantidis}, \textbf{Rosemarie Archer}, \textbf{Alexander Atanasov}, \textbf{Clyde Baker}, \textbf{Dan Belcher}, \textbf{Melanie Brunet}, \textbf{Juan Cayuso}, \textbf{Dave Defebaugh}, \textbf{Daniel Guariento}, \textbf{Dave Hoffman}, \textbf{Kayla Head}, \textbf{Keni Kim}, \textbf{Jolanta Komornicka}, \textbf{Adri{\'a}n L{\'o}pez},  \textbf{Campbell MacDonald}, \textbf{Fiona McCarthy}, \textbf{John McDermott}, \textbf{Andrew}, \textbf{Bronwen}, and \textbf{Lily Meikle}, \textbf{Barak Shoshany}, and \textbf{Ryan Snell}.

I am especially grateful to my immediate family \textbf{Ethan}, \textbf{Michael}, and \textbf{Wesley}.

I would like to thank \textbf{Olga Schneider} for accepting me for who I am.

%Find a way to thank Aggie, Vicky, Sabrina.

%\cleardoublepage
%
%% D E D I C A T I O N
%% -------------------
%
%\begin{center}\textbf{Dedication}\end{center}
%
%This is dedicated to ???
%\cleardoublepage

% T A B L E   O F   C O N T E N T S
% ---------------------------------
\renewcommand\contentsname{Table of Contents}
\tableofcontents
\cleardoublepage
\phantomsection    % allows hyperref to link to the correct page

% L I S T   O F   F I G U R E S
% -----------------------------
\addcontentsline{toc}{chapter}{List of Figures}
\listoffigures
\cleardoublepage
\phantomsection		% allows hyperref to link to the correct page

% L I S T   O F   T A B L E S
% ---------------------------
\addcontentsline{toc}{chapter}{List of Tables}
\listoftables
\cleardoublepage
\phantomsection		% allows hyperref to link to the correct page

% GLOSSARIES (Lists of definitions, abbreviations, symbols, etc. provided by the glossaries-extra package)
% -----------------------------
%\printglossaries
\printunsrtglossary[type=symbols]

\cleardoublepage
\phantomsection		% allows hyperref to link to the correct page

% Change page numbering back to Arabic numerals
\pagenumbering{arabic}

%----------------------------------------------------------------------
% MAIN BODY
%----------------------------------------------------------------------
%======================================================================

%THINGS THAT ARE COMPLETE: FIRST TWO SECTIONS OF INTRODUCTION. FIRST SECTION OF APPENDIX.

%NON-HERMITIAN QUBIT IS MOST GENERAL ELEMENT OF su(1,1). Hamiltonian generates unitary evolution. Lie algebra generates group SU(1,1). Things to understand: Higher dimensional representations of su(1,1). What are all the irreducible representations of su(n,m)

%Nonlinear schrodinger equation emerges from post-selection on Lindblad. It preserves trace and the boundary of the bloch ball. How does the interior of the bloch ball evolve?? Is there still a "foliation" of preserved 2D spaces? Does something change when PT breaks? Is there a curve (line?) connecting the stationary states which is a set of fixed points? 

%TO-DO: Hyphens in positive-definite, positive-semidefinite, finite-dimensional, pseudo-Hermitian, quasi-Hermitian.

%Where is the unpublished research...

\chapter{Outline of Thesis}

This chapter provides brief summaries of later chapters, with the exception of the concluding chapter~\ref{conclusionChapter}. A companion talk to this thesis can be found at \cite{GradSeminar}.

\section{Summary of Chapter~\ref{introChapter}}

The goal of this chapter is to introduce the concepts and contexts upon which this thesis is built. Precise definitions of key terms in non-Hermitian physics along with their fundamental relationships and properties are given. Detailed discussions of the topic of non-Hermiticity in physics can be found in textbooks \cite{benderBook,Bagarello2015,Moiseyev2009,christodoulides2018parity} and review articles \cite{MakingSenseNonHerm,MostaReview,Ashida2020}. The mathematics of non-Hermitian operators and their natural setting, the indefinite inner product space, is examined in textbooks such as \cite{azizov1989linear,Gohberg2005,Bognr1974,Gohberg1969}. In some cases, I have written proofs of central results that I find simpler than what can easily be found in the literature. In the interest of generality, I have adopted definitions that address the possibility of infinite-dimensional Hilbert spaces. Simplifications occur in the finite-dimensional setting, which is the context of most of the examples.

Section~\ref{hermQuantumSection} discusses the importance of quantum theory and summarizes a mathematical framework for quantum theory, the Dirac-von Neumann axioms.

Section~\ref{leaveHerm} discusses physical contexts in which non-Hermitian operators play a role and how non-Hermitian operators can address shortcomings in the Dirac-von Neumann axioms.

The next objective of \cref{introChapter} is to introduce properties of families of non-Hermitian operators. Operators with antilinear symmetry, e.g., those with time-reversal symmetry, are discussed in \ref{PTSymmetryIntroSection}. Pseudo-Hermitian and quasi-Hermitian operators are introduced in sections~\ref{PseudoHermSection} and \ref{quasiHermSection} respectively. Quasi-Hermiticity is the necessary and sufficient criteria for the existence of real eigenvalues in finite-dimensional Hilbert spaces \cite{Drazin1962} and, consequently, defines a fundamental extension of quantum theory via a modified Born rule. 

Perturbation theory is the setting of \ref{ExceptionalPointsIntro}. Non-Hermitian operators exhibit perturbative features unparalleled by Hermitian operators. For example, non-Hermitian operators typically contain \textit{exceptional points}, a branch point in the curves generated by eigenvalues. A novel correspondence between higher-order exceptional points and cusp singularities of algebraic curves is presented in \cref{epSurfaceSingularities}. 

All of the above properties are examined in the context of a qubit in \cref{PTqubit}. 

The $C^*$-algebra picture of both quasi-Hermitian and Hermitian quantum theory is introduced in \cref{C*AlgebrasIntro}. 

I have opted to order the results of the introduction chapter in one particular way. A modular approach can be taken to reading the introduction chapter, whereby readers can jump between sections they find most interesting. Nonlinear readers will find \cref{fig:Nonlinear-Reader} helpful in deciding how they wish to approach \cref{introChapter}.

\begin{figure}[!ht]
\centering
\includegraphics[width=\textwidth]{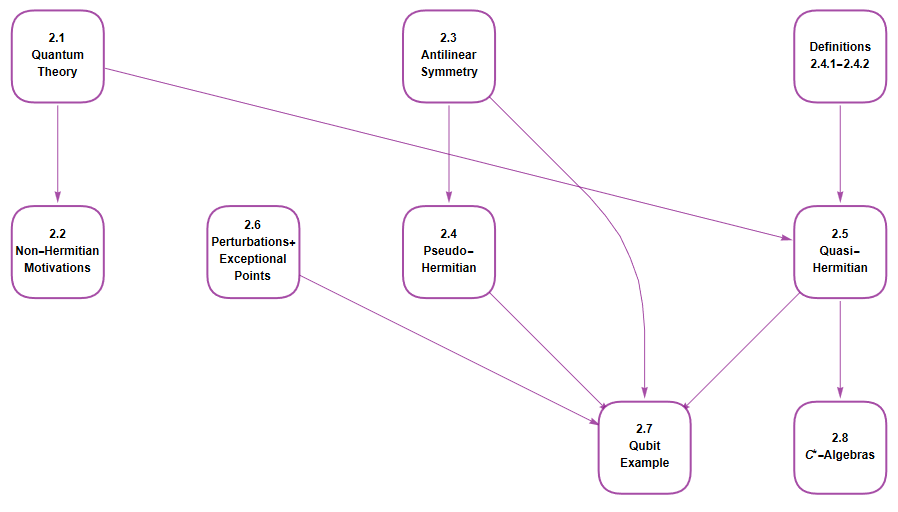}
\caption{This figure displays dependencies between the sections of the introduction chapter. Sections depicted at the end of an arrow are recommended to be read after material from the tail of all arrows pointing towards it.}
\label{fig:Nonlinear-Reader}
\end{figure}

\section{Summary of Chapter~\ref{toyModelsChapter}} \label{Toy Models Summary}

A majority of the original results presented in this chapter have been submitted to J. Phys. A and can be found on the arXiv \cite{Barnett2023}.

My primary objective when performing the calculations of this chapter was to build a basis of toy models that could be used to examine the novelties of non-Hermitian operators. The class of tridiagonal matrices with perturbed corners proved to be particularly tractable. These matrices physically correspond to systems with one spatial dimension consisting of a finite number of sites. The eigenvalue problem associated with these matrices is discussed in \cref{tridiagEsysSection}. Three special cases with a pair of non-Hermitian defect potentials are addressed, with a strong emphasis on the $\mathcal{PT}$-symmetric case.

%To-Do: Check that propositions are labelled as propositions, cleverref seems to mess that up.

The first case, studied in \cref{nearestNeighbour}, models a system with gain and loss at nearest neighbour sites at the center of a one-dimensional chain with an even number of sites. 
A one-dimensional family of intertwining operators, which physically correspond to conserved quantities, is determined in \cref{pseudoHermOpenChainThm}. Included in this family is a subset of positive-definite metric operators in the $\mathcal{PT}$-unbroken domain. In the $\mathcal{PT}$-unbroken domain, a similar Hermitian Hamiltonian and a $\mathcal{C}$-symmetry are constructed explicitly in \cref{Equivalent Hamiltonian Section}. Every eigenvalue gains an imaginary part when the impurity strength is increased above a threshold corresponding to a second-order exceptional point. This threshold is determined solely by the central hopping amplitude's magnitude, $|t_m|$.

The second case, studied in \cref{SSHSection}, is a non-Hermitian perturbation of a Su-Schrieffer-Heeger chain with gain and loss at the edges. Owing to the connection between the topological phase and the presence of edge states, in the thermodynamic limit, the $\mathcal{PT}$-unbroken domain coincides with the topologically trivial phase. A three-dimensional surface of exceptional points is plotted in \cref{EP Surface Section}. This surface, an example of an algebraic variety, has ridges of singular points which cannot exist in manifolds. The ridges correspond to third-order exceptional points which include cusp points corresponding to fourth-order exceptional points.

{}{}
The third case, studied in \cref{uniformSection}, is a chain with open boundary conditions and uniform hopping. Cases where either a subset or the entirety of the spectrum can be computed are summarized in sections~\ref{ExactEvalues} and \ref{Constant Evalues}. An analytical, numerical, and perturbative analysis of the exceptional points is undertaken in sections ~\ref{EPContours Uniform Chain} and \ref{Perturbative Section}.
One set of results not reported in \cite{Barnett2023} that appears in this chapter is a discussion of the positivity of an intertwining operator introduced in \cite{farImpurityMetric,Ruzicka2015}. 

\section{Summary of Chapter~\ref{Tactics}}

This chapter is devoted to finding pseudo-Hermitian operators, $H$, where intertwining operators, $\eta$, can be given in closed-form. I will refer to such a pair $(\eta, H)$ as a pseudo-Hermitian pair.

Section~\ref{anticommutPencilSection} discusses the case of a pencil with anticommuting invertible Hermitian generators. When both generators are invertible, and one is an involution, the spectrum is determined exactly. A closed-form intertwining operator is determined for every element of this pencil, which is a positive-definite metric if and only if the spectrum of its corresponding pencil element is real.

Section~\ref{section:PseudoHermFromRep} utilizes $*$-homomorphisms to generate pseudo-Hermitian pairs from a known vector space of $\eta$-self-adjoint operators. Curiously, the pairs generated in this way can be in a higher dimension than the initial space of known pairs. Section~\ref{Commutative-Section} characterizes the complexity of the spectrum in the case where the known vector space is commutative. In particular, this case exhibits an unusual antilinear-symmetry breaking phase transition, where the entire spectrum of a pseudo-Hermitian operator gains a nonzero imaginary part as a parameter is tuned past an exceptional point.
%TO-DO: JOLA SAYS THIS LAST SENTENCE DOESN'T MAKE SENSE.

\section{Summary of Chapter~\ref{LocalityChapter}}
Sections~\ref{TPMSection} and \ref{FermionObservableAlgebraSection} are based on, and expand upon, material published in \cite{Barnett_2021}.

This chapter examines local observable algebras, expectation values of local observables, and nonlocal games in the context of quasi-Hermitian quantum theory.

Section~\ref{localityIntroSection} introduces locality in quantum theory. 

In the tensor product model, when the Schmidt rank of the metric operator is greater than one, the dimensions of local quasi-Hermitian observable algebras are smaller than their Hermitian counterpart. For some choices of metric operators, there are spatial subsystems with no nontrivial local quasi-Hermitian observables. 

A quasi-Hermitian model of particle-conserving free fermions is examined. A second-quantized metric operators corresponding to a first-quantized relative is elaborated on, and the number operator is shown to be an observable for this second-quantized metric. The local observable algebras are characterized in general, and then computed explicitly for two one-dimensional $\mathcal{PT}$-symmetric chains. The structure of the local observable algebras depends strongly on the Hamiltonian.

Expectation values of local observables in a quasi-Hermitian theory are shown to be computable by expectation values of local observables in a Hermitian theory. Consequently, a quasi-Hermitian strategy for a nonlocal game cannot outperform a Hermitian strategy. Equivalently, quasi-Hermitian Bell inequality violations cannot exceed their Hermitian counterparts.

\section{Summary of Appendix~\ref{functionalAnalysisAppendix}}

The majority of this appendix is devoted to reviewing mathematical tools from the fields of functional analysis and operator algebras. One could view these fields as abstractions of linear algebra which are suited to handling infinite-dimensional vector spaces.  

Some formulas relating to Chebyshev polynomials, a sequence of orthogonal polynomials which naturally appear in tridiagonal matrix problems, are summarized in \cref{Chebyshev Appendix}.

\chapter{Introduction} \label{introChapter}

\section{Hermitian Quantum Theory} \label{hermQuantumSection}

%I DON'T TALK TOO MUCH ABOUT WHAT PEOPLE ALREADY KNOW...
%THESIS IS TO BE READABLE BY e.g. MASTERS' STUDENTS
%DOI: 10.1038/s41467-022-29774-8
%Quantum mechanics is hard to generalize?

Quantum theory is the most successful framework for modern physics. Originally invented to describe blackbody radiation \cite{planck1900}, the photoelectric effect  %(three papers: original report, theoretical prediction, confirmation of theory 
\cite{HertzPhotoelectric,einstein1905erzeugung,Millikan1914}, and the structure of the atom \cite{BohrAtom,Gerlach1922}, physicists have used quantum theory to describe a diverse set of phenomena, including neutron stars \cite{Potekhin2010}, semiconductors \cite{shockley1959electrons}, superconductivity \cite{Bardeen1957,Leggett2006,onnes1911discovery}, and the standard model of particle physics \cite{langacker2017standard}. The design of ubiquitous devices, such as lasers \cite{Schawlow1958,siegman1986lasers}, transistors \cite{shockley1959electrons}, magnetic resonance image scanners  \cite{LAUTERBUR1973}, and atomic clocks \cite{ESSEN1955} was only possible due to the predictive power of quantum theory. Innovation in technology continues to be driven by quantum theory, exemplified by current research into quantum computing \cite{operatorSchmidt,QuantumComputing2018,Nielsen2012}. %(more citations?). 
Phenomena such as Bell's inequality violations, recently demonstrated experimentally in \cite{Hensen2015,Giustina2015,Shalm2015}, are incapable of being modelled by classical physics \cite{Bell1964}, further necessitating the use of quantum theory. Furthermore, all predictions of Newtonian physics can be obtained in the semiclassical limit of quantum theory.
%Reference for semiclassical limit?

%TO-DO: CHANGE WEYL REFERENCE!!!

%Question for myself: If instead we impose the "Uncertainty principle motivated" CCRs, [x,p] - i hbar is a positive operator, does the Hilbert space have to be infinite dimensional?

%NOTE: I DON'T KNOW GERMAN ... ARE THESE CITATIONS CORRECT? CAN I ASK MAITE?

A commonly applied mathematical formulation of quantum theory, attributable to Dirac \cite{dirac1981principles} and von Neumann \cite{vonNeumannBook,vonNeumannBookEnglish}, is given in \cref{DiracVonNeumann}. In the interest of generality, the Dirac-von Neumann axioms allow for the possibility of infinite-dimensional Hilbert spaces\footnote{A physical motivation for introducing infinite-dimensional Hilbert spaces is to model a quantum particle which obeys the Heisenberg uncertainty principle \cite{Kennard1927}. %, which has been experimentally observed in (cite)... Add Heisenberg and Weyl's papers.
One method to realize the Heisenberg uncertainty principle is to find a representation of the canonical commutation relations %\cite{} 
and apply the Robertson \cite{Robertson1929} or Schr{\"o}dinger \cite{schrodinger1930sitzungsberichte,schrodinger1930translated} uncertainty inequalities. Representations of the canonical commutation relations only exist in infinite-dimensional Hilbert spaces \cite{Weyl1927,WintnerUnbounded}.}. Various tools from measure theory and functional analysis are required to rigorously handle this possibility, which are reviewed in \cref{functionalAnalysisAppendix}. As illustrated in \cref{finiteDimQuantum}, assuming the Hilbert space dimension is finite simplifies the mathematics of quantum theory. In this case, the basis of quantum theory is linear algebra on complex vector spaces. 

\begin{defn}[Dirac-von Neumann axioms] \label{DiracVonNeumann}
\leavevmode 
\begin{enumerate}
%IS THE SEPARABLE CRITERIA NECESSARY?
\item The \textit{state} of a quantum system, $\ket{\psi}$, is an element of a separable Hilbert space, $\mathcal{H}$, with unit norm, $\braket{\psi|\psi} = 1$\footnote{I use a convention typical for physicists that inner products are antilinear in the first argument and linear in the second argument.}. \\
\item \textit{Observables}, $A$, are self-adjoint linear operators on $\mathcal{H}$ satisfying $A = A^\dag$. By the spectral \cref{SpectralTheorem} for self-adjoint linear operators, there exists a unique projection-valued measure, $P_A:\mathfrak{B}_{\sigma(A)} \to \mathcal{B}(\mathcal{H})$, such that
\begin{align}
A = \int_{\sigma(A)} \lambda \, d P_A(\lambda),
\end{align} 
where $\mathfrak{B}_{\sigma(A)}$ denotes the Borel $\sigma$-algebra generated by the spectrum $\sigma(A) \subseteq \mathbb{R}$ of $A$, defined in \cref{Borel}, and \gls{B(H)} is the set of bounded operators on the Hilbert space $\mathcal{H}$, defined in \cref{boundedOperators}. %MAKE EXPLICIT REFERENCES TO THE APPENDIX
\\
\item \textit{Measurements} of observables, $A$, yield real-valued outcomes from the spectrum of $A$, $\sigma(A) \subseteq \mathbb{R}$. If a system is in the state $\ket{\psi}$, the probability of measuring an outcome from the Borel set $B \in \mathfrak{B}_{\sigma(A)}$, $
\textbf{Pr}(B)$, is determined by the \textit{Born rule} \cite{Born1926},
\begin{align}
\textbf{Pr}(B) = \braket{\psi|P_A(B)|\psi}. \label{measurementOutcomeProbability}
\end{align}
$\textbf{Pr}:\mathfrak{B}_{\sigma(A)} \to [0,1]$ is a probability measure. % Not all measurements yield results with perfect precision.  
After a measurement, if an observer models the outcome as uniformly sampled from $B \in \mathfrak{B}_{\sigma(A)}$, then the state is updated according to the \textit{L\"{u}ders rule} \cite{Luders1950,Lders2006},
\begin{align}
\ket{\psi} \rightarrow \frac{P_A(B)\ket{\psi}}{\sqrt{\braket{\psi|P_A(B)|\psi}}}.
\end{align} 
The \textit{expectation value} of the observable $A$, $\braket{A}$, is computed by integrating \cref{measurementOutcomeProbability}, 
\begin{align}
\braket{A} &= \braket{\psi| A | \psi}.
\end{align}
\\
\item \textit{Time evolution} is determined via a strongly continuous unitary representation of the additive group $(\mathbb{R}, +)$ on $\mathcal{H}$, which I denote as $U:\mathbb{R} \to \mathcal{B}(\mathcal{H})$. More explicitly, the map $U$ must satisfy $U(t+s) = U(t)U(s)$ for all $s, t \in \mathbb{R}$, $U(t)$ must be unitary for all $t \in \mathbb{R}$, and the map $t \to U(t) \ket{\psi}$ must be continuous in the norm topology, defined in \cref{normedSpace}, for all $\ket{\psi} \in \mathcal{H}$. If the state at the time $t_0 \in \mathbb{R}$ is $\ket{\psi(t_0)}$, then the state at time\footnote{This thesis considers the \textit{Schr{\"o}dinger picture} of time evolution, where the state evolves in time and operators are held constant. Observable quantities can equivalently be computed by evolving operators instead of states, which is the so-called \textit{Heisenberg picture}, or by evolving both operators and states, as in the \textit{interaction picture}. The Heisenberg picture of non-Hermitian time evolution is rather nontrivial \cite{Bagarello2022}.} $t \in \mathbb{R}$ is

%Add dates to list of tasks. Do not wait to send this to people who can give me advice!

\begin{align}
\ket{\psi(t)} = U(t-t_0) \ket{\psi(t_0)}.
\end{align}
By Stone's theorem \cite{Stone1930,Stone1932}, there exists a (possibly unbounded) self-adjoint operator, $H$, referred to as the \textit{Hamiltonian}, which generates time evolution via 
\begin{align}
U(t) = e^{-\mathfrak{i} t H/\hbar},
\end{align}
where $\hbar$ denotes the \textit{reduced Planck's constant}. If $\ket{\psi} \in \cap_{k \in \gls{N}} \text{Dom}(H^k)$, then time evolution is governed by the \textit{Schr{\"o}dinger equation} \cite{Schrodinger1926},
\begin{align}
\mathfrak{i} \hbar \frac{d}{d t} \ket{\psi} = H \ket{\psi}.
\end{align}
\end{enumerate}
\end{defn}

\begin{ex}[Finite-Dimensional Quantum Theory] \label{finiteDimQuantum}
Finite-dimensional Hilbert spaces are isomorphic to $\mathbb{C}^{\text{dim} \mathcal{H}}$. Self-adjoint linear operators are represented by Hermitian\footnote{Hermitian matrices derive their name from the work of Charles Hermite, who proved that Hermitian matrices have a real spectrum in \cite{Hermite1855}.} matrices\footnote{In this thesis, the term "matrix" refers to finite-dimensional matrices.}, $A \in \mathfrak{M}_{\dim \mathcal{H}}(\mathbb{C})$, whose elements satisfy $A_{ij} = A^*_{ji}$. The spectrum of $A$, $\sigma(A)$, is the set of eigenvalues of $A$, and $\mathfrak{B}_{\sigma(A)}$ is the power set of $\sigma(A)$, namely $\mathfrak{B}_{\sigma(A)} = \mathbb{P}(\sigma(A))$. The projection-valued measure associated to a Hermitian matrix, $A$, is 
\begin{align}
P_A(B) = \sum_{\lambda \in B} P_A(\{\lambda\}),
\end{align}
where $P_A(\{\lambda\})$ is the orthogonal projection onto the eigenspace associated to $\lambda$.
\demo
\end{ex}

\section{Motivations for Abandoning Hermiticity} \label{leaveHerm}
%MAKE A DECISION REGARDING YOUR VOICE: WE VS I
\subsection{Why Do We Use Hermitian Operators?}

Hermiticity plays two crucial roles in the Dirac-von Neumann axioms for quantum theory.

The first important feature of Hermitian linear operators is that they have a spectral decomposition. Consequently, Hermitian observables have real measurement outcomes associated to their spectrum. Furthermore, associated to every observable and every state, there exists a consistent assignment of probabilities to subsets of measurement outcomes. More formally, a state defines a probability measure, given in \cref{measurementOutcomeProbability}, on the Borel $\sigma$-algebra generated by the measurement outcomes of every observable.

Secondly, time evolution generated by a Hermitian Hamiltonian is \textit{unitary}, implying that the normalization of the state vector, $\braket{\psi|\psi} = 1$, is a constant in time. The normalization condition is necessary to generate a probability measure from the state vector, since it ensures the probability that a measurement of an observable yields a result from its spectrum equals one\footnote{Equivalently, the measure $\textbf{Pr}$ satisfies the axiom of unit measure, which is one of the Kolmogorov axioms for probability measures \cite[p. 2]{kolmogorov2018foundations}.}.
%Two verbs??

\subsection{Non-Hermiticity in Quantum Theory}
The axioms of quantum theory presented in \cref{DiracVonNeumann} have several weaknesses. In this section, I discuss four of these shortcomings: $1.$ the modelling of open or effective quantum systems, $2.$ the lack of a quantum description of spacetime, $3.$ the physical and mathematical origins of these axioms, and $4.$ the measurement problem. %This sentence is choppy...
In each case, I discuss what the introduction of non-Hermitian operators can do to alleviate these problems.

\begin{enumerate}
\item 
First, I claim the goal of physics is to describe the outcomes of experiments. A typical system has many more microscopic degrees of freedom than observable qualities. Thus, it can be impractical to make predictions using the entire system's Hilbert space. 

For example, in the context of \textit{open quantum systems}, there is a distinction between a \textit{subsystem} and an \textit{environment}, and one is often only interested in properties of the subsystem. An extremal %Is extremal the right word
example is that of a qubit coupled to a thermal bath; if a scientist is interested in the properties of the qubit, ideally the scientist would make predictions using only the mathematical setting naturally associated with qubits, $\mathbb{C}^2$. %Transition? (this is a bit of a jump).
Non-Hermitian Hamiltonians are typically required to achieve this goal. For example,
under certain assumptions about the initial conditions of the quantum state, subsystem dynamics can be modelled with the \textit{Lindblad equation} \cite{Lindblad1976,Gorini1976}, %(should I cite Nielsen's thesis or Schumacher's book here?),  ... "generated by"?
which is a Schr{\"o}dinger equation generated by a non-Hermitian Hamiltonian. This Hamiltonian is referred to as the \textit{Liouvillian superoperator}, and it acts on the Hilbert space of Hilbert-Schmidt operators local to the subsystem. A simpler Hamiltonian, which acts directly on the Hilbert space of states local to the subsystem, can be obtained by neglecting the effects of quantum jumps in the Liouvillian. %(to-do: understand physically why you can do this, and the relation with post-selection), 

Elaborating on this point further, non-Hermitian operators appear frequently as effective or phenomenological Hamiltonians. For instance, they appear in the context of renormalization \cite{LeeModelPT,LeeModel,quantumRG}, in the Feshbach projection formalism \cite{Feshbach1958,Feshbach1962}, and in the complex scaling technique used to describe resonance phenomena \cite{Moiseyev1998,Moiseyev2009}. My intuition for why this happens is that non-Hermiticity can encapsulate flows of probability into and out of a system. %Expand this sentence???
%Some sort of connection for radioactive decay?

\item Despite much theoretical work \cite{polchinski2005string,rovelli2004quantum}, at the time of writing this thesis, there is no experimentally tested quantum description of gravity. Modern theories of gravity, such as general relativity \cite{einstein1915general,wald2010general}, %and shape dynamics \cite{Barbour2012}. I removed shape dynamics because it isn't on the same footing as GR.
alter the primitive notions of space and time in Newtonian physics. It is generally accepted that a quantum theory of gravity must challenge our understanding of space, time, and locality. In \cref{LocalityChapter}, I argue that non-Hermitian quantum theory generates new notions of locality. Consequently, I dream that quasi-Hermitian quantum theories could serve as a bridge between gravity and quantum theory. Evidence for this dream includes a recent result that non-Hermitian models can be the dual to Hermitian models in curved spacetime \cite{Lv2022} and the introduction of non-Hermiticity in the context of noncommutative spacetime \cite{Fring2010}. Furthermore, the mathematics used to define an inner product in quasi-Hermitian quantum theory rhymes with the vielbein formalism in general relativity \cite{Ju2022}.

%Paragraph discussing how Hermiticity is not a physically motivated axiom, while antilinear symmetry is, and quasi-Hermiticity is somehow better from a mathematical perspective:
\item The axiom of Hermiticity of the observables is not directly derived from physical considerations. A physically motivated constraint on the observables is that they possess an \textit{antilinear symmetry}. For example, the antilinear $CPT$-symmetry is exhibited in every experiment performed to date; an exhaustive search for $CPT$-violations has thus far found none \cite{CPTReview}. Every matrix with an antilinear symmetry\footnote{In this section, "antilinear symmetry" is synonymous with "bijective antilinear symmetry". Every operator has the trivial antilinear symmetry $\Theta = 0$, assuming bijectivity avoids this sort of pathology.} 
is \textit{pseudo-Hermitian} \cite{Solombrino2002,mostafazadeh2002pseudo3,Siegl2009,siegl2008quasi}, rather than Hermitian; this result is presented in \cref{pseudoEquivThm}. Pseudo-Hermiticity is a generalization of Hermiticity which includes Hermiticity as a special case. Applying the additional constraint of \textit{unbroken} antilinear symmetry to a matrix, so that the eigenspaces are invariant under the antilinear symmetry, is equivalent to imposing the reality of the matrix's eigenvalues, as discussed in \cref{unbrokenRealSpectrum}. The stricter constraint that a matrix is \textit{quasi-Hermitian} is equivalent to the statement that said matrix is diagonalizable with a real spectrum \cite{Drazin1962}, as discussed in \cref{quasiHermSection}. Based on this observation, quasi-Hermitian operators define a fundamental description of quantum theory via a modified inner product
\cite{QuasiHerm92}. The key properties of these operators are summarized in \cref{operatorClassTable} and their relationships in the finite-dimensional case are explained in \cref{matrixClassFig}.
\end{enumerate}

\begin{table*}[!htp]
\centering
\begingroup
\setlength{\tabcolsep}{6pt} % Default value: 6pt
\renewcommand{\arraystretch}{2} % Default value: 1
\begin{tabular}{|m{2.7cm}|m{2.2cm}|m{4.5cm}|m{4.8cm}|}
\hline
\renewcommand{\arraystretch}{2}
\textbf{Operator Type} & \textbf{Spectrum} & \textbf{Sesquilinear Form} & \textbf{Physical Consequences} \\
\hhline{|=|=|=|=|}
\parbox{2.7cm}{Antilinear\\Symmetric} 
& c.c. pairs 
& \parbox{4.5cm}{Indefinite inner product\\(when $\dim \mathcal{H} < \infty$)} 
& \parbox{4.5cm}{Time-reversal symmetry, \\ Real partition function} \\[8pt]
\hline
\parbox{2.7cm}{Pseudo-\\Hermitian}
& c.c. pairs 
& Indefinite inner product 
& \parbox{4.5cm}{Conserved quantities, \\ Real partition function} \\[8pt]
\hline
\mbox{Unbroken} \mbox{Antilinear} \mbox{Symmetric} 
& Real 
& Indefinite inner product (when $\dim \mathcal{H} < \infty$)
& \\[8pt]
\hline
\parbox{2.7cm}{Quasi-\\Hermitian}
& \parbox{2.2cm}{Semi-simple\\Real} 
& Inner product 
& \mbox{Real-valued measurements,} \mbox{States generate}  \mbox{probability measures}
\\[8pt]
\hline
Hermitian 
& \parbox{2.2cm}{Semi-simple\\Real} 
& Hilbert space 
& \mbox{Real-valued measurements,} \mbox{States generate}  \mbox{probability measures} \\[8pt]
\hline
\end{tabular}
\endgroup
\caption{Comparison of non-Hermitian operator classes. Complex conjugate is abbreviated to c.c.
Physical consequences refer to contexts where an operator is interpreted as either an observable or as the generator of time evolution. The sesquilinear form defines the meaning of unitarity of (non-)Hermitian Hamiltonians.
}
\label{operatorClassTable}
\end{table*} 

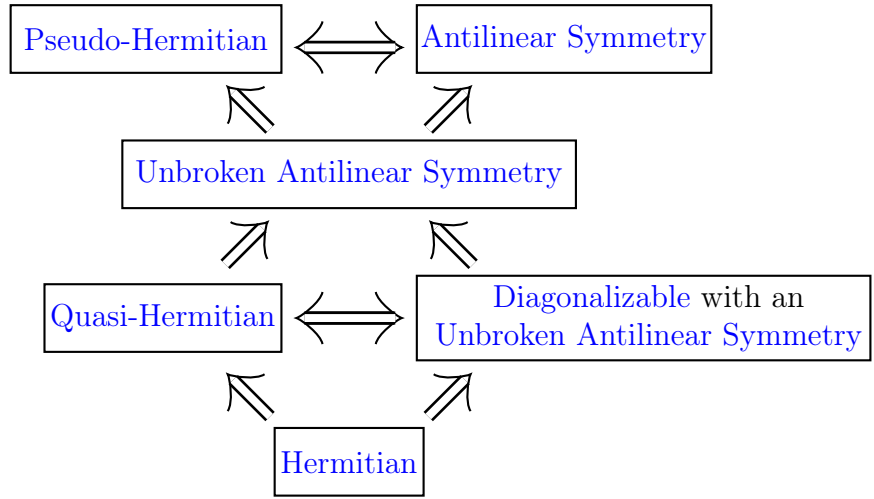
\begin{figure}[!ht]
\begin{center}
\begin{tikzpicture}[thick, scale=0.9]
%%Operator Classes:
\draw (-1.1,-.5) rectangle (1.1,.5) node[pos=.5] {\hyperlink{link:Hermitian}{Hermitian}};
\draw (-4.5,1.625) rectangle (-1,2.625) node[pos=.5] {\hyperlink{link:QuasiHerm}{Quasi-Hermitian}};
\draw (1,1.5) rectangle (7.7,2.75) node[pos=.5] {
\renewcommand{\arraystretch}{1}
$\begin{array}{c}
\hyperlink{link:Diagonable}{\text{Diagonalizable}}\text{ with an} \\
\hyperlink{link:Unbroken}{\text{Unbroken Antilinear Symmetry}}
\end{array}$
};
\draw (-3.35,3.75) rectangle (3.35,4.75) node[pos=.5] {\hyperlink{link:Unbroken}{Unbroken Antilinear Symmetry}
};
\draw (-5,5.75) rectangle (-1,6.75) node[pos=.5] {\hyperlink{link:PseudoHerm}{Pseudo-Hermitian}};
\draw (1,5.75) rectangle (5.35,6.75) node[pos=.5] {
\hyperlink{link:Antilinear}{Antilinear Symmetry} 
};

%Node for vertical spacing
\draw node (Diag3) at (0,7) {};

%Nodes for arrows
\draw node (HR) at (1,.5) {};
\draw node (HL) at (-1,.5) {};
\draw node (QH4) at (-2,1.5) {};
\draw node (Diag3) at (2,1.5) {};

\draw node (QHR) at (-1,2.125) {};
\draw node (DiagL) at (1,2.125) {};
\draw node (QHR2) at (-.8,2.125) {};
\draw node (DiagL2) at (.85,2.125) {};

\draw node (QHU) at (-2,2.75) {};
\draw node (DiagU) at (2,2.75) {};
\draw node (UASL) at (-1,3.75) {};
\draw node (UASR) at (1,3.75) {};

\draw node (UAS2) at (1,4.75) {};
\draw node (UAS1) at (-1,4.75) {};
\draw node (PH4) at (-2,5.75) {};
\draw node (AL3) at (2,5.75) {};

\draw node (PHR) at (-1,6.125) {};
\draw node (ALL) at (1,6.125) {};
\draw node (PHR2) at (-.8,6.125) {};
\draw node (ALL2) at (.85,6.125) {};

%Arrows
\draw[line width=1pt, double distance=3pt,-{Classical TikZ Rightarrow[sharp]}] (HL) -- (QH4);
\draw[line width=1pt, double distance=3pt,-{Classical TikZ Rightarrow[sharp]}] (HR) -- (Diag3);
\draw[line width=1pt, double distance=3pt,-{Classical TikZ Rightarrow[sharp]}] (QHR2) -- (DiagL);
\draw[line width=1pt, double distance=3pt,-{Classical TikZ Rightarrow[sharp]}] (DiagL2) -- (QHR);
\draw[line width=1pt, double distance=3pt,-{Classical TikZ Rightarrow[sharp]}] (QHU) -- (UASL);
\draw[line width=1pt, double distance=3pt,-{Classical TikZ Rightarrow[sharp]}] (DiagU) -- (UASR);
\draw[line width=1pt, double distance=3pt,-{Classical TikZ Rightarrow[sharp]}] (UAS2) -- (AL3);
\draw[line width=1pt, double distance=3pt,-{Classical TikZ Rightarrow[sharp]}] (UAS1) -- (PH4);
\draw[line width=1pt, double distance=3pt,-{Classical TikZ Rightarrow[sharp]}] (PHR2) -- (ALL);
\draw[line width=1pt, double distance=3pt,-{Classical TikZ Rightarrow[sharp]}] (ALL2) -- (PHR);
\end{tikzpicture}
\end{center}
\caption{This figure summarizes the relationship between different types of non-Hermitian matrices. If an arrow points from one term to another, then every instance of a term at the tail of this arrow is also an instance of a term at the point of this arrow. Relevant definitions are hyperlinked to this figure. Some authors require matrices with unbroken antilinear symmetry to be diagonalizable, in which case the constraint of unbroken antilinear symmetry is equivalent to quasi-Hermiticity. Equivalence between pseudo-Hermiticity and antilinear symmetry only holds in finite-dimensional Hilbert spaces \cite{Siegl2009,siegl2008quasi}.} \label{matrixClassFig}
\end{figure}

%\footnote{I only know how to prove the reality of the partition function if the antilinear symmetry is bijective, in which case it follows from cyclicity of the trace.}
%\footnote{I only know how to prove that operators with \textit{bijective} antilinear symmetries have real-valued partition functions. This statement follows from cyclicity of the trace.}
\begin{enumerate}[4.]
\item In my view, the measurement problem in quantum theory is that there is no axiomatic procedure for determining which processes correspond to unitary time evolution and which processes induce measurements. A related issue is that observers are modelled as deterministic entities. The distinction between observer and system is sometimes referred to as a \textit{Heisenberg cut}, and there is no first principles procedure which determines when or where a Heisenberg cut takes place. Due to the presence of a Heisenberg cut, no quantum theory can be a universal theory. While I doubt that the measurement problem can be solved by simply making things non-Hermitian, it is interesting to point out that a certain limit of continuously performed measurements can be modelled via a non-Hermitian Hamiltonian \cite{Dubey2021}. 
\end{enumerate}
%FIND A CITATION FOR THE MEASUREMENT PROBLEM AND USE IT TO GUIDE MY WRITING

%SHOULD I SPECIFY WHAT THE REDUCED PLANCK'S CONSTANT IS? OR REFERENCE THE SI STANDARDS PAGE?
%TO-DO FOR MYSELF: CLARIFY THE DIFFERENCE BETWEEN "SELF-ADJOINT" AND "HERMITIAN". Symmetric operators are the ones who satisfy the inner product relation, but whose domains may not be equal.

I conclude this section with a perhaps counter-intuitive claim: analysis of the non-Hermitian regime yields insight into the Hermitian domain. As an example, consider a Rayleigh-Schr{\"o}dinger perturbation of a Hermitian operator, which is a linear perturbation whose strength is controlled by a single real parameter. The eigenvalues and eigenstates of this operator admit a Taylor series expansion which has a possibly finite radius of convergence \cite{Kato1995}. Exceptional points, a property of non-Hermitian operators, dictate this radius; as one generalizes the perturbation to include a complex parameter, branch points in the spectrum can occur if the parameter gains a sufficiently large imaginary part, and a perturbative expansion cannot encapsulate behaviour beyond this point \cite{Klaiman2008}.

%{\color{red} To-DO:} Add references for the measurement problem.

%MAKE THIS TRANSITION MORE CLEAR, LINK BETWEEN WEAKNESSES AND NON-HERMITIAN PHYSICS.
%
%Furthermore, a fundamental description of quantum theory in terms of non-Hermitian linear operators exists, using the so-called quasi-Hermitian linear operators reviewed in \cref{quasiHermSection}. 

%EXPAND ON THE GENERAL RELATIVITY, OR MAYBE MORE THAT PARAGRAPH TO EARLIER? MAKE A MOTIVATION SECTION

\subsection{Non-Hermiticity in Classical Physics}

%The suggested figure will clarify what PT symmetry is.

A surprising feature of mathematical equations is their ability to model a variety of distinct physical systems. This section examines how the Schr{\"o}dinger equation is no exception and how it appears in multiple contexts in classical physics. When applied to settings other than isolated quantum systems, the state which solves the Schr{\"o}dinger equation need not be physically interpreted as a probability distribution. Consequently, the dynamics of such a state need not be unitary, and the Hamiltonian can be non-Hermitian. Introduction of non-Hermitian Hamiltonians allows the Schr{\"o}dinger equation to model systems exhibiting \textit{loss} or \textit{gain}.

In optical systems, such as coupled waveguides, the propagation of electromagnetic waves is modelled with a Schr{\"o}dinger-like equation referred to as the \textit{paraxial equation} \cite[\S 7.3]{siegman1986lasers}. Optical systems with loss (attenuation) or gain (as in a lasing medium) can be modelled via a refractive index with a negative or positive imaginary part respectively, which leads to the non-Hermiticity of the Hamiltonian in the paraxial equation. Optical systems with \textit{balanced} gain and loss can be modelled by a $\mathcal{PT}$-symmetric Hamiltonian \cite{Ruschhaupt2005,ElGanainy2007}. $\mathcal{PT}$-symmetric optical systems were experimentally realized in \cite{Guo2009,Rter2010}. A review article which goes into depth on this topic is \cite{ElGanainy2018}.

In the context of linear electronic circuits, gain is implemented with operational amplifiers, and loss is implemented with resistors. In such a circuit with coupled gain and loss, the time evolution of electrical charges and currents is governed by a non-Hermitian "Hamiltonian" matrix \cite{Schindler2011}. In cases where the gain and loss is balanced, the resulting Hamiltonian is $\mathcal{PT}$-symmetric.

Even motion in simple mechanical systems, such as coupled driven pendulums, can be modelled with a non-Hermitian but $\mathcal{PT}$-symmetric Hamiltonian \cite{Bender2013}.

%{\color{red} Writing note:} \cite{Dembowski2001} is also of interest (an early experimental work demonstrating exceptional points).

%-------------------------------------------------------------------------%

%----------------------------------------------------------------------%

\section{Antilinear Symmetry} \label{PTSymmetryIntroSection}
%
%A fundamental theory of physics may possess \textit{time-reversal symmetry}. Considering an arbitrary state of a system as a function of time, $s(t) = f(t, s(0))$, if there exists a re-configuration of the initial state variables, $s'(0)$, which evolves with time-reversed motion, $s'(t) = f(-t, s(0))$.

%GOAL: CLARIFY HOW \Theta is actually the time reversal operator.
A major guiding principle in physics is \textit{symmetry}. Symmetry considerations grant enhanced insight into systems, often greatly simplifying their analysis. In theories governed by an action principle, Noether's theorem establishes a correspondence between differentiable symmetries and conservation laws \cite{Noether1918} and consequently derives angular momentum and energy-momentum conservation laws from rotational and space-time translation invariance respectively. A non-exhaustive list of discrete, non-differentiable symmetries includes charge conjugation; chiral symmetry; lattice reflections, rotations, and translations; parity; particle-hole symmetry; and time-reversal. Applications of discrete symmetry considerations are in classifying topological quantum states of matter \cite{Chiu2016,Kawabata2019}; deriving selection rules in systems such as atomic nuclei, atoms, and molecules; separating the Lorentz group into connected components, and reducing the computational effort required to derive eigenspaces. %{\color{red} add a references! Let me know if you have suggestions :)}. 
This section is devoted to understanding the mathematical consequences of imposing an antilinear symmetry, such as time-reversal, on an operator such as the Hamiltonian.

%Time is a concept which has captivated human thought. A simple, yet general, mathematical framework to address time evolution is the \textit{dynamical system}. In a dynamical system, time is modelled by a real parameter, $t \in \mathbb{R}$. The state of a system is an element of some mathematical space, $X$. A dynamical system postulates that the state is uniquely defined for all time. Given the state at some time $t_0$, time evolution is given by some map $\Phi:X \to \mathbb{R} \times X$. Time-reversal is an operation we can perform on a state. If the 

Given a state evolving in time according to some laws, these laws are said to exhibit \textit{time-reversal symmetry} if there exists a corresponding time-reversed state which evolves according to the time reverse of these laws. In the Newtonian mechanics of a single particle in $d$ spatial dimensions, whose state space is represented by a pair of real numbers, $(q,p) \in \mathbb{R}^d \times \mathbb{R}^d$, corresponding to the \textit{position} and \textit{momentum} of a particle respectively, the effect of time reversal is to map $(q,p) \mapsto (q,-p)$. As discovered by Eugene Wigner \cite{Wigner1932,WignerCollectedWorks}, time-reversal symmetry of the Schr{\"o}dinger equation is implemented by forcing the Hamiltonian to have an \textit{antilinear symmetry}, as defined below. %don't use numbers for things that are right there...
\begin{defn} \label{symmetryDefn} 
An (anti)linear operator\footnote{Definition~\ref{operatorDefn} defines (anti)linear operators.}, $\Theta$, is called an \hypertarget{link:Antilinear}{\textit{(anti)linear symmetry}} of an operator, $A$, if $A$ and $\Theta$ commute and\footnote{The constraint that $\Theta$ preserves the domains associated to powers of $A$ is only relevant in the case where $A$ acts on an infinite-dimensional space.} $\Theta(\text{Dom}(A^k)) \subseteq \text{Dom}(A^k)$ for all $k \in \mathbb{N}$. %NOTE I HAVEN'T ASSUMED \THETA IS BOUNDED!
\end{defn}
If $\psi(t)$ satisfies the Schr{\"o}dinger equation generated by a Hamiltonian which has the antilinear symmetry $\Theta$, then $\Theta \psi(-t)$ also satisfies the Schr{\"o}dinger equation. If an induced Hilbert space norm of physical states is required to be a constant in time, as is the case for quantum theory, then $\Theta$ must be a norm-preserving map. Equivalently, $\Theta$ must be \textit{antiunitary}, which means it satisfies\footnote{The adjoint of an antilinear map is defined in \cref{adjoint}.} $\Theta^\dag \Theta = \mathbb{1}$.

%PROOF: \mathcal{PT} A e_i = \mathcal{PT} A_{ij} e_j = A^*_{ij} e_{n-j+1} = A \mathcal{PT} e_i = A e_{d-i+1} = A_{d-i+1,k} e_k
\begin{ex} \label{ex:HermitianQuantumParticle}
A quantum particle in one-dimension is modelled via the Hilbert space $L^2(\mathbb{R})$. Two examples of densely-defined operators on $L^2$ are the \textit{multiplication operators}
\begin{align}
\hat{f} \ket{\psi(x)} = \ket{f(x) \psi(x)}
\end{align}
and the \textit{differentiation operator}
\begin{align}
\hat{D} \ket{\psi(x)} = \ket{\frac{d \psi(x)}{dx}}.
\end{align}
The Hamiltonian, $H$, is traditionally assumed to be a \textit{Schr{\"o}dinger operator}, which means $H$ can be expressed in the form
\begin{align}
H = -\frac{\hbar^2}{2m} \hat{D}^2 + \hat{V},
\end{align}
where $m > 0$ is the \textit{mass} of the particle, and $V(x)$ is the \textit{potential}. If $V(x) = V^*(x)$ is real-valued and continuous, then $H$ is self-adjoint\footnote{Technically, without further information on its domain, $H$ is only \textit{essentially} self-adjoint; this means $H$ is symmetric and its closure is self-adjoint. For more details on this point, see \cite[Chp. 9]{hall2013quantum}.}, and the antilinear \textit{complex conjugation} operator
\begin{align}
\mathcal{T} \ket{\psi(x)} = \ket{\psi^*(x)}
\end{align}
is an antilinear symmetry of $H$. Thus, $\mathcal{T}$ generates time-reversal. The \textit{position} and \textit{momentum} observables are 
\begin{align}
Q &= \hat{x} \label{positionOperator} \\
P &= -\mathfrak{i} \hbar \hat{D}. \label{momentumOperator}
\end{align}
The action of $\mathcal{T}$ on position and momentum is 
\begin{align}
\mathcal{T} Q \mathcal{T} &= Q\\
\mathcal{T} P \mathcal{T} &= -P,
\end{align}
as one would expect from analogy with Newtonian mechanics.
Intuitively, time-reversal preserves the position of a particle, but it inverts its direction of motion.
\demo
\end{ex}

Reality of the potential of a Schr{\"o}dinger operator is not a necessary condition for antilinear symmetry. One example of a non-Hermitian Hamiltonian with an antilinear symmetry is the Schr{\"o}dinger operator with the complex-valued potential $V(x) = x^2 (\mathfrak{i} x)^\epsilon$, where $\epsilon \in \mathbb{R}$. The corresponding antilinear symmetry is $\mathcal{PT}$-\text{symmetry} in the sense defined in the next example. This potential was considered by \cite{bender1998real} and is commonly regarded as having piqued the physics community's interest in antilinear symmetry.

%Different choice of polarization?
\begin{ex} \label{benderPTExample}
Observe that the position and momentum operators of \cref{positionOperator,momentumOperator} satisfy the \textit{canonical commutation relations},
\begin{align}
[Q,P]_- = \mathfrak{i} \hbar \mathbb{1}.
\end{align}
Max Born observed that performing the substitution $Q \to P, P \to -Q$ results in an alternative representation of the canonical commutation relations, a concept often referred to as \textit{Born reciprocity} \cite{Born1938}. Naturally, Born reciprocity raises the question of whether the position of a particle can be represented with a differentiation operator instead of a multiplication operator. If we make this change, since the classical limit requires the position observable to be time-reversal invariant, $\mathcal{T}$ can no longer represent the antilinear symmetry of the Hamiltonian. A trick to generating an antilinear symmetry is to use the combined operation of \textit{Parity}, $\mathcal{P}$, with $\mathcal{T}$, where
\begin{align}
\mathcal{P} \ket{\psi(x)} = \ket{\psi(-x)}.
\end{align}
For a Schr{\"o}dinger operator to be $\mathcal{PT}$-symmetric, we must have $V(x) = V^*(-x)$. An example of such a potential is the class of $\mathcal{PT}$-symmetric Hamiltonians with potentials 
\begin{align}
V(x) = x^2 (\mathfrak{i} x)^\epsilon
\end{align}
with $\epsilon \in \mathbb{R}$ \cite{bender1998real}.
%The interpretation of $\mathcal{T}$ as time reversal typically comes from considering a class of transpose-symmetric Hamiltonians which are dependent on some parameters. If there is a choice of parameters which makes the Hamiltonian Hermitian, then $\mathcal{T}$ is a time reversal operator for this choice. 
\demo
\end{ex}

Analysis of finite-dimensional Hilbert spaces is simpler than analysis of their infinite-dimensional counterparts. Thus, a discrete version of the $\mathcal{PT}$ operator from the previous example is desirable, and is given in the following example. %Alisa edit.
 
\begin{ex} \label{typicalPTex}
Let the Hilbert space be the complex coordinate space, $\mathbb{C}^n$. An orthonormal basis of $\mathbb{C}^n$ is the \textit{canonical basis}, defined by $(e_i)_j = \delta_{ij}$ for $i \in \{1, \dots, n\}$. A typical choice for parity is the exchange matrix\footnote{Some sources refer to $\mathcal{P}$ as the \textit{standard involutionary permutation matrix}, or \textit{sip matrix} for short \cite{Gohberg2005}.}, 
\begin{align}
\mathcal{P} e_i = e_{n - i + 1}, \label{parityFinite}
\end{align}
and a typical choice for time reversal is the complex conjugation operator in the basis $\{e_i\,|\,i\in \{1, \dots, n\}\}$. Explicitly, given a general vector of the form $\sum_{i = 1}^n \psi_i e_i \in \mathbb{C}^n$ with $\psi_i \in \mathbb{C}$, the action of $\mathcal{T}$ is
\begin{align}
\mathcal{T} \sum_{i = 1}^n \psi_i \gls{ei} = \sum_{i = 1}^n \psi_i^* e_i. \label{timeReverseFinite}
\end{align} 
\demo
\end{ex} 
\begin{defn} \label{centrohermitianDefn}
If the elements of a matrix, $A \in \mathfrak{M}_n(\mathbb{C})$, satisfy $A_{i,j} = A^*_{n-i+1,n-j+1}$, then $A$ is a \textit{centrohermitian} matrix \cite{Lee1980}. $A$ is centrohermitian if and only if $A$ is $\mathcal{PT}$-symmetric with $\mathcal{P}$ and $\mathcal{T}$ defined in \cref{typicalPTex}.
\end{defn}

The following theorem characterizes the spectrum of an operator with an antilinear symmetry. In infinite-dimensional Hilbert spaces, the spectrum of an operator can be larger than the set of its eigenvalues. The set of eigenvalues is also referred to as the \textit{point spectrum}, is denoted by $\sigma_p$, and is analysed in \cref{Section:AntilinearSymmetryBreaking}.
\begin{thm} \label{AntilinearSpectrumSymmetry}
Let $A$ denote an operator with a bounded, bijective antilinear symmetry. Let $\sigma_{p, c, r}(A)$ denote the decomposition of the spectrum of $A$ into point, continuum, and residual parts\footnote{Spectral theory is reviewed in \cref{Section:Spectral-Theory}, which includes a discussion of the decomposition of the spectrum into point, continuun, and residual parts in definitions~\ref{pointSpectrum} and \ref{continuousResidual}.}. Then 
{\normalfont \cite{Siegl2009,siegl2008quasi}}
\begin{align}
\sigma_{p,c,r}(A) = \sigma_{p,c,r}(A)^*.
\end{align}
\end{thm}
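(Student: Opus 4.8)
The plan is to reduce everything to a single conjugation identity. Write the bijective antilinear symmetry as $\Theta$, so that $\Theta A = A \Theta$ and $\Theta(\text{Dom}(A^k)) \subseteq \text{Dom}(A^k)$. The first step is to observe that conjugating $A - \lambda$ by $\Theta$ sends it to $A - \lambda^*$: because $\Theta$ is antilinear, for any scalar $\lambda$ one has $\Theta \lambda \Theta^{-1} = \lambda^* \mathbb{1}$, while $\Theta A \Theta^{-1} = A$ follows from commutativity, so that
\begin{equation}
\Theta (A - \lambda) \Theta^{-1} = A - \lambda^*
\end{equation}
as operators on $\text{Dom}(A)$. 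Before using this I would check that $\Theta^{-1}$ is again an antilinear symmetry, which forces $\Theta(\text{Dom}(A)) = \text{Dom}(A)$ and makes the conjugation well defined.

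Second, I would read off the behaviour of the kernel and range from this identity. Since $\Theta$ is a bijection, it gives $\ker(A - \lambda^*) = \Theta(\ker(A - \lambda))$ and $\mathcal{R}(A - \lambda^*) = \Theta(\mathcal{R}(A - \lambda))$. In particular $\ker(A-\lambda)$ is trivial exactly when $\ker(A-\lambda^*)$ is, so injectivity of $A-\lambda$ is equivalent to injectivity of $A - \lambda^*$; this already yields $\sigma_p(A) = \sigma_p(A)^*$.

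Third, to separate the continuous and residual parts I need the remaining defining properties — density of the range, and whether the range is closed or all of $\mathcal{H}$ — to transfer across $\lambda \mapsto \lambda^*$. These are topological statements, so the key lemma is that $\Theta$ is a homeomorphism. This is where the hypothesis that $\Theta$ is bounded and bijective is essential: viewing $\Theta$ as a bounded bijective real-linear map of the Banach space $\mathcal{H}$ onto itself, the open mapping theorem makes $\Theta^{-1}$ bounded, so $\Theta$ carries dense sets to dense sets, closed sets to closed sets, and $\mathcal{H}$ onto $\mathcal{H}$. Combined with $\mathcal{R}(A - \lambda^*) = \Theta(\mathcal{R}(A - \lambda))$, this shows $\mathcal{R}(A-\lambda)$ is dense (respectively closed, respectively equal to $\mathcal{H}$) if and only if the same holds for $A-\lambda^*$. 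Matching these against the definitions of $\sigma_c$ and $\sigma_r$ then delivers $\sigma_c(A) = \sigma_c(A)^*$ and $\sigma_r(A) = \sigma_r(A)^*$ at once.

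The main obstacle I anticipate is not the algebra but the careful bookkeeping surrounding it: establishing that $\Theta$ is a genuine homeomorphism for the topological half of the argument, and, when $A$ is unbounded, verifying the domain identity $\Theta(\text{Dom}(A)) = \text{Dom}(A)$ so that the conjugation identity holds on the correct domain and the resolvent comparison is legitimate. I would also double-check the precise conventions for $\sigma_c$ versus $\sigma_r$ fixed in the earlier definitions, since the partition into ``range dense but not closed'' versus ``range not dense'' must line up exactly with the properties that $\Theta$ preserves.
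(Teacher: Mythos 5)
The paper itself gives no proof of this theorem: it is stated as a known result with citations to Siegl's work, and the only in-paper relative is item 1 of \cref{PT-Unbroken-Lemma}, which carries out the point-spectrum half of your argument for an injective antilinear symmetry. Your proposal is essentially the standard argument from the cited literature, and it is correct: the conjugation identity $\Theta(A - \lambda)\Theta^{-1} = A - \lambda^*$, the transfer of kernels and ranges under $\Theta$, and the open mapping theorem (applied to $\Theta$ as a bounded real-linear bijection) to make $\Theta$ a homeomorphism are exactly the needed ingredients, and the final bookkeeping against definitions~\ref{pointSpectrum} and \ref{continuousResidual} goes through.

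One caution on the step you flagged. Under the paper's \cref{symmetryDefn}, an antilinear symmetry only satisfies $\Theta(\text{Dom}(A^k)) \subseteq \text{Dom}(A^k)$, and bijectivity of $\Theta$ on $\mathcal{H}$ does not by itself force $\Theta(\text{Dom}(A)) = \text{Dom}(A)$: a bijection can map a dense subspace properly into itself, so asserting that $\Theta^{-1}$ is again a symmetry is equivalent to, not a proof of, the domain equality. Fortunately your argument survives without it. The one-sided inclusions $\Theta \ker(\lambda \mathbb{1} - A) \subseteq \ker(\lambda^* \mathbb{1} - A)$ and $\Theta \, \mathcal{R}(\lambda \mathbb{1} - A) \subseteq \mathcal{R}(\lambda^* \mathbb{1} - A)$ already follow from $\Theta(\text{Dom}(A)) \subseteq \text{Dom}(A)$ and commutation, and applying them both at $\lambda$ and at $\lambda^*$ yields two-sided equivalences for injectivity, density of range, and surjectivity, which is all the paper's conventions require (note its $\sigma_c$ is phrased via surjectivity rather than closed range, so closedness, the one property that does not transfer from a one-sided inclusion, is never needed). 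The same two-sided trick gives $\sigma(A) = \sigma(A)^*$ directly: if $\lambda$ lies in the resolvent set with resolvent $R_\lambda$, then $\Theta R_\lambda \Theta^{-1}$ is a bounded inverse for $\lambda^* \mathbb{1} - A$; the residual part then follows automatically as the complement of the other two within the spectrum.
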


%---------------------------------------------------------------------%

Somewhat misleadingly, when the antilinear symmetry is a product of a linear operator, $\mathcal{P}$, with a complex conjugation operator, such as $\mathcal{T}$, the conjugation operation is often referred to as \textit{time-reversal}, even though it does not generate time-reversal for a general $\mathcal{PT}$-symmetric Hamiltonian. To attempt to explain this terminology, consider a Hamiltonian with continuous dependence on a parameter. A simple choice of such a Hamiltonian is one where every operator has the same antilinear symmetry, $\mathcal{PT}$. If there exists a parameter such that $\mathcal{P}$ is a linear symmetry of the Hamiltonian, then $\mathcal{T}$ is a time-reversal operator for this choice of parameter. In the example of a Schr{\"o}dinger operator with complex-valued potential $V(x) = x^2 (\mathfrak{i} x)^\epsilon$ which was mentioned in \cref{benderPTExample}, the Hermitian limit $\epsilon \to 0$ corresponds to the $\mathcal{T}$-symmetric choice.

\subsection{Antilinear Symmetry Breaking} \label{Section:AntilinearSymmetryBreaking}
\textit{Spontaneous symmetry breaking} is a ubiquitous concept in physics. In condensed matter physics, spontaneous symmetry breaking is used to classify phases of matter \cite{kittel2018introduction}.%Pfeuty? 
In the standard model of particle physics, spontaneous symmetry breaking is used to generate the masses of fundamental particles via the Higgs mechanism \cite[\S 4.3]{langacker2017standard}. %Add references to original papers in final version?

This section discusses the relationship between antilinear symmetry breaking and the reality of the spectrum of an antilinear symmetric operator.
\begin{defn} \label{unbrokenDefn}
Given an antilinear operator on a Hilbert space, $\Theta$, a linear operator, $A: \text{Dom}(A) \to \mathcal{H}$, is called $\Theta$-\hypertarget{link:Unbroken}{\textit{unbroken}} if and only if $\Theta$ is an antilinear symmetry of $A$ and the eigenspaces of $A$ are invariant subspaces\footnote{See \cref{invariantSubspaceOperatorDefn} for the definition of an invariant subspace.} of $\Theta$,
\begin{align}
\Theta \ker (\lambda \mathbb{1} - A) \subseteq \ker (\lambda \mathbb{1} - A)  &\quad& \forall \lambda \in \sigma_p(A).
\end{align} 
If $A$ has the antilinear symmetry $\Theta$ and $A$ is not $\Theta$-unbroken, then $A$ is called $\Theta$-\textit{broken}.
\end{defn}

The following result relates the reality of an eigenvalue of an operator with an antilinear symmetry to the symmetry properties of its eigenspace.
\begin{theorem} \label{PT-Unbroken-Lemma}
Consider a linear operator on a Hilbert space, $A: \text{Dom}(A) \to \mathcal{H}$, with an injective antilinear symmetry, $\Theta$, and consider an eigenvalue $\lambda \in \sigma_p(A)$.
\begin{enumerate}
\item $\Theta \ker(\lambda \mathbb{1} - A) \subseteq \ker(\lambda^* \mathbb{1} - A)$. Thus, for every eigenvalue $\lambda$, its complex conjugate is an eigenvalue of $A$, so $\lambda^* \in \sigma_p(A)$.\\
\item The eigenspace of $A$ associated to $\lambda$ is an invariant subspace of $\Theta$ if and only if $\lambda \in \mathbb{R}$. \label{PT-Unbroken-Lemma-2}\\
\item If the eigenspace associated to $\lambda$ is an invariant subspace of $\Theta$, then the rank-$m$ generalized eigenspaces associated to $\lambda$ are also invariant subspaces of $\Theta$, so
\begin{align}
\Theta \ker (\lambda \mathbb{1} - A)^m \subseteq \ker (\lambda \mathbb{1} - A)^m  &\quad& \forall (m, \lambda) \in \mathbb{Z}_+ \times \sigma_p(A).
\end{align} 
\end{enumerate}
\end{theorem}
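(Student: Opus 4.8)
The plan is to prove the three parts in sequence, since each rests on its predecessor, and the single engine driving everything is the intertwining identity that antilinearity forces. For any $w \in \text{Dom}(A)$ I would first record
\begin{align}
\Theta(\lambda \mathbb{1} - A) w = \lambda^* \Theta w - \Theta A w = \lambda^* \Theta w - A \Theta w = (\lambda^* \mathbb{1} - A) \Theta w,
\end{align}
where the outer equalities use antilinearity of $\Theta$ (turning $\lambda$ into $\lambda^*$) and the middle equality uses that $\Theta$ commutes with $A$. The hypothesis $\Theta(\text{Dom}(A^k)) \subseteq \text{Dom}(A^k)$ guarantees every composition appearing here is defined on the relevant vectors.

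Part 1 is then immediate: for $v \in \ker(\lambda \mathbb{1} - A)$, setting $w = v$ in the identity gives $(\lambda^* \mathbb{1} - A)\Theta v = \Theta(\lambda \mathbb{1} - A)v = 0$, so $\Theta v \in \ker(\lambda^* \mathbb{1} - A)$. Because $\Theta$ is injective, a nonzero $v$ maps to a nonzero $\Theta v$, which certifies that $\ker(\lambda^* \mathbb{1} - A)$ is nontrivial and hence $\lambda^* \in \sigma_p(A)$. For Part 2, the ``if'' direction follows at once from Part 1, since $\lambda \in \mathbb{R}$ makes $\lambda^* = \lambda$ and the containment of Part 1 becomes $\Theta$-invariance of the eigenspace. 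For the ``only if'' direction I would argue by contraposition: assume $\lambda \neq \lambda^*$ and take a nonzero eigenvector $v$. Part 1 places $\Theta v$ in $\ker(\lambda^* \mathbb{1} - A)$; if the eigenspace were $\Theta$-invariant, $\Theta v$ would also lie in $\ker(\lambda \mathbb{1} - A)$. But eigenspaces for distinct eigenvalues intersect trivially, since any common vector $w$ obeys $(\lambda - \lambda^*)w = 0$ and hence $w = 0$. Thus $\Theta v = 0$, contradicting injectivity, so invariance forces $\lambda \in \mathbb{R}$.

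For Part 3, I would use Part 2 to replace the hypothesis ``the eigenspace is $\Theta$-invariant'' by the cleaner condition $\lambda \in \mathbb{R}$, so that $\lambda^* = \lambda$ throughout. The identity above now reads $\Theta(\lambda \mathbb{1} - A) = (\lambda \mathbb{1} - A)\Theta$ on $\text{Dom}(A)$, and a routine induction on $m$, invoking domain preservation at each stage, upgrades this to $\Theta(\lambda \mathbb{1} - A)^m = (\lambda \mathbb{1} - A)^m \Theta$ on $\text{Dom}(A^m)$. Applying both sides to $v \in \ker(\lambda \mathbb{1} - A)^m$ gives $(\lambda \mathbb{1} - A)^m \Theta v = \Theta(\lambda \mathbb{1} - A)^m v = 0$, so $\Theta v \in \ker(\lambda \mathbb{1} - A)^m$, as required.

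I do not expect a genuine obstacle here; the content is elementary once the intertwining identity is in hand. The only real care is the infinite-dimensional bookkeeping: one must repeatedly invoke $\Theta(\text{Dom}(A^k)) \subseteq \text{Dom}(A^k)$ so that each of $A\Theta$, $\Theta A$, and their iterated compositions is legitimately defined on the generalized eigenvectors in question. The substantive algebraic ingredients are just three: the conjugation $\lambda \mapsto \lambda^*$ produced by the antilinear $\Theta$, the trivial intersection of eigenspaces attached to distinct eigenvalues, and the injectivity of $\Theta$ used to promote a subspace containment into an honest statement about eigenvalues.
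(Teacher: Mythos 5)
Your proposal is correct and takes essentially the same route as the paper's proof: both rest on the intertwining identity $\Theta(\lambda \mathbb{1} - A) = (\lambda^* \mathbb{1} - A)\Theta$ forced by antilinearity and commutation, use injectivity of $\Theta$ to conclude $\lambda^* \in \sigma_p(A)$, and obtain Part 3 from the commutation of $\Theta$ with $(\lambda \mathbb{1} - A)^m$ once $\lambda = \lambda^*$ is in hand. The only differences are cosmetic: you arrange the two directions of Part 2 in the opposite direct/contrapositive order from the paper and spell out the trivial intersection of eigenspaces for distinct eigenvalues, which the paper leaves implicit.
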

\begin{proof}
This proof is adapted from \cite{bender1999pt}, which addressed the case of simple eigenvalues. 
\begin{enumerate}
\item A brief computation relying on the antilinear symmetry of $A$ yields
\begin{align}
A \Theta v &= \Theta A v \\
&= \Theta \lambda v \\
&= \lambda^* \Theta v. \label{lemma:tempEq}
\end{align}
\item I'll start by displaying the "only if" direction of the assertion of \cref{{PT-Unbroken-Lemma-2}}. Consider an eigenvector corresponding to $\lambda$, $v \in \ker(\lambda \mathbb{1} - A) \setminus \{0\}$. Since $\Theta$ is injective, $\Theta v$ is a nonzero eigenvector of $A$ with the eigenvalue $\lambda^*$. By supposition, the vector $\Theta v$ is in the eigenspace of $A$ associated to the eigenvalue $\lambda$. Thus, $\lambda = \lambda^* \in \mathbb{R}$.

The remaining direction of the assertion of \cref{{PT-Unbroken-Lemma-2}} is proven by contrapositive. Assume $\ker(\lambda \mathbb{1} - A)$ is not an invariant subspace of $\Theta$. Then, there exists an eigenvector, $v \in \text{Dom}(A) \setminus \{0\}$, such that $\Theta v$ is not in the eigenspace associated to $\lambda$. Equation~\eqref{lemma:tempEq} demonstrated that $\Theta v$ is an eigenvector with the eigenvalue $\lambda^*$, thus, $\lambda \neq \lambda^*$.

\item Since $\lambda = \lambda^*$ has been established,
\begin{align}
(\lambda \mathbb{1} - H)^m \Theta = \Theta (\lambda \mathbb{1} - H)^m
\end{align} 
holds for all $m \in \mathbb{Z}^+$. Thus, if $v \in \ker(\lambda \mathbb{1} - H)^m$, then $\Theta v \in \ker (\lambda \mathbb{1} - H)^m$.
\end{enumerate}
\end{proof}

\begin{ex}[continues=ex:HermitianQuantumParticle]
Schr{\"o}dinger operators with real-valued potentials, which are the Hamiltonians associated to the quantum theory of non-interacting particles in $n$-dimensional Euclidean space, are $\mathcal{T}$-unbroken. This stems from two observations: firstly, that the spectrum is real, and secondly, if $\psi(x)$ is an eigenfunction, then so is $\psi^*(x)$. %{\color{red} reference?}
\demo
\end{ex} %

\begin{ex}[continues=benderPTExample]
If the potential is complex, then the standard results from Sturm-Liouville theory regarding reality of the spectrum and eigenfunctions no longer apply. Continuing with the analysis of $V(x) = x^2 (\mathfrak{i} x)^\epsilon$, unbroken $\mathcal{PT}$-symmetry protects reality of the spectrum for $\epsilon \geq 0$, while broken $\mathcal{PT}$-symmetry manifests for $\epsilon < 0$, where only a finite subset of the spectrum is real \cite{bender1998real,Dorey2001}. Even in the (non-Hermitian) $\mathcal{PT}$-unbroken case, the eigenfunctions are no longer real-valued. Consequently, the node theorem, which characterizes the interlacing of roots of eigenfunctions, no longer holds. An interesting generalization of the interlacing property to $\mathcal{PT}$-unbroken Schr{\"o}dinger operators regards the \textit{winding numbers} of eigenfunctions. Numerical results suggest winding numbers of an eigenfunction of a $\mathcal{PT}$-symmetric Schr{\"o}dinger operator match their corresponding quantum numbers \cite{Schindler2017}. %{\color{red} references to Node theorem and Sturm-Liouville}.
\demo
\end{ex}
The next theorem asserts the equivalence of the existence of an unbroken antilinear symmetry with the reality of the spectrum for operators on finite-dimensional Hilbert spaces.

%Furthermore, given an eigenvalue $\epsilon \in \sigma(A)$, if $v \in \ker (\epsilon \mathbb{1} - A)$ is a corresponding eigenvector of $A$, then $\Theta v \in \ker(\epsilon^* \mathbb{1} - A)$ is an eigenvector corresponding to the eigenvalue $\epsilon^* \in \sigma(A)$.

Note in the case where the spectrum of $A$ consists only of simple eigenvalues, $A$ is $\Theta$-unbroken if and only if every eigenvector of $A$ is an eigenvector of $\Theta$.

%WRITE THIS IN A WAY WHERE IT'S SOMEWHAT GENERAL?

\begin{thm} \label{unbrokenRealSpectrum}
The spectrum of a matrix, $A \in \mathfrak{M}_n(\mathbb{C})$, is a subset of the reals, $\sigma(A) \subset \mathbb{R}$, if and only if there exists a antilinear symmetry, $\Theta$, satisfying $\Theta^2 = \gls{id}$, such that $A$ is $\Theta$-unbroken. If $A$ is $\Theta$-unbroken for one invertible antilinear symmetry, then $A$ is $\Theta$-unbroken for all invertible antilinear symmetries.
\end{thm}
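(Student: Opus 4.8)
The plan is to treat the biconditional as two implications together with the final universality claim, and to observe at the outset that all three parts reduce to the eigenspace-invariance criterion of \cref{PT-Unbroken-Lemma}, part~\ref{PT-Unbroken-Lemma-2}: for an injective antilinear symmetry $\Theta$, the eigenspace $\ker(\lambda\mathbb{1} - A)$ is $\Theta$-invariant if and only if $\lambda \in \mathbb{R}$. Since $A \in \mathfrak{M}_n(\mathbb{C})$ acts on a finite-dimensional space, its spectrum coincides with its point spectrum, $\sigma(A) = \sigma_p(A)$, so reality of the spectrum is the same as reality of every eigenvalue. This single equivalence does most of the work.

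For the ``$\Leftarrow$'' direction, suppose $\Theta$ is an antilinear symmetry with $\Theta^2 = \mathbb{1}$ for which $A$ is $\Theta$-unbroken. The relation $\Theta^2 = \mathbb{1}$ makes $\Theta$ its own inverse, hence bijective and in particular injective, so \cref{PT-Unbroken-Lemma} applies. By \cref{unbrokenDefn}, unbrokenness means every eigenspace is $\Theta$-invariant, so part~\ref{PT-Unbroken-Lemma-2} forces every eigenvalue to be real; together with $\sigma(A) = \sigma_p(A)$ this gives $\sigma(A) \subset \mathbb{R}$. The universality claim is the same argument run forwards and then backwards: if $A$ is unbroken for some invertible antilinear symmetry $\Theta_0$, then all eigenvalues are real by part~\ref{PT-Unbroken-Lemma-2}; and for any other invertible (hence injective) antilinear symmetry $\Theta_1$ of $A$, reality of the eigenvalues together with part~\ref{PT-Unbroken-Lemma-2} immediately returns that every eigenspace is $\Theta_1$-invariant, i.e.\ $A$ is $\Theta_1$-unbroken.

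The only part requiring a genuine construction is the ``$\Rightarrow$'' direction, and this is where the main subtlety lies, because $A$ need not be diagonalizable. Assuming $\sigma(A) \subset \mathbb{R}$, I would pass to a Jordan basis $\{f_1, \dots, f_n\}$ in which $A$ is represented by a Jordan matrix $J$; because every eigenvalue is real, $J$ has real entries. Define $\Theta$ to be complex conjugation of coordinates in this basis, i.e.\ the unique antilinear map fixing each $f_i$. Then $\Theta^2 = \mathbb{1}$ by construction, and $\Theta$ commutes with $A$ precisely because the matrix $J$ of $A$ in this basis is real, so $\Theta$ is an antilinear symmetry in the sense of \cref{symmetryDefn}. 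Finally, each eigenspace $\ker(\lambda\mathbb{1}-A)$ is the span of the subset of Jordan-basis vectors that are genuine eigenvectors, and such a coordinate subspace is sent to itself by coordinate conjugation, so every eigenspace is $\Theta$-invariant and $A$ is $\Theta$-unbroken.

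I expect the main obstacle to be the non-semisimple case: one must resist building $\Theta$ from eigenvectors alone, which would only handle diagonalizable $A$, and instead use the full generalized-eigenvector basis so that $A$ becomes a real matrix. The verification that $\Theta$ commutes with $A$ hinges entirely on the reality of the entries of $J$, and the verification of eigenspace-invariance hinges on the eigenspaces being coordinate subspaces in the Jordan basis; both become routine once the right basis is fixed. Note that this construction deliberately does \emph{not} assert diagonalizability of $A$, which is consistent with the fact that \cref{unbrokenRealSpectrum} characterizes reality of the spectrum rather than the stronger property of quasi-Hermiticity.
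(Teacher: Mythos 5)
Your proposal is correct and takes essentially the same route as the paper's proof: both hinge on the Jordan normal form of $A$ being real when $\sigma(A)\subset\mathbb{R}$ (your coordinate-conjugation $\Theta$ in a Jordan basis is exactly the involutive antilinear symmetry the paper extracts via \cref{pseudoEquivThm}), and on the eigenspace-invariance criterion of \cref{PT-Unbroken-Lemma} for the converse and the universality claim. The only cosmetic difference is that the paper establishes unbrokenness for \emph{every} invertible antilinear symmetry by one direct computation with real $\lambda$, which subsumes your coordinate-subspace check of the constructed $\Theta$.
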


%Question to ask: As a reader, do I need this to understand stuff later?
%Figure out which proofs stay and which ones go to appendix
%
%Move stuff from App A to the intro chapter.

\begin{proof}
See \cref{UnbrokenTheoremProof}.
\end{proof}

I would find a generalization of \cref{unbrokenRealSpectrum} which applies to operators on infinite-dimensional Hilbert spaces quite interesting. As far as I'm aware, no such generalization is presently known. For such a generalization to exist, the notion of unbroken antilinear symmetry must be different from the one provided in \cref{unbrokenDefn}. Notably, in the setting of a one-dimensional particle given in \cref{benderPTExample}, the operator $\mathfrak{i} \hat{x}^3$ has no eigenspaces, so it is trivially $\mathcal{PT}$-unbroken; yet, $\mathfrak{i} \hat{x}^3$ has an imaginary spectrum.

%Something that I should add: $A$ is similar to a matrix with real entries if and only if $A$ is $\Theta$ symmetric with $\Theta$ an antilinear automorphism. Proof: 
%\begin{align}
%A \Theta K K &= \Theta K K A \\
%(\Theta K)^{-1} A (\Theta K) K &= K A
%\end{align}

%DEFINE ALGEBRAIC, GEOMETRIC MULTIPLICITY, NON-DEGENERATE, SIMPLE EIGENVALUES. DEFECTIVE, DIAGONALIZABLE, NON-DIAGONALIZABLE.

%DEFINE ANTIUNITARY!

%The product of two intertwining operators associated to a non-Hermitian operator, $H$, is a linear symmetry of $H$. Mostafazadeh demonstrated that every diagonalizable operator with an unbroken antilinear symmetry, $\Theta$, has one symmetry, $\mathcal{C}$, which satisfies 
%\begin{align}
%\mathcal{C}^2 &= \mathbb{1} \\
%\end{align}
%

%In the special case where a diagonalizable matrix, $A$, has both an unbroken antiunitary, Hermitian antilinear symmetry, and transpose symmetry, symmetry properties of intertwining operators are more apparent. In this case, we can show there exists a linear involution, typically called $\mathcal{C}$, which commutes with both $\Theta$ and $A$ such that $\mathcal{C} \Theta \mathcal{T}$ is a positive-definite metric operator.
%
%Something that would be nice: Given a diagonalizable pseudo-Hermitian operator with real spectrum, with the antilinear symmetry $\Theta$ and which is anti-pseudo-Hermitian with antilinear intertwiner $\tau$, then there exists an intertwining operator $\eta$ such that $\eta^{-1} \tau \Theta$ is the $\mathcal{C}$ operator with all the niceties that we want, which is not identically the identity operator.

\section{Pseudo-Hermiticity} \label{PseudoHermSection}

A class of operators which is closely related to the set of operators with antilinear symmetry is the set \textit{pseudo-Hermitian operators}. This class of operators defines an extension of Hermitian operators, and includes Hermitian operators as a limiting case. Theorem~\ref{pseudoEquivThm} characterizes the structure of pseudo-Hermitian matrices and could be considered the pinnacle of this section. In particular, this theorem includes the assertion that an operator acting on a finite-dimensional space is pseudo-Hermitian if and only if it has an invertible antilinear symmetry.

Pseudo-Hermiticity is defined using an \textit{intertwining operator}. Physically, the intertwining operator corresponds to a real-valued conserved quantity \cite{bian2019time}; this quantity is conserved in time when evolution is dictated by a pseudo-Hermitian Hamiltonian in the Schr{\"o}dinger equation. 

The natural setting for pseudo-Hermitian operators is an \textit{indefinite inner product space}, which is defined in generality in \cref{indefiniteInnerProduct}. This thesis only considers a special type of an indefinite inner product space; the $\eta$-\textit{space} defined in \cite[p. 39]{azizov1989linear} and below.
\begin{defn} \label{etaSpace}
A bounded injective\footnote{An operator is injective if and only if its kernel is the set containing only the zero vector.} linear operator on a Hilbert space, $\eta \in \mathcal{B}(\mathcal{H})$, defines an $\eta$-\textit{space}, an indefinite inner product space on $\mathcal{H}$ with the indefinite inner product 
\begin{align}
\braket{\psi|\phi}_\eta := \braket{\psi|\eta \phi}. \label{etaInnerProduct}
\end{align}
The special case of a traditional inner product space corresponds to the choice $\eta = \mathbb{1}$.
\end{defn}

\begin{defn} \label{pseudoHermDefn}
An operator, $A: \text{Dom}(A) \to \mathcal{H}$, is \textit{weakly pseudo-Hermitian} if and only if there exists a bounded, bijective, linear \textit{intertwining operator}\footnote{$\eta$ is sometimes referred to as a \textit{Gram operator} \cite[p. 89]{Bognr1974}.}, $\eta \in \text{GL}(\mathcal{B}(\mathcal{H}))$, such that
\begin{align}
A= \eta^{-1} A^\dag  \eta.
\end{align}
If the intertwining operator, $\eta$, is also self-adjoint, then $A$ is called \hypertarget{link:PseudoHerm}{\textit{pseudo-Hermitian}} or $\eta$-\textit{self-adjoint}\footnote{The terminology of $\eta$-self-adjointness can be found in many textbooks, such as \cite[p. 111]{MethodsOfMatrixAlgebra}, \cite[p. 104]{azizov1989linear}, and \cite[p. 48]{Gohberg2005}.}. 
If instead the intertwining operator is bounded, bijective, Hermitian, and antilinear, $A$ is \textit{anti-pseudo-Hermitian}.
\end{defn}

Special cases of pseudo-Hermitian matrices include the centrohermitian matrices of \cref{typicalPTex}, the perhermitian matrices, as defined in \cite{Hill1990}, and the $\kappa$-real and $\kappa$-Hermitian matrices of \cite{Hill1992}.
%
%Add bounded inverse theorem?

A closely related concept is that of \textit{pseudo-unitarity}, as defined in \cite[p. 134]{azizov1989linear}.
\begin{defn} \label{pseudoUnitary}
An operator, $U \in \mathcal{B}(\mathcal{H})$ is \textit{pseudo-unitary} or $\eta$\textit{-unitary} if and only if $U$ is surjective\footnote{Pseudo-unitary operators are necessarily surjective in the finite-dimensional case.} and there exists a bounded, bijective linear operator $\eta \in \text{GL}(\mathcal{B}(\mathcal{H}))$ such that
\begin{align}
U^\dag \eta U = \eta.
\end{align}
\end{defn}

Some simple properties of self-adjoint and unitary operators can be generalized to their $\eta$-self-adjoint and $\eta$-unitary counterparts. In analogy to \cref{adjointDefn}, if $A$ is $\eta$-self-adjoint, then
\begin{align}
\braket{\psi|A \phi}_\eta = \braket{A \psi|\phi}_\eta &\quad& \forall \psi, \phi \in \text{Dom}(A). \label{pseudoAdjoint}
\end{align}
If $U$ is $\eta$-unitary, then it preserves the indefinite inner product,
\begin{align}
\braket{U \psi|U \phi}_\eta = \braket{\psi|\phi}_\eta &\quad& \forall \psi, \phi \in \mathcal{H}.
\end{align}
If $H \in \mathcal{B}(\mathcal{H})$ is $\eta$-self-adjoint, then $U := e^{-\mathfrak{i} t H}$ is $\eta$-unitary when $t \in \mathbb{R}$. The following definition introduces pseudo-unitary representations, which can be used to define time evolution. 
\begin{defn}
Let $(G, \cdot)$ denote a group with the binary operation $\cdot$ and identity $e$. A \textit{representation} of $(G,\cdot)$ on a Banach space\footnote{A Banach space is defined in \cref{defn:Banach}.}, $(X,||\cdot||)$, is a map $\Phi:G \to \text{GL}(\mathcal{B}(X))$ such that for all $g, g' \in G$,
\begin{align}
\Phi(e) &= \mathbb{1} \\
\Phi(g) \Phi(g') &= \Phi(g g') \\
\Phi(g^{-1}) &= (\Phi(g))^{-1}.
\end{align}
A representation of a group is \textit{strongly continuous} if it is continuous in the strong operator topology. Given a map $\eta:G \to \text{GL}(\mathcal{B}(\mathcal{H}))$ such that $\eta(g) = \eta^\dag(g)$, a representation is $\eta$\textit{-unitary} if for every $g \in G$,
\begin{align}
\Phi^\dag(g) \eta(g) \Phi(g) = \eta(g).
\end{align}
\end{defn}

If time evolution of the state $\psi \in \mathcal{H}$ is governed by a strongly-continuous $\eta$-unitary representation of the reals, so that $\psi(t) = U(t-t_0) \psi(t_0)$, then the intertwining operator $\eta$ corresponds to a real-valued conserved quantity, \cite{bian2019time}
\begin{align}
\frac{d}{dt} \braket{\psi(t)|\psi(t)}_\eta = 0.
\end{align} 

A pseudo-Hermitian operator has many intertwining operators associated to it. 
In particular, consider an $\eta$-self-adjoint operator, $H$, which commutes with a bijective operator, $O$, so that $H O = O H$. If $O$ is $\eta$-self-adjoint, then $H$ is $\eta O$-self-adjoint as well. Examples for $O$ include the powers of $H$. Thus, the operators 
\begin{align}
\eta_k := \eta H^k \label{generative}
\end{align}
are also intertwiners associated to $H$ \cite{bian2019time}. If the spectrum of a matrix $H$ consists of only simple eigenvalues, then the span of the $\eta_k$ is the entire space of intertwining operators.

To conclude this section, I note that in the spirit of \cref{pseudoAdjoint}, adjoints can be defined on $\eta$-spaces \cite[\S 4.1]{Gohberg2005} \cite[\S 2.1]{azizov1989linear}. In the case where $A$ is bounded,
\begin{align}
A^{\dag_\eta} := \eta^{-1} A^\dag \eta.
\end{align}

\subsection{Literature Review of Pseudo-Hermiticity}

Pseudo-Hermitian operators have been investigated in the physics literature since the early 1940s, dating back to work by Dirac \cite{Dirac1942} and Pauli \cite{Pauli1943} on the quantization of fields. Dirac and Pauli observed that while the spectrum of a pseudo-Hermitian operator may not be a subset of the reals, the quadratic form $\braket{\psi|\eta A \psi}$ remains real-valued for all $\psi \in \text{Dom}(A)$. The earliest usage of the term "pseudo-Hermitian"  that I am aware of, which is consistent with \cref{pseudoHermDefn}, is in \cite{Sudarshan1961}. 
%{\color{red} Alisa: Is I provide in correct?}

%
%To-Do: Add Iokhidov papers, reference preface of Azizov's book as a review article?

The Russian mathematics community investigated the properties of pseudo-Hermitian operators almost simultaneously. Mark Krein introduced what is commonly referred to as a \textit{Krein space}: an $\eta$-space where $\eta$ is the difference between a projection onto a closed linear subspace and the projection onto its orthogonal complement \cite{Gohberg1969,Azizov1995}. Predating this, Lev Pontryagin studied what is often called a \textit{Pontryagin space}: Krein spaces where range of one of its defining projections is finite-dimensional \cite{pontryagin1944hermitian,PontryaginReprint}. Generic results on pseudo-Hermitian matrices demonstrated by Western scholars in the 1960s include the works of \cite{Carlson1965,Radjavi1969}. Much of the resulting analysis of this topic is compiled in the textbooks of \cite{azizov1989linear,Bognr1974}.

Ali Mostafazadeh revitalized the physics community's interest in pseudo-Hermitian operators in  \cite{BiOrthogonal,mostafazadeh2002pseudo2,mostafazadeh2002pseudo3} by investigating connections with the increasingly popular topic of $\mathcal{PT}$-symmetry \cite{bender1998real,bender1999pt,Dorey2001}. A nice review article on pseudo-Hermiticity in quantum theory is \cite{MostaReview}. 

\subsection{Pseudo-Hermitian Matrices}
The next theorem characterizes the structure of pseudo-Hermitian matrices.
\begin{theorem} \label{pseudoEquivThm}
The following conditions on a matrix, $A \in \mathfrak{M}_n(\mathbb{C})$, are equivalent:
\begin{enumerate}
\item $A$ is pseudo-Hermitian with the invertible, Hermitian intertwiner $\eta = \eta^\dag \in {\normalfont \text{GL}}_n(\mathbb{C})$. \label{Item:PseudoHerm}
\item $A = G \eta$, where $G = G^\dag$, $\eta = \eta^\dag$, and $\eta \in {\normalfont \text{GL}}_n(\mathbb{C})$ is invertible. \label{Item:Product}
\item $A$ is similar\footnote{Similarity is reviewed in \cref{similarDefn}.} to a matrix with real entries. \label{Item:SimilarToReal}
\item $A$ has an involutive antilinear symmetry, $\Theta$, which satisfies, $\Theta^2 = \mathbb{1}$. \label{Item:AntilinearInv}
\item $A$ has an invertible antilinear symmetry. \label{Item:AntilinearSymm}
\item $A$ is similar to $A^\dag$, or equivalently, $A$ is weakly pseudo-Hermitian. \label{Item:SimilarToAdj}
\item $\sigma(A) = \sigma^*(A)$ and $\dim \ker(\lambda \mathbb{1} - A)^l = \dim \ker(\lambda^* \mathbb{1} - A)^l$ for every $l \in \gls{N}$ and $\lambda \in \mathbb{C}$. \label{Item:ComplexConjugateEvals}
\end{enumerate}
\end{theorem}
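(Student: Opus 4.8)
The plan is to prove the seven conditions equivalent not by a single long chain but by a connected web of implications, since many of the individual arrows are almost tautological. The backbone I would establish is
\[
(2)\Leftrightarrow(1)\Rightarrow(6)\Rightarrow(7)\Rightarrow(3)\Rightarrow(4)\Rightarrow(5)\Rightarrow(7),
\]
supplemented by the two ``return'' arrows $(3)\Rightarrow(6)$ and $(6)\Rightarrow(1)$, which carry one from the spectral/antilinear cluster $\{3,4,5,7\}$ back into the intertwiner cluster $\{1,2,6\}$. One then checks that this directed graph is strongly connected, so all seven statements lie in a single equivalence class. The genuinely easy arrows are: $(1)\Leftrightarrow(2)$, by setting $G:=A\eta^{-1}$ and using $\eta A=A^\dag\eta \Rightarrow A\eta^{-1}=\eta^{-1}A^\dag$ to see $G=G^\dag$ (and conversely $\eta^{-1}(\eta G)\eta=G\eta=A$); $(1)\Rightarrow(6)$, since a Hermitian invertible intertwiner is in particular an invertible one; and $(4)\Rightarrow(5)$, since an involution is invertible.

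For the structural steps I would argue as follows. For $(6)\Rightarrow(7)$: similarity preserves all the numbers $\dim\ker(\lambda\mathbb{1}-A)^l$, while $\dim\ker(\lambda\mathbb{1}-A^\dag)^l=\dim\ker\big((\lambda^*\mathbb{1}-A)^l\big)^\dag=\dim\ker(\lambda^*\mathbb{1}-A)^l$ because a square matrix and its adjoint have equal rank; combining gives condition $(7)$ together with $\sigma(A)=\sigma^*(A)$. For $(7)\Rightarrow(3)$: the sequence $\dim\ker(\lambda\mathbb{1}-A)^l$ determines the Jordan block sizes at $\lambda$, so $(7)$ says the Jordan structures at $\lambda$ and $\lambda^*$ coincide; hence the non-real eigenvalues pair up with identical block data, which is exactly the hypothesis of the real Jordan canonical form, giving $A$ similar to a real matrix. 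For $(3)\Rightarrow(4)$: writing $A=SRS^{-1}$ with $R$ real and letting $\mathcal{K}$ be entrywise conjugation in the standard basis, the operator $\Theta:=S\mathcal{K}S^{-1}$ is antilinear, satisfies $\Theta^2=\mathbb{1}$, and commutes with $A$ because $\mathcal{K}R=R\mathcal{K}$. For $(5)\Rightarrow(7)$: from $A\Theta=\Theta A$ one gets $(\lambda\mathbb{1}-A)^l\Theta=\Theta(\lambda^*\mathbb{1}-A)^l$ (the same computation as in \cref{PT-Unbroken-Lemma}), so the antilinear bijection $\Theta$ maps $\ker(\lambda^*\mathbb{1}-A)^l$ isomorphically onto $\ker(\lambda\mathbb{1}-A)^l$, and an antilinear bijection preserves complex dimension. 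Finally $(3)\Rightarrow(6)$ uses that a real $R$ satisfies $R^\dag=R^T$ together with the classical fact that every matrix is similar to its transpose, whence $A\sim R\sim R^\dag\sim A^\dag$.

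The step I expect to be the main obstacle is the return arrow $(6)\Rightarrow(1)$: promoting a \emph{bare} invertible intertwiner to a \emph{Hermitian} invertible one. Suppose $SA=A^\dag S$ with $S$ invertible. Taking adjoints shows $S^\dag A=A^\dag S^\dag$ as well, so the whole complex pencil $\eta(z):=zS+\bar z\,S^\dag$ intertwines $A$ with $A^\dag$, and each $\eta(z)$ is manifestly Hermitian. It remains to find a $z$ making $\eta(z)$ invertible. I would reduce to $w:=e^{2\mathfrak{i}\theta}$ and study $\det(wS+S^\dag)$ as a polynomial in $w$: its top coefficient is $\det(S)\neq0$, so the polynomial is not identically zero and has at most $n$ roots, leaving a value $w$ on the unit circle with $\det(wS+S^\dag)\neq0$. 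The corresponding $\eta(z)$ is then a Hermitian invertible intertwiner, establishing $(1)$ and, with $(1)\Leftrightarrow(2)$ already in hand, closing the web.
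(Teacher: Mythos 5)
Your proposal is correct, but it takes a genuinely different route from the paper. The paper's proof of this theorem is essentially a literature survey: it cites \cite{Carlson1965} for $(1)\Leftrightarrow(2)\Leftrightarrow(3)$, \cite{MostaNonDiag} for $(1)\Leftrightarrow(7)$, \cite{Radjavi1969,Duke1969,Djokovi1973} for $(1)\Leftrightarrow(6)$, and \cite{Scolarici2003} for the antilinear conditions; its only original content is a short argument splicing $(4)$ and $(5)$ into the list via the known $(3)\Leftrightarrow(6)$, namely $(3)\Rightarrow(4)$ by conjugating entrywise complex conjugation (the same construction as your $\Theta = S\mathcal{K}S^{-1}$), $(4)\Rightarrow(5)$ trivially, and $(5)\Rightarrow(6)$ by observing $\mathcal{T}A\mathcal{T} = \mathcal{T}\Theta A(\mathcal{T}\Theta)^{-1}$ is similar to $A$ and, being the entrywise conjugate of $A$, is similar to $A^\dag$ via transpose-similarity \cite{Taussky1959}. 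Your web is instead self-contained: the rank/kernel bookkeeping for $(6)\Rightarrow(7)$ and $(5)\Rightarrow(7)$, the real Jordan canonical form for $(7)\Rightarrow(3)$, and---most substantively---the pencil argument for $(6)\Rightarrow(1)$, where $\eta(z) = zS + \bar{z}S^\dag$ is Hermitian and intertwines for every $z$, and $\det(wS + S^\dag)$, having leading coefficient $\det S \neq 0$ as a polynomial in $w$, is nonzero at some unimodular $w$; this reproves the Hermitian-intertwiner upgrade that the paper outsources entirely to the literature. Where the paper proves $(5)\Rightarrow(6)$ directly, you route $(5)\Rightarrow(7)\Rightarrow(3)\Rightarrow(6)$, which is fine since your digraph is indeed strongly connected (every node reaches $6$, and $6$ reaches every node through $6\to1\to2$ and $6\to7\to3\to4\to5$). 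What your approach buys is a complete, elementary, citation-free proof at the cost of invoking the real Jordan form and transpose-similarity; what the paper's buys is brevity and historical attribution of each equivalence.
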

\begin{proof}
The equivalence (\ref{Item:PseudoHerm}) $\Leftrightarrow$ (\ref{Item:Product}) $\Leftrightarrow$ (\ref{Item:SimilarToReal}) was proven in \cite[Thm. 2]{Carlson1965}. The equivalence (\ref{Item:PseudoHerm}) $\Leftrightarrow$ (\ref{Item:ComplexConjugateEvals}) was proven in \cite{MostaNonDiag}.

The equivalence (\ref{Item:PseudoHerm}) $\Leftrightarrow$ (\ref{Item:SimilarToAdj}) was established in \cite{Radjavi1969,Duke1969,Djokovi1973}, re-discovered in the case where $A$ is diagonalizable in \cite{Solombrino2002}, and re-discovered in the generic case in \cite{Mostafazadeh2006}. An equivalent result that a matrix, $A$, is pseudo-Hermitian if and only if $A$ is similar to the matrix formed by taking the complex conjugate of its elements was established in \cite[Thm. 2]{Zhang2020}\footnote{Equivalence follows from the observation that every matrix is similar to its transpose \cite{Taussky1959}.}. 

The equivalence (\ref{Item:PseudoHerm}) $\Leftrightarrow$ (\ref{Item:AntilinearSymm}) $\Leftrightarrow$ (\ref{Item:AntilinearInv}) was established in \cite{Scolarici2003}. The remainder of this paragraph reviews alternative literature devoted to weaker claims. 
In the case where $A$ is diagonalizable, (\ref{Item:PseudoHerm}) $\Leftrightarrow$ (\ref{Item:AntilinearSymm}) was established in \cite{Solombrino2002,mostafazadeh2002pseudo3}. Special cases of (\ref{Item:AntilinearSymm}) $\Rightarrow$ (\ref{Item:SimilarToReal}) for particular classes of antilinear symmetries were established in \cite{Robnik}. Special cases of (\ref{Item:AntilinearInv}) $\Rightarrow$ (\ref{Item:SimilarToReal}) were provided in \cite{Lee1980,Hill1992}, although these are subsumed by the previous work of \cite{Robnik}. (\ref{Item:AntilinearSymm}) $\Rightarrow$ (\ref{Item:PseudoHerm}) was proven in the case where the antilinear symmetry is the product of an involution and the complex conjugation operator of \cref{timeReverseFinite} in \cite{Zhang2020}, which could be considered a corollary of the results in \cite{Robnik,Carlson1965}.

I will now utilize the equivalence of (\ref{Item:SimilarToReal}) $\Leftrightarrow$ (\ref{Item:SimilarToAdj}) to provide a simple proof that (\ref{Item:AntilinearInv}) and (\ref{Item:AntilinearSymm}) can be added to this list of equivalent criteria.

To prove (\ref{Item:SimilarToReal}) $\Rightarrow$ (\ref{Item:AntilinearInv}), note criteria (\ref{Item:SimilarToReal}) is saying there exists an invertible similarity transform, $S \in \text{GL}_n(\mathbb{C})$, such that $S A S^{-1} = \mathcal{T} S A S^{-1} \mathcal{T}$. Thus, $A$ has the antilinear symmetry $S^{-1} \mathcal{T} S$, which is its own inverse.

(\ref{Item:AntilinearInv}) $\Rightarrow$ (\ref{Item:AntilinearSymm}) holds trivially.

To prove (\ref{Item:AntilinearSymm}) $\Rightarrow$ (\ref{Item:SimilarToAdj}), note every matrix, $A$, with invertible antilinear symmetry, $\Theta$, is similar to $\mathcal{T} A \mathcal{T}$, since 
\begin{align}
\mathcal{T} A \mathcal{T} = \mathcal{T} \Theta A (\mathcal{T} \Theta)^{-1}.
\end{align} 
Since every matrix is similar to its transpose \cite{Taussky1959}, $\mathcal{T} A \mathcal{T}$ is similar to $A^\dag$. Since $A$ is similar to $\mathcal{T} A \mathcal{T}$, $A$ is similar to $A^\dag$.
\end{proof}
\begin{corollary} \label{antiPseudoHerm}
All pseudo-Hermitian matrices, $A$, are anti-pseudo-Hermitian.
\end{corollary}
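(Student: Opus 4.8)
The plan is to produce, for any pseudo-Hermitian $A \in \mathfrak{M}_n(\mathbb{C})$, an explicit antilinear, Hermitian, bijective intertwiner $\eta$ witnessing $A = \eta^{-1} A^\dag \eta$, equivalently $\eta A = A^\dag \eta$. First I would fix the complex-conjugation operator $\mathcal{T}$ of \cref{timeReverseFinite} and write an arbitrary antilinear operator on $\mathbb{C}^n$ uniquely as $\eta = M \mathcal{T}$ with $M \in \mathfrak{M}_n(\mathbb{C})$. Using the product rule for adjoints of antilinear maps (\cref{adjoint}) together with $\mathcal{T}^\dag = \mathcal{T}$, one finds $(M\mathcal{T})^\dag = M^T \mathcal{T}$. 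Hence $\eta$ is Hermitian precisely when $M = M^T$ is symmetric, and bijective precisely when $M$ is invertible. This is the crucial bookkeeping point: Hermiticity of the antilinear $\eta$ corresponds to \emph{symmetry}, not Hermiticity, of its matrix part.

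Next I would translate the intertwining relation into a condition on $M$. Since $\mathcal{T} A = \bar{A} \mathcal{T}$ and $A^\dag = \bar{A}^T$, the requirement $\eta A = A^\dag \eta$ reduces, after cancelling $\mathcal{T}$, to $M \bar{A} = \bar{A}^T M$, i.e. $M (\bar{A}) M^{-1} = (\bar{A})^T$. Thus constructing the desired $\eta$ is exactly equivalent to finding a symmetric invertible matrix conjugating $\bar{A}$ into its transpose.

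The key input is then the Taussky--Zassenhaus theorem \cite{Taussky1959}: every square matrix is similar to its transpose through a symmetric invertible matrix. Applying it to $\bar{A}$ supplies a symmetric invertible $M$ with $M(\bar{A})M^{-1} = (\bar{A})^T$, so that $\eta = M \mathcal{T}$ is the required antilinear, Hermitian, bijective intertwiner. If one prefers the argument to read as a genuine corollary of \cref{pseudoEquivThm}, one can route it through condition \ref{Item:SimilarToReal}: write $A = S^{-1} R S$ with $R$ real, apply Taussky--Zassenhaus to the real matrix $R$ to get a real symmetric $M_R$ with $M_R R M_R^{-1} = R^T = R^\dag$, form $\zeta_R = M_R \mathcal{T}$, and transport it via $\eta = S^\dag \zeta_R S$, verifying that antilinearity, Hermiticity, and the intertwining relation all survive this conjugation.

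The main obstacle is conceptual rather than computational: one must set up the adjoint calculus for antilinear operators carefully enough to see that the Hermiticity constraint becomes symmetry of $M$, and then recognize the problem as the symmetric-similarity-to-transpose statement rather than trying to manufacture $\eta$ by hand (for instance, $\eta\,\Theta$ with the involutive antilinear symmetry $\Theta$ of \cref{pseudoEquivThm} is an antilinear intertwiner but need not be Hermitian, and symmetrizing it generically destroys invertibility). I would also note in passing that this construction never actually invokes pseudo-Hermiticity of $A$; every matrix is anti-pseudo-Hermitian in the sense of \cref{pseudoHermDefn}, so the corollary is the pertinent special case of the general Taussky--Zassenhaus fact.
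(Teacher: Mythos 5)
Your proof is correct, but it is not the paper's argument. The paper proves the corollary in one line: by \cref{pseudoEquivThm} the pseudo-Hermitian matrix has an invertible antilinear symmetry $\Theta$, and composing it with the Hermitian linear intertwiner $\eta$ gives $\eta \Theta H = \eta H \Theta = H^\dag \eta \Theta$, so $\eta\Theta$ is an antilinear bijective intertwiner. As your closing parenthetical correctly anticipates, that composite need not be Hermitian: $(\eta\Theta)^\dag = \Theta^\dag \eta$, which equals $\eta\Theta$ only if $\Theta^\dag = \eta\Theta\eta^{-1}$, a condition the paper never verifies even though \cref{pseudoHermDefn} demands a \emph{Hermitian} antilinear intertwiner for anti-pseudo-Hermiticity. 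Your route closes exactly this gap: the bookkeeping $(M\mathcal{T})^\dag = M^T\mathcal{T}$ is right, the reduction of $\eta A = A^\dag \eta$ to $M\bar{A}M^{-1} = \bar{A}^T$ is right, and the symmetric form of the similarity-to-transpose theorem is indeed in \cite{Taussky1959}, so $\eta = M\mathcal{T}$ with $M = M^T$ invertible does the job; your transported variant through condition~\ref{Item:SimilarToReal} also checks out, since antilinearity, Hermiticity, and the intertwining relation all survive the conjugation $\eta = S^\dag \zeta_R S$. The trade-off: the paper's composition argument is shorter and stays entirely within \cref{pseudoEquivThm}, but as written it only establishes the existence of an antilinear intertwiner (weak anti-pseudo-Hermiticity); your argument costs one external input but delivers the Hermiticity requirement in full and yields the strictly stronger conclusion that \emph{every} complex square matrix is anti-pseudo-Hermitian — which, far from conflicting with the paper, sharpens its own remark following the corollary that matrices such as $\mathfrak{i}\mathcal{P}$ are anti-pseudo-Hermitian without being pseudo-Hermitian.
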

\begin{proof}
By \cref{pseudoEquivThm}, all pseudo-Hermitian operators with intertwiner $\eta$ have an antilinear symmetry, $\Theta$. Thus,
\begin{align}
\eta \Theta H = H^\dag \eta \Theta.
\end{align}
\end{proof}
The converse of \cref{antiPseudoHerm} does not hold. Given $\mathcal{P}$ and $\mathcal{T}$ defined in \cref{typicalPTex}, all transpose-symmetric matrices are anti-pseudo-Hermitian with respect to $\tau = \mathcal{T}$. The specific transpose-symmetric matrix $\mathfrak{i} \mathcal{P}$ does not satisfy the condition (\ref{Item:ComplexConjugateEvals}) of \cref{pseudoEquivThm} required for pseudo-Hermiticity.

\begin{lemma} \label{pseudoHermTraceDeterminant}
The coefficients of the characteristic polynomial of a pseudo-Hermitian matrix are real-valued.
\end{lemma}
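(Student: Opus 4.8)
The plan is to derive this as an immediate corollary of \cref{pseudoEquivThm}, since essentially all the substantive work has already been packaged into that theorem's equivalences. The cleanest route invokes condition~(\ref{Item:SimilarToReal}): every pseudo-Hermitian $A \in \mathfrak{M}_n(\mathbb{C})$ is similar to a matrix $R \in \mathfrak{M}_n(\mathbb{R})$ with real entries. First I would note that similar matrices share the same characteristic polynomial, so $\det(\lambda \mathbb{1} - A) = \det(\lambda \mathbb{1} - R)$. Then I would read off reality of the coefficients from the Leibniz expansion of the determinant: each coefficient of $\lambda^k$ in $\det(\lambda \mathbb{1} - R)$ is an integer-coefficient polynomial in the entries of $R$, and those entries are real, so every coefficient is real. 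That completes the argument in two lines.

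An alternative worth recording uses condition~(\ref{Item:ComplexConjugateEvals}) instead. Writing $p_A(\lambda) = \det(\lambda \mathbb{1} - A) = \prod_{\mu \in \sigma(A)} (\lambda - \mu)^{m_\mu}$ with $m_\mu := \dim \ker(\mu \mathbb{1} - A)^n$ the algebraic multiplicity, the polynomial obtained by conjugating all coefficients is $\overline{p_A}(\lambda) = \prod_\mu (\lambda - \mu^*)^{m_\mu}$. Condition~(\ref{Item:ComplexConjugateEvals}) asserts exactly that $\sigma(A) = \sigma^*(A)$ with $m_\mu = m_{\mu^*}$, so the two products agree as multisets of roots and hence $\overline{p_A} = p_A$, forcing real coefficients. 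The only subtlety here is to invoke the matching of \emph{multiplicities}, not merely of the spectra as sets; reality would fail if conjugate eigenvalues occurred with unequal multiplicities, so this matching is the crux supplied by the theorem.

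A third route, which I find the most self-contained, uses condition~(\ref{Item:SimilarToAdj}): $A$ is similar to $A^\dag$. This gives $\text{Tr}(A^k) = \text{Tr}((A^\dag)^k) = \text{Tr}\big((A^k)^\dag\big) = \overline{\text{Tr}(A^k)}$ for every $k \in \mathbb{N}$, so all power sums $p_k := \text{Tr}(A^k)$ are real. Newton's identities express the elementary symmetric functions of the eigenvalues—which, up to sign, are precisely the characteristic-polynomial coefficients—as rational-coefficient polynomials in the $p_k$, whence those coefficients are real. I do not expect any genuine obstacle in this proof: the difficulty has been absorbed entirely into \cref{pseudoEquivThm}, and the remaining work is the routine observation that any one of these three equivalent characterizations forces the coefficients to be fixed under complex conjugation.
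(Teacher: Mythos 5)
Your proposal is correct, but it takes a genuinely different route from the paper. The paper proves the lemma directly from the defining intertwining relation, without invoking \cref{pseudoEquivThm} at all: multiplicativity of the determinant gives
\begin{align}
\det(\lambda \mathbb{1} - A) = \det\left(\eta^{-1}(\lambda \mathbb{1} - A^\dag)\eta\right) = \det(\lambda^* \mathbb{1} - A)^*,
\end{align}
so the characteristic polynomial is real-valued at every real $\lambda$ and hence has real coefficients. Your three arguments instead outsource the substance to the equivalence theorem (similarity to a real matrix; conjugate spectrum with matched algebraic multiplicities; similarity to $A^\dag$ plus Newton's identities); all three are sound, and since the lemma appears after \cref{pseudoEquivThm} in the paper there is no circularity. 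What each buys: the paper's computation is a self-contained two-line argument needing only the determinant product identity, so it stands even where the full equivalence theorem is unavailable, whereas your routes make the structural content explicit. Note also that your third route does not actually require \cref{pseudoEquivThm} — the definition of pseudo-Hermiticity already exhibits the similarity $A = \eta^{-1} A^\dag \eta$, whence $\text{Tr}(A^k) = \text{Tr}\left((A^k)^\dag\right) = \text{Tr}(A^k)^*$ directly — and, as the paper itself remarks for the trace via cyclicity, this trace-based version is the one that generalizes to trace-class operators on infinite-dimensional Hilbert spaces, while the similar-to-real and eigenvalue-multiplicity arguments are intrinsically finite-dimensional.
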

\begin{proof}
Reality of the coefficients of the characteristic polynomial follows the product identity of determinants,
\begin{align}
\text{det}(\lambda \mathbb{1}- A) = \text{det}(\eta^{-1} (\lambda \mathbb{1} - A^\dag)\eta) = \text{det}(\lambda \mathbb{1} - A^\dag) \text{det}(\eta^{-1}) \text{det}(\eta) = \text{det}(\lambda^* \mathbb{1} - A)^*.
\end{align}
Since the characteristic polynomial maps every $\lambda \in \mathbb{R}$ to a real number, it must have real coefficients. 
\end{proof}
A corollary of the above lemma is that the trace of an $n \times n$ pseudo-Hermitian matrix is real, since the coefficient of the $\lambda^{n-1}$ term in the characteristic polynomial is $- \text{tr}(A)$. An alternative proof of the reality of the trace, which generalizes to trace-class operators on possibly infinite-dimensional Hilbert spaces, follows from its cyclicity property,
\begin{align}
\text{Tr}(A) = \text{Tr}(\eta^{-1} A^\dag \eta) = \text{Tr}(A^\dag) = \text{Tr}(A)^*.
\end{align}

A result which is equivalent to the above lemma is that the coefficients of the characteristic polynomial of a matrix with an injective antilinear symmetry are real; this equivalent result was derived in \cite{bender2010pt}. An earlier result regarding reality of the characteristic polynomial coefficients for special cases of injective antilinear symmetries was given in \cite{Mandilara2002}.

\subsection{Pseudo-Hermitian Operators} \label{pseudoHermOperatorsSection}

%
%\footnote{The term pseudo-Hermitian was used the earlier work of \cite{Lee1969} to refer to the special case where the intertwining operator is an involution.}

The following theorem addresses the spectrum of a pseudo-Hermitian operator on a possibly infinite-dimensional Hilbert space.
%Maybe introduce the terminology of left symmetrizable.
\begin{theorem} \label{pseudoHermSpec}
Let $\eta$ be a bounded bijective self-adjoint linear operator on a Hilbert space, and $A$ be a densely-defined closed\footnote{All finite-dimensional operators are closed. An operator is closed if its graph is closed in the direct sum topology.} linear operator. Using the notation of definitions~\ref{pointSpectrum} and \ref{continuousResidual} for the decomposition of the spectrum of $A$ into point, continuum, and residual parts, the following identities hold {\normalfont \cite[p. 89]{azizov1989linear}}:
\begin{align}
\lambda \in \sigma_{p,1}(A) &\Leftrightarrow \lambda^* \in \sigma_{p,1}(\eta^{-1} A^\dag \eta) \\
\lambda \in \sigma_{r}(A) &\Leftrightarrow  \lambda^* \in \sigma_{p,2}(\eta^{-1} A^\dag \eta) \\
\lambda \in \sigma_{c}(A) &\Leftrightarrow  \lambda^* \in \sigma_{c}(\eta^{-1} A^\dag \eta).
\end{align}
Every weakly pseudo-Hermitian operator is closed {\normalfont \cite{Siegl2009,siegl2008quasi}}.
\end{theorem}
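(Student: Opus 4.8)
The plan is to reduce everything to the interplay between $A$ and its adjoint, and then transport the result across the bounded bijection $\eta$. Writing $B := \eta^{-1}A^\dag\eta$ and $T := \lambda\mathbb{1} - A$, the starting observation is that for a densely-defined closed operator one has $(\lambda\mathbb{1}-A)^\dag = \lambda^*\mathbb{1}-A^\dag$, so that
\begin{align}
\lambda^*\mathbb{1} - B = \eta^{-1}(\lambda^*\mathbb{1}-A^\dag)\eta = \eta^{-1}T^\dag\eta.
\end{align}
Because $\eta$ is a bounded bijective operator, $\eta^{-1}$ is bounded by the open mapping theorem, so conjugation by $\eta$ is a topological isomorphism; hence $\lambda^*\mathbb{1}-B$ and $T^\dag$ share all the relevant attributes (injectivity, and density and closedness of range). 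This collapses the problem to comparing $T$ with $T^\dag$.

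First I would record the two kernel--range dualities that drive the whole argument. Since $T$ is closed and densely defined, $T^{\dag\dag}=T$, and the standard relations $\ker(T^\dag)=\mathcal{R}(T)^\perp$ and $\ker(T)=\mathcal{R}(T^\dag)^\perp$ give
\begin{align}
\mathcal{R}(T)\text{ dense} &\Leftrightarrow T^\dag\text{ injective}, \\
T\text{ injective} &\Leftrightarrow \mathcal{R}(T^\dag)\text{ dense}.
\end{align}
In words, passing to the adjoint interchanges the two binary attributes \emph{injective} and \emph{dense range}. Transporting through $\eta$, the operator $B$ at $\lambda^*$ is injective precisely when $\mathcal{R}(T)$ is dense, and has dense range precisely when $T$ is injective. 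This is exactly where the hypothesis that $A$ is closed and densely defined is essential, since $T^{\dag\dag}=T$ can fail otherwise.

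With this interchange in hand the proof is a finite case analysis over the partition of the spectrum. Expressing each of $\sigma_{p,1},\sigma_{p,2},\sigma_r,\sigma_c$ in terms of the pair (injective?, dense range?) for $T$ via definitions~\ref{pointSpectrum} and \ref{continuousResidual}, and applying the swap of these two attributes, one reads off that $\sigma_{p,1}$ and $\sigma_c$ are self-dual under $A\mapsto B$ at the conjugate spectral point, whereas $\sigma_{p,2}$ and $\sigma_r$ are interchanged---precisely the asserted correspondences. The one point requiring care beyond this boolean bookkeeping is the continuum case: injectivity together with dense range alone does not decide whether $\lambda^*$ lies in $\sigma(B)$ or in its resolvent set, so I would separately invoke $\sigma(A^\dag)=\sigma(A)^*$ together with the invariance of the spectrum under conjugation by the isomorphism $\eta$, which yields $\lambda\in\sigma(A)\Leftrightarrow\lambda^*\in\sigma(B)$ and pins the continuum class down. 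Because the construction $A\mapsto\eta^{-1}A^\dag\eta$ is an involution when $\eta=\eta^\dag$, each forward implication immediately supplies its converse, so the three biconditionals follow at once.

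Finally, the closedness of a weakly pseudo-Hermitian $A$ is independent of the spectral analysis. For any densely-defined $A$ the adjoint $A^\dag$ is closed; if $A=\eta^{-1}A^\dag\eta$ with $\eta$ a bounded bijection, a direct graph-limit argument settles the claim: given $x_n\to x$ with $\eta^{-1}A^\dag\eta\, x_n\to y$, continuity of $\eta$ gives $\eta x_n\to\eta x$ and $A^\dag\eta x_n\to\eta y$, whence closedness of $A^\dag$ forces $\eta x\in\text{Dom}(A^\dag)$ and $A^\dag\eta x=\eta y$, i.e. $x\in\text{Dom}(A)$ and $Ax=y$. I expect the main obstacle to be the correct bookkeeping of the fine spectral definitions $\sigma_{p,1}$ versus $\sigma_{p,2}$ and the careful justification of the kernel--range dualities in the unbounded setting, rather than any deep difficulty.
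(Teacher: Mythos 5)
Your proposal is correct, and it supplies a genuine proof where the paper gives none: the paper simply cites Azizov--Iokhvidov \cite[p.~89]{azizov1989linear} for the three spectral equivalences and Siegl \cite{Siegl2009,siegl2008quasi} for the closedness claim, so there is no in-text argument to compare against. Your route is the standard one underlying those references, and every step checks out: $(\lambda\mathbb{1}-A)^\dag=\lambda^*\mathbb{1}-A^\dag$ holds since the scalar part is bounded; conjugation by the boundedly invertible $\eta$ preserves kernels, ranges, density, and closedness (indeed $\ker(\lambda^*\mathbb{1}-B)=\eta^{-1}\ker(T^\dag)$ and $\mathcal{R}(\lambda^*\mathbb{1}-B)=\eta^{-1}\mathcal{R}(T^\dag)$); and the dualities $\ker(T^\dag)=\mathcal{R}(T)^\perp$, $\ker(T)=\mathcal{R}(T^\dag)^\perp$ --- the second requiring $T^{\dag\dag}=T$, hence the closedness hypothesis you correctly flag --- implement exactly the swap of the attributes (injective, dense range) that maps $\sigma_{p,1}\to\sigma_{p,1}$, $\sigma_r\to\sigma_{p,2}$, $\sigma_c\to\sigma_c$ under the paper's definitions~\ref{pointSpectrum} and \ref{continuousResidual}. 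You were also right to isolate the continuum case as the one spot where the boolean bookkeeping is insufficient: injective-with-dense-range is compatible with membership in the resolvent set, and your appeal to $\sigma(A^\dag)=\sigma(A)^*$ (valid for closed densely defined operators via the closed graph theorem) plus similarity invariance of the spectrum closes that hole. Two housekeeping points you use implicitly and could state: $B$ is densely defined because $A$ closed implies $A^\dag$ densely defined and $\eta^{-1}$ is a homeomorphism; and the involution identity $\eta^{-1}B^\dag\eta=A$ needs $(A^\dag\eta)^\dag=\eta A^{\dag\dag}$, which is where bounded invertibility of $\eta$ (not just boundedness) enters --- alternatively the converses follow directly from the symmetry of the kernel--range relations without invoking the involution, in which case self-adjointness of $\eta$ is never actually used for the three equivalences. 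Your graph-limit argument for the final claim is exactly right: $A^\dag$ is automatically closed, and closedness survives conjugation by a bounded bijection.
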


Petr Siegl remarked that the necessary conditions on the spectrum for pseudo-Hermitian and antilinear symmetric operators given in \cref{pseudoHermSpec,AntilinearSpectrumSymmetry} are not the same in the infinite-dimensional case \cite{Siegl2009,siegl2008quasi}. Following this observation, he constructed an example of a pseudo-Hermitian operator which does not have an antilinear symmetry and an operator with an antilinear symmetry which is not weakly pseudo-Hermitian. 

\section{Quasi-Hermiticity} \label{quasiHermSection}
%Find a beter reference for if antilinear invertible symmetry is equivalent to similarity to a real matrix. 

%Remind the reader of the physical reasons for why we assume Hermiticity.

%Replace physical theory with mathematical framework. Mention how in the literature, there is no examples, this is a mathematical object. 

%Move time-dependent inner product, entangled inner products to conclusions of thesis.

This section is an introduction to quasi-Hermitian quantum theory. I summarize the axioms of this theory in \cref{quasiHermitianQuantumAxioms}. The text leading up to \cref{quasiHermitianQuantumAxioms} motivates these axioms and discusses mathematical properties of quasi-Hermitian linear operators, which are defined in \cref{quasiHermDefn}. In the mathematical framework of quasi-Hermitian quantum theory, as in traditional quantum theory, states define probability measures over sets of real measurement outcomes.
These probability measures are derived from the spectral decomposition of quasi-Hermitian operators. Time evolution is pseudo-unitary, in the sense of \cref{pseudoUnitary}, so states continue to correspond to probability measures while they are time evolved.
Quasi-Hermitian operator algebras include Hermitian operator algebras as a limiting case. The key insight behind the above considerations is to generalize the inner product structure via the so-called \textit{metric operator}.

Quasi-Hermiticity was originally defined in \cite{dieudonne} and the first treatment of a quantum theory built on quasi-Hermitian operator algebras was \cite{QuasiHerm92}.

\subsection{Mathematical Considerations}

\begin{defn} \label{quasiHermDefn} \cite{dieudonne}
An operator on a Hilbert space, $O$, is called \hypertarget{link:QuasiHerm}{\textit{quasi-Hermitian}} if there exists a bounded, positive-definite operator, $\eta > 0$, such that
\begin{align}
\eta O = O^\dag \eta. \label{eqn-Dieudonne}
\end{align}
$\eta$ is referred to as a \textit{metric operator} associated to $O$, and \cref{eqn-Dieudonne} is referred to as the \textit{Dieudonn{\'e} relation}.
\end{defn}

\begin{ex}
Self-adjoint operators are the special case of quasi-Hermitian operators whose metric operator is the identity.
\demo
\end{ex}

The connection between quasi-Hermitian and pseudo-Hermitian operators is realized by noting every bijective metric operator is an intertwining operator.

%THEORY OF QUASI-HERMITIAN ANTILINEAR OPERATORS IS PROBABLY INTERESTING...
%What notation should I use for the range of an operator?

While sufficient, Hermiticity is not necessary to ensure a real spectrum of a linear operator. A simple example of a non-Hermitian matrix with real eigenvalues is given in \cref{PTqubit}.
As stated in \cref{quasiHermDiagonalizable}, quasi-Hermiticity is a necessary and sufficient condition for a diagonalizable\footnote{Definition~\ref{diagonalizable} reviews the notion of diagonalizable operators.} $n \times n$ matrix to have a set of real eigenvalues. A more general theorem, proven in \cite[Thm. 1]{Drazin1962}, gives the necessary and sufficient condition for the existence of $m \leq n$ linearly independent eigenvectors whose eigenvalues are real.
\begin{theorem} \label{realEigenvalsThm}
A matrix, $O \in \mathfrak{M}_n(\mathbb{C})$, has at least $m \in \mathbb{N}$ linearly independent eigenvectors whose eigenvalues are real if and only if there exists a positive-semidefinite matrix, $S \in \mathfrak{M}_n(\mathbb{C})^+$, with rank $m = {\normalfont \text{dim}}(S(\mathbb{C}^n))$ such that $S O = O^\dag S$ {\normalfont \cite[Thm. 1]{Drazin1962}}. 
\end{theorem}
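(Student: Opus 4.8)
The plan is to reduce both implications to statements about $O^\dag$, exploiting the elementary identity that for every real $\lambda$ one has $\dim\ker(\lambda\mathbb{1}-O)=\dim\ker(\lambda\mathbb{1}-O^\dag)$: indeed $(\lambda\mathbb{1}-O)^\dag=\lambda\mathbb{1}-O^\dag$ when $\lambda\in\mathbb{R}$, and any matrix and its adjoint have equal nullity. Summing over the real eigenvalues, $O$ and $O^\dag$ carry exactly the same number of linearly independent eigenvectors with real eigenvalues. Hence it is equivalent to produce, or to start from, $m$ such eigenvectors of $O^\dag$, and it is $O^\dag$ that makes natural contact with the range of a candidate $S$.

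For the forward direction I would assume $O$ has at least $m$ linearly independent eigenvectors with real eigenvalues, pass to $O^\dag$ via the identity above, and select linearly independent $u_1,\dots,u_m$ with $O^\dag u_i=\lambda_i u_i$ and $\lambda_i\in\mathbb{R}$. Then I would simply set $S:=\sum_{i=1}^m u_i u_i^\dag$. This $S$ is manifestly self-adjoint and positive semidefinite, and has rank $m$ because the $u_i$ are linearly independent. The point is that each rank-one summand already intertwines: from $O^\dag u_i=\lambda_i u_i$ and $\lambda_i\in\mathbb{R}$ one gets $u_i^\dag O=\lambda_i u_i^\dag$, so $(u_i u_i^\dag)O=\lambda_i u_i u_i^\dag=O^\dag(u_i u_i^\dag)$, and summing yields $SO=O^\dag S$.

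For the converse I would start from $S\ge 0$ of rank $m$ with $SO=O^\dag S$ and reduce $O^\dag$ on the range of $S$ to a genuine self-adjoint operator. First, $R:=\mathcal{R}(S)=(\ker S)^\perp$ is $O^\dag$-invariant, since $O^\dag S x=SOx\in\mathcal{R}(S)$. On $R$ the Moore--Penrose pseudoinverse $S^+$ is positive definite, so $\langle x,y\rangle_S:=\langle x,S^+y\rangle$ is an inner product on the $m$-dimensional space $R$. I would then verify that $T:=O^\dag|_R$ is self-adjoint for $\langle\cdot,\cdot\rangle_S$; the key computation is that left-multiplying $SO=O^\dag S$ by $S^+$ gives $\mathscr{P}_R O=S^+O^\dag S$, whence $\mathscr{P}_R O S^+y=S^+O^\dag y$ for $y\in R$, which is exactly the identity needed to move $O^\dag$ across the form. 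The finite-dimensional spectral theorem then supplies $m$ linearly independent eigenvectors of $O^\dag$ in $R$ with real eigenvalues, and the multiplicity identity transfers the count back to $O$.

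The main obstacle, and the reason a naive approach stalls, is that $\ker S$ is $O$-invariant while $\mathcal{R}(S)$ is only $O^\dag$-invariant, so neither $S$ nor its kernel interacts directly with the eigenvectors of $O$; moreover eigenvectors of the quotient operator on $\mathbb{C}^n/\ker S$ need not lift to eigenvectors of $O$, as a single Jordan block already shows. Recognizing that one should instead realize the eigenvectors on $\mathcal{R}(S)$ as eigenvectors of $O^\dag$, and then invoke the real-eigenvalue multiplicity identity to return to $O$, is the crucial structural step; once this is in place, the self-adjoint reduction on $R$ and the rank-one intertwiner construction are both routine.
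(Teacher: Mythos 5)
Your proof is correct, and there is nothing in the paper to match it against: the paper states \cref{realEigenvalsThm} with a citation to \cite{Drazin1962} and supplies no in-text proof, so your argument functions as a self-contained replacement rather than a variant of the paper's. Both halves check out. In the forward direction, the rank-one construction $S=\sum_{i=1}^m u_i u_i^\dag$ from eigenvectors of $O^\dag$ with real eigenvalues intertwines term by term, and the rank is exactly $m$ because the Gram matrix of linearly independent $u_i$ is invertible. In the converse, the identities $S^+S=SS^+=\mathscr{P}_{\mathcal{R}(S)}$ for Hermitian $S$ make your manipulation legitimate: left-multiplying $SO=O^\dag S$ by $S^+$ gives $\mathscr{P}_{\mathcal{R}(S)}O=S^+O^\dag S$, hence $\mathscr{P}_{\mathcal{R}(S)}OS^+y=S^+O^\dag y$ for $y\in\mathcal{R}(S)$, which is precisely the self-adjointness of $O^\dag|_{\mathcal{R}(S)}$ in the inner product $\braket{\cdot|S^+\cdot}$, and the finite-dimensional spectral theorem then yields $m$ independent eigenvectors with real eigenvalues. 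You also correctly isolate the crux that a naive approach misses: $\ker S$ is $O$-invariant while $\mathcal{R}(S)$ is only $O^\dag$-invariant, so the eigenvectors are produced for $O^\dag$ and must be transferred back via $\dim\ker(\lambda\mathbb{1}-O)=\dim\ker(\lambda\mathbb{1}-O^\dag)$ for $\lambda\in\mathbb{R}$; without that transfer step the argument would stall exactly where you say it would (quotient eigenvectors on $\mathbb{C}^n/\ker S$ need not lift, as a Jordan block shows). For comparison, the route closer in spirit to the cited reference proceeds by congruence rather than a weighted inner product: writing $S=T^\dag\,\mathrm{diag}(\mathbb{1}_m,0)\,T$ with $T$ invertible converts $SO=O^\dag S$ into the statement that $B=TOT^{-1}$ is block lower triangular with a Hermitian leading $m\times m$ block, so $B^\dag$ visibly carries $m$ eigenvectors of the form $(v,0)^\intercal$ with real eigenvalues, after which the same adjoint-to-original transfer finishes. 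Your version buys coordinate-free generality (the Moore--Penrose bookkeeping replaces a choice of congruence basis), while the congruence version buys explicitness; both are sound, and your pseudoinverse reduction also dovetails naturally with the paper's broader theme of metric-weighted inner products as in \cref{etaSpace}.
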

\begin{corollary}  \label{quasiHermDiagonalizable}
A matrix is diagonalizable with a real spectrum if and only if it is quasi-Hermitian. 
\end{corollary}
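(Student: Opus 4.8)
The plan is to obtain this corollary as the special case $m = n$ of the preceding \cref{realEigenvalsThm}, the only real work being to bridge the gap between the positive-\emph{semidefinite} witness supplied by that theorem and the positive-\emph{definite} metric demanded by \cref{quasiHermDefn} by means of a rank argument.

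First I would handle the forward implication. Suppose $O \in \mathfrak{M}_n(\mathbb{C})$ is diagonalizable with $\sigma(O) \subset \mathbb{R}$. Diagonalizability furnishes a basis of $\mathbb{C}^n$ consisting of eigenvectors, hence $n$ linearly independent eigenvectors, and reality of the spectrum guarantees each of them has a real eigenvalue. Applying \cref{realEigenvalsThm} with $m = n$ then produces a positive-semidefinite $S \in \mathfrak{M}_n(\mathbb{C})^+$ of rank $n$ satisfying $S O = O^\dag S$. The key observation is that a positive-semidefinite matrix on $\mathbb{C}^n$ of full rank $n$ has trivial kernel and is therefore positive-definite; taking $\eta := S$ exhibits $O$ as quasi-Hermitian in the sense of \cref{quasiHermDefn}.

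For the converse, suppose $O$ is quasi-Hermitian, so there is a positive-definite $\eta > 0$ with $\eta O = O^\dag \eta$. Positive-definiteness forces $\eta$ to have rank $n$, so \cref{realEigenvalsThm} applies with $S = \eta$ and $m = n$: the matrix $O$ possesses at least $n$ linearly independent eigenvectors, all with real eigenvalues. Since $O$ acts on the $n$-dimensional space $\mathbb{C}^n$, these $n$ eigenvectors form a basis, whence $O$ is diagonalizable, and the reality of all $n$ eigenvalues gives $\sigma(O) \subset \mathbb{R}$.

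The argument is essentially bookkeeping once \cref{realEigenvalsThm} is in hand, so there is no substantial obstacle; the only point requiring care is the translation between the two positivity notions. In each direction one must verify that the rank of the intertwining matrix equals $n$ exactly when it is positive-definite rather than merely positive-semidefinite, which is immediate from the fact that a Hermitian positive-semidefinite matrix is positive-definite if and only if it is invertible, i.e. has empty kernel.
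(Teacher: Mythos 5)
Your proposal is correct and matches the paper's route: the corollary is stated there precisely as the $m=n$ specialization of \cref{realEigenvalsThm}, and your rank bookkeeping (a Hermitian positive-semidefinite matrix of full rank has trivial kernel, hence is positive-definite and invertible) supplies exactly the routine bridge the paper leaves implicit. The only remark worth adding is that the paper also sketches a second, ``insightful'' proof via \cref{generalMetricProp} and \cref{Williams1969Corollary}, but your argument is the intended one.
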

The earliest proof that quasi-Hermitian matrices have a real spectrum which I am aware of is an immediate consequence of \cite[Thm. 5.1]{Reid1951}; other sources demonstrating the reality of eigenvalues of quasi-Hermitian matrices include \cite[p. 259]{zaanen1953linear} and \cite[p. 111]{MethodsOfMatrixAlgebra}. The earliest proofs of \cref{quasiHermDiagonalizable} which I am aware of are given in \cite[Thm. 1]{Drazin1962} and \cite[Thm. 3.3]{silberstein1962symmetrisable}. An equivalent statement to corollary \cref{quasiHermDiagonalizable} was proven slightly earlier in \cite{Taussky1960}. A paper demonstrating \cref{quasiHermDiagonalizable} that appeared forty years later, which is well-cited in the physics literature, is \cite{mostafazadeh2002pseudo2}. The implication that a diagonalizable matrix with real eigenvalues is necessarily quasi-Hermitian is a textbook problem \cite[p. 66]{Gohberg2005}.

%What can be done with quasi-Hermitian that can't be done with pseudo-Hermitian... semi-simple not defined.

One insightful proof of \cref{quasiHermDiagonalizable} follows from the following propositions. 
\begin{proposition} \label{generalMetricProp}
Given a matrix which is diagonalizable with real spectrum, $O = \mathcal{U} \,D\, \mathcal{U}^{-1}$, where $D$ is Hermitian and diagonal in some orthonormal basis, the most general metric operator associated to $O$ is \cite[eq. (8)]{BiOrthogonal}
\begin{align}
\eta^{-1} = \mathcal{U} \, d \,\mathcal{U}^\dag, \label{generalMetric}
\end{align}
where $d$ is a positive-definite matrix which commutes with $D$.
%%THIS IS POSITIVE DEFINITE IN THE CASE d DIAGONAL IFF elem.s of d >0. Pf, Construct U sqrt(d) = S, \eta^{-1} = U d U^{\dag}, this is S S^{\dag} iff d > 0.
%This can be written in bra-ket notation with the eigenstates of $H^\dag \kket{n} = E_N \kket{n}$,
%\begin{equation}
%\eta = \sum_n d_n\kket{n} \bbra{n}
%\end{equation}
%PROOF:
%\begin{equation}
% \eta U D U^{-1} = U^{-1}^{\dag} D U^{\dag} \eta
% [U^{\dag} \eta U, D] = 0
%\end{equation}
%A general $d$ is block-diagonal, with block sizes equal to degeneracies of $D$.
\end{proposition}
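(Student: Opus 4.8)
The plan is to reduce the Dieudonn\'e relation to a commutation relation by passing to the eigenbasis of $O$. First I would substitute $O = \mathcal{U} D \mathcal{U}^{-1}$ and $O^\dag = (\mathcal{U}^{-1})^\dag D \mathcal{U}^\dag$ (using $D^\dag = D$, since $D$ is Hermitian) into the defining equation $\eta O = O^\dag \eta$. Conjugating by $\mathcal{U}$, i.e.\ introducing the congruent operator $\tilde\eta := \mathcal{U}^\dag \eta \mathcal{U}$, collapses the factors $\mathcal{U}^{-1}\mathcal{U}$ and $\mathcal{U}^\dag(\mathcal{U}^\dag)^{-1}$ on either side, and the relation reduces to $\tilde\eta D = D \tilde\eta$. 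Thus a positive-definite $\eta$ is a metric operator for $O$ if and only if $\tilde\eta$ commutes with the diagonal matrix $D$.

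Next I would transport the positivity condition through the congruence. Since $\mathcal{U}$ is invertible, the map $\eta \mapsto \mathcal{U}^\dag \eta \mathcal{U}$ preserves positive-definiteness: for $w \neq 0$ one has $\braket{w | \tilde\eta w} = \braket{\mathcal{U} w | \eta \, \mathcal{U} w} > 0$ because $\mathcal{U} w \neq 0$, and the converse follows identically. Hence $\eta > 0 \Leftrightarrow \tilde\eta > 0$. Combining this with the commutation criterion, the metric operators associated to $O$ are exactly the operators $\eta = (\mathcal{U}^\dag)^{-1} \tilde\eta \, \mathcal{U}^{-1}$ with $\tilde\eta > 0$ and $\tilde\eta D = D \tilde\eta$.

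Finally I would invert to match the stated form. From $\eta = (\mathcal{U}^\dag)^{-1} \tilde\eta \, \mathcal{U}^{-1}$ one gets $\eta^{-1} = \mathcal{U} \, \tilde\eta^{-1} \, \mathcal{U}^\dag$, so setting $d := \tilde\eta^{-1}$ yields precisely \eqref{generalMetric}. The matrix $d$ inherits both positive-definiteness and commutation with $D$ from $\tilde\eta$, since a positive-definite matrix commutes with $D$ if and only if its inverse does (both commute with $D$ iff they are block-diagonal with respect to the eigenspaces of $D$). Conversely, any $d > 0$ commuting with $D$ produces, by reversing these steps, a positive-definite $\eta$ obeying the Dieudonn\'e relation, so \eqref{generalMetric} is genuinely the \emph{most general} metric rather than merely one family of valid choices.

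I do not anticipate a serious obstacle: the computation is short once one conjugates into the eigenbasis. The only points requiring care are tracking that positive-definiteness transfers across the congruence (not merely a similarity), and verifying \emph{both} directions of the equivalence so that the characterization is exhaustive.
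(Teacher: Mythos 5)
Your proof is correct and complete. The paper itself does not prove this proposition at all --- it is stated with a citation to Mostafazadeh's biorthonormal-basis construction \cite[eq. (8)]{BiOrthogonal} --- so your argument fills a gap rather than duplicating one. Your route (conjugating the Dieudonn\'e relation $\eta O = O^\dag \eta$ by $\mathcal{U}$ to reduce it to $\tilde\eta D = D\tilde\eta$, transporting positive-definiteness through the congruence, and inverting) is the standard basis-free version of the cited derivation, and you correctly handle the two points that are easy to fumble: that a congruence, not merely a similarity, preserves positive-definiteness, and that both directions of the correspondence are needed to justify the word ``most general.'' One remark worth making explicit if you write this up: positive-definiteness over $\mathbb{C}$ already forces $\tilde\eta$ (hence $d = \tilde\eta^{-1}$) to be Hermitian, which is tacitly used when identifying $d$ as a metric-type object; and the equivalence ``$\tilde\eta$ commutes with $D$ iff $\tilde\eta^{-1}$ does'' follows for invertible matrices simply by multiplying $\tilde\eta D = D \tilde\eta$ on both sides by $\tilde\eta^{-1}$, with no need for the block-diagonal characterization you mention.
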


%Fun facts which don't matter for this thesis:
%The numerical range of $A$ is contained in the open ball centred at the zero vector whose radius equals $||A||$. The closure of the numerical range is convex and contains the spectrum of $A$. Twice numerical radius is larger than the operator norm.

\begin{proposition} \cite[Thm. 2]{williams1969operators}
Let $O$ be a linear operator on a Hilbert space, $\mathcal{H}$, which is similar to its adjoint. Explicitly, 
\begin{align}
O = S^{-1} O^\dag S,
\end{align}
where $S \in \text{GL}(\mathcal{B}(\mathcal{H}))$ is bounded with bounded inverse.
Defining the \textit{numerical range} of an operator, $S$, to be the set 
\begin{align}
W(S) &:= \left\{ \frac{\braket{x|S x}}{\braket{x|x}}\,|\, x \in \text{Dom}(S) \gls{rel-comp} \{0\} \right\},
\end{align}
assume $0 \notin \text{cl}(W(S))$, where $\text{cl}$ denotes the closure of a set in the induced norm topology. Then $O$ is similar to a Hermitian linear operator. Explicitly, denoting $\tilde{S} = (S + S^\dag)^{1/2}$, we have
\begin{equation}
h := \tilde{S}^{1/2} O \tilde{S}^{-1/2} = h^\dag.
\end{equation}
\end{proposition}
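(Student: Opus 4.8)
The plan is to reduce the claim to the quasi-Hermitian case already discussed in \cref{quasiHermSection}: I will manufacture from $S$ a genuine \emph{positive-definite} intertwiner $P$ satisfying $P O = O^\dag P$, and then conjugate $O$ by $P^{1/2}$. The first step is an adjoint trick. Writing the hypothesis as $S O = O^\dag S$ and taking adjoints gives $O^\dag S^\dag = S^\dag O$, so $S^\dag$ is \emph{also} an intertwiner, $S^\dag O = O^\dag S^\dag$. Consequently every combination $e^{-\mathfrak{i}\theta} S + e^{\mathfrak{i}\theta} S^\dag$ (a Hermitian operator for each $\theta \in \mathbb{R}$) intertwines $O$ and $O^\dag$ as well. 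The remaining task is to choose $\theta$ so that this Hermitian operator is positive-definite with bounded inverse.

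That choice is where the numerical-range hypothesis enters. By the Toeplitz--Hausdorff theorem, $W(S)$ is convex, hence so is its closure $\text{cl}(W(S))$; since $0 \notin \text{cl}(W(S))$, a separating-hyperplane argument produces $\theta \in \mathbb{R}$ and $\delta > 0$ with $\text{Re}\big(e^{-\mathfrak{i}\theta} z\big) \geq \delta$ for every $z \in \text{cl}(W(S))$. Unwinding the definition of $W(S)$, this says
\begin{align}
\braket{x|P x} \geq \delta \braket{x|x} \qquad \forall x \in \mathcal{H},
\end{align}
where $P := \tfrac{1}{2}\big(e^{-\mathfrak{i}\theta} S + e^{\mathfrak{i}\theta} S^\dag\big)$. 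Thus $P = P^\dag$ is bounded, positive-definite, and bounded below by $\delta$, so $P \in \text{GL}(\mathcal{B}(\mathcal{H}))$ with bounded inverse. By the previous paragraph $P O = O^\dag P$, so $P$ is exactly a metric operator in the sense of \cref{quasiHermDefn}: $O$ is quasi-Hermitian. In the stated special case where $\text{cl}(W(S))$ already lies in the right half-plane one may take $\theta = 0$, recovering $P \propto S + S^\dag$ and the $\tilde{S}$ of the proposition.

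With a positive-definite $P$ in hand the final step is routine. Functional calculus for the bounded positive operator $P$ with bounded inverse yields a bounded, self-adjoint, positive-definite square root $P^{1/2}$ whose inverse $P^{-1/2}$ is also bounded. Define $h := P^{1/2} O P^{-1/2}$. Rearranging $P O = O^\dag P$ gives $O^\dag = P O P^{-1}$, and hence
\begin{align}
h^\dag = P^{-1/2} O^\dag P^{1/2} = P^{-1/2}\big(P O P^{-1}\big)P^{1/2} = P^{1/2} O P^{-1/2} = h,
\end{align}
so $h$ is self-adjoint and similar to $O$, as claimed.

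I expect the genuine obstacle to be the second paragraph: converting the analytic hypothesis $0 \notin \text{cl}(W(S))$ into the algebraic positivity $P \geq \delta \mathbb{1}$. This rests on the convexity of the numerical range (Toeplitz--Hausdorff) together with a separation argument, and one must check that the resulting lower bound gives not merely positivity but invertibility of $P$ (equivalently, a bounded $P^{-1/2}$), which is precisely what makes the conjugation a genuine similarity. A secondary technical point, if $O$ is permitted to be unbounded, is tracking domains: since $P^{\pm 1/2}$ are bounded, the definition $h = P^{1/2} O P^{-1/2}$ and the displayed Hermiticity computation remain valid on $P^{1/2}(\text{Dom}(O))$, but a careful treatment would confirm $h$ is self-adjoint rather than merely symmetric.
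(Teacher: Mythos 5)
Your proof is correct and takes essentially the same route as the paper's cited source (this proposition is stated without proof, deferring to \cite[Thm. 2]{williams1969operators}, whose argument is precisely yours): observe that $S^\dag$ also intertwines $O$ with $O^\dag$, use Toeplitz--Hausdorff convexity and $0 \notin \text{cl}(W(S))$ to rotate $S$ by $e^{-\mathfrak{i}\theta}$ so that its Hermitian part $P$ is bounded below by some $\delta > 0$, and conjugate by $P^{1/2}$. Your explicit rotation step also quietly repairs the proposition's literal formula, since $\tilde{S} = (S + S^\dag)^{1/2}$ followed by a further square root of $\tilde{S}$ only makes sense once $W(S)$ has been normalized into the right half-plane, a point the stated version glosses over.
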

\begin{corollary} \label{Williams1969Corollary}
If $O$ is a quasi-Hermitian linear operator with a bijective positive metric operator, $\eta$, then $O$ is similar to a Hermitian linear operator. Explicitly, denoting the unique positive square root of $\eta$ by $\Omega$, we have %bounded inverse theorem??
\begin{equation}
h := \Omega O \Omega^{-1} = h^\dag. \label{Similar}
\end{equation}
\end{corollary}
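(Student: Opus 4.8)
The plan is to verify directly that $h := \Omega O \Omega^{-1}$ is self-adjoint, using only the Dieudonn\'e relation $\eta O = O^\dagger \eta$ together with the defining properties of the positive square root $\Omega = \eta^{1/2}$. This is the special case $S = \eta$ of the preceding Proposition, and the hypothesis $0 \notin \text{cl}(W(\eta))$ is automatic: a bounded bijective operator has bounded inverse by the open mapping theorem, so a positive $\eta$ satisfies $\eta \geq \delta \mathbb{1}$ with $\delta = \|\eta^{-1}\|^{-1} > 0$, whence $W(\eta) \subseteq [\delta, \|\eta\|]$ is bounded away from the origin. Because $\eta$ is already positive and self-adjoint, $S + S^\dagger = 2\eta$ and the general construction of the Proposition collapses, up to an irrelevant positive scalar, to conjugation by $\Omega$; so I will just check self-adjointness by hand, which is cleaner than tracking the normalizing constants.

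First I would record the properties of $\Omega$ extracted from the continuous functional calculus for the bounded positive bijective operator $\eta$: the positive square root $\Omega$ exists and is unique, is self-adjoint ($\Omega^\dagger = \Omega$), satisfies $\Omega^2 = \eta$, and since $\eta \geq \delta \mathbb{1}$ forces $\Omega \geq \sqrt{\delta}\,\mathbb{1}$, it is bijective with bounded inverse. Consequently $(\Omega^{-1})^\dagger = (\Omega^\dagger)^{-1} = \Omega^{-1}$, which is the one fact about adjoints I will need below.

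Next I would compute the adjoint of $h$ and rewrite $O^\dagger$ via the Dieudonn\'e relation. Taking adjoints gives
\begin{align}
h^\dagger = (\Omega O \Omega^{-1})^\dagger = (\Omega^{-1})^\dagger O^\dagger \Omega^\dagger = \Omega^{-1} O^\dagger \Omega.
\end{align}
From $\eta O = O^\dagger \eta$ and $\eta = \Omega^2$ I obtain $\Omega^2 O = O^\dagger \Omega^2$, i.e.\ $O^\dagger = \Omega^2 O \Omega^{-2}$. Substituting,
\begin{align}
h^\dagger = \Omega^{-1}\big(\Omega^2 O \Omega^{-2}\big)\Omega = \Omega O \Omega^{-1} = h,
\end{align}
which is exactly the claim.

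The algebra itself is a one-liner, so there is no serious obstacle; the only care needed is in justifying that $\Omega$ is a genuine bounded, self-adjoint, boundedly invertible operator rather than a formal symbol, and that $(\Omega^{-1})^\dagger = \Omega^{-1}$. These are precisely the points at which the boundedness and \emph{bijectivity} of the positive metric $\eta$ are used, as opposed to mere positive-definiteness, and they are supplied by the open mapping theorem and the functional calculus recorded in the first step.
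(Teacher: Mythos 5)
Your proof is correct, and it takes a more self-contained route than the paper. The paper obtains the corollary by specializing the preceding proposition of Williams to $S = \eta$; you do make that connection (and correctly observe that the hypothesis $0 \notin \text{cl}(W(\eta))$ is automatic, since a bounded bijective positive $\eta$ is bounded below, so $W(\eta) \subseteq [\,\|\eta^{-1}\|^{-1}, \|\eta\|\,]$), but your main argument is the direct two-line verification $h^\dag = \Omega^{-1} O^\dag \Omega = \Omega^{-1}(\Omega^2 O \Omega^{-2})\Omega = h$ from the Dieudonn\'e relation and the functional-calculus facts $\Omega^\dag = \Omega$, $\Omega^2 = \eta$, $(\Omega^{-1})^\dag = \Omega^{-1}$. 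This is more elementary: it bypasses the numerical-range machinery entirely, makes transparent exactly where boundedness and bijectivity of $\eta$ enter (existence of a bounded, self-adjoint, boundedly invertible $\Omega$), and incidentally sidesteps the normalization in the paper's statement of the proposition, where $\tilde{S} = (S + S^\dag)^{1/2}$ together with conjugation by $\tilde{S}^{1/2}$ would literally give conjugation by a multiple of $\eta^{1/4}$ rather than $\eta^{1/2}$ --- evidently a garbling of Williams' construction, in which one conjugates by $R^{1/2}$ with $R = (S + S^\dag)/2$, consistent with your reading ``up to an irrelevant positive scalar.'' What the paper's route buys in exchange is generality: Williams' theorem covers intertwiners $S$ that are neither self-adjoint nor positive, so long as their numerical range avoids the origin, whereas your computation is tailored to the positive metric case. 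One minor caveat: for unbounded $O$ the identities $(\Omega O \Omega^{-1})^\dag = \Omega^{-1} O^\dag \Omega$ and $O^\dag = \eta O \eta^{-1}$ require the usual domain bookkeeping, which is harmless here precisely because $\Omega$ and $\Omega^{-1}$ are bounded, but is worth a sentence if you intend the corollary in full generality.
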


%TO-DO: ADD MORE REFERENCES WHENEVER I CLAIM LOCAL QUASI-HERM IS SIMILAR TO NONLOCAL HERM.

Corollary~\eqref{Williams1969Corollary} implies that there is some freedom in the representation of a Hamiltonian: it can either be represented by a quasi-Hermitian operator with metric $\eta$, or by a similar Hermitian operator with metric $\mathbb{1}$. This freedom can be generalized further; given a pair of Hamiltonian $H$ and metric $\eta$, a positive-definite metric for all similar Hamiltonians $S^{-1} H S$ is given by~\cite{kretschmer2001interpretation}, 
\begin{equation}
(H,\eta)\leftrightarrow (H',\eta')=(S^{-1}HS,\eta'=S^\dag \eta S). \label{MetricMapper}
\end{equation}
However, the properties manifest in a quasi-Hermitian representation may be lost by performing a similarity transformation. For instance, local quasi-Hermitian Hamiltonians are generically similar to nonlocal Hermitian Hamiltonians \cite{Korff2008}. %Add More. To-do 

The most general quasi-Hermitian operator associated to a bijective metric operator is a product of a Hermitian operator and the metric. For matrices, this is discussed in \cite{Carlson1965} and in \cref{pseudoEquivThm}.

Quasi-Hermitian quantum theory postulates that measurement outcomes are associated to Borel subsets of the spectrum of quasi-Hermitian linear operators. To complete the kinematical structure of quasi-Hermitian quantum theory, we need to define probability measures on the space of measurement outcomes. In Hermitian quantum theory, this was accomplished from the spectral decomposition of a Hermitian linear operator. 

In the case where $\eta$ is a positive-definite matrix acting on a finite-dimensional inner product space, the $\eta$-space it generates is a Hilbert space and quasi-Hermitian matrices, $H$, are Hermitian in this Hilbert space.
In the interest of simplicity, first consider the case where each eigenvalue $\lambda$ is simple and associated to the unique (up to rescaling) eigenvector $\ket{\lambda}$. The spectral theorem for Hermitian operators in the $\eta$-space derives a decomposition for $H$,
\begin{align}
H &= \sum_{\lambda \in \sigma(H)} \lambda \ket{\lambda}\bra{\lambda} \eta. \label{matrixQuasiHermSpectralDecompSimple}
\end{align}
In the case where the spectrum contains possibly degenerate semi-simple\footnote{Semi-simple eigenvalues and are defined in \cref{multiplicities}.} eigenvalues, a spectral decomposition of $H$ is given in terms of a set $\eta$-self-adjoint projections, $P_\lambda$, associated to each eigenvalue, $\lambda \in \sigma(H)$, such that 
\begin{align}
P_\lambda P_{\lambda'} &= \delta_{\lambda \lambda'} P_{\lambda} \\
\eta P^{}_\lambda &= P_\lambda^\dag \eta \\
\mathcal{R}(P_\lambda) &= \ker(\lambda \mathbb{1} - H)
\end{align}
and $H$ admits the spectral decomposition
\begin{align}
H = \sum_{\lambda \in \sigma(H)} \lambda P_\lambda. \label{matrixQuasiHermSpectalDecomp}
\end{align}
If the projections $P_\lambda$ are rank one, \cref{matrixQuasiHermSpectalDecomp} reduces to \cref{matrixQuasiHermSpectralDecompSimple}.

\subsection{Infinite-Dimensional Perplexities}
The spectral theorem can be applied to the special case where a quasi-Hermitian linear operator's metric is bounded and bounded below\footnote{Equivalently, due to the bounded inverse theorem \cite[Thm. 14.5.1]{Narici2010} %{\color{red} add reference to Banach?}
, the metric is bounded and bijective. Normal operators have bounded inverse if and only if they are bounded below. %{\color{red} reference for the second part?}
}, where a bounded below operator is defined as follows.
\begin{defn}
An operator $A:\text{Dom}(A) \to \mathcal{H}$ is \textit{bounded below} if 
\begin{align}
\text{inf} \{||A \psi||/||\psi|| \, | \, \psi \in \text{Dom}(A) \} > 0.
\end{align}
\end{defn}
To see why, observe that for every bounded positive-definite operator on a Hilbert space, $\eta \in \mathcal{B}(\mathcal{H})^+$, the sesquilinear form $\braket{\cdot|\cdot}_\eta:\mathcal{H} \times \mathcal{H} \to \mathbb{C}$ defined in \cref{etaInnerProduct} is an inner product on $\mathcal{H}$. The norms induced by the inner products $\braket{\cdot|\cdot}$ and $\braket{\cdot|\cdot}_\eta$ are equivalent if and only if $\eta$ is bounded below. Denoting the completion of $\mathcal{H}$ in the inner product $\braket{\cdot|\cdot}_\eta$ as $\mathcal{H}_\eta$, quasi-Hermitian operators with metric $\eta \in \mathcal{B}(\mathcal{H})^+$ are self-adjoint in $\mathcal{H}_\eta$ if $\eta$ is bounded below.

If $\eta$ is not bounded below\footnote{Some physicists have incorrectly claimed that every bounded positive-definite operator has a bounded inverse, such as in \cite[App. A]{QuasiHerm92}. %and \cite[Lemma 2]{Karuvade2022}. 
The faulty logic used in this reference is the claim that $\mathcal{R}(A) \subsetneq \mathcal{H}$ implies $A^\perp \neq \{0\}$, which only holds for finite-dimensional Hilbert spaces. A simple example of a bounded positive-definite operator which does not have a bounded inverse is $A:\ell^2 \to \ell^2$ defined by $A e_n = \frac{1}{n} e_n$, where $e_n$ an orthonormal basis of $\ell^2$. The bounded inverse theorem does not apply to $A$ since $A$ is not onto.}, the spectrum of a quasi-Hermitian linear operator need not be real. An example of a quasi-Hermitian operator with a complex spectrum was given in \cite{dieudonne} and later reprinted in \cite[p. 377-378]{Istratescu1981}. However, the eigenvalues of a quasi-Hermitian linear operator are always real. Quasi-Hermitian linear operators with complex spectrum only exist in infinite-dimensional Hilbert spaces, since every positive-definite matrix is bounded below. 

To avoid the above complications, in this thesis, I assume that the metric operator is bounded below. I do not claim this is a necessary assumption for quantum theory. For example, the densely defined operator on $L^2(\mathbb{R}, \lambda)$ given by
\begin{align}
H = - \frac{d^2}{d x^2} + \mathfrak{i} x^3,
\end{align}
has a real spectrum \cite{Dorey2001} and is quasi-Hermitian \cite{siegl2012metric}, but there does not exist a metric operator with bounded inverse \cite{siegl2012metric}.

\subsection{Quasi-Hermitian Quantum Theory}
 
I will now reformulate the Dirac-von Neumann axioms with quasi-Hermitian observables. To formally define the spectral decomposition of an observable in a possibly infinite-dimensional setting, I will introduce the notion of an $\eta$-\textit{projection-valued measure}. In the finite-dimensional case, this notion reduces to the simpler spectral decomposition given in \cref{matrixQuasiHermSpectalDecomp}.

\begin{definition} %(From \cite[p. 48-49]{Paulsen}, \cite[Def. 13.25]{moretti2013spectral}). 
Let $\Sigma$ be a $\sigma$-algebra, defined in \cref{sigmaAlgAppendix}, and $\mathcal{H}$ a Hilbert space. An operator valued measure, $E$, is a map, $E: \Sigma \to \mathcal{B}(\mathcal{H})$, that is weakly\footnote{The reason for the usage of the term "weakly" here stems from the fact that $\sum_i E(B_i)$ converges in the weak operator topology for every countably collection of disjoint subsets $B_i \in \Sigma$. Intuitively, the weak operator topology is the smallest topology so that matrix elements are continuous.} countably additive. More precisely, if $B_i \in \Sigma$ is a countable collection of disjoint sets with union $\cup_i B_i = B$, then saying $E$ is weakly countably additive means
\begin{align}
\braket{E(B) x| y} = \sum_i \braket{E(B_i) x| y}.
\end{align}
for all $x, y \in \mathcal{H}$. Furthermore,
\begin{itemize}
\item $E$ is $\eta$-\textit{self-adjoint} if $E(B)$ is $\eta$-self-adjoint for all $B$.

\item $E$ is \textit{spectral} if $E(B_1 \cap B_2) = E(B_1) E(B_2)$ for all $B_1, B_2 \in \Sigma$.

\item $E$ is $\eta$-\textit{projection-valued} if it is $\eta$-self-adjoint, spectral, and $E(X) = \gls{id}$.
\end{itemize}
\end{definition}
 
%Consistency with definition of "thm".
%Period in definition enviornment.
\begin{defn}[Quasi-Hermitian quantum theory] \label{quasiHermitianQuantumAxioms}
\leavevmode 
\begin{enumerate} 
\item The \textit{state} of a system, $\ket{\psi}$, is an element of a separable Hilbert space, $\mathcal{H}$, which satisfies the normalization condition
\begin{align}
\braket{\psi|\eta|\psi} = 1,
\end{align} 
where $\eta \in \mathcal{B}(\mathcal{H})$ is a bounded below, positive-definite operator referred to as the \textit{metric}.
\\
\item \textit{Observables}, $A$, are quasi-Hermitian operators on $\mathcal{H}$ satisfying $\eta A = A^\dag \eta$. By the spectral theorem for self-adjoint operators \cite[Thm. 10.4]{hall2013quantum}, there exists a unique $\eta$-projection-valued measure, $P_A:\mathfrak{B}_{\sigma(A)} \to \mathcal{B}(\mathcal{H})$ such that
\begin{align}
A = \int_{\sigma(A)} \lambda \, d P_A(\lambda),
\end{align} 
where $\mathfrak{B}_{\sigma(A)}$ denotes the Borel $\sigma$-algebra generated by the spectrum $\sigma(A)$ of $A$, as defined in \cref{Borel}.
\\
\item \textit{Measurements} of observables, $A$, yield real-valued outcomes from the spectrum of $A$, $\sigma(A) \subseteq \mathbb{R}$. If a system is in the state $\ket{\psi}$, the probability of measuring an outcome from the Borel set $B \in \mathfrak{B}_{\sigma(A)}$, $
\textbf{Pr}(B)$, is 
\begin{align}
\textbf{Pr}(B) = \braket{\psi|\eta P_A(B)|\psi}. \label{QHmeasurementOutcomeProbability}
\end{align}
$\textbf{Pr}:\mathfrak{B}_{\sigma(A)} \to [0,1]$ is a probability measure. The state of the system immediately after deducing the measurement outcome lives in the set $B$ is 
\begin{align}
\ket{\psi} \rightarrow \frac{P_A(B)\ket{\psi}}{\sqrt{\braket{\psi|\eta P_A(B)|\psi}}}.
\end{align} 
Integrating \cref{QHmeasurementOutcomeProbability}, we deduce that the expectation value of the observable $A$, $\braket{A}$, is 
\begin{align}
\braket{A} &= \braket{\psi|\eta A | \psi}.
\end{align}
\\
\item \textit{Time evolution} of a state $\ket{\psi} \in \mathcal{H}$ is given by a strongly-continuous $\eta$-unitary representation of $(\mathbb{R},+)$, $U:\mathbb{R} \to \mathcal{B}(\mathcal{H}).$ Consequently, the normalization condition is preserved in time,
\begin{align}
\braket{\psi(t)|\eta|\psi(t)} = \braket{\psi(t_0)|\eta|\psi(t_0)} &\quad& \forall t, t_0 \in \mathbb{R} \label{normPreservationQH}.
\end{align}
\end{enumerate}
\end{defn} 

It is unclear to me whether an extension of Stone's theorem \cite{Stone1930,Stone1932} exists to address the case of quasi-Hermitian time evolution with a possibly time-dependent metric operator. If the unitary, $U$, happens to take the form $e^{-\mathfrak{i} H t/\hbar}$ for some possibly unbounded Hamiltonian operator $H$, then the norm preservation condition of \cref{normPreservationQH} implies
\begin{align}
\frac{d}{d t} \braket{\psi(t)|\psi(t)}_\eta = 0, \label{quasiHermUnitarity}
\end{align} 
which implies \cite{timeDependentInnerProd}
\begin{align}
\mathfrak{i} \hbar \partial_t \eta = H^\dag \eta - \eta H. \label{HamiltonianQHTheory}
\end{align}
The Hamiltonian is an observable if and only if $\eta$ is time-independent. Intuitively, the Hamiltonian corresponds to the energy observable and energy conservation is generated by time-translation invariance, so the explicit breaking time-translation invariance causes non-observability of the Hamiltonian.

Allowing for the possibility of a time-dependent inner product, \textit{any} Hamiltonian matrix can generate pseudo-unitary time evolution, including matrices with complex eigenvalues. Given a choice of Hamiltonian, \cref{HamiltonianQHTheory} defines a partial differential equation for $\eta$, which necessarily has a solution. In the case where the Hamiltonian has complex eigenvalues, the metric operator must be time-dependent. A simple example is $H = \mathfrak{i} \epsilon, \eta = e^{-2 \epsilon t/\hbar}$.

Note quasi-Hermitian quantum theory has a global phase symmetry: the normalized state vectors $\psi \in \mathcal{H}$ and $e^{\mathfrak{i} \alpha} \psi \in \mathcal{H}$ correspond to the same probability measures.

\subsection{$\mathcal{C}$-Symmetry}
Consider an operator which is pseudo-Hermitian with two intertwining operators: $\eta_1$ and $\eta_2$. Then, $\eta_2^{-1} \eta_1$ commutes with $H$ \cite{BiOrthogonal}. Following \cite{KuzhelC}, a particular symmetry of $H$ is as follows:
\begin{defn} \label{defn:C}
An $\eta$-self-adjoint operator, $H$, on the Hilbert space $\mathcal{H}$ has a $\mathcal{C}$-\textit{symmetry} if and only if $\mathcal{C} \in \mathcal{B}(\mathcal{H})$ is an involution, $\mathcal{C}^2 = \mathbb{1}$, which commutes with the Hamiltonian, $\mathcal{C} H = H \mathcal{C}$, and such that $\eta \mathcal{C}$ is positive-definite.
\end{defn}
An $\eta$-self-adjoint operator with $\mathcal{C}$ symmetry is quasi-Hermitian with the metric $\eta \mathcal{C}$. The notion of $\mathcal{C}$-symmetry of a Hamiltonian operator, $H$, was originally introduced in \cite{bender2002complex,CorrectCPT}, loosely based on a connection with quantum electrodynamics, to define a particular metric operator which could be used to define a physical inner product and unitarity.

\section{Exceptional Points and Perturbation Theory} \label{ExceptionalPointsIntro}

This section introduces perturbation theory, with an emphasis on non-Hermitian perturbations. The next paragraphs summarize what perturbation theory is, features of Hermitian perturbation problems, and properties of non-Hermitian perturbation problems which are not present in the Hermitian case.

Physicists are well acquainted with the concept of \textit{parameters}, which I would define as a tunable quantity used to classify a system. To model observables or time evolution in systems dependent on parameters in quantum theory, we must consider operators which are functions of said parameters, $H:\mathbb{C}^d \to \mathcal{B}(\mathcal{H})$. Perturbation theory answers questions about how the properties of operator-valued functions vary with their inputs.

A case familiar to many physicists is that where $H$ is Hermitian and linear in its input. Examples include the Hamiltonians of the Su-Schrieffer-Heeger model \cite{SSH}, the transverse field Ising model \cite{Pfeuty1970}, and the $XY$-model \cite{Lieb1961}. The special case $d = 2$, corresponding to two complex parameters, is referred to as a \textit{pencil}. As discussed in numerous textbooks, such as \cite{dirac1981principles,Sakurai2020,Schumacher2010,bender2013advanced,Fernandez2000,Kato1995}, perturbations of pencils which appear in quantum theory are addressed using Rayleigh–Schrödinger perturbation theory. A central result of Rayleigh–Schrödinger perturbation theory is that eigenvalues and eigenvectors admit Taylor series expansions and, thus, are differentiable. The reason why eigenvalues in the Rayleigh-Schr{\"o}dinger problem are differentiable is because the Hamiltonian is diagonalizable\footnote{A precise version of this connection is given in \cref{HellmannFeynman} \cite[chp. 2 \S 5]{Kato1995}, which appears at the end of the next section.}. Furthermore, a physical feature present in linear Hermitian perturbation problems is level repulsion \cite{wigner1929uber,WignerCollectedWorks,NEUMANN2000}. 

Non-Hermitian operators typically exhibit level attraction, leading to the coalescence of eigenvalues and eigenvectors at \textit{exceptional points}. Exceptional points are a new feature not present in Hermitian perturbation theory and will be defined precisely in \cref{EPdefn}. Due to the coalescence of eigenvectors at an exceptional point, the Hamiltonian is no longer diagonalizable, so the eigenvalues need not be differentiable. Consequently, in general, a Taylor series expansion of the eigenvalues does not exist at exceptional points. Rather, a perturbative expansion of the eigenvalues requires the use of a \textit{Puiseux series}, which is a series expansion including fractional powers. For a simple example of a Puiseux series expansion of eigenvalues at an exceptional point, see the $2 \times 2$ matrix of \cref{qubitHam} and its analysis presented in \cref{PassingThroughEP}.

\subsection{Exceptional Points}

The essence of the study of exceptional points is to characterize the stability of the \textit{multiplicities} of eigenvalues under perturbations. A definition of the multiplicities of an eigenvalue of an operator is reviewed below.

\begin{defn} \label{multiplicities}
Given an eigenvalue of an operator, $A$, on a Banach space, $\lambda \in \sigma_p(A)$, its \textit{geometric multiplicity}, $\mu_g(A;\lambda)$, is the dimension of the set of eigenvectors associated to $\lambda$, or more explicitly,
\begin{align}
\mu_g(A;\lambda) := \dim(\ker(\lambda \mathbb{1} - A)).
\end{align}
The \textit{algebraic multiplicity} of this eigenvalue, $\mu_a(A;\lambda)$, is the dimension of the set of generalized eigenvectors,
\begin{align}
\mu_a(A;\lambda) := \dim\left( \bigcup\limits_{k = 1}^\infty \ker(\lambda \mathbb{1} - A)^k \right).
\end{align}
The algebraic multiplicity of an eigenvalue is always greater than or equal to its geometric multiplicity. An eigenvalue of a complex matrix is 
\begin{itemize}
\item
\textit{semi-simple} if and only if it is algebraic and geometric multiplicities are equal.  \item \textit{simple} if and only if it is semi-simple and the algebraic multiplicity equals one.
\end{itemize}
An operator is \textit{non-degenerate} if and only if all of its eigenvalues are simple. When a particular matrix is clear from the context, I use the shorthand $\mu_{a,g}(\lambda)$.
\end{defn}

%I'm abbreviatin a matrix

In the finite-dimensional case, a simpler definition for the algebraic multiplicity of an eigenvalue can be given:
\begin{theorem}
If $\lambda$ is an eigenvalue of a matrix, $A$, then $\mu_a(A;\lambda)$ is the multiplicity of $\lambda$ in the characteristic polynomial of $A$ {\normalfont \cite[Thm. 1.2.1]{MultiplicityBook}}.
\end{theorem}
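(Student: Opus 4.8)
The plan is to reduce the statement to the decomposition of $\mathbb{C}^n$ into generalized eigenspaces. First I would observe that the nested chain $\ker(\lambda\mathbb{1}-A) \subseteq \ker(\lambda\mathbb{1}-A)^2 \subseteq \cdots$ consists of subspaces of $\mathbb{C}^n$, so by dimension count it stabilizes no later than $k=n$; hence the union appearing in the definition of $\mu_a(A;\lambda)$ is itself the subspace $G_\lambda := \ker(\lambda\mathbb{1}-A)^n$, the generalized eigenspace of $\lambda$. The key structural input is the primary decomposition theorem: over $\mathbb{C}$ the minimal polynomial splits completely, and $\mathbb{C}^n = \bigoplus_{\mu\in\sigma(A)} G_\mu$, with each summand an $A$-invariant subspace.

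Next I would analyze $A$ on a single summand. By construction $(\mu\mathbb{1}-A)$ restricts to a nilpotent operator on $G_\mu$, so $A|_{G_\mu} = \mu\mathbb{1} + N_\mu$ with $N_\mu$ nilpotent; therefore $\mu$ is the only eigenvalue of $A|_{G_\mu}$, and a triangularizing basis gives $\det\!\big(t\mathbb{1}-A|_{G_\mu}\big) = (t-\mu)^{\dim G_\mu}$. Choosing a basis of $\mathbb{C}^n$ adapted to the direct sum puts $A$ in block-diagonal form, so the characteristic polynomial factorizes as $\det(t\mathbb{1}-A) = \prod_{\mu\in\sigma(A)} (t-\mu)^{\dim G_\mu}$. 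Reading off the exponent of the factor $(t-\lambda)$ then shows that the multiplicity of $\lambda$ as a root of the characteristic polynomial equals $\dim G_\lambda = \mu_a(A;\lambda)$, which is precisely the claim.

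The main obstacle is justifying the primary decomposition itself, rather than any of the subsequent bookkeeping. I would establish it by writing the minimal polynomial as $\prod_\mu (t-\mu)^{m_\mu}$, noting that the factors $(t-\mu)^{m_\mu}$ are pairwise coprime, and applying B\'ezout's identity (equivalently, the Chinese remainder theorem for the quotient ring) to build projections onto the $G_\mu$ that sum to the identity and commute with $A$. Alternatively, one can simply invoke the Jordan canonical form, under which the statement is immediate: the number of times $\lambda$ occurs on the diagonal equals both the total size of its Jordan blocks, which is its multiplicity in the characteristic polynomial, and the dimension of its generalized eigenspace. The remaining ingredients — stabilization of the kernel chain, nilpotency of $A-\mu\mathbb{1}$ on $G_\mu$, and multiplicativity of the determinant over block-diagonal matrices — are all routine and require no delicate argument.
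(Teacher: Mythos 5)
The paper offers no proof of this statement at all --- it simply cites \cite[Thm.\ 1.2.1]{MultiplicityBook} --- so there is nothing internal to compare against; your argument is the standard one found in the literature and is correct and complete. The chain of kernels $\ker(\lambda\mathbb{1}-A)^k$ indeed stabilizes by $k=n$, the primary decomposition (or equivalently Jordan form) is the right structural input, and the block-diagonal factorization $\det(t\mathbb{1}-A)=\prod_{\mu\in\sigma(A)}(t-\mu)^{\dim G_\mu}$ together with the nilpotency of $(A-\mu\mathbb{1})|_{G_\mu}$ delivers exactly the claimed identity $\mu_a(A;\lambda)=\dim G_\lambda$.
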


Intuitively, exceptional points correspond to discontinuities of the multiplicities of the eigenvalues of an operator. This notion is rigorously defined below.
\begin{defn} \label{EPdefn}
Let $\mathfrak{X}$ be a topological space, $X$ be a Banach space, and consider a continuous operator-valued map, $H: \mathfrak{X} \to \mathcal{B}(X)$. 
An \textit{exceptional point}, $x \in X$, of the map $H$ is one for which the set-valued maps $\mu_a(H(\cdot);\sigma(H(\cdot))):\mathfrak{X} \to \mathbb{P}(\mathbb{N})$ and $\mu_a(H(\cdot);\sigma(H(\cdot))) - \mu_g(H(\cdot);\sigma(H(\cdot))):\mathfrak{X} \to \mathbb{P}(\mathbb{N})$ are discontinuous. A \textit{diabolical point} is a point $x \in X$ which is not an exceptional point, but for which the function $\mu_a(H(\cdot);\sigma(H(\cdot))):\mathfrak{X} \to \mathbb{P}(\mathbb{N})$ is discontinuous \cite{diabolical,diabolical2003}. %{\color{red} If I want to be self-contained, I need to define continuity for set-valued maps}
\end{defn}
A well-cited source which introduces the terminology of an exceptional point is Tosio Kato's book \cite{Kato1995}. The definition of an exceptional point presented in his book is different than the one provided above. Specifically, Kato defines exceptional points to be branch points in the set of eigenvalues of an operator-valued function \cite{Heiss2004,Heiss2012}. This definition does not reference the geometric multiplicity and, thus, designates both exceptional and diabolical points as exceptional points. I have opted to use the above definition for two reasons. Firstly, we only gain insight by classifying discontinuities as either exceptional or diabolical. Secondly, the physics of diabolical points is not as interesting as that of exceptional points. Many physics papers implicitly define the term "exceptional points" to not include diabolical points.

To further classify exceptional points, physicists often refer to their \textit{order}. Intuitively, the order quantifies how many eigenvectors coalesce at an exceptional point. I define the order for a special case of exceptional points, namely, when the discontinuity is removable %{\color{red} (what does this mean precisely?)}, 
and the eigenvalues can be parametrized by continuous functions. One justification for the assumption of continuity of the eigenvalues is given by the following theorem  \cite{Li2019,Kato1995}.
\begin{theorem} \label{continuityEvals}
Assume $A:X \to \mathfrak{M}_n(\mathbb{C})$ is continuous, where $X$ is a connected subset of $\mathbb{R}$. Then the eigenvalues of $A$ can be parametrized by continuous functions from $X$ to $\mathbb{C}$.
%Consequently, given two sets $S_1, S_2 \subset \mathbb{C}$ which are disjoint, so $S_1 \cap S_2 = \emptyset$, if the spectrum of $A(t)$ is contained in the union $S_1 \cup S_2$ for all $t$, then the sum of the algebraic multiplicities of the eigenvalues contained in $S_1$ is independent of $t$.
\end{theorem}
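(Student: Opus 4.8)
The plan is to reduce the claim to a statement about the roots of the characteristic polynomial and then to build a continuous labelling by a local-to-global argument that exploits the one-dimensionality of $X$. First I would note that the eigenvalues of $A(x)$ are exactly the roots of $p_x(\lambda) := \det(\lambda \mathbb{1} - A(x))$, a monic degree-$n$ polynomial whose coefficients are fixed polynomial expressions in the entries $A(x)_{ij}$; since $x \mapsto A(x)$ is continuous, these coefficients are continuous functions of $x$. As $X \subseteq \mathbb{R}$ is connected it is an interval, so the theorem is equivalent to the assertion that the roots of a monic polynomial with continuously varying coefficients over an interval can be chosen as $n$ continuous functions.

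Next I would establish that the unordered multiset of roots depends continuously on $x$. For a fixed closed disk $D$ whose boundary carries no root of $p_{x_0}$, the argument principle gives that $\frac{1}{2\pi\mathfrak{i}}\oint_{\partial D} p_x'(\lambda)/p_x(\lambda)\, d\lambda$ counts the roots of $p_x$ inside $D$; this integer is continuous in $x$, hence locally constant, so roots cannot jump, appear, or vanish. Equivalently, the coefficient map lands continuously in the symmetric product $\mathrm{Sym}^n(\mathbb{C})$. The difficulty is that continuity of the multiset does not by itself supply continuous single-valued branches, because roots may collide and separate; over a two-dimensional parameter there is genuine monodromy obstructing a global labelling, and it is precisely the one-dimensionality of $X$ that I will use to defeat it.

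The construction is local-to-global with gluing only at single points. At a parameter $t_0$ where $A(t_0)$ has $r \ge 2$ distinct eigenvalues $\mu_1,\dots,\mu_r$ of multiplicities $m_1,\dots,m_r$, I surround the $\mu_i$ by disjoint disks $D_i$; for $x$ near $t_0$ the contour integrals $\frac{1}{2\pi\mathfrak{i}}\oint_{\partial D_i}\lambda^k p_x'(\lambda)/p_x(\lambda)\, d\lambda$ are continuous, so by Newton's identities the elementary symmetric functions of the $m_i$ roots inside $D_i$ are continuous; those roots therefore form a degree-$m_i$ monic family with continuous coefficients. Since $m_i < n$, induction on $n$ yields a continuous selection inside each $D_i$, and concatenating gives a local selection near $t_0$. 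The fully degenerate set $F := \{x : A(x) \text{ has a single eigenvalue}\}$ is closed, and on $F$ that eigenvalue equals the continuous function $\tfrac1n \mathrm{Tr}\,A(x)$; I assign all $n$ labels this common value there. On each component of the open set $X \setminus F$ I glue the local selections: by compactness and a Lebesgue-number argument I partition a compact subinterval into pieces each lying in one clustering neighbourhood and meeting only at endpoints, then relabel successive pieces by the permutation matching the previous piece's values at the shared endpoint---continuity at a junction holds because only the values, not the orderings, must agree there. Exhausting each component by such subintervals produces a continuous selection on $X \setminus F$.

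Finally I would check continuity across $F$. As $x$ approaches a point $x^\ast \in F$ from within a component of $X\setminus F$, the root multiset converges to $n$ copies of $\tfrac1n\mathrm{Tr}\,A(x^\ast)$, so every label converges to that common value; hence the selection extends continuously and agrees with the value assigned on $F$, and this holds from either side irrespective of how the left and right orderings are related. The main obstacle is conceptual rather than computational: organising the induction so that collisions are handled correctly. What makes it work is the twofold observation that local selections can always be glued at a single parameter value merely by applying a permutation, and that at a total collision every label is forced to the same value, so the relabelling ambiguity responsible for monodromy over $\mathbb{C}$ becomes harmless over an interval. The residual technical care goes into the contour-integral extraction of the cluster subfamilies and the verification that their coefficients are continuous.
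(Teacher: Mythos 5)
Your proof is correct: the paper itself supplies no argument for this theorem, deferring entirely to \cite{Li2019,Kato1995}, and your write-up reconstructs essentially the standard proof found in those sources --- continuity of the unordered root multiset via the argument principle, contour-integral extraction of the cluster subpolynomials (with Newton's identities giving continuous coefficients), induction on the degree, and permutation-gluing of local selections at single parameter values, with the total-collision set $F$ handled via $\tfrac{1}{n}\mathrm{Tr}\,A(x)$. The only details left implicit are the trivial base case $n=1$ and the routine observation that boundary points of components of $X \setminus F$ necessarily lie in $F$, so continuity of the glued selection needs checking only there, which you do.
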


\begin{defn} \label{defn:order}
Consider an operator-valued map $H:\mathfrak{X} \to \mathcal{B}(X)$ whose domain is a topological space. Suppose $x_0 \in \mathfrak{X}$ is an exceptional point, that the eigenvalues of $H$ can be parametrized by continuous functions in an open neighbourhood of $x_0$, and that the discontinuity in the difference between algebraic and geometric multiplicities is a removable discontinuity. The \textit{order} of the exceptional point, $x_0$, is a measure of the number of simultaneously coalescing eigenvectors at $x_0$, and explicitly is given by
\begin{align}
1 + \sup \left\{\lim_{x \to x_0} 
\begin{vmatrix}
\mu_a(\lambda(x)) - \mu_g(\lambda(x)) \,+\\
\mu_g(\lambda(x_0)) - \mu_a(\lambda(x_0))
\end{vmatrix}
\,:\, \lambda(x) \in \sigma(H(x)) \right\}.
\end{align}
\end{defn}
Note the order of an exceptional point is only well defined if the exceptional point is isolated from other exceptional points.

One existence theorem regarding exceptional points is the following result of \cite[Thm. 4]{Motzkin1955}.
\begin{theorem} \label{MotzkinTaussky}
Let $H(\lambda, \mu) = \lambda A + \mu B$ be a matrix pencil, where $\lambda, \mu \in \mathbb{C}$ and $A, B \in \mathfrak{M}_n(\mathbb{C})$ do not commute, so $AB \neq BA$. Then there exists at least one point $(\lambda_0, \mu_0) \in \mathbb{C}^2$ such that $H$ is a defective matrix. If there exists a pair $(\lambda, \mu) \in \mathbb{C}^2$ such that $H(\lambda, \mu)$ is diagonalizable, then $(\lambda_0, \mu_0)$ is an exceptional point {\normalfont \cite[Thm. 4]{Motzkin1955}}.
\end{theorem}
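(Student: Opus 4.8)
The plan is to exploit the homogeneity of the pencil, reduce to a single spectral parameter, and study the discriminant of the characteristic polynomial: I would establish the existence claim in contrapositive form and then deduce the exceptional-point claim from the semicontinuity of the Jordan structure along the pencil. Since $H(c\lambda, c\mu) = c\,H(\lambda,\mu)$ for $c \neq 0$, both diagonalizability and defectiveness depend only on the ratio $[\lambda:\mu] \in \mathbb{P}^1$, while the eigenvalues merely rescale. I would therefore analyze the one-parameter family $M(t) := A + tB$ for $t \in \mathbb{C}$ (the omitted ratio $[0:1]$ corresponding to $B$ itself), set $\chi(z,t) := \det(zI - M(t))$, and let $\Delta(t) := \mathrm{disc}_z\,\chi(z,t)$ be the discriminant; then $M(t)$ has a repeated eigenvalue exactly when $\Delta(t) = 0$, and by \cref{continuityEvals} the eigenvalues vary continuously in $t$.

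For the existence of a defective point, I would prove the contrapositive: if $M(t)$ is diagonalizable for every $t \in \mathbb{C}$, then $AB = BA$. Under this assumption no eigenvalue branch can carry a genuine Puiseux singularity, since a square-root branch point of the eigenvalue forces the coalescence of eigenvectors and hence a nontrivial Jordan block (see \cite{Kato1995}); thus each eigenvalue extends to a single-valued entire function of $t$. The bound $\lVert M(t)\rVert \le \lVert A\rVert + |t|\,\lVert B\rVert$ then forces each eigenvalue to be affine-linear, $\lambda_j(t) = \alpha_j + \beta_j t$. Examining $M(t)/t \to B$ as $t \to \infty$ identifies the spectral projections as bounded entire matrix-valued functions, which Liouville's theorem forces to be constant; hence $A$ and $B$ admit a common spectral resolution $A = \sum_j \alpha_j P_j$ and $B = \sum_j \beta_j P_j$, so they commute. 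Contraposing yields a point $(\lambda_0,\mu_0)$ at which $H$ is defective.

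For the second assertion, suppose $H$ is diagonalizable at some point, and recall the defective point $(\lambda_0,\mu_0)$ just produced. I would invoke the genericity of Jordan type along an algebraic family: the Jordan structure of $M(t)$ is constant on a dense Zariski-open subset of $\mathbb{P}^1$, and on the complementary proper closed locus the nilpotent part can only grow (rank of $M(t)-zI$ on eigenspaces is lower semicontinuous). Consequently, diagonalizability at a single point forces diagonalizability on a dense open set, so $(\lambda_0,\mu_0)$ is a limit of diagonalizable points. There the quantity $\mu_a - \mu_g$ jumps from $0$ to a positive integer while an eigenvalue multiplicity contributing to $\mu_a$ simultaneously changes; both the set-valued maps $\mu_a(H(\cdot);\sigma(H(\cdot)))$ and $\mu_a(H(\cdot);\sigma(H(\cdot))) - \mu_g(H(\cdot);\sigma(H(\cdot)))$ are thus discontinuous at $(\lambda_0,\mu_0)$, which is precisely the condition in \cref{EPdefn} for it to be an exceptional point.

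The main obstacle I anticipate lies in the degenerate cases of the existence argument: when the eigenvalues of $B$ coincide, or when the affine-linear branches $\alpha_j + \beta_j t$ cross at finitely many values of $t$, one cannot track individual projections and must instead use the holomorphic total projection onto a coalescing group, showing that global diagonalizability forbids any Jordan block across a crossing so that the grouped projection remains holomorphic and bounded. Establishing this uniformly, and ruling out that the case $\Delta \equiv 0$ conceals a persistent defect that never resolves into distinct eigenvalues, is where the genuine care is required.
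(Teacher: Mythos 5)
You should first note that the paper does not contain its own proof of this statement: it is quoted verbatim from Motzkin and Taussky (cited as \cite[Thm.~4]{Motzkin1955}), so your attempt has to be measured against the classical argument (also given in Kato's book), and there it has a genuine gap. The crux of your contrapositive is the claim that ``a branch point of the eigenvalue forces the coalescence of eigenvectors and hence a nontrivial Jordan block,'' so that everywhere-diagonalizable implies entire eigenvalue branches. That pointwise lemma is false for holomorphic families: the family $M(t)=\left(\begin{smallmatrix}0 & t\\ t^{2} & 0\end{smallmatrix}\right)$ is diagonalizable for \emph{every} $t\in\mathbb{C}$ (the zero matrix at $t=0$, distinct eigenvalues $\pm t^{3/2}$ otherwise), yet its eigenvalues have a genuine order-two branch point at $t=0$, and its eigenprojections blow up like $|t|^{-1/2}$ there. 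So neither the entirety of the eigenvalue branches nor the boundedness of the spectral projections (needed for your Liouville step) follows from diagonalizability alone; your final paragraph's ``holomorphic total projection'' device handles transversal crossings of already-entire branches but not this failure mode. Excluding finite branch points is precisely where the \emph{linearity} of the pencil must enter, and it is the heart of the Motzkin--Taussky theorem; as written, your proof assumes it.

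The second part has a more decisive error: diagonalizability is not a Zariski-generic condition along the pencil, and the semicontinuity you invoke runs the other way ($\mu_g$ is \emph{upper} semicontinuous, since rank is lower semicontinuous, so special points are \emph{more} diagonalizable, not less). Concretely, take $A=\mathrm{diag}(0,0,1)$ and $B=E_{12}+E_{23}$; then $AB=0\neq E_{23}=BA$, and
\begin{align}
H(\lambda,\mu)=\begin{pmatrix}0 & \mu & 0\\ 0 & 0 & \mu\\ 0 & 0 & \lambda\end{pmatrix}
\end{align}
is diagonalizable exactly on the line $\mu=0$ and defective for every $\mu\neq 0$. A defective point with $\lambda_0\mu_0\neq 0$ sits in the interior of the defective locus: both multiplicity maps of \cref{EPdefn} are locally constant there ($\mu_a=\{2,1\}$, $\mu_a-\mu_g=\{1,0\}$), so it is \emph{not} an exceptional point and is not a limit of diagonalizable members, contradicting your ``dense open set of diagonalizable points'' step. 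The second clause of the theorem must instead be established by choosing the defective point on the boundary of the diagonalizable locus (here $\lambda=0$, $\mu\neq 0$, where $\mu_a$ jumps from $\{2,1\}$ to $\{3\}$), e.g.\ the first parameter along a segment from a diagonalizable member at which the Jordan type changes; and one must separately argue that $\mu_a$ itself, not only $\mu_a-\mu_g$, is discontinuous there --- your ``an eigenvalue multiplicity simultaneously changes'' is exactly what fails at interior defective points when the discriminant vanishes identically along the pencil.
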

A similar result to the above theorem regards the existence of branch points of the curves of eigenvalues of matrix pencils generated by real-symmetric non-commuting matrices, and is given in \cite{Moiseyev1980}.

I conclude this section with the observation that semi-simplicity of an eigenvalue is sufficient to ensure its differentiability.
\begin{thm}[Hellmann-Feynman] \label{HellmannFeynman}
Consider a matrix-valued function of one real parameter, $H:\mathbb{R} \to \mathfrak{M}_d(\mathbb{C})$, which is differentiable at $\gamma_0 \in \mathbb{R}$ and continuous in an open neighborhood, $U$, containing $\gamma_0$. Then for every semi-simple eigenvalue, $\lambda_0 \in \sigma(H(\gamma_0))$, with algebraic multiplicity $\mu_a(\lambda_0)$, there exist $\mu_a(\lambda_0)$ functions, $\lambda_i:U \to \mathbb{C}$ with $i \in \{1, \dots, \mu_a(\lambda_0)\}$ which are differentiable at $\gamma_0$ such that $\lambda_i(\gamma)$ is an eigenvalue of $H(\gamma)$ for all $\gamma \in U$. The derivatives of these functions at $\gamma_0$ are given explicitly by
\begin{align}
\left\{\frac{d \lambda_i(\gamma)}{d \gamma} \,:\, i \in \{1, \dots, \mu_a(\lambda_0) \} \right\}= \left\{\lambda_0 + \epsilon\,:\, \epsilon \in  \sigma_{\mathcal{B}(\mathcal{R}(P_{\lambda_0}))} \left(P_{\lambda_0} \frac{dH(\gamma)}{d \gamma} P_{\lambda_0}\right) \right\},
\end{align}
where $P_{\lambda_0}$ denotes the eigenprojection associated to $\lambda_0$.
\end{thm}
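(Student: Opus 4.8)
The plan is to localize around $\lambda_0$ with a Riesz eigenprojection, transport the problem to a fixed $m$-dimensional space by a transformation operator so that semi-simplicity makes the reduced matrix scalar at $\gamma_0$, and then extract differentiability of the eigenvalue branches from a difference-quotient argument resting on the continuity of the roots of the characteristic polynomial. First I would fix a positively oriented contour $\Gamma \subset \mathbb{C}$ enclosing $\lambda_0$ and no other point of $\sigma(H(\gamma_0))$. Since $H$ is continuous on $U$, the resolvent $(z\mathbb{1} - H(\gamma))^{-1}$ exists for all $z \in \Gamma$ and all $\gamma$ in a smaller neighborhood $U_0 \ni \gamma_0$, so the total eigenprojection
\[
P(\gamma) = \frac{1}{2\pi \mathfrak{i}} \oint_\Gamma (z\mathbb{1} - H(\gamma))^{-1}\, dz
\]
is well defined on $U_0$, has constant rank $m = \mu_a(\lambda_0)$, is continuous on $U_0$, and is differentiable at $\gamma_0$ because matrix inversion is smooth on invertible matrices. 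Here $P(\gamma_0) = P_{\lambda_0}$, and the eigenvalues of $H(\gamma)$ inside $\Gamma$ are exactly those of the restriction $H(\gamma)|_{\mathcal{R}(P(\gamma))}$.

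Next I would invoke Kato's transformation operator $U(\gamma)$, the differentiable family with $U(\gamma_0) = \mathbb{1}$ and $P(\gamma) = U(\gamma) P_{\lambda_0} U(\gamma)^{-1}$, to pull everything back to the fixed space $\mathcal{R}(P_{\lambda_0})$. Define the reduced matrix $\tilde H(\gamma) = P_{\lambda_0} U(\gamma)^{-1} H(\gamma) U(\gamma) P_{\lambda_0}$, acting on $\mathcal{R}(P_{\lambda_0})$; its eigenvalues coincide with those of $H(\gamma)$ inside $\Gamma$. Semi-simplicity of $\lambda_0$ is the crucial input: it forces $H(\gamma_0) P_{\lambda_0} = P_{\lambda_0} H(\gamma_0) = \lambda_0 P_{\lambda_0}$ with no nilpotent part, so $\tilde H(\gamma_0) = \lambda_0 \mathbb{1}_{\mathcal{R}(P_{\lambda_0})}$. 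Differentiating at $\gamma_0$ and using $(U^{-1})'(\gamma_0) = -U'(\gamma_0)$ together with the scalar action of $H(\gamma_0)$ on $\mathcal{R}(P_{\lambda_0})$, the two terms carrying $U'(\gamma_0)$ cancel, leaving $\tilde H'(\gamma_0) = P_{\lambda_0} \dot H(\gamma_0) P_{\lambda_0}$ as an operator on $\mathcal{R}(P_{\lambda_0})$.

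The heart of the argument is then to show each continuous branch through $\lambda_0$ is differentiable with derivative an eigenvalue of $M := \tilde H'(\gamma_0)$. Writing $t = \gamma - \gamma_0$ and $\tilde H(\gamma_0 + t) = \lambda_0 \mathbb{1} + tM + o(t)$, the eigenvalues $\mu(\gamma)$ inside $\Gamma$ obey the following: the scaled quantities $s(\gamma) = (\mu(\gamma) - \lambda_0)/t$ are precisely the eigenvalues of $M + o(t)/t$, whose entries converge to $M$ as $t \to 0$. By continuity of the roots of the characteristic polynomial, the multiset of these scaled values converges to $\sigma(M)$. Using \cref{continuityEvals} to select continuous branches $\lambda_1(\gamma), \dots, \lambda_m(\gamma)$ with $\lambda_i(\gamma_0) = \lambda_0$, each associated $s_i$ is continuous for $t \neq 0$ and clusters only near $\sigma(M)$; a disjoint-neighborhood argument then forces $s_i(\gamma)$ to converge to a single eigenvalue of $M$ (when eigenvalues of $M$ collide, the common limit is forced regardless). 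Hence each $\lambda_i$ is differentiable at $\gamma_0$ and the multiset $\{\dot\lambda_i(\gamma_0)\}$ equals the spectrum $\sigma_{\mathcal{B}(\mathcal{R}(P_{\lambda_0}))}\!\left(P_{\lambda_0}\dot H(\gamma_0) P_{\lambda_0}\right)$, which is the claimed set.

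The main obstacle is precisely this last step: passing from ``the eigenvalues converge as a multiset'' to ``each fixed continuous branch is individually differentiable.'' Because $H$ is only assumed differentiable, not analytic, at $\gamma_0$, the Puiseux-series and holomorphic-perturbation machinery is unavailable, and one must argue directly that the difference quotients converge. The genuinely delicate case is when $M$ has repeated eigenvalues, where the disjoint-ball continuity argument degenerates and careful multiplicity bookkeeping is needed to guarantee each branch still has a well-defined limiting slope.
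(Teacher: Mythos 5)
Your skeleton is sound and is essentially the argument the paper itself defers to: the theorem is stated in the thesis without proof and attributed to Kato [chp.\ 2 \S 5], whose ``reduction process'' is exactly your chain — the Riesz projection $P(\gamma)$ of constant rank $m$, a transformation function $U(\gamma)$ with $U(\gamma_0)=\mathbb{1}$ that is as regular as $P$, semi-simplicity forcing $\tilde H(\gamma_0)=\lambda_0\mathbb{1}$ on $\mathcal{R}(P_{\lambda_0})$, and the cancellation yielding $\tilde H'(\gamma_0)=P_{\lambda_0}\dot H(\gamma_0)P_{\lambda_0}$. All of that is correct, as is the multiset convergence of the scaled eigenvalues $(\mu(\gamma)-\lambda_0)/t$ to $\sigma(M)$.

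The gap is in the final step, and it is not where your last paragraph locates it. The disjoint-ball argument applied to a continuous branch $\lambda_i$ supplied by \cref{continuityEvals} only pins down the two \emph{one-sided} limits of the difference quotient $s_i(t)$, on $(0,\delta)$ and on $(-\delta,0)$ separately; nothing forces these to agree. Concretely, take $H(\gamma)=\mathrm{diag}(\gamma,-\gamma)$ at $\gamma_0=0$: here $\lambda_0=0$ is semi-simple with multiplicity $2$ and $M=\mathrm{diag}(1,-1)$ has \emph{simple} spectrum, yet $\lambda_1(\gamma)=|\gamma|$, $\lambda_2(\gamma)=-|\gamma|$ is a perfectly valid continuous parametrization whose difference quotients $s_i(t)=\pm\,\mathrm{sgn}(t)$ do not converge. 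So your claim that the disjoint-neighborhood argument ``forces $s_i(\gamma)$ to converge to a single eigenvalue of $M$'' is false for an arbitrary continuous selection; the delicate case is branch-crossing at $\gamma_0$, not repeated eigenvalues of $M$ (collisions merely merge balls and are harmless for one-sided limits). The repair is a re-pairing step you omit: since for each $\epsilon\in\sigma(M)$ exactly $\mu_a(M;\epsilon)$ branches have right slope-limit $\epsilon$ and exactly $\mu_a(M;\epsilon)$ have left slope-limit $\epsilon$ (both multisets of one-sided limits equal $\sigma(M)$ with algebraic multiplicity), you may splice, at $\gamma_0$, each branch whose left limit is $\epsilon$ to one whose right limit is $\epsilon$. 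The spliced functions remain continuous on $U$, remain eigenvalues of $H(\gamma)$ everywhere, and are now genuinely differentiable at $\gamma_0$ — which suffices, because the theorem asserts the \emph{existence} of $\mu_a(\lambda_0)$ differentiable branches, not the differentiability of every continuous parametrization. One further nit: your computation gives $\{\dot\lambda_i(\gamma_0)\}=\sigma_{\mathcal{B}(\mathcal{R}(P_{\lambda_0}))}\left(P_{\lambda_0}\dot H(\gamma_0)P_{\lambda_0}\right)$ with no ``$\lambda_0+$'' shift, which is correct (compare the Hermitian special case $d\lambda/d\gamma=\braket{\psi|\dot H\psi}$ quoted right after the theorem), so you should flag the discrepancy with the displayed formula in the statement rather than calling your set ``the claimed set.''
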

The special case of \cref{HellmannFeynman} pertaining to perturbations of simple eigenvalues of Hermitian matrices is typically referred to as the \textit{Hellmann-Feynmann theorem} and has been stated many times \cite{schrodinger1926quantisierung,Gttinger1932,Hellmann1933,Feynman1939}. In this case, the perturbation of the eigenvalue $\lambda \in \sigma(H)$ whose corresponding eigenvector is $\psi \in \mathcal{H}$ is given by 
\begin{align}
\frac{d \lambda}{d \gamma} = \braket{\psi| \frac{d H}{d \gamma} \psi}.
\end{align}

\subsection{Algebraic Geometry and Eigenvalues}

%Word other than ubiquitous

Insight into both eigenvalues and exceptional points arises from considering them through the lens of algebraic geometry. Algebraic geometry studies the roots of systems of polynomial equations. A ubiquitous polynomial equation is the characteristic equation whose roots are the eigenvalues of a matrix. Motivated by this example, this section displays how algebraic geometry can be applied to the eigenvalue problem associated to a polynomial matrix function\footnote{Some references refer to these functions as $\lambda$-\textit{matrices} \cite{Hodge1994}, where $\lambda$ refers to an indeterminate.}. In particular, this section reviews the Newton-Puiseux theorem and resultants. A nice book on algebraic geometry is \cite{Hodge1994}.

The central object in algebraic geometry is the \textit{algebraic variety}, defined below.
\begin{defn}
An \textit{algebraic variety} over a field $\mathbb{F}$ is the zero set of a system of polynomial equations. A special case is the \textit{algebraic curve}, which is the zero set of a bivariate polynomial, $P:\mathbb{F}\times \mathbb{F} \to \mathbb{F}$. To be explicit, an algebraic curve the set of points, $(x, y) \in \mathbb{F}^2$, satisfying $P(x,y) = 0$. An \textit{algebraic function} is a function $f: \text{Dom}(f) \to \mathbb{F}$ with $\text{Dom}(f) \subseteq \mathbb{F}$ satisfying $P(x, f(x)) = 0$.
\end{defn}

\begin{ex}
%$\sigma \circ H:\mathbb{C} \to [\mathbb{C}]_n$
% p_{H(\cdot)}(\cdot)^{-1}(\mathbb{C}^2)?
%$\{\gamma\} \times \sigma(H(\gamma))$
Given a matrix polynomial, $H: \mathbb{C}^d \to \mathfrak{M}_n(\mathbb{C})$, the characteristic polynomial $\det(\lambda \mathbb{1} - H)$ is a polynomial in both $\gamma \in \mathbb{C}^d$ and $\lambda \in \mathbb{C}$. %{\color{red} reference}. 
Thus, the graph of the spectrum of $H$,
\begin{align}
\{(\gamma, \lambda) \in \mathbb{C}^{d+1} \,|\, \lambda \in \sigma(H(\gamma)) \},
\end{align}
is an algebraic variety. For the special case of a polynomial matrix function in one variable, $H:\mathbb{C} \to \mathfrak{M}_n(\mathbb{C})$, this set is an algebraic curve.
\demo
\end{ex}

Some algebraic varieties, such as the $n$-sphere, can also be manifolds. However, a general algebraic variety is \textit{not} a manifold, due to the presence of \textit{singular points}: places where the traditional notion of tangents from calculus breaks down. A simple example of an algebraic curve with a singular point is displayed in \cref{cuspExampleFigure}.
\begin{figure}[htp!]
\centering
\includegraphics[width = 80mm]{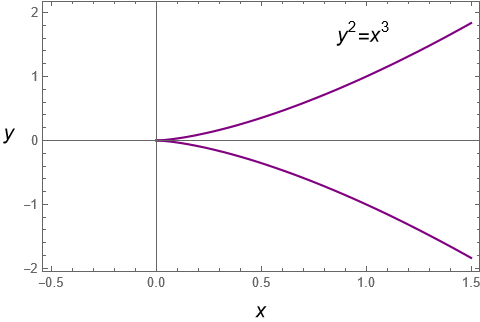}
\caption{This figure plots the \textit{semicubical parabola}, which is an algebraic curve in $\mathbb{R}^2$ defined by $y^2 = x^3$. Observe the cusp singularity at $(x,y) = (0,0)$. No Taylor series can approximate this curve at its cusp.}
\label{cuspExampleFigure}
\end{figure}

The behaviour near a singular point of an algebraic curve cannot be approximated using a Taylor series. Instead, the local behaviour of algebraic curves near singular points is given by a \textit{Puiseux series}. The formal connection between Puiseux series and algebraic curves is given by the \textit{Newton-Puiseux theorem}.
\begin{defn}
A \textit{Puiseux series} is a power series of an indeterminate raised to a rational power.
\end{defn}

\begin{thm}[Newton-Puiseux] \label{NewtonPuiseux}
Given a point in the domain of an algebraic function, there exists an open neighbourhood containing this point such that the algebraic function can be expressed as a Puiseux series for all points in this neighbourhood.
\end{thm}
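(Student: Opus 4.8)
The plan is to prove the analytic statement by a monodromy argument, keeping the Newton polygon method in reserve as the constructive alternative. First I would localize the problem. Since an algebraic function $f$ satisfies $P(x, f(x)) = 0$ for a bivariate polynomial $P$, and we are given a point $x_0$ in its domain with $y_0 := f(x_0)$, I would translate coordinates so that $(x_0, y_0) = (0,0)$ and rewrite $P(x,y) = \sum_{j=0}^n a_j(x) y^j$ as a polynomial in $y$ of degree $n$ with coefficients polynomial (hence holomorphic) in $x$. After dividing by the leading coefficient — or, if it vanishes at $x=0$, invoking the Weierstrass preparation theorem to replace $P$ by a distinguished polynomial that is monic in $y$ with coefficients holomorphic near $0$ — I may assume $P$ is monic in $y$ and that $y_0 = 0$ is a root of $P(0,\cdot)$.

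Second, I would study the roots as multivalued holomorphic functions. Let $\Delta(x) := \mathrm{disc}_y P(x,y)$ denote the discriminant; it is a polynomial in $x$ with finitely many zeros. On a small punctured disk $D^\times = \{0 < |x| < \rho\}$ avoiding the nonzero roots of $\Delta$, the polynomial $P(x,\cdot)$ has $n$ distinct simple roots, and by the implicit function theorem each root is locally a holomorphic function of $x$. These $n$ branches are permuted by the monodromy representation as $x$ traverses a loop around $0$, and this permutation decomposes into disjoint cycles; the cycle containing our chosen branch has some length $q \le n$.

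Third — the analytic heart of the argument — I would trivialize that cycle by a ramified cover. Setting $x = t^q$, the chosen branch lifts to a single-valued holomorphic function $g(t)$ on a punctured disk in $t$. Because $P$ is monic in $y$, the roots obey a Cauchy-type bound $|y| \le 1 + \max_j |a_j(x)|$, and the right-hand side stays bounded as $x \to 0$; hence $g$ is bounded near $t = 0$, and Riemann's removable singularity theorem extends $g$ holomorphically across $t = 0$. Expanding $g(t) = \sum_{k \ge 0} c_k t^k$ as a Taylor series and resubstituting $t = x^{1/q}$ yields the Puiseux series $f(x) = \sum_{k \ge 0} c_k\, x^{k/q}$, convergent on a neighbourhood of $0$. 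Since $q \le n$ divides $n!$, this is a genuine Puiseux series with bounded denominator.

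The main obstacle is the passage from a formal solution to a convergent one. The purely algebraic Newton polygon construction — plotting the support points $(j, \mathrm{ord}_x a_j)$, reading the leading exponent $p/q$ off the slope of a lower-hull edge, solving the associated characteristic equation for the leading coefficient, and iterating after the substitution $x = t^q,\ y = t^p(c + \tilde y)$ — produces a formal Puiseux solution and transparently controls the denominators, but establishing convergence of the resulting series requires either a majorant (Cauchy) estimate at each iteration or, more cleanly, the boundedness-plus-removable-singularity mechanism above. I would therefore lean on the monodromy argument to secure convergence, and invoke the Newton polygon method only to exhibit the coefficients $c_k$ constructively and to confirm termination of the iterative scheme.
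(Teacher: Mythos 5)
The paper does not prove this theorem: it is stated as classical background (the text in \cref{ExceptionalPointsIntro} simply invokes it to justify Puiseux expansions of eigenvalues), so there is no in-paper proof to compare against. Judged on its own merits, your proposal is the standard and correct proof of Puiseux's theorem: localize via translation and Weierstrass preparation to a distinguished polynomial monic in $y$; observe that off the discriminant locus the $n$ roots are simple and locally holomorphic by the implicit function theorem; identify the monodromy cycle of length $q$ containing the chosen branch; trivialize it by the substitution $x = t^q$; use the Cauchy root bound for monic polynomials plus Riemann's removable singularity theorem to extend the lifted branch holomorphically across $t = 0$; and resubstitute $t = x^{1/q}$ to obtain a convergent Puiseux series. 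Your closing remark correctly identifies why the purely formal Newton polygon iteration does not by itself settle the analytic statement.

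One point deserves explicit repair. Your second step assumes $\mathrm{disc}_y P$ is a nonzero polynomial in $x$, but if $P$ has a repeated irreducible factor then $\mathrm{disc}_y P \equiv 0$ and the punctured disk on which the roots are simple does not exist. You should first replace $P$ by the minimal polynomial of the algebraic function $f$ (or at least by the squarefree part of $P$); irreducibility of the minimal polynomial guarantees $\mathrm{disc}_y P \not\equiv 0$, after which your argument goes through verbatim. A second, smaller caveat: to conclude that the lifted function $g(t) = f(t^q)$ is single-valued, you need the chosen branch near $x_0$ to actually agree with $f$, which implicitly requires $f$ to be a continuous branch near $x_0$ --- this matches the informal reading of the paper's statement, and at a point where $f(x_0)$ is a simple root of $P(x_0,\cdot)$ your argument degenerates to the implicit function theorem with $q = 1$, giving an ordinary Taylor series, consistent with the theorem as stated.
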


\begin{defn}
Consider a polynomial matrix-valued function $H:\mathbb{C}^d \to \mathfrak{M}_n(\mathbb{C})$ with an exceptional point, $\gamma_0 \in \mathbb{C}^d$. Given a line containing $\gamma_0$, the Newton-Puiseux theorem guarantees the existence of a Puiseux series expansion of the eigenvalues in a parameter which linearly rules this line. I define the \textit{polynomial perturbative order}\footnote{Some authors, such as \cite{Hodaei2017}, define the order of an exceptional point in this way instead of in the more general notion of \cref{defn:order}.} of $\gamma_0$ to be the supremum\footnote{The supremum of a subset, $S$, of a partially ordered set, $P$, is the least element of $P$ which is greater than all elements $S$. The supremum can be viewed as a generalization of the notion of a maximum.} of the order of an eigenvalue's Puiseux series order over the set of all lines passing through the exceptional point. %Used in nature phys paper, but this definition is only suited to polynomials.
\end{defn}

The polynomial perturbative order and the order need not be the same. For instance, the $3 \times 3$ matrix given in \cref{spinOneExample} exhibits third-order exceptional points which have polynomial perturbative order equal to two.

%{\color{red} Do I need to expand on the Puiseux Series eigenvalue expansion near EPs and physical consequences here, or is the analysis in other sections sufficient?}

I will now turn your attention towards a tool from algebraic geometry which I feel is underutilized by the physics community: the \textit{resultant} of two polynomials. The resultant is used to determine the intersections of algebraic curves. Furthermore, the resultant can be used to locate exceptional points. A more thorough discussion of resultants can be found in \cite{Gelfand1994}.

%Explain why a physicist would want to use the resultant. Add intuition that the resultant finds the intersection of two curves.

\begin{defn}
Consider two polynomials in an indeterminant, $\lambda$, over a field $\mathbb{F}$,
\begin{align}
f = \sum_{j = 0}^{\text{deg} f} f_j \lambda^{\text{deg}(f) - j} \\
g = \sum_{j = 0}^{\text{deg} g} g_j \lambda^{\text{deg}(g) - j},
\end{align}
where $\text{deg}$ denotes the degree of a polynomial. In particular, $f_0 \neq 0$ and $g_0 \neq 0$. Then the \textit{resultant} of $f$ and $g$ is determined by their \textit{Sylvester determinant}. Let 
\begin{align}
\tilde{f} &:= (f_j)_{j \in \{1, \dots, \text{deg} f\}}\oplus  (0)_{k \in \{1, \dots, \text{deg} g\}} \in \mathbb{C}^{\text{deg}f + \text{deg} g} \\
\tilde{g} &:= (g_j)_{j \in \{1, \dots, \text{deg} g\}}\oplus  (0)_{k \in \{1, \dots, \text{deg} f\}} \in \mathbb{C}^{\text{deg}f + \text{deg} g}
\end{align} 
denote the tuples formed by padding zeros at the end of the tuples containing the coefficients of $f$ and $g$ respectively, and $S$ denote the right shift operation on $\mathbb{C}^{\text{deg}f + \text{deg} g}$. The Sylvester determinant is
\begin{align}
\text{Res}_\lambda(f,g) := \det \begin{pmatrix}
\tilde{f}, S \tilde{f}, \dots, S^{\text{deg}(g) - 1}\tilde{f},\tilde{g}, S \tilde{g}, \dots, S^{\text{deg}(f) - 1}\tilde{g} 
\end{pmatrix}.
\end{align}
The \text{discriminant}\footnote{The term "discriminant" was introduced by James Sylvester in \cite{Sylvester1851}.}, $\text{Disc}_\lambda(f)$, is a multiple of the resultant of a polynomial and its derivative,
\begin{align}
\text{Disc}_\lambda(f) := \frac{(-1)^{\text{deg}(f) (\text{deg}(f) - 1)/2}}{f_0} \text{Res}_\lambda(f,g).
\end{align}
\end{defn}

\begin{theorem} 
The resultant of two polynomials vanishes if and only if they share a root. %reference?
\end{theorem}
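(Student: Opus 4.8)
The plan is to read the Sylvester determinant as the determinant of a single linear map and then translate vanishing of the resultant into the existence of a nontrivial kernel element. Set $m = \deg f$ and $n = \deg g$, and consider the vector space $V$ of pairs of polynomials $(u,v)$ with $\deg u < n$ and $\deg v < m$; since $f_0 \neq 0$ and $g_0 \neq 0$ guarantee that $f$ and $g$ have exact degrees $m$ and $n$, the space $V$ has dimension $m+n$. The map $\Phi \colon (u,v) \mapsto uf + vg$ sends $V$ into the space $W$ of polynomials of degree less than $m+n$, which also has dimension $m+n$. In the monomial bases, the matrix of $\Phi$ is precisely the Sylvester matrix of the definition: the columns $\tilde{f}, S\tilde{f}, \dots, S^{n-1}\tilde{f}$ record the coefficient tuples of $f, \lambda f, \dots, \lambda^{n-1} f$, and likewise for $g$. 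Hence $\text{Res}_\lambda(f,g) = \det \Phi$ up to a sign coming from transposition/column ordering, and because $\Phi$ is an endomorphism between spaces of equal dimension, $\text{Res}_\lambda(f,g) = 0$ if and only if $\Phi$ has a nontrivial kernel, i.e. if and only if there is a pair $(u,v) \neq (0,0)$ in $V$ with $uf + vg = 0$.

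For the ``if'' direction, suppose $f$ and $g$ share a root $\alpha$ in $\mathbb{F}$ (which we take algebraically closed, as in the application $\mathbb{F} = \mathbb{C}$). Then $(\lambda - \alpha)$ divides both, so I write $f = (\lambda - \alpha) f_1$ and $g = (\lambda - \alpha) g_1$ with $\deg f_1 = m-1 < m$ and $\deg g_1 = n-1 < n$. The pair $(u,v) = (g_1, -f_1)$ lies in $V$ and satisfies
\begin{align}
\Phi(g_1, -f_1) = g_1 f - f_1 g = g_1 (\lambda - \alpha) f_1 - f_1 (\lambda - \alpha) g_1 = 0,
\end{align}
a nonzero kernel element, so $\det \Phi = 0$ and the resultant vanishes.

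For the ``only if'' direction, suppose the resultant vanishes, so there is a nonzero $(u,v) \in V$ with $uf = -vg$. If $f$ and $g$ had no common root, then $\gcd(f,g) = 1$ in the unique factorization domain $\mathbb{F}[\lambda]$; since $g \mid uf$ and $\gcd(f,g) = 1$, it follows that $g \mid u$. But $\deg u < n = \deg g$ forces $u = 0$, and then $vg = 0$ with $g \neq 0$ gives $v = 0$, contradicting $(u,v) \neq (0,0)$. Therefore $f$ and $g$ have a common factor of positive degree, hence a common root.

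The main obstacle is not any of the above, which is routine linear algebra together with unique factorization, but rather the bookkeeping in the first paragraph: one must verify carefully that the shift-operator-and-padding description of the columns in the definition really reproduces the coefficients of $\lambda^k f$ and $\lambda^k g$, and confirm that the nonvanishing leading coefficients make the dimension count $\dim V = \dim W = m + n$ exact. Once the identification $\text{Res}_\lambda(f,g) = \pm \det \Phi$ is pinned down, both implications follow immediately from the kernel criterion.
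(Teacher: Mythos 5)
Your proof is correct, but there is nothing in the paper to compare it against: the theorem is stated without proof, invoked as a standard fact from algebraic geometry (the surrounding text points the reader to \cite{Gelfand1994} for background on resultants). Your argument is the standard one and it fills that gap soundly. The identification of the Sylvester determinant with $\det \Phi$ for $\Phi\colon (u,v) \mapsto uf + vg$ on the $(m+n)$-dimensional space of pairs with $\deg u < n$, $\deg v < m$ is exactly what the paper's shift-and-padding definition encodes, and both directions of your kernel/gcd argument check out. You also correctly isolated the two genuine hypotheses: the nonvanishing leading coefficients $f_0, g_0 \neq 0$ (built into the paper's definition, and needed both for the exact dimension count and for $g \neq 0$ in the only-if direction), and algebraic closure of $\mathbb{F}$, which is needed to pass from ``common factor of positive degree'' to ``common root'' --- over a non-closed field the correct statement is the common-factor version, and since the paper applies the theorem only with $\mathbb{F} = \mathbb{C}$, your restriction is harmless. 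The degenerate cases $m = 0$ or $n = 0$ that your construction of $(g_1, -f_1)$ tacitly excludes are vacuous anyway: a nonzero constant has no roots, and the resultant is then a nonzero power of that constant.
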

\begin{ex}
The discriminant of a quadratic is 
\begin{align}
\text{Disc}_\lambda(a \lambda^2 + b \lambda + c) = b^2 - 4 a c,
\end{align} which is a familiar expression from the quadratic formula. 
\end{ex}
\begin{ex}
If $\text{deg}(f) = 3, \text{deg}(g) = 2$, then their resultant is
\begin{align}
\text{Res}_\lambda(f,g) = \text{det} \begin{pmatrix}
f_0 & 0   & g_0 & 0   & 0   \\
f_1 & f_0 & g_1 & g_0 & 0   \\
f_2 & f_1 & g_2 & g_1 & g_0 \\
f_3 & f_2 & 0   & g_2 & g_1 \\
0   & f_3 & 0   & 0   & g_2
\end{pmatrix}.
\end{align}
\end{ex}

%\begin{defn} \label{characteristic}
%The \textit{characteristic polynomial}, $p_A:\mathbb{C} \to \mathbb{C}$, of the matrix $A$ is
%\begin{align}
%p_A(\lambda) = \gls{det}(\lambda \gls{id} - A).
%\end{align}
%The \textit{characteristic equation} is $p_A(\lambda) = 0$. The roots of the characteristic equation are the eigenvalues of $A$.
%\end{defn}

%
%\begin{theorem}[Cayley-Hamilton \cite{Cayley1858,Frobenius1878}] \label{Cayley-Hamilton}
%$p_A(A) = 0$. {\color{red} Consider adding the other references}
%\end{theorem}

%
%Question for my own curiosity:
%Given an exceptional point in a model continuous in one parameter, does there always exist an open neighborhood of that point containing no other exceptional points?
The following theorem regarding the number of exceptional points of a matrix is proven using the resultant.
\begin{thm}
Given a polynomial matrix function of one parameter, $H:\mathbb{C} \to \mathfrak{M}_n(\mathbb{C})$, the set of exceptional and diabolical points is of finite cardinality.
\end{thm}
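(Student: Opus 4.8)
The plan is to reduce the statement to the finiteness of the zero set of a single polynomial (or rational function) in the parameter $\gamma$. First I would observe that, by \cref{EPdefn}, both exceptional and diabolical points are precisely the points at which the set-valued algebraic-multiplicity map $\mu_a(H(\cdot);\sigma(H(\cdot)))$ is discontinuous: a point is exceptional when this map \emph{and} $\mu_a-\mu_g$ are both discontinuous, and diabolical when it is discontinuous but the point is not exceptional, so the union of the two classes is exactly the discontinuity locus of $\mu_a$. Hence it suffices to show that this discontinuity locus is finite.

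Next I would bring in the characteristic polynomial $p(\gamma,\lambda)=\det(\lambda\mathbb{1}-H(\gamma))$, which is a polynomial jointly in $\gamma$ and $\lambda$ because the entries of $H$ are polynomials in $\gamma$. The algebraic multiplicities of the eigenvalues of $H(\gamma)$ are the root multiplicities of $p(\gamma,\cdot)$, and since the coefficients of $p$ depend continuously on $\gamma$, the roots depend continuously on $\gamma$ (cf. \cref{continuityEvals}). Consequently the partition of $n$ recorded by the algebraic multiplicities can only change where two previously distinct eigenvalue branches collide; away from such collisions each isolated eigenvalue has a locally constant multiplicity by the argument principle. Thus the discontinuity locus of $\mu_a$ is contained in the set of $\gamma$ where $p(\gamma,\cdot)$ fails to have the generic number of distinct roots.

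To control that set I would use the discriminant and resultant machinery of this section. Let $\tilde p(\gamma,\lambda)$ be the square-free part of $p$ in $\lambda$, computed over the field $\mathbb{C}(\gamma)$; its roots are exactly the distinct eigenvalues of $H(\gamma)$. Because $\mathbb{C}(\gamma)$ has characteristic zero, $\tilde p$ is square-free if and only if its discriminant $\mathrm{Disc}_\lambda(\tilde p)\in\mathbb{C}(\gamma)$ is nonzero, so this discriminant is a nonzero rational function of $\gamma$. Let $B\subset\mathbb{C}$ be the finite set consisting of the zeros and poles of $\mathrm{Disc}_\lambda(\tilde p)$ together with the finitely many parameter values at which the leading coefficient or a denominator of $\tilde p$ vanishes. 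For $\gamma\in\mathbb{C}\setminus B$ the distinct eigenvalues are the simple roots of $\tilde p(\gamma,\cdot)$ and never collide; since $\mathbb{C}\setminus B$ is connected, the integer-valued, locally constant algebraic-multiplicity partition is in fact globally constant there, so $\mu_a$ is continuous on $\mathbb{C}\setminus B$. Therefore the discontinuity locus, and hence the set of exceptional and diabolical points, is contained in the finite set $B$.

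I expect the main obstacle to be the degenerate case in which $\mathrm{Disc}_\lambda(p)\equiv 0$, i.e.\ $H(\gamma)$ carries a repeated eigenvalue for every $\gamma$ (as for $H=\mathrm{diag}(\gamma,\gamma,f(\gamma))$). A naive appeal to ``the discriminant of $p$ has finitely many roots'' fails here, which is exactly why I pass to the square-free part $\tilde p$ and exploit that its discriminant is a nonzero element of $\mathbb{C}(\gamma)$. Verifying that this reduction genuinely captures every collision of distinct eigenvalues, and that no discontinuity of the multiplicity partition can occur while the distinct eigenvalues stay separated, is the delicate point of the argument.
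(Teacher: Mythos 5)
Your proof is correct, but it takes a genuinely different route from the paper's. The paper's argument forms the ladder of resultants $R_k(\gamma) := \operatorname{Res}_\lambda\bigl(p(\gamma,\lambda), \partial^k p/\partial\lambda^k\bigr)$, observes that $R_n \not\equiv 0$, picks the smallest $k_0$ with $R_{k_0}\not\equiv 0$, and concludes by asserting that $R_{k_0}^{-1}(\{0\})$ is the set of exceptional and diabolical points; it thus handles the degenerate case $\operatorname{Disc}_\lambda(p)\equiv 0$ by climbing to higher $\lambda$-derivatives, whereas you handle it by extracting the square-free part $\tilde p$ over $\mathbb{C}(\gamma)$ — a reduction closer in spirit to Knopp's irreducibility reduction, which the paper cites from Kato before giving its alternative. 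What your version buys is precisely the step the paper leaves implicit: an actual proof of containment, namely that every discontinuity of the multiplicity map lies in the finite set $B$, obtained from the argument-principle local-constancy observation together with connectedness of $\mathbb{C}\setminus B$; the paper simply asserts that $R_{k_0}^{-1}(\{0\})$ \emph{equals} the exceptional/diabolical set, while only containment (your weaker and safer claim) is needed for finiteness. What the paper's ladder buys is brevity and pure algebra: no gcd computation over $\mathbb{C}(\gamma)$ and no specialization bookkeeping of bad parameter values. Your opening reduction — that the union of exceptional and diabolical points is exactly the discontinuity locus of $\mu_a$ alone, so that $\mu_g$ never needs to be tracked — is also a correct reading of \cref{EPdefn}, since an exceptional point requires $\mu_a$ to be discontinuous as well.
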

\begin{proof}
Kato \cite[p. 64]{Kato1995} references a proof by Knopp \cite[p. 119-121]{Knopp2}, which involves reducing the problem to one where the characteristic polynomial, $p(\gamma, \lambda) := \det(\lambda \gls{id} - H(\gamma))$, is irreducible. I provide an alternative proof here.

Consider the sequence of polynomials
\begin{align}
R_k(\gamma) := \text{Res}_\lambda  \left(p(\gamma, \lambda) , \frac{\partial^k p(\gamma, \lambda)}{\partial \lambda^k} \right),
\end{align}
Observe $R_n(\gamma) = 1$. Thus, there exists a $k \in \{1, \dots, n\}$ which is the smallest $k$ such that $R_k(\gamma) \neq 0$. Denote this $k$ as $k_0$. The preimage $R_{k_0}^{-1}(\{0\})$ is the set of exceptional and diabolical points, which is finite since $R_{k_0}(\gamma)$ is a polynomial.
\end{proof}

\subsection{Cusps of Algebraic Curves of Exceptional Points} \label{epSurfaceSingularities}

This section displays one of the results of \cite{Barnett2023}; namely, third-order exceptional points generically occur at cusp singularities of algebraic curves consisting of second-order exceptional points. This claim is supported by a perturbative argument.

Consider a Hamiltonian which is polynomial in $d \geq 2$ complex parameters, $H: \mathbb{C}^d \rightarrow \mathfrak{M}_n(\mathbb{C})$ with a third-order exceptional point, $x_0 \in \mathbb{C}^d$, and assume there exists at least one instance of $H$ which is diagonalizable. There exists at least one eigenvalue, $\lambda_0 \in \sigma(H(x_0))$, whose algebraic multiplicity increases by 2 at $x_0$. In the vicinity of this eigenvalue, the characteristic polynomial can be written
\begin{align}
\text{det}(\lambda - H(x)) = \sum_{i=0}^n p_j(x) (\lambda - \lambda_0)^j,
\label{puiseux}
\end{align}
where the polynomials $p_j(x)$ satisfy $p_j(x_0)=0\,\forall j<3$ and $p_3(x_0)\neq 0$. As we perturb $x_0 \rightarrow x_0+\delta x$, the first-order correction $\delta\lambda=\lambda -\lambda_0$ to the eigenvalue $\lambda_0$ is found by substituting the Taylor expansions of $p_i$ in \cref{puiseux}. To simplify future calculations, assume $p_0'(x_0) \neq 0$, so the eigenvalue near $\lambda_0$ is approximated by
\begin{align}
p_3(x_0)\delta\lambda^3+ p_2(x) \delta\lambda^2+ p_1(x) \delta\lambda+\left(p_0'(x_0) \cdot \delta x \right) \approx 0.
\label{cubic}
\end{align}

Consider a perturbation along a line in parameter space passing through the exceptional point. Explicitly, let this line be $\{\theta u\,|\, \theta \in \mathbb{R}\}$ for some $u \in \mathbb{C}^d$. Given $\delta x = \theta u$, a Puiseux series expansion for a subset of eigenvalues along this line is
\begin{align}
\delta\lambda(\theta) \approx \begin{cases}
\dfrac{(p_0'(x_0) \cdot u)^{1/3}}{p_3(x_0)} \,\theta^{1/3} &\text{if } p_0'(x_0) \cdot u \neq 0 \\
\dfrac{(p_1'(x_0) \cdot u)^{1/2}}{p_3(x_0)} \,\theta^{1/2} &\text{if }  p_0'(x_0) \cdot u = 0 \text{ and } p_1'(x_0) \cdot u \neq 0 \\
\dfrac{(p_2'(x_0) \cdot u)}{p_3(x_0)} \,\theta &\text{if }  p_0'(x_0) \cdot u = p_1'(x_0) \cdot u =0 \text{ and }  p_2'(x_0) \cdot u \neq 0.
\end{cases}
\end{align}
Notably, the order of the Puiseux series expansion decreases if the line spanned by $u$ is orthogonal to the normal vector of the surface $p_0 = 0$ at $x_0$.

The set of exceptional points near $x = x_0$ is approximated by the discriminant of \cref{cubic},
\begin{align}
p_1^2 p_2^2 - 4 p_0 p_2^3 - 4 p_1^3 p_3 + 18 p_0 p_1 p_2 p_3 - 27 p_0^2 p_3^2 = 0. \label{disc}
\end{align}
The point $\delta x = 0$ is readily interpreted as a \textit{singular point} of the affine algebraic variety of exceptional points approximated by \cref{disc}, since the derivatives of \cref{disc} with respect to each coordinate $\delta x_i$ all vanish. The leading term in \cref{disc} for small $\delta x_i$ is the $p_0^2 p_3^2$ term. Assuming the leading term in the expansion of $p_1^2 p_2^2 - 4 p_0 p_2^3 - 4 p_1^3 p_3 + 18 p_0 p_1 p_2 p_3$ is an odd function of $\theta$ for a perturbation along $\delta x = \theta u$, then the line determined by $p_0(x) \approx p_0'(x_0) \delta x = 0$ is a one-directional tangent to the surface of exceptional points. Thus, we can interpret the point $x = x_0$ as a cusp singularity \cite{hilton1920plane}.

This section is concluded by discussing the relationship between higher-order exceptional points and cusp singularities of algebraic curves in two examples of $3 \times 3$ matrices. Additional examples where the aforementioned correspondence holds include the models studied in \cite{ZNOJIL2007,Ruzicka2015,Barnett2023}. %{\color{red} Add more if I find them}.

\begin{ex}
Consider the $3 \times 3$ matrix 
\begin{align}
H = \begin{pmatrix}
\Delta + \mathfrak{i} \gamma & 1 & 0 \\
1 & 0 & 1 \\
0 & 1 & \Delta - \mathfrak{i} \gamma.
\end{pmatrix} \label{ham:3x3Example}
\end{align}
This example obeys the general rule that exceptional points of third-order are in one-to-one correspondence with cusp singularities of an algebraic curve of exceptional points of second-order. A plot of the exceptional contour is provided in \cref{fig:EP3x3Contour}. The real section of the algebraic curve of exceptional points for this model and an $n \times n$ generalization is displayed in a \cref{PTBreaking} in a subsequent chapter and \cite{Ruzicka2015,Barnett2023}. 

The characteristic polynomial is 
\begin{align}
\text{det}(\lambda \mathbb{1} - H) = \lambda^3 - 2 \Delta \lambda^2 + (\Delta^2 + \gamma^2 - 2 \gamma)\lambda + 2 \Delta.
\end{align}
A $3 \times 3$ matrix can only have third-order exceptional point when its monic characteristic polynomial is $(\lambda - \lambda_0)^3$ for some $\lambda_0$.  Comparing the zeroth and second-order terms of this cubic with the characteristic polynomial determines a constraint on the parameter $\Delta$ at third-order exceptional points,
\begin{align}
3 \lambda_0 &= 2 \Delta \label{discr3x31}\\
\lambda_0^3 &= -2 \Delta \,\Rightarrow \\
2 \Delta &\in \{-3 \sqrt{3} i, 0, 3 \sqrt{3} i\}.
\end{align}
Comparing the first term of $(\lambda - 2 \Delta/3)^3$ with the characteristic polynomial unveils the set of third-order exceptional points,
\begin{align}
(\Delta, \gamma) \text{ is an EP3 } \Leftrightarrow 2(\Delta,\gamma) \in 
\left\{
\begin{array}{l}
(0,-2\sqrt{2}), (0,2\sqrt{2}),\\
(-3 \sqrt{3} \mathfrak{i}, -\mathfrak{i}), (-3 \sqrt{3} \mathfrak{i}, \mathfrak{i}),\\
(3 \sqrt{3} \mathfrak{i}, -\mathfrak{i}), (3 \sqrt{3} \mathfrak{i}, \mathfrak{i}) 
\end{array}
\right\}.
\end{align}

Exceptional points with any order occur points where $\lambda$ is a simultaneous root of the characteristic polynomial and its derivative. These points can be determined by finding the zeros of the discriminant of the characteristic polynomial,
\begin{align}
\text{Disc}_{\lambda}(\text{det}(\lambda \mathbb{1} - H)) = 32 - 48 \gamma^2 + 24 \gamma^4 - 4 \gamma^6 + 4 \Delta^2 - 40 \gamma^2 \Delta^2 - 8 \gamma^4 \Delta^2 - 4 \gamma^2 \Delta^4. \label{disc3x3detun}
\end{align}
Singular points of the algebraic curve of exceptional points occur when $(\Delta, \gamma)$ is a simultaneous root of \cref{disc3x3detun} and its derivatives. A brief calculation reveals that the only singular points of the algebraic curve of exceptional points occur precisely at the third-order exceptional points. 

Furthermore, the third-order exceptional points are cusp points. Following the general analysis presented earlier, nearby exceptional points are approximated with the one-directional tangents determined by setting $p_0 = 0$, with the direction of this tangent determined by the sign of the remaining terms in \cref{disc}. These tangents are parametrized by $\theta \geq 0$, and the set of tangents corresponding to each third-order exceptional point is
\begin{align}
(\Delta_{EP}, \gamma_{EP}) \approx \begin{array}{l}
\{(0, \sqrt{2} - \theta)\,|\, 1 \gg \theta \geq 0\}\\
\{(0, \sqrt{2} + \theta)\,|\, 1 \gg \theta \geq 0\}\\
\left\{\frac{\mathfrak{i}}{2}
\left(3 \sqrt{3} + \theta, 1 - \theta\right)\,|\, 1 \gg \theta \geq 0 \right\}\\
\left\{\frac{\mathfrak{i}}{2}
\left(3 \sqrt{3} + \theta, -1 + \theta\right)\,|\,1 \gg \theta \geq 0 \right\}\\
\left\{\frac{\mathfrak{i}}{2}
\left(-3 \sqrt{3} - \theta, 1 - \theta\right)\,|\,1 \gg \theta \geq 0 \right\}\\
\left\{\frac{\mathfrak{i}}{2}
\left(-3 \sqrt{3} - \theta, -1 + \theta\right)\,|\,1 \gg \theta \geq 0 \right\}.
\end{array} \label{eq:3x3EP3}
\end{align} 

%\begin{tabular}{l m r}
%$\{(0, \sqrt{2} - \theta)                      $& | & $1 \gg \theta \geq 0\}$\\
%$\{(0, \sqrt{2} + \theta)                      $& | & $1 \gg \theta \geq 0\}$\\
%$\lbrace \frac{\mathfrak{i}}{2}
%\left(3 \sqrt{3} + \theta, 1 - \theta\right)   $& | & $1 \gg \theta \geq 0 \rbrace$\\
%$\lbrace \frac{\mathfrak{i}}{2}
%\left(3 \sqrt{3} + \theta, -1 + \theta\right)  $& | & $1 \gg \theta \geq 0 \rbrace$\\
%$\lbrace \frac{\mathfrak{i}}{2}
%\left(-3 \sqrt{3} - \theta, 1 - \theta\right)  $& | & $1 \gg \theta \geq 0 \rbrace$\\
%$\lbrace \frac{\mathfrak{i}}{2}
%\left(-3 \sqrt{3} - \theta, -1 + \theta\right) $& | & $1 \gg \theta \geq 0 \rbrace$.
%\end{tabular}

This example concludes with \cref{fig:EP3x3Contour}, which plots the contour of exceptional points.
\begin{figure}[!ht]
\centering
\includegraphics[width = \textwidth]{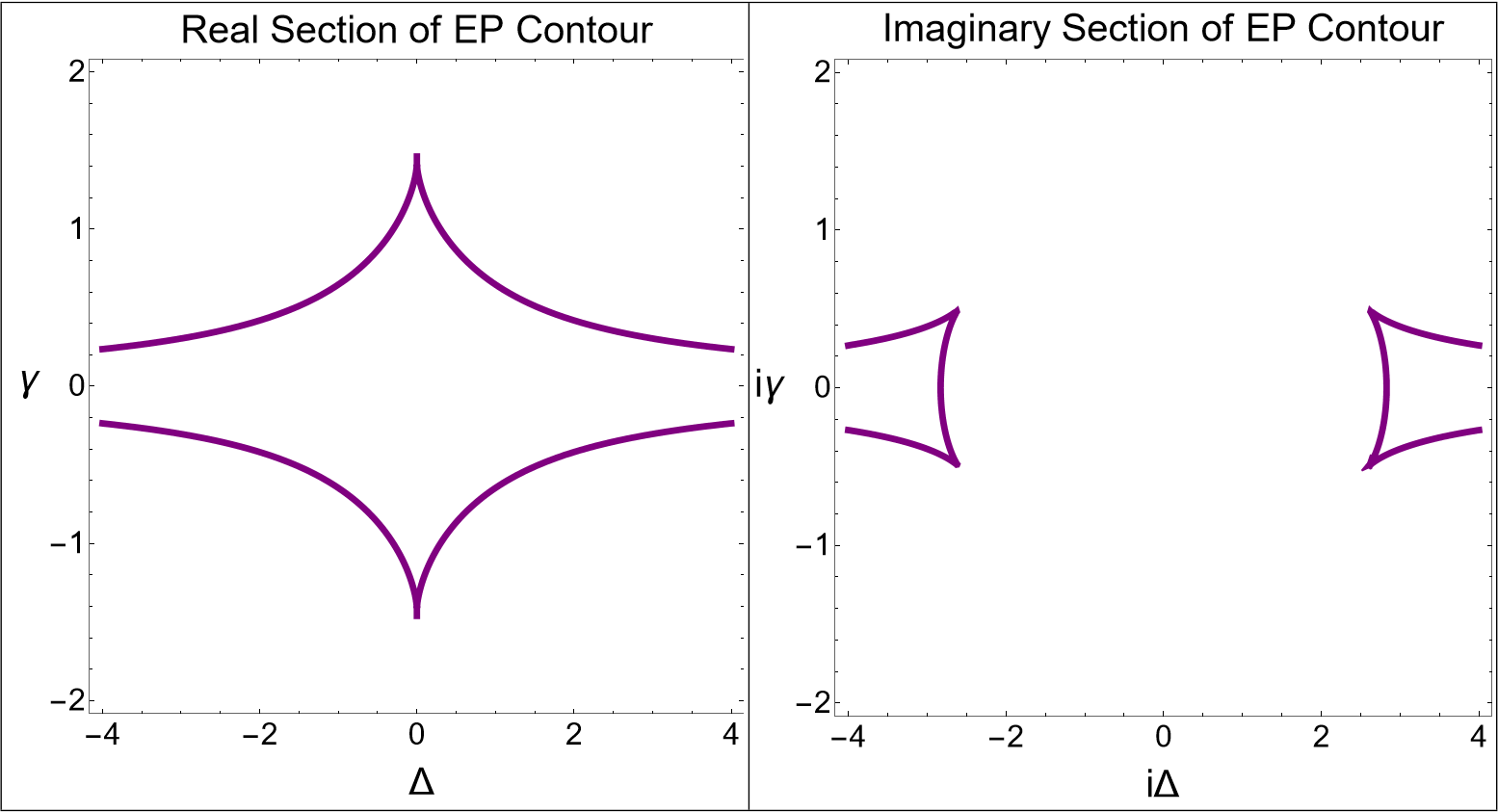}
\caption{Exceptional points of the $3 \times 3$ matrix defined in \cref{ham:3x3Example}. An analytic expression for this algebraic curve is determined by taking a discriminant, as in \cref{disc3x3detun}. The cusp points correspond to third-order exceptional points. These cusp points and their corresponding one-directional tangents are given in \cref{eq:3x3EP3}.}
\label{fig:EP3x3Contour}
\end{figure}

\demo 
\end{ex}

\begin{ex} \label{spinOneExample}
Consider the $3 \times 3$ matrix
\begin{align}
H = \begin{pmatrix}
0 & \gamma_2 & 0 \\
\gamma_1 & 0 & \gamma_2 \\
0 & \gamma_1 & 0
\end{pmatrix},
\end{align}
where at least one of the complex parameters is nonzero, $\gamma_1, \gamma_2 \neq 0$, to make things interesting.
This example violates the general rule that exceptional points of third-order are in one-to-one correspondence with cusp singularities of an algebraic curve of exceptional points of second-order. My understanding of why the general rule is violated is because of the special symmetry properties of this matrix; namely, it is an element of the dimension 3 representation of $\mathfrak{sl}(2,\mathbb{C})$.

The characteristic polynomial is
\begin{align}
\text{det}(\lambda \mathbb{1} - H) = \lambda^3 - 4 \gamma_1 \gamma_2 \lambda.
\end{align}
Thus, the curve of exceptional points is defined by $\gamma_1 \gamma_2 = 0$, and every exceptional point is a third-order exceptional point. The reason why the perturbative argument presented earlier in this section fails is because the constant term in the characteristic polynomial, what was called $p_0$ earlier, vanishes.

While these exceptional points are third-order, their polynomial perturbative order is two. This hints that a more precise quantity characterizing singular points of algebraic varieties of exceptional points is their perturbative order.
\demo
\end{ex}

%TO-DO: Get some clarity on the general linear group.

%In the calculations above, we have implicitly assumed the existence and convergence of perturbative expansions to surfaces of exceptional points. We now justify this claim in the generic case.

\subsection{Eigenvalue Inclusion I: Bauer-Fike Theorem and Related Results}

%{}{}{}{}
This section will start its course on a different path from the previous subsections while remaining under the umbrella of perturbation theory. The next class of results can be categorized as \textit{eigenvalue inclusion theorems}. Eigenvalue inclusion theorems are statements regarding sets which contain the spectrum of an operator. An application of eigenvalue inclusion theorems is determining whether an antilinear symmetry is broken or unbroken \cite{Caliceti2004}. 

Perhaps the simplest eigenvalue inclusion theorem is the intermediate value theorem. If the characteristic polynomial of a matrix has different signs at two inputs, then there exists a real eigenvalue of this matrix in the open interval whose endpoints are these inputs. Following \cite{Willms2008}, one of the results presented in a later chapter of this thesis, namely \cref{inclusionTheorem}, is derived using the intermediate value theorem.

In this section, I review a well-known eigenvalue inclusion theorem, the \textit{Bauer-Fike theorem} of \cite{Bauer1960}, which relates the discrepancy between perturbed and unperturbed eigenvalues to the operator norm of the perturbation. In addition, I emphasize how the result given in \cite[Thm. 1.2]{Caliceti2004}, that pertains to domains of unbroken antilinear symmetry, can be strengthened by using the Bauer-Fike theorem. 
%USE IF AND ONLY IF IN DEFINITIONS!!!

%Clarify to myself whether diagonalizable in one basis means diagonalizable in all (should be true in the finite case). Define a basis for myself (cl-span() = space). Define Similarity.
As observed in \cite{Michailidou2018}, the Bauer-Fike theorem is a corollary of a result proven by Alston Housenholder in \cite[p. 322]{Householder1956}, \cite[p. 66]{householder2013theory}. Housenholder's result is the $p =0$ or $p = 1$ case of \cref{generalHousenholderTheorem}. In fact, Bauer and Fike demonstrated the finite-dimensional version of the $p = 0$ version of \cref{generalHousenholderTheorem} \cite{Bauer1960}.
\begin{theorem} \label{generalHousenholderTheorem}
Given a Banach space, $(X,||\cdot||)$, consider an operator, $A: \text{Dom}(A) \to X$ with $\text{Dom}(A) \subseteq X$, and a bounded perturbation, $\delta A \in \mathcal{B}(X)$. Let $p \in [0,1]$. Then the spectrum of $A + \delta A$ is contained in the Housenholder sets of $A$, defined by 
\begin{align}
\sigma(A+\delta A) \subseteq \sigma(A) \cup \{z \in \mathbb{C}\setminus \sigma(A) \, : \, ||(z - A)^{p-1}  \delta A (z - A)^{-p} || \geq 1\}. \label{generalHousenholder}
\end{align}
\end{theorem}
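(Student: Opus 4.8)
The plan is to prove the contrapositive form of the inclusion: if $z\notin\sigma(A)$ and $z$ fails to lie in the Householder set, so that $\|(z-A)^{p-1}\delta A(z-A)^{-p}\|<1$, then $z\notin\sigma(A+\delta A)$. Since $z\notin\sigma(A)$, the resolvent $R(z):=(z-A)^{-1}$ is a bounded operator, and I would first factor
\begin{align}
z-(A+\delta A)=(z-A)-\delta A=(z-A)\bigl[\mathbb{1}-R(z)\,\delta A\bigr].
\end{align}
Because $(z-A)$ is already invertible, invertibility of the left-hand side is equivalent to invertibility of $\mathbb{1}-R(z)\,\delta A$, i.e. to the statement that $1\notin\sigma\bigl(R(z)\,\delta A\bigr)$.

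The heart of the argument is a commuting-product identity. Writing $T_p:=(z-A)^{p-1}\delta A(z-A)^{-p}$, I would set
\begin{align}
B:=(z-A)^{-p}=R(z)^{p},\qquad C:=(z-A)^{p-1}\delta A=R(z)^{1-p}\delta A,
\end{align}
both of which are bounded since $p,1-p\in[0,1]$ and $R(z)$ is bounded. A direct computation gives $BC=R(z)\,\delta A=(z-A)^{-1}\delta A$ and $CB=T_p$. The classical fact that bounded operators satisfy $\sigma(BC)\setminus\{0\}=\sigma(CB)\setminus\{0\}$ then shows, since $1\neq 0$, that $1\in\sigma\bigl((z-A)^{-1}\delta A\bigr)$ if and only if $1\in\sigma(T_p)$. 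But $\|T_p\|<1$ forces the spectral radius of $T_p$ to be strictly below $1$, so $1\notin\sigma(T_p)$; hence $1\notin\sigma\bigl(R(z)\,\delta A\bigr)$, and $z-(A+\delta A)$ is invertible. Contraposing yields exactly the inclusion \eqref{generalHousenholder}, with the endpoint cases $p=0$ and $p=1$ recovering Householder's original statement.

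The main obstacle is making rigorous sense of the fractional powers $(z-A)^{-p}=R(z)^{p}$ and $(z-A)^{p-1}=R(z)^{1-p}$ as bounded operators, since $A$ may be unbounded; I would address this by defining them through the holomorphic functional calculus applied to the bounded resolvent $R(z)$, which is legitimate whenever $\sigma(R(z))$ lies in a region admitting a holomorphic branch of $\zeta\mapsto\zeta^{p}$, and by checking that the semigroup law $R(z)^{p}R(z)^{1-p}=R(z)$ used above holds within that calculus. A secondary point requiring care is the invocation of $\sigma(BC)\setminus\{0\}=\sigma(CB)\setminus\{0\}$ in the Banach-space setting; this follows from the explicit resolvent formula $(\zeta\mathbb{1}-CB)^{-1}=\zeta^{-1}\bigl[\mathbb{1}+C(\zeta\mathbb{1}-BC)^{-1}B\bigr]$, valid for $\zeta\neq 0$, which exhibits $\zeta\mathbb{1}-CB$ and $\zeta\mathbb{1}-BC$ as simultaneously invertible.
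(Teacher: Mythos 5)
Your core argument is correct, but it takes a genuinely different route from the paper's. The paper argues directly: for $\mu\in\sigma(A+\delta A)\setminus\sigma(A)$ it sandwiches the non-invertible operator $\mu\mathbb{1}-(A+\delta A)$ between $(\mu\mathbb{1}-A)^{p-1}$ and $(\mu\mathbb{1}-A)^{-p}$, computes that the result is $\mathbb{1}-T_p$ with $T_p=(\mu\mathbb{1}-A)^{p-1}\delta A(\mu\mathbb{1}-A)^{-p}$, concludes that this operator is not invertible, and then invokes the Neumann series to force $\|T_p\|\geq 1$ --- one conjugation-type computation plus the geometric series, with no spectral transference. You instead factor $z\mathbb{1}-(A+\delta A)=(z\mathbb{1}-A)\bigl[\mathbb{1}-R(z)\,\delta A\bigr]$ and pass between $R(z)\,\delta A=BC$ and $T_p=CB$ via $\sigma(BC)\setminus\{0\}=\sigma(CB)\setminus\{0\}$, which you justify with the correct resolvent identity. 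Your detour buys something real: you only ever manipulate the bounded operators $B=R(z)^{p}$ and $C=R(z)^{1-p}\delta A$ together with the semigroup law $R(z)^{p}R(z)^{1-p}=R(z)$, whereas the paper's sandwich step (``thus the map does not have a bounded inverse'') silently requires that flanking a non-invertible, possibly unbounded operator by the fractional powers preserves non-invertibility; that step is true but unproven as written, and is most cleanly repaired by noting that invertibility of $\mathbb{1}-T_p$ would exhibit the bounded inverse $(z\mathbb{1}-A)^{-p}(\mathbb{1}-T_p)^{-1}(z\mathbb{1}-A)^{p-1}$ of $z\mathbb{1}-(A+\delta A)$ --- which is essentially your contrapositive in disguise.

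One caveat concerns your construction of the fractional powers, which is where your hedge is doing real work. The Riesz--Dunford calculus requires $\zeta\mapsto\zeta^{p}$ to be holomorphic on an open neighbourhood of $\sigma(R(z))$; when $A$ is unbounded one has $0\in\sigma(R(z))$, and for non-integer $p$ no branch of $\zeta^{p}$ is holomorphic near the origin, so the proposed definition fails precisely in the setting the theorem advertises. Even for bounded $A$, the compact set $\sigma(R(z))=\{(z-\lambda)^{-1}:\lambda\in\sigma(A)\}$ can wind around the origin, obstructing any single-valued branch; outside these friendly regimes one needs sectoriality hypotheses and, e.g., the Balakrishnan integral to define $(z\mathbb{1}-A)^{-p}$. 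In fairness, the paper's proof is equally silent on this point --- it treats the fractional powers as given --- and the applications made downstream (Householder's bound and Bauer--Fike) use only $p\in\{0,1\}$, where the powers are integral and both proofs are unconditionally valid.
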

\begin{proof}
Consider $\mu \in \sigma(A + \delta A) \setminus \sigma(A)$. Since $\mu \notin \sigma(A)$, the inverse $(\mu \mathbb{1} - A)^{-1} \in \mathcal{B}(X)$ exists and is bounded. Thus, the map 
\begin{align}
(\mu \mathbb{1} - A)^{p - 1} (\mu \mathbb{1} - (A + \delta A)) (\mu \mathbb{1} - A)^{-p} = \mathbb{1} - (\mu \mathbb{1} - A)^{p - 1} \delta A(\mu \mathbb{1} - A)^{-p}
\end{align}
does not have a bounded inverse. The operator $\mathbb{1} + B$ is invertible for all $B$ such that $||B|| < 1$, thus, we have demonstrated 
\begin{align}
\mu \in \sigma(A + \delta A) \setminus \sigma(A) \, \Rightarrow \, \mu \in \{z \in \mathbb{C}\setminus \sigma(A) \, : \, ||(z - A)^{p-1}  \delta A (z - A)^{-p} || \geq 1\},
\end{align}
which implies \cref{generalHousenholder}.
\end{proof}

The following definitions will be used in the Bauer-Fike theorem.
\begin{defn} \label{bauerFikePrereqs}
Let $(X, ||\cdot||_X)$ be a normed space, and let $x_i$ denote a linearly independent basis of $X$. The norm is called \textit{axis-oriented} in the basis $x_i$ if and only if the operator norm of every diagonal linear operator, $D$, is equal to its spectral radius, $||D|| = \text{sup}_{\lambda \in \sigma(D)}|\lambda|$. 
\end{defn}
\begin{ex}
The $p$-norms, defined in \cref{pnorm}, are axis-oriented in the canonical basis, $e_i$. % Do I need to say span when I say basis?
\demo
\end{ex}
\begin{ex} \label{ex:indHilbSpaceAxisOriented}
An induced Hilbert space norm is axis-oriented in any orthonormal basis.
\demo
\end{ex}
\begin{defn}
Given a normed space, $X$, and a bounded operator with bounded inverse, $A \in \text{GL}(\mathcal{B}(X))$, the \textit{condition number}, denoted by $\kappa_{||\cdot||}:\text{GL}(\mathcal{B}(X)) \to [1, \infty)$, of $A$ is
\begin{align}
\kappa_{||\cdot||}(A) = ||A|| \, ||A^{-1}||.
\end{align}
\end{defn}
%Spectraloid operator, numerical range.
The product inequality, \cref{productInequality}, implies the condition number is always greater than or equal to one. Due to the $C^*$-identity, the condition number associated to the induced norm on a Hilbert space equals one for every unitary operator.

\begin{theorem}[Bauer-Fike \cite{Bauer1960}]\label{BauerFike}
Let $(X, ||\cdot||_X)$ be a Banach space\footnote{Technically, Bauer and Fike only stated the finite-dimensional version of this theorem. Proof of the infinite-dimensional case stated here is a straightforward corollary of \cref{generalHousenholderTheorem}.}. Let $H = H_0 + H_1$, where $H_0$ is a linear map on $X$ and $H_1 \in \mathcal{B}(X)$ is a bounded linear map. Suppose $H_0$ is similar to an linear map which is diagonal in the basis $x_i$, so $H_0 = S \Lambda S^{-1}$ and $\Lambda x_i = \lambda_i x_i$ for all $i$. Assume the norm is axis-oriented in the basis $x_i$. Then the spectrum is contained in a union of circles with radius $\kappa_{||\cdot||_X}(S) \, ||H_1||$ centred at the unperturbed eigenvalues, $\lambda \in \sigma(H_0)$,
\begin{align}
\sigma(H) &\subseteq \bigcup\limits_{\lambda \in \sigma(H_0)} \{\epsilon \in \mathbb{C} \, | \, |\epsilon - \lambda| < \kappa_{||\cdot||_X}(S) \, ||H_1|| \} 
\end{align}

\end{theorem}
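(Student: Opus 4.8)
The plan is to obtain the statement as an immediate specialization of \cref{generalHousenholderTheorem}. I would set $A = H_0$, $\delta A = H_1$, and take $p = 0$. For this value of $p$ the operator appearing in the Householder set reduces, since $(z\mathbb{1}-H_0)^{p-1}H_1(z\mathbb{1}-H_0)^{-p} = (z\mathbb{1}-H_0)^{-1}H_1(z\mathbb{1}-H_0)^0 = (z\mathbb{1}-H_0)^{-1}H_1$, so that \cref{generalHousenholderTheorem} already delivers
\begin{align}
\sigma(H) \subseteq \sigma(H_0) \cup \{z \in \mathbb{C}\setminus\sigma(H_0) \,:\, ||(z\mathbb{1}-H_0)^{-1}H_1|| \geq 1\}.
\end{align}
Everything then reduces to estimating the resolvent-perturbation norm $||(z\mathbb{1}-H_0)^{-1}H_1||$ for spectral points $z$ lying outside $\sigma(H_0)$.

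Next I would bring in the diagonalizing similarity $H_0 = S\Lambda S^{-1}$. Since $z\mathbb{1}-H_0 = S(z\mathbb{1}-\Lambda)S^{-1}$, the resolvent factorizes as $(z\mathbb{1}-H_0)^{-1} = S(z\mathbb{1}-\Lambda)^{-1}S^{-1}$, and submultiplicativity of the operator norm together with the definition of the condition number gives
\begin{align}
||(z\mathbb{1}-H_0)^{-1}H_1|| \leq ||S||\,||(z\mathbb{1}-\Lambda)^{-1}||\,||S^{-1}||\,||H_1|| = \kappa_{||\cdot||_X}(S)\,||(z\mathbb{1}-\Lambda)^{-1}||\,||H_1||.
\end{align}
This isolates the one genuinely structural factor, namely the norm of the diagonal resolvent $(z\mathbb{1}-\Lambda)^{-1}$.

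The decisive step uses the axis-oriented hypothesis of \cref{bauerFikePrereqs}. Because $(z\mathbb{1}-\Lambda)^{-1}$ is diagonal in the basis $x_i$ with diagonal entries $(z-\lambda_i)^{-1}$, its operator norm coincides with its spectral radius, so that
\begin{align}
||(z\mathbb{1}-\Lambda)^{-1}|| = \sup_i \frac{1}{|z-\lambda_i|} = \frac{1}{\text{dist}(z,\sigma(H_0))}.
\end{align}
Substituting into the Householder condition, any $z \in \sigma(H)\setminus\sigma(H_0)$ must obey $1 \leq \kappa_{||\cdot||_X}(S)\,||H_1||/\text{dist}(z,\sigma(H_0))$, hence $\text{dist}(z,\sigma(H_0)) \leq \kappa_{||\cdot||_X}(S)\,||H_1||$; this places $z$ within distance $\kappa_{||\cdot||_X}(S)\,||H_1||$ of some eigenvalue $\lambda \in \sigma(H_0)$, while the points of $\sigma(H_0)$ themselves sit at the centres of the circles.

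The main obstacle is not conceptual but a matter of infinite-dimensional care. I would need to justify that the diagonal operator $(z\mathbb{1}-\Lambda)^{-1}$ genuinely has spectral radius $\sup_i|z-\lambda_i|^{-1}$, with $\sigma(\Lambda)$ equal to the closure of the $\lambda_i$, and confirm that $\text{dist}(z,\sigma(H_0)) > 0$ legitimizes the factorization of the resolvent. I would also flag that the invertibility criterion ($||B|| < 1 \Rightarrow \mathbb{1}+B$ invertible) produces the non-strict bound $\geq 1$, and hence a \emph{closed} inclusion $\text{dist}(z,\sigma(H_0)) \leq \kappa_{||\cdot||_X}(S)\,||H_1||$; reconciling this with the open circles written in the statement requires a short remark, since the derivation most naturally yields closed disks and the trivial case $H_1 = 0$ should be handled separately.
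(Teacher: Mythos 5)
Your proposal is correct and is essentially the paper's own route: the paper gives no separate argument but declares the theorem a straightforward corollary of \cref{generalHousenholderTheorem} (the $p=0$ Householder case), which is exactly the specialization you carry out via the resolvent factorization through $S$, submultiplicativity, and the axis-oriented evaluation $||(z\mathbb{1}-\Lambda)^{-1}|| = 1/\text{dist}(z,\sigma(H_0))$. Your closing caveat is also well taken — the derivation naturally yields closed disks, so the strict inequality in the stated inclusion is slightly too strong (and fails trivially when $H_1 = 0$), matching the classical Bauer--Fike formulation with $\leq$ rather than $<$.
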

The Bauer-Fike theorem states that the sensitivity of the eigenvalues of $H_0$ under perturbations is related to the condition number of a matrix of eigenvectors. Note the condition number is dependent on a choice of norm and on a choice similarity transform, $S$. As an extremal example, if $H_0 = \mathbb{1}$, and we choose a poorly conditioned similarity transform with $(\epsilon, \epsilon^{-1}) \in \sigma(S)$ for some $\epsilon \in \mathbb{C}$ tending to zero, by the spectral radius formula, the radius of the open balls in the Bauer-Fike theorem tends to $\infty$.

A corollary of the Bauer-Fike theorem regards the stability of real eigenvalues under perturbations. 
\begin{corollary} \label{bfCorollary}
Let $H = H_0 + H_1$, where $H_0$ and $H_1$ are linear maps on a finite-dimensional normed space, $(X, ||\cdot||_X)$. Suppose all of the eigenvalues of $H_0$ are simple and real-valued, and let $S$ denote the similarity transform mapping $H_0$ to a diagonal matrix with entries $\lambda_i \in \mathbb{R}$. Assume the norm is axis-oriented in the basis of eigenvectors of $H_0$. Assume $H_0$ and $H_1$ are simultaneously pseudo-Hermitian with the same intertwiner or that they simultaneously satisfy any of the equivalent criteria of \cref{pseudoEquivThm}. Then 
\begin{align}
\sigma(H) \subseteq \mathbb{R} \, \Leftarrow \, ||H_1|| < \frac{\inf\limits_{j \neq k} |\lambda_{j} - \lambda_k|}{2 \kappa_{||\cdot||_X}(S)}. \label{bfCorollaryIneq}
\end{align}
\end{corollary}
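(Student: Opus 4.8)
The plan is to combine three ingredients: that the sum $H = H_0 + H_1$ is again pseudo-Hermitian, so its spectrum is symmetric under complex conjugation; that \cref{BauerFike} confines $\sigma(H)$ to small disks about the unperturbed eigenvalues; and that under the hypothesis these disks are pairwise disjoint and centred on the real axis. First I would observe that since $H_0$ and $H_1$ share an intertwiner $\eta$, so does their sum: $\eta H = \eta H_0 + \eta H_1 = H_0^\dag \eta + H_1^\dag \eta = H^\dag \eta$, so $H$ is pseudo-Hermitian, and \cref{pseudoEquivThm} gives $\sigma(H) = \sigma(H)^*$.

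Next I would set $r := \kappa_{||\cdot||_X}(S)\,||H_1||$ and note that the hypothesis \eqref{bfCorollaryIneq} is precisely $2r < \inf_{j \neq k} |\lambda_j - \lambda_k|$. Hence the open disks $D_j := \{z \in \mathbb{C} : |z - \lambda_j| < r\}$, centred at the real simple eigenvalues $\lambda_j$ of $H_0$, are pairwise disjoint, since their centres are separated by strictly more than $2r$.

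The central step is to show that each $D_j$ contains exactly one eigenvalue of $H$. For this I would deform $H_0$ into $H$ along the family $H(t) = H_0 + t H_1$, $t \in [0,1]$. For each $t$ the perturbation has norm $t\,||H_1|| \le ||H_1||$, so applying \cref{BauerFike} to $H(t)$, whose unperturbed part is still $H_0$ with the same diagonalizing $S$ and axis-oriented norm, confines $\sigma(H(t)) \subseteq \bigcup_j D_j$. By \cref{continuityEvals} the eigenvalues of $H(t)$ admit a parametrization by continuous functions $\lambda_1(t), \dots, \lambda_n(t)$ on $[0,1]$ with $\lambda_i(0) = \lambda_i$. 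Each branch is a continuous image of the connected interval $[0,1]$ lying inside the disconnected set $\bigcup_j D_j$, so it remains in the single connected component containing its starting point, namely $D_i$. Therefore at $t = 1$ the disk $D_i$ contains exactly the one eigenvalue $\lambda_i(1)$, which is consequently simple. Finally, for any $\lambda \in \sigma(H)$, say $\lambda \in D_j$, reality of $\lambda_j$ gives $|\lambda^* - \lambda_j| = |\lambda - \lambda_j| < r$, so $\lambda^* \in D_j$; and $\lambda^* \in \sigma(H)$ by pseudo-Hermiticity. Since $D_j$ holds a single eigenvalue, $\lambda = \lambda^*$, i.e. $\lambda \in \mathbb{R}$, whence $\sigma(H) \subseteq \mathbb{R}$.

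I expect the main obstacle to be the bookkeeping in the deformation step: one must rule out eigenvalue branches migrating between disks, which rests squarely on the strict separation $2r < \inf_{j \neq k}|\lambda_j - \lambda_k|$ and on \cref{continuityEvals} applying over the real interval $[0,1]$. The simplicity of the eigenvalues of $H_0$ is what guarantees one branch starts in each disk, and everything else is a direct application of the quoted results.
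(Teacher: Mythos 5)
Your proposal is correct and follows essentially the same route as the paper's proof: Bauer--Fike confines $\sigma(H_0 + tH_1)$ to the disjoint disks about the real $\lambda_j$, continuity of eigenvalues along $t \in [0,1]$ places exactly one eigenvalue of $H$ in each disk, and the conjugation symmetry $\sigma(H) = \sigma(H)^*$ from pseudo-Hermiticity then forces reality. Your version merely spells out details the paper leaves implicit (the explicit check that $H$ inherits the intertwiner, and the branch-tracking argument in place of the paper's counting of at-least-one-eigenvalue-per-disk).
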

\begin{proof}
Suppose we find a set in the complex plane which is symmetric about reflections through the real axis and which contains exactly one eigenvalue of a pseudo-Hermitian operator. Then, that eigenvalue must be real-valued, since otherwise its inequivalent complex conjugate would also be contained in the set. Thus, the corollary will be proven if $\dim X$ such sets are constructed. 

If the inequality of \cref{bfCorollaryIneq} holds, then the open balls that are defined by the Bauer-Fike theorem,
\begin{align}
\{\epsilon \in \mathbb{C} \, | \, |\epsilon - \lambda_i| < \kappa_{||\cdot||_X}(S) \, ||H_1|| \}, \label{bfBalls}
\end{align}
are disjoint for every $i \neq j$. By continuity of the eigenvalues of $H_0 + t H_1$ for $t \in [0,1]$, each of the open balls of \cref{bfBalls} contains at least one eigenvalue. Each such eigenvalue must be real-valued by the reasoning of the previous paragraph, so the corollary is proven.
\end{proof}
The above corollary was demonstrated in the finite-dimensional case. I imagine a generalization to the infinite-dimensional case holds if $H_0$ is assumed to be diagonalizable and $H_1$ is assumed to be bounded. The criteria of \cref{pseudoEquivThm} are no longer equivalent \cite{siegl2008quasi,Siegl2009}, so one may need to take care in adapting this clause to the infinite-dimensional setting.

A theorem akin to the above corollary, which is weaker in the finite-dimensional case yet also applies to some operators in infinite-dimensional spaces, was given in \cite{Caliceti2004}.
\begin{theorem} \label{CalicetiTheorem}
Let $H = H_0 + \epsilon H_1$, where $H_0$ and $H_1$ are linear operators on a Hilbert space, and $\epsilon \in \mathbb{R}$. Suppose $H_0$ is bounded below, self-adjoint, has a discrete spectrum with 
\begin{align}
\sigma(H_0) = \{\lambda_0 < \lambda_1 < \dots \},
\end{align} 
and all the eigenvalues are simple. Suppose $H_1$ is continuous. Suppose there exists a unitary involution, $J$, such that $H_0$ and $H_1$ are pseudo-Hermitian with the intertwiner $J$. Then {\normalfont \cite{Caliceti2004}},
\begin{align}
\sigma(H) \subseteq \mathbb{R} \, \Leftarrow \, |\epsilon| < \frac{\inf\limits_{j \geq 0} (\lambda_{j+1} - \lambda_j)}{2 ||H_1|| }.
\end{align}
\end{theorem}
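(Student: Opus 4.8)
The plan is to adapt the strategy of \cref{bfCorollary} to the infinite-dimensional setting, replacing the finite eigenvalue count used there by a spectral-projection (Riesz) argument. First I would record the consequences of the hypothesis that $J$ is a unitary involution intertwining both $H_0$ and $H_1$. Since $H_0 = J^{-1}H_0^\dag J = J H_0 J$ and $H_0$ is self-adjoint, $J$ commutes with $H_0$; and $H_1 = J H_1^\dag J$ gives $J H_1 J = H_1^\dag$. Consequently $J H J = H_0 + \epsilon H_1^\dag = H^\dag$, so $H = H_0 + \epsilon H_1$ is itself pseudo-Hermitian with intertwiner $J$. Because $J$ is unitary, $\sigma(H) = \sigma(J H^\dag J) = \sigma(H^\dag) = \sigma(H)^*$; equivalently this is the content of \cref{pseudoHermSpec} applied with $\eta = J$. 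Thus $\sigma(H)$ is symmetric under reflection through the real axis, which is the feature I will exploit at the end.

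Second, I would localize the spectrum. Since $H_0$ is self-adjoint, $\|(z-H_0)^{-1}\| = 1/\mathrm{dist}(z,\sigma(H_0))$ for $z\notin\sigma(H_0)$, and since $\epsilon H_1$ is bounded the factorization $z-H = (z-H_0)\bigl(\mathbb{1}-\epsilon(z-H_0)^{-1}H_1\bigr)$ shows $z-H$ is invertible whenever $\mathrm{dist}(z,\sigma(H_0)) > r := |\epsilon|\,\|H_1\|$. This is exactly the infinite-dimensional Bauer--Fike bound (\cref{BauerFike}, or \cref{generalHousenholderTheorem} with $p=0$), using that the Hilbert-space norm is axis-oriented in an orthonormal eigenbasis of $H_0$ (\cref{ex:indHilbSpaceAxisOriented}), so that the relevant condition number equals $1$. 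Hence $\sigma(H)\subseteq\bigcup_j \overline{B_r(\lambda_j)}$. Writing $g:=\inf_j(\lambda_{j+1}-\lambda_j)$ (the case $g=0$ making the claim vacuous, as then $|\epsilon|<0$ is impossible), the hypothesis $|\epsilon|<g/(2\|H_1\|)$ says precisely $r<g/2$, so the closed balls $\overline{B_r(\lambda_j)}$ are pairwise disjoint; each is centred on the real axis and hence invariant under complex conjugation.

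Third, and this is the heart of the matter, I would show that each ball carries exactly one simple eigenvalue. Consider the homotopy $H(t)=H_0+t\epsilon H_1$ for $t\in[0,1]$, choose $\rho\in(r,g/2)$, and let $\Gamma_j$ be the circle $|z-\lambda_j|=\rho$. The same resolvent estimate shows $\Gamma_j\cap\sigma(H(t))=\emptyset$ for every $t$, so the Riesz projection
\[
P_j(t)=\frac{1}{2\pi\mathfrak{i}}\oint_{\Gamma_j}(z-H(t))^{-1}\,dz
\]
is a well-defined bounded projection depending norm-continuously on $t$. A norm-continuous family of projections has constant finite rank, and at $t=0$ the contour encloses only the simple eigenvalue $\lambda_j$, so $\mathrm{rank}\,P_j(t)\equiv1$. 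Therefore the part of $\sigma(H)=\sigma(H(1))$ inside $\Gamma_j$, which by the localization is all of $\sigma(H)\cap\overline{B_r(\lambda_j)}$, reduces to a single eigenvalue $\mu_j$ of algebraic multiplicity one, and $\sigma(H)=\{\mu_j\}_j$.

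Finally I would combine the three ingredients. For each $j$, conjugate symmetry gives $\mu_j^*\in\sigma(H)$; since $\overline{B_r(\lambda_j)}$ is conjugation-invariant and the balls are disjoint, $\mu_j^*$ lies in the same ball, which contains the unique eigenvalue $\mu_j$. Hence $\mu_j^*=\mu_j$, i.e.\ $\mu_j\in\mathbb{R}$, and $\sigma(H)\subseteq\mathbb{R}$. The main obstacle is making the spectral-projection step rigorous when $H_0$ is unbounded: one must justify that $t\mapsto(z-H(t))^{-1}$ is norm-continuous uniformly for $z\in\Gamma_j$ and that the resulting continuous family of projections has constant rank in the infinite-dimensional setting. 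Both are standard in Kato's analytic perturbation theory for isolated eigenvalues, but they require the closedness of $H$ (immediate here, as a bounded perturbation of the self-adjoint $H_0$) and care that $\Gamma_j$ remains in the resolvent set for all $t$.
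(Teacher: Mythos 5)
Your proposal is correct, but note that the paper itself offers no proof of this theorem: it is quoted from the reference \cite{Caliceti2004}, and the closest in-paper argument is the proof of the finite-dimensional \cref{bfCorollary}, which combines the Bauer--Fike balls, continuity of eigenvalues along the pencil $H_0 + tH_1$, and conjugation symmetry of the spectrum of a pseudo-Hermitian matrix. What you have done is precisely the infinite-dimensional upgrade of that argument, which the author explicitly flags as missing just after \cref{bfCorollary} (``I imagine a generalization to the infinite-dimensional case holds\dots''). The substitutions you make are the right ones: the eigenvalue-counting-by-continuity step, which has no direct meaning for unbounded operators, is replaced by norm-continuity and hence constancy of the rank of the Riesz projections $P_j(t)$ along the homotopy $H_0 + t\epsilon H_1$; the Bauer--Fike localization is replaced by the resolvent factorization $z - H = (z-H_0)\bigl(\mathbb{1} - \epsilon(z-H_0)^{-1}H_1\bigr)$ together with $\|(z-H_0)^{-1}\| = 1/\mathrm{dist}(z,\sigma(H_0))$ for self-adjoint $H_0$, which is exactly the $p=0$ Housenholder bound of \cref{generalHousenholderTheorem} with condition number equal to one; and the reflection symmetry $\sigma(H) = \sigma(H)^*$ follows from $JHJ = H^\dag$ as you derive it (or from \cref{pseudoHermSpec} with $\eta = J$, once one notes $H$ is closed as a bounded perturbation of a self-adjoint operator). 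Two small points deserve a sentence in a polished write-up: the identity $JH_0J = H_0$ for unbounded $H_0$ should be read as including the domain statement $J\,\mathrm{Dom}(H_0) = \mathrm{Dom}(H_0)$, which is implicit in the pseudo-Hermiticity hypothesis; and the disjointness of the \emph{closed} balls uses the strict inequality $2r < g \leq \lambda_{j+1}-\lambda_j$, which your hypothesis $r < g/2$ does supply. With those caveats acknowledged, your argument is a complete and correct proof, and in fact it is stronger than what the quoted statement demands, since it shows the spectrum consists of simple real eigenvalues, one in each interval $(\lambda_j - g/2,\, \lambda_j + g/2)$.
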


%The 2-norm case of Bauer-Fike is referred to as Weyl's theorem. {\color{red} reference}.

\subsection{Eigenvalue Inclusion II: Cassini Ovals and Geršgorin Disks}

This section summarizes two additional eigenvalue inclusion results, the \textit{Brauer-Ostrowski theorem} of \cite{Ostrowski1937,Brauer1947}, and \textit{Geršgorin's disk theorem} of \cite{gershgorin1931uber}. A more exhaustive review of this topic can be found in the textbook of \cite{Varga2004}.

\begin{defn} \label{defn:Cassini}
Given two points, $z_1, z_2 \in \mathbb{F}$, where $\mathbb{F} \in \{\mathbb{R}^2, \mathbb{C}\}$, I define a \textit{Cassini oval} to be the interior of a peculiar kind of quartic curve:
\begin{align}
C(z_1,z_2; b) = \{z \in \mathbb{C} \,|\, ||z - z_1||_2 \cdot ||z-z_2||_2 \leq b\} .
\end{align} 
I refer to the boundary of a Cassini oval as a \textit{Cassini curve}.
The points $z_1, z_2$ are referred to as the \textit{foci} of $C$. Geometrically, a Cassini curve is the locus of points such that geometric mean of the distances between these points and the foci is a constant.
\end{defn}

The set of ellipses can be defined by replacing the term "geometric mean" with "arithmetic mean" in \cref{defn:Cassini}. Similarly to how ellipses are conic sections, Cassini curves are \textit{Toric sections}. A Cassini curve with overlapping foci is a circle, and the case $||z_1 - z_2|| = 2b$ is the lemniscate of Bernoulli. Figure~\eqref{CassiniCurves} displays several Cassini curves.
\begin{figure}[!ht]
\centering
\includegraphics[width = .75 \textwidth]{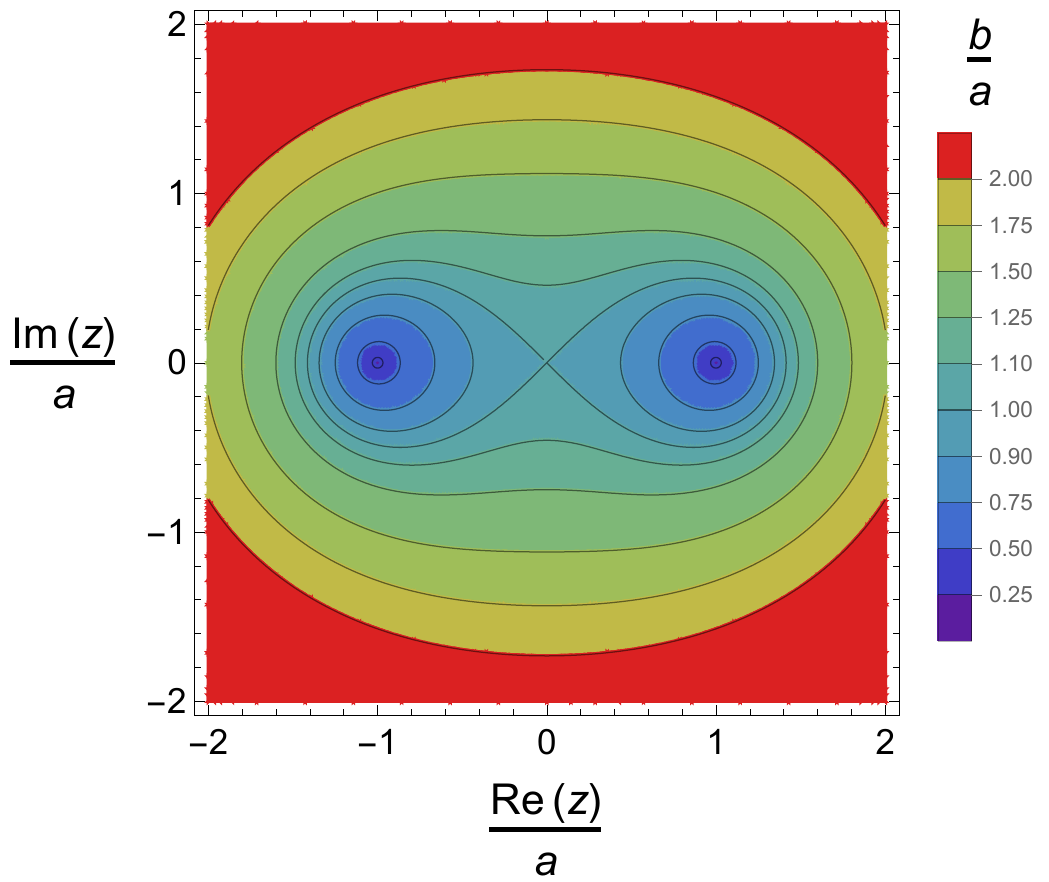}
\caption{Cassini curves with real foci, $z\in C(a, -a, b)$ with $a \in \mathbb{R}$.}
\label{CassiniCurves}
\end{figure}

\begin{theorem}[Brauer-Ostrowski \cite{Ostrowski1937,Brauer1947}] \label{BrauerOstrowski}
Given a matrix, $A \in \mathfrak{M}_n(\mathbb{C})$, its spectrum is contained in the set of its \textit{Brauer-Cassini ovals}, whose foci are the diagonal elements of $A$, 
\begin{align}
\sigma(A) \subset \cup_{i \neq j \in \{1, \dots, n\}} C(A_{ii}, A_{jj}; R_i(A) R_j(A)),
\end{align}
where 
\begin{align}
R_i(A) := \sum_{j \in \{1, \dots, n\} \setminus \{i\}} |a_{ij}|.
\end{align}
\end{theorem}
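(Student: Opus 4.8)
The plan is to adapt the classical eigenvector-component argument underlying Ger\v{s}gorin-type theorems, but to exploit the two largest components of an eigenvector simultaneously rather than only the largest. Let $\lambda \in \sigma(A)$ and fix an associated eigenvector $x \neq 0$. First I would order the components by magnitude and single out an index $p$ with $|x_p| = \max_k |x_k|$ together with an index $q \neq p$ with $|x_q| = \max_{k \neq p} |x_k|$; since $n \geq 2$ (the union is indexed over $i \neq j$), both indices exist. The strategy is to extract one inequality from row $p$ and another from row $q$ of the equation $Ax = \lambda x$, and then multiply them so that the product $R_p(A) R_q(A)$ emerges, matching the defining inequality of a Cassini oval.

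Concretely, the $p$-th row reads $(\lambda - a_{pp}) x_p = \sum_{j \neq p} a_{pj} x_j$. Taking absolute values and bounding every off-diagonal component by $|x_q|$, which dominates all components except $x_p$, gives
\begin{align}
|\lambda - a_{pp}|\, |x_p| \leq R_p(A)\, |x_q|.
\end{align}
Symmetrically, the $q$-th row combined with the cruder bound $|x_j| \leq |x_p|$ yields
\begin{align}
|\lambda - a_{qq}|\, |x_q| \leq R_q(A)\, |x_p|.
\end{align}
Multiplying these two inequalities produces $|\lambda - a_{pp}|\,|\lambda - a_{qq}|\,|x_p|\,|x_q| \leq R_p(A) R_q(A)\,|x_p|\,|x_q|$, and cancelling the common factor places $\lambda$ in the oval $C(a_{pp}, a_{qq}; R_p(A) R_q(A))$, which is contained in the asserted union over $i \neq j$.

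The main obstacle is precisely this cancellation step, which is legitimate only when $|x_p|\,|x_q| \neq 0$. Since $x \neq 0$ we always have $|x_p| > 0$, so the sole problematic case is $|x_q| = 0$, meaning $x$ has a single nonzero entry $x_p$. I would dispatch this degenerate case separately: the $p$-th row then forces $(\lambda - a_{pp}) x_p = 0$, hence $\lambda = a_{pp}$, so that $|\lambda - a_{pp}| = 0 \leq R_p(A) R_j(A)$ exhibits $\lambda \in C(a_{pp}, a_{jj}; R_p(A) R_j(A))$ for any $j \neq p$. Thus every eigenvalue lands in at least one Brauer-Cassini oval regardless of the support of its eigenvector, which completes the argument.
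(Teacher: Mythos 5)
Your proof is correct, and it is the standard argument for Brauer's theorem. For this statement the paper itself supplies no proof at all: it attributes the result to the cited literature \cite{Ostrowski1937,Brauer1947} and points the reader to Varga's textbook \cite{Varga2004}, whose proof is essentially the one you reconstructed. Your two-largest-components refinement of the Ger\v{s}gorin eigenvector argument --- bounding row $p$ by the second-largest component $|x_q|$ and row $q$ by the largest component $|x_p|$, then multiplying --- is exactly how the product of row sums $R_p(A)R_q(A)$ arises in the classical treatment, and you correctly isolated the only delicate point, namely that cancelling $|x_p|\,|x_q|$ requires $|x_q|>0$, handling the degenerate single-support eigenvector case separately (where $\lambda = a_{pp}$ trivially lies in every oval with focus $a_{pp}$). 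Note that this route is genuinely different from the mechanism the paper does sketch for the weaker Ger\v{s}gorin \cref{Gerschgorin}, which it derives as a corollary of Householder's inclusion result, \cref{generalHousenholderTheorem}; the Householder/resolvent argument does not directly yield the Cassini ovals, so your elementary eigenvector argument is the right tool here, at the cost of being specific to the finite-dimensional matrix setting. One cosmetic remark: the theorem statement writes $|a_{ij}|$ in lowercase inside $R_i(A)$ while using $A_{ii}$ for the foci; your proof implicitly (and correctly) treats these as the same matrix entries.
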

\begin{corollary} [Geršgorin disk \cite{gershgorin1931uber}] \label{Gerschgorin}
Given a matrix $A \in \mathfrak{M}_n(\mathbb{C})$, its spectrum is contained in the union of its Geršgorin disks, 
\begin{align}
\sigma(A) \subset \cup_{i \in \{1, \dots, n\}} C(A_{ii},A_{ii}; R_i(A)^2).
\end{align}.
\end{corollary}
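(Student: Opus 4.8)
The plan is to derive this directly from the Brauer-Ostrowski theorem (\cref{BrauerOstrowski}), which is precisely why it is stated as a corollary. The first move is purely definitional: a Brauer-Cassini oval whose two foci coincide collapses to a disk. Indeed, from \cref{defn:Cassini},
\begin{align}
C(A_{ii}, A_{ii}; R_i(A)^2) = \{z \in \mathbb{C} : ||z - A_{ii}||_2^2 \leq R_i(A)^2\} = \{z \in \mathbb{C} : ||z - A_{ii}||_2 \leq R_i(A)\},
\end{align}
so the right-hand side of the corollary is exactly the union of the closed disks of radius $R_i(A)$ centred at the diagonal entries $A_{ii}$, which is the classical Geršgorin set. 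This identification does all the conceptual work.

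With it in hand, I would fix an eigenvalue $\lambda \in \sigma(A)$ and invoke \cref{BrauerOstrowski} to obtain indices $i \neq j$ with $|\lambda - A_{ii}|\,|\lambda - A_{jj}| \leq R_i(A) R_j(A)$. I would then argue by contraposition. Suppose $\lambda$ lies in none of the Geršgorin disks, that is, $|\lambda - A_{kk}| > R_k(A)$ for every $k \in \{1, \dots, n\}$; in particular this holds for the two indices $i, j$ produced by Brauer-Ostrowski. Since $R_i(A), R_j(A) \geq 0$ and the strict inequalities force both factors $|\lambda - A_{ii}|, |\lambda - A_{jj}|$ to be strictly positive, multiplying the two inequalities yields $|\lambda - A_{ii}|\,|\lambda - A_{jj}| > R_i(A) R_j(A)$, contradicting the Brauer-Ostrowski bound. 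Hence $\lambda$ must satisfy $|\lambda - A_{kk}| \leq R_k(A)$ for at least one $k$, i.e.\ $\lambda$ lies in the $k$-th Geršgorin disk. As $\lambda$ was arbitrary, the containment follows.

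There is essentially no hard step here; the statement is genuinely a one-line corollary once the coincident-foci oval is recognized as a disk. The only point demanding a modicum of care is the legitimacy of multiplying the two strict inequalities, which requires that both left-hand factors be positive — automatic, since each strictly exceeds the nonnegative quantity $R_i(A)$ (respectively $R_j(A)$). I would also remark in passing that the conclusion admits a self-contained classical proof bypassing Brauer-Ostrowski entirely: given an eigenvector $v$ for $\lambda$, choosing the index $i$ that maximizes $|v_i|$ and reading off the $i$-th component of $Av = \lambda v$ gives $|\lambda - A_{ii}|\,|v_i| \leq \sum_{j \neq i}|A_{ij}|\,|v_j| \leq R_i(A)\,|v_i|$, whence $|\lambda - A_{ii}| \leq R_i(A)$ after dividing by $|v_i| > 0$. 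Since the corollary is positioned immediately downstream of \cref{BrauerOstrowski}, however, the short deduction above is the intended route.
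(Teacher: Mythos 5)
Your proposal is correct and takes essentially the same route as the paper: the paper justifies this corollary only through the remark immediately following it — that the union of Brauer-Cassini ovals is always contained in the union of the Geršgorin disks — and your contraposition step (multiplying the two strict disk-violating inequalities to contradict the oval bound from \cref{BrauerOstrowski}) is exactly that containment, combined with the correct observation that a coincident-foci Cassini oval $C(A_{ii},A_{ii};R_i(A)^2)$ collapses to the closed disk of radius $R_i(A)$ about $A_{ii}$. The classical eigenvector argument you note in passing is the standard self-contained alternative, but the short deduction from Brauer-Ostrowski is the intended one here.
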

The set of Geršgorin disks always contains the set of Brauer-Cassini ovals as a subset, so the Brauer-Ostrowski theorem is a stronger result than the Geršgorin disk theorem \cite[Thm. 2.3]{Varga2004}. Furthermore, the Geršgorin disk theorem is also a corollary of the Housenholder's inclusion result \cite{Michailidou2018,Varga2004} displayed in \cref{generalHousenholderTheorem}.
 
%consistency regarding thm vs Thm.

%\begin{defn}
%Given a matrix, $A \in \mathfrak{M}_n(\mathbb{C})$, I will define a \textit{Geršgorin disk} to be a disk whose center is an element on the main diagonal of $A$,
%\begin{align}
%D_i(A, \rho_i) := \{\lambda \in \mathbb{C}\,|\, |\lambda - A_{ii}| \leq \rho_i \} \\
%&:= \text{cl}(\mathscr{B}_{||\cdot||_2}(A_{ii}, \sum_{j \in \{1, \dots, n\} \setminus \{i\}} |A_{ij}|)),
%\end{align}
%where $i \in \{1, \dots, n\}$.
%\end{defn}
%\begin{theorem}[Geršgorin] \label{Gerschgorin}
%Given a matrix $A \in \mathfrak{M}_n(\mathbb{C})$, its spectrum is contained in the union of a set Geršgorin disks, 
%\begin{align}
%\sigma(A) \subseteq \cup_{i \in \{1, \dots, n\}} D_i(A,\rho_i).
%\end{align} 
%The following upper bounds may be placed on the radii of the Geršgorin disks,
%%I'm pretty sure we can just get rid of the whole domain \alpha \in [0,1] since it's extremized at the endpoints...
%\begin{align}
%\rho_i \leq \min \left\{
%\begin{array}{l} 
%\sum_{j \in \{1, \dots, n\} \setminus \{i\}} |a_{ij}|\\
%\sum_{j \in \{1, \dots, n\} \setminus \{i\}} |a_{ji}|\\
%\alpha_i^{1/q} \left(\sum_{j \in \{1, \dots, n\} \setminus \{i\}} |a_{ji}| \right)^{1/p},
%\end{array}
%\right\}
%\end{align}
%where $p, q$ are Holder conjugates satisfying $p \geq 1$ and $1/p + 1/q = 1$, and $\alpha_i$ is any set of positive reals satisfying
%$\sum_{i =1}^n 1/(1+\alpha_i) \leq 1$.
%\end{theorem}

The continuity of eigenvalues, as presented in \cref{continuityEvals}, can be quite useful combined with \cref{BrauerOstrowski}. In particular, given $A \in \mathfrak{M}_n(\mathbb{C})$, let $\text{diag}(A)$ denote the matrix which is diagonal in the canonical basis whose diagonal entries are the same as the diagonal entries of $A$, and consider the pencil $B(t) = \text{diag}(A) + t (A - \text{diag}(A))$ with $t \in [0,1]$. In particular, $B(1) = A$, and $B$ is continuous. Suppose the union of the Brauer-Cassini ovals associated to $A$ has multiple connected components. Then, the sum of the algebraic multiplicities of the eigenvalues of $B(t)$ which are contained in each connected component is a constant independent of $t$.

\section{Non-Hermitian Qubit} \label{PTqubit} %Figure out spacing below "Sections" and "Chapters"...
%Use \mathfrak{i} for imaginary unit.
%Use index \nu for i \in \{0, ... 3\}

Sections~\ref{PTSymmetryIntroSection},\ref{PseudoHermSection},\ref{quasiHermSection}, and \ref{ExceptionalPointsIntro} introduced a variety of results which are applicable to generic classes of possibly non-Hermitian operators. This section applies these results to a simple model of a non-Hermitian qubit. 

%Emphasize in previous sections that they apply to general case, this is the specific example!

\subsection{Algebra of $2 \times 2$ Matrices}
%
%The mathematical setting of a qubit is the Hilbert space $\mathbb{C}^2$. Linear maps on $\mathbb{C}^2$ are represented by $2 \times 2$ matrices, which is the setting of this section.

This section summarizes algebraic structures associated to $2 \times 2$ matrices, which are the mathematical tools used to investigate qubits. %One benefit of abstracting algebraic structures is the ability to find different representations of the same mathematics. 
I would recommend the reader treat this section as a reference, and continue to the following subsection.

\begin{defn}
The \textit{algebra of} $2 \times 2$ \textit{matrices}, $\mathfrak{M}_2(\mathbb{F})$, over a field $\mathbb{F} \in \{\mathbb{R}, \mathbb{C}\}$ is 
\begin{align}
\mathfrak{M}_2(\mathbb{F}) = \left\{\begin{pmatrix}
a & b \\
c & d
\end{pmatrix} \,|\, a,b,c,d \in \mathbb{F} \right\}.
\end{align}
\end{defn}

Central to the analysis of qubits are the $2 \times 2$ \textit{Pauli matrices}, 
\begin{align}
\sigma_0 = \mathbb{1} &\quad&
\sigma_1 = \sigma_x = \begin{pmatrix}
0 & 1 \\
1 & 0
\end{pmatrix} &\quad& \sigma_2 = \sigma_y = \begin{pmatrix}
0 & -\mathfrak{i} \\
\mathfrak{i} & 0
\end{pmatrix} &\quad& \sigma_3 = \sigma_z = \begin{pmatrix}
1 & 0 \\
0 & -1
\end{pmatrix}.
\end{align}
The Pauli matrices are Hermitian, involutions, and unitary. Furthermore, they form a basis for $\mathfrak{M}_2(\mathbb{C})$. 
The \textit{Pauli vector}, $\vec{\sigma}$, is the 3-vector of Pauli matrices, 
\begin{align}
\vec{\sigma} = \begin{pmatrix}
\sigma_1 \\
\sigma_2 \\
\sigma_3
\end{pmatrix}.
\end{align}

The following points summarize some useful facts about the algebra of $2 \times 2$ matrices:
\begin{itemize}
\item 
$\mathfrak{M}_2(\mathbb{F})$ is a \textit{unital associative algebra} over $\mathbb{F}$. This means it is a vector space over $\mathbb{F}$; it has a bilinear product-the matrix product, which distributes over addition; and it has a two-sided identity-the identity matrix. \\
\item
$\mathfrak{M}_2(\mathbb{C})$ is a \textit{Hilbert space}\footnote{In addition, $\mathfrak{M}_2(\mathbb{C})$ is also a \textit{left Hilbert algebra} and a $C^*$\textit{-algebra}.} with the \textit{Frobenius inner product},
\begin{align}
\braket{A|B}_{HS} = \text{Tr} (A^\dag B) &\quad& A,B \in \mathfrak{M}_2(\mathbb{C}).
\end{align}
The Pauli matrices, normalized by multiplication by $1/\sqrt{2}$, are an orthonormal basis for $\mathfrak{M}_2(\mathbb{C})$. \\
\item $\mathfrak{M}_2(\mathbb{F})$ is a \textit{Clifford algebra} for $\mathbb{F}^3$ \cite{Garling2011}, which means there exists a Clifford mapping, $\iota:z \to z \cdot \vec{\sigma}$, satisfying the following properties:
\begin{enumerate}
\item $\mathbb{1} \notin \iota(\mathbb{F}^3)$ \\
\item $\iota(z) \iota(w) + \iota(w) \iota(z) = \braket{z^*|w} \mathbb{1}$ \\
\item $\mathfrak{M}_2(\mathbb{F})$ is the algebra generated by\footnote{See \cref{generated-Algebra-defn}.} $\iota(\mathbb{F}^3)$.
\end{enumerate}
\end{itemize}

\begin{defn}
The \textit{complex special linear Lie algebra}, $\mathfrak{sl}(2, \mathbb{C})$, is the subset of elements in $\mathfrak{M}_2(\mathbb{C})$ which have zero trace. The real Lie subalgebra of $\mathfrak{sl}(2, \mathbb{C})$ consisting of $2 \times 2$ skew-Hermitian matrices is referred to as $\mathfrak{su}(2)$.
\end{defn}
The imaginary unit times the Pauli matrices form a set of generators for $\mathfrak{su}(2)$. Their Lie bracket, which in this case is the commutator, is
\begin{align}
[\frac{\mathfrak{i} \sigma_i}{2}, \frac{\mathfrak{i} \sigma_j}{2}]_- = -\sum_{k = 1}^3 \epsilon_{ijk} \frac{\mathfrak{i} \sigma_k}{2},
\end{align}
where $\epsilon_{ijk}$ is the \textit{completely antisymmetric tensor},
\begin{align}
\epsilon_{ijk} &= \begin{cases}
1 & \text{if }  (i,j,k) \in \{(1,2,3), (3,1,2), (2,3,1) \} \\
-1 & \text{if } (i,j,k) \in \{(1,3,2), (2,1,3), (3,2,1) \} \\
0 & \text{otherwise}
\end{cases}. \label{antisymmetric-tensor}
\end{align}

Since the Pauli matrices are generators for both a Lie algebra and a Clifford algebra, their product satisfies the often-applied identity
\begin{align}
(\vec{a} \cdot \vec{\sigma}) (\vec{b} \cdot \vec{\sigma}) = (\vec{a} \cdot \vec{b}) \mathbb{1} + \mathfrak{i} (\vec{a} \times \vec{b}) \cdot \vec{\sigma}. \label{sl2Cproduct}
\end{align} 

%Lower vs upper case trace.

\subsection{Pseudo-Hermiticity of $2 \times 2$ Matrices}

%{\color{red}Maybe add note that this is where I introduce intertwining operator? Maybe specify somewhere that a positive-definite metric is an intertwining operator?}

The following theorem characterizes the set of $2 \times 2$ pseudo-Hermitian matrices. The characterization of $2 \times 2$ quasi-Hermitian matrices given by \cref{quasiHerm2x2} of this theorem was given in \cite{mostafazadeh2006explicit}, although the form of the metric operator presented here is simpler. An alternative discussion of the most general way in which antilinear symmetry can be imposed on a $2 \times 2$ matrix is given in \cite{wang20132}.
\begin{theorem} \label{2x2PseudoHerm-Theorem}
\leavevmode 
\begin{enumerate}
\item
A $2 \times 2$ matrix, $A \in \mathfrak{M}_2(\mathbb{C})$, is pseudo-Hermitian if and only if it can be expressed in the form
\begin{align}
A = \frac{{\normalfont\text{tr}}(A)}{2} \mathbb{1} + (\vec{\alpha} + \mathfrak{i} \vec{\beta}) \cdot \vec{\sigma}, \label{2x2PseudoHermDecomp}
\end{align}
where the real 3-vectors $\vec{\alpha}, \vec{\beta} \in \mathbb{R}^3$ are orthogonal, so $\vec{\alpha} \cdot \vec{\beta} = 0$, and ${\normalfont \text{tr}}(A) \in \mathbb{R}$. \\
\item The most general intertwining operator associated to a $2 \times 2$ pseudo-Hermitian matrix with the decomposition \cref{2x2PseudoHermDecomp} and either $\vec{\alpha} \neq 0$ or $\vec{\beta} \neq 0$ is\footnote{As a sanity check, note $\mathbb{1}$ is only an intertwining operator in the Hermitian case $\vec{\beta} = 0$.}
\begin{align}
\eta(\zeta, \xi) = \zeta \vec{\alpha} \cdot \vec{\sigma} + \xi (\vec{\alpha} \cdot \vec{\alpha} )\mathbb{1} + \xi (\vec{\beta} \times \vec{\alpha}) \cdot \vec{\sigma} \label{2x2Intertwiner},
\end{align}
where $\zeta, \xi \in \mathbb{R}$.\\
\item \label{quasiHerm2x2}
The intertwiner $\eta(\zeta, \xi)$ is positive-definite if and only if 
\begin{align}
(\vec{\alpha} \cdot \vec{\alpha} - \vec{\beta} \cdot \vec{\beta}) \xi^2 > \zeta^2 (\vec{\alpha} \cdot \vec{\alpha}),
\end{align}
so a matrix of the form \cref{2x2PseudoHermDecomp} is quasi-Hermitian if and only if $\vec{\alpha}\cdot \vec{\alpha} > \vec{\beta} \cdot \vec{\beta}$ or $\vec{\beta} = 0$.\\
\item If $\vec{\alpha} \cdot \vec{\alpha} = \vec{\beta} \cdot \vec{\beta} > 0$, then $A$ defined in \cref{2x2PseudoHermDecomp} has a single eigenvalue, ${\normalfont \text{tr}}(A)/2$, which has geometric multiplicity equal to one.%\\
%\item A diagonalizable pseudo-Hermitian $2 \times 2$ matrix, given by \cref{2x2PseudoHermDecomp}, has the involutive symmetry
%\begin{align}
%\mathcal{C} = \frac{(\vec{\alpha} + \mathfrak{i} \vec{\beta}) \cdot \vec{\sigma}}{\sqrt{\vec{\alpha}\cdot\vec{\alpha} - \vec{\beta}\cdot \vec{\beta}},
%\end{align}
%and $(\vec{\alpha} \cdot \vec{\sigma}) \mathcal{C}$ is a positive metric operator in the quasi-Hermitian case.
\end{enumerate}
\end{theorem}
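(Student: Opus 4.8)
The plan is to work throughout in the Pauli basis, writing $A = a_0\mathbb{1} + \vec{a}\cdot\vec{\sigma}$ with $a_0 = \text{tr}(A)/2 \in \mathbb{C}$ and $\vec{a} = \vec{\alpha} + \mathfrak{i}\vec{\beta}$, $\vec{\alpha},\vec{\beta}\in\mathbb{R}^3$, and to reduce every statement to vector identities via \cref{sl2Cproduct}. For the first part I would note that the characteristic polynomial is $\lambda^2 - 2a_0\lambda + (a_0^2 - \vec{a}\cdot\vec{a})$, so that $\text{tr}(A)=2a_0$ and $\det(A)=a_0^2-\vec{a}\cdot\vec{a}$ are its coefficients. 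Since $\vec{a}\cdot\vec{a} = \vec{\alpha}\cdot\vec{\alpha} - \vec{\beta}\cdot\vec{\beta} + 2\mathfrak{i}\,\vec{\alpha}\cdot\vec{\beta}$, \cref{pseudoHermTraceDeterminant} forces both coefficients to be real, i.e. $a_0\in\mathbb{R}$ and $\vec{\alpha}\cdot\vec{\beta}=0$. For the converse I would invoke criterion (\ref{Item:ComplexConjugateEvals}) of \cref{pseudoEquivThm}: for a $2\times 2$ matrix, real characteristic-polynomial coefficients already force the spectrum and its Jordan data to be conjugation-symmetric (two real roots, a double real root, or a conjugate pair), so reality of $\text{tr}(A)$ and $\det(A)$ is equivalent to pseudo-Hermiticity.

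For the second part I would write a Hermitian candidate intertwiner as $\eta = \eta_0\mathbb{1} + \vec{\eta}\cdot\vec{\sigma}$ with $\eta_0\in\mathbb{R}$, $\vec{\eta}\in\mathbb{R}^3$, and expand $\eta A - A^\dag\eta$ using $A^\dag = a_0\mathbb{1} + (\vec{\alpha}-\mathfrak{i}\vec{\beta})\cdot\vec{\sigma}$ together with \cref{sl2Cproduct}. The scalar part collapses to $\vec{\eta}\cdot\vec{\beta}=0$ and the vector part to $\vec{\eta}\times\vec{\alpha} = -\eta_0\vec{\beta}$. Assuming the generic case $\vec{\alpha}\neq 0$, I would solve this linear system in the orthogonal frame $\{\vec{\alpha},\vec{\beta},\vec{\beta}\times\vec{\alpha}\}$: the first constraint kills the $\vec{\beta}$-component of $\vec{\eta}$, and the second ties $\eta_0$ to the $(\vec{\beta}\times\vec{\alpha})$-component via $(\vec{\beta}\times\vec{\alpha})\times\vec{\alpha} = -(\vec{\alpha}\cdot\vec{\alpha})\vec{\beta}$, leaving exactly the two free real parameters $\zeta$ (the $\vec{\alpha}$-component) and $\xi$ (the $\vec{\beta}\times\vec{\alpha}$-component). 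This reproduces the stated $\eta(\zeta,\xi)$ and shows it is the complete solution family.

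For the third part I would use that a $2\times 2$ Hermitian matrix $\eta_0\mathbb{1}+\vec{\eta}\cdot\vec{\sigma}$ is positive-definite iff $\text{tr}(\eta)=2\eta_0>0$ and $\det(\eta)=\eta_0^2-\vec{\eta}\cdot\vec{\eta}>0$. Substituting $\eta_0 = \xi(\vec{\alpha}\cdot\vec{\alpha})$ and computing $\vec{\eta}\cdot\vec{\eta} = (\vec{\alpha}\cdot\vec{\alpha})(\zeta^2 + \xi^2\,\vec{\beta}\cdot\vec{\beta})$ --- using $\vec{\alpha}\cdot(\vec{\beta}\times\vec{\alpha})=0$ and $|\vec{\beta}\times\vec{\alpha}|^2 = (\vec{\alpha}\cdot\vec{\alpha})(\vec{\beta}\cdot\vec{\beta})$ from orthogonality --- turns the determinant condition into the stated inequality. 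The quasi-Hermiticity dichotomy then follows by asking when a positive $\eta$ exists at all: taking $\zeta=0$ the condition reduces to $\vec{\alpha}\cdot\vec{\alpha}>\vec{\beta}\cdot\vec{\beta}$, while the purely Hermitian case $\vec{\beta}=0$ is covered directly by $\eta=\mathbb{1}$.

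The fourth part is the quickest: the hypothesis $\vec{\alpha}\cdot\vec{\alpha}=\vec{\beta}\cdot\vec{\beta}>0$ together with $\vec{\alpha}\cdot\vec{\beta}=0$ gives $\vec{a}\cdot\vec{a}=0$, so the eigenvalues $a_0\pm\sqrt{\vec{a}\cdot\vec{a}}$ both equal $\text{tr}(A)/2$; and since $\vec{a}\neq 0$, the matrix $A - \tfrac{\text{tr}(A)}{2}\mathbb{1}=\vec{a}\cdot\vec{\sigma}$ is a nonzero nilpotent (its square is $(\vec{a}\cdot\vec{a})\mathbb{1}=0$), hence of rank one, forcing geometric multiplicity one. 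The main obstacle I anticipate is the completeness claim in the second part: one must verify the two constraints are neither over- nor under-determined and carefully treat the degenerate configurations $\vec{\alpha}=0$ and $\vec{\beta}=0$, where the frame $\{\vec{\alpha},\vec{\beta},\vec{\beta}\times\vec{\alpha}\}$ collapses and the parametrization must be checked directly.
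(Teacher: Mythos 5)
Your proposal is correct in substance and, on two of the four items, goes further than the paper's own proof. Parts 1 and 4 essentially coincide with the paper: the forward direction of part 1 is the paper's argument verbatim (reality of the characteristic-polynomial coefficients via \cref{pseudoHermTraceDeterminant} forces ${\normalfont \text{tr}}(A)\in\mathbb{R}$ and $\vec{\alpha}\cdot\vec{\beta}=0$), and your nilpotency argument for part 4 is a slight sharpening of the paper's rank--nullity remark. Your converse for part 1 via criterion (\ref{Item:ComplexConjugateEvals}) of \cref{pseudoEquivThm} is a genuinely different (and clean) route; the paper instead closes the loop by exhibiting $\eta(\zeta,\xi)$ as an intertwiner. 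The larger difference is part 2: the paper only \emph{verifies} the intertwining relation for $\eta(\zeta,\xi)$ (via \cref{sl2Cproduct}, the generative procedure of \cref{generative}, or the vector triple product) and never proves the completeness implicit in ``most general,'' whereas your reduction of $\eta A=A^\dag\eta$ to the pair $\vec{\eta}\cdot\vec{\beta}=0$, $\vec{\eta}\times\vec{\alpha}=-\eta_0\vec{\beta}$, solved in the frame $\{\vec{\alpha},\vec{\beta},\vec{\beta}\times\vec{\alpha}\}$, actually establishes it (for $\vec{\alpha}\neq 0$). Similarly, for part 3 the paper only offers the eigenvalue formula \cref{qubitSpectrum} as an ``alternative proof,'' while your trace/determinant criterion is a direct derivation.

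Two caveats. First, carry your part-3 computation to the end: with $\eta_0=\xi(\vec{\alpha}\cdot\vec{\alpha})$ and $\vec{\eta}\cdot\vec{\eta}=(\vec{\alpha}\cdot\vec{\alpha})(\zeta^2+\xi^2\,\vec{\beta}\cdot\vec{\beta})$, the determinant condition reads $(\vec{\alpha}\cdot\vec{\alpha}-\vec{\beta}\cdot\vec{\beta})\xi^2>\zeta^2$, \emph{not} the displayed inequality with $\zeta^2(\vec{\alpha}\cdot\vec{\alpha})$ on the right; the theorem's version is dimensionally inhomogeneous in $\vec{\alpha}$ and fails already in the Hermitian example $A=2\sigma_x$ with $(\zeta,\xi)=(3/2,1)$, where $\eta$ has eigenvalues $7$ and $1$ yet the stated inequality reads $4>9$. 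So your intermediate formulas are right, the ``stated inequality'' you claimed to recover is a typo in the paper, and the qualitative quasi-Hermiticity dichotomy is unaffected (take $\zeta=0$). Second, the degenerate case $\vec{\alpha}=0$, $\vec{\beta}\neq 0$ that you flagged but deferred is not merely a frame collapse to patch: your own linear system there gives $\eta_0=0$ with $\vec{\eta}\perp\vec{\beta}$ arbitrary (e.g.\ $\sigma_x$ intertwines $\mathfrak{i}\sigma_z$), a two-parameter family on which \cref{2x2Intertwiner} vanishes identically; so part 2 as stated is false in that corner, and its hypothesis should be read as $\vec{\alpha}\neq 0$. Making both corrections explicit would render your write-up strictly stronger than the paper's proof.
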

\begin{proof}
\leavevmode 
\begin{enumerate}
\item 
First, I show that every $2 \times 2$ pseudo-Hermitian matrix, $A \in \mathfrak{M}_2(\mathbb{C})$, admits a decomposition of the form \cref{2x2PseudoHermDecomp}. Since the set of Pauli matrices is an orthogonal basis of $\mathfrak{M}_2(\mathbb{C})$, the matrix $A$ admits a decomposition of the form
\begin{align}
A = \sum_{\nu = 0}^3 a_\nu \sigma_\nu,
\end{align}
where $a_\nu \in \mathbb{C}$. Lemma~\ref{pseudoHermTraceDeterminant} shows that the coefficients of the characteristic polynomial are real-valued. Defining $\vec{\alpha}_j := \text{Re}(a_j)$ and $\vec{\beta}_j := \text{Im}(a_j)$ for $j \in \{1, 2, 3\}$, the characteristic polynomial is
\begin{align}
\text{det}\left(\lambda \mathbb{1} - A\right) = \left(\lambda - \frac{\text{tr}(A)}{2}\right)^2 + \vec{\beta}\cdot\vec{\beta} - \vec{\alpha} \cdot \vec{\alpha} - 2 \mathfrak{i} \vec{\alpha} \cdot \vec{\beta},
\end{align}
so assuming pseudo-Hermiticity forces $\vec{\alpha} \cdot \vec{\beta} = 0$ and $\text{tr}(A) \in \mathbb{R}$. The reverse direction of the first claim follows by proving $\eta(\zeta,\xi)$ is an intertwining operator. 

\item The intertwining relation for $\xi = 0$ follows from basic facts about products in $\mathfrak{sl}(2,\mathbb{C})$, namely the identity of \cref{sl2Cproduct}.
Verifying the intertwining relation for $\xi \neq 0$ can be done either by applying the generative procedure of \cref{generative} to the $\xi = 0$ case, or by using the vector triple product,
\begin{align}
\vec{a} \times (\vec{b} \times \vec{c}\,) = (\vec{c} \cdot \vec{a}) \vec{b}  - (\vec{a} \cdot \vec{b}\,) \vec{c}.
\end{align}

\item An alternative proof that the matrix $A$ of \cref{2x2PseudoHermDecomp} is quasi-Hermitian when $\vec{\alpha}\cdot \vec{\alpha} > \vec{\beta} \cdot \vec{\beta}$ or $\vec{\beta} = 0$ can be seen from its eigenvalues,
\begin{align}
\sigma(A) = \left\{\frac{\text{tr}(A)}{2} \pm \sqrt{\vec{\alpha} \cdot \vec{\alpha} - \vec{\beta} \cdot \vec{\beta}}\right\}. \label{qubitSpectrum}
\end{align}

\item That $\text{tr}(A)/2$ is the unique eigenvalue of $A$ when $\vec{\alpha}\cdot\vec{\alpha} = \vec{\beta} \cdot \vec{\beta}$ follows from \cref{qubitSpectrum}. Since $\text{tr}(A)/2 - A$ is a nonzero matrix when $\vec{\alpha} \neq 0$, the rank-nullity theorem implies the eigenspace corresponding to the unique eigenvalue has dimension of at most one.
\end{enumerate}
\end{proof}

%Thirdly, they are the generators for the Lie algebra $\mathfrak{u}(2)$ in its fundamental representation. This Lie algebra contains a Lie subalgebra, $\mathfrak{su}(2)$, which is generated by the elements of the Pauli vector. Lie algebras generate Lie groups through the exponential map. The adjoint action of the Lie group $SU(2)$ on the Lie algebra $\mathfrak{su}(2)$ is given by

A pseudo-Hermitian $2 \times 2$ matrix with a choice of $\vec{\alpha}$ and $\vec{\beta}$ is unitarily similiar to one defined by a rotation acting on both $\vec{\alpha}$ and $\vec{\beta}$. In particular, this unitary similarity transform can be computed using the Rodrigues rotation formula for rotations about unit vectors. Assuming $\vec{\gamma}$ is a unit vector, $\vec{\gamma} \cdot \vec{\gamma} = 1$, then 
\begin{align}
R_{\vec{\gamma}}(\theta) :&= e^{-\mathfrak{i} \theta \vec{\gamma} \cdot \vec{\sigma}/2} \\
R_{\vec{\gamma}}(-\theta)
(\vec{\alpha} \cdot \vec{\sigma}) R_{\vec{\gamma}}(\theta) &= \cos(\theta)(\vec{\alpha} \cdot \vec{\sigma}) + \sin(\theta) (\vec{\alpha} \times \vec{\gamma})\cdot \vec{\sigma} + (\vec{\alpha} \cdot \vec{\gamma}) (\vec{\gamma} \cdot \vec{\sigma}) (1 - \cos(\theta)).
\end{align}
Thus, without loss of generality, $\alpha$ can be taken along the $x$ axis and $\beta$ can be taken along the $z$ axis. This yields a $2 \times 2$ matrix dependent on two real parameters, $\omega,t \in \mathbb{R}$,
\begin{align}
H(\omega, t) = \begin{pmatrix}
\mathfrak{i} \omega & t \\
t & -\mathfrak{i} \omega
\end{pmatrix}, \label{qubitHam}
\end{align}
which I will refer to as the qubit Hamiltonian.
To simplify subsequent calculations, I assume $t \neq 0$.
A physical system whose time evolution is modelled by a Schr{\"o}dinger equation with the non-Hermitian Hamiltonian given by \cref{qubitHam} can be implemented in coupled optical waveguides \cite{ElGanainy2007,Guo2009}. In these systems, $\omega$ represents a propagation constant corresponding to gain or loss and $t$ represents a coupling constant.

\subsection{Symmetries and Eigensystem}
The qubit Hamiltonian, $H$ of \cref{qubitHam}, is $\mathcal{PT}$-symmetric \cite{Mandilara2002}, where in this example
\begin{align}
\mathcal{P} &= \begin{pmatrix}
0 & 1 \\
1 & 0
\end{pmatrix} \\
\mathcal{T} \begin{pmatrix}
\alpha \\
\beta
\end{pmatrix} &= \begin{pmatrix}
\alpha^* \\
\beta^*
\end{pmatrix}.
\end{align}
The qubit Hamiltonain also has a \textit{Chiral symmetry}, or more explicitly, 
\begin{align}
\sigma_y H = - H \sigma_y.
\end{align}
Due to chiral and $\mathcal{PT}$-symmetry, $H$ also has an antilinear antisymmetry,
\begin{align}
\sigma_z \mathcal{T} H = - H \sigma_z \mathcal{T}.
\end{align}
Consequently, the spectrum of $H$ must be invariant under reflections about either the real or imaginary axis. Since the spectrum of $H$ contains at most two eigenvalues, the eigenvalues must  both reside in one of two axes, the real axis or the imaginary axis.

The eigenvalues of $H$ are $\epsilon_{\pm}$ \cite{Mandilara2002}, plotted in \cref{qubitEvalsFig}, where 
\begin{align}
\epsilon_{\pm}(\omega, t) = \pm \sqrt{t^2 - \omega^2},
\end{align}
which are real-valued\footnote{The reality of the spectrum for $\omega^2 < t^2$ is also a consequence of \cref{CalicetiTheorem}.} if and only if $\omega^2 \leq t^2$. $H$ is diagonalizable with real spectrum if and only if $\omega^2 < t^2$. By \cref{unbrokenRealSpectrum}, $\mathcal{PT}$-symmetry is unbroken if and only if the spectrum is real, which can be seen from the eigenspaces corresponding to $\epsilon_{\pm}$,
\begin{align}
\ker(\epsilon_{\pm} \gls{id} - H) = \text{span} \left\{\begin{pmatrix}
\mathfrak{i} \omega \pm \sqrt{t^2-\omega^2} \\
t
\end{pmatrix} \right\}. \label{qubitEigenspaces}
\end{align}
The eigenspaces are $\mathcal{PT}$-invariant for $\omega^2 \leq t^2$. As there is only a one-dimensional set of eigenvectors of the defective matrix $H(t,t)$, the points satisfying $\omega^2 = t^2$ are second-order exceptional points.

\begin{figure}[htp!]
\centering
\includegraphics[width = \textwidth]{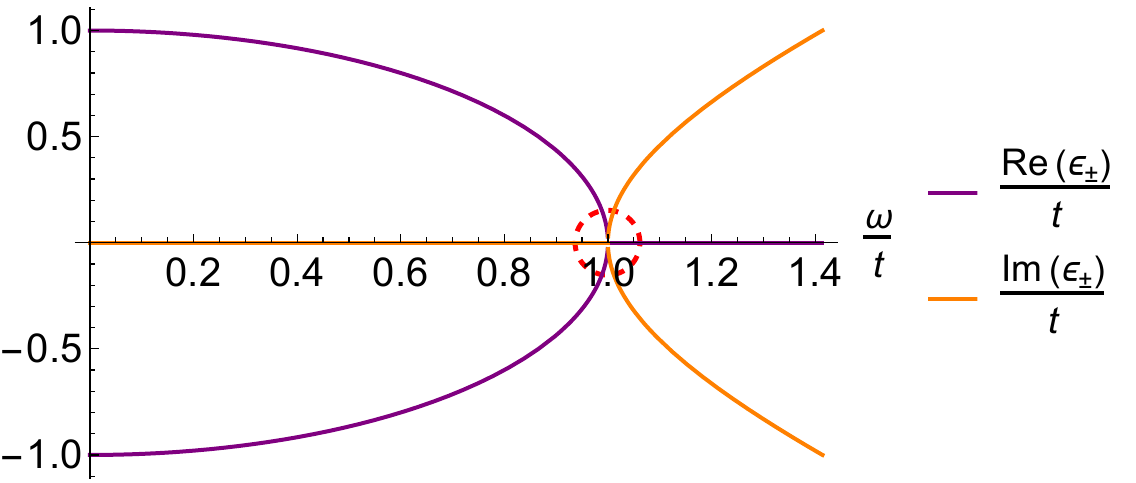}
\caption{A plot of the eigenvalues, $\epsilon_{\pm}$, of a $\mathcal{PT}$-symmetric qubit Hamiltonian, $H$, defined by \cref{qubitHam}. Encircled in red at $\omega = t$ is a second-order exceptional point. $\mathcal{PT}$-symmetry is unbroken when $\omega^2 \leq t^2$ and broken otherwise.} \label{qubitEvalsFig}
\end{figure}

By \cref{pseudoEquivThm}, $H$ must be similar to a real-valued matrix. One such similarity transform to a real matrix is given by a Clifford gate \cite{Bolt1961}, specifically the matrix representing a rotation around the $x$ axis in the Bloch sphere,
\begin{align}
R_{\hat{x}}\left(-\frac{\pi}{2}\right) = \begin{pmatrix}
1 & -\mathfrak{i}\\
1 & \mathfrak{i}
\end{pmatrix},
\end{align} %(ah crap now I have two definitions of $H$, let's make one of them mathscr?). 
%Consider citing Nielsen/Chuang
we have
\begin{align}
R_{\hat{x}}\left(-\frac{\pi}{2}\right) H R_{\hat{x}}\left(-\frac{\pi}{2}\right)^{-1} = \begin{pmatrix}
0 & t + \gamma \\
t - \gamma & 0
\end{pmatrix},
\end{align}
which is the $2 \times 2$ case of the Hamiltonian of the Hatano-Nelson model \cite{Hatano1996}.

Every intertwining operator associated to $H$ is $\mathcal{PT}$-symmetric, which can be seen by the general formula of \cref{2x2Intertwiner}. 

\subsection{Bloch Sphere}

Our objective for this subsection is to graphically represent the eigenvectors of our qubit Hamiltonian, $H$ of \cref{qubitHam}. The resulting \cref{evecsOnBlochSphere} will display antilinear (anti)symmetry breaking, the non-orthogonality of the eigenvectors, and the exceptional point. 

An insightful depiction of the eigenspaces of $H$ follows from their \textit{Bloch sphere} representation \cite{operatorSchmidt,Schumacher2010}, a map which I will denote by $\mathcal{S}$. The Bloch sphere provides a bijective correspondence between one-dimensional vector subspaces of $\mathbb{C}^2$ and points on the unit sphere $\mathbb{S}^2 \subsetneq \mathbb{R}^3$. 
\begin{defn}[Bloch sphere]
Let $v \in \mathbb{C}^2 \setminus \{0\}$ be a nonzero vector. 
The \textit{Bloch sphere representation} of the vector space $V = \gls{span}\{v\}$ whose basis is $\{v\}$ is the point $\mathcal{S}(V) \in \mathbb{R}^3$ whose components are
\begin{align}
\mathcal{S}(V)_i := \frac{\braket{v|\sigma_i v}}{\braket{v|v}} \label{BlochInnProd}
\end{align}.
\end{defn}

\begin{lemma}
The Bloch sphere representation is a one-to-one map\footnote{A stronger statement holds, namely that $\mathcal{S}$ defines a homeomorphism from $\mathbb{S}^2$ to the complex projective line $\mathbb{CP}^1$.} whose image is the sphere $\mathbb{S}^2$.
\end{lemma}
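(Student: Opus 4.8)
The plan is to reduce everything to a single identity expressing the rank-one orthogonal projection onto $V$ in terms of $\mathcal{S}(V)$. First I would check that $\mathcal{S}$ is well-defined on one-dimensional subspaces rather than on representative vectors: replacing $v$ by $cv$ for $c \in \mathbb{C}\setminus\{0\}$ scales both $\braket{v|\sigma_i v}$ and $\braket{v|v}$ by $|c|^2$, so the ratio in \cref{BlochInnProd} is unchanged. Hence I may fix a normalized representative with $\braket{v|v}=1$ and write $P_V = \ket{v}\bra{v}$ for the orthogonal projection onto $V$.

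The key step is to expand $P_V$ in the orthogonal Pauli basis of $\mathfrak{M}_2(\mathbb{C})$. Using $\text{Tr}(\sigma_\mu \sigma_\nu) = 2\delta_{\mu\nu}$, every $A \in \mathfrak{M}_2(\mathbb{C})$ satisfies $A = \tfrac{1}{2}\sum_{\nu=0}^3 \text{Tr}(\sigma_\nu A)\,\sigma_\nu$. Applying this to $A = P_V$, using $\text{Tr}(P_V)=1$ together with $\text{Tr}(\sigma_i P_V) = \braket{v|\sigma_i v} = \mathcal{S}(V)_i$, yields the identity
\begin{align}
P_V = \tfrac{1}{2}\left(\mathbb{1} + \mathcal{S}(V)\cdot\vec{\sigma}\right). \label{blochProjIdentity}
\end{align}
Membership of the image in $\mathbb{S}^2$ is then immediate: since $P_V$ is a projection, $\text{Tr}(P_V^2)=\text{Tr}(P_V)=1$, while squaring the right-hand side of \cref{blochProjIdentity} using the Pauli product identity \cref{sl2Cproduct} (so that $(\mathcal{S}(V)\cdot\vec{\sigma})^2 = (\mathcal{S}(V)\cdot\mathcal{S}(V))\,\mathbb{1}$) and taking the trace gives $\text{Tr}(P_V^2) = \tfrac{1}{2}(1 + \mathcal{S}(V)\cdot\mathcal{S}(V))$. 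Equating the two expressions forces $\mathcal{S}(V)\cdot\mathcal{S}(V)=1$.

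Injectivity and surjectivity both follow from \cref{blochProjIdentity}. For injectivity, the identity shows $\mathcal{S}(V)$ determines $P_V$, and $V = \mathcal{R}(P_V)$ is recovered as its range, so distinct subspaces have distinct images. For surjectivity, given any $\vec{n}\in\mathbb{S}^2$ I would verify that $Q := \tfrac{1}{2}(\mathbb{1} + \vec{n}\cdot\vec{\sigma})$ is a rank-one orthogonal projection: it is Hermitian because $\vec{n}$ is real, it has unit trace, and $Q^2 = Q$ follows from $(\vec{n}\cdot\vec{\sigma})^2 = (\vec{n}\cdot\vec{n})\,\mathbb{1} = \mathbb{1}$ via \cref{sl2Cproduct}; being a Hermitian idempotent, its rank equals its trace, namely one. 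Taking $V = \mathcal{R}(Q)$ and comparing with \cref{blochProjIdentity} gives $\mathcal{S}(V) = \vec{n}$.

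The computations here are all routine; the only points requiring care are the normalization bookkeeping (the trace conventions $\text{Tr}(\sigma_\mu\sigma_\nu)=2\delta_{\mu\nu}$ and $\text{Tr}(P_V)=1$) and the observation that $\mathcal{S}$ descends to a well-defined function of the subspace. The conceptual crux, and the one genuinely load-bearing idea, is recognizing \cref{blochProjIdentity}, which converts the geometric claim about $\mathcal{S}$ into the elementary algebraic fact that rank-one orthogonal projections on $\mathbb{C}^2$ are in bijection with unit vectors in $\mathbb{R}^3$.
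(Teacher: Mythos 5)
Your proposal is correct and its core is the same as the paper's proof: both hinge on expanding the rank-one orthogonal projection in the orthonormal Pauli basis to obtain $\mathscr{P}_V = \tfrac{1}{2}\left(\mathbb{1} + \mathcal{S}(V)\cdot\vec{\sigma}\right)$, derive $\mathcal{S}(V)\cdot\mathcal{S}(V) = 1$ from $\operatorname{Tr}(\mathscr{P}_V) = \operatorname{Tr}(\mathscr{P}_V^2) = 1$, and get injectivity from the fact that this identity lets $\mathcal{S}(V)$ determine $\mathscr{P}_V$ and hence $V$. Where you go beyond the paper is in the final step: the paper's proof establishes only that the image is \emph{contained} in $\mathbb{S}^2$ and that $\mathcal{S}$ is injective, leaving surjectivity unaddressed even though the lemma asserts the image equals $\mathbb{S}^2$. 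Your explicit construction --- given $\vec{n} \in \mathbb{S}^2$, verifying that $Q = \tfrac{1}{2}(\mathbb{1} + \vec{n}\cdot\vec{\sigma})$ is Hermitian, idempotent via $(\vec{n}\cdot\vec{\sigma})^2 = \mathbb{1}$, and of unit trace (hence rank one), then taking $V = \mathcal{R}(Q)$ --- supplies exactly the missing half. Your opening check that $\mathcal{S}$ is well-defined on subspaces (invariance under $v \mapsto cv$) is likewise a small point of rigour the paper passes over silently. So your write-up is not merely equivalent to the paper's; it is strictly more complete, at the cost of only a few extra lines.
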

\begin{proof}
The Bloch sphere representation of a vector space, $V$, can equivalently be computed via Frobenius inner products of the orthogonal projection onto $V$, $\mathscr{P}_V$, and Pauli matrices,
\begin{align}
\mathcal{S}(V)_i = \gls{tr}\left(\sigma_i \mathscr{P}_V\right). \label{sphere1}
\end{align}
The Pauli matrices generate an orthonormal basis of the algebra of $2 \times 2$ matrices, $\mathfrak{M}_2(\mathbb{C})$, in the Frobenius inner product. More explicitly,
\begin{align}
\frac{1}{2} \sum_{\nu = 0}^3 \sigma_\nu \gls{tr}(\sigma_\nu A) = A &\quad& \forall A \in \mathfrak{M}_2(\mathbb{C}). \label{sphere2}
\end{align}
Combining \cref{sphere1,sphere2} with the observation $\gls{tr}(\mathscr{P}_V) = 1$ yields a decomposition of $\mathscr{P}_V$ as a sum of Pauli matrices,
\begin{align}
\mathscr{P}_V = \frac{1}{2}\left(\sigma_0 + \mathcal{S}(V) \cdot \vec{\sigma} \right). \label{BlochExpansion}
\end{align}
That the image of the Bloch sphere is $\mathbb{S}^2$ follows from the properties of projections,
\begin{align}
1 &= \gls{tr}(\mathscr{P}_V) \\
&= \gls{tr}(\mathscr{P}^2_V) \\
&= \frac{1}{4} \gls{tr}\left((1 + \mathcal{S}(V) \cdot \mathcal{S}(V)) \sigma_0 + 2 \mathcal{S}(V) \cdot \vec{\sigma} \right) \Rightarrow \\
\mathcal{S}(V) \cdot \mathcal{S}(V) &= 1.
\end{align}

If two one-dimensional vector spaces are distinct, then their corresponding orthogonal projections are distinct. Combining this observation with \cref{BlochExpansion} proves the injectivity of $\mathcal{S}$.
\end{proof}

One-dimensional subspaces of $\mathbb{C}^2$ are orthogonal if and only if their Bloch sphere representations are antipodal.

A $\mathcal{PT}$-symmetric vector corresponds to an element of the \textit{equator} of the Bloch sphere, which is the set defined by $\{s \in \mathbb{S}^2 \,|\, s_3 = 0\}$.

The Bloch sphere representation of the eigenspaces of $H$ given in \cref{qubitEigenspaces} is
\begin{align}
\mathcal{S}(\ker(\epsilon_{\pm} \mathbb{1} - H))
&= 
\begin{cases}
\left(\pm \dfrac{\sqrt{t^2 - \omega^2}}{t}, 
-\dfrac{\omega}{t},
0\right) & \text{if } \omega^2 \leq t^2 \\[10pt]
\left(-\dfrac{t}{\omega}, 
0,
\mp \dfrac{\sqrt{\omega^2 - t^2}}{\omega} \right) & \text{if } \omega^2 \geq t^2.
\end{cases}
\end{align}

\begin{figure}[!ht]
\centering
\includegraphics[width = .75\textwidth]{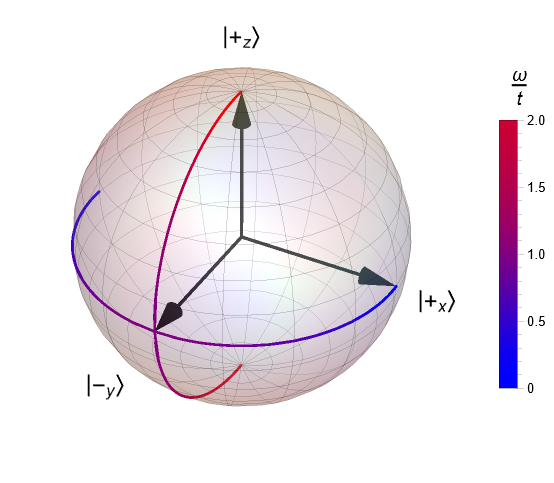}
\caption{Bloch sphere representation of the eigenvectors of the $\mathcal{PT}$-symmetric matrix given by \cref{qubitHam}. Increasing the non-Hermiticity strength $\omega$ is marked by moving from blue to red. The subspace of the Bloch sphere which is $\mathcal{PT}$-symmetric is the equator, which in polar coordinates is $\theta = \pi/2$.} \label{evecsOnBlochSphere}
\end{figure}
%
%PUT INTERVAL NOTATION IN GLOSSARY AND USE IT CONSISTENTLY.

%\cite{bender2002complex}

Due to the global phase symmetry of quantum theory, we can assume the state vector $\ket{\psi}$ is of the form
\begin{align}
\ket{\psi} = \begin{pmatrix}
\alpha_0 \\
\alpha_1 + \mathfrak{i} \alpha_2
\end{pmatrix},
\end{align}
where $\alpha_i \in \mathbb{R}$.
The normalization condition for $\ket{\psi}$ is 
%\begin{align}
%\braket{\psi|\eta| \psi} &= 1 \\
%&= \zeta_0 \left(\alpha_0^2 + \alpha_1^2 + \alpha_2^2 + 2 \alpha_0 \alpha_1(\text{Re}(t) \zeta + \text{Im}(t) \text{Im} (\gamma)) + 2 \alpha_0 \alpha_2 (\text{Re}(t) \text{Im}(\gamma) - \text{Im}(t) \zeta)\right) 
%\end{align}
\begin{align}
\braket{\psi|\eta| \psi} &= 1 \\
&= \zeta_0 \left(\alpha_0^2 + \alpha_1^2 + \alpha_2^2 + 2 \alpha_0 \alpha_1 \zeta_1 t + 2 \alpha_0 \alpha_2 \omega t \right). 
\end{align}

Thus, a natural interpretation of the state space is as an ellipsoid, where a state is represented by the vector $(\alpha_0, \alpha_1, \alpha_2)$. Observe how this contrasts with the traditional, Hermitian quantum theory of a qubit, where the state space is often represented as the Bloch sphere \cite{operatorSchmidt}.

\subsection{Passing Through the Exceptional Point} \label{PassingThroughEP}

Since $H$ is a sum of non-commuting matrices, \cref{MotzkinTaussky} of \cite{Motzkin1955} implies the existence of an exceptional point.

A series expansion for the eigenvalues can be computed using Newton's binomial theorem \cite{Whiteside1961}. %Which says (x+y)^r = \sum_{n = 0}^\infty (r choose n) x^n y^{r-n}
In particular, given $\omega_0 \in \mathbb{R}$, we have %goal: compute radius of convergence we have
\begin{align}
\epsilon_+(\omega_0 + \delta \omega, t) &= \begin{cases}
\sqrt{t^2 - \omega_0^2} \sum\limits_{n = 0}^\infty \begin{pmatrix}
\frac{1}{2} \\ n
\end{pmatrix} \left(\dfrac{2 \omega_0\, \delta \omega + \delta \omega^2}{\omega_0^2 - t^2} \right)^n& \text{ if } \omega_0^2 \neq t^2 \\
\sqrt{-2 \omega_0 \,\delta\omega} \sum\limits_{n = 0}^\infty \begin{pmatrix}
\frac{1}{2} \\ n
\end{pmatrix} \left(\dfrac{\delta\omega}{2 \omega_0} \right)^{n} & \text{ if } \omega_0^2 = t^2,
\end{cases} \\
&= \begin{cases}
\sqrt{t^2-\omega_0^2} - \dfrac{\omega_0}{\sqrt{t^2-\omega_0^2}} \, \delta \omega - \dfrac{t^2}{(t^2-\omega_0^2)^{3/2}} (\delta \omega)^2 + O(\delta \omega^3) & \text{ if } \omega_0^2 \neq t^2 \\[12pt]
\sqrt{-2 \omega_0} \,(\delta\omega)^{1/2} - \dfrac{1}{2 \sqrt{-2 \omega_0}} \, \delta \omega^{3/2} + O(\delta \omega^{5/2}) & \text{ if } \omega_0^2 = t^2,
\end{cases}
\end{align}
where 
\begin{align}
\begin{pmatrix}
\frac{1}{2} \\ n
\end{pmatrix} := \frac{1}{n!} \prod_{j \in \{0, \dots, n-1\} } \left(\frac{1}{2} - j \right)
\end{align}
denotes a generalized binomial coefficient. Note this series expansion is a Taylor series if and only if we are performing the perturbation at a point $\omega_0$ which is not an exceptional point. At the exceptional point, the series expansion for the eigenvalues is a Puiseux series, a power series involving fractional powers, instead of a Taylor series.

In conjunction with the Bauer-Fike \cref{BauerFike}, Taylor's remainder theorem derives an upper bound for the coefficient of the leading term of the Taylor expansion of the eigenvalues of $H(\omega, t)$ with respect to $\omega$. This upper bound is the condition number with respect to an axis-oriented norm of a matrix, $U$, whose columns are eigenvectors of $H$. The transpose symmetric choice for $U$ is
\begin{align}
U = \begin{pmatrix}
\mathfrak{i} \omega + \sqrt{t^2 - \omega^2} & t\\
t & -\mathfrak{i} \omega + \sqrt{t^2 - \omega^2} 
\end{pmatrix}.
\end{align}
A comparison between the condition numbers of $U$ and of the leading term in the Taylor expansion it plotted \cref{conditionNumberPlot}. I emphasize that the condition numbers of $U$ depend on the choice of columns made. Smaller condition numbers and, consequently, tighter bounds on the coefficient of the first-order term in the perturbative expansion of the eigenvalues can be obtained by optimizing the choice of columns in $U$.

%{\color{red} Bonus objective: Determine the similarity transform which minimizes the condition number}

\begin{figure}[!ht]
\centering
\includegraphics[width = 120mm]{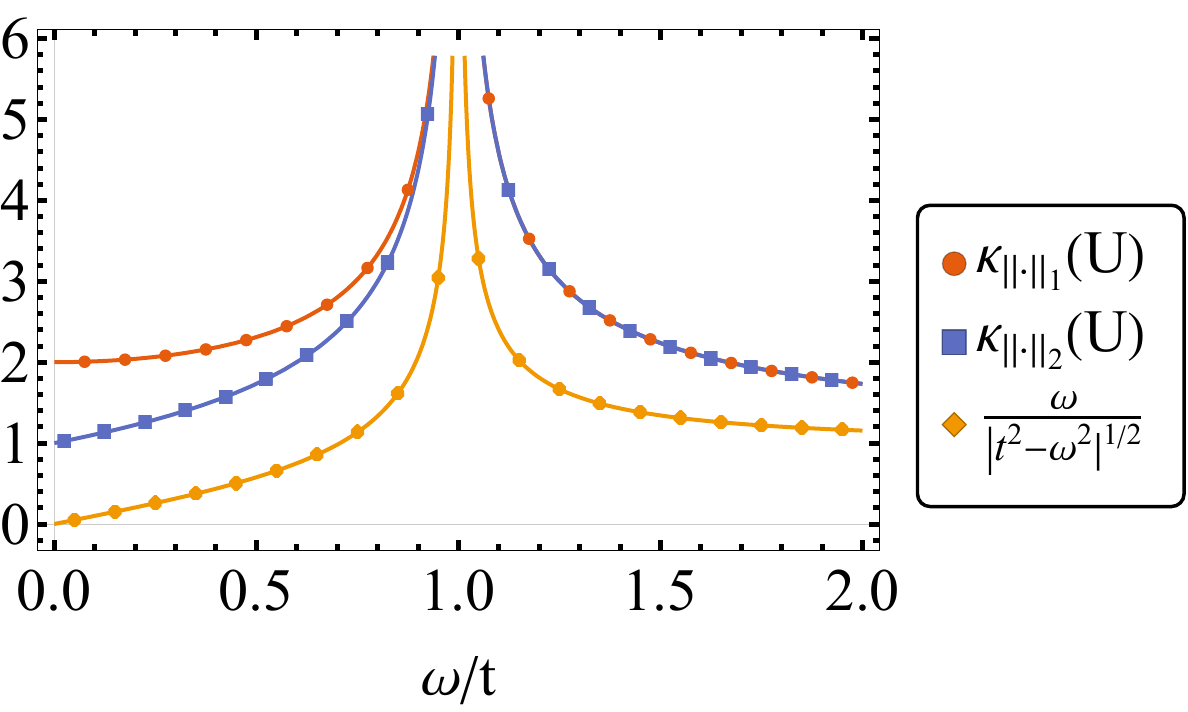}
\caption{Comparison of the first term of the Taylor series expansion of $\epsilon_+(\omega,t)$ with the upper bounds given by the Bauer-Fike \cref{BauerFike} \cite{Bauer1960}. The singularity at $\omega = t$ is a signature of the exceptional point. Due to transpose-symmetry, $\kappa_{||\cdot||_\infty}(U) = \kappa_{||\cdot||_1}(U)$. Hölder's inequality \cite{Holder} implies $\kappa_{||\cdot||_2}(U) \leq \kappa_{|\cdot||_1}(U)$. For $\omega > |t|$, $\kappa_{||\cdot||_1}(U) = \kappa_{||\cdot||_2}(U)$.}
\label{conditionNumberPlot}
\end{figure}

%----------------

\subsection{Indefinite Unitary Group}

Since the Pauli matrices are both Hermitian and unitary, they could either be interpreted as observables or as unitary operators which generate time-evolution. In analogy to this observation, in a special case, the pseudo-Hermitian $2 \times 2$ qubit Hamiltonian is pseudo-unitary. To see this, note the fundamental representation of $SU(1,1)$, which is the group of all $\sigma_z$-unitaries with determinant one, is 
\begin{align}
SU(1,1) = \left\{ \begin{pmatrix}
\gamma & t \\
t^* & \gamma^*
\end{pmatrix}\,|\, (\gamma, t \in \mathbb{C}) \wedge (|\gamma|^2 - |t|^2 = 1) \right\}.
\end{align}
A physical setting where state update is given by a $\sigma_z$-unitary is the $SU(1,1)$ \textit{interferometer}, a device which was proposed in \cite{Yurke1986}. This nonlinear interferometer is constructed with an active gain element, such as a four wave mixers or parametric amplifiers, and has been constructed in laboratory settings \cite{Jing2011}.

\section{$C^*$-Algebraic Picture of Local Quantum Theory} \label{C*AlgebrasIntro}

Asking "Why do we use Hermitian operators in quantum theory?" leads to an analysis of operators with antilinear symmetry, pseudo-Hermitian operators, and quasi-Hermitian operators. A more extreme question is "Why do we use a Hilbert space?". Quantum theory can be constructed without its Hilbert space representation using $C^*$-\textit{algebras}.

%Reference van2007sources for english translations of quantum papers.

In Werner Heisenberg's original work on quantum theory, he postulates that the transition amplitudes are the observables of a system \cite{Heisenberg1925,van2007sources}. Collaboration between Heisenberg, Max Born, and Pascual Jordan led to the representation of these amplitudes as matrix elements \cite{BornJordan,BornJordanHeisenberg,van2007sources}. The idea behind the $C^*$-algebraic formulation of quantum theory is to remain true to Heisenberg's vision. Observables are characterized in the abstract, through their algebraic properties alone, and transition amplitudes are computed via a mapping from this abstract space of observables into $\mathbb{C}$.

Equivalence between the $C^*$-algebraic formulation of quantum theory and the more commonplace Hilbert space formulation is a consequence of the GNS construction given in \cref{GNS-Construction} \cite{gelfand1943imbedding,Segal1947} and Stinespring's dilation \cref{StinespringThm} \cite{Stinespring1955}. Due to this equivalence, a physicist may wonder what the use of this abstraction is. Firstly, the algebraic approach to quantum field theory \cite{rejzner2016perturbative,Haag1964} has grown quite popular, especially due to the difficulties of constructive quantum field theory. Secondly, theorems which are formulated to apply to $C^*$-algebras are readily applied to the setting of quantum theory and can simplify analysis. For example, I only know how to prove the result presented in \cref{Local-Equivalence-Section} regarding local expectation values in quantum theory using the Hahn-Banach extension theorem \cite[prop. 2.3.24]{bratteli1987operator}, which is a statement regarding states in $C^*$-algebras. Lastly, reformulating a framework clarifies its axiomatic structure and suggests possibilities for generalizations.

The abstract setting of a $C^*$-algebra is defined below. Additional details on $C^*$-algebras can be located in \cref{functionalAnalysisAppendix} and in references such as \cite{Abramovich2002,KadisonRingroseI,bratteli1987operator,Bratteli1997,TakesakiI,TakesakiII}.
\begin{defn} \label{c*Defn}
An \textit{associative algebra}, $\mathfrak{A}$, over $\mathbb{C}$ is a complex vector space equipped with an associative bilinear operation, $\cdot:\mathfrak{A} \times \mathfrak{A} \to \mathfrak{A}$, referred to as the \textit{product}, which distributes over addition. A \textit{normed algebra} is an associative algebra and a normed space, where the norm must satisfy the \textit{product inequality},
\begin{align}
\forall a, b \in \mathfrak{A}, ||a \cdot b|| \leq ||a|| \, ||b||.
\end{align}
A \textit{Banach algebra} is a complete normed algebra.
An \textit{involutive complex algebra}, also referred to as a ${}^*$-\textit{algebra}, $\mathfrak{A}$, is an associative algebra over $\mathbb{C}$ equipped with a map, $*: \mathfrak{A} \rightarrow \mathfrak{A}$, which satisfies the following properties for every $a,b \in \mathfrak{A}$, $\alpha,\beta \in \mathbb{C}$:
\begin{itemize}
\item $*$ is an involution, so $(a^*)^* = a$. \\
\item $*$ is antilinear, so $(\alpha a + \beta b)^* = \overline{\alpha} a^* + \overline{\beta} b^* $, where $\overline{\alpha}$ denotes the complex conjugate of $\alpha$.\\
\item $*$ is an \textit{algebra antihomomorphism}, which means $(a \cdot b)^*= b^* \cdot a^*$.
\end{itemize}
A $C^*$-\textit{algebra} is a ${}^*$-algebra and a Banach algebra for which the $C^*$ identity holds, namely 
\begin{align}
||a^* \cdot a|| = ||a^*|| \, ||a||.
\end{align}
\end{defn}
A classic example of a $C^*$-algebra is the operator-algebra structure used in quantum theory: namely, the set of bounded operators on a Hilbert space. In this example, the product is function composition, the involution is the adjoint, and the norm is the operator norm.

\subsection{Quantum Theory} \label{C*-Quantum-Theory}

This section defines the $C^*$-algebraic framework for quantum theory. The postulates are summarized below.
\begin{itemize}
\item \textit{Observables} are the Hermitian elements of a $C^*$-algebra, $\mathfrak{A}$, which are the elements $a \in \mathfrak{A}$ such that $a = a^*$. 

\item Following \cref{stateDef}, a \textit{state}, $\varphi:\mathfrak{A} \to \mathbb{C}$, is a normalized, positive, linear functional on $\mathfrak{A}$. The \textit{expectation value} of an observable, $a = a^*$, in the state $\varphi$ is $\varphi(a)$. Reality of the expectation values of Hermitian observables follows from positivity of the state.

\item Time is parametrized by the reals. Given an initial state $\varphi_{t_0}$ at time $t_0 \in \mathbb{R}$, the state $\varphi_t$ at time $t \in \mathbb{R}$ is 
\begin{align}
\varphi_t = \varphi_{t_0} \circ \Phi_{t, t_0},
\end{align}
where $\Phi_{t, s}:\mathfrak{A} \to \mathfrak{A}$ is a completely positive map\footnote{Most treatments on the $C^*$-algebraic treatment of quantum theory instead define time-evolution in the Heisenberg picture, with the more restrictive notion of a ${}^*$-automorphism defining time evolution.}. The system is said to have \textit{time translation invariance} if $\Phi_{t, t_0} = \Phi_{t-t_0}$. 

\item Measurements $C^*$-algebraic quantum theory can be defined by using the GNS construction, summarized later in this section, to map the definition of measurement to its version in normal quantum theory.
\end{itemize}

The connection between the $C^*$-algebraic and Hilbert space formulations of quantum theory is seen through the Stinespring theorem and the GNS construction. The equivalence between the kinematic structure of both pictures of quantum theory is seen through the GNS construction.
\begin{theorem}[GNS construction \cite{gelfand1943imbedding,Segal1947}] \label{GNS-Construction}
Given a $C^*$-algebra, $\mathfrak{A}$, for every state, $\omega: \mathfrak{A} \to \mathbb{C}$, there exists a representation\footnote{Representations of $C^*$-algebras are defined in \cref{defn:Representation}.}, $\pi:\mathfrak{A} \to \mathcal{B}(\mathcal{H})$, on a Hilbert space $\mathcal{H}$ and a unit cyclic vector\footnote{A cyclic vector, $\xi \in \mathcal{H}$, associated to the subalgebra of bounded operators $\mathcal{R}(\pi)$ is one such that $\{\pi(a) \xi\,|\,a \in \mathfrak{A}\}$ is dense in $\mathcal{H}$.}, $\xi \in \mathcal{H}$, such that
\begin{align}
\omega(a) = \braket{\pi(a) \xi| \xi}.
\end{align}
\end{theorem}
Thus, expectation values of observables can either be computed using a state on a $C^*$-algebra, or a vector on a Hilbert space.

Stinespring's theorem \cite{Stinespring1955} clarifies the equivalence between time evolution as generated by a completely positive map and time evolution generated by a unitary operator.
\begin{thm}[Stinespring \cite{Stinespring1955}] \label{StinespringThm}
Let $\mathfrak{A}$ be a unital $C^*$-algebra, $\mathcal{H}$ be a Hilbert space. A map, $\Phi: \mathfrak{A} \to \mathcal{B}(\mathcal{H})$, is completely positive if and only if there exists a Hilbert space $\mathcal{K}$ and a unital $*$-homomorphism, $\pi: \mathfrak{A} \to \mathcal{B}(\mathcal{K})$ such that 
\begin{align}
\Phi(a) = V^\dag \pi(a) V,
\end{align}
where $V: \mathcal{H} \to \mathcal{K}$ is bounded.
\end{thm}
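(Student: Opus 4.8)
The plan is to prove the two implications separately, the forward dilation direction being the substantive one. For the easy direction, suppose $\Phi(a) = V^\dag \pi(a) V$ with $\pi$ a unital $*$-homomorphism. Every $*$-homomorphism is completely positive, since its $n$-fold amplification $\pi^{(n)}$ sends a positive element $b^\dag b \in \mathfrak{M}_n(\mathfrak{A})$ to $\pi^{(n)}(b)^\dag \pi^{(n)}(b) \geq 0$; composing with the completely positive compression $T \mapsto V^\dag T V$ keeps the map completely positive, so $\Phi$ is completely positive.

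For the converse, I would carry out the Stinespring construction on the algebraic tensor product $\mathfrak{A} \otimes \mathcal{H}$, equipped with the sesquilinear form fixed on elementary tensors by
\[
\braket{a \otimes \psi | b \otimes \phi} := \braket{\psi | \Phi(a^* b) \phi}.
\]
The crucial step, and the only place the full strength of complete positivity is used, is to show this form is positive semidefinite. Given a finite sum $v = \sum_i a_i \otimes \psi_i$, its norm-squared is $\sum_{i,j} \braket{\psi_i | \Phi(a_i^* a_j) \psi_j}$, which is precisely the expectation value of $\Phi^{(n)}\!\big([a_i^* a_j]\big)$ in the vector $(\psi_1, \dots, \psi_n) \in \mathcal{H}^{n}$. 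Since $[a_i^* a_j] = b^\dag b \geq 0$ in $\mathfrak{M}_n(\mathfrak{A})$ for the row $b = (a_1, \dots, a_n)$, complete positivity forces this quantity to be nonnegative. Quotienting by the null space $N = \{v : \braket{v|v} = 0\}$ and completing then yields a Hilbert space $\mathcal{K}$.

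It remains to produce $\pi$ and $V$. I would define $\pi(c)$ by left multiplication, $\pi(c)(a \otimes \psi) := (ca) \otimes \psi$, and check it descends to $\mathcal{K}/N$, extends boundedly (boundedness follows from positivity of $\|c\|^2 \mathbb{1} - c^* c$ combined with the form above), and is a unital $*$-homomorphism. Using that $\mathfrak{A}$ is unital, set $V\psi := \mathbb{1} \otimes \psi$; this is bounded because $\|V\psi\|^2 = \braket{\psi | \Phi(\mathbb{1}) \psi} \leq \|\Phi(\mathbb{1})\|\,\|\psi\|^2$. A one-line computation, $\braket{\psi | V^\dag \pi(a) V \phi} = \braket{\mathbb{1} \otimes \psi | a \otimes \phi} = \braket{\psi | \Phi(a) \phi}$, then gives $\Phi(a) = V^\dag \pi(a) V$. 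I expect the main obstacle to be the positivity of the sesquilinear form, hence pinning down the correct use of complete positivity, together with verifying that $\pi$ is well-defined and bounded on the quotient and its completion; the remaining steps are routine bookkeeping.
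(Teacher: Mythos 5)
Your proposal is correct, but there is nothing in the paper to compare it against: the thesis states \cref{StinespringThm} as a known result and cites \cite{Stinespring1955} without supplying a proof. What you have written is precisely the classical Stinespring construction from that reference (and standard texts such as Paulsen's): the easy direction via amplification of $\pi$ composed with the compression $T \mapsto V^\dag T V$, and the hard direction via the GNS-style sesquilinear form on $\mathfrak{A} \otimes \mathcal{H}$, with complete positivity entering exactly once to make the form positive semidefinite. Two small bookkeeping points, neither a gap: the row $b = (a_1, \dots, a_n)$ should be regarded as an element of $\mathfrak{M}_n(\mathfrak{A})$ (pad the remaining rows with zeros) so that $b^* b = [a_i^* a_j]$ makes sense in the matrix algebra; and the well-definedness of $\pi(c)$ on the quotient by $N$ follows from the same inequality $\braket{\pi(c)v\,|\,\pi(c)v} \leq \norm{c}^2 \braket{v|v}$ that you invoke for boundedness, since it shows $\pi(c)N \subseteq N$ — worth saying explicitly, as Cauchy--Schwarz for the semidefinite form is also what makes $N$ a subspace in the first place.
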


\subsection{Quasi-Hermiticity in $C^*$-algebras} \label{Quasi-Hermitian-C*}

%State might not be unital, rather it's just norm one...

The goal of this section is to introduce a $C^*$-algebraic framework for quasi-Hermitian quantum theory.

Suppose we are given a $C^*$-algebra, $\mathfrak{A}$, with an involution denoted by $\dag$. Every invertible element, $\Omega \in \text{GL}(\mathfrak{A})$, defines an algebra automorphism,
\begin{align}
\phi_\Omega(a) = \Omega^{-1} a \Omega. \label{quasi-Herm-C*-automorphism}
\end{align} 
This algebra automorphism can be considered an isometric $*$-isomorphism between $C^*$-algebras if the norm and involution are re-defined in the co-domain of $\phi_\Omega$. A suitable involution, $\dag_\Omega$, is the algebra anti-automorphism
\begin{align}
a^{\dag_\eta} := \eta^{-1} a^\dag \eta,
\end{align}
where 
\begin{align}
\eta = \Omega^\dag \Omega
\end{align}
is referred to as the \textit{metric}. A suitable norm is 
\begin{align}
||x||_{\Omega} := || \Omega x \Omega^{-1} ||. \label{eqn-new-norm}
\end{align}
This norm is equivalent to the original norm, since the product inequality, \cref{productInequality}, implies
\begin{align}
\frac{||x||}{||\Omega|| \, ||\Omega^{-1}||} \leq ||x||_\Omega \leq ||\Omega|| \, ||\Omega^{-1}|| \, ||x||.
\end{align}
With the above observations in mind, define $\mathfrak{A}(\Omega)$ to the $C^*$-algebra with elements, addition, and multiplication as in $\mathfrak{A}$, but with the new involution $\dag_\eta$ and the new norm $||\cdot||_\Omega$.

The identity $\mathfrak{A} = \mathfrak{A}(\gls{id})$ is a special case. 

Elements satisfying $a = a^{\dag_\eta}$ are referred to as \textit{quasi-Hermitian}. A quasi-Hermitian quantum theory is one defined in the algebra $\mathfrak{A}(\Omega)$ for some choice of $\Omega$, according to the definition given in \cref{C*-Quantum-Theory}. 

The cone of \textit{positive} elements of a $C^*$-algebra, $\mathfrak{A}$, which is denoted by $\mathfrak{A}^+$, is the set of all elements that can be written as a product $b^* b$ for some $b \in \mathfrak{A}$. In general, $\mathfrak{A}^+ \neq \mathfrak{A}(\Omega)^+$. However, if $a \in \mathfrak{A}(\Omega)^+$ is also Hermitian, then $a \in \mathfrak{A}^+$. To prove this, note
\begin{align}
a = b^{\dag_\eta} b &= \eta^{-1} b^\dag \eta b \\
&= b^\dag \eta b \eta^{-1},
\end{align}
so $a$ is the product of commuting positive elements in $\mathfrak{A}^+$. The product of commuting positive elements is positive, as is seen by applying the Gelfand-Naimark theorem to the equivalent result for bounded positive operators on a Hilbert space \cite{Wouk}.

\chapter{Non-Hermitian Lattice Models} \label{toyModelsChapter}

This chapter is devoted to matrices corresponding to one-dimensional lattice models. After introducing techniques to solve the eigenvalue problem in \cref{tridiagEsysSection}, fundamental properties of non-Hermiticity are explored in subsequent sections. Namely, intertwining operators, exceptional points, and $\mathcal{PT}$-symmetry breaking and their consequences are studied. 

A simple source of non-Hermiticity in a matrix is a pair of complex-valued elements on the diagonal, referred to as defect potentials. The original results of this chapter pertain to three types of lattice models with non-Hermitian defect potentials, presented in sections~\ref{nearestNeighbour}, \ref{SSHSection}, and \ref{uniformSection} respectively. The results for each case are summarized at the beginning of their corresponding sections and a global summary can be found in \cref{Toy Models Summary} of the summary chapter at the beginning of this thesis. The discussion preceding these models serves to introduce notation and to review and apply known properties of tridiagonal matrices and their ilk.

The central mathematical structure of this chapter is a class of square matrix polynomials, $H_n: \mathbb{C}^n \times \mathbb{C}^n \times \mathbb{C}^n \to \mathfrak{M}_n(\mathbb{C})$, where the number of rows is finite and greater than one ($n \geq 2$). The matrix elements of $H_n$ in the canonical basis of $\mathbb{C}^n$ are 
\begin{align}
H_n(\alpha,\beta,z)_{ij} &= \alpha_j \delta^i_{j+1} + \beta_i \delta^j_{i+1} + z_i \delta^i_{j} + \alpha_n \delta^i_{1} \delta^j_{n} + \beta_n \delta^i_{n} \delta^j_{1}, \label{tridiagElements}
\end{align}
where $\delta$ is the \textit{Kronecker delta}, satisfying 
\begin{align}
\delta^i_{j} := \begin{cases}
1 & \text{ if } i = j \\
0 & \text{ if } i \neq j.
\end{cases}
\end{align}
A matrix representation of $H_n$ for $n = 6$ is 
\begin{align}
H_6(\alpha,\beta,z) = \begin{pmatrix}
z_1 & \beta_1  & 0        & 0   	  & 0  		 & \alpha_6 \\
\alpha_1 & z_2 & \beta_2  & 0   	  & 0   	 & 0 		\\
0        & \alpha_2 & z_3 & \beta_3  & 0 	     & 0 		\\
0        & 0        & \alpha_3 & z_4 & \beta_4  & 0 		\\
0        & 0   		& 0   	   & \alpha_4 & z_5 & \beta_5  \\
\beta_6  & 0  		& 0   	   & 0   	  & \alpha_5 & z_6
\end{pmatrix}. \label{Hscr}
\end{align}
When a choice for the inputs of $H_n$ is clear from the context, I will use the shorthand 
\begin{align}
\mathscr{H}_n := H_n(\alpha,\beta,\gamma)
\end{align} 
for the value of $H_n$. 
$\mathscr{H}_n$ is referred to as \textit{tridiagonal} when $\alpha_n = \beta_n = 0$ and enforcing the condition $\alpha_n = \beta_n = 0$ is referred to as imposing \textit{open boundary conditions}. When $\alpha_n$ or $\beta_n$ is nonzero, I will refer to $\mathscr{H}_n$ as a tridiagonal matrix with \textit{perturbed corners}. A tridiagonal matrix is called \textit{irreducible} if $\alpha_i \neq 0$ and $\beta_i \neq 0$ for all $i \in \{1, \dots, n-1\}$.

%This matrix naturally emerges as a model of dynamics local to one dimensional graphs. Notably, in the case $\alpha = \beta = \textbf{e}, \gamma = 0$, $\mathscr{H}_n$ is the adjacency matrix of the cycle graph with $n$ vertices, and in the case the tridiagonal case with $\alpha = \beta = \textbf{e}_{\dbrac{1,n-1}}, \gamma = 0$, $\mathscr{H}_n$ is the adjacency matrix of the path graph with $n$ vertices.

Some applications of matrices akin to $H_n$ include the following: 

%{\color{red} Litter the following with references. Introduce and discuss the tight-binding model.}

%{\color{red} Is it possible to move this before the definition of $H_n$? I don't think so because I define hopping amplitudes below.}

\begin{itemize}
\item Tridiagonal matrices with perturbed corners are used in many discretization schemes for linear partial differential equations of first or second order \cite{garcia2000numerical}. Examples of such equations which appear in physics include the Schr{\"o}dinger equation, the advection-diffusion equation, Poisson's equation, the heat equation, and the wave equation; and an example of a scheme yielding tridiagonal matrix eigenvalue problems is the Lax-Wendroff scheme \cite{LaxWendroff}.%, applied to linear advection 
%Add more schemes
%gives a "tridiagonal" matrix \cite[p.223]{garcia2000numerical}

\item The adjacency matrices of the path and cycle graphs %{\color{red} add reference?} 
with $n$ vertices are included in the range of $H_n$. Adjacency matrices of graphs are used to model random walks of particles on said graphs \cite{Kac1947}. Physically, random walks can be interpreted as models of particles hopping on a graph. Consequently, the parameters $\alpha, \beta$ in $\mathscr{H}_n$ are typically referred to as \textit{hopping amplitudes}.

\item A simple model of a material is to assume its constituent particles are fixed in location on a lattice. Each particle can have local degrees of freedom, such as, for example, a magnetic dipole moment. Perturbing the material from equilibrium causes the local degrees of freedom to evolve time. In quantum theory, this time evolution is modelled via the Schr{\"o}dinger equation with a Hamiltonian. In the simplest quantum many-body problems, there exists a reduction from the full problem to its \textit{first-quantized} form. Typically, first-quantized Hamiltonians associated with time-evolution which is local to a lattice are analysed through the eigenvalue problem of tridiagonal matrices \cite{rutherford1948xxv,Lieb1961}. 
\end{itemize}
%
%Further examples of applications of properties of tridiagonal matrices include to dynamics local to a 1 dimensional chain \cite{Lieb1961} and simple birth and death processes \cite{Ledermann1954}. (elaborate on this point). 

I will now introduce some notation: the canonical basis of $\mathbb{C}^n$ is $e_i \in \mathbb{C}^n$, where 
\begin{align}
(e_i)_j := \delta^i_{j}.
\end{align}
The vector whose components in the canonical basis are all equal to one is $\textbf{e} \in \mathbb{C}^n$,
\begin{align}
\textbf{e} = \sum_{i = 1}^n e_i. \label{bold e vector}
\end{align}
Given $m,n \in \mathbb{R}$ such that $m-n \in \mathbb{N}$, \gls{stMaryRd} is a shorthand for
\begin{align}
\dbrac{m,n} &:= \{m, \dots, n\}.
\end{align}
Given $\psi \in \mathbb{C}^n$ for every subset $S \subseteq \dbrac{1, n}$, I will define
\begin{align}
\psi_S = \sum_{i \in S} \psi_i e_i.
\end{align}
Lastly, I will use the shorthand
\begin{align}
\overline{m} := n-m+1.
\end{align}

\section{Eigensystems of One-Dimensional Lattice Models} \label{tridiagEsysSection}

This section analyses the eigenvalue problem of $\mathscr{H}_n$,
\begin{align}
\mathscr{H}_n \psi = \lambda \psi,
\end{align}
where $\lambda \in \mathbb{C}$ and $\psi \in \mathbb{C}^n$. Explicitly, the eigenvalue problem is
\begin{align}
\beta_1 \psi_2 &= (\lambda - z_1) \psi_1 - \alpha_{n} \psi_{n} \nonumber \\
\beta_i \psi_{i+1} &= (\lambda - z_i) \psi_i - \alpha_{i-1} \psi_{i-1} & \forall i \in \dbrac{2,n-1} \nonumber \\
\beta_n \psi_1 &= (\lambda - z_n) \psi_n - \alpha_{n-1} \psi_{n-1}. \label{recurrence}
\end{align}

The solution to the eigenvalue problem of an operator is crucial to understanding its physical interpretation in quantum theory, as the interpretation of an operator as an observable is only possible with an understanding of the set of measurement outcomes associated to said observable. 

\subsection{Similarity Transformations}

Similarity transformations are often used to map eigenvalue problems of matrices to previously solved eigenvalue problems. If $\lambda \in \sigma(A), v\in \ker(\lambda \mathbb{1} - A)$ is an eigenvalue/vector pair for the matrix $A$, then $(\lambda, S v)$ is an eigenvalue eigenvector pair for the similar matrix $S A S^{-1}$. 

This section showcases similarity transformations which map matrices in $H_n(\mathbb{C}^n,\mathbb{C}^n,\mathbb{C}^n)$ to matrices in $H_n(\mathbb{C}^n,\mathbb{C}^n,\mathbb{C}^n)$. Two such similarity transformations are given by the \textit{exchange matrix}, $\mathcal{P}_n$, the \textit{shift matrix}, $S_n$. Explicitly, define the matrices $E_n, \mathcal{P}_n, S_n \in \mathfrak{M}_n(\mathbb{C})$ by their elements
\begin{align}
(E_n)_{ij} :&= (-1)^{i} \delta^i_{j} \label{diagAltSign}\\
(\mathcal{P}_n)_{ij} :&= \delta^{i}_{\overline{j}} \\
(S_n)_{ij} :&= \begin{cases}
1 & \text{if } i-j \equiv 1 \mod n\\
0 & \text{otherwise}
\end{cases} \nonumber \\
&= H_n(\textbf{e}, 0, 0),
\end{align}
and define the vector $s_n \in \mathbb{C}^n$ to have the components 
\begin{align}
(s_n)_i := \begin{cases}
1 & \text{ if } i < n \\
(-1)^n & \text{ if } i = n.
\end{cases}
\end{align}
Both $\mathcal{P}$ and $S$ are permutation matrices. The following similarity transforms can be computed,
\begin{align}
E_n H_n(\alpha,\beta,z) E_n &= H_n(-s_n \alpha, -s_n \beta, z) \label{tridiagStagger} \\
\mathcal{P}_n H_n(\alpha, \beta, z) \mathcal{P}_n &= H_n(\mathcal{P}_n S_n \beta, \mathcal{P}_n S_n \alpha, \mathcal{P}_n z) \label{paritySimilarity} \\
S_n H_n(\alpha,\beta,z) S_n^{-1} &= H_n(S_n \alpha, S_n \beta, S_n z).
\end{align}
The tridiagonal case of \cref{tridiagStagger} was known to \cite{kahan1966accurate}, physical applications of this formula can be found in \cite{Valiente2010,Joglekar2010}. 

A special case is the \textit{centrohermitian} or $\mathcal{PT}$-\textit{symmetric} case, where $\mathcal{T}$ denotes the complex-conjugation operation in the canonical basis, as in \cref{centrohermitianDefn}. The similarity transformation given by $\mathcal{P}$, \cref{paritySimilarity}, implies that $\mathscr{H}_n$ is $\mathcal{PT}$-symmetric if and only if $z = \mathcal{P}_n z^*$ and $\alpha = \mathcal{P}_n S_n \beta$.

A powerful corollary of \cref{tridiagStagger} is \cite[Thm. 3.1]{Dyachenko2021}, a result which we generalize to cases with $\alpha_n, \beta_n \neq 0$ below.
\begin{theorem} \label{Dyachenko2021Thm}
Suppose either $\alpha_n = \beta_n = 0$ or $n$ is even and $z \in \mathbb{C}^n$ is 2-periodic, so that $z_j = z_k$ whenever $j \equiv k \mod 2$. Then 
\begin{align}
\sigma(H_n(\alpha,\beta,z)) = \left\{ \frac{z_1 + z_2}{2} \pm \sqrt{\lambda^2 + \left(\frac{z_1 - z_2}{2} \right)^2} \,|\, \lambda \in \sigma(H_n(\alpha,\beta,0)) \right\}.
\end{align}
If $u(\lambda) \in \mathbb{C}^n$ is an eigenvector of $H_n(\alpha,\beta,0)$ with eigenvalue $\lambda$, then two corresponding eigenvectors, $v_{\pm}(\lambda)\in \mathbb{C}^n$, of $H_n(\alpha,\beta,z)$ are 
\begin{align}
v_{\pm}(\lambda) &= \left(\lambda \pm \sqrt{\lambda^2 + \left(\frac{z_1 - z_2}{2} \right)^2} \right) u(\lambda) + \frac{z_2 - z_1}{2} E_n u(\lambda) \\
H_n(\alpha,\beta,z) v_{\pm}(\lambda) &= \left(\frac{z_1 + z_2}{2} \pm \sqrt{\lambda^2 + \left(\frac{z_1 - z_2}{2} \right)^2} \right) v_{\pm}(\lambda).
\end{align}
\end{theorem}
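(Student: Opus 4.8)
The plan is to reduce the whole statement to a single algebraic fact: that the staggering matrix $E_n$ of \cref{diagAltSign} anticommutes with the purely off-diagonal part $H_0 := H_n(\alpha,\beta,0)$. By the staggering similarity \cref{tridiagStagger} one has $E_n H_0 E_n = H_n(-s_n\alpha, -s_n\beta, 0)$, so I would first observe that under either hypothesis the vector $s_n$ acts trivially on the hopping data. In the tridiagonal case $\alpha_n = \beta_n = 0$, so $s_n\alpha$ and $s_n\beta$ differ from $\alpha,\beta$ only in a vanishing entry; if instead $n$ is even then $(s_n)_n = (-1)^n = 1$, so $s_n = \textbf{e}$ outright. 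In both cases $H_n(-s_n\alpha,-s_n\beta,0) = -H_0$, which gives the anticommutation relation $E_n H_0 + H_0 E_n = 0$. A free by-product, since $H_0 u = \lambda u$ forces $H_0 (E_n u) = -\lambda E_n u$, is that $\sigma(H_0)$ is symmetric about the origin; I will need this to handle the degenerate case $z_1 = z_2$ cleanly.

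Next I would exploit the $2$-periodicity of $z$. Writing $a = \tfrac{z_1+z_2}{2}$ and $b = \tfrac{z_1-z_2}{2}$, the identity $z_i = a - b(-1)^i$ together with $E_n = \operatorname{diag}((-1)^i)$ gives the decomposition $H_n(\alpha,\beta,z) = H_0 + a\mathbb{1} - bE_n$. Setting $M := H_0 - bE_n$ one has $\sigma(H_n(\alpha,\beta,z)) = a + \sigma(M)$, so it suffices to compute $\sigma(M)$. Using $E_n^2 = \mathbb{1}$ and the anticommutation relation, the cross terms cancel and $M^2 = H_0^2 + b^2\mathbb{1}$. By the spectral mapping theorem for the polynomial $w \mapsto w^2$, $\sigma(M^2) = \{\lambda^2 + b^2 : \lambda \in \sigma(H_0)\}$ and every $w \in \sigma(M)$ satisfies $w^2 \in \sigma(M^2)$; hence $\sigma(M) \subseteq \{\pm\sqrt{\lambda^2+b^2} : \lambda \in \sigma(H_0)\}$, which yields the inclusion $\subseteq$ of the claimed identity after shifting by $a$.

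For the reverse inclusion and the explicit eigenvectors, I would fix $u$ with $H_0 u = \lambda u$ and study $M$ on the invariant plane $\operatorname{span}\{u, E_n u\}$. From $Mu = \lambda u - b E_n u$ and $M(E_n u) = -b u - \lambda E_n u$, the restriction of $M$ to this plane is $\bigl(\begin{smallmatrix}\lambda & -b \\ -b & -\lambda\end{smallmatrix}\bigr)$, whose eigenvalues are $\pm\sqrt{\lambda^2+b^2}$ with eigenvectors $v_\pm = (\lambda \pm \sqrt{\lambda^2+b^2})u - b E_n u = (\lambda \pm \sqrt{\lambda^2+b^2})u + \tfrac{z_2-z_1}{2}E_n u$. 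A direct substitution then confirms $H_n(\alpha,\beta,z) v_\pm = (a \pm \sqrt{\lambda^2+b^2}) v_\pm$, reproducing the stated eigenvalues and eigenvectors and giving the inclusion $\supseteq$.

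The main obstacle I anticipate is the bookkeeping in the degenerate cases, where one must confirm $v_\pm \neq 0$ and that the construction genuinely surjects onto the claimed set. If $b = 0$ the two branches collapse and one leans on the origin-symmetry of $\sigma(H_0)$ established above; if $u$ and $E_n u$ are linearly dependent, one checks that the anticommutation relation forces $\lambda = 0$, so the corresponding point of the claimed set is still produced. None of this requires ideas beyond the anticommutation relation and $E_n^2 = \mathbb{1}$, but this is precisely where care is needed to upgrade the set-theoretic spectral identity to the explicit eigenvector statement.
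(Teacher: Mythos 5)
Your route is, in substance, the paper's own: the paper uses the identical decomposition $H_n(\alpha,\beta,z) = \frac{z_1+z_2}{2}\mathbb{1} + \frac{z_2-z_1}{2}E_n + H_n(\alpha,\beta,0)$, obtains the anticommutation of $E_n$ with $H_n(\alpha,\beta,0)$ from \cref{tridiagStagger} under the same two hypotheses, and then simply cites \cref{eqn-anticommut-evecs}. Your two-dimensional invariant-subspace computation is an inlined (and correct) proof of that cited pencil lemma, and your spectral-mapping argument for the inclusion $\subseteq$ is sound.

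However, the degenerate case you flag at the end is a genuine gap, and your proposed patch does not close it. When $u$ and $E_n u$ are linearly dependent — which, as you note, forces $\lambda = 0$ — your construction produces only \emph{one} of the two points $\frac{z_1+z_2}{2} \pm \frac{z_1-z_2}{2} = z_1, z_2$, and no argument can produce the other, because the stated set equality is actually false there. Concretely, take $n = 3$, $\alpha = \beta = (1,1,0)$, $z = (1,-1,1)$, which satisfies the tridiagonal hypothesis with $2$-periodic $z$. Then $\sigma(H_3(\alpha,\beta,0)) = \{0, \pm\sqrt{2}\}$, so the claimed set is $\{\pm 1, \pm\sqrt{3}\}$, while a direct computation gives
\begin{align}
\det\left(\mu \mathbb{1} - H_3(\alpha,\beta,z)\right) = (\mu - 1)\left(\mu^2 - 3\right),
\end{align}
so $-1 = z_2$ is not an eigenvalue. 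This is exactly your dependent scenario: the zero mode $u = (1,0,-1)^\intercal$ satisfies $E_3 u = -u$, your $M = H_0 - bE_3$ with $b = 1$ gives $Mu = u$, producing only the branch $z_1 = 1$, and the branch $z_2$ is simply absent from the spectrum. So the $\supseteq$ direction of your proof (and of the theorem as stated) holds only under an extra hypothesis guaranteeing that $u(\lambda)$ and $E_n u(\lambda)$ can be chosen linearly independent for every $\lambda$ — for instance, excluding $E_n$-symmetric zero modes, which is how odd bipartite chains evade it. The paper's proof inherits the same defect: the equality clause of the proposition containing \cref{eqn-anticommut-evecs} assumes $\gamma \in \mathbb{R} \setminus \sigma(J)$, whereas the effective $\gamma$ here, fixed by $\mathfrak{i}\gamma = \frac{z_2-z_1}{2}$, is generically complex, so the citation does not actually deliver the claimed equality either. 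Your instinct that "this is precisely where care is needed" was right; the care reveals that the statement, not just the proof, needs amending.
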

\begin{proof}
The key insight comes from the identity 
\begin{align}
H_n(\alpha,\beta,z) = \frac{z_1 + z_2}{2}\mathbb{1} + \frac{z_2 - z_1}{2} E_n + H_n(\alpha,\beta, 0).
\end{align}
By \cref{tridiagStagger}, $E_n$ and $H_n(\alpha,\beta, 0)$ anticommute, so the theorem statement follows from \cref{eqn-anticommut-evecs}, a result proven in a later chapter.
\end{proof}

Next, consider the irreducible case, where $\alpha_j \neq 0$ and $\beta_j \neq 0$ for all $j \in \dbrac{1,n-1}$. A diagonal similarity transform which maps an irreducible tridiagonal matrix to a transpose symmetric counterpart is \cite{Rzsa1969}, %Do I reference Horn's book, page 174? Or Golub's book, page 24? BETTER: G. Chrystal, “Algebra,” A. and C. Block,
%London, Eng., pt. 2, 2nd ed., p. 500: 1919. 

\begin{align}
\mathcal{D}_{ij} &:= \prod^{i-1}_{k = 1} \sqrt{\frac{\alpha_k}{\beta_k}} \delta^i_{j} \label{triDiagToSymmetric}\\
(\mathcal{D}^{-1} \mathscr{H}_n \mathcal{D})_{ij} &= \sqrt{\alpha_j \beta_j} \delta^i_{j+1} + \sqrt{\alpha_i \beta_i} \delta^j_{i+1} + z_i \delta^i_{j} \nonumber \\
&+ \alpha_n \left(\prod^{n-1}_{k = 1} \sqrt{\frac{\alpha_i}{\beta_i}} \right)\delta^i_{1} \delta^j_{n} + \beta_n \left(\prod^{n-1}_{k = 1} \sqrt{\frac{\beta_i}{\alpha_i}} \right)\delta^i_{n} \delta^j_{1} \\
&= (\mathcal{D}^{-1} \mathscr{H}_n \mathcal{D})_{ji}.
\end{align}

\subsection{Tridiagonal Case}
This section examines the eigenvalue problem and the problem of determining the inverse associated to $\mathscr{H}_n$ in the tridiagonal case $\alpha_n = \beta_n = 0$. 

A peculiar feature of tridiagonal matrices, $\mathscr{H}_n$, is a correspondence between their eigenvectors and the \textit{minors} of $\lambda \mathbb{1} - \mathscr{H}_n$, as defined in \cref{minorsDef}. Furthermore, when it exists, the inverse of $\mathscr{H}_n$ can be calculated using the minors of $\lambda \mathbb{1} - \mathscr{H}_n$.
\begin{definition} \label{minorsDef}
Given a matrix, $A \in \mathbb{C}^{m \times n}$, a \textit{minor} of $A$ is the determinant of a square submatrix of $A$. Explicitly, given two collections of indices with equal cardinality, $I \subseteq \dbrac{1,m}, J \subseteq \dbrac{1,n}$ such that $\text{card}(I) = \text{card}(J)$, a minor is
\begin{align}
\det_{I, J}(A) := \det A_{IJ},
\end{align}
where $A_{IJ} \in \mathfrak{M}_{\text{card}(I)}(\mathbb{C})$ is the submatrix with elements
\begin{align}
(A_{I J})_{ij} := A_{f_I(i) f_J(j)},
\end{align}
and $f_{I}:\dbrac{1, \text{card}(I)} \to I$ and $f_J:\dbrac{1, \text{card}(J)} \to J$ are the unique strictly monotone functions with their respectively defined domains and codomains.
If $A\in \mathfrak{M}_n(\mathbb{C})$ is a square matrix and $I = \dbrac{1, n} \setminus \{i\}, J = \dbrac{1, n} \setminus \{j\}$, then $\det_{I, J}(A)$ is referred to as the $(i, j)$-th \textit{cofactor} of $A$. 
The \textit{adjugate} of a square matrix, which is sometimes referred to as the \textit{classical adjoint} \cite[p. 159]{hoffmann1971linear} or the \textit{adjunct}, is \cite{strang2006linear}
\begin{align}
\text{adj}(A)_{ij} := (-1)^{i+j} \det_{\dbrac{1, n} \setminus \{j\},\dbrac{1, n} \setminus \{i\}}(A).
\end{align}
\end{definition}

The determinant of a tridiagonal matrix is referred to as a \textit{continuant}.
For the specific tridiagonal matrix $\mathscr{H}_n$, consider minors associated to the first and last $i$ rows respectively,
\begin{align}
\theta_i(\lambda) &:= \begin{cases}
\det\limits_{\dbrac{1, i}, \dbrac{1, i}}(\lambda \mathbb{1} - \mathscr{H}_n) & \text{if } i \in \dbrac{1,n} \\
1 & \text{if } i = 0 \\
0 & \text{if } i = -1 
\end{cases} \label{principalMinor}
\\
\phi_i(\lambda)  &:= 
\begin{cases}
\det\limits_{\dbrac{i, n}, \dbrac{i, n}}(\lambda \mathbb{1} - \mathscr{H}_n) & \text{if } i \in \dbrac{1,n}\\
1 & \text{if } i = n+1 \\
0 & \text{if } i = n+2
\end{cases}. \label{backwardsMinor}
\end{align}
In particular, $\theta_n(\lambda) = \phi_1(\lambda)$ is the characteristic polynomial of $\mathscr{H}_n$. Continuants satisfy linear recurrence relations, explicitly
\begin{align}
\theta_i(\lambda) &= (\lambda - z_i) \theta_{i-1}(\lambda) - \alpha_{i-1} \beta_{i-1} \theta_{i-2}(\lambda) &\quad& \forall i \in \dbrac{1,n}, \label{thetaRecurrence} \\
\phi_i(\lambda) &= (\lambda - z_i) \phi_{i+1}(\lambda) - \alpha_{i+1} \beta_{i+1} \phi_{i+2}(\lambda)&\quad& \forall i \in \dbrac{1,n}. \label{phiRecurrence}
\end{align}

%Why does the odd chain not work??
%General linear group higher dimensional representations.
%Different choices for P. One choice: diag(alt signs). Another choice: sigma_y. (in some sense). Clifford group. what are the real space representations associated to these? Another choice: Shift matrix or Clock matrix. Or maybe understand the 4x4 case, since P is in U(2) = exp(alpha_i pauli_i).
%Higher dimensional representations of SU(2).
%Degree of EP? This is the difference between higher dimensional rep of SU(2) and the TB chain.

%Converse?? All evals become complex implies this?

%IS THERE A WAY TO GET SOME SORT OF NONTRIVIAL SSH CHAIN?
%LEGGET GARG INEQUALITY. NON-Hermitian hamiltonian in post-selected manifold ... can go all the way to 3! ... been observed in Oregon...

%Bell + LGI???

%True Hatono Nelson model (asymmetric tridiag) as opposed to the fake one we have now.
%
%The $\phi$'s also should give you an eigenvector which might be linearly independent ...  

The following theorem concerning the eigenvectors of tridiagonal matrices, proven in \cite[p. 69]{Gantmacher2002}, immediately follows by comparing the recurrence relations of \cref{recurrence,thetaRecurrence}:
\begin{theorem} \label{charPolyEvec}
Suppose $\lambda \in \sigma(H_n(\alpha,\beta,z))$ and $\alpha_n = \beta_n = 0$.
%Denote
%\begin{align}
%\theta_i(\lambda) &:= \det(\lambda \mathbb{1} - H_{i}(\alpha_{\dbrac{1,i-1}},\beta_{\dbrac{1,i-1}}, \gamma_{\dbrac{1,i}})), &\quad& i \in \dbrac{1,n-1} \nonumber\\
%&= \det \begin{pmatrix}
%\lambda - \gamma_1 & -\beta_1  & 0        & 0   	  & \dots  	     \\
%-\alpha_1 &\lambda - \gamma_2 & -\beta_2  & 0   	  & 0   	     \\
%0        & -\alpha_2 & \ddots & \ddots     & 0 	         \\
%0        & 0        & \ddots   & \ddots & -\beta_{i-1}    \\
%\vdots        & 0   		& 0   	   & -\alpha_{i-1} & \lambda -\gamma_i 
%\end{pmatrix}.
%\end{align} 
Then an eigenvector corresponding to $\lambda$ is 
\begin{align}
\psi(\lambda) := \sum_{i = 1}^n \,\dfrac{\theta_{i-1}(\lambda)}{\prod^{i-1}_{j = 1} \beta_{j}} \,e_i \in \ker(\lambda \mathbb{1} - H_n(\alpha,\beta,z)), \label{psi(l)}
\end{align}
where $\theta_i(\lambda)$ is defined in \cref{principalMinor}.
\end{theorem}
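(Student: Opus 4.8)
The plan is to verify by direct substitution that $\psi(\lambda)$ lies in $\ker(\lambda\mathbb{1}-H_n(\alpha,\beta,z))$, using the eigenvalue recurrence \cref{recurrence} specialized to $\alpha_n=\beta_n=0$, the continuant recurrence \cref{thetaRecurrence}, and the fact that an eigenvalue is exactly a root of $\theta_n$. Write $\psi_i:=\theta_{i-1}(\lambda)/P_{i-1}$ with $P_{i-1}:=\prod_{j=1}^{i-1}\beta_j$ under the empty-product convention $P_0=1$. Since $\theta_0=1$, this gives $\psi_1=1\neq 0$, so $\psi(\lambda)$ is automatically nonzero and it suffices to check that it solves the recurrence.

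First I would dispatch the interior equations. For $i\in\dbrac{2,n-1}$, substituting the ansatz into $\beta_i\psi_{i+1}=(\lambda-z_i)\psi_i-\alpha_{i-1}\psi_{i-1}$ and clearing the common factor $P_{i-1}$ (using $P_i=\beta_iP_{i-1}$ and $P_{i-1}=\beta_{i-1}P_{i-2}$) reduces the equation to $\theta_i=(\lambda-z_i)\theta_{i-1}-\alpha_{i-1}\beta_{i-1}\theta_{i-2}$, which is precisely \cref{thetaRecurrence}. The first equation $\beta_1\psi_2=(\lambda-z_1)\psi_1$ is the same identity specialized to $i=1$: clearing factors it reads $\theta_1=(\lambda-z_1)\theta_0$, which agrees with \cref{thetaRecurrence} because the $\alpha_0\beta_0\theta_{-1}$ term vanishes on account of $\theta_{-1}=0$.

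Finally, the boundary equation $0=(\lambda-z_n)\psi_n-\alpha_{n-1}\psi_{n-1}$ is where the eigenvalue hypothesis enters. Performing the same substitution and clearing $P_{n-1}$ turns its right-hand side into $(\lambda-z_n)\theta_{n-1}-\alpha_{n-1}\beta_{n-1}\theta_{n-2}=\theta_n(\lambda)$, again by \cref{thetaRecurrence}. Thus the last equation is equivalent to $\theta_n(\lambda)=0$, and since $\theta_n=\phi_1$ is the characteristic polynomial of $\mathscr{H}_n$, it holds precisely because $\lambda\in\sigma(H_n(\alpha,\beta,z))$. Assembling the three cases shows $\psi(\lambda)\in\ker(\lambda\mathbb{1}-H_n(\alpha,\beta,z))$.

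There is no genuine obstacle here; the only thing requiring care is the bookkeeping at the two endpoints — ensuring that the empty-product convention $P_0=1$ and the value $\theta_{-1}=0$ make the first and last lines of \cref{recurrence} into instances of the single recurrence \cref{thetaRecurrence}, with the eigenvalue condition $\theta_n(\lambda)=0$ supplying exactly the vanishing needed to close the terminal equation.
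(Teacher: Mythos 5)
Your proof is correct and takes essentially the same route as the paper, whose entire argument is that the theorem ``immediately follows by comparing the recurrence relations of \cref{recurrence,thetaRecurrence}'' --- precisely the substitution you carry out, with the eigenvalue condition $\theta_n(\lambda)=0$ closing the terminal equation. Your explicit endpoint bookkeeping ($P_0=1$, $\theta_{-1}=0$, and $\psi_1=\theta_0=1\neq 0$ guaranteeing a nonzero vector) merely fills in details the paper leaves implicit.
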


An interesting interlacing result for irreducible Hermitian tridiagonal matrices follows from the relationship between eigenvectors and determinants presented in \cref{charPolyEvec} and the \textit{Cauchy interlacing theorem}, which is a beautiful property regarding the eigenvalues of Hermitian matrices.
\begin{theorem}[Cauchy Interlacing \cite{Hwang2004}] \label{thm:CauchyInterlacing}
Let $J = J^\dag \in \mathfrak{M}_n(\mathbb{C})$ be an $n \times n$ Hermitian matrix with $n \geq 2$, and let $J_{n-1} \in \mathfrak{M}_{n-1}(\mathbb{C})$ denote the principle minor formed by removing the $n$-th row and column from $J$. Since $J, J_{n-1}$ are Hermitian, their eigenvalues are real {\normalfont \cite{Hermite1855}} and, thus, can be sorted as $\lambda_{1}(J) \leq \lambda_2(J) \leq \dots \leq \lambda_n(J)$ and $\lambda_{1}(J_{n-1}) \leq \lambda_2(J_{n-1}) \leq \dots \leq \lambda_{n-1}(J_{n-1})$ respectively, where eigenvalues are included in these sorted list a number of times equal to their algebraic multiplicity. Then for all $i \in \{1, \dots, n-1\}$,
\begin{align}
\lambda_{i}(J) \leq \lambda_{i-1}(J_{n-1}).
\end{align}
\end{theorem}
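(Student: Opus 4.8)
The plan is to prove the standard Cauchy interlacing chain
\begin{align}
\lambda_i(J) \;\leq\; \lambda_i(J_{n-1}) \;\leq\; \lambda_{i+1}(J) \qquad \forall\, i \in \dbrac{1,n-1},
\end{align}
from which the displayed inequality of the theorem (the left half of the chain, under the book's sorting convention) is immediate. The central tool is the Courant--Fischer variational characterization of the eigenvalues of a Hermitian matrix, which itself is a consequence of the spectral \cref{SpectralTheorem}: for the sorted eigenvalues of $J = J^\dag$ one has
\begin{align}
\lambda_k(J) = \min_{\substack{V \subseteq \mathbb{C}^n \\ \dim V = k}} \ \max_{\substack{x \in V \\ x \neq 0}} \frac{\braket{x|Jx}}{\braket{x|x}} = \max_{\substack{U \subseteq \mathbb{C}^n \\ \dim U = n-k+1}} \ \min_{\substack{x \in U \\ x \neq 0}} \frac{\braket{x|Jx}}{\braket{x|x}}.
\end{align}
The key structural observation is that $J_{n-1}$ is the compression of $J$ to the coordinate subspace $W := \{x \in \mathbb{C}^n : x_n = 0\} \cong \mathbb{C}^{n-1}$: for every $x \in W$ the quadratic forms agree, $\braket{x|Jx} = \braket{x|J_{n-1}x}$, so the two operators share the same Rayleigh quotient on $W$.

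For the first inequality I would use the min-max form. Let $V^\star \subseteq W$ be an $i$-dimensional subspace realizing $\lambda_i(J_{n-1})$. Since $V^\star$ is also an $i$-dimensional subspace of $\mathbb{C}^n$, it is an admissible competitor in the minimization defining $\lambda_i(J)$, giving $\lambda_i(J) \leq \max_{x \in V^\star} \braket{x|Jx}/\braket{x|x} = \lambda_i(J_{n-1})$. For the second inequality I would use the dual max-min form. Let $U^\star \subseteq W$ be an $(n-i)$-dimensional subspace realizing $\lambda_i(J_{n-1})$; noting that $n-i = n-(i+1)+1$, this $U^\star$ is an admissible competitor in the maximization defining $\lambda_{i+1}(J)$, whence $\lambda_{i+1}(J) \geq \min_{x \in U^\star} \braket{x|Jx}/\braket{x|x} = \lambda_i(J_{n-1})$. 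Concatenating the two inequalities yields the interlacing chain.

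The only genuine obstacle is bookkeeping: one must track dimensions correctly across the min-max/max-min duality (in particular the shift $k \mapsto k+1$ that produces the right-hand inequality) and verify that restricting a test subspace from $\mathbb{C}^n$ to $W$ preserves dimension and the value of the Rayleigh quotient. Establishing Courant--Fischer itself, should one wish to be self-contained, is routine from the orthonormal eigenbasis guaranteed by the spectral theorem. I would also remark that for the tridiagonal matrices $\mathscr{H}_n$ of this chapter there is an alternative, more computational route: the leading principal continuants $\theta_i(\lambda)$ of \cref{principalMinor} form a Sturm sequence via the recurrence \cref{thetaRecurrence}, and in the irreducible Hermitian case their sign-change counting yields strict interlacing directly; however, since the theorem is stated for an arbitrary Hermitian $J$, the variational proof above is the one I would adopt.
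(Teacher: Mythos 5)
Your variational proof is correct: the Courant--Fischer bookkeeping checks out in both halves (an $i$-dimensional minimizing subspace $V^\star \subseteq W$ for $J_{n-1}$ is admissible in the min-max for $\lambda_i(J)$, and an $(n-i)$-dimensional maximizing subspace $U^\star \subseteq W$ is admissible in the max-min for $\lambda_{i+1}(J)$ since $n-i = n-(i+1)+1$), and the observation that $J_{n-1}$ is the compression of $J$ to $W = \{x : x_n = 0\}$ correctly justifies the equality of Rayleigh quotients. Note, however, that the paper offers no proof of this theorem at all --- it is stated as a cited result from Hwang's 2004 note --- so there is no internal argument to compare against. Hwang's cited proof is in fact a different route from yours: it avoids the min-max principle entirely, instead reducing to the case of distinct eigenvalues by perturbation and then counting sign changes of the characteristic polynomial of $J$ evaluated at the eigenvalues of $J_{n-1}$; that determinant-based argument is closer in spirit to the Sturm-sequence aside you mention for the tridiagonal continuants $\theta_i(\lambda)$ than to your variational proof. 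Your approach buys generality and brevity given the spectral theorem; Hwang's buys elementarity (no variational machinery).

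One point you glossed over: the displayed inequality in the paper, $\lambda_i(J) \leq \lambda_{i-1}(J_{n-1})$ for $i \in \dbrac{1,n-1}$, is not literally ``the left half of the chain'' --- as printed it is undefined at $i = 1$ and, for $i \geq 2$, runs in the \emph{opposite} direction to interlacing (e.g.\ $J = \mathrm{diag}(0,1)$ gives $\lambda_2(J) = 1 \not\leq \lambda_1(J_1) = 0$). This is evidently a typo in the paper for one of the two standard halves $\lambda_i(J) \leq \lambda_i(J_{n-1})$ or $\lambda_{i-1}(J_{n-1}) \leq \lambda_i(J)$; the full two-sided chain you prove, $\lambda_i(J) \leq \lambda_i(J_{n-1}) \leq \lambda_{i+1}(J)$, is exactly what the paper's downstream uses require --- both \cref{cor:CauchyTridiag} and the later argument that an eigenvalue of $J$ with algebraic multiplicity greater than one must lie in $\sigma(J_{n-1})$ depend on sandwiching $\lambda_i(J_{n-1})$ between consecutive eigenvalues of $J$. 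So your proposal proves the correct (intended) statement; it would be worth flagging the misprint rather than asserting the printed inequality follows.
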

\begin{corollary} \label{cor:CauchyTridiag}
The interlacing inequalities given in \cref{thm:CauchyInterlacing} for irreducible tridiagonal matrices are strict inequality.
\end{corollary}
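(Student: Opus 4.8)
The plan is to show that an irreducible Hermitian tridiagonal matrix $J$ and its truncation $J_{n-1}$ share no eigenvalue; once this is established, the weak inequalities of \cref{thm:CauchyInterlacing} immediately upgrade to strict ones, since any equality between an eigenvalue of $J$ and one of $J_{n-1}$ would exhibit a common eigenvalue.

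First I would identify the characteristic polynomials of the two matrices with the continuants of \cref{principalMinor}: writing $J = H_n(\alpha,\beta,z)$ with $\alpha_n = \beta_n = 0$ and $\beta_i = \overline{\alpha_i}$ (Hermiticity), the polynomial $\theta_n(\lambda)$ is the characteristic polynomial of $J$, while $\theta_{n-1}(\lambda)$ is that of $J_{n-1}$. Irreducibility means $\alpha_i \neq 0$ for all $i \in \dbrac{1,n-1}$, so each product $\alpha_i \beta_i = |\alpha_i|^2$ appearing in the recurrence \cref{thetaRecurrence} is strictly positive.

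The key step is a descending induction on \cref{thetaRecurrence}. Suppose $\lambda_0$ satisfies $\theta_n(\lambda_0) = \theta_{n-1}(\lambda_0) = 0$. Evaluating $\theta_n = (\lambda - z_n)\theta_{n-1} - \alpha_{n-1}\beta_{n-1}\theta_{n-2}$ at $\lambda_0$ yields $0 = -|\alpha_{n-1}|^2\,\theta_{n-2}(\lambda_0)$, hence $\theta_{n-2}(\lambda_0) = 0$. Iterating the recurrence forces $\theta_i(\lambda_0) = 0$ for every $i$, contradicting $\theta_0 \equiv 1$. Therefore $\theta_n$ and $\theta_{n-1}$ have no common root, i.e. $\sigma(J) \cap \sigma(J_{n-1}) = \emptyset$. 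Equivalently, one can argue via \cref{charPolyEvec}: the $n$-th component of the eigenvector $\psi(\lambda_0)$ is proportional to $\theta_{n-1}(\lambda_0)$, and if both this and $\theta_n(\lambda_0)$ vanished the recurrence would propagate the vanishing to the first component, contradicting $\theta_0 = 1$.

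To conclude, I would combine the absence of common eigenvalues with the weak interlacing. Each weak inequality appearing in \cref{thm:CauchyInterlacing} relates an eigenvalue of $J$ to a neighbouring eigenvalue of $J_{n-1}$; since these can never coincide, every such inequality is strict. As a byproduct this also shows that $J$ itself has simple spectrum. The main obstacle—indeed the only place where the irreducibility hypothesis is essential—is guaranteeing $\alpha_i\beta_i \neq 0$ at every rung of the descending induction, which for a Hermitian matrix is precisely the positivity $|\alpha_i|^2 > 0$.
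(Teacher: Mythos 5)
Your proof is correct and takes essentially the same route as the paper: both arguments cascade the continuant recurrence \cref{thetaRecurrence} to show that a common root of $\theta_n$ and $\theta_{n-1}$ forces every $\theta_i$ to vanish, which is absurd --- you contradict $\theta_0 \equiv 1$ directly, while the paper phrases the same contradiction via \cref{charPolyEvec} as $e_1$ being an eigenvector, a variant you explicitly note as equivalent. No gaps.
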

\begin{proof}
Suppose $\lambda$ was a root of a tridiagonal matrix, $\mathscr{H}_n$, and its principal minor, $\mathscr{H}_{n-1}$. Then the linearity property for determinants implies that the characteristic polynomials for all minors associated with the first $j \in \dbrac{1,n}$ rows and columns of $\mathscr{H}_n$ evaluate to zero,
\begin{align}
\lambda \in \sigma(\mathscr{H}_{n}) \cap \sigma(\mathscr{H}_{n-1})\,\Rightarrow\,\theta_i(\lambda) = 0 &\quad& \forall i \in \dbrac{1,n}.
\end{align}
Due to the relationship between eigenvectors and $\theta_i$ given in \cref{charPolyEvec}, this would imply $e_1$ is an eigenvector of $\mathscr{H}_n$, a clear contradiction.
\end{proof}

A further corollary of the Cauchy interlacing theorem is that the eigenvalues of irreducible Hermitian tridiagonal matrices are simple. One particularly useful result regarding the multiplicity of eigenvalues for more general cases of tridiagonal matrices is given in \cite[Thm. 4.2]{elliott1953characteristic}, which we summarize here\footnote{Joseph Elliot proved this for transpose symmetric tridiagonal matrices. The similarity transform of \cref{triDiagToSymmetric} reduces the proof of the general case to the transpose symmetric case.}.
\begin{thm} \label{tridiagNondegen}
Given a tridiagonal matrix, $\mathscr{H}_n$ satisfying $\alpha_n = \beta_n = 0$, if an eigenvalue of $\mathscr{H}_n$ has geometric multiplicity $k$, then $\alpha_i \beta_i = 0$ for at least $k-1$ values of $i \in \dbrac{1,n-1}$. Thus, in the irreducible case, where $\alpha_i \neq 0$ and $\beta_i \neq 0$ for all $i \in \dbrac{1, n-1}$, if the characteristic polynomial has less than $n$ distinct roots, then $\mathscr{H}_n$ is defective.
\end{thm}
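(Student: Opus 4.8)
The plan is to bound the geometric multiplicity $\mu_g(\mathscr{H}_n;\lambda)$ directly from the forward recurrence \cref{recurrence}, bypassing the symmetrization of \cref{triDiagToSymmetric} (which is unavailable once off-diagonal entries vanish). Write $P := \{i \in \dbrac{1,n-1} : \beta_i = 0\}$ and $p := \text{card}(P)$, and adopt the conventions $\psi_0 := 0$ and $\alpha_0 := 0$, so that the first $n-1$ rows of $(\lambda \mathbb{1} - \mathscr{H}_n)\psi = 0$ read uniformly $\beta_i \psi_{i+1} = (\lambda - z_i)\psi_i - \alpha_{i-1}\psi_{i-1}$ for $i \in \dbrac{1,n-1}$. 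The key observation is that an eigenvector is completely reconstructed by a forward sweep from the datum $\psi_1$ together with the components $\psi_{i+1}$ at the break points $i \in P$: whenever $\beta_i \neq 0$ this equation solves for $\psi_{i+1}$ in terms of $\psi_{i-1},\psi_i$, and whenever $\beta_i = 0$ the component $\psi_{i+1}$ is left undetermined by the recurrence.

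First I would make this precise as an injectivity statement. Define the linear map $\Phi : \ker(\lambda \mathbb{1} - \mathscr{H}_n) \to \mathbb{C}^{1+p}$ by $\Phi(\psi) = (\psi_1, (\psi_{i+1})_{i \in P})$. A short strong induction on $j \in \dbrac{1,n}$ shows $\Phi(\psi) = 0 \Rightarrow \psi = 0$: the base case is $\psi_1 = 0$, and for $j \geq 2$ the component $\psi_j = \psi_{(j-1)+1}$ either equals a recorded (hence vanishing) free value when $j-1 \in P$, or is forced to zero by $\psi_j = \beta_{j-1}^{-1}[(\lambda - z_{j-1})\psi_{j-1} - \alpha_{j-2}\psi_{j-2}]$ when $j-1 \notin P$. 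Thus $\Phi$ is injective, so $\mu_g(\mathscr{H}_n;\lambda) = \dim \ker(\lambda \mathbb{1} - \mathscr{H}_n) \leq 1 + p$. Note only rows $1, \dots, n-1$ are used here; the unused row $n$ can only shrink the kernel, so the bound is valid.

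Next I would convert this into the stated count. Since $\beta_i = 0$ implies $\alpha_i \beta_i = 0$, we have $P \subseteq \{i : \alpha_i \beta_i = 0\}$, hence $p \leq m$ where $m := \text{card}\{i \in \dbrac{1,n-1} : \alpha_i \beta_i = 0\}$; therefore if $\mu_g(\mathscr{H}_n;\lambda) = k$ then $k \leq 1 + p \leq 1 + m$, i.e. $\alpha_i \beta_i = 0$ for at least $k-1$ indices (indeed for at least $k-1$ indices with $\beta_i = 0$). For the irreducible corollary, $\alpha_i, \beta_i \neq 0$ for all $i$ forces $p = 0$, so every eigenvalue obeys $\mu_g \leq 1$, hence $\mu_g = 1$. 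If the characteristic polynomial has fewer than $n$ distinct roots, then since the algebraic multiplicities sum to $n$, some $\lambda$ has $\mu_a(\mathscr{H}_n;\lambda) \geq 2 > 1 = \mu_g(\mathscr{H}_n;\lambda)$; an algebraic multiplicity strictly exceeding the geometric multiplicity means $\mathscr{H}_n$ fails to be diagonalizable, i.e. it is defective.

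I expect the main obstacle to be phrasing the forward-sweep reconstruction cleanly enough that the injectivity of $\Phi$ is manifestly correct at the boundary (the case $i=1$, handled by $\psi_0 = \alpha_0 = 0$) and at the break points; the bookkeeping conventions above and the remark that rows $1, \dots, n-1$ suffice for an upper bound resolve this. One could alternatively follow the route indicated in the footnote, using \cref{triDiagToSymmetric} to reduce to Elliott's transpose-symmetric result, but that similarity transform is defined only for irreducible matrices, so the direct recurrence argument is preferable for the general geometric-multiplicity bound.
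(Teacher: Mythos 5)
Your proof is correct, but it takes a genuinely different route from the paper's. The paper does not argue from the recurrence at all: it simply cites Elliott's result \cite[Thm.~4.2]{elliott1953characteristic} for transpose-symmetric tridiagonal matrices and remarks in a footnote that the diagonal similarity transform of \cref{triDiagToSymmetric} reduces the general case to the symmetric one. You instead give a self-contained argument: the injectivity of the evaluation map $\Phi(\psi) = (\psi_1, (\psi_{i+1})_{i \in P})$ on $\ker(\lambda\mathbb{1} - \mathscr{H}_n)$, proved by a forward sweep through rows $1,\dots,n-1$ of \cref{recurrence}, which directly yields $\mu_g \leq 1 + \mathrm{card}(P)$. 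The comparison favours your version on two counts. First, as you observe, the transform $\mathcal{D}$ is only defined (and invertible) when all $\beta_k$ (resp.\ $\alpha_k$) are nonzero, so the footnote's reduction covers the irreducible corollary cleanly but not the general multiplicity bound, where off-diagonal products are allowed to vanish; patching this via Elliott's route would require an extra block-triangular decomposition argument at the sites where $\alpha_i\beta_i = 0$, which your recurrence handles automatically. Second, your argument proves something slightly sharper than the stated theorem: at least $k-1$ of the $\beta_i$ themselves vanish (and, applying the same argument to $\mathscr{H}_n^\intercal$, which swaps $\alpha \leftrightarrow \beta$ and preserves geometric multiplicities since rank is transpose-invariant, at least $k-1$ of the $\alpha_i$ vanish as well), of which the count of vanishing products $\alpha_i\beta_i$ is an immediate consequence. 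What the paper's approach buys in exchange is brevity and a pointer into the literature; yours buys generality and a complete proof on the page. The bookkeeping at the boundary ($\psi_0 := 0$, $\alpha_0 := 0$) and the remark that the unused $n$-th row can only shrink the kernel are exactly the right devices to make the induction airtight.
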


The following theorem, given in \cite{Lewis1982,UsmaniDetailed,Usmani1994}, characterizes the inverses of tridiagonal matrices.
\begin{theorem} \label{tridiagInverse}
Assuming $\det \mathscr{H}_n \neq 0$, the matrix elements of the inverse of $\mathscr{H}_n$ are %(I'm pretty sure the minus signs are legit)
\begin{align}
(\mathscr{H}_n^{-1})_{ij} = \frac{1}{\theta_n(0)}\begin{cases}
\theta_{i-1}(0) \phi_{j+1}(0) \prod_{k = i}^{j-1} \beta_k & \text{if } i < j \\
\theta_{i-1}(0) \phi_{i+1}(0) & \text{if } i = j \\
\theta_{j-1}(0) \phi_{i+1}(0) \prod_{k = j}^{i-1} \alpha_k & \text{if } i > j
\end{cases},
\end{align}
where $\theta_{i}(\lambda)$ and $\phi_i(\lambda)$ are defined in \cref{principalMinor,backwardsMinor}.
\end{theorem}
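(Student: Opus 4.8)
The plan is to prove the formula from the adjugate representation of the inverse, $(\mathscr{H}_n^{-1})_{ij} = (-1)^{i+j}\det_{\dbrac{1,n}\setminus\{j\},\dbrac{1,n}\setminus\{i\}}(\mathscr{H}_n)/\det(\mathscr{H}_n)$ supplied by \cref{minorsDef}, evaluating the relevant minors by exploiting the bandwidth of $\mathscr{H}_n$. Since $\theta_i(\lambda)$ and $\phi_i(\lambda)$ of \cref{principalMinor,backwardsMinor} are principal minors of $\lambda\mathbb{1}-\mathscr{H}_n$, at $\lambda=0$ they equal $(-1)^i$ and $(-1)^{n-i+1}$ times the determinants of the leading $i\times i$ and trailing $(n-i+1)\times(n-i+1)$ principal submatrices of $\mathscr{H}_n$; in particular $\det(\mathscr{H}_n)=(-1)^n\theta_n(0)$. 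For the diagonal entries $i=j$, deleting row $i$ and column $i$ leaves a block-diagonal matrix whose blocks are the leading $(i-1)\times(i-1)$ submatrix and the trailing submatrix on $\dbrac{i+1,n}$, since tridiagonality severs every entry linking the two index ranges once row and column $i$ are removed. Its determinant factors into the two continuants $\theta_{i-1}(0)\phi_{i+1}(0)$ after the sign conversion, while the cofactor sign $(-1)^{2i}=1$ is trivial.

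For $i<j$ I delete row $j$ and column $i$ and partition the surviving rows into $\dbrac{1,i-1},\dbrac{i,j-1},\dbrac{j+1,n}$ and the surviving columns into $\dbrac{1,i-1},\dbrac{i+1,j},\dbrac{j+1,n}$. Bandwidth one forces the blocks above the block diagonal of this $3\times 3$ partition to vanish, so the minor is block lower-triangular with three square diagonal blocks: the leading $(i-1)$ block (contributing $\theta_{i-1}(0)$), the trailing block on $\dbrac{j+1,n}$ (contributing $\phi_{j+1}(0)$), and a middle $(j-i)\times(j-i)$ block on rows $\dbrac{i,j-1}$ and columns $\dbrac{i+1,j}$. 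Direct inspection shows the middle block is itself lower triangular with the superdiagonal entries $\beta_i,\dots,\beta_{j-1}$ on its diagonal, so its determinant is $\prod_{k=i}^{j-1}\beta_k$; multiplying the three block determinants and folding in the cofactor and continuant signs gives the claimed expression. The case $i>j$ then follows by transpose symmetry rather than a repeat of the argument: $\mathscr{H}_n^{\mathsf T}$ is again of the form $H_n$ with $\alpha$ and $\beta$ interchanged while $\theta$ and $\phi$ are unchanged, and $(\mathscr{H}_n^{-1})_{ij}=((\mathscr{H}_n^{\mathsf T})^{-1})_{ji}$, so applying the established $i<j$ formula with $\alpha\leftrightarrow\beta$ and $i\leftrightarrow j$ produces the subdiagonal product $\prod_{k=j}^{i-1}\alpha_k$.

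I expect the main obstacle to be bookkeeping rather than conceptual difficulty. One must verify carefully that the blocks above the block diagonal genuinely vanish — this is exactly where the tridiagonal hypothesis $\alpha_n=\beta_n=0$ is essential — and then reconcile the three sources of sign: the $(-1)^{i+j}$ cofactor factor, the $(-1)^i$-type factors relating $\theta_i(0)$ and $\phi_i(0)$ to honest principal-submatrix determinants, and the $(-1)^n$ in $\det(\mathscr{H}_n)=(-1)^n\theta_n(0)$, so that they collapse to the stated coefficients. As a cross-check that sidesteps the cofactor signs entirely, I would keep in reserve the direct verification that $\mathscr{H}_n$ times the proposed inverse equals $\mathbb{1}$, collapsing each matrix-product sum by means of the continuant recurrences \cref{thetaRecurrence,phiRecurrence}.
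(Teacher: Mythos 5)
Your route coincides with the paper's own proof, which consists of exactly the two ingredients you name: the adjugate identity $A^{-1} = \text{adj}(A)/\det A$ from \cref{minorsDef}, with the cofactors reduced to continuants by ``elementary properties of the determinant.'' Your block lower-triangular decomposition of the deleted-row/column minors is the right way to fill in that sketch, and the transpose reduction $\alpha \leftrightarrow \beta$, $i \leftrightarrow j$ for the subdiagonal case is valid since transposition preserves both $\theta_i$ and $\phi_i$.

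However, the one step you defer --- that the three sign sources ``collapse to the stated coefficients'' --- is precisely the step that fails. Carrying out your own plan for $i \le j$: the deleted minor is block lower triangular as you say, so
\begin{align}
(\mathscr{H}_n^{-1})_{ij} = (-1)^{i+j}\,\frac{\det\limits_{\dbrac{1,i-1},\dbrac{1,i-1}}(\mathscr{H}_n)\left(\prod_{k=i}^{j-1}\beta_k\right)\det\limits_{\dbrac{j+1,n},\dbrac{j+1,n}}(\mathscr{H}_n)}{\det \mathscr{H}_n},
\end{align}
and converting via $\det_{\dbrac{1,i-1},\dbrac{1,i-1}}(\mathscr{H}_n) = (-1)^{i-1}\theta_{i-1}(0)$, $\det_{\dbrac{j+1,n},\dbrac{j+1,n}}(\mathscr{H}_n) = (-1)^{n-j}\phi_{j+1}(0)$, and $\det\mathscr{H}_n = (-1)^n \theta_n(0)$ gives a total sign exponent $(i+j)+(i-1)+(n-j)-n = 2i-1$, which is odd regardless of $i,j,n$. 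So your argument, done correctly, proves the displayed formula with an extra overall factor of $-1$; the same residual sign appears in the diagonal case. The discrepancy is in the theorem as printed, not in your block analysis: the Usmani-type formula holds for continuants of $\mathscr{H}_n$ itself together with the explicit $(-1)^{i+j}$ cofactor factor, and evaluating the paper's $\theta_i, \phi_i$ of \cref{principalMinor,backwardsMinor} at $\lambda = 0$ (continuants of $-\mathscr{H}_n$) absorbs $(-1)^{i+j}$ but leaves a global $-1$. A two-by-two check makes this concrete: $\mathscr{H}_2$ with $z_1 = z_2 = 0$, $\alpha_1 = \beta_1 = 1$ is its own inverse, yet the stated formula yields $(\mathscr{H}_2^{-1})_{12} = \theta_0(0)\phi_3(0)\beta_1/\theta_2(0) = 1/(-1) = -1$ instead of $+1$. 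Note also that your reserve cross-check is not a mere backup here --- it is the test that would have caught this, since multiplying $\mathscr{H}_n$ by the stated expression and collapsing with \cref{thetaRecurrence,phiRecurrence} produces $-\mathbb{1}$ rather than $\mathbb{1}$. So before finalizing, either insert the overall minus (equivalently replace $\theta_n(0)$ by $-\theta_n(0)$ in the denominator) or restate the theorem with continuants of $\mathscr{H}_n$ and the $(-1)^{i+j}$ prefactor.
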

\begin{proof}
A well known result from linear algebra is an explicit formula for the inverse of a matrix, $A \in \text{GL}_n(\mathbb{C})$,
\begin{align}
A^{-1} = \frac{\text{adj}(A)}{\det A}.
\end{align}
The relationship between the matrix elements of $\text{adj}(A)$ and the continuants $\theta_i(0), \phi_j(0)$ follows from elementary properties of the determinant.
\end{proof}

%Other papers on tridiagonal matrix inverses: \cite{daFonseca2007}.

%P116 ODDITIES...

%%OLD EXAMPLE
%\begin{ex}
%Consider the tridiagonal matrix
%\begin{align}
%A = \begin{pmatrix}
%i \omega & 1 & 0 & 0\\
%1 & 0 & 1 & 0\\
%0 & 1 & 0 & 1 \\
%0 & 0 & 1 & -i \omega.
%\end{pmatrix}
%\end{align}
%We have the characteristic polynomials
%\begin{align}
%\text{det} \begin{pmatrix}
%\lambda - i \omega
%\end{pmatrix} &= \lambda - i \omega \\
%\text{det} \begin{pmatrix}
%\lambda - i \omega & -1 \\
%-1 & \lambda
%\end{pmatrix} &= \lambda^2 - i \lambda \omega - 1 \\
%\text{det} \begin{pmatrix}
%\lambda - i \omega & -1 & 0\\
%-1 & \lambda & -1\\
%0 & -1 & \lambda & 0
%\end{pmatrix} &= \lambda^3 - i \lambda^2 \omega - 2 \lambda + i \omega \\
%\text{det} \begin{pmatrix}
%\lambda - i \omega & -1 & 0 & 0\\
%-1 & \lambda & -1 & 0 \\
%0 & -1 & \lambda & -1 \\
%0 & 0 & -1 & \lambda + i \omega
%\end{pmatrix} &= \lambda^4 + \lambda^2(\omega^2-3)- \omega^2+ 1.
%\end{align}
%Thus, 
%\begin{align}
%\sigma(A) = \{ (-1)^j \dfrac{\sqrt{3 - \omega^2} + (-1)^k \sqrt{\omega^4 - 2 \omega^2 + 5}}{\sqrt{2}}\,|\, j,k \in \{0,1\} \}.
%\end{align}
%\end{ex}
\subsection{Solved Examples}

%Goals: Remove examples 2.7,2.8 Add eigensystem solution to case of $p = 2$, $m$ arbitrary. Handle $n$ odd properly. See if we can derive constant eigenvalues for this case as well. Move the stuff about real roots to another section. Eliminate eigensystem solution section from 2.3. Fix multiline equations.

This section showcases a tactic for solving eigensystems of one-dimensional lattice models. First, the characteristic polynomial is computed using a linear recurrence relation. The eigenvalues are the roots of this polynomial. Then, an ansatz for the eigenvectors corresponding to the eigenvalues is given by treating the eigenvalue equation as a linear recurrence relation. More detailed properties of the eigensystems, such as cases where the eigenvalues have closed-form expressions, are left to later parts of this chapter.

The final example of this section takes a different approach, instead relying on the geometry of the inhomogenous Lorentz group to analyse generalizations of the Sylvester-Kac matrices.

One benefit of compiling solved examples is the ability to readily investigate their generalizations. For instance, the linearity property of determinants on rows and columns can be used to express the characteristic polynomial of a complicated matrix in terms of closed-form characteristic polynomials associated with simpler matrices. As another example, the main result of \cite{Chu2010} can be derived combining the algebraic considerations in \cref{su(2)tridiagExample} with the chiral-symmetry result in \cref{Dyachenko2021Thm}.

The \textit{Chebyshev polynomials} are particularly useful in the analysis of certain tridiagonal matrix eigenvalue problems. Some key results regarding the Chebyshev polynomials are compiled in \cref{Chebyshev Appendix}. For now, I simply define the Chebyshev polynomials of the second kind. {}{}{}{}%RED JACOB COME BACK
\begin{defn}
The \textit{Chebyshev polynomials of the second kind} are the unique solution to the linear recurrence relation
\begin{align}
U_{j+1}(x) := 2 x U_j(x) - U_{j-1}(x), \label{Chebyshev-Recurrence}
\end{align}
where $j \in \mathbb{Z}$ and the initial conditions are 
\begin{align}
U_{-1}(x) = 0 &\quad& U_0(x) = 1.
\end{align}
\end{defn}
When the argument of a Chebyshev polynomial is a real number, this polynomial can be expressed using trigonometric functions. For example,
\begin{align}
U_{j}(x) &:= \begin{cases}
\dfrac{\sin ((j+1) \arccos x)}{\sin (\arccos x)} & \text{if } x^2 < 1 \\[10pt] 
(\pm 1)^j (j+1) &\text{if } x = \pm 1 \\[10pt]
\dfrac{\sinh ((j+1) \text{arccosh} \, x)}{\sinh (\text{arccosh}\, x)} & \text{if } x^2 > 1.
\end{cases}\label{ChebyshevSecondKind}
\end{align}

Examples of tridiagonal matrices whose eigenvalue problems admit closed-form solutions which are not discussed in this section can be found in \cite{Eberlein1964,Chang2009,Chu2019,Alazemi2021,KILI2016}.

\begin{ex} \label{Toeplitz}
Consider the adjacency matrix of the path graph with $n$ vertices, which corresponds to $\mathscr{H}_n$ with the parameters $\alpha = \beta = \textbf{e}_{\dbrac{1,n-1}}$ and $z = 0$. This eigenvalue problem has found applications in \cite{rutherford1948xxv}. %,Lieb1961 more. .

As summarized in \cite{Muir1906,muir1920theory}, %\cite[p.670-671]{Muir1906}, 
J. Wolstenholme and J. W. L. Glaisher determined the characteristic polynomial of $\mathscr{H}_n$ \cite{Wolstenholme1874}, %The case with three constant diagonals is printed in Muir's 1906 treatise on page 564, problem 8.
\begin{align}
\text{det}(\lambda \mathbb{1} - \mathscr{H}_n) = U_i\left(\frac{\lambda}{2}\right).
\end{align}
The roots of the Chebyshev polynomials are
\begin{align}
U_n\left( \cos \left(\frac{k \pi}{n+1} \right)\right) = 0 &\quad& \forall k \in \dbrac{1, n}, \label{ChebyshevSecondKindRoots}
\end{align}
so the spectrum of $\mathscr{H}_n$ is
\begin{align}
\sigma(\mathscr{H}_n) = \left\{2 \cos \left(\frac{k \pi}{n+1} \right) : k \in \dbrac{1,n} \right\}.
\end{align}
By \cref{charPolyEvec,tridiagNondegen}, the eigenvector $\psi:\sigma(A) \to \ker(\lambda \mathbb{1} - \mathscr{H}_n)$ corresponding to an eigenvalue $\lambda \in \sigma(\mathscr{H}_n)$ is
\begin{align}
\psi(\lambda) = \sum_{i = 1}^n \psi_1 U_{i-1}\left(\frac{\lambda}{2} \right) e_i.
\end{align}
Applying \cref{tridiagInverse}, the inverse of $\mathscr{H}_{2n}$ is
\begin{align}
(\mathscr{H}_{2n}^{-1})_{jk}= (-1)^{n+j+k} \sin\left(\frac{\min\{j,k\} \pi}{2}\right) \sin\left(\frac{(2 n-\max\{j,k\}+1) \pi}{2}\right).
\end{align} 
\demo
\end{ex}

\begin{ex} \label{Ortega}
$\alpha = \beta = \textbf{e}_{\dbrac{1,n-1}}, z = z_{m_1} e_{m_1} + z_{m_2} e_{m_2}, m_1 < m_2$. The full solution of the eigensystem problem for $m_2 = n-m_1+1$ appears in \cite{ortega2019mathcal}, special cases of which have appeared in \cite{rutherford1948xxv,Babbey}. The characteristic polynomial is found by using the linearity property of determinants on the rows and columns indexed by $m_1, m_2$, resulting in an expression containing the characteristic polynomial of \cref{Toeplitz},
\begin{align}
\det(\lambda \mathbb{1} - \mathscr{H}_n)
&= U_n \left(\frac{\lambda}{2} \right) 
+ z_{m_1} z_{m_2} U_{n - m_2} \left(\frac{\lambda}{2} \right) U_{m_2 - m_1 - 1} \left(\frac{\lambda}{2} \right) U_{m_1-1} \left(\frac{\lambda}{2} \right) \nonumber \\
&- z_{m_1} U_{n-m_1}\left(\frac{\lambda}{2} \right) U_{m_1-1}\left(\frac{\lambda}{2} \right) 
- z_{m_2} U_{n-m_2} \left(\frac{\lambda}{2} \right) U_{m_2 - 1} \left(\frac{\lambda}{2} \right).
\end{align}
In the special case where $m_2 = n-m_1+1$, the characteristic equation simplifies to
\begin{align}
\text{det}(\lambda \mathbb{1} - \mathscr{H}_n) &= U_{n} \left(\frac{\lambda}{2}\right)
+ z_{m_1} z_{m_2} U_{n-2 m_1} \left(\frac{\lambda}{2} \right) U_{m_1-1}^2\left(\frac{\lambda}{2} \right) \\
 &- (z_{m_1} + z_{m_2}) U_{n-m_1} \left(\frac{\lambda}{2} \right) U_{m_1-1} \left(\frac{\lambda}{2} \right). \label{ortegaCharEq}
\end{align}

The eigenvalues of $\mathscr{H}_n$ are the roots of \cref{ortegaCharEq}. By \cref{charPolyEvec,tridiagNondegen}, the unique eigenvector corresponding to an eigenvalue $\lambda \in \mathscr{H}_n$ is (up to a normalization constant, $\psi(\lambda)_1$)
\begin{align}
\frac{\psi(\lambda)_{i+1}}{\psi(\lambda)_1} &= \begin{cases}
U_{i}\left(\frac{\lambda}{2} \right) & \text{if } i \in \dbrac{0, m_1-1} \\[10pt]
U_i\left(\frac{\lambda}{2} \right) - z_{m_1} U_{i-m_1} \left( \frac{\lambda}{2} \right)U_{m_1-1} \left( \frac{\lambda}{2} \right) & \text{if }  i \in \dbrac{m_1, m_2-1}\\[10pt]
\begin{array}{l}
U_{i} \left(\frac{\lambda}{2} \right) +
z_{m_1} z_{m_2} U_{i - m_2} \left(\frac{\lambda}{2} \right) U_{m_2 - m_1} \left(\frac{\lambda}{2} \right) U_{m_1-1} \left(\frac{\lambda}{2} \right) \\
- z_{m_1} U_{i-m_1}\left(\frac{\lambda}{2} \right) U_{m_1-1}\left(\frac{\lambda}{2} \right) \\
- z_{m_2} U_{i-m_2} \left(\frac{\lambda}{2} \right) U_{m_2 - 1} \left(\frac{\lambda}{2} \right)
\end{array} & \text{if } i \in \dbrac{m_2, n-1}.
\end{cases}
\end{align}

In the special case where $n+1 = m_1 + m_2$, since $\mathscr{H}_n$ is centrohermitian, by \cref{AntilinearSpectrumSymmetry}, a simpler expression for $\psi_i$ is 
\begin{align}
\frac{\psi(\lambda)_{i+1}}{\psi(\lambda)_1} &= \begin{cases}
U_{i}\left(\frac{\lambda}{2} \right) & \text{if } i \in \dbrac{0, m_1-1} \\[10pt]
U_i\left(\frac{\lambda}{2} \right) - \gamma_{m_1} U_{i-m_1} \left( \frac{\lambda}{2} \right)U_{m_1-1} \left( \frac{\lambda}{2} \right) & \text{if }  i \in \dbrac{m_1, m_2-1}\\[10pt]
U_{n-i-1}\left(\frac{\lambda^*}{2} \right) & \text{if } i \in \dbrac{m_2, n-1}
\end{cases}.
\end{align}
%Table about cases with explicit solutions (add more?)
The roots of this characteristic polynomial admit closed-form expressions for special cases of the potential strength $z_i$, as summarized in \cref{closedFormEvalsTable}.

\demo
\end{ex}

One technique for solving linear recurrence relations is to reduce the problem to computing matrix powers. For our applications, we will use the following formula for the powers of a $2 \times 2$ matrix, $A$,
\begin{theorem}
If $A \in \mathfrak{M}_2(\mathbb{C})$, then given $n \in \gls{Z+}$,
\begin{align}
A^n &= \begin{cases}
(\det A)^{(n-1)/2} U_{n-1} \left(\frac{{\normalfont \text{tr}} A}{2 (\det A)^{1/2}} \right) A - (\det A)^{n/2} U_{n-2} \left(\frac{{\normalfont \text{tr}} A}{2 (\det A)^{1/2}} \right) \mathbb{1} & \text{ if } \det A \neq 0 \\
({\normalfont \text{tr}} A )^{n-1} A & \text{ if } \det A = 0
\end{cases}. \label{2x2Powers}
\end{align}
\end{theorem}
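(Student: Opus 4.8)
The plan is to reduce everything to the Cayley--Hamilton theorem. Since $A \in \mathfrak{M}_2(\mathbb{C})$, its characteristic polynomial is $\lambda^2 - \text{tr}(A)\lambda + \det(A)$, so Cayley--Hamilton gives
\begin{align}
A^2 = \text{tr}(A)\, A - \det(A)\, \mathbb{1}.
\end{align}
By induction every power $A^n$ lies in the span of $\{\mathbb{1}, A\}$, so I would write $A^n = a_n A + b_n \mathbb{1}$ for scalars $a_n, b_n \in \mathbb{C}$. Multiplying this relation by $A$ and substituting the Cayley--Hamilton identity yields $a_{n+1} = \text{tr}(A) a_n + b_n$ and $b_{n+1} = -\det(A) a_n$; eliminating $b_n = -\det(A) a_{n-1}$ gives the single three-term recurrence
\begin{align}
a_{n+1} = \text{tr}(A)\, a_n - \det(A)\, a_{n-1}.
\end{align}

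Next, in the case $\det A \neq 0$, I would guess the solution $a_n = (\det A)^{(n-1)/2} U_{n-1}(x)$ with $x := \text{tr}(A)/\bigl(2(\det A)^{1/2}\bigr)$ and verify it against the recurrence. Substituting and dividing through by $(\det A)^{(n-2)/2}$ collapses the recurrence for $a_n$ exactly into the Chebyshev recurrence \cref{Chebyshev-Recurrence}, $U_n(x) = 2x U_{n-1}(x) - U_{n-2}(x)$, since $2x = \text{tr}(A)/(\det A)^{1/2}$. The base cases $A^1 = A$ and $A^2 = \text{tr}(A) A - \det(A)\mathbb{1}$ fix $a_1 = 1 = U_0(x)$ and $a_2 = \text{tr}(A) = (\det A)^{1/2} U_1(x)$, matching the ansatz because $U_0 = 1$ and $U_1(x) = 2x$. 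The companion sequence is then $b_n = -\det(A) a_{n-1} = -(\det A)^{n/2} U_{n-2}(x)$, where the case $n = 1$ uses $U_{-1} = 0$; this is precisely the claimed formula.

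One point requiring care is that $(\det A)^{1/2}$ appears both inside $U_{n-1}$ and in the prefactor, so I would argue the stated expression is independent of the chosen branch of the square root. This follows from the parity of the Chebyshev polynomials: $U_m$ contains only the monomials $x^{m}, x^{m-2}, \dots$, so in $(\det A)^{(n-1)/2} U_{n-1}(x)$ each term $(\det A)^{(n-1)/2} x^{n-1-2k}$ reduces to $2^{-(n-1-2k)} (\text{tr}A)^{n-1-2k}(\det A)^{k}$, a genuine polynomial in $\text{tr}(A)$ and $\det(A)$ with no surviving square root; the same holds for $b_n$. Hence both branches give the identical matrix and the formula is well defined.

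Finally, the degenerate case $\det A = 0$ is immediate: Cayley--Hamilton degenerates to $A^2 = \text{tr}(A) A$, and a one-line induction gives $A^n = (\text{tr}A)^{n-1} A$. I expect no serious obstacle in this proof; the only subtlety is the bookkeeping of the square-root branch and confirming that the initial conditions line up with the conventions $U_{-1} = 0$ and $U_0 = 1$.
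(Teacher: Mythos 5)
Your proposal is correct and follows essentially the same route as the paper: the Cayley--Hamilton theorem yields the three-term recurrence $A^{n+2} = (\operatorname{tr} A) A^{n+1} - (\det A) A^n$, which the Chebyshev ansatz is then verified to solve. You merely fill in details the paper leaves implicit (the reduction to the scalar recurrence for the coefficients of $A$ and $\mathbb{1}$, the base cases, and the branch-independence of $(\det A)^{1/2}$, which the paper handles equivalently via $U_n(x) = (-1)^n U_n(-x)$).
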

\begin{proof}
A proof for the case where $A$ is invertible can be found in \cite{Ricci1975}.
For a 2 $\times$ 2 matrix, $A$, the Cayley-Hamilton theorem \cite{Cayley1858,Frobenius1878} implies
\begin{align}
A^2 &= (\text{tr} A )A - (\det A) \mathbb{1} \, \Rightarrow \\
A^{n+2} &= (\text{tr} A) A^{n+1} - (\det A) A^n.
\end{align}
The solution to this recurrence relation ($n \geq 1$) is \cref{2x2Powers}.
\end{proof}
Amusingly, it does not matter which square root of $\det A$ is used in \cref{2x2Powers}, since 
\begin{align}
U_n(x) = (-1)^n U_n(-x).
\end{align}

The following example generalizes \cref{Toeplitz} by allowing the corner elements $\alpha_n, \beta_n$ to be nonzero.
\begin{ex} \label{nontrivialBoundaryExample}
Consider $\mathscr{H}_n$ with the parameters
$\alpha = \textbf{e}_{\dbrac{1,n-1}} + \alpha_n e_n, \beta = \textbf{e}_{\dbrac{1,n-1}} + \beta_n e_n, \gamma = 0$. The special case $\alpha_n = \beta_n = 1$ is an example of a circulant matrix, whose eigensystem was determined in \cite{got1911demonstration}. A solution to the problem where $\alpha_n, \beta_n$ are left arbitrary is given in \cite{YUEH2008}. %An application of this special case to spin-chains can be found in \cite{Lieb1961}.

The eigenvalue equation for sites $i \in \dbrac{2, n-1}$ can be recast as 
\begin{align}
\begin{pmatrix}
\psi_{i+1} \\
\psi_i
\end{pmatrix} &= \begin{pmatrix}
\lambda & -1 \\
1 & 0
\end{pmatrix} \begin{pmatrix} \psi_{i} \\ \psi_{i-1} \end{pmatrix} &\quad& \forall i \in \dbrac{2,n-1}. \label{PBCStepOne}
\end{align}
Recursively applying \cref{PBCStepOne} on the vector $(\psi_i, \psi_{i-1})^T$ results in a relationship between $\psi_i$, $\psi_{1}$ and $\psi_2$,
\begin{align}
\begin{pmatrix}
\psi_{i} \\
\psi_{i-1}
\end{pmatrix} &= \begin{pmatrix}
\lambda & -1 \\
1 & 0
\end{pmatrix}^{i-2} \begin{pmatrix}
\psi_2 \\
\psi_1
\end{pmatrix} \\
&= \begin{pmatrix}
U_{i-2} \left(\frac{\lambda}{2} \right) & -U_{i-3} \left(\frac{\lambda}{2} \right) \\
U_{i-3} \left(\frac{\lambda}{2} \right) & -U_{i-4} \left(\frac{\lambda}{2} \right)
\end{pmatrix} \begin{pmatrix}
\psi_2 \\
\psi_1
\end{pmatrix} \Rightarrow \\
\psi_{i} &= 
U_{i-2} \left(\frac{\lambda}{2} \right) \psi_2 - 
U_{i-3} \left(\frac{\lambda}{2} \right) \psi_1. \label{PBCevec}
\end{align}
The eigenvalue equations at sites $i = 1, n$, which are interpreted as boundary conditions, provide two additional relationships between $\psi_n$, $\psi_{1}$ and $\psi_2$,
\begin{align}
\alpha_n \psi_n &= \lambda \psi_1 - \psi_2 \label{PBCSiteOne} \\
\beta_n \psi_1 &= \lambda \psi_n - \psi_{n-1}.\label{PBCSiteTwo}
\end{align}
Substituting \cref{PBCevec} into \cref{PBCSiteOne,PBCSiteTwo}, we find
\begin{align}
\begin{pmatrix}
\alpha_n U_{n-2}\left(\frac{\lambda}{2} \right) + 1& -\alpha_n U_{n-3}\left(\frac{\lambda}{2} \right)-\lambda\\
-U_{n-1}\left(\frac{\lambda}{2} \right) & \beta_n +U_{n-2} \left(\frac{\lambda}{2} \right)
\end{pmatrix} \begin{pmatrix}
\psi_2 \\
\psi_1
\end{pmatrix} = 0. \label{PBCStepTwo}
\end{align}
A nontrivial eigenvector must have $(\psi_2, \psi_1)^T \neq 0$; thus, the determinant of the $2 \times 2$ matrix in \cref{PBCStepTwo} must vanish. The following identity for Chebyshev polynomials, %{\color{red}(add reference)}
\begin{align}
U_{a-1}(x) U_{b}(x) - U_{b-1}(x) U_a(x) = U_{a-b-1}(x),
\end{align} 
simplifies the emergent determinant equation, resulting in the characteristic polynomial,
\begin{align}
\gls{det} \left(\lambda I - \mathscr{H}_n\right) &= -\lambda U_{n-1} \left(\frac{\lambda}{2} \right) + \alpha_n \lambda + U_{n-2} \left(\frac{\lambda}{2} \right) + \alpha_n \beta_n U_{n-2}\left(\frac{\lambda}{2} \right) + \beta_n \\
&= U_{n} \left(\frac{\lambda}{2} \right) + \alpha_n \beta_n U_{n-2}\left(\frac{\lambda}{2} \right) + \alpha_n + \beta_n, \label{PBCcharPoly}
\end{align}
which is a special case of \cite[eq. (3.10)]{YUEH2008}. Special cases with explicit eigenvalues are given in \cite[Thm. 3.4]{YUEH2008}. A simpler way of deriving \cref{PBCcharPoly} would be to apply the linearity property of determinants to the columns indexed by $1$ and $n$, which reduces the problem to \cref{Toeplitz}.

If the matrix on the left hand side of \cref{PBCStepTwo} is identically zero for a given eigenvalue, then this eigenvalue has geometric multiplicity of two.
\demo
\end{ex}

\begin{ex}
Consider a tridiagonal matrix, $\alpha_n = \beta_n = 0$, where the remaining elements are $p$-periodic. Explicitly, given $p \in \gls{Z+}$, then %Suppose given $i, j \in \{1, \dots, n-1\}$ and
\begin{align}
(i,j \in \dbrac{1,n-1} )\wedge (i \equiv j \mod p) \,&\Rightarrow \, \alpha_i = \alpha_j, \beta_i = \beta_j \\
(i,j \in \dbrac{1,n} )\wedge (i \equiv j \mod p)\,&\Rightarrow \, z_i = z_j.
\end{align}
%TO-DO: REWTITE THIS PART
The characteristic polynomial was originally solved in \cite{Elsner1967,Rzsa1969}. The solution was later rediscovered in \cite{Beckermann1995}, applied to the eigenvalue problem in the $p = 2$ case in \cite{Gover1994}, and applied to the eigenvalue problem for the general $p$ case in \cite{Gilewicz1999}. The matrix $\mathscr{H}_n$ for the $p = 2$ case coincides with the Hamiltonian of the Su-Schrieffer-Heeger model \cite{SSH}.

The remaining text in this example displays a derivation of the characteristic polynomial for the $p = 2$ case, presenting the result in \cref{2periodicCharPoly}.
A linear recurrence relation for the characteristic polynomial,
\begin{align}
D_n\left(
\begin{pmatrix} \alpha_1 \\ \alpha_2 \end{pmatrix},
\begin{pmatrix} \beta_1 \\ \beta_2 \end{pmatrix},
\begin{pmatrix} z_1 \\ z_2 \end{pmatrix}
\right) := \det(\lambda \mathbb{1} - H_n(\alpha,\beta, z)),
\end{align}
can be derived using linearity properties of determinants. Using the shorthand 
\begin{align}
\mathscr{D}_n &:= D_n\left(
\begin{pmatrix} \alpha_1 \\ \alpha_2 \end{pmatrix},
\begin{pmatrix} \beta_1 \\ \beta_2 \end{pmatrix},
\begin{pmatrix} z_1 \\ z_2 \end{pmatrix}
\right) \\
\tilde{\mathscr{D}}_n &:= D_n \left(
\begin{pmatrix} \alpha_2 \\ \alpha_1 \end{pmatrix},
\begin{pmatrix} \beta_2 \\ \beta_1 \end{pmatrix},
\begin{pmatrix} z_2 \\ z_1 \end{pmatrix}
\right),
\end{align}
we find
\begin{align}
\mathscr{D}_n &= 
(\lambda-z_1) \tilde{\mathscr{D}}_{n-1} - \alpha_1 \beta_1  \mathscr{D}_{n-2} \\
&= 
\left((\lambda-z_1)(\lambda - z_2) - \alpha_1 \beta_1 - \alpha_2 \beta_2 \right) \mathscr{D}_{n-2} + \alpha_2 \beta_2 \left( \mathscr{D}_{n-2} - (\lambda - z_1) \tilde{\mathscr{D}}_{n-3}
\right) \\
&= 
\left((\lambda-z_1)(\lambda - z_2) - \alpha_1 \beta_1 - \alpha_2 \beta_2 \right) \mathscr{D}_{n-2} -\alpha_1 \beta_1 \alpha_2 \beta_2 \mathscr{D}_{n-4}. \label{DnRecurrence}
\end{align}
The recurrence relation of \cref{DnRecurrence} could be considered an alternative definition of $D_n$ when supplanted with the initial conditions 
\begin{align}
\mathscr{D}_{2} &= (\lambda - z_1)(\lambda - z_2) - \alpha_1 \beta_1 \\
\mathscr{D}_{1} &= (\lambda - z_1) \\
\mathscr{D}_{0} &= 1\\
\mathscr{D}_{-1} &= 0.
\end{align}
In particular, the recurrence relation allows for a self-consistent definition of $D_{-2}$,
\begin{align}
\mathscr{D}_{-2} &= -\frac{1}{\alpha_1 \beta_1}.
\end{align}

%---------%

The recurrence relation can be rewritten using matrix powers. Letting $n = 2k + r$ with $k \in \mathbb{N}$ and $r \in \{0,1\}$, assume $\alpha_1 \beta_1 \alpha_2 \beta_2 \neq 0$, and let 
\begin{align}
Q = \frac{(\lambda - z_1)(\lambda-z_2) - \alpha_1 \beta_1 - \alpha_2 \beta_2}{2 \sqrt{\alpha_1 \beta_1 \alpha_2 \beta_2}}.
\end{align}
We find
\begin{align}
\begin{pmatrix}
\mathscr{D}_n \\
\mathscr{D}_{n-2}
\end{pmatrix}
&= \begin{pmatrix}
(\lambda - z_1)(\lambda-z_2) - \alpha_1 \beta_1 - \alpha_2 \beta_2 & -\alpha_1 \beta_1 \alpha_2 \beta_2 \\
1 & 0
\end{pmatrix}^k 
\begin{pmatrix}
\mathscr{D}_r \\
\mathscr{D}_{r-2}
\end{pmatrix} \\
&= (\alpha_1 \beta_1 \alpha_2 \beta_2)^{k/2} \begin{pmatrix}
U_{k}(Q) & -(\alpha_1 \beta_1 \alpha_2 \beta_2)^{1/2} U_{k-1}(Q) \\
(\alpha_1 \beta_1 \alpha_2 \beta_2)^{-1/2} U_{k-1}(Q) & U_{k-2}(Q)
\end{pmatrix} 
\begin{pmatrix}
\mathscr{D}_r \\
\mathscr{D}_{r-2}
\end{pmatrix}.
\end{align}

Consequently,
\begin{align}
\begin{pmatrix} 
\det(\lambda \mathbb{1} - \mathscr{H}_{2k+1}) \\
\det(\lambda \mathbb{1} - \mathscr{H}_{2k}) 
\end{pmatrix} &= 
(\alpha_1 \beta_1 \alpha_2 \beta_2)^{k/2} 
\begin{pmatrix}
(\lambda - z_1) U_k(Q) \\
U_k(Q) + \sqrt{\frac{\alpha_2 \beta_2}{\alpha_1 \beta_1}} U_{k-1}(Q)
\end{pmatrix}. \label{2periodicCharPoly}
\end{align}
\demo
\end{ex}

\begin{ex}
Consider the case of $\mathscr{H}_n$ with 2-periodic, Hermitian off-diagonal elements perturbed at its four corners. Explicitly, $z_i = z_{i} (\delta^i_{1} + \delta^i_{n})$, and if $i, j \in \dbrac{1,n-1}$ satisfy $i \equiv j \mod 2$, then
$\alpha_i = \alpha_{j}, \beta_i = \alpha^*_i$. Analysis and a brief literature review for the special case $\gamma = 0, \alpha_i = \beta_i = 1$ is presented in \cref{nontrivialBoundaryExample}.
The characteristic equation in the case $\alpha_n = \beta_n$ was known to \cite[p. 64]{da2007characteristic}; the general result is presented in \cref{charPolyTable} \cite{Barnett2023}. For simplicity, denote 
\begin{align}
Q &= \frac{\lambda^2 - |\alpha_1|^2 - |\alpha_2|^2}{2 |\alpha_1 \alpha_2|}
\end{align}
\begin{table*}[htp!]
\centering
\begingroup
\setlength{\tabcolsep}{8pt} % Default value: 6pt
\renewcommand{\arraystretch}{2} % Default value: 1
\begin{tabular}{|l|l|}
\hline
Constraints & $|\alpha_1 \alpha_2|^{-\floor{n/2}} \text{det} (\lambda I - \mathscr{H}_n)$ \\
\hhline{|=|=|}
$n = 2k $ & 
$\begin{array}{l}
U_{k}(Q) + \left(\dfrac{z_1 z_n - \alpha_n \beta_n}{|\alpha_2|^2} \right)U_{k-2}(Q) \\
+ \left(\dfrac{|t_2|^2 - \lambda (z_1 + z_n) + z_1 z_n - \alpha_n \beta_n}{|\alpha_1 \alpha_2|} \right) U_{k-1}(Q)  - \dfrac{\alpha_n + \beta_n}{|\alpha_2|} 
\end{array} $ \\
\hline
$n = 2k +1$ & 
$\begin{array}{l}
\left(\lambda - z_1 - z_n\right) U_k(Q) -(\alpha_n + \beta_n) \\
+  \left(\dfrac{\lambda(z_1 z_n - \alpha_n \beta_n)- z_1 |\alpha_1|^2 - z_n |\alpha_2|^2}{|\alpha_1 \alpha_2|} \right)U_{k-1}(Q)
\end{array}.$ \\
\hline
\end{tabular}
\endgroup
\caption{Characteristic polynomial of the Hamiltonian, $\mathscr{H}_n$, of a Su-Schrieffer-Heeger chain with non-Hermitian perturbed corners.}
\label{charPolyTable}
\end{table*}

The eigenvectors corresponding to an eigenvalue are computed as a function of their first two entries,
\begin{align}
\begin{pmatrix}
\psi_{2k} \\
\psi_{2k-1}
\end{pmatrix} &= \begin{pmatrix}
\frac{\lambda}{\alpha^*_1} & -\frac{\alpha_2}{\alpha^*_1} \\
1 & 0
\end{pmatrix} \begin{pmatrix}
\psi_{2k-1} \\
\psi_{2k-2}
\end{pmatrix} \\
&= \begin{pmatrix}
\frac{\lambda}{\alpha^*_1} &- \frac{\alpha_2}{\alpha^*_1} \\
1 & 0
\end{pmatrix}\begin{pmatrix}
\frac{\lambda}{\alpha^*_2} & -\frac{\alpha_1}{\alpha^*_2} \\
1 & 0
\end{pmatrix} \begin{pmatrix}
\psi_{2k-2} \\
\psi_{2k-3}
\end{pmatrix}\\
&= \begin{pmatrix}
\frac{\lambda^2}{\alpha^*_1 \alpha^*_2}-\frac{\alpha_2}{\alpha^*_1} & -\frac{\lambda \alpha_1}{\alpha_1^* \alpha^*_2} \\
\frac{\lambda}{\alpha^*_2} & -\frac{\alpha_1}{\alpha^*_2}
\end{pmatrix}^{k-1} \begin{pmatrix}
\psi_{2} \\
\psi_{1}
\end{pmatrix} \\
&= \left(\sqrt{\frac{\alpha_1 \alpha_2}{\alpha_1^* \alpha_2^*}}\right)^{k-1}\begin{pmatrix}
\frac{\lambda^2-|\alpha_2|^2}{|\alpha_1 \alpha_2|} & -\frac{\lambda \alpha_1}{|\alpha_1 \alpha_2|} \\
\frac{\lambda \alpha_1^*}{|\alpha_1 \alpha_2|} & -\frac{|\alpha_1|}{|\alpha_2|}
\end{pmatrix}^{k-1} \begin{pmatrix}
\psi_{2} \\
\psi_{1}
\end{pmatrix} \\
&= 
\left(\sqrt{\frac{\alpha_1 \alpha_2}{\alpha_1^* \alpha_2^*}}\right)^{k-1}
\left( U_{k-2}\left(Q \right)
\begin{pmatrix}
\frac{\lambda^2-|\alpha_2|^2}{|\alpha_1 \alpha_2|} & -\frac{\lambda \alpha_1}{|\alpha_1 \alpha_2|} \\
\frac{\lambda \alpha_1^*}{|\alpha_1 \alpha_2|} & -\frac{|\alpha_1|}{|\alpha_2|}
\end{pmatrix}
- U_{k-3}\left(Q \right) 
\begin{pmatrix}
1 & 0 \\
0 & 1
\end{pmatrix} \right)
\begin{pmatrix}
\psi_{2} \\
\psi_{1}
\end{pmatrix} \\
&= 
\left(\sqrt{\frac{\alpha_1 \alpha_2}{\alpha_1^* \alpha_2^*}}\right)^{k-1}
\begin{pmatrix}
U_{k-1}(Q) + \frac{|\alpha_1|}{|\alpha_2|} U_{k-2}(Q) & -\left(\frac{\lambda \alpha_1}{|\alpha_1 \alpha_2|}\right) U_{k-2}(Q) \\
\left(\frac{\lambda \alpha_1^*}{|\alpha_1 \alpha_2|}\right) U_{k-2}(Q) & -\frac{|\alpha_1|}{|\alpha_2|} U_{k-2}(Q) - U_{k-3}\left(Q \right) 
\end{pmatrix} 
\begin{pmatrix}
\psi_{2} \\
\psi_{1}
\end{pmatrix} \label{SSHevecAnsatz}
\end{align}
The relationship between $\psi_1$ and $\psi_2$ is found inserting \cref{SSHevecAnsatz} into the eigenvalue equation at site $1$
\begin{align}
\frac{\lambda \psi_1 - z_1 \psi_1 -\alpha_1^* \psi_2}{\alpha_n}= 
\begin{cases}
 \left(\sqrt{\frac{\alpha_1 \alpha_2}{\alpha_1^* \alpha_2^*}}\right)^{k-1} 
 \left[\begin{array}{l
 }\left(U_{k-1}(Q) + \frac{|\alpha_1|}{|\alpha_2|} U_{k-2}(Q) \right)\psi_2 \\
 -\left(\frac{\lambda \alpha_1}{|\alpha_1 \alpha_2|}\right) U_{k-2}(Q) \psi_1 
 \end{array}\right]
 &\text{if } n = 2k \\
 \left(\sqrt{\frac{\alpha_1 \alpha_2}{\alpha_1^* \alpha_2^*}}\right)^{k} \left[\begin{array}{l}
 \left(\frac{\lambda \alpha_1^*}{|\alpha_1 \alpha_2|}\right) U_{k-1}(Q) \psi_2 \\
 - \left(\frac{|\alpha_1|}{|\alpha_2|} U_{k-1}(Q) + U_{k-2}\left(Q \right) \right) \psi_1 
 \end{array}\right]
 &\text{if } n = 2k+1
\end{cases},
\end{align}
thereby determining the eigenvector modulo normalization. If, for a given eigenvalue $\lambda$, the above identity holds trivially due to vanishing prefactors of $\psi_1$ and $\psi_2$, then the eigenvector corresponding to that $\lambda$ is doubly degenerate.

%There's a typo in the original work, the correct equation is 

%\begin{align}
%P_{2n}(x) &=
%P^*_n(\pi_2(x)) + (\alpha(a_2 - x) + \beta(a_1 - x) + \alpha \beta + b_2) P^*_{n-1}(\pi_2(x)) + \alpha \beta b_1 P^*_{n-2}(\pi_2(x))
%\end{align}

The characteristic polynomial and eigensystem for the special case $\alpha_1 = \alpha_2 = 1$ can be simplified to the results presented in \cref{nontrivialBoundaryExample}, by observing $Q = T_2\left(\frac{\lambda}{2} \right)$ and using the composition identity of \cite[Lemma 3]{Zhang2004}\cite[Thm. 5]{Brandi2020}
\begin{align}
U_{mk-1}(x) = U_{k-1} \left(T_m (x) \right) U_{m-1} \left(x \right). \label{chebyshevComposed}
\end{align}

%
%An alternative expression to the characteristic polynomial can be found using the following identity,
%\begin{align}
%U_{n\pm 1}(x) &= x U_n(x) \pm T_{n + 1}(x),
%\end{align}
%where $T_n(x)$ is the Chebyshev polynomial of the first kind,
%\begin{align}
%T_n(x) = \cos (n \arccos x). \label{ChebyshevFirstKind}
%\end{align}
%\begin{align}
%U_m(Q) + \frac{b_1 b_n}{a_2^2} U_{m-2}(Q) &= Q U_{m-1}(Q) + T_m(Q) + \frac{b_1 b_n}{a_2^2} \left(Q U_{m-1}(Q) - T_m(Q) \right)\\
%&= \left(1+ \frac{b_1 b_n}{a_2^2} \right) Q U_{m-1}(Q) + T_m(Q) \left(1- \frac{b_1 b_n}{a_2^2}  \right)
%\end{align}
%Thus,
%\begin{align}
%\frac{\psi_{2m+1}}{\psi_1} = T_m(Q) \left(1- \frac{b_1 b_n}{a_2^2}  \right) + \left(Q+ \frac{b_1 b_n}{a_2^2}Q +\frac{a_2^2 - \lambda (b_1 + b_n) + b_1 b_n}{a_1 a_2}  \right) U_{m-1}(Q)
%\end{align} 
\demo
\end{ex}

\begin{ex} \label{su(2)tridiagExample}
The matrices considered in this example are displayed in \cref{generalsl2CElement}, which appears after a brief digression into the representations of $\mathfrak{isu}(2)$. This example considers the structure of $\mathfrak{isu}(2)$, which is a Lie algebra over the reals with the generators $\{L_1,L_2,L_3\}$ whose Lie bracket satisfies
\begin{align}
[L_i,L_j] = \sum_{k = 1}^3 i \epsilon_{ijk} L_k,
\end{align}
where $\epsilon_{ijk}$ denotes the completely antisymmetric tensor, as defined in \cref{antisymmetric-tensor}. Particularly illuminating in understanding $\mathfrak{isu}(2)$ are its representations, which are linear maps $\pi_n:\mathfrak{isu}(2) \to \mathfrak{M}_n(\mathbb{C})$ that map the Lie bracket into the commutator. %(define irreducible). 
Denote
\begin{align}
S_i := \pi_n(L_i).
\end{align}
This example will take $\pi_n$ to be the \textit{spin} $s$ representation, where $n = 2s + 1$ and $2s \in \mathbb{N}$. In this representation, there exists a basis such that $S_3$ is a diagonal matrix, 
\begin{align}
S_3 e_j = (j-s-1) e_j &\quad& \forall j \in \dbrac{1,2s+1}.
\end{align}
A general algebra element of $\pi_n(\mathfrak{isu}(2))$ is a tridiagonal matrix of the form 
\begin{align}
\vec{a} \cdot \vec{S} = H_n\left(\alpha, \beta, z\right),
\end{align}
where $\vec{a} \in \mathbb{R}^3$ and
\begin{align}
\alpha &= \left(\frac{a_1 + i a_2}{2}\right) \sum_{m = -s+1}^{s} c_{s,m} e_{m+s} \\
\beta &= \left(\frac{a_1 - i a_2}{2}\right)\sum_{m = -s+1}^{s} c_{s,m} e_{m+s} \\
z &= a_3 \sum_{m = -s}^s m e_{m+s+1} ,
\end{align}
where $c_{s,m}$ is the ladder coefficient given by 
\begin{align}
c_{s,m} = \sqrt{(s+m)(s+1-m)}.
\end{align}
A diagonal similarity transform $\mathcal{D}$ of \cref{triDiagToSymmetric} maps $\vec{a} \cdot \vec{S}$ into
\begin{align}
\mathcal{D}^{-1} (\vec{a} \cdot \vec{S}) \mathcal{D} = H_n(\alpha',\beta',z'), \label{diagSimTransformSL2C}
\end{align}
where 
\begin{align}
\alpha' &= \left(\frac{a_1 + i a_2}{2}\right) \sum_{j = 1}^{n-1} j e_{j} \\
\beta' &= \left(\frac{a_1 - i a_2}{2}\right)\sum_{j = 1}^{n-1} (n-j) e_{j} \\
z' &= a_3 \sum_{m = -s}^s m e_{m+s+1}.
\end{align}

Consider an element of the spin $s$ representation of the complexified Lie algebra,
\begin{align}
A := \vec{a} \cdot \vec{S} + i \vec{b} \cdot \vec{S} \in \mathfrak{sl}(2,\mathbb{C}) = \mathfrak{su}(2) \oplus \mathfrak{isu}(2), \label{generalsl2CElement}
\end{align}
where $\vec{a}, \vec{b} \in \mathbb{R}^3$. Matrices which are of this form, or which can be mapped into this form via \cref{diagSimTransformSL2C}, include the \textit{Sylvester-Kac matrices} \cite{sylvester1854theoreme,Kac1947,Clement1959}, their generalizations \cite[\S 576]{muir2003treatise}, and the model of \cite{Graefe2008}. %Add generalizations... {\color{red}}

Given a unit vector $\vec{c}$ satisfying $\vec{c} \cdot \vec{a} = 0$ and $\vec{c} \cdot \vec{c} = 1$, a corresponding similarity transformation which maps $A$ to another tridiagonal matrix in the representation of the complexified Lie algebra is given by the Rodrigues rotation formula,
\begin{align}
e^{i \theta \vec{c} \cdot \vec{S}/2} (\vec{a} \cdot \vec{S}) e^{- i \theta \vec{c} \cdot \vec{S}/2} = \cos(\theta) (\vec{a} \cdot \vec{S}) + \sin(\theta) (\vec{a} \times \vec{c}) \cdot \vec{S} + (\vec{a} \cdot \vec{c}) (\vec{c} \cdot \vec{S}) (1-\cos(\theta)).
\end{align}
%A derivation of this formula follows by considering the left hand side as a function of $\theta$ and deriving a second order differential equation for it. 
By using a suitable similarity transform which maps $A$ to a multiple of $S_3$, its spectrum can be obtained,
\begin{align}
\sigma(A) = \left(\sqrt{\vec{a}\cdot \vec{a} - \vec{b} \cdot \vec{b}} \right) \dbrac{-s,s}.
\end{align}
$A$ is defective when $\vec{\alpha} \cdot \vec{\alpha} = \vec{\beta} \cdot \vec{\beta} > 0$, where $n$ eigenvalues coalesce to the zero mode. Unlike the generic case of a linear matrix function for which $n$ eigenvalues coalesce, the perturbative expansion for the eigenvalues at the exceptional point can be done with a second-order Puiseux series. However, if the Hamiltonian at the exceptional point is perturbed by a matrix with no connection to the algebra $\mathfrak{sl}(2, \mathbb{C})$, then the order of the Puiseux series can increase back to $n$ \cite{Hodaei2017}.
\demo
\end{ex}

%daFonseca2019 on Sylvester-Kac

\section{Nearest Neighbour Impurities in an Open Chain} \label{nearestNeighbour}
\subsection{Model}
Intuitively, the Hamiltonian studied in this section is $\mathcal{PT}$-symmetric, local to a one-dimensional open chain with an even number of sites, irreducible, and has a non-Hermitian pair of impurities potentials at the center of the chain. As an example, the Hamiltonian for $n = 6$ is 
\begin{align}
\mathscr{H}_6 = \begin{pmatrix}
\text{Re}(z_1) & t_5^*  & 0        & 0   	  & 0  		 & 0 \\
t_1 & \text{Re}(z_2) & t_4^*  & 0   	  & 0   	 & 0 		\\
0        & t_2 & \Delta + \mathfrak{i} \gamma & t_3^* & 0 	     & 0 		\\
0        & 0        & t_3 & \Delta - \mathfrak{i} \gamma & t_2^*  & 0 		\\
0        & 0   		& 0   	   & t_4 & \text{Re}(z_2) & t_1^*  \\
0  & 0  		& 0   	   & 0   	  & t_5 & \text{Re}(z_1)
\end{pmatrix}.
\end{align}
Explicitly, this section studies the tridiagonal case of $\mathscr{H}_n$ with the parametric restrictions
\begin{align}
\alpha_n &= \beta_n = 0 \nonumber\\
n &= 2m \nonumber\\
z_m &= z^{*}_{m+1} \nonumber \\
z_j &= z^*_{n-j+1} \in \mathbb{R} &\quad& \forall j \in \dbrac{1, m-1} \nonumber\\
t_j &:= \alpha_j = \beta^*_{n-j} &\quad& \forall j \in \dbrac{1, n-1} \nonumber \\
t_j t^*_{n-j} &>0. \label{nearestNeighborParams}
\end{align}
The last constraint is not necessary for $\mathcal{PT}$-symmetry; it is instead enforced to help ensure reality of the spectrum.
To simplify select equations, I employ the parametrization
\begin{align}
(\Delta, \gamma) = (\text{Re}(z_m), \text{Im}(z_m)).
\end{align}
The parameter $\Delta$ will be referred to as \textit{detuning}.

A graphical representation of a model corresponding to this choice of parameters is displayed in \cref{openChainFig}.

\begin{figure}[htp!] 
\centering
\includegraphics[width = \textwidth]{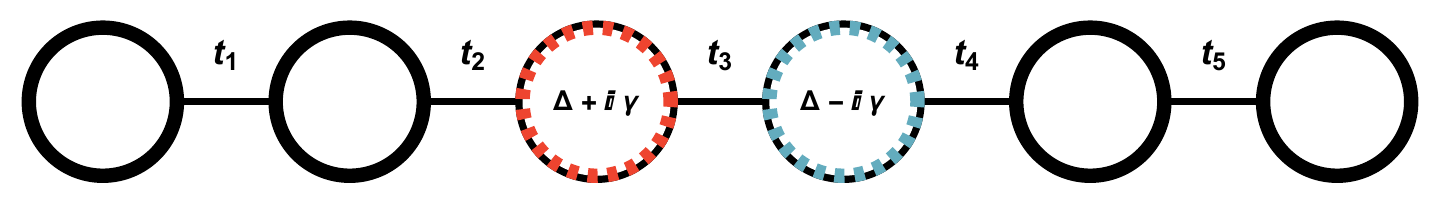}
\caption{Graphical representation of the model studied in this section for $n = 6$. References which study this model include \cite{Babbey,MyFirstPaper}.} \label{openChainFig}
\end{figure}

\subsection{Results}

Pseudo-Hermiticity is addressed in \cref{pseudoHermOpenChainSection}. The $\mathcal{PT}$-unbroken and broken domains can be compactly summarized as 
\begin{align}
|\gamma|\leq |t_m| \, &\Rightarrow \, \sigma(H) \subseteq \mathbb{R} \\
|\gamma|> |t_m| \, &\Rightarrow \, \sigma(H) \cap \mathbb{R} = \emptyset.
\end{align}
A one-parameter family of Hermitian intertwining operators is constructed. This family includes a positive-definite metric operator in the case $|\gamma|<|t_m|$, implying the reality of the spectrum and the existence of a $\mathcal{C}$-symmetry in the sense of \cref{defn:C}. Corresponding similar Hermitian Hamiltonians are constructed in \cref{Equivalent Hamiltonian Section}. Section~\ref{maximalPTOpenChain} will establish that \textit{every} eigenvalue has an imaginary part when $|\gamma| > |t_m|$. Consequently, $|\gamma| = |t_m|$ is an exceptional point. The exceptional point is examined in \ref{PTBreakingOpenChainSection}. Proposition~\ref{EP2OpenChain} states that at the exceptional point, there are exactly $m$ eigenvectors corresponding to eigenvalues with algebraic multiplicity two. 

A tool which simplifies subsequent calculations is the diagonal similarity transformation which maps tridiagonal matrices into transpose symmetric counterparts, given in \cref{triDiagToSymmetric}.

\subsection{Pseudo-Hermiticity} \label{pseudoHermOpenChainSection}

\begin{proposition} \label{pseudoHermOpenChainThm}
An intertwining operator, $M$, for the matrix $\mathscr{H}_n$ of \cref{Hscr}, satisfying the constraints of \cref{nearestNeighborParams}, is determined by the matrix elements
\begin{align}
M(Z) &= \mathcal{D} \begin{pmatrix}
\mathbb{1}_m & \frac{Z^*}{|t_m|} \mathcal{P}_m \\
\frac{Z}{|t_m|} \mathcal{P}_m & \mathbb{1}_m
\end{pmatrix} \mathcal{D}, \label{homomorphismMetric} 
\end{align}
where  $\mathbb{1}_m$ is the $m \times m$ identity matrix, $Z$ is a constant with arbitrary real part and $\text{Im}(Z) = \gamma$, and $\mathcal{D}$ is a Hermitian diagonal similarity transform defined in \cref{triDiagToSymmetric}. $M$ is positive-definite when $|Z|< |t_m|$. $M$ is the only intertwining operator for $H$ which is a sum of the identity matrix and an antidiagonal matrix. 
\end{proposition}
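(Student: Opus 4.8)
The plan is to pass to the symmetric frame supplied by \cref{triDiagToSymmetric} and reduce the entire statement to an elementary sparse linear system. Let $\mathcal{D}$ be the Hermitian diagonal transform that makes $\mathscr{H}_n$ transpose-symmetric, and write $\tilde H$ for the resulting symmetric matrix, so that $M=\mathcal{D}N\mathcal{D}$ with $N$ the ``identity plus antidiagonal'' core factor $\left(\begin{smallmatrix}\mathbb{1}_m & \frac{Z^*}{|t_m|}\mathcal{P}_m\\ \frac{Z}{|t_m|}\mathcal{P}_m & \mathbb{1}_m\end{smallmatrix}\right)$ (note that it is $N$, not $M$, whose diagonal is the identity). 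Under \cref{nearestNeighborParams} the off-diagonal entries of $\tilde H$ are all real and positive, equal to $\sqrt{t_jt_{n-j}^*}$ with the central pair equal to $|t_m|$; the diagonal is real except for $z_m=\Delta+\mathfrak{i}\gamma$, $z_{m+1}=\Delta-\mathfrak{i}\gamma$; and $\tilde H$ inherits centrohermiticity, giving the identities $\overline{z_i}=z_{\overline{i}}$ and $c_j:=t_jt_{n-j}^*=c_{n-j}\in(0,\infty)$. Because $\mathcal{D}=\mathcal{D}^\dag$ is invertible, the map $N\mapsto \mathcal{D}N\mathcal{D}$ is a Hermitian congruence that carries intertwiners of $\tilde H$ to intertwiners of $\mathscr{H}_n$ and preserves definiteness; hence $M\mathscr{H}_n=\mathscr{H}_n^\dag M$ iff $N\tilde H=\tilde H^\dag N$, and $M>0$ iff $N>0$ by Sylvester's law of inertia. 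It therefore suffices to classify intertwiners of $\tilde H$ of the form $N=\mathbb{1}+X$ with $X$ antidiagonal.

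Writing $X=\sum_i x_i\,e_ie_{\overline{i}}^\top$, the intertwining relation becomes $X\tilde H-\tilde H^\dag X=\tilde H^\dag-\tilde H$, whose right-hand side is the purely central diagonal matrix $-2\mathfrak{i}\gamma(e_me_m^\top-e_{m+1}e_{m+1}^\top)$. I would then match entries. The antidiagonal entries $(i,\overline{i})$ and the diagonal entries away from the center vanish automatically using $\overline{z_i}=z_{\overline{i}}$; the near-antidiagonal entries $(i,\overline{i}\pm1)$ collapse, via the reversal symmetry $c_{n-j}=c_j$ together with irreducibility ($c_j\neq0$), to $x_i=x_{i\pm1}$ for every $i\neq m$, which forces $x$ to be constant ($=a$) on $\dbrac{1,m}$ and constant ($=b$) on $\dbrac{m+1,n}$; and the two central diagonal equations give $a-b=-2\mathfrak{i}\gamma/|t_m|$. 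This already shows that every intertwiner of the prescribed form is determined by a single complex constant.

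To cut the family down to \cref{homomorphismMetric} I would impose that the intertwiner be Hermitian, $M=M^\dag$ (equivalently $N=N^\dag$), as required for a metric/conserved quantity and for positivity to be meaningful. Hermiticity of the antidiagonal $X$ reads $x_j=\overline{x_{\overline{j}}}$, i.e. $b=\overline{a}$; combined with $b-a=2\mathfrak{i}\gamma/|t_m|$ this pins $\text{Im}(a)=-\gamma/|t_m|$ while leaving $\text{Re}(a)$ free. Setting $a=Z^*/|t_m|$ and $b=Z/|t_m|$ reproduces exactly $N(Z)$ with $\text{Im}(Z)=\gamma$ and $\text{Re}(Z)$ arbitrary, which simultaneously proves that $M(Z)$ is an intertwiner and that it is the only one of this form. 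Finally, pairing each index $i\le m$ with $\overline{i}>m$ block-diagonalizes $N$ into $m$ copies of $\left(\begin{smallmatrix}1 & Z^*/|t_m|\\ Z/|t_m| & 1\end{smallmatrix}\right)$, each with eigenvalues $1\pm|Z|/|t_m|$, so $N>0$ (hence $M>0$) precisely when $|Z|<|t_m|$.

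The main obstacle is the index bookkeeping in the second step: to read $X\tilde H-\tilde H^\dag X=\tilde H^\dag-\tilde H$ off entrywise one must track the reversal $i\mapsto\overline{i}=n-i+1$, invoke the centrohermitian identities and the reversal symmetry and nonvanishing of the real off-diagonal entries to propagate constancy of $x$ across each half, and correctly isolate the single central link (the missing $m\leftrightarrow m+1$ relation) that carries all of the $\gamma$-dependence. A secondary point to argue cleanly is that Hermiticity of the intertwiner is exactly the hypothesis that reduces the two-real-parameter solution set to the one-parameter family $M(Z)$; dropping it would admit a strictly larger, non-Hermitian family and the word ``only'' would fail.
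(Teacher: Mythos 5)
Your proof is correct, and it reaches the result by a genuinely different route than the paper's. The paper proves \cref{pseudoHermOpenChainThm} by induction on the half-chain length: after the same reduction to the transpose-symmetric frame via $\mathcal{D}$ and \cref{MetricMapper}, it anchors at $n=2$, where the full classification of $2\times 2$ intertwiners (\cref{2x2Intertwiner}, \cite{wang20132}) supplies the base case, and then argues that the interior intertwining relations are inherited when $m_0\to m_0+1$, with the corner entries fixed by matching $\bigl(M_{1,1},M_{1,n};M_{n,1},M_{n,n}\bigr)$ against the next-inner $2\times 2$ block; positivity for $|Z|<|t_m|$ is then simply asserted. You instead solve the sparse Sylvester-type system $X\tilde H-\tilde H^\dag X=\tilde H^\dag-\tilde H$ in a single entrywise pass, using $z_i^*=z_{\overline{i}}$ and the real positive hoppings $h_j=h_{n-j}$ to propagate constancy of the antidiagonal across each half, and isolating the one central link $h_m(x_m-x_{m+1})=-2\mathfrak{i}\gamma$ that carries all the $\gamma$-dependence. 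Your route buys three things the paper leaves implicit: a self-contained uniqueness argument that neither leans on the $2\times 2$ classification nor on an inductive step the paper does not spell out; an explicit accounting of where Hermiticity enters --- without $N=N^\dag$ the solution set is the strictly larger one-complex-parameter family $b=a+2\mathfrak{i}\gamma/|t_m|$, so the word ``only'' tacitly invokes the self-adjointness built into \cref{pseudoHermDefn}, a point you are right to flag; and an actual proof of the positivity criterion via the block eigenvalues $1\pm|Z|/|t_m|$, which the paper states without computation. The paper's induction buys brevity and a direct link to the qubit analysis of \cref{PTqubit}. Two small cautions, both inherited from the statement rather than errors of yours: the correspondence consistent with \cref{MetricMapper} for $\tilde H=\mathcal{D}^{-1}\mathscr{H}_n\mathcal{D}$ is $N=\mathcal{D}^\dag M\mathcal{D}$, i.e.\ $M=\mathcal{D}^{-\dag}N\mathcal{D}^{-1}$, so the outer factors in \cref{homomorphismMetric} are really the inverse transform (equivalently, the one built from $\sqrt{\beta_k/\alpha_k}$); and $\mathcal{D}$ as defined in \cref{triDiagToSymmetric} is Hermitian only when the ratios $\alpha_k/\beta_k$ are positive, so in general one should write the $*$-congruence $\mathcal{D}^\dag(\cdot)\mathcal{D}$ --- Sylvester's law of inertia applies to $*$-congruence either way, so nothing in your definiteness transfer or in the entrywise computation changes.
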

\begin{proof}
Analysis can be performed by assuming transpose symmetry of $\mathscr{H}_n$ without loss of generality, due to the similarity transform $\mathcal{D}$ and the metric mapping identity of \cref{MetricMapper}. 

The proof proceeds by induction. For $n=2$, the most general intertwining operator was determined in \cref{2x2Intertwiner} and \cite{wang20132}. Up to a normalization factor, this intertwining operator is 
\begin{equation}
M = \begin{pmatrix}
1 & \frac{Z^*}{|t_m|}\\
\frac{Z}{|t_m|} & 1
\end{pmatrix}.
\end{equation}

Suppose $M$ is an intertwining operator for $m = m_0$. Then the operator of \cref{homomorphismMetric} for $m_0+1$ satisfies
\begin{align}
\sum^{m_0}_{j=2} M_{i\,j} H_{jk} = \sum^{m_0}_{j=2} H^{\dag}_{ij} M_{j\,k}, &\quad& \forall \,\, i,k \in \dbrac{2,n-1}.
\end{align} 
Assuming the intertwining operator is a sum of diagonal and anti-diagonal parts, the quasi-Hermiticity condition for the remaining elements implies
%\begin{align}
%M(n)_{11} H_{12} &= H^{*}_{21} M(n)_{22} \\
%M(n)_{nn} H_{n n-1} &= H^{*}_{n-1 n} M(n)_{n-1 n-1} \\
%M(n)_{1n} H_{n n-1} &= H^{*}_{2 1} M(n)_{2 n-1} \\
%M(n)_{n1} H_{12} &= H^{*}_{n-1 n} M(n)_{n-1 2}
%\end{align}
%+ Hermitian conjugates, forces $t_i \in \mathbbm{R}$ 
\begin{align}
\begin{pmatrix}
M_{1,1} & M_{1,n} \\
M_{n,1} & M_{n,n}
\end{pmatrix} &= 
\begin{pmatrix}
M_{2,2} & M_{2,n-1} \\
M_{n-1,2} & M_{n-1,n-1}
\end{pmatrix}.
\end{align}
%\begin{align}
%M(n)_{11}  &= \frac{t_{n-1}}{t_1} M(n)_{22} \label{11} \\
%M(n)_{nn} &= \frac{t^*_{n-1}}{t^*_1} M(n)_{n-1 n-1} \label{nn} \\
%M(n)_{1n} &= \frac{t_{n-1}}{t^*_1} M(n)_{2 n-1} \\
%M(n)_{n1} &= \frac{t^*_{n-1}}{t_1} M(n)_{n-1 2}.
%\end{align}
Note $M$ is positive-definite if and only if $|Z|^2 < |t_m|^2$.
\end{proof}

Proposition~\eqref{pseudoHermOpenChainThm} was demonstrated in \cite{Barnett_2021,Barnett2023}. Previous literature determined the special case with $n = 2$ \cite{mosta2003equivalence,wang20132} and the case with $\gamma_i = 0, t_j = t \,\, \forall j \in \dbrac{2, n-1}$ \cite{Znojil2009}. The deeper reason for why an elegant expression for an intertwining operator exists in this case has to do with representation theory, and will be revisited in \cref{section:PseudoHermFromRep}.

Importantly, $M$ is a block matrix when written over parity-symmetric blocks. A physical implication is that in quasi-Hermitian theories with metric $M$, the operators $A_{ij} = \sum_{k \in S} $\gls{delta}$ \delta^{i}_{k}$ are quasi-Hermitian observables whenever the collection of sites $S$ is parity symmetric (so if $i \in S$, $\bar{i} \in S$). When second quantized, this operator takes the role of a number operator determining the number of particles in the lattice subset $S$. 

\subsection{Equivalent Hermitian Hamiltonian} \label{Equivalent Hamiltonian Section}
This section concerns the domain of quasi-Hermiticity, $|Z|^2 < |t_m|^2$. Following \cref{Williams1969Corollary}, Hermitian Hamiltonians, $h$, which are similar to $\mathscr{H}_n$ are determined via the unique positive square root associated to a metric operator, $M(Z)$. Defining
\begin{align}
\zeta &:= \sqrt{1-|Z|/|t_m|} + \sqrt{1+|Z|/|t_m|},
\end{align}
this square root is
\begin{align}
\Omega &:= M^{1/2} \\
&= \begin{pmatrix}
\frac{\zeta}{2} \mathbbm{1}_m & \frac{Z_m^*}{\zeta |t_m|} \mathcal{P}_m \\
\frac{Z_m}{\zeta |t_m|} \mathcal{P}_m & \frac{\zeta}{2} \mathbb{1}_m
\end{pmatrix}=\frac{\zeta}{2}\mathbbm{1}_{2m}+\frac{1}{\zeta |t_m|}\left(\text{Re}(Z) \sigma_x+\text{Im}(Z)\sigma_y\right)\otimes\mathcal{P}_m.
\end{align}
A one-parameter family of Hermitian Hamiltonians similar to $\mathscr{H}_n$ are
\begin{align}
h(Z)_{ij} &:= (\Omega \mathscr{H}_n \Omega^{-1})_{ij} \\
&= \sum^{n-1}_{i=1} \left(t'_i \delta^j_{i+1} + {t'_i}^*\delta^i_{j+1}\right) + \sum^n_{i = 1} \left(\text{Re}(z)_i \delta^i_{j} \right)\\
t'_i &:= \begin{cases}
\sqrt{|t_i t_{n-i}|} & \, \text{if }i \neq m \\
\dfrac{\zeta |t_m| + i \gamma \sqrt{|t_m|^2 - |Z|^2}}{Z} & \,\text{if }i = m
\end{cases}\label{equivHermHam} 
\end{align}
The equivalent Hamiltonian of \cref{equivHermHam} remains tridiagonal and is interpreted as \textit{local} to a one-dimensional chain. This is in contrast with the generic case, where a local quasi-Hermitian operator is similar to a nonlocal Hermitian operator \cite{Korff2008}. In addition, the spectrum is a function of only $t'_i$ and $z_i$. The equivalent hopping amplitudes are real valued, $t'_{i} = t'_{n-i} \in \mathbbm{R}$, for the Hamiltonian $h(i \gamma)$.

There are two linearly independent intertwining operators generated by \cref{homomorphismMetric}: $M(i \gamma)$ and
\begin{align}
M' := \mathcal{D} \mathcal{P}_n \mathcal{D}.
\end{align}
Consequently, a symmetry\footnote{An expanded discussion of the topic of $\mathcal{C}-$symmetry can be found in the introduction chapter, specifically in \cref{pseudoHermOperatorsSection}.} of the Hamiltonian is given by \cite{BiOrthogonal}
\begin{align}
\mathcal{C} = \frac{1}{\sqrt{|t_m|^2 - \gamma^2}} \mathcal{D}^{-1} \mathcal{P} \mathcal{D}^{-1} M(i \gamma) = \frac{1}{\sqrt{|t_m|^2-\gamma^2}} \mathcal{D}^{-1} \left(i \gamma \sigma_z \otimes \mathcal{P}_m + |t_m| \mathcal{P}_n \otimes\mathcal{P}_m\right) \mathcal{D}
\end{align}

Due to $\mathcal{C}$-symmetry and \cref{tridiagNondegen}, the energy eigenvectors of $\mathscr{H}_n$ are elements of the eigenspaces, $V_{\pm}$, of $\mathcal{C}$, 
\begin{align}
V_{\pm} = \mathcal{D}^{-1} \left(\text{span} \left\{(i \gamma \pm \sqrt{t^2 - \gamma^2})  e_j +t e_{\bar{j}} : j \in \dbrac{1, m} \right\} \right),
\end{align}
where $(e_i)_j = \delta^i_{j}$. The coalescence of the eigenspaces $V_{\pm}$ as $\gamma^2 \nearrow |t_m|^2$ is a signature of an exceptional point.

\subsection{Maximally Broken $\mathcal{PT}$-Symmetry} \label{maximalPTOpenChain}
This section presents two proofs that every eigenvalue has a nonzero imaginary part in the $\mathcal{PT}$-unbroken domain $|\gamma| > |t_m|$, generalizing the result of \cite{MyFirstPaper} to the case with a real parity symmetric potential. 

The first proof follows from considering the characteristic polynomial of $\mathscr{H}_n$. Let $p_A$ denote the monic characteristic polynomial of a matrix $A$ and let $\mathscr{H}_i$ be the submatrix formed by taking the first $i$ rows and columns of $\mathscr{H}_n$. Equivalently, using the notation of \cref{minorsDef},
\begin{align}
\mathscr{H}_i := \mathscr{H}_{\dbrac{1,i} \dbrac{1,i}}.
\end{align} 
The linearity property of determinants implies a decomposition for $p_{\mathscr{H}_n}$,
\begin{align}
p_{\mathscr{H}_n}(\lambda) = (\gamma^2-|t_m|^2) p_{\mathscr{H}_{m-1}}^2(\lambda) + \left(\Delta p_{\mathscr{H}_{m-1}}(\lambda) + p_{\mathscr{H}_m}(\lambda) \right)^2. \label{charPolyDecomp}
\end{align}
When $\gamma^2  > |t_m|^2$, $p_H$ is the sum of squares of two monic polynomials of degree $m$. Suppose a real eigenvalue exists: $\lambda \in \sigma(\mathscr{H}_n) \cap \mathbb{R}$. This eigenvalue must be a simultaneous root of $p_{\mathscr{H}_{m-1}}$ and $p_{\mathscr{H}_{m}}$. Since $\mathscr{H}_n$ is non-degenerate \cite{elliott1953characteristic}, the unique eigenvector associated to $\lambda$ must be a linear multiple of the nonzero vector given in \cref{psi(l)}. Since $\lambda$ is a simultaneous root of $p_{\mathscr{H}_{m-1}}$ and $p_{\mathscr{H}_{m}}$, and the components of $\psi$ are given by \cref{psi(l)}, the components $\psi_m(\lambda)$ and $\psi_{m+1}(\lambda)$ vanish. Thus, the eigenvalue equations imply $\psi$ is identically equal to zero: a contradiction! 

A second proof follows directly from the eigenvalue equations. This proof is presented in \cite{Barnett2023} and was performed in the case with zero detuning in \cite{MyFirstPaper}. For simplicity, I will implicitly perform the similarity transform to the transpose symmetric case given by $\mathscr{D}$.
Suppose a given eigenvalue is real: $\lambda \in \sigma(\mathscr{H}_n) \in \mathbbm{R}$. Without loss of generality, $\psi_1$ can be assumed to be real-valued. The eigenvalue equations imply 
\begin{align}
\psi_j \in \mathbb{R} &\quad& \forall j \in \dbrac{1,m}
\end{align}
for all sites $j$ in the left half of the lattice. 
Since $\mathscr{H}_n$ is non-degenerate \cite{elliott1953characteristic}, the corresponding eigenstate, $\psi$, is $\mathcal{PT}$-unbroken.
Due to unbroken $\mathcal{PT}$-symmetry, there exists a phase $\chi \in \mathbb{R}$ such that 
\begin{align}
\psi_{\overline{j}} e^{\mathfrak{i} \chi} = \psi_j &\quad&\forall j \in \dbrac{1,m}.
\end{align}
With these observations in mind, the eigenvalue equations at sites $(m,m+1)$ are equivalent to
\begin{equation}
\begin{pmatrix}
(z_m - \lambda) \psi_m + |t_{m-1}| \psi_{m-1} & |t_m| \psi_m \\
|t_m| \psi_m & (z_m^* - \lambda) \psi_m + |t_{m-1}| \psi_{m-1}
\end{pmatrix}
\begin{pmatrix}
1 \\ e^{i \chi}
\end{pmatrix} 
=
0.
\end{equation}
For this matrix to have a nontrivial kernel, its determinant must vanish; thus
\begin{align}
(|t_{m-1}| \psi_{m-1} + (\Delta - \lambda) \psi_m)^2 + (\gamma^2 - |t_m|^2) \psi_m^2 = 0.
\end{align}
However, if $|\gamma| > |t_m|$, the determinant is strictly positive. The contradictory assumption was taking $\lambda \in \mathbbm{R}$; hence, every eigenvalue is complex for $|\gamma| > |t_m|$. 

\subsection{Exceptional Points} \label{PTBreakingOpenChainSection}
This section will demonstrate that $|\gamma| = |t_m|$ is a second-order exceptional point corresponding to a defective Hamiltonian. Owing to the existence of a positive \textit{semi-}definite intertwining operator given by \cref{homomorphismMetric}, I will prove that the spectrum is real and each eigenvalue has algebraic multiplicity equal to two and geometric multiplicity equal to one. 

\begin{proposition} \label{EP2OpenChain}
When $|\gamma| = |t_m|$, $H$ has exactly $m$ mutually orthogonal eigenvectors corresponding to real eigenvalues, $\sigma(H) \subset \mathbb{R}$. Each eigenvalue has algebraic multiplicity equal to two and geometric multiplicity equal to one. 
\end{proposition}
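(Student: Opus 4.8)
The plan is to work throughout in the transpose-symmetric gauge supplied by the diagonal similarity $\mathcal{D}$ of \cref{triDiagToSymmetric}, as in the preceding subsections, so that $\mathscr{H}_n$ becomes a complex-symmetric irreducible tridiagonal matrix whose only non-real diagonal entries are the central pair $\Delta \pm \mathfrak{i}\gamma$ at sites $m, m+1$, and whose central hopping has magnitude $\tau_m = |t_m|$. The crucial starting observation is that at the exceptional point the characteristic-polynomial decomposition \cref{charPolyDecomp} degenerates: the coefficient $\gamma^2 - |t_m|^2$ vanishes, leaving
\[
p_{\mathscr{H}_n}(\lambda) = q(\lambda)^2, \qquad q(\lambda) := \Delta\, p_{\mathscr{H}_{m-1}}(\lambda) + p_{\mathscr{H}_m}(\lambda),
\]
a perfect square of a monic polynomial of degree $m$. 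Hence $\mathscr{H}_n$ has at most $m$ distinct eigenvalues, and every eigenvalue has even algebraic multiplicity.

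Next I would pin down the eigenvectors by a reduction to a genuinely Hermitian problem. Taking $\gamma = |t_m| > 0$ without loss of generality, I would make the ansatz $\psi_{\overline{\jmath}} = -\mathfrak{i}\psi_j$ for $j \in \dbrac{1,m}$, motivated by the coalescence of the $\mathcal{C}$-symmetry eigenspaces $V_\pm$ at the exceptional point. Substituting this into the eigenvalue equations, the rows for sites $1,\dots,m-1$ are insensitive to $\gamma$ and remain real, while at site $m$ the imaginary part cancels exactly: the terms $\mathfrak{i}\gamma\psi_m$ and $\tau_m\psi_{m+1} = -\mathfrak{i}\gamma\psi_m$ annihilate one another. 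The upshot is that the left half $u := (\psi_1,\dots,\psi_m)^{T}$ solves $\tilde H u = \lambda u$, where $\tilde H$ is the $m\times m$ real symmetric irreducible tridiagonal matrix obtained from the top-left block $\mathscr{H}_m$ by replacing its corner entry $\Delta + \mathfrak{i}\gamma$ with $\Delta$. I would then verify that the mirror rows (sites $m+1,\dots,n$) are satisfied automatically by parity, so the ansatz produces a genuine eigenvector of $\mathscr{H}_n$ for each eigenvector of $\tilde H$.

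Being real symmetric, irreducible, and tridiagonal, $\tilde H$ has $m$ distinct real eigenvalues (\cref{cor:CauchyTridiag}) with mutually orthogonal real eigenvectors $u$. This yields $m$ distinct real eigenvalues of $\mathscr{H}_n$; combined with the bound of at most $m$ from $p_{\mathscr{H}_n} = q^2$, these are all of them, so $\sigma(\mathscr{H}_n)\subset\mathbb{R}$ and each eigenvalue has algebraic multiplicity exactly two. Irreducibility forces the geometric multiplicity of every eigenvalue to be one by \cref{tridiagNondegen}, confirming the defective structure claimed. For orthogonality, a short block computation gives, for eigenvectors $\psi,\psi'$ with real left halves $u,u'$,
\[
\langle\psi|\psi'\rangle = \sum_{j=1}^m \overline{u_j}\, u'_j + \sum_{j=1}^m \overline{(-\mathfrak{i}u_j)}\,(-\mathfrak{i}u'_j) = 2\, u^{T}u',
\]
which vanishes for distinct eigenvalues because the $u$'s are orthogonal eigenvectors of $\tilde H$.

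The main obstacle is the second step: identifying the correct ansatz $\psi_{\overline{\jmath}} = -\mathfrak{i}\psi_j$ and, above all, checking the delicate cancellation of imaginary parts at the central sites, which is precisely what collapses the non-Hermitian problem onto the real symmetric matrix $\tilde H$ exactly at $|\gamma| = |t_m|$. Everything else---reality, the eigenvalue count, the multiplicities, and orthogonality---follows mechanically once this reduction is in place. As an independent check on reality, one may instead invoke Drazin's \cref{realEigenvalsThm}: at the exceptional point the intertwiner $M(\mathfrak{i}\gamma)$ of \cref{homomorphismMetric} is positive semidefinite of rank $m$, which already guarantees at least $m$ linearly independent eigenvectors with real eigenvalues.
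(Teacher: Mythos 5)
Your proof is correct and takes essentially the same route as the paper: the upper bound of $m$ eigenvectors comes from the perfect-square characteristic polynomial of \cref{charPolyDecomp} together with irreducibility (with Drazin's \cref{realEigenvalsThm} applied to the rank-$m$ semidefinite intertwiner as the alternative, which the paper also notes), and your ansatz subspace $\{\psi_{\overline{\jmath}} = -\mathfrak{i}\psi_j\}$ is exactly $\ker M(\mathfrak{i}\gamma)$, the invariant subspace on which the paper likewise restricts $H$ to a Hermitian irreducible tridiagonal matrix to produce the $m$ mutually orthogonal eigenvectors with real eigenvalues. The bookkeeping for algebraic multiplicity two, geometric multiplicity one via \cref{tridiagNondegen}, and orthogonality is identical in substance.
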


\begin{proof}
First, we prove that $H$ has at most $m$ linearly independent eigenvectors. I know of at least two ways of doing this. One is a direct application \cref{realEigenvalsThm} of \cite{Drazin1962}, choosing the intertwiner of \cref{homomorphismMetric} for $S$. A second proof follows from the decomposition of the characteristic polynomial of $H$ given in \cref{charPolyDecomp}. When $\gamma^2 = |t_m|^2$, the characteristic polynomial is the square of a monic polynomial of degree $m$, so each eigenvalue has an algebraic multiplicity of at least two. Since $H$ is non-degenerate \cite{elliott1953characteristic}, there are at most $m$ linearly independent eigenvectors.

%{\color{red} Calculation assumes transpose symmetry. If I have time, make this better.}

To prove that $H$ has exactly $m$ eigenvectors when $\gamma^2 = |t_m|^2$, consider the action of $\mathcal{D}^{-1} H \mathcal{D}$ on $\ker M$. An orthonormal basis of $\ker M$ is  
\begin{align}
\ker M &= \text{span} \{\tilde{e}_j\,|\,j \in \dbrac{1, m} \} \\
\tilde{e}_j &= \frac{\mathfrak{i} \gamma}{\sqrt{2} |t_m|} e_j + \frac{1}{\sqrt{2}} e_{\bar{j}}.
\end{align}
Then
\begin{align}
H \tilde{e}_j &= \begin{cases}
t_{1} \tilde{e}_2 & \text{ if } j = 1\\
t_{j-1} \tilde{e}_{j-1} + t_{j} \tilde{e}_{j+1} & \text{ if } j \in \dbrac{2,  m-1} \\
t_{m-1} \tilde{e}_{m-1} & \text{ if } j = m
\end{cases}. \label{tildeH}
\end{align}
Thus, $\ker M$ is an invariant subspace of $H$. Define $\tilde{H}:\ker M \to \ker M$ by the condition $\tilde{H}(v) = H(v)$ for all $v \in \ker M$. Equation~\eqref{tildeH} implies that $\tilde{H}$ is Hermitian, so it has $m$ orthogonal eigenvectors whose corresponding eigenvalues are real. Since these eigenvectors are also of eigenvectors $H$, $H$ has at least $m$ eigenvectors whose corresponding eigenvalues are real.
%Denoting the orthogonal projection onto $\ker M$ as $\Upsilon$, we have $\Upsilon H \Upsilon = H$.
\end{proof}

\section{Su-Schrieffer-Heeger Model with Edge Defects} \label{SSHSection}

\subsection{Model}

This model studies a non-Hermitian perturbation of the \textit{Su-Schrieffer-Heeger} model \cite{SSH} with open boundary conditions. 

The Su-Schrieffer-Heeger (SSH) model was invented to model electrical conductivity in a doped polyacetylene polymer chain \cite{SSH}. The model is a tight-binding model describing hopping on a one-dimensional chain with alternating bond strengths. The ratio of bond strengths dictates the phase of the system. When the outer bond strength is larger than the inner bond strength, the system is in the topologically nontrivial phase, marked by the existence of edge states, and the system behaves as a topological insulator \cite{Batra2020}. When the inner bond strength is larger than the outer bond strength, there are no edge states and the system is in the topologically trivial phase. Since its inception, the topological phases of the SSH model have been experimentally realized \cite{Meier2016}.

Expanding upon the works of \cite{Lieu2018,Klett2017,Zhu2014,Jin2017,Yao2018,Turker2019,Mochizuki2020}, this section examines a non-Hermitian perturbation of the SSH chain. To expand upon previous work, a detuning potential is introduced on sites with a non-Hermitian defect at the edges of the chain. Since the edge states are localized to sites with a non-Hermitian impurity potential, the phases dictated by topology and $\mathcal{PT}$-symmetry are intertwined: in the thermodynamic limit, the $\mathcal{PT}$-unbroken phase is the topologically trivial phase.

%To-Do: Check grammar here...
Intuitively, the Hamiltonian studied in this section is $\mathcal{PT}$-symmetric, is local to a one-dimensional open chain, has a non-Hermitian pair of defect potentials at the edges of the graph, and has 2-periodic hopping. Explicitly, this section studies the tridiagonal case of $\mathscr{H}_n$ with the parametric restrictions
\begin{align}
z &= z_{\{1,n\}} \nonumber\\
t_j &:= \alpha_j = \beta_{j} &\quad& \forall j \in \dbrac{1, n-1} \nonumber \\
t_j &= t_k &\quad& \text{if } i \equiv j \mod 2. \label{SSHParams}
\end{align}
To simplify select equations, I employ the parametrization
\begin{align}
(\Delta, \gamma) = (\text{Re}(z_1), \text{Im}(z_1)).
\end{align}
I will also define $(t_L, t_R) = (\alpha_n,\beta_n)$.
For $n = 6$, the Hamiltonian  is
\begin{align}
\mathscr{H}_6 = \begin{pmatrix}
z_1 & t_1  & 0        & 0   	  & 0  		 & t_L \\
t_1 & 0 & t_2  & 0   	  & 0   	 & 0 		\\
0        & t_2 & 0 & t_1 & 0 	     & 0 		\\
0        & 0        & t_1 & 0 & t_2  & 0 		\\
0        & 0   		& 0   	   & t_2 & 0 & t_1  \\
t_R  & 0  		& 0   	   & 0   	  & t_1 & z_n
\end{pmatrix}.
\end{align}
%Define hopping amplitudes somewhere.

A graphical representation of a model corresponding to this choice of parameters is displayed in \cref{SSHChainFig}.

\begin{figure}[!ht] 
\centering
\includegraphics[width = .75\textwidth]{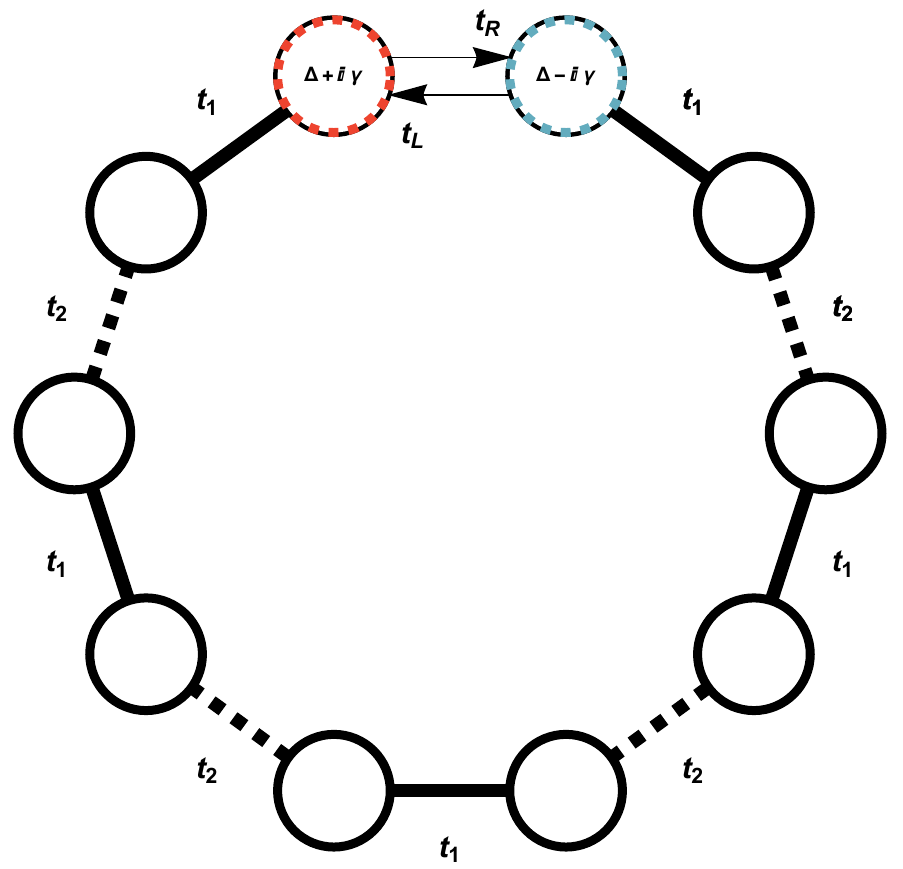}
\caption{Graphical representation of the model studied in this section, an SSH chain with non-Hermitian defects on the boundary for $n = 10$.} \label{SSHChainFig}
\end{figure}

\subsection{Reducible Case}

Before addressing the general case, I will briefly comment on the reducible case $t_1 t_2 = 0$, with $n \geq 4$ even.

If $t_1 = 0$, then the Hamiltonian decomposes into a direct sum of $2\times 2$ blocks over pairs of adjacent sites. The spectrum is 
\begin{align}
\sigma(\mathscr{H}_{n}) = \left\{-t_2, t_2, \frac{z_1 + z_n \pm \sqrt{4 t_L t_R + (z_1 - z_n)^2}}{2} \right\}.
\end{align}
An exceptional point corresponding to a defective Hamiltonian occurs when $(z_1 - z_n)^2 = -4 t_L t_R$.

If $t_2 = 0$, then the Hamiltonian block decomposes into one $4\times 4$ block and $l-2$ $2\times 2$ blocks. A subset of the spectrum is $\{-t_1, t_1 \} \subset \sigma(H)$. In the case $z_1 = z^*_n$ and $t_L t_R \in \mathbb{R}$, the results from \cref{nearestNeighbour} imply that the spectrum is purely real when $|\gamma| < \sqrt{|t_L t_R|}$. An exceptional point occurs at $|\gamma| = \sqrt{|t_L t_R|}$, where the sum of geometric multiplicities of the eigenvalues of $\mathscr{H}_n$ is $n-2$. There are four eigenvalues with a nonzero imaginary part when $|\gamma| \geq \sqrt{|t_L t_R|}$.

Thus, the remaining treatment of the perturbed SSH chain will assume $t_1 t_2 \neq 0$ without loss of generality.

\subsection{Eigenvalue Inclusion Results}

This section is devoted to finding subsets of the complex plane which contain some or all of the eigenvalues of $H$. As a consequence, we will find a subset of the $\mathcal{PT}-$unbroken and $\mathcal{PT}$-broken domains.  

A subset of the $\mathcal{PT}$-unbroken domain is found by applying the intermediate value theorem to the characteristic polynomial. To simplify results, we define
\begin{align}
\mu_k=| t_1+ t_2 e^{(2i\pi/n)k}|\geq 0
% \sqrt{t_1^2 + t_2^2 + 2 t_1 t_2 \cos \left(\frac{2 j \pi}{n} \right)}
\end{align}
and denote the intervals with endpoints $(-1)^{s_1} (t_1 + (-1)^{s_2} t_2)$ and $(-1)^{s_1} \text{sgn} (t_1 + (-1)^{s_2} t_2) \mu_1$ with $s_1, s_2 \in \{0,1\}$  as $I( (-1)^{s_1}, (-1)^{s_2} )$.

\begin{proposition} \label{inclusionTheorem}
Consider an even chain with  $n = 2k$  and assume $t_L = - t_R$. If $t_2^2=z_1 z_n - t_L t_R$, then
\begin{align}
\sigma(\mathscr{H}_n) = \left\{ \pm \mu_j\,|\, j \in \dbrac{1, k-1} \right \}\cup\left\{\frac{z_1 + z_n}{2}\pm \sqrt{t_1^2 - t_2^2 + \left(\frac{z_1 + z_n}{2} \right)^2}\,\right\}. \label{exactSpectrum}
\end{align}
If $t_2^2 \neq z_1 z_n - t_L t_R$, then the intervals $(\mu_{j+1}, \mu_j)$ and $(-\mu_j, -\mu_{j+1})$ each contain one eigenvalue of $\mathscr{H}_n$ for $j \in \dbrac{1, k-1}$. Constraining other parameters as specified below guarantees the existence of additional real eigenvalues in corresponding intervals, 
\begin{align}
1 + k \left(1 \pm_2 \frac{t_2}{t_1} \right)\frac{\left(\Delta \mp_1 t_2\right)^2 + \gamma^2 -t_L t_R}{t_2^2 - \Delta^2 - \gamma^2 + t_L t_R}\geq 0 \, &\Rightarrow \, \sigma(\mathscr{H}_n) \cap I(\pm_1 1,\pm_2 1) \neq \emptyset \label{ineq}. 
\end{align}
\end{proposition}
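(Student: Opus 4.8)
The backbone of the argument is the closed-form characteristic polynomial $p(\lambda) := \det(\lambda\mathbb{1} - \mathscr{H}_n)$ recorded in \cref{charPolyTable}, specialized to the present model, where the hoppings $t_1,t_2$ are real and the corner couplings satisfy $t_L = -t_R$. With this choice the constant term of $p$, which is proportional to $t_L + t_R$, vanishes, and since $\mathcal{PT}$-symmetry forces $z_n = z_1^*$ (so that $z_1 + z_n = 2\Delta$ and $z_1 z_n = \Delta^2 + \gamma^2$), \cref{pseudoHermTraceDeterminant} guarantees that $p$ has real coefficients. Hence $p$ is a genuine real function on $\mathbb{R}$ and the intermediate value theorem applies. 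Throughout I write $Q = Q(\lambda) = (\lambda^2 - t_1^2 - t_2^2)/(2|t_1 t_2|)$ and $c = z_1 z_n - t_L t_R$, and I record three facts: $Q$ is even in $\lambda$; the $k-1$ numbers $\mu_1 > \dots > \mu_{k-1}$ are exactly the positive $\lambda$ at which $Q = \cos(j\pi/k)$, i.e. the zeros of $U_{k-1}(Q)$ via \cref{ChebyshevSecondKind}; and the band edges $\mu_0 = t_1+t_2$ and $\mu_k = |t_1 - t_2|$ correspond to $Q = \pm 1$.

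For the exact-spectrum identity \cref{exactSpectrum} I would first impose the equality $c = t_2^2$, which turns the coefficient of $U_{k-2}(Q)$ in \cref{charPolyTable} into $1$; the Chebyshev recurrence \cref{Chebyshev-Recurrence}, in the form $U_k(Q) + U_{k-2}(Q) = 2Q\,U_{k-1}(Q)$, then collapses $p$ into a single product,
\begin{align}
|t_1 t_2|^{-k}\, p(\lambda) = U_{k-1}(Q)\cdot \frac{\lambda^2 - (z_1 + z_n)\lambda - t_1^2 + t_2^2}{t_1 t_2}.
\end{align}
Reading off the two factors gives the spectrum directly: $U_{k-1}(Q) = 0$ forces $\lambda^2 = t_1^2 + t_2^2 + 2t_1 t_2\cos(j\pi/k) = \mu_j^2$, hence $\lambda = \pm\mu_j$, while the quadratic factor yields $\lambda = \tfrac{z_1+z_n}{2} \pm \sqrt{t_1^2 - t_2^2 + (\tfrac{z_1+z_n}{2})^2}$, which is precisely \cref{exactSpectrum}.

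For the inclusion claims I would evaluate $p$ at the interior nodes $\pm\mu_j$. Since $U_{k-1}(Q(\mu_j)) = 0$, the $U_{k-1}$-term drops out, the recurrence gives $U_k = -U_{k-2}$ there, and the standard value $U_{k-2}(\cos(j\pi/k)) = (-1)^{j+1}$ leaves
\begin{align}
|t_1 t_2|^{-k}\, p(\pm\mu_j) = (-1)^{j+1}\,\frac{c - t_2^2}{t_2^2}, \qquad j \in \dbrac{1, k-1},
\end{align}
the sign being independent of the choice of $\pm\mu_j$ because $Q$ is even. Under the hypothesis $c \neq t_2^2$ these values are nonzero and alternate strictly in $j$, so consecutive interior nodes bracket a sign change and the intermediate value theorem deposits one real eigenvalue in each resulting interval $(\mu_{j+1}, \mu_j)$ and, by the equality of the two evaluations, in each mirror interval $(-\mu_j, -\mu_{j+1})$. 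For the outermost intervals $I(\pm_1, \pm_2)$, whose far endpoints are the band edges where $Q = \pm 1$, I would instead use $U_j(\pm 1) = (\pm 1)^j (j+1)$ to evaluate $p$ at the edge; the requirement that this edge value have sign opposite to the adjacent node value $p(\mu_1)$ is, after substituting $\Delta^2 + \gamma^2 = z_1 z_n$ and clearing the common factor $t_2^2 - c$, exactly the inequality \cref{ineq}, and one final application of the intermediate value theorem produces the eigenvalue in $I(\pm_1,\pm_2)$.

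The routine-but-delicate part, and where I expect the real work to lie, is the band-edge bookkeeping in the last step. One must track the factor $\text{sgn}(t_1 + (-1)^{s_2} t_2)$ that decides on which side of the real axis the interval $I$ sits (so that it remains a genuine edge interval in both the $t_1 > t_2$ and $t_1 < t_2$ regimes), carry the four sign choices $(\pm_1, \pm_2)$ through the evaluation of $p$ at $Q = \pm 1$, and verify that the resulting sign-opposition condition simplifies precisely to the displayed form of \cref{ineq}. A secondary point to pin down is that the $\mu_j$ are genuinely distinct and the intervals nonempty, which I would obtain from the irreducible hypothesis $t_1 t_2 \neq 0$ (the reducible case having been treated separately) together with the strict monotonicity of $\cos(j\pi/k)$; this also rules out the node evaluations vanishing and keeps the sign alternation strict.
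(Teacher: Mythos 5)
Your proposal is correct and takes essentially the same route as the paper's proof: both work from the closed-form characteristic polynomial of \cref{charPolyTable}, note that $t_L = -t_R$ kills the constant term while $t_2^2 = z_1 z_n - t_L t_R$ collapses everything onto a factor of $U_{k-1}(Q)$ times a quadratic (the paper via the identity $U_{n\pm 1}(x) = x U_n(x) \pm T_{n+1}(x)$, you via the recurrence $U_k + U_{k-2} = 2Q\,U_{k-1}$ --- equivalent manipulations), and both then obtain the interval claims by sign-checking the polynomial at the nodes $\pm\mu_j$ and at the band edges with the intermediate value theorem. Your explicit node value $(-1)^{j+1}(c - t_2^2)/t_2^2$ and your edge evaluation reproducing \cref{ineq} are precisely the computations the paper's terser proof leaves implicit, and they check out.
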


\begin{proof}
We utilize an alternative expression to the characteristic polynomial. Using the identity
\begin{align}
U_{n\pm 1}(x) &= x U_n(x) \pm T_{n + 1}(x),
\end{align}
where $T_n(x)$ is the Chebyshev polynomial of the first kind, $T_n(x) = \cos (n \arccos x)$, we can rewrite the characteristic polynomial, with $n=2k$, as
\begin{align}
(t_1 t_2)^{-k} \text{det} (\lambda I - H) &= T_k(Q) \left(1- \frac{z_1 z_n - t_L t_R}{t_2^2}  \right) - \frac{t_L + t_R}{t_2} \nonumber \\
&+ \left(1 + \frac{z_1 z_n - t_L t_R}{t_2^2}\right)Q U_{k-1}(Q) \nonumber \\
&+ \left(\frac{t_2^2 - \lambda (z_1 + z_n) + z_1 z_n - t_L t_R}{t_1 t_2}  \right) U_{k-1}(Q).
\label{altPoly}
\end{align}
Equation~\eqref{exactSpectrum} follows from the observation that the first term of \cref{altPoly} vanishes when $z_1 z_n - t_L t_R = t_2^2$ and the second one vanishes when $t_R=-t_L$. 

Next, consider the case with $z_1 z_n - t_L t_R \neq t_2^2$ and $n\geq 4$. The sign of the characteristic polynomial is different at the endpoints of the intervals $(\mu_j, \mu_{j+1})$ and $(-\mu_{j+1},-\mu_j)$ for all $j \in \dbrac{ 1, k-2 }$, so there exists a real eigenvalue of $H$ inside each of these intervals. The inequalities of \cref{ineq} follow from considering the sign of the characteristic polynomial at the endpoints of the intervals $I(\pm_1 1, \pm_2 1)$.
\end{proof}

The inequalities of \cref{ineq} when $t_1^2=t_2^2$ were known to  \cite[eq. (63-64)]{Willms2008}. Special cases of \cref{exactSpectrum} were presented in \cite{rutherford1948xxv,Willms2008,Korff2008,guo2016solutions}. Two of the inequalities \cref{ineq} are satisfied if $t_2^2 \geq z_1z_n-t_Lt_R=\Delta^2 + \gamma^2-t_L t_R$. All  four inequalities are satisfied if, additionally, $t_1 \geq t_2$. Thus, $\sigma(H) \subset \mathbb{R}$ when $\Delta^2 + \gamma^2 - t_L t_R \leq t_2^2 \leq t_1^2$. 

Now, we focus on the complex part of the spectrum. A subset of the $\mathcal{PT}$-broken domain is found as an application of the Brauer-Ostrowski ovals \cref{BrauerOstrowski} \cite{Brauer1947,Ostrowski1937,Varga2004}. Let $C(w_1,w_2; b) = \{w \in \mathbb{C} \,|\, |w - w_1|  \cdot |w-w_2| \leq b\}$ denote a Cassini oval. By the Brauer-Ostrowski ovals theorem, all eigenvalues of $H$ are elements of the union of  Cassini ovals, specifically
\begin{align}
\sigma(H) \subseteq &C(0,0;(t_1+t_2)^2) \cup C(0,z_n; (t_1+t_2)(t_1+|t_R|)) \cup \nonumber \\
&C(z_1, 0; (t_1+t_2)(t_1+|t_L|)) \cup C(z_1,z_n;(t_1+|t_L|)(t_1+|t_R|)). \label{Cassini}
\end{align}
Since eigenvalues are  continuous in the arguments of a continuous matrix function \cite{Kato1995}, if the union of the Cassini ovals in \cref{Cassini} contains disjoint components, then each component contains at least one eigenvalue of $H$. In particular, if both of the inequalities
\begin{align} 
\left[|z_1|^2-(t_1+t_2)^2\right] \gamma  &> |z_1| (t_1+|t_L|)(t_1+|t_R|), \label{unbrokenIneq1} \\
2(r_1+t_2) &< \left(|z_1| + \sqrt{|z_1|^2 - 4t_1^2 - 4 \min\{|t_L|^2, |t_R|^2\} }\right)\label{unbrokenIneq2}
\end{align}
hold, then there exist disjoint components containing the points $z_1$ and $z_n=z_1^*$, implying the existence of at least two eigenvalues with nonzero imaginary parts. 

%Char Poly for m \neq 1

%\begin{widetext}
%\begin{align}
%P_{m,n}(\cdot) = \Delta_n(t_1,t_2) + (\gamma_m + \gamma_{\bar{m}}) \Delta_{n-m}(t_{m-1}, t_m) \Delta_{m-1}(t_m, t_{m-1}) + \gamma_m \gamma_{\bar{m}} \Delta_{n-2m}(t_m,t_{m-1}) \Delta^2_{m-1} (t_{m-1},t_m)
%\end{align}
%\end{widetext}

\subsection{Topological Phases}
The eigenvectors of tight-binding models are characterized as either \textit{bulk} or \textit{edge} states based on how their inverse participation ratio scales with the chain size $n$~\cite{JoglekarSaxena,Joglekar2011z}. Roughly, the bulk eigenstates  are spread over most of the chain irrespective of the chain size, whereas edge states remain exponentially localized within a few sites even with increasing chain size. Observing
\begin{align}
U_n(Q) = \frac{(Q + \sqrt{Q^2 - 1})^{n+1} - (Q - \sqrt{Q^2 - 1})^{n+1}}{2 \sqrt{Q^2 - 1}},
\end{align}
we see the sequence of Chebyshev polynomials $U_n(Q)$ is oscillatory in $n$ for $|Q|\leq 1$ and scales exponentially with $n$ when $\min_{q \in [-1,1]} |Q - q| = \mathcal{O}(1)$, where asymptotics are defined as $n \rightarrow \infty$. Thus, the non-trivial  topological phase with edge-localized states is equivalent to existence of  eigenvalues which do not satisfy $|Q|\leq 1$. Therefore, for this particular model, in the thermodynamic limit, the $\mathcal{PT}$-broken phase is equivalent to  topologically nontrivial phase, as complex eigenvalues correspond to edge states.

\subsection{Surface of Exceptional Points (Open Chain)} \label{EP Surface Section}

Consider the SSH Hamiltonian with detuned defects $z_1 = z_n^*$ as a function of three dimensionless parameters, $\Delta/t_2, \gamma/t_2,$ and $t_1/t_2$. In this case, the set of exceptional points form a 2-dimensional surface, with ridges that correspond to third-order exceptional points.  At $\Delta=0$, these ridges intersect, giving rise to cusp singularities corresponding to fourth-order exceptional points. 

%{\color{red} Bonus objective: Compute the EP4s at zero detuning}.

\begin{figure}[!ht]
   \centering
   \begin{subfigure}[b]{0.45\textwidth}
      \centering
      \includegraphics[width = \textwidth]{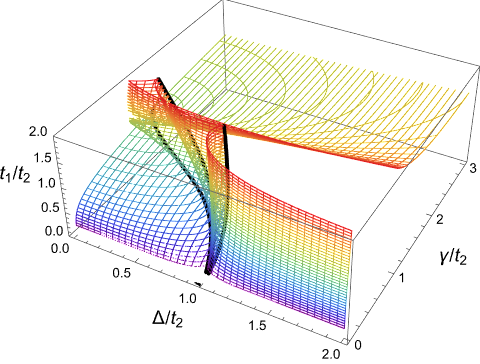}
      \caption{}
   \end{subfigure}
   \hfill
   \begin{subfigure}[b]{0.45\textwidth}
      \centering
      \includegraphics[width = \textwidth]{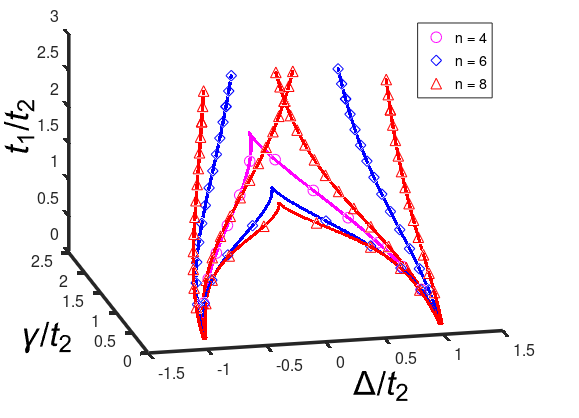}
      \caption{}
   \end{subfigure}
   \caption{(a) Exceptional points (EPs) of an open SSH chain with end-defects ($m = 1$, $\alpha_n = \beta_n = 0$) and $n = 8$. The domain satisfying the inequalities of \cref{ineq} lies in between this EP2 surface and the $\gamma = 0$ plane. Ridges of this surface correspond to EP3s, plotted in \cref{thirdOrderFig}(b). As one passes through this surface along a ray originating at $\gamma = 0$, we pass from the $\mathcal{PT}-$unbroken region to the $\mathcal{PT}$-broken region. The $\mathcal{PT}$-broken region is a subset of the topologically nontrivial phase, marked by the existence of edge states.
(b) Contours of EP3s from \cref{thirdOrderFig}(a) show cusp singularities at $\Delta = 0$ and are fourth-order exceptional points (EP4s).}
   \label{thirdOrderFig}
\end{figure}

\section{Uniform Chain} \label{uniformSection}

%ASK YOGESH ABOUT MY DEFN

This section addresses a model with open boundary conditions, uniform hopping, and a pair of defect-potentials at parity-symmetric sites. This corresponds to the Hamiltonian $\mathscr{H}_n$ with the parametric restrictions
\begin{align}
\alpha &= \beta = t \textbf{e}_{\dbrac{1,n-1}} \neq 0, \nonumber \\
z &= z_{m} e_{m} + z_{\bar{m}} e_{\bar{m}}, \label{uniformParameters}
\end{align} 
where I remind the reader that $\textbf{e}$ was defined in \cref{bold e vector}. For $n = 6$ and $m = 2$, the Hamiltonian  is
\begin{align}
\mathscr{H}_6 = \begin{pmatrix}
0 & t  & 0        & 0   	  & 0  		 & 0 \\
t & z_2 & t  & 0   	  & 0   	 & 0 		\\
0        & t & 0 & t & 0 	     & 0 		\\
0        & 0        & t & 0 & t  & 0 		\\
0        & 0   		& 0   	   & t & z_5 & t  \\
0  & 0  		& 0   	   & 0   	  & t & 0
\end{pmatrix}.
\end{align}
For this case, the characteristic polynomial and eigenvector ansatz were determined in example~\ref{Ortega} \cite{ortega2019mathcal}. In the notation in that example, this section is the special case $m := m_1 = \overline{m_2}$. Introducing the dimensionless defect strength
\begin{align}
z'_i := z_i/t
\end{align}
will simplify subsequent formulas. A similar simplification follows from using the rescaled characteristic polynomial,
\begin{align}
P_{n,m}(x; z') :&= \gls{det}(2x I - H/t) \nonumber\\
&= U_n\left(x\right) - (z'_m + z'_{\overline{m}}) U_{n-m}\left(x\right)  U_{m-1}\left(x\right) \nonumber \\&+ z'_m z'_{\overline{m}} U_{n-2m}\left(x\right) U_{m-1}\left(x\right)^2. \label{charPolySpecific}
\end{align} 
A root of the rescaled characteristic polynomial derives an eigenvalue via $x/2 \in \sigma(\mathscr{H}_n)$.

Due to \cref{tridiagNondegen} \cite[Thm. 4.2]{elliott1953characteristic}, there are no diabolical points.

As in previous sections, I will introduce the parametrization
\begin{align}
\Delta &= z_m + z_{\overline{m}} \\
\gamma &= \mathfrak{i}(z_{\overline{m}} - z_m).
\end{align}

% Generalizing the works of \cite{rutherford1948xxv,losonczi1992eigenvalues,Babbey}, the eigensystem in this case was computed in \cite{ortega2019mathcal}. 

\subsection{Exact Eigenvalues} \label{ExactEvalues}

Special cases of the Hamiltonian have a closed-form spectrum, as summarized in \cref{closedFormEvalsTable}. 

\begin{table*}[!ht]
\centering
\begingroup
\setlength{\tabcolsep}{8pt} % Default value: 6pt
\renewcommand{\arraystretch}{1.5} % Default value: 1
\begin{tabular}{|l|c|l|l|}
\hline 
Source & $m$ & Constraint on $z_i$ & Eigenvalues of $\mathscr{H}_n$, $\sigma(\mathscr{H}_n)$ \\
\hhline{|=|=|=|=|}
\hline 
$\begin{array}{l}\text{\cite[eq.(40)]{Willms2008}} \end{array}$
& 1 
& $\begin{array}{l} z_1 z_{n} = t^2 \end{array}$  
& $\begin{array}{l}\left\{
2 t \cos\left(\dfrac{j \pi}{n}\right), j \in \dbrac{1,n-1} \right\} \\\cup
\left\{z_1 + z_n \right\}
\end{array}
$\\[10pt]
\hline 
$\begin{array}{l}\text{\cite[eq.(8)]{vEgervry1928}} \end{array}$
& 1 
& $\begin{array}{l} z = 0 \end{array}$  
& $\begin{array}{l}\left\{
2 t \cos\left(\dfrac{j \pi}{n+1}\right), j \in \dbrac{1,n} \right\}
\end{array}
$\\[10pt]
\hline 
$\begin{array}{l} \text{\cite[p.230]{rutherford1948xxv}} \\ \text{\cite[p.28]{elliott1953characteristic}} \end{array}$
& 1 
& $\begin{array}{l}
z_{1,n} = \pm t\\ z_{n,1} = 0 \end{array}$  
& $\begin{array}{l} \left\{
\mp 2 t \cos\left(\dfrac{2j \pi}{2n+1}\right), j \in \dbrac{1,n} \right\} 
\end{array}$\\[10pt]
\hline 
$\begin{array}{l} \text{\cite[p.239]{Rutherford1952}} \\ \text{\cite[p.28]{elliott1953characteristic}} \end{array}$
& 1 
& $\begin{array}{l} z_{1,n} = t,\\ z_{n,1} = -t \end{array}$  
& $\begin{array}{l} \left\{
2 t \cos\left(\dfrac{(2j-1) \pi}{2n}\right), j \in \dbrac{1,n} \right\} \end{array}$\\[10pt]
\hline 
$\begin{array}{l} \text{\cite[p.3]{Babbey}} \end{array}$
& $\dfrac{n}{2}$ 
& $\begin{array}{l}
z_m = z^*_{m+1} \\
\,\,\,\,\,\,\,\,\,= \pm \mathfrak{i} t \end{array}$ & $\begin{array}{l}
\left\{
2 t \cos \left(\dfrac{j \pi}{m+1}\right), j \in \dbrac{1,m}
\right\}
\end{array}$\\[10pt]
\hline 
\end{tabular}
\endgroup
\caption{
This table lists cases of tridiagonal matrices whose eigenvalues have closed-form expressions. In all cases, the matrix elements are given by \cref{tridiagElements} with $\alpha = \beta = t \textbf{e}_{\dbrac{1,n-1}}, z = z_{\{m,\overline{m}\}}$. If $m = n/2$, $n$ is even. The case shown in the first row can be derived by applying the similarity transform of \cref{triDiagToSymmetric} to the result in \cite[\S 14]{Ledermann1954}. Special cases of the first row were determined by physicists \cite{rutherford1948xxv,Korff2008,guo2016solutions}. Any case where there are less than $n$ distinct roots is a second-order exceptional point, corresponding to a defective matrix $\mathscr{H}_n$. The eigenvectors can be computed using the ansatz of \cite{Babbey} or by calculating characteristic polynomials \cite{Gantmacher2002}.
}
\label{closedFormEvalsTable}
\end{table*}

\addtocounter{table}{-1}

\begin{table}[!ht]
\centering
\begingroup
\setlength{\tabcolsep}{8pt} % Default value: 6pt
\renewcommand{\arraystretch}{1.5} % Default value: 1
\begin{tabular}{|l|c|l|l|}
\hline 
Source & $m$ & Constraint on $z_i$ & Eigenvalues of $\mathscr{H}_n$, $\sigma(\mathscr{H}_n)$ \\
\hhline{|=|=|=|=|}
$\begin{array}{l}\text{\cite[table 1]{Barnett2023}} \end{array}$ 
& $\dfrac{n}{2}$ 
& $\begin{array}{l} z_m = z^*_{m+1} \\
\,\,\,\,\,\,\,\,\,= e^{\pm \mathfrak{i} \pi/3}  t
\end{array}$  
& $\begin{array}{l} \left\{ 2 t\cos \left(\dfrac{j \pi}{m+1} \right), j \in \dbrac{1,m} \right\} \cup\\ \left\{2 t \cos \left(\dfrac{(2j-1) \pi }{2m + 1}\right), j \in \dbrac{1,m}  \right\} \end{array}$\\[10pt]
\hline
$\begin{array}{l}\text{\cite[table 1]{Barnett2023}} \end{array}$ 
& $\dfrac{n}{2}$ 
&$\begin{array}{l} z_m = z^*_{m+1} \\
\,\,\,\,\,\,\,\,\,= e^{\pm 2 \pi \mathfrak{i}/3} t 
\end{array}$  
& $\begin{array}{l} \left\{2t \cos \left(\dfrac{j \pi}{m+1}\right), j \in \dbrac{1,m} \right\} \cup\\ \left\{2 t\cos \left(\dfrac{2j \pi}{2m+1}\right), j \in \dbrac{1,m} \right\} \end{array}$ \\
\hline
$\begin{array}{l}\text{\cite[table 1]{Barnett2023}} \end{array}$ 
& $\dfrac{n}{2}$ 
& $\begin{array}{l} z_m = z^*_{m+1}\\
\,\,\,\,\,\,\,\,\,= (-1\pm \mathfrak{i})t 
\end{array}$  
& $\begin{array}{l} \left\{
2 t \cos\left( \dfrac{2j \pi }{2m + 1} \right), j \in \dbrac{1,m} \right\} 
\end{array}$\\[10pt]
\hline 
$\begin{array}{l}\text{\cite[table 1]{Barnett2023}} \end{array}$ 
& $\dfrac{n}{2}$ 
& $\begin{array}{l} z_m = z^*_{m+1} \\
\,\,\,\,\,\,\,\,\,= (1\pm \mathfrak{i}) t
\end{array}$  
& $\begin{array}{l}\left\{
2 t \cos \left(\dfrac{(2j-1)\pi}{2m+1} \right), j \in \dbrac{1,m} \right\}
\end{array}
$ \\[10pt]
\hline 
\end{tabular}
\endgroup
\caption{\textit{\textbf{{(continued.)}}}
This table summarizes the tridiagonal matrix eigenvalue problems solved in \cite[table 1]{Barnett2023}, and summarized in examples~\ref{ex:cp2},\ref{ex:cp3},\ref{ex:cp4},\ref{ex:cp5}.
}
\end{table}

The remainder of this section provides detailed derivations of the final five entries of \cref{closedFormEvalsTable}. The feature that all of the entries have in common is that $n = 2m$ and $z_m = z^{*}_{m+1}$. Applying the identity
\begin{align}
U_{2m}(x) &= U_{m}^2(x) - U_{m-1}^2(x),
\end{align}
to the characteristic polynomial yields
\begin{align}
\text{det}(2x I - H/t) &= U_{m}^2(x) - 2 \text{Re}(z'_m) U_{m}\left(x\right)  U_{m-1}\left(x\right) + (|z'_m|^2-1) U_{m-1}\left(x\right)^2.
\end{align}
Thus, on the unit disk $|z'_m| = 1$, the characteristic polynomial contains a factor of $U_m(x)$. The roots of $U_m$ have a closed-form expression, given in \cref{ChebyshevSecondKindRoots}, so a corresponding subset of the eigenvalues is given by
\begin{align}
|z'_m| = 1 &\, \Rightarrow \, \left\{
2 t \cos \left(\dfrac{j \pi}{m+1}\right), j \in \dbrac{1,m}
\right\} \subseteq \sigma(H). \label{nearestNeighbourEvalueSubset}
\end{align}

\begin{ex} 
$z_m = z^*_{m+1} = \pm \mathfrak{i} t$. This example is the simplest of those presented in this section. The characteristic polynomial is $U_m^2(x)$, so the only roots are given by \cref{nearestNeighbourEvalueSubset}.
\end{ex}
The remaining examples utilize the Chebyshev polynomials of the \textit{third} and \textit{fourth} kinds, which can be defined as
\begin{align}
V_m(x) &= U_m(x) - U_{m-1}(x) \\
W_m(x) &= U_m(x) + U_{m-1}(x).
\end{align}
The roots of these polynomials are
\begin{align}
V_n\left(\cos \left(\dfrac{(2k-1)\pi}{2n+1} \right) \right) &= 0 &\quad& \forall k \in \dbrac{1,n} \\
W_n\left(\cos \left(\dfrac{2k \pi}{2n+1} \right) \right) &= 0 &\quad& \forall k \in \dbrac{1,n}.
\end{align}

\begin{ex} \label{ex:cp2}
$z_m = z^*_{m+1} = (e^{\pm \pi \mathfrak{i}/3}) t$. The characteristic polynomial factorizes as 
\begin{align}
\text{det}(2x I - H/t) = U_m(x) V_m(x).
\end{align}
\end{ex}

\begin{ex} \label{ex:cp3}
$z_m = z^*_{m+1} = (e^{\pm 2 \pi \mathfrak{i}/3}) t$. The characteristic polynomial factorizes as 
\begin{align}
\text{det}(2x I - H/t) = U_m(x)W_m(x).
\end{align}
\end{ex}

\begin{ex} \label{ex:cp4}
$z_m = z^*_{m+1} = (1\pm \mathfrak{i}) t$. The characteristic polynomial factorizes as
\begin{align}
\text{det}(2x I - H/t) = V_m(x)^2.
\end{align}
\end{ex}

\begin{ex} \label{ex:cp5}
$z_m = z^*_{m+1} = (-1\pm \mathfrak{i}) t$. The characteristic polynomial factorizes as
\begin{align}
\text{det}(2x I - H/t) = W_m(x)^2.
\end{align}
\end{ex}

\subsection{Constant Eigenvalues} \label{Constant Evalues}

The simplest functional dependence eigenvalues can have on a parameter is independence. For generic perturbations of operators, no eigenvalues remain invariant under the perturbation. However, symmetry considerations can force a subset of the eigenvalues to be a constant which is independent of the parameters. A simple example is the set of matrices with a chiral symmetry and an odd number of elements, all of which have a zero mode. 

In this section, we note if the defects in the lattice model of example~\eqref{Ortega} are placed appropriately, there exist constant eigenvalues. The set of constant eigenvalues is determined in its entirety and the characteristic polynomial admits a corresponding factorization. This generalizes the result reported in \cite{ortega2019mathcal} to cases where 
$z_{m} \neq - z_{\overline{m}}$.

For notational simplicity, denote
\begin{align}
g_j := \text{gcd}(n+1, j m),
\end{align}
where $j \in \{1,2\}$ and $\text{gcd}$ is the greatest common divisor.

\begin{theorem}
Consider the tridiagonal matrix studied in example~\eqref{Ortega}, whose elements are given by \cref{tridiagElements} with the parametric restrictions $\alpha_{\dbrac{1 n-1}} = \beta_{\dbrac{1 n-1}} = 1, \alpha_n = \beta_n = 0, z_i = z_i (\delta^i_{m} + \delta^i_{\overline{m}})$, and let $m \in \dbrac{1, \floor{n/2}}$ without loss of generality. The following properties hold:

\begin{enumerate}
\item A subset of the spectrum is independent of $z_i$, namely 
\begin{align}
\left\{2 \cos \left(\frac{\pi r}{g_1} \right) \,|\, r \in \dbrac{1, g_1-1} \right\} = \bigcap\limits_{(z_m, z_{\overline{m}}) \in \mathbb{C}^2} 
\sigma(H_n(\textbf{e}_{\dbrac{1,n-1}}, \textbf{e}_{\dbrac{1,n-1}}, z_{\{m,\overline{m}\}})). 
\label{opaqueEvals}
\end{align}
This subset is the only subset of eigenvalues which is independent of $z_i$. 
The corresponding eigenvectors are also independent of $z_i$. 

\item The remaining eigenvalues are the roots of the following polynomial of degree $n-g_1+1$,
\begin{align}
\lambda &\in \sigma(\mathscr{H}_n) \setminus \left\{2 \cos \left(\frac{\pi r}{g_1} \right) \,|\, r \in \dbrac{1, g_1-1} \right\} \Leftrightarrow \nonumber \\
0 &= U_{g_1^{-1}(n+1)-1} \left( T_{g_1} \left(\frac{\lambda}{2}\right) \right)
- (z_{m} + z_{\overline{m}}) U_{n-m} \left(\frac{\lambda}{2} \right) U_{g_1^{-1}m-1} \left( T_{g_1} \left(\frac{\lambda}{2} \right) \right)\nonumber\\
&+ z_{m} z_{\overline{m}} U^2_{g_1-1}\left(\frac{\lambda}{2} \right)U_{n-2 m} \left(\frac{\lambda}{2} \right) U_{g_1^{-1} m-1} \left( T_{g_1}\left(\frac{\lambda}{2} \right) \right).
\end{align}
\end{enumerate}
\end{theorem}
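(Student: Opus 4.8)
The plan is to exploit the explicit characteristic polynomial from \cref{Ortega}, specialized to the case $\alpha = \beta = \textbf{e}_{\dbrac{1,n-1}}$, $z = z_m e_m + z_{\overline{m}} e_{\overline{m}}$ with open boundary conditions. Recall from \cref{ortegaCharEq} that
\begin{align}
\det(\lambda \mathbb{1} - \mathscr{H}_n)
&= U_n \left(\tfrac{\lambda}{2} \right)
+ z_{m} z_{\overline{m}} U_{n-2 m} \left(\tfrac{\lambda}{2} \right) U_{m-1}^2\left(\tfrac{\lambda}{2} \right)
- (z_{m} + z_{\overline{m}}) U_{n-m} \left(\tfrac{\lambda}{2} \right) U_{m-1} \left(\tfrac{\lambda}{2} \right).
\end{align}
The first observation is that a value $\lambda$ lies in $\bigcap_{(z_m,z_{\overline{m}})} \sigma(\mathscr{H}_n)$ precisely when the characteristic polynomial vanishes for \emph{all} choices of $z_m, z_{\overline{m}}$. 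Since the three terms above carry independent monomials $1$, $z_m z_{\overline{m}}$, and $z_m + z_{\overline{m}}$ in the parameters, simultaneous vanishing is equivalent to the three Chebyshev coefficients vanishing separately. The crucial factor common to the last two terms is $U_{m-1}(\lambda/2)$, so the first step is to show that $U_{m-1}(\lambda/2) = 0$ together with $U_n(\lambda/2) = 0$ characterizes the parameter-independent spectrum, and to identify that common root set with $\{2\cos(\pi r/g_1) : r \in \dbrac{1,g_1-1}\}$.

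The key identity driving the whole argument is the Chebyshev composition law \cref{chebyshevComposed}, namely $U_{mk-1}(x) = U_{k-1}(T_m(x)) U_{m-1}(x)$, and more generally the factorization of $U_{N-1}$ through $U_{g-1}$ whenever $g \mid N$. First I would use the greatest-common-divisor structure: writing $g_1 = \gcd(n+1, m)$, I would show that the common roots of $U_n = U_{(n+1)-1}$ and $U_{m-1}$ are exactly the roots of $U_{g_1 - 1}$, which are $\cos(\pi r/g_1)$ for $r \in \dbrac{1, g_1-1}$. This is a statement that $\gcd$ of Chebyshev-$U$ indices (shifted by one) corresponds to $U$ of the gcd, a standard divisibility fact for these orthogonal polynomials that follows from the sine formula $U_{j}(\cos\theta) = \sin((j+1)\theta)/\sin\theta$ and elementary number theory on which $\theta = \pi r/(n+1)$ simultaneously solve $\sin((n+1)\theta) = 0$ and $\sin(m\theta) = 0$. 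The eigenvectors' independence of $z_i$ would follow from \cref{charPolyEvec}: these eigenvalues are roots of $U_{m-1}(\lambda/2)$, so by the ansatz \cref{psi(l)} the eigenvector components $\psi_i(\lambda)$ for $i \le m$ are determined by $\theta_{i-1}(\lambda) = U_{i-1}(\lambda/2)$, which contains no $z$-dependence, and the vanishing of $U_{m-1}$ decouples the right half from the defect sites.

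For the second part, I would factor the common divisor out of the characteristic polynomial. Using \cref{chebyshevComposed} to write $U_{n} (\lambda/2)= U_{g_1^{-1}(n+1)-1}(T_{g_1}(\lambda/2))\,U_{g_1-1}(\lambda/2)$ and likewise $U_{m-1}(\lambda/2) = U_{g_1^{-1}m - 1}(T_{g_1}(\lambda/2))\,U_{g_1-1}(\lambda/2)$, I would extract the overall factor $U_{g_1-1}(\lambda/2)$ (or, where a squared factor appears, an appropriate power), dividing the characteristic polynomial by it; the quotient is precisely the degree-$(n - g_1 + 1)$ polynomial displayed in the theorem, whose roots give the remaining, $z$-dependent spectrum. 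The main obstacle I anticipate is bookkeeping the composition identity cleanly across all three terms simultaneously: the term $U_{n-2m}(\lambda/2)$ does not obviously share the factor $U_{g_1-1}$, so I must verify that after pulling $U_{g_1-1}(\lambda/2)$ from the first and third summands the $z_m z_{\overline{m}}$ term retains the stated form with $U_{n-2m}$ intact and only one factor of $U_{g_1^{-1}m-1}(T_{g_1})$ and a surviving $U_{g_1-1}^2$. Confirming that the composition identity applies with the correct indices (in particular that $g_1 \mid (n+1)$ and $g_1 \mid m$, so $g_1^{-1}(n+1)$ and $g_1^{-1}m$ are integers) and that no spurious cancellation introduces or removes roots is the delicate step; this is where I would concentrate the detailed verification, the remainder being substitution into \cref{chebyshevComposed} and collecting terms.
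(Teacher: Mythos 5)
Your proposal is correct, and for part 2 it is essentially the paper's own proof: both apply the composition identity \cref{chebyshevComposed} to the characteristic polynomial \cref{ortegaCharEq}, pull a common factor of $U_{g_1-1}(\lambda/2)$ out of all three summands, and identify the quotient as the degree-$(n-g_1+1)$ polynomial. Where you genuinely diverge is the ``these are the \emph{only} constant eigenvalues'' half of part 1. The paper proves this with the Hellmann--Feynman theorem (\cref{HellmannFeynman}): at $z=0$ the derivative of the eigenvalue $E_k$ with respect to $z_m$ equals $\left(U_{m-1}\left(\cos\left(\frac{k\pi}{n+1}\right)\right)\right)^2$, which vanishes only when $\frac{k\pi}{n+1}$ is a node angle $\frac{l\pi}{m}$, recovering exactly the $g_1$-set. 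You instead note that $\lambda\in\bigcap_{(z_m,z_{\overline{m}})}\sigma(\mathscr{H}_n)$ iff the characteristic polynomial vanishes identically in $(z_m,z_{\overline{m}})$, and since $1$, $z_m+z_{\overline{m}}$, $z_m z_{\overline{m}}$ are linearly independent, the three Chebyshev coefficients must vanish separately; the sine representation then reduces this to $U_n(\lambda/2)=U_{m-1}(\lambda/2)=0$, whose common root set is that of $U_{g_1-1}$. Your route is more elementary -- purely algebraic, needing no differentiability or simplicity-of-spectrum input, and it delivers both inclusions of \cref{opaqueEvals} simultaneously -- whereas the paper's route buys the physical interpretation (constancy of an eigenvalue is tied to nodes of the unperturbed eigenvector) that the surrounding discussion leans on. Do spell out the forward implication in your reduction: if $U_{m-1}\neq 0$, the coefficient conditions force $U_n=U_{n-m}=0$, i.e.\ $(n+1)\theta,(n-m+1)\theta\in\pi\mathbb{Z}$, whose difference gives $\sin(m\theta)=0$ and hence $U_{m-1}=0$ after all, so no extra constant eigenvalues can arise. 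Similarly, your eigenvector ``decoupling'' also uses $\psi_{\overline{m}}=0$, i.e.\ $U_{n-m}(\lambda/2)=0$ at these roots, which holds because $g_1$ divides $n+1-m$.

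One caution on the bookkeeping you flagged as delicate: carrying out the division honestly, the $z_m z_{\overline{m}}$ term of the quotient is $U_{g_1-1}\left(\frac{\lambda}{2}\right)U_{n-2m}\left(\frac{\lambda}{2}\right)U^2_{g_1^{-1}m-1}\left(T_{g_1}\left(\frac{\lambda}{2}\right)\right)$, equivalently $U_{m-1}U_{n-2m}U_{g_1^{-1}m-1}(T_{g_1})$, which is what the paper's proof displays. The theorem statement's $U^2_{g_1-1}\cdots U_{g_1^{-1}m-1}(T_{g_1})$ has the square on the wrong factor (a typo in the display, since $U_{m-1}=U_{g_1-1}\,U_{g_1^{-1}m-1}(T_{g_1})$), so do not contort your verification to reproduce the stated form.
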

\begin{proof}
The key is to apply the Chebyshev polynomial identity \cref{chebyshevComposed} to the characteristic equation \cref{ortegaCharEq}. Every eigenvalue $\lambda \in \sigma(\mathscr{H}_n)$ satisfies

%------- Correct verbiage -------%

\begin{align}
0 &= U_{g_1 g_1^{-1}(n+1)-1} \left(\frac{\lambda}{2}\right) 
- (z_{m} + z_{\overline{m}}) U_{n-m} \left(\frac{\lambda}{2} \right) U_{g_1 g_1^{-1}m-1} \left(\frac{\lambda}{2} \right) \nonumber \\
&+ z_{m} z_{\overline{m}} U_{n-2 m} \left(\frac{\lambda}{2} \right) U_{g_1 g_1^{-1} m-1}^2\left(\frac{\lambda}{2} \right) \\
&= U_{g_1-1}\left(\frac{\lambda}{2} \right) U_{g_1^{-1}(n+1)-1} \left( T_{g_1} \left(\frac{\lambda}{2}\right) \right) \nonumber \\
&- (z_{m} + z_{\overline{m}}) U_{g_1-1}\left(\frac{\lambda}{2} \right) U_{n-m} \left(\frac{\lambda}{2} \right) U_{g_1^{-1}m-1} \left( T_{g_1} \left(\frac{\lambda}{2} \right) \right) \nonumber \\
&+ z_{m} z_{\overline{m}} U^2_{g_1-1}\left(\frac{\lambda}{2} \right)U_{n-2 m} \left(\frac{\lambda}{2} \right) U_{g_1^{-1} m-1}^2 \left( T_{g_1}\left(\frac{\lambda}{2} \right) \right).
\end{align}
Thus, the characteristic polynomial contains $U_{g_1-1} \left(\frac{\lambda}{2} \right)$ as a divisor for all $(z_{m}, z_{\overline{m}})\in\mathbb{C}^2$. This divisor is trivial when $g_1 = 1$. For $g_1 > 1$, the corresponding roots of $U_{g_1-1}$ are the left hand side of \cref{opaqueEvals}. As a consequence, the identity 
\begin{align}
\left\{2 \cos \left(\frac{\pi r}{g_1} \right) \,|\, r \in \dbrac{1, g_1-1} \right\} \subseteq \bigcap\limits_{(z_m, z_{\overline{m}}) \in \mathbb{C}^2} \sigma(H_n(\textbf{e}_{\dbrac{1,n-1}}, \textbf{e}_{\dbrac{1,n-1}}, z_{\{m,\overline{m}\}}))
\end{align}
holds. 

To prove these are the only eigenvalues which are constant, we use the Hellmann-Feynman \cref{HellmannFeynman}. The derivatives of the eigenvalues with respect to $z_m, z_{\bar{m}}$ must be zero. Consider first the perturbation of the eigenpair $E_k, \psi(k)$ in $z_{m}$
\begin{align}
\frac{d E_{k}}{d z_{m}} &= \braket{\psi(k)| \frac{d \mathscr{H}_n}{d z_m} \psi(k)} \\
&= |\psi_{m}(k)|^2,
\end{align}
thus,
\begin{align}
\left. \frac{d E_k}{d z_m} \right|_{z_m = 0} &= \left(U_{m-1}\left(\cos \left(\frac{k \pi}{n+1} \right) \right)\right)^2.
\end{align}
The only way this derivative can equal zero is if $\frac{k \pi}{n+1} \in \{\frac{l \pi}{m} \,|\, l \in \dbrac{1, m-1} \}$, which yields the desired roots. The calculation for the $m_2$ perturbation is identical.
\end{proof}

In the special case $z_{m} = - z_{\overline{m}}$, there are even more constant eigenvalues. Given $\lambda \in \sigma(\mathscr{H}_n)$, factorizing the characteristic polynomial yields
\begin{align}
0 &= U_{g_2 g_2^{-1}(n+1)-1} \left(\frac{\lambda}{2}\right) - z^2_{m} U_{g_2 g_2^{-1}(n + 1 -2 m) - 1} \left(\frac{\lambda}{2} \right) U_{m-1}^2\left(\frac{\lambda}{2} \right) \\
&= U_{g_2-1} \left(\frac{\lambda}{2}\right) \left(U_{g_2^{-1}(n+1)-1} \left( T_{g_2}\left(\frac{\lambda}{2}\right) \right) - z_m^2 U_{g_2^{-1}(n + 1 -2 m) - 1} \left( T_{g_2}\left(\frac{\lambda}{2} \right) \right)U_{m-1}^2\left(\frac{\lambda}{2} \right) \right),
\end{align}
which implies
\begin{align}
z_{m} = - z_{\overline{m}} \,\,\,\,\, \Rightarrow \,\,\,\,\,
\left\{2 t \cos \left(\frac{\pi r}{g_2} \right) \,|\, r \in \dbrac{1, g_2-1} \right\} \subsetneq \sigma(\mathscr{H}_n).
\end{align}

When $n$ is odd, $g_2$ is an even number which is larger than one. Consequently, zero is an constant eigenvalue. That zero is a constant eigenvalue in the case of a lattice with an odd number of sites can also be derived from the chiral symmetry displayed in \cref{tridiagStagger}.

%
%What about in the case $\gamma_{m_1} = - \gamma_{m_2}$? Then
%\begin{align}
%\frac{d E_\gamma}{d \gamma} &= \braket{\psi(\gamma)| \frac{d T}{d\gamma} \psi(\gamma)} \\
%&= |\psi_{m_1}(\gamma)|^2 - |\psi_{m_2}(\gamma)|^2,
%\end{align}
%$PT$-symmetry makes this statement trivial...

\subsection{Exceptional Contours} \label{EPContours Uniform Chain}

This section will yield an analytic expression for the set of exceptional points when the non-Hermitian defects are at the edges of our one-dimensional uniform chain ($m = 1$). The characteristic polynomial in this case simplifies to \cite[eq. (2.3)]{rutherford1948xxv}
\begin{align}
P_{n,1}(x; z') :&= U_n\left(x\right) - (z'_1 + z'_{n}) U_{n-1}\left(x\right) + (z'_1 z'_{n} )U_{n-2}\left(x\right). 
\end{align} 

As summarized in \cref{closedFormEvalsTable}, there exist choices for the rescaled defect strength, $z'$, such that the spectrum of the Hamiltonian, $\mathscr{H}_n$, consists of only simple eigenvalues. Thus, the set of exceptional points of $\mathscr{H}_n$ is the algebraic curve, $C_{\text{EP}} \subsetneq \mathbb{C}$, defined by the zeroes of $P_{n,1}$ and $\dot{P}_{n,1}$,
\begin{align}
C_{\text{EP}} := \left\{(z'_1,z'_n) \in \mathbb{C}^2 \,|\, \exists x \in \mathbb{C}: P_{n,1}(x,z') = 0 \text{ and } \frac{\partial}{\partial x} P_{n,1}(x,z') = 0\right\}.
\end{align}
Using the derivatives of the Chebyshev polynomials, such as %{\color{red} reference??}
\begin{align}
U_n'(x) &= \begin{cases}
\dfrac{-nx U_n(x)+(n+1) U_{n+1}(x)}{1-x^2}& \text{if } x \neq \pm 1 \\
(\pm 1)^{n+1}\left( \dfrac{n(n+1)(n+2)}{3}\right)& \text{if } x = \pm 1
\end{cases},
\end{align}
the derivative of $P_n$ can be computed analytically,
\begin{align}
(1-x^2) \frac{\partial}{\partial x} P_{n,1}(x,z') &= A(x,z') P_{n,1}(x,z') + B(x,z') U_{n-1}(x,z') \\
A(x,z') :&= n \frac{z'_1 + z'_n - x (1+z'_1 z'_n)}{1-z'_1 z'_n }\\
B(x,z') :&= \left(\begin{array}{c}
2 z'_1 z'_n + (n+1)(1-z'_1 z'_n) - (z'_1 + z'_n)x\\
+ \left(\frac{n(2 x-z'_1-z'_n)}{1-z'_1 z'_n} \right) \left(2x z'_1 z'_n -z'_1-z'_n\right)
\end{array}  \right).
\end{align}
There are only three possibilities for when a simultaneous zero of $P_{n,1}$ and $\dot{P}_{n,1}$ could exist:
\begin{itemize}
\item $P_{n,1}(x,z') = U_{n-1}(x,z') = 0$. A straightforward application of the recurrence relation, \cref{Chebyshev-Recurrence}, implies $z'_1 z'_n = 1$ for this case. As summarized in \cref{closedFormEvalsTable} \cite[eq.(40)]{Willms2008}, the spectrum is exactly solvable in this case. The exceptional points are given by 
\begin{align}
z'_1 + z'_n \in \left\{2 \cos\left(\frac{j \pi}{n}\right)\,|\, j \in \dbrac{1,n-1}\right\} \, \Rightarrow \, z' \in C_{\text{EP}}.
\end{align}
\item $x = \pm 1$. The corresponding exceptional points are 
\begin{align}
z'_1 = {z'_n}^* \in \left\{\frac{2-2n^2 + \mathfrak{i} \sqrt{3 n^2 - 3}}{(2n-1)(n-1)}, \frac{2n^2-2+ \mathfrak{i} \sqrt{3 n^2 - 3}}{(2n-1)(n-1)} \right\}\,\Rightarrow\, z' \in C_{\text{EP}}.
\end{align}
\item $P_{n,1}(x,z') = B(x,z') = 0$. Denoting the zeroes of $B(\cdot, z')$ as $b_\pm$, corresponding exceptional points are determined by the zero set of 
\begin{align}
P_{n,1}(b_+,z')P_{n,1}(b_-,z') = 0 \, &\Rightarrow\, z' \in C_{\text{EP}}. \label{EP-Surface-Simple}
\end{align} 
\end{itemize}
An alternative derivation of the set of exceptional points follows from computing a resultant. Using the resultant formula for Chebyshev polynomials given in \cite[eq.~(1.2)]{Jacobs2011}\cite[Thm.~4]{Louboutin2013},
\begin{align}
&\text{Res}_x\left(U_n(x), U_m(x)\right)\nonumber\\ &= \begin{cases}
0 & \text{if } \text{gcd}(m+1,n+1) > 1 \\
(-1)^{mn/2} * 2^{mn} & \text{if } \text{gcd}(m,n) = 1
\end{cases},
\end{align} 
a somewhat lengthy derivation yields
\begin{align}
P_{n,1}(1,z')P_{n,1}(-1,z')\text{Res}_{x}\left(P_{n,1}(x,z'),\frac{dP_{n,1}(x,z')}{dx}\right)\propto
P_{n,1}(b_+,z')P_{n,1}(b_-,z'),
\end{align}
where the constant prefactor can be ignored since the only physical content contained in this resultant is whether it is or is not zero.

The exceptional points for zero detuning, which satisfy \cref{EP-Surface-Simple}, were known to \cite{Korff2008,jin2009solutions}. They are 
\begin{align}
z'_1 = {z'_n}^* = \begin{cases}
\pm \mathfrak{i} \sqrt{1+1/n} & \text{if } n \text{ is odd} \\
\pm \mathfrak{i} & \text{if } n \text{ is even}
\end{cases} \,\Rightarrow\, z' \in C_{\text{EP}}.
\end{align}
This exceptional point corresponds to a zero mode with heightened algebraic multiplicity, which is three in the odd case and two in the even case.

A real section of the exceptional contour, $\Delta, \gamma \in \mathbb{R}$, is displayed in the following \cref{PTBreaking}.

\begin{figure}[!ht]
\centering
\includegraphics[width = \textwidth]{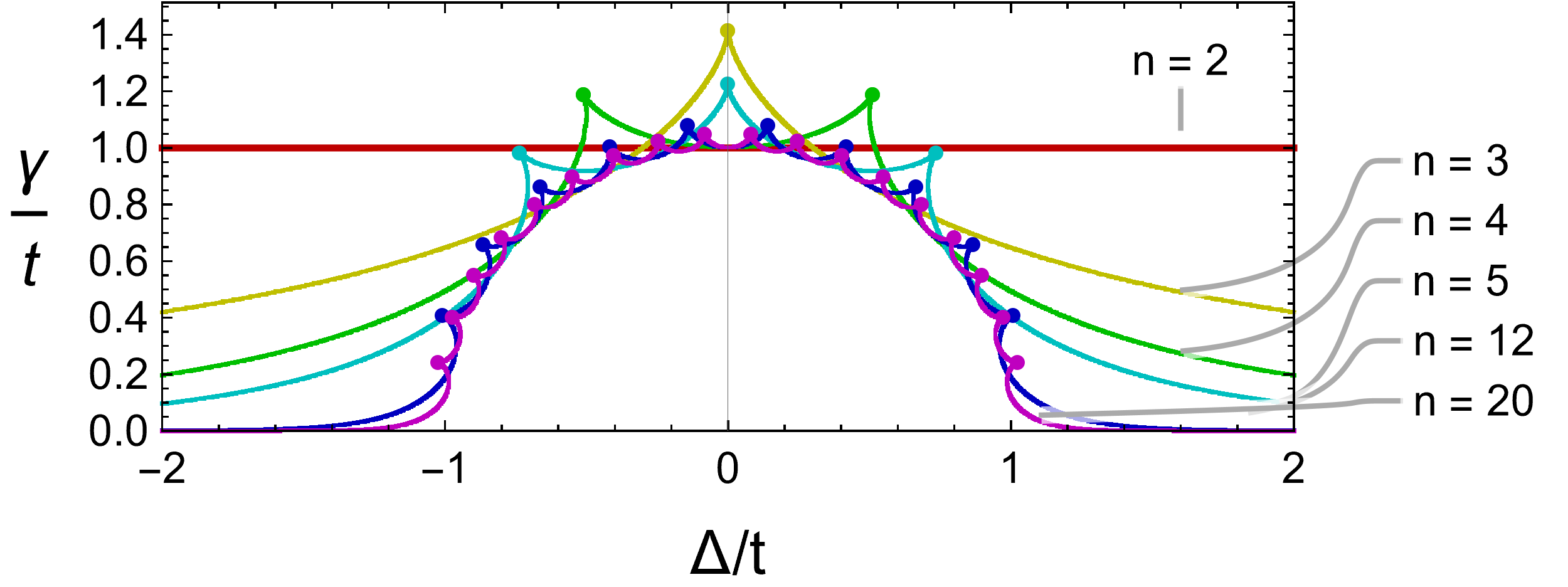}
\caption{Exceptional points (EPs) of a $\mathcal{PT}$-symmetric Hamiltonian with uniform hopping and non-Hermitian defect potentials $\Delta \pm \mathfrak{i} \gamma$ at the ends of a one-dimensional chain with open boundary conditions, uniform hopping, and variable lattice sizes $n$. $\mathcal{PT}$-symmetry breaks as the $\gamma$ is tuned above the EP contour. Additional features of these curves are summarized in \cref{epContourConjecture}. This Hamiltonian is explicitly described by $H_n$ of \cref{tridiagElements} with the parameters of \cref{uniformParameters} and $m = 1$. The curves for $n = 3,4,5$ can be found in \cite[Figure II]{Ruzicka2015}.}
\label{PTBreaking}
\end{figure}

The numerically constructed contours share several properties which I hypothesize hold for any $n$, but do not prove analytically:

\begin{conjecture} \label{epContourConjecture}
Considering each contour of exceptional points as an algebraic curve in $\mathbb{R}^2$, there are $n-2$ cusp singularities in the upper half-plane, $\Delta \in \mathbb{R}$, $\gamma > 0$. Cusp singularities correspond to third-order exceptional points, all other exceptional points are second-order. The exceptional point contour has no acnodes or crunodes. The exceptional point contours are one-to-one functions of $\text{arg} (z_1) \in (0, \pi) \cup (\pi, 2 \pi)$. As $n \rightarrow \infty$, the $\mathcal{PT}$-unbroken region approaches union of the real line, $\gamma = 0$, with the unit disk, $|z'| = 1$.
\end{conjecture}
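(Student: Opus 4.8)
Throughout I set $t=1$, so $z_1=z_1'$, and write the rescaled characteristic polynomial of \cref{charPolySpecific} for $m=1$ as $P(x)=U_n(x)-p\,U_{n-1}(x)+q\,U_{n-2}(x)$, with $p=z_1'+z_n'=2\Delta$ and $q=z_1'z_n'=\Delta^2+\gamma^2=|z_1'|^2$; eigenvalues are $2x$ and the unit disk of the conjecture is $q\le 1$. The plan is to parametrize the exceptional set by the coalescing root. Since $P$ is affine in $(p,q)$, the conditions $P(x_0)=P'(x_0)=0$ form a linear system whose Cramer solution, with denominator the Wronskian $W(x_0):=U_{n-2}U_{n-1}'-U_{n-1}U_{n-2}'$, yields explicit rational functions $p(x_0),q(x_0)$, hence $\Delta(x_0)=p(x_0)/2$ and $\gamma(x_0)^2=q(x_0)-\Delta(x_0)^2$. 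By \cref{AntilinearSpectrumSymmetry} the spectrum is conjugation-symmetric, and by \cref{tridiagNondegen} the irreducible matrix is nondegenerate, so on the boundary of the $\mathcal{PT}$-unbroken region a pair of real eigenvalues merges at a real value; thus $x_0\in\mathbb{R}$ there, $p,q,\Delta$ are automatically real, and the contour is $\{(\Delta(x_0),\gamma(x_0)):x_0\in\mathbb{R},\ \gamma(x_0)^2>0\}$. The involution $x_0\mapsto -x_0$, via $U_k(-x)=(-1)^kU_k(x)$, realizes the symmetry $\Delta\mapsto-\Delta$, $\gamma\mapsto\gamma$.

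Because the matrix is irreducible, every eigenvalue has geometric multiplicity one (\cref{tridiagNondegen}), so the order of an exceptional point in the sense of \cref{defn:order} equals the algebraic multiplicity of the coalescing eigenvalue, i.e.\ the multiplicity of $x_0$ as a root of $P$. A generic contour point is a double root (order $2$) and the EP3s are exactly the triple roots. To identify these with the cusps, differentiate the identities $P(x_0;p(x_0),q(x_0))\equiv 0$ and $P_x(x_0;\cdots)\equiv 0$ totally in $x_0$; using $P_p=-U_{n-1}$, $P_q=U_{n-2}$ and $P_x(x_0)=0$,
\[
-U_{n-1}(x_0)\,p'+U_{n-2}(x_0)\,q'=0,\qquad P_{xx}(x_0)-U_{n-1}'(x_0)\,p'+U_{n-2}'(x_0)\,q'=0 .
\]
A cusp of the parametrized curve is the vanishing of the velocity $(\Delta',\gamma')$, i.e.\ $p'(x_0)=q'(x_0)=0$, and the second identity then forces $P_{xx}(x_0)=0$, a triple root; the converse is immediate. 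That the resulting singular point is a genuine cusp with a one-sided tangent, rather than a higher tangency, follows from the local Puiseux analysis of \cref{epSurfaceSingularities}. This settles the second and third sentences of the conjecture.

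The count reduces to the number of real $x_0$ with $\gamma(x_0)^2>0$ at which $P=P'=P''=0$ is consistent, i.e.\ the zeros of the confluent Wronskian $D_3(x_0):=\det\!\big(U^{(i)}_{n-j}(x_0)\big)_{i,j=0}^{2}$. Using the Chebyshev ODE $(1-x^2)U_k''-3xU_k'+k(k+2)U_k=0$ to eliminate the bottom row, $D_3$ collapses to a combination of ordinary Wronskians $W[a,b]$, each with a closed trigonometric form coming from $U_k(\cos\theta)=\sin((k+1)\theta)/\sin\theta$. The plan is to express $D_3(\cos\theta)$ as a trigonometric polynomial, count its zeros on $\theta\in(0,\pi)$ together with the hyperbolic régime $|x_0|>1$, and show that exactly $n-2$ of them satisfy $\gamma(x_0)^2>0$, finishing with a Sturm-sequence or winding-number argument. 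I expect this root count to be the main obstacle, as it requires exact control of the cancellations in $D_3$ rather than a mere degree bound. A consistency check: $x_0\mapsto -x_0$ pairs the off-axis cusps, while at $\Delta=0$ the exactly solvable zero-mode is an EP3 for odd $n$ and an EP2 for even $n$, matching the parity of $n-2$.

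For the remaining claims I would prove that $x_0\mapsto\arg z_1(x_0)$ is strictly monotone along each branch by computing $\tfrac{d}{dx_0}\arg z_1=(\Delta\gamma'-\gamma\Delta')/q$ and checking it is sign-definite from the explicit $p(x_0),q(x_0)$ (the cusps, where $\Delta'=\gamma'=0$, are traversed with a definite one-sided tangent, so monotonicity persists through them). Monotonicity exhibits the contour as a graph over $\arg z_1\in(0,\pi)$, which is the one-to-one statement and simultaneously excludes crunodes (no two $x_0$ share an image) and acnodes (every contour point is the continuous image of a real $x_0$, so there are no isolated points). For the thermodynamic limit I would exploit that the exactly solvable locus $q=1$ produces genuine exceptional points at $z_1'=e^{\mathfrak{i}j\pi/n}$, $j=1,\dots,n-1$, lying exactly on the unit circle and dense in the upper semicircle as $n\to\infty$; combining this with a uniform estimate that the contour deviates from $|z_1'|=1$ by $o(1)$ between consecutive such points — equivalently, that edge bound states and hence complex eigenvalues exist only for $|z_1'|>1$ in the limit, as in the edge-state reasoning of \cref{SSHSection} — gives Hausdorff convergence of the $\mathcal{PT}$-unbroken region to $\{\gamma=0\}\cup\{|z_1'|\le 1\}$.
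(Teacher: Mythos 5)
Note first that the paper offers no proof of this statement: it is presented explicitly as a conjecture, with the surrounding text stating that the numerically constructed contours ``share several properties which I hypothesize hold for any $n$, but do not prove analytically.'' So there is no paper argument to compare against; the only question is whether your program actually closes \cref{epContourConjecture}, and it does not. By your own admission the central quantitative claim --- that there are exactly $n-2$ cusps in the upper half-plane --- is reduced to counting real zeros of the confluent Wronskian $D_3$ subject to $\gamma(x_0)^2>0$ and then left open (``the main obstacle''), and the sign-definiteness of $\tfrac{d}{dx_0}\arg z_1$ and the uniform $o(1)$ deviation estimate in the thermodynamic limit are likewise deferred. What you do establish --- the Cramer parametrization $\left(p(x_0),q(x_0)\right)$, the equivalence ``cusp of the parametrization iff triple root'' modulo $W(x_0)\neq 0$, and the parity consistency check at $\Delta=0$ --- is sound, but it is the easy perimeter of the problem; every sentence of the conjecture that carries content still rests on an unproven counting or monotonicity statement.

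There is also a concrete flaw beyond the acknowledged gaps: your whole parametrization presumes that every exceptional point with real $(\Delta,\gamma)$ arises from a \emph{real} coalescing root $x_0$. The argument you give (conjugation symmetry plus nondegeneracy via \cref{tridiagNondegen}) only covers exceptional points on the boundary of the $\mathcal{PT}$-unbroken region, where two real eigenvalues merge. Since the characteristic polynomial has real coefficients for real $(\Delta,\gamma)$, double roots can equally occur as complex-conjugate pairs --- two complex eigenvalues coalescing at complex $x_0$ at a perfectly real parameter point --- and this is precisely what happens at the acnode in the paper's $(n,m)=(5,2)$ example, where the degenerate eigenvalues $\pm(\pm 1)^{1/2}3^{1/4}$ are not real. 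Consequently your step ``every contour point is the continuous image of a real $x_0$, so there are no isolated points'' is circular: for $m=1$, excluding acnodes is \emph{equivalent} to excluding complex-$x_0$ coalescences, which your setup assumes rather than proves. The same loophole undermines the no-crunode and one-to-one claims, since the real-$x_0$ branch need not exhaust the exceptional set; any completed proof must separately show that for $m=1$ the system $P(x)=P'(x)=0$ has no solutions with $x\notin\mathbb{R}$ and $(\Delta,\gamma)\in\mathbb{R}\times\mathbb{R}_{>0}$.
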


When $m = 1$, the only singularities of the exceptional point contour are cusp singularities. As exhibited in the following plot, an example of an exceptional point contour with acnodes and crunodes is given by the uniform chain with $(n,m) = (5,2)$.

\begin{figure}[!ht]
\centering
\includegraphics[width = 80mm]{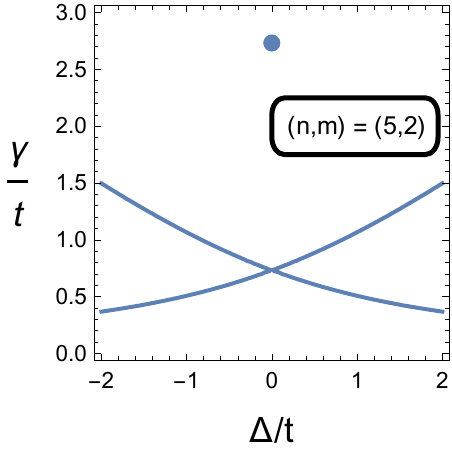}
\caption{Exceptional points for a five-site uniform chain with open boundary conditions. Explicitly, the Hamiltonian is $\mathscr{H}_5$ with $t_i = t$, $\alpha_n = \beta_n = 0$, $(n,m) = (5,2)$. The points $z_2 = (\sqrt{3} \pm 1) \mathfrak{i} t$ are a crunode $(-)$ and an acnode $(+)$, respectively, with eigenvalues $\sigma(H) = \{0, -(\pm 1)^{1/2}3^{1/4}, (\pm 1)^{1/2} 3^{1/4}\}$. The zero eigenvalue is a constant eigenvalue. The nonzero eigenvalues at the singular points $z_2 = (\sqrt{3} \pm 1) \mathfrak{i} t$ have algebraic multiplicity 2 and geometric multiplicity 1.%(Maybe shade regions to indicate which ones have 0, 2, or 4 evalues with imaginary parts)
}
\end{figure}

\subsection{Perturbative Analysis} \label{Perturbative Section}

In this section, perturbation theory is applied to gain insight into the exceptional contour. Two features are analysed: perturbations of the eigenvalues at the exceptional points and the asymptotic behaviour of the exceptional contour in the large detuning limit.

\subsubsection{Eigenvalue Perturbations}

As a consequence of the Newton-Puiseux theorem, given an $n \times n$ matrix which is a polynomial in one parameter, $\theta$, the eigenvalues, $\lambda_i$ can be expanded as a Puiseux series in $\theta$,
\begin{align}
\lambda_i(\theta)=\lambda_i(\theta_0)+ \sum^\infty_{j = 1} \epsilon_{ij}(\theta_0) (\theta-\theta_0)^{j/k(i)},
\end{align}
where $k(i) \in \mathbb{Z}_+$. To guarantee a real spectrum in a neighbourhood of $\theta_0$, as is the case for a Hermitian matrix, the condition $k(i)=1$ is necessary and sufficient. On the other hand, if $\lambda_i(\theta_0)$ is an exceptional point with polynomial perturbative order $N$, then $k(i)=N$. The sensitivity of the spectrum to perturbations in $\theta_0$ is quantified by 
\begin{align}
\tau(\theta_0) :=\max\{\epsilon_{i1}(\theta_0)|k(i)=\sup(k(1),\cdots,k(n))\}
\end{align}
If $\theta_0$ parametrizes an exceptional contour which contains exceptional points with different values of $k$, the corresponding $\tau(\theta_0)$ must diverge as one approaches a point with an increased $k$ value. We now consider perturbations of the eigenvalues of the $\mathcal{PT}$-symmetric case of $H$ for $m = 1$ at the exceptional points, \cref{PTBreaking}. If the tangent to an algebraic curve of exceptional points is unique and 2-directional, then perturbations along the tangents to the contour have $k = 1$ and result in real eigenvalues. In the orthogonal direction, there exists exactly one pair of eigenvalues which displays a real-to-complex-conjugates transition, so $k=2$. Only at the cusp singularities is $k = 3$ satisfied. To show this, in \cref{tau} we plot the coefficient of the square-root term $\tau(\theta)$ as a function of angle in the $(\Delta,\gamma)$ plane, i.e. $\theta=\text{arg}(z_1)$, for $\theta\in[0,\pi/2]$. As is expected, $\tau(\theta)$ diverges at non-uniformly distributed cusp points. 

%----------------------------- figure 4---------------------------------------------%
\begin{figure}[!ht]
\centering
\includegraphics[width =0.5\textwidth]{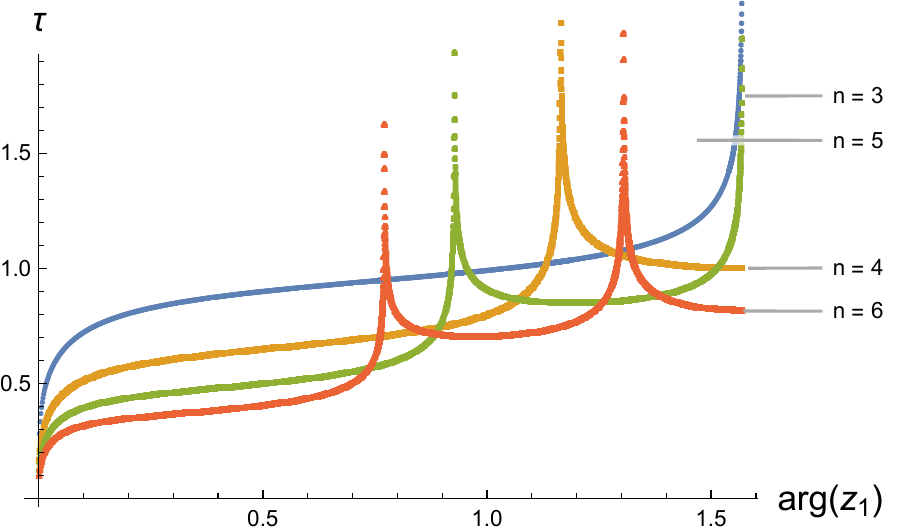}
\caption{Coefficient $\tau(\theta)$ of the square-root term in the perturbative expansion of the algebraically degenerate eigenvalue as a function of $\theta=\text{arg}(z_1)$. Since  $\tau(\theta)=\tau(\pi-\theta)=\tau(-\theta)$, the range is confined to $[0,\pi/2]$. The divergences  are signatures of third-order exceptional points where the eigenvalue expansion is expressed through cube-roots instead of square-roots.}
\label{tau}
\end{figure}

\subsubsection{Exceptional Contours for Large Detuning}

Our next perturbative result regards the behaviour of the exceptional contour in the large detuning limit. To simplify intermediate calculations, I will introduce the rescaled detuning and impurity strength, $\Delta' = \Delta/t$ and $\gamma' = \gamma/t$

A perturbative analysis of the set of exceptional points in the limit of large detuning $\Delta \to \infty$ and small defect strength $\gamma \to 0$ follows from applying the method of dominant balance, as reviewed in \cite{bender2013advanced}, to the expression $P_{n,1}(b_+) P_{n,1}(b_-)$, which appears in \cref{EP-Surface-Simple}. 

Firstly, note an approximation of the roots $b_{\pm}$ for $\Delta \gg 1$ and $\gamma \ll 1$ is 
\begin{align}
b_+ &= \frac{\Delta'}{2} + \frac{1 + (1-n) {\gamma'}^2}{2 \Delta'} + O\left(\frac{{\gamma'}^2}{{\Delta'}^3} \right) \\
b_- &= \frac{(n-1)\Delta'}{2n} + O\left(\frac{1}{\Delta'}\right).
\end{align}
Furthermore, note when $\gamma = 0$, the exact formula $b_+ = \frac{1 + {\Delta'}^2}{2 \Delta'}$ holds. Thus, applying 
\begin{align}
U_{n-1}\left(\frac{x+x^{-1}}{2} \right)&= \frac{x^{n}-x^{-n}}{x-x^{-1}}. \label{Joukowski}
\end{align} 
to $P_{n,1}(b_+)$ yields a zeroth-order approximation in $\gamma$,
\begin{align}
P_{n,1}(b_+) = ({\Delta'})^{-n}(1-({\Delta'})^2) + O\left(\frac{\gamma^2 \Delta^{n-2}}{t^n} \right).
\end{align}
The coefficient of the $\gamma^2$ term is determined by calculating a second derivative of $P_{n,1}(b_+)$, the result is
\begin{align}
P_{n,1}(b_+) = {\Delta'}^{-n}(1-\Delta^2) + {\gamma'}^2 {\Delta'}^{n-2} + O(\Delta^{-n}, \gamma^4 \Delta^{n-4}).
\end{align}
To determine the leading asymptotic behaviour of the exceptional contour, we only need to consider the leading term of $P_{n,1}(b_-)$, which results from the approximation $U_{n}(x) = 2^n x^n + O(x^{n-1}))$ for large $x$. Finally, we find
\begin{align}
P_{n,1}(b_+) P_{n,1}(b_-) = \frac{\Delta^2}{n^2 t^2}\left(1-\frac{1}{n}\right)^{n-2}\left[\frac{\gamma^2\Delta^{2(n-2)}}{t^{2n}}-1\right] + O(1,\gamma^4 \Delta^{2n-4}). \label{rhoInf}
\end{align}

It follows that the exceptional points determined by vanishing of  \cref{rhoInf} satisfy $\gamma_\text{EP}=t^{n-1}/\Delta^{(n-2)}$ in the limit $\Delta/t\gg 1$. The scaling predicted by \cref{rhoInf} is confirmed via the following log-log plot of the exceptional contour for large detuning.
\begin{figure}[!ht]
\centering
\includegraphics[width = 120mm]{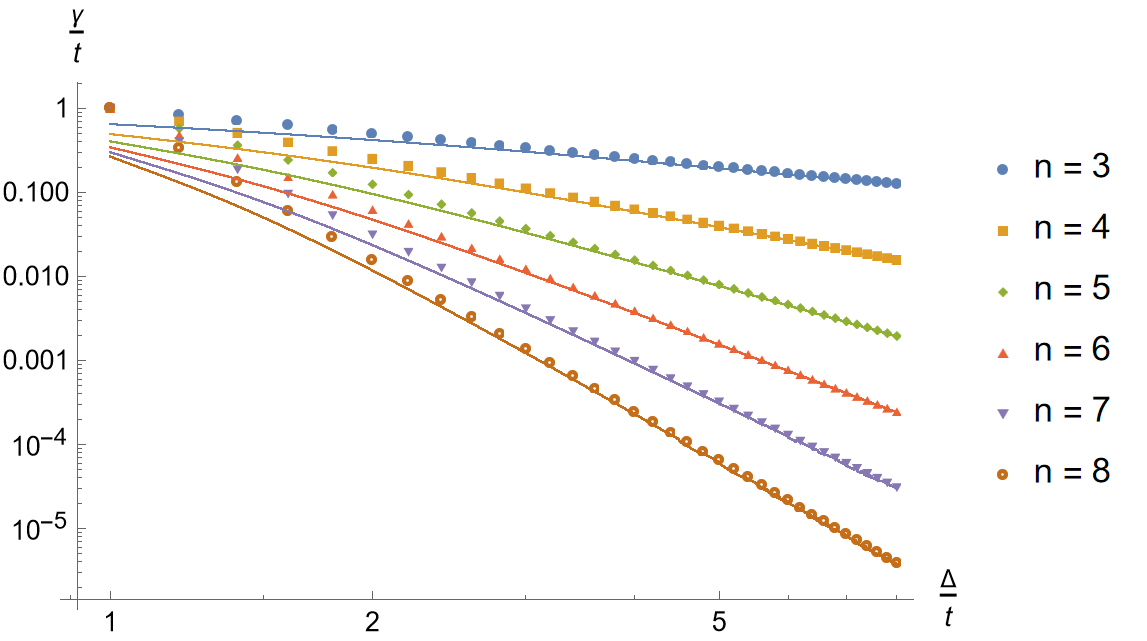}
\caption{Log-log plot of the exceptional contours (thin lines), contrasted with the predicted asymptotic behaviour $\gamma^2 \Delta^{2(n-2)}= t^n$ (dotted lines).}
\end{figure}

\subsection{Pseudo-Hermiticity} \label{pseudoHermCriticalChain}

A closed-form expression for an intertwining operator is known for $m = 1$\cite{farImpurityMetric,Ruzicka2015},
\begin{equation}
M_{ij}(z') = \begin{cases}
\,\,\,\,\,1 &  i = j \\
-\mathfrak{i} \,\frac{\gamma}{t}\, \left(\frac{\Delta - \mathfrak{i} \gamma}{t}\right)^{j-i-1} &  i < j \\
\,\,\,\,\,\mathfrak{i} \,\frac{\gamma}{t}\, \left(\frac{\Delta + \mathfrak{i} \gamma}{t}\right)^{i-j-1} &  i > j
\end{cases}. \label{not positive}
\end{equation}
The spectrum of this intertwining operator satisfies the following symmetries,
\begin{align}
\sigma(M(z')) &= \sigma(M({z'}^*)) \label{metricxReflect} \\
&= \sigma(M(-z')) \label{metricxyReflect}.
\end{align}
The first of these identities, \cref{metricxReflect}, follows from the identity $\sigma(A) = \sigma(A^T)$ which for all linear maps $A$, where $A^T$ denotes the transpose of $A$ \cite{Taussky1959}. The second, \cref{metricxyReflect}, follows from the diagonal similarity transform 
\begin{align}
M(-z')_{ij} = (-1)^{i+j} M (z'),
\end{align}
which is transformation $E_n$ encountered in \cref{diagAltSign} and \cite{kahan1966accurate,Valiente2010,Joglekar2010}.

%
%%Dieudonne ...
While $M$ is a Hermitian intertwining operator, $M$ is not a metric operator for the entire $\mathcal{PT}$-unbroken domain due to a lack of positive-definiteness. The domain where $M$ is positive-definiteness is plotted in \cref{positive figure}. This domain is contrasted with the $\mathcal{PT}$-unbroken domain of the Hamiltonian $\mathscr{H}_n$. For zero detuning, $\Delta = 0$, in an even lattice, $n/2 \in \mathbb{Z}_+$, the numerical results displayed in \cref{positive figure,lambda1} suggest that $M(\mathfrak{i} \gamma)$ is positive-definite for all $-t < \gamma < t$. At the exceptional point $\gamma = \pm t$, $M$ is a positive semi-definite intertwiner, since the kernel of $M(\pm \mathfrak{i})$ has dimension $n-1$ and $M$ has one nonzero eigenvalue equal to $n$. 

\begin{figure}[!ht] 
\centering
\includegraphics[width = \textwidth]{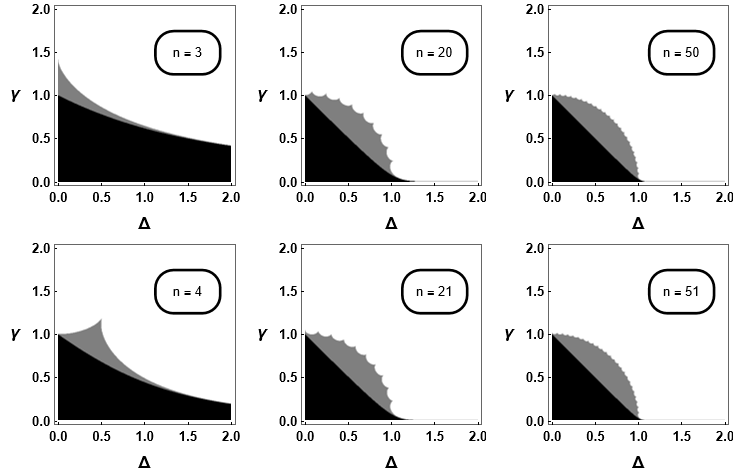} 
\caption{The domain of $z' \in \mathbb{C}$ such that $M(z')$ of \cref{not positive} is positive is indicated in black. The grey domain is the $\mathcal{PT}$-unbroken domain, for which there exists a positive-definite metric operator which is not $M(z')$. As $n \rightarrow \infty$, $M(z')$ is only positive-definite inside of a domain with a linear boundary, which is found below.}
\label{positive figure}
\end{figure}

The boundary in the complex plane for $\gamma$ for which the metric of \cref{not positive} is positive-definite is bounded by a linear region as $n \rightarrow \infty$. Our numerical search suggests that the metric is positive-definite inside this region as $n \rightarrow \infty$, a claim we state without analytic proof. An analytical expression for this linear region is determined by showing $\braket{\psi|M \psi} < 0$ in the thermodynamic limit  $n \rightarrow \infty$ for some choices of vectors $\psi$. In particular, choosing $\psi^+ = \frac{1}{\sqrt{n}} \textbf{e}$, $\psi^- = \sum_{j=1}^n \frac{(-1)^j}{\sqrt{n}} e_j$, the associated quadratic forms determined by $M$ are %(MAYBE I SHOULD USE SPECTRAL EQUIVALENCE OF $M$...)
\begin{align}
\lim_{n\rightarrow \infty} \sum_{ij} \psi^+_i M_{ij} \psi^+_j &= \frac{(\Delta - 1)^2-\gamma^2}{(z-1)(z^*-1)}\\
\lim_{n\rightarrow \infty} \sum_{ij} \psi^-_i M_{ij} \psi^-_j &= \frac{(\Delta + 1)^2-\gamma^2}{(z+1)(z^*+1)}.
\end{align}
For $M$ to be positive-definite, both of the above quantities must be positive, which can only happen when $|\Delta - 1|<|\gamma|$ and $|\Delta+ 1|<|\gamma|$.

In the remainder of this section, we discuss positivity of $M$ along two contours in the complex plane: the imaginary axis and the circle $(\Delta-1/2)^2 + \gamma^2 = 1/4$. The reason why this circle was chosen was to add to the results presented in \cite{farImpurityMetric}.

%{\color{red} This is clearly wrong. Let's just remove this calculation for now and return if we have time}.
%
%On the unit disk $|\gamma/t| = 1$, $M$ is positive definite if and only if
%\begin{align}
%M(e^{\mathfrak{i} \phi}) > 0 \Leftrightarrow \phi \in\left(\frac{-\pi}{2(n-1)},\frac{i \pi}{2(n-1)}\right)\cup \left(\pi-\frac{\pi}{2(n-1)},\pi+\frac{\pi}{2(n-1)}\right). \label{PositivityRegion}
%\end{align} 
%This follows from the explicit construction of the eigensystem for $M$:
%\begin{align}
%(\psi_l)_j &= e^{\mathfrak{i} j \pi (2l+1)/n + \mathfrak{i} j \phi(1-2/n)} \\
%\lambda_l &= \dfrac{\cos \left( \frac{(2l+1)\pi}{2n} + \frac{(n-2)\phi}{2n} + \frac{\phi}{2} \right)}{\sin \left( \frac{(2l+1)\pi}{2n} + \frac{(n-2)\phi}{2n} - \frac{\phi}{2} \right)},
%\end{align}
%where $l \in \dbrac{1, n}$.

The following figure addresses positivity of $M$ for the zero detuning case, $\Delta = 0$. 
\begin{figure}[!ht]
\centering
\includegraphics[width = .8\textwidth]{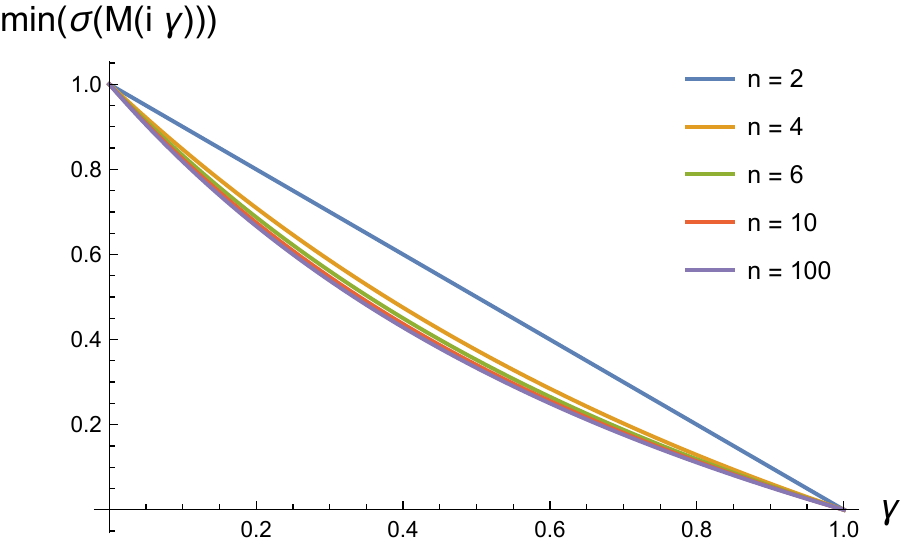}
\caption{This figure plots the smallest eigenvalue of $M(\mathfrak{i} \gamma')$ for $\gamma' \in [0,1]$. Since $\min(\sigma(M(\mathfrak{i} \gamma'))) \in [0,1]$, our numerics imply of positive-definiteness of $M$ for all $\gamma' \in (-1,1)$.} \label{lambda1}
\end{figure}

Positivity of $M$ for the case $(\Delta-1/2)^2 + \gamma^2 = 1/4$ was addressed for small $n$ in \cite{farImpurityMetric}. The following figure examines larger values of $n$.

\begin{figure}[!ht]
\centering
\includegraphics[width = \textwidth]{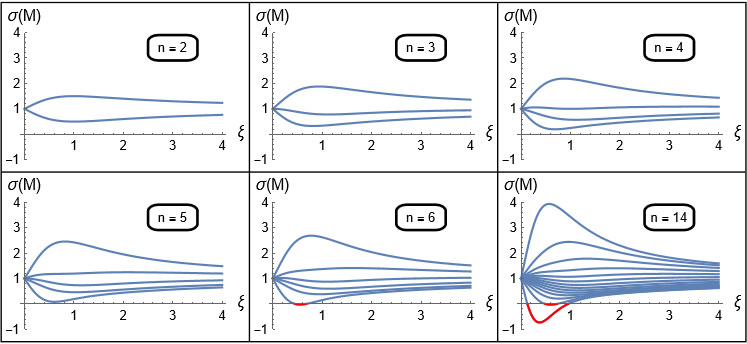}
\caption{This figure displays the spectrum of $M(\frac{1 + \mathfrak{i} \xi}{1 + \xi^2})$ for $\xi \in \mathbb{R}^+$. The spectrum is only strictly positive for $n < 6$. Positive eigenvalues are depicted in blue, while negative eigenvalues are depicted in red.} 
\end{figure}

\chapter{Tactics for Generating Pseudo-Hermitian Operators}
\label{Tactics}

Examples elucidate key features of mathematical objects. 
This chapter introduces two new techniques for generating nontrivial examples of pseudo-Hermitian operators with closed-form intertwining operators.

%Even if one has an intertwining operator, it may be nontrivial to verify whether it's positive definite 

Given an arbitrary operator, it can be difficult to ascertain whether it is pseudo-Hermitian, and if so, what a corresponding intertwining operator would be. 
A straightforward approach to determining whether a matrix is pseudo-Hermitian is to first solve the eigenvalue problem, and then to use \cref{generalMetric} \cite[eq. (8)]{BiOrthogonal}. However, due to the Abel–Ruffini theorem \cite{topProofAbel,Ayoub1980} %{\color{red} original papers if time}
, this is often impossible without resorting to numerical or approximative procedures. Furthermore, numerical methods typically exhibit undesirable scaling with system size. The $QR$-algorithm is one such numerical technique for determining the eigensystem of an $N \times N$ matrix \cite{Francis1961,Francis1962,Kublanovskaya1962}, which has a computational complexity of $O(N^3)$. On a more extreme note, in many physical applications, such as the quantum many-body problem, the number of rows and columns of relevant matrices scale exponentially with system size. %{\color{red} (do I need to include a reference here?)}

Even in cases where an operator is pseudo-Hermitian with a closed-form expression for the intertwining operator, by the same reasoning mentioned above, it can be difficult to determine whether this intertwining operator is positive-definite. As elaborated on in \cref{pseudoHermCriticalChain}, an example includes the matrix displayed in \cite{Ruzicka2015}. In this case, all the intertwining operators are known in closed-form, but a positive-definite intertwining operator is only known for a subset of the domain with real spectrum.

Section~\ref{anticommutPencilSection} finds an intertwining operator associated to matrices which are a linear combination of two anticommuting invertible generators. Physically, anticommuting matrices appear in models with chiral or particle-hole symmetries \cite{Chiu2016}, where one matrix is interpreted as Hamiltonian, and the other is interpreted as the symmetry. 

%Double check references to "Homomorphisms".

The second technique introduced in this chapter utilizes the representation theory of $C^*$-algebras. A minuscule portion of this rich mathematical theory is reviewed in \cref{section:Homomorphisms}. Given a predetermined set of pseudo-Hermitian elements of a $C^*$-algebra, $\mathfrak{A}$, as discussed in \cref{representationTheoryTheorem}, a new pseudo-Hermitian operator is generated by applying a direct sum of representations to the predetermined set. Furthermore, this new pseudo-Hermitian operator can be perturbed by a self-adjoint element in the commutant of the range of the direct sum of representations without altering the intertwining operator. These pseudo-Hermitian operators include quasi-Hermitian operators in a suitable special case. In a case where the predetermined pseudo-Hermitian operators are mutually commuting, a subset of the domain of broken antilinear symmetry is determined in \cref{PT-Breaking-Theorem-Representation-Theory}. Examples are given, and I discuss how existing results such as \cite{Shi2022,MyFirstPaper,Barnett2023} can be derived using this new technique.

A third technique is available from the definition. Consider a fixed intertwining matrix, $\eta$. The most general pseudo-Hermitian matrix associated to $\eta$ is the product of a Hermitian matrix on the left and $\eta$ on the right \cite{Carlson1965}. When $\eta$ is a sparse matrix, the set of $\eta$-self-adjoint matrices can be computed with relative ease.

%Lastly, I emphasize that determining one intertwining operator can be sufficient to 
 
\section{Pencils with Anticommuting Generators} 
\label{anticommutPencilSection}

%Are we sure this is chiral and not particle-hole?

The mathematical setting of this section stems from a pair of anticommuting linear operators. Physically, if one of these operators is interpreted as an observable and the other is an involution, then the involution is interpreted as a \textit{chiral symmetry} of this observable.

\subsection{Chiral Symmetry}

This section summarizes some well-known examples in physics that have a notion of chiral symmetry. First, I define what I mean by chiral symmetry.
\begin{defn}
An operator, $J$, has a \textit{chiral symmetry}, $E$, if and only if $J E = - E J$ and $E$ is an involution, so $E^2 = \mathbb{1}$.
\end{defn}
The spectrum of operators with chiral symmetry is characterized in the following lemmas. The first lemma considers a relaxed scenario of two anticommuting operators, where neither of them is required to be an involution.
\begin{lemma} \label{lemma:anti-commute}
Consider two operators, $J$ and $E$, whose domain is a vector space over a field, $\mathbb{F}$. Suppose $J$ and $E$ anticommute. Then,
\begin{align}
E\left(\ker\left(J - \lambda \mathbb{1}\right)\right) \subseteq \ker\left(J + \lambda \mathbb{1}\right) &\quad& \forall \lambda \in \mathbb{F}.
\end{align}
Thus, if $E$ is injective\footnote{A linear map is injective if and only if it its kernel consists of only the zero vector.}, for every eigenvalue of $J$, $\lambda \in \sigma_p(J)$, another eigenvalue of $J$ is $-\lambda \in \sigma_p(J)$. If $E$ is bijective, then $E^{-1}$ also anticommutes with $J$, which implies 
\begin{align}
E\left(\ker\left(J - \lambda \mathbb{1}\right)\right) = \ker\left(J + \lambda \mathbb{1}\right) &\quad& \forall \lambda \in \mathbb{F}.
\end{align}
Furthermore, in the case where $E$ is bijective, the identity $\lambda \in \sigma(J) \, \Leftrightarrow -\lambda \in \sigma(J)$ holds since the inverse of $\mu\mathbb{1} + J$ admits a closed-form expression for all elements $\mu$ of the resolvent set of $J$,
\begin{align}
(\mu \mathbb{1} + J)^{-1} = E(\mu \mathbb{1} - J)^{-1} E^{-1} &\quad& \forall \mu \in \mathbb{F} \setminus \sigma(J).
\end{align}
\end{lemma}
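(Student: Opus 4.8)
The plan is to verify each assertion by a direct computation exploiting the anticommutation relation $JE = -EJ$, treating the point-spectrum claims and the full-spectrum claim separately, since in infinite dimensions membership in $\sigma_p$ and membership in $\sigma$ are governed by different conditions (existence of a nonzero kernel versus failure of invertibility).

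First I would establish the basic inclusion. Given $v \in \ker(J - \lambda\mathbb{1})$, so that $Jv = \lambda v$, I compute $J(Ev) = -E(Jv) = -\lambda(Ev)$, which places $Ev$ in $\ker(J+\lambda\mathbb{1})$; this is exactly the first displayed containment. If $E$ is injective and $v \neq 0$, then $Ev \neq 0$, so whenever $\lambda \in \sigma_p(J)$ the vector $Ev$ witnesses $-\lambda \in \sigma_p(J)$, settling the point-spectrum statement.

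Next, for the bijective case, I would first note that multiplying $JE = -EJ$ on both sides by $E^{-1}$ gives $E^{-1}J = -JE^{-1}$, so $E^{-1}$ also anticommutes with $J$ and the inclusion just proved applies verbatim to $E^{-1}$. Applying the inclusion to $E$ at the eigenvalue $-\lambda$ yields $E(\ker(J+\lambda\mathbb{1})) \subseteq \ker(J-\lambda\mathbb{1})$, and applying $E^{-1}$ to this gives $\ker(J+\lambda\mathbb{1}) \subseteq E^{-1}(\ker(J-\lambda\mathbb{1}))$. Applying the inclusion to $E^{-1}$ at the eigenvalue $\lambda$ furnishes the reverse containment $E^{-1}(\ker(J-\lambda\mathbb{1})) \subseteq \ker(J+\lambda\mathbb{1})$, so the two sets coincide; pushing forward by $E$ and relabeling $\lambda \mapsto -\lambda$ converts this equality into the claimed identity $E(\ker(J-\lambda\mathbb{1})) = \ker(J+\lambda\mathbb{1})$.

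Finally, for the resolvent identity I would simply confirm that the proposed formula is a genuine two-sided inverse. The key algebraic step is $(\mu\mathbb{1}+J)E = \mu E + JE = \mu E - EJ = E(\mu\mathbb{1} - J)$; composing on the right with $(\mu\mathbb{1}-J)^{-1}E^{-1}$ collapses the product to $EE^{-1} = \mathbb{1}$, and the symmetric computation handles the opposite order. This shows invertibility of $\mu\mathbb{1}-J$ forces invertibility of $\mu\mathbb{1}+J$, i.e. $\mu \notin \sigma(J) \Rightarrow -\mu \notin \sigma(J)$; running the same argument with $\mu \mapsto -\mu$ yields the equivalence $\lambda \in \sigma(J) \Leftrightarrow -\lambda \in \sigma(J)$. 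The only point demanding care is the bookkeeping of set inclusions in the bijective step and the separation between the point-spectrum claim (needing only injectivity) and the full-spectrum claim (needing bijectivity and the explicit resolvent formula); beyond this careful accounting, no step presents a genuine obstacle.
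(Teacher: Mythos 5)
Your proof is correct, and it is exactly the direct computation the paper intends: the lemma is stated there without an explicit proof, being regarded as immediate from the anticommutation relation, and your verification (eigenvector computation, the $E^{-1}$ anticommutation for the reverse inclusion, and the two-sided check of the resolvent formula $(\mu \mathbb{1} + J)^{-1} = E(\mu \mathbb{1} - J)^{-1} E^{-1}$) fills in precisely those steps. Your separation of the point-spectrum claim (injectivity suffices) from the full-spectrum claim (bijectivity and the resolvent identity) matches the structure of the statement itself, so nothing further is needed.
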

The following lemma addresses the case of a diagonalizable operator with a chiral symmetry.
\begin{lemma}
A diagonalizable operator, $J$, whose domain is a vector space over the field $\mathbb{F}$ has a chiral symmetry, $E$, if and only if 
\begin{align}
\dim \ker(\lambda \mathbb{1} - J) = \dim \ker(\lambda \mathbb{1} + J) &\quad& \forall \lambda \in \mathbb{F}.
\end{align}
The chiral symmetry operator can always be written as
\begin{align}
E = \bigoplus_{\lambda \in \sigma_p(J)} E_\lambda,
\end{align}
where $E_\lambda:\ker\left(J - \lambda \mathbb{1}\right) \to \ker\left(J + \lambda \mathbb{1}\right)$ is an arbitrary linear bijection satisfying $E^{}_\lambda = E_{-\lambda}^{-1}$. 
\end{lemma}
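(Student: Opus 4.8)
The plan is to reduce both implications to the eigenspace decomposition afforded by diagonalizability, together with the anticommutation analysis already carried out in Lemma~\ref{lemma:anti-commute}. Write $V_\lambda := \ker(\lambda \mathbb{1} - J)$, so that diagonalizability reads $V = \bigoplus_{\lambda \in \sigma_p(J)} V_\lambda$, and assume $\mathrm{char}(\mathbb{F}) \neq 2$ (as holds for $\mathbb{F} \in \{\mathbb{R},\mathbb{C}\}$), so that $\lambda = -\lambda$ exactly when $\lambda = 0$.

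For the forward direction, suppose $E$ is a chiral symmetry of $J$. Since $E^2 = \mathbb{1}$, the map $E$ is bijective with $E^{-1} = E$, so the bijective case of Lemma~\ref{lemma:anti-commute} gives $E(V_\lambda) = V_{-\lambda}$ for every $\lambda$. Hence the restriction $E_\lambda := E|_{V_\lambda}$ is a linear bijection $V_\lambda \to V_{-\lambda}$, which immediately yields $\dim V_\lambda = \dim V_{-\lambda}$, i.e. the claimed dimension equality. Moreover, because $E$ permutes the summands $V_\lambda$ according to $\lambda \mapsto -\lambda$, it respects the direct-sum decomposition and therefore equals $\bigoplus_\lambda E_\lambda$; restricting $E^2 = \mathbb{1}$ to $V_\lambda$ reads $E_{-\lambda} \circ E_\lambda = \mathbb{1}_{V_\lambda}$, which is exactly $E_\lambda = E_{-\lambda}^{-1}$. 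This establishes the asserted block form for any chiral symmetry.

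For the converse, suppose $\dim V_\lambda = \dim V_{-\lambda}$ for all $\lambda$; note this forces $-\lambda \in \sigma_p(J)$ whenever $\lambda \in \sigma_p(J)$, so the pairing $\{\lambda,-\lambda\}$ is well defined. I build $E$ block by block over these pairs. For $\lambda \neq 0$, equality of dimensions guarantees a linear isomorphism $E_\lambda : V_\lambda \to V_{-\lambda}$ (in infinite dimensions this is the statement that bases of equal cardinality induce a linear bijection), and I \emph{define} $E_{-\lambda} := E_\lambda^{-1}$, which enforces $E_\lambda = E_{-\lambda}^{-1}$ by construction. For $\lambda = 0$ I take $E_0 := \mathbb{1}_{V_0}$, the simplest involution on $V_0$. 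Setting $E := \bigoplus_\lambda E_\lambda$ gives a well-defined linear operator on $V$. It remains to check the two defining properties: $E^2 = \mathbb{1}$ holds because on each summand $E_{-\lambda} \circ E_\lambda = \mathbb{1}_{V_\lambda}$ (and $E_0^2 = \mathbb{1}_{V_0}$), while $JE = -EJ$ is verified on an eigenvector $v \in V_\lambda$: since $Ev = E_\lambda v \in V_{-\lambda}$ one has $JEv = -\lambda\, Ev = -E(\lambda v) = -EJv$, and these eigenspaces span $V$. As $E$ is an involution it is in particular bijective, so $E$ is a genuine chiral symmetry.

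The substance of the argument is entirely bookkeeping; the only points requiring care are the fixed locus of $\lambda \mapsto -\lambda$ and the infinite-dimensional reading of $\dim$. The first is handled by isolating $V_0$ and equipping it with an involution (the identity), and is precisely why the characteristic-2 case — where anticommuting degenerates to commuting and every nonzero $\lambda$ is its own negative — must be excluded or treated separately. The second amounts to interpreting $\dim$ as the cardinality of a Hamel basis, so that the hypothesis genuinely produces the isomorphisms $E_\lambda$. No analytic input is needed, since $J$ is assumed diagonalizable and the entire construction is purely algebraic.
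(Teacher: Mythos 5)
Your proof is correct, and it is exactly the argument the paper intends: the paper states this lemma without proof, immediately after \cref{lemma:anti-commute}, and your two directions (forward: $E^2=\mathbb{1}$ makes $E$ bijective, so \cref{lemma:anti-commute} gives $E(V_\lambda)=V_{-\lambda}$ and hence the block form and dimension equality; converse: build $E$ pairwise from isomorphisms $E_\lambda$ with $E_{-\lambda}:=E_\lambda^{-1}$ and an involution on $V_0$, then verify $E^2=\mathbb{1}$ and $JE=-EJ$ on eigenvectors, which span by diagonalizability) supply precisely the missing bookkeeping. One minor remark: your exclusion of characteristic $2$ is over-cautious rather than necessary — there $-\lambda=\lambda$ for every $\lambda$, the dimension hypothesis holds vacuously, and $E=\mathbb{1}$ is a chiral symmetry, so the lemma degenerates to a triviality instead of failing; in any case the paper's context is $\mathbb{F}\in\{\mathbb{R},\mathbb{C}\}$, where the issue does not arise.
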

An example of a physical system exhibiting chiral symmetry is a system of massless, relativistic fermions in an even number of space-time dimensions \cite{langacker2017standard}. 
The remainder of this section examines a different example; chiral symmetry emerges in models where the Hamiltonian is local to a bipartite graph.

\begin{defn}
A \textit{directed graph}, $G$, is a set of \textit{vertices}, $V$, equipped with a set of \textit{edges}, $E \subseteq V \times V$,
\begin{align}
G = (V, E).
\end{align} 
Edges link pairs of vertices. A directed graph is called \textit{bipartite} if there exists a decomposition of the vertex set into disjoint subsets, $V = V_A \cup V_B$ with $V_A \cap V_B = \emptyset$, such that every edge links elements of $V_A$ with $V_B$. More explicitly, the edge sets of directed graphs satisfy
\begin{align}
E \subseteq (V_A \times V_B) \cup (V_B \times V_A).
\end{align}
Perhaps more intuitively, a bipartite graph is one such that only two colours are needed for a vertex colouring. 
\end{defn}

A pictorial representation of some examples of bipartite graphs is given in \cref{bipartiteGraphs}.

\begin{figure}[!ht]
\centering
\includegraphics[width=\textwidth]{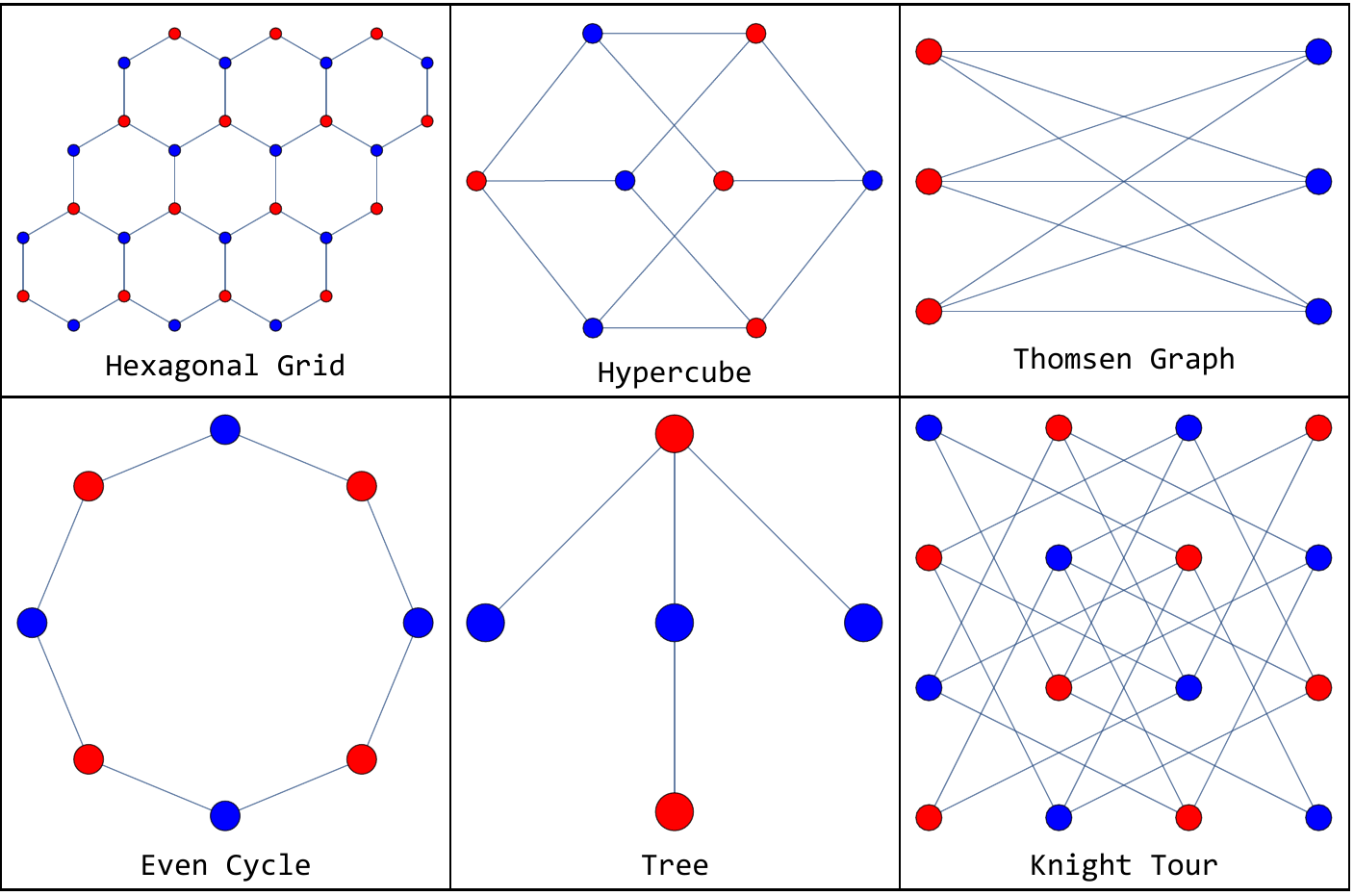}
\caption{Examples of bipartite graphs. Vertices are depicted as circles and edges are depicted as lines. Observe how vertices with the same colour never share an edge.}
\label{bipartiteGraphs}
\end{figure}

%{\color{red} Maybe add references?}

\begin{ex} [Tight-Binding Models]
An inner product space associated to a vertex set, $V$, can be realized via an injective map, $e:V \to X$, whose codomain is a vector space. An inner product on the vector space generated by $e$, $\text{span}(e(V))$, is defined by
\begin{align}
\braket{e(v) | e(w)} = \delta_{vw}.
\end{align}
A Hilbert space, $\mathcal{H}_V$, associated to $V$ is, thus, the closure 
\begin{align}
\mathcal{H}_V := \text{cl}(\text{span}(e(V))).
\end{align} 

Given a directed graph, $G = (V,E)$, a Hamiltonian, $H$, on $\mathcal{H}_V$ is called \textit{local} to the graph $G$ if 
\begin{align}
\braket{e(v)|H e(w)} \neq 0 \, \Rightarrow \, (v,w) \in G.
\end{align}
Consider a Hamiltonian, $H$, which is local to a bipartite graph. Associated to a bipartite graph are Hilbert subspaces $\mathcal{H}_{V_{A,B}} \subseteq \mathcal{H}$. One chiral symmetry of $H$ is the difference between orthogonal projections onto the subspaces $\mathcal{H}_{V_{A,B}}$,
\begin{align}
E_{G} = \mathscr{P}_{\mathcal{H}_{V_A}} - \mathscr{P}_{\mathcal{H}_{V_B}} \, \Rightarrow \, [H, E_G]_+ = 0. \label{eqn-bipartite-chiral-symmetry}
\end{align}
\demo
\end{ex}

\subsection{Non-Hermitian Pencils with Anticommuting Generators}

Following the advice of Olga Taussky \cite{Taussky1988}, whenever we have two "interesting" operators, we should study the pencil generated by them. Throughout this section, $J$ and $E$ will denote two anticommuting operators on a Hilbert space, $\mathcal{H}$, satisfying
\begin{align}
[J,E]_+ = 0.
\end{align}
Since these operators are interesting, consider an element of their pencil,
\begin{align}
H = J + \mathfrak{i} \gamma E,
\end{align}
with $\gamma \in \mathbb{C}$. Let's start by characterizing the spectrum of $H$.

\begin{proposition}
A set containing the spectrum of $H$ is 
\begin{align}
\sigma(H) \subseteq \left\{\pm \sqrt{\lambda_{J}^2 - \gamma^2 \lambda_{E}^2} \,|\, (\lambda_{J},\lambda_E) \in \sigma(J) \times \sigma(E) \right\}. \label{anticommutPencilSpectrum}
\end{align}
In the case where $E$ is an involution, given an eigenvalue of $J$, $\lambda \in \sigma_p(J)$, and a corresponding eigenvector, $u(\lambda) \in \ker(\lambda \mathbb{1} - J)$,
two eigenvectors of $H$ with eigenvalues $\pm\sqrt{\lambda^2 - \gamma^2}$ respectively are
\begin{align}
v_{\pm}(\lambda) := \left(\lambda \pm \sqrt{\lambda^2 - \gamma^2} \right) u(\lambda) + \mathfrak{i} \gamma E u(\lambda) \in \ker\left(\pm \sqrt{\lambda^2 - \gamma^2} \mathbb{1} - H \right). \label{eqn-anticommut-evecs}
\end{align}
If $J$ is diagonalizable and $E$ is an involution, then $H$ is diagonalizable for all $\gamma \in \mathbb{R} \setminus \sigma(J)$ and the set inequality \cref{anticommutPencilSpectrum} is an equality. 
\end{proposition}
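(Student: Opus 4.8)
The plan is to reduce the diagonalizability of $H = J + \mathfrak{i}\gamma E$ to the diagonalizability of a family of $2\times 2$ blocks, exploiting the eigenspace structure of $J$ together with the isomorphism $E$ induces between opposite eigenspaces. First I would invoke \cref{lemma:anti-commute}: since $E$ is an involution it is bijective, so $E$ maps $\ker(\lambda\mathbb{1} - J)$ isomorphically onto $\ker(\lambda\mathbb{1} + J)$ and $\sigma(J)$ is symmetric about the origin, $\lambda \in \sigma(J) \Leftrightarrow -\lambda \in \sigma(J)$. Because $J$ is diagonalizable, $\mathcal{H}$ is the direct sum of its $J$-eigenspaces; I would regroup these into the $H$-invariant subspaces $W_\lambda := \ker(\lambda\mathbb{1} - J)\oplus \ker(\lambda\mathbb{1} + J)$ for $\lambda \neq 0$ (choosing one representative from each pair $\{\lambda,-\lambda\}$) together with $W_0 := \ker J$ when $0 \in \sigma(J)$. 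Invariance of each $W_\lambda$ under $H$ follows at once from $J$-invariance and from $E$ interchanging the two summands, using $E^2 = \mathbb{1}$.

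Next I would diagonalize $H$ on each block. Fixing $\lambda \neq 0$ and a basis $\{u_i\}$ of $\ker(\lambda\mathbb{1} - J)$, the set $\{u_i\}\cup\{E u_i\}$ is a basis of $W_\lambda$, and the computations $H u_i = \lambda u_i + \mathfrak{i}\gamma E u_i$ and $H E u_i = \mathfrak{i}\gamma u_i - \lambda E u_i$ show that $H$ leaves each plane $\mathrm{span}\{u_i, E u_i\}$ invariant, acting there as $\begin{pmatrix}\lambda & \mathfrak{i}\gamma \\ \mathfrak{i}\gamma & -\lambda\end{pmatrix}$. This matrix has eigenvalues $\pm\sqrt{\lambda^2 - \gamma^2}$, which are distinct precisely when $\lambda^2 \neq \gamma^2$; a $2\times 2$ matrix with distinct eigenvalues is diagonalizable, and its eigenvectors are exactly the $v_\pm(\lambda)$ of \cref{eqn-anticommut-evecs}. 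The hypothesis $\gamma \in \mathbb{R}\setminus\sigma(J)$ supplies the required separation: $\gamma \notin \sigma(J)$ gives $\gamma \neq \lambda$, while the spectral symmetry gives $-\gamma\notin\sigma(J)$ hence $\gamma \neq -\lambda$, so $\lambda^2 \neq \gamma^2$ for every $\lambda\in\sigma(J)$. On the remaining block $W_0$, $H$ restricts to $\mathfrak{i}\gamma E$, and $E|_{W_0}$ is an involution, hence diagonalizable with eigenvalues $\pm 1$, so $H|_{W_0}$ is diagonalizable with eigenvalues $\pm\mathfrak{i}\gamma = \pm\sqrt{-\gamma^2}$. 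Since a block-diagonal operator is diagonalizable exactly when each block is, $H$ is diagonalizable on $\mathcal{H}$.

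Finally, for the spectral equality I would collect the eigenvalues produced above: they are exactly $\{\pm\sqrt{\lambda^2 - \gamma^2} : \lambda \in \sigma(J)\}$, and because $H$ is diagonalizable this is all of $\sigma(H)$. Since $E^2 = \mathbb{1}$ forces $\sigma(E)\subseteq\{-1,+1\}$, every $\lambda_E \in \sigma(E)$ satisfies $\lambda_E^2 = 1$, so the right-hand side of \cref{anticommutPencilSpectrum} collapses to this same set; combined with the inclusion already established, this yields the claimed equality. The main obstacle I anticipate is bookkeeping rather than conceptual: making the regrouping into the $W_\lambda$ unambiguous (one representative per $\pm$ pair), correctly isolating the $\lambda = 0$ block, and---should one wish to assert the statement beyond finite dimensions---phrasing \emph{diagonalizable} at the level of the eigenspace decomposition so that the argument does not tacitly assume finite multiplicities.
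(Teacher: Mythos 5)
Your block decomposition correctly establishes the second and third claims: the planes $\mathrm{span}\{u, Eu\}$ are $H$-invariant, $H$ acts there as $\left(\begin{smallmatrix}\lambda & \mathfrak{i}\gamma\\ \mathfrak{i}\gamma & -\lambda\end{smallmatrix}\right)$, and the hypothesis $\gamma\in\mathbb{R}\setminus\sigma(J)$ together with the spectral symmetry from \cref{lemma:anti-commute} forces $\lambda^2\neq\gamma^2$, so each block is diagonalizable with eigenvectors exactly the $v_\pm(\lambda)$ of \cref{eqn-anticommut-evecs}. For those claims your route is genuinely different from, and more explanatory than, the paper's, which declares them self-evident (they follow by substituting $v_\pm$ into $H$). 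The genuine gap is elsewhere: you never prove the first claim, the inclusion \cref{anticommutPencilSpectrum}, even though your final paragraph invokes ``the inclusion already established.'' That inclusion is asserted for \emph{arbitrary} anticommuting operators on a Hilbert space, before the involution or diagonalizability hypotheses are introduced: there $E$ need not be an involution (so $\sigma(E)$ enters nontrivially, and $E$ may fail to be bijective, which disables the eigenspace pairing), and $J$ need not be diagonalizable or possess a single eigenvector---in infinite dimensions its spectrum can be purely continuous or residual. Every step of your construction lives inside the extra hypotheses of the later claims, so it cannot deliver the inclusion in the stated generality.

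The missing ingredient is the paper's purely algebraic argument: anticommutativity cancels the cross terms, giving $H^2 = J^2 - \gamma^2 E^2$ as in \cref{anticommutPencilSquare}; since $[J,E^2]_- = 0$, the two summands commute, so $\sigma(H^2)\subseteq\sigma(J^2)+\sigma(-\gamma^2 E^2)$ by the spectral containment for sums of commuting Banach-algebra elements; the spectral mapping theorem, \cref{spectralCalculus}, converts this to $\sigma(H^2)\subseteq\{\lambda_J^2-\gamma^2\lambda_E^2\,|\,(\lambda_J,\lambda_E)\in\sigma(J)\times\sigma(E)\}$, and a second application, to $H\mapsto H^2$, yields \cref{anticommutPencilSpectrum}. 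Supplying this (or an equivalent argument) closes the gap. Two smaller cautions on your equality argument: the $W_0$ block contributes $+\mathfrak{i}\gamma$ or $-\mathfrak{i}\gamma$ only according to which of $\pm 1$ actually lies in $\sigma(E|_{\ker J})$, so the sign bookkeeping there deserves an explicit sentence; and, as you flag yourself, in infinite dimensions the identification of $\sigma(H)$ with the set of eigenvalues produced by the blocks requires closures, so your derivation of the equality is airtight only in the finite-dimensional (or pure point, finite-multiplicity) setting.
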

\begin{proof}
This proof will only display a proof of \cref{anticommutPencilSpectrum} as the rest of the proposition statement is self-evident.
The anticommutation relation implies a commutation relation,
\begin{align}
[J, E^2]_- = 0.
\end{align}
Thus, $H^2$ is a sum of commuting elements
\begin{align}
H^2 &= J^2 - \gamma^2 E^2 \label{anticommutPencilSquare},
\end{align}
so its spectrum satisfies
\begin{align}
\sigma(H^2) \subseteq \{\lambda_{J}^2 - \gamma^2 \lambda_{E}^2 \,|\, (\lambda_{J},\lambda_E) \in \sigma(J) \times \sigma(E)\}. 
\end{align}
The proof is completed by applying the spectral mapping \cref{spectralCalculus} to the last equation.
\end{proof}
Coalescence of the eigenvectors $v_+(\gamma_{\text{EP}}) = v_-(\gamma_{\text{EP}})$ occurs at the exceptional points of $H$, which are $\gamma_{\text{EP}} \in \sigma_p(J)$. Since $(J + \mathfrak{i} \lambda E)u(\lambda) = \lambda v_{\pm}(\lambda)$, at the exceptional points, the dimension of the generalized eigenspace associated to 0 is twice the dimension of the kernel of $H$.

Despite the fact that $H$ is generically non-Hermitian, it can have a real spectrum. This can be understood through the lens of pseudo-Hermiticity, which is clarified in the following proposition.
\begin{proposition} %Define Dieudonne relation
Assume $J, E$ satisfy the Dieudonn{\'e} relation with the same intertwining operator, $\eta_0$, or more explicitly,
\begin{align}
\eta_0 J = J^\dag \eta_0 &\quad& \eta_0 E = E^\dag \eta_0.
\end{align} 
Additionally, assume $J$ is bijective. Then, an intertwining operator for $H$ is $\eta_0 J^{-1}$. Thus, applying the generative procedure of \cite{bian2019time}, 
\begin{align}
\eta = \eta_0 + \mathfrak{i} \gamma \eta_0 J^{-1} E \label{anticommuteMetric}
\end{align}
is an intertwining operator for $H$ satisfying
\begin{align}
\eta H = H^\dag \eta.
\end{align}
If $\gamma \in \mathbb{R}$ and $\eta_0 = \eta_0^\dag$, then $\eta$ is Hermitian.
\end{proposition}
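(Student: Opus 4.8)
The plan is to reduce the entire proposition to a single clean object: the operator $\tilde\eta := \eta_0 J^{-1}$. I would first show that $\tilde\eta$ is itself an intertwining operator for $H$, then invoke the generative procedure of \cite{bian2019time} to promote it to $\eta = \tilde\eta H$, and finally read off Hermiticity of $\eta$ from Hermiticity of $\tilde\eta$ together with the intertwining relation. The only inputs are the two Dieudonn\'e relations $\eta_0 J = J^\dag \eta_0$ and $\eta_0 E = E^\dag \eta_0$, the bijectivity of $J$, and the standing anticommutation $[J,E]_+ = 0$.

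First I would record two preliminary facts. Inverting $\eta_0 J = J^\dag \eta_0$ shows that $J^{-1}$ is again $\eta_0$-self-adjoint, namely $\eta_0 J^{-1} = (J^{-1})^\dag \eta_0$; and multiplying $JE + EJ = 0$ on the left and right by $J^{-1}$ yields $E J^{-1} = -J^{-1}E$. With these in hand the verification that $\tilde\eta$ intertwines $H$ is a direct computation,
\begin{align}
\tilde\eta H &= \eta_0 J^{-1}(J + \mathfrak{i}\gamma E) = \eta_0 + \mathfrak{i}\gamma\,\eta_0 J^{-1}E, \\
H^\dag \tilde\eta &= (J^\dag - \mathfrak{i}\gamma^* E^\dag)\eta_0 J^{-1} = \eta_0 + \mathfrak{i}\gamma^*\,\eta_0 J^{-1}E,
\end{align}
where the second line uses $J^\dag\eta_0 J^{-1} = \eta_0$, $E^\dag \eta_0 = \eta_0 E$, and $E J^{-1} = -J^{-1}E$. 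The two expressions agree precisely when $\gamma = \gamma^*$, so for $\gamma \in \mathbb{R}$ one obtains $\tilde\eta H = H^\dag \tilde\eta$, establishing the first intertwiner claim.

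Next I would apply the generative procedure: since $H$ is $\tilde\eta$-self-adjoint and commutes with itself, $\eta := \tilde\eta H$ is again an intertwining operator, and expanding the product reproduces the claimed form $\eta = \eta_0 + \mathfrak{i}\gamma\,\eta_0 J^{-1}E$. For the final statement, assume in addition $\gamma \in \mathbb{R}$ and $\eta_0 = \eta_0^\dag$. Then $\tilde\eta = \eta_0 J^{-1}$ is Hermitian, being the product of the Hermitian metric $\eta_0$ with the $\eta_0$-self-adjoint operator $J^{-1}$, since $(\eta_0 J^{-1})^\dag = (J^{-1})^\dag \eta_0 = \eta_0 J^{-1}$. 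Hermiticity of $\eta$ then follows in one line from the intertwining relation of the previous step,
\begin{align}
\eta^\dag = (\tilde\eta H)^\dag = H^\dag \tilde\eta^\dag = H^\dag \tilde\eta = \tilde\eta H = \eta.
\end{align}

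The main obstacle — really the only delicate point — is the sign bookkeeping in the intertwining step: taking the adjoint sends $\mathfrak{i}\gamma \mapsto -\mathfrak{i}\gamma^*$, while the anticommutation $EJ^{-1} = -J^{-1}E$ contributes a compensating minus sign, and the two must line up exactly. They do so when $\gamma$ is real, which is precisely why the Hermiticity conclusion is stated under $\gamma \in \mathbb{R}$. I would flag explicitly that it is the \emph{anti}commutation of $J$ and $E$ (not mere commutation) that makes the $E$-dependent terms cancel rather than reinforce, so this hypothesis is used in an essential way rather than incidentally.
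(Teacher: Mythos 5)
Your proof is correct and follows precisely the route the paper itself sketches inside the proposition (the paper supplies no separate proof): verify that $\tilde\eta = \eta_0 J^{-1}$ intertwines $H$ via $J^\dag \eta_0 J^{-1} = \eta_0$ and $E J^{-1} = -J^{-1}E$, obtain $\eta = \tilde\eta H = \eta_0 + \mathfrak{i}\gamma\,\eta_0 J^{-1}E$ from the generative procedure, and read off Hermiticity from $\tilde\eta^\dag = \tilde\eta$ together with the intertwining relation. Your flagged subtlety is also genuinely right and worth keeping: the identity $\tilde\eta H = H^\dag \tilde\eta$ itself already forces $\gamma \in \mathbb{R}$ whenever $E \neq 0$ (e.g.\ $J = \sigma_x$, $E = \sigma_z$, $\eta_0 = \mathbb{1}$, $\gamma = \mathfrak{i}$ gives $\sigma_x H \neq H^\dag \sigma_x$), so the reality hypothesis that the proposition attaches only to the final Hermiticity clause is in fact needed for the intertwining claims as well, making your proof slightly more precise than the statement it establishes.
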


%Explicit calculation:
%\begin{align}
%(J + i \gamma E) v_{\pm}(\lambda) &= (\lambda \pm \sqrt{\lambda^2 - \gamma^2})(J + i \gamma E) u(\lambda) + i \gamma (J+i \gamma E) E u(\lambda) \\
%&= \left(
%\lambda^2 - \gamma^2 \pm \lambda \sqrt{\lambda^2 - \gamma^2}
%\right) u(\lambda) \pm 
%\left(i \gamma \sqrt{\lambda^2 - \gamma^2}\right) E u(\lambda) \\
%&= \pm \sqrt{\lambda^2 - \gamma^2} v_{\pm}(\lambda).
%\end{align}

%FIGURE OUT THE DIFFERENCE BETWEEN TACTICS4 and TACTICS5...

Since pseudo-Hermiticity of $H$ has been established, we are in a position to address quasi-Hermiticity of $H$. Establishing quasi-Hermiticity of $H$ requires finding a positive-definite metric operator, which is performed in \cref{prop:Pencil-QuasiHerm}. The proof requires some light machinery regarding positivity for Hilbert spaces associated to nontrivial metric operators, which will be introduced now. Following \cite[p. 40]{azizov1989linear} and \cref{etaSpace}, given a strictly-positive operator, $\eta_0 \in \text{GL}(\mathcal{B}(\mathcal{H}))^+$, define the $\eta_0$-\textit{space}, $\mathcal{H}_{\eta_0}$, to be the vector space of $\mathcal{H}$ endowed with the positive-definite inner product $\braket{\cdot|\eta_0 \cdot}$. An $\eta_0$-\textit{(strictly-)positive} operator\footnote{Some references, such as \cite[p.107]{azizov1989linear}, refer to what I call $\eta_0$-positive operators as $\eta_0$-non-negative.} is a (strictly-)positive operator on $\mathcal{H}_{\eta_0}$. Results which characterise (strictly-)positive operators, such as those presented in \cref{section:Positive}, apply straightforwardly to their $\eta_0$-(strictly)-positive counterparts. The norm in an $\eta_0$ space is
\begin{align}
||\psi||_{\eta_0} = ||\eta_0^{1/2} \psi|| &\quad& \forall \psi \in \mathcal{H}_{\eta_0}.
\end{align}
%\begin{lemma} \label{lemma-product-positive}
%The spectrum of the product of two $\eta$-positive operators is contained in $[0, \infty)$.
%\end{lemma}
%\begin{proof}
%Denote the two $\eta$-positive operators as $a_0$ and $a_1$. Each of these operators has a unique $\eta$-positive square root. Thus, $a_0 a_1 = a_0^{1/2} a_0^{1/2} a_1$. Since $\sigma(a b) \cup \{0\} = \sigma(b a) \cup \{0\}$ for all bounded operators $a, b$, we have
%\begin{align}
%\sigma(a_0 a_1) \subseteq \sigma(a_0^{1/2} a_1 a_0^{1/2}) \cup \{0\}. \label{eqn-pos-product-spec}
%\end{align}
%Since $a_0^{1/2} a_1 a_0^{1/2}$ is $\eta$-positive, \cref{eqn-pos-product-spec} implies that $a_0 a_1$ has a spectrum contained in $[0, \infty)$.
%\end{proof}
%\begin{lemma} 
%Given a bijective positive operator, $\eta$, an operator is $\eta$-positive-definite if and only if it is $\eta$-positive and injective.
%\end{lemma}
%\begin{proof}
%Recall an operator is injective if and only if its kernel consists of only the zero vector. For both cases of the proof, the operator in question, $a$, can be written as $\eta^{-1} b^\dag \eta b$. Thus, given $\psi \in \mathcal{H}$,
%\begin{align}
%\braket{\psi|\eta a \psi} = \braket{b \psi| \eta b \psi}.
%\end{align}
%By positive-definiteness of $\eta$, we see $\braket{\psi|\eta a \psi}$ if and only if $b \psi = 0$, which holds if and only if $\psi \in \ker \eta$. Thus, the positive-definiteness constraint $\braket{\psi|\eta a \psi} > 0$ for all $\psi \in \mathcal{H} \setminus \{0\}$ holds if and only if $a$ is injective.
%\end{proof}

\begin{proposition} \label{prop:Pencil-QuasiHerm}
Assume $J$ is a bijection, $E$ is an involution, and that $J$ and $E$ satisfy the Dieudonn{\'e} relation with the same strictly-positive metric operator, $\eta_0 \in \text{GL}(\mathcal{B}(\mathcal{H}))^+$, or more explicitly,
\begin{align}
\eta_0 J = J^\dag \eta_0 &\quad& \eta_0 E = E^\dag \eta_0.
\end{align} 
Given $\gamma \in \mathbb{R}$, define $\eta$ as in \cref{anticommuteMetric}. Then, $\eta$ is strictly positive if $|\gamma| < ||J^{-1}||_{\eta_0}$. 
If $J$ is diagonalizable, then $\eta$ is strictly-positive if and only if $|\gamma| < ||J^{-1}||_{\eta_0}$.
\end{proposition}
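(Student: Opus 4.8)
The plan is to pass to the Hilbert space $\mathcal{H}_{\eta_0}$, where the hypotheses $\eta_0 J = J^\dag\eta_0$ and $\eta_0 E = E^\dag\eta_0$ make $J$ and $E$ genuinely self-adjoint and $E$ a self-adjoint involution (hence $\eta_0$-unitary, since $E^{\dag_{\eta_0}}E = E^2 = \mathbb{1}$). Writing the intertwiner of \cref{anticommuteMetric} as $\eta = \eta_0(\mathbb{1} + \mathfrak{i}\gamma J^{-1}E)$ and using $\langle x|y\rangle_{\eta_0} = \langle x|\eta_0 y\rangle$, the identity $\langle\psi|\eta\psi\rangle = \langle\psi|(\mathbb{1}+\mathfrak{i}\gamma J^{-1}E)\psi\rangle_{\eta_0}$ reduces strict positivity of $\eta$ on $\mathcal{H}$ to strict positivity of $A := \mathbb{1} + \mathfrak{i}\gamma J^{-1}E$ on $\mathcal{H}_{\eta_0}$. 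First I would record two algebraic consequences of $[J,E]_+ = 0$: conjugating $JE = -EJ$ by $J^{-1}$ gives $EJ^{-1} = -J^{-1}E$, which shows both that $A$ is $\eta_0$-self-adjoint (recovering the Hermiticity of $\eta$ from the preceding proposition, since $\gamma\in\mathbb{R}$ and $\eta_0=\eta_0^\dag$) and that $(\mathfrak{i}\gamma J^{-1}E)^2 = \gamma^2 J^{-2}$.

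Next I would exploit that $E$ carries $\ker(\lambda\mathbb{1}-J)$ onto $\ker(\lambda\mathbb{1}+J)$ by \cref{lemma:anti-commute}, so that the spectral data of $J$ organizes into $\pm\lambda$ pairs and $A$ block-diagonalizes over them. On the two-dimensional block spanned by an $\eta_0$-orthonormal pair $\{u, Eu\}$ with $Ju=\lambda u$ (here $Eu\perp u$ because $\lambda\neq-\lambda$, as $J$ is invertible, and $\|Eu\|_{\eta_0}=\|u\|_{\eta_0}$ by $\eta_0$-unitarity of $E$), the relations $J^{-1}u=\lambda^{-1}u$ and $J^{-1}Eu=-\lambda^{-1}Eu$ show that $A$ acts as $\mathbb{1} + \tfrac{\gamma}{\lambda}\left(\begin{smallmatrix} 0 & \mathfrak{i} \\ -\mathfrak{i} & 0 \end{smallmatrix}\right)$, whose eigenvalues are $1 \pm |\gamma|/|\lambda|$. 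Hence the quadratic form of $A$ is controlled by $\inf_{\lambda\in\sigma(J)}\bigl(1-|\gamma|/|\lambda|\bigr)$, and strict positivity holds precisely when $|\gamma| < |\lambda|$ for every $\lambda\in\sigma(J)$, i.e.\ when $|\gamma| < \|J^{-1}\|_{\eta_0}$. This already establishes the sufficient (``if'') direction, valid without the diagonalizability assumption since $\eta_0>0$ makes the spectral theorem for the self-adjoint $J$ available on $\mathcal{H}_{\eta_0}$.

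For the sharp equivalence I would assume $J$ diagonalizable. The ``if'' direction is immediate from the bound above. For ``only if,'' if $|\gamma| \geq \|J^{-1}\|_{\eta_0}$ then, because the spectrum is pure point, there is an eigenvalue $\lambda_0$ with $|\lambda_0| \leq |\gamma|$; the corresponding block then carries the eigenvalue $1 - |\gamma|/|\lambda_0| \leq 0$ of $A$, producing a nonzero vector $v$ with $\langle v|\eta v\rangle \leq 0$, so $\eta$ is not strictly positive. The main obstacle is the infinite-dimensional bookkeeping in the general ``if'' direction: in place of the eigenspace pairing I must use the projection-valued measure of $J$ on $\mathcal{H}_{\eta_0}$, verifying that $E$ intertwines $dP_J(\lambda)$ with $dP_J(-\lambda)$ so that the $2\times2$ block computation passes to the spectral integral, and I must be careful that ``strictly positive'' (bounded below) rather than merely positive-definite is what the infimum controls — which is exactly why the sharp equivalence is stated under diagonalizability, guaranteeing the governing infimum is attained.
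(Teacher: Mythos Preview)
Your approach is correct and reaches the same conclusion, but both directions diverge from the paper's. For the ``if'' direction the paper avoids block decomposition and spectral-measure bookkeeping entirely: it simply notes that $A = \mathbb{1} + \mathfrak{i}\gamma J^{-1}E$ is $\eta_0$-self-adjoint, bijective via a Neumann series once $|\gamma|\,\|J^{-1}E\|_{\eta_0}<1$, and has positive spectrum by the spectral-radius bound; your own identity $(\mathfrak{i}\gamma J^{-1}E)^2=\gamma^2 J^{-2}$ together with spectral mapping already yields this, so the detour through $2\times2$ blocks and the intertwining $E\,dP_J(\lambda)\,E = dP_J(-\lambda)$ is more machinery than needed. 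For the ``only if'' direction the paper is indirect: strict positivity of $\eta$ makes $H=J+\mathfrak{i}\gamma E$ quasi-Hermitian, hence similar to a Hermitian operator by \cref{Williams1969Corollary}, hence diagonalizable with real spectrum, and then the formula \cref{anticommutPencilSpectrum} forces $|\gamma|<|\lambda|$ for every $\lambda\in\sigma(J)$; your explicit non-positive vector on a bad block is more constructive and arguably cleaner. One small correction: your closing remark that diagonalizability ``guarantees the governing infimum is attained'' is false in infinite dimensions (take $\sigma(J)=\{1+1/n : n\in\mathbb{Z}_+\}$); at $|\gamma|$ equal to an unattained infimum no eigenvalue satisfies $|\lambda_0|\le|\gamma|$, and one must instead argue that $0$ lies in the closure of the block eigenvalues and hence in $\sigma(A)$ --- the paper's indirect argument has the same boundary lacuna.
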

\begin{proof}

Before proceeding with the proof, I remark on the quantity $||J^{-1}||_{\eta_0}$. The next equation follows from the $C^*$-identity, that the involution of a $C^*$-algebra is an isometry \cite[Thm. 16.1]{Doran2018}, and the spectral radius \cref{thm:Spectral-Radius},
\begin{align}
||J^{-1}||^2_{\eta_0} = ||J^{-1} E||^2_{\eta_0} = \text{spr}(J^{-2}).
\end{align}

A proof that $\eta$ is strictly positive whenever $|\gamma| < ||J^{-1}||_{\eta_0}$ follows from understanding $\eta$ as a product of two $\eta$-strictly-positive operators.
\begin{itemize}
\item $\eta_0$ is trivially $\eta_0$-strictly-positive.\\
\item One proof that $\mathbb{1} + \mathfrak{i} \gamma J^{-1} E$ is $\eta_0$-strictly-positive follows from its $\eta_0$-self-adjointness, its bijectivity, and the positivity of its spectrum. $\eta_0$-self-adjointness follows from the $\eta_0$-self-adjointness of $J^{-1}$ and $E$ and their anticommutation relation. Bijectivity follows from the construction of a bounded inverse, which is given by a Neumann series whenever $|\gamma| \, ||J^{-1} E||_{\eta_0} < 1$\cite[\S 2.2]{bratteli1987operator}. Positivity of the spectrum follows from the spectral radius theorem.
\end{itemize}
Since $\eta$ is the product of two $\eta_0$-strictly-positive operators, the infimum of the spectrum of $\eta$ is nonzero and positive \cite[Thm. 3]{Wigner1963}\cite{WignerCollectedWorks}. Thus, $\eta$ is strictly-positive.

To prove that when $J$ is diagonalizable, strict-positivity of $\eta$ implies $|\gamma| < ||J^{-1}||_{\eta_0}$, note the strict-positivity of $\eta$ implies $H$ is quasi-Hermitian. Thus, by \cref{Williams1969Corollary} \cite{williams1969operators}, $H$ is diagonalizable with a real spectrum. By \cref{anticommutPencilSpectrum}, this implies $|\gamma| < ||J^{-1}||_{\eta_0}$.
\end{proof}

\begin{ex}
Let $J \in \text{GL}_n(\mathbb{C})$ satisfy the constraint that $J_{ij} = 0$ whenever $i - j$ is a positive even integer. Furthermore, assume $J$ is invertible. Considered as a Hamiltonian, $J$ is local to a complete bipartite graph, namely, the graph whose vertex set is the integers ranging from $1$ to $n$ and whose edge set is the set of pairs of vertices which are both even or both odd. Then, due to \cref{{eqn-bipartite-chiral-symmetry}}, one chiral symmetry $J$ possesses is a diagonal matrix, 
\begin{align}
E_{ij} = \begin{cases}
\delta^i_{j} & \text{if } i \text{ is even} \\
-\delta^i_{j} & \text{if } i \text{ is odd},
\end{cases}
\end{align} 
where $\delta$ denotes the Kronecker delta.
One example of an invertible $J$ is an irreducible tridiagonal matrix with $n$ even. In this case, the spectrum of $H$ given by \cref{anticommutPencilSpectrum} was deduced by \cite{Dyachenko2021}. Furthermore, the corresponding Hamiltonian in this case has been studied in, for instance, \cite{Lieu2018}. Consider the special case where $J$ is the Toeplitz tridiagonal matrix with the matrix elements
\begin{align}
J_{ij} = \delta^i_{j+1} + \delta^j_{i+1}.
\end{align}
The inverse of $J$ can be computed explicitly \cite{Huang1997}. Thus, the matrix elements of the metric of \cref{anticommuteMetric} can also be computed exactly, they are 
\begin{align}
\eta_{jk}= \delta^j_k + (-1)^{n+j} \mathfrak{i} \gamma \sin\left(\frac{\min\{j,k\} \pi}{2}\right) \sin\left(\frac{(2 n-\max\{j,k\}+1) \pi}{2}\right).
\end{align} 
Using a diagonal similarity transform, as in \cref{triDiagToSymmetric} and \cref{MetricMapper}, allows for a generalization to the case where $J$ is the Hamiltonian of the Hatano-Nelson model \cite{Hatano1996},
\begin{align}
J_{ij} = \alpha \delta^i_{j+1} + \beta \delta^j_{i+1} &\quad& \alpha \beta \neq 0.
\end{align}
In this case, we have 
\begin{align}
\eta_{jk}&= \left(\delta^j_k + (-1)^{n+j} \mathfrak{i} \gamma \sin\left(\frac{\min\{j,k\} \pi}{2}\right) \sin\left(\frac{(2 n-\max\{j,k\}+1) \pi}{2}\right) \right) \nonumber \\
&\times \left(\frac{\beta^*}{\alpha^*} \right)^{(j-1)/2}\left(\frac{\beta}{\alpha} \right)^{(k-1)/2}.
\end{align} 
\demo
\end{ex}

Given a finite sequence, $(J_k, E_k)_{k \in \{1, \dots, n\}}$, of operators which anticommute, a new pair of operators which anticommute can be constructed
\begin{align}
    \mathcal{E}   &= \otimes_{k \in \{1, \dots, n\}} E_k, \\
    \mathcal{J}_k &= \otimes_{j \in \{1, \dots, n\}} \begin{cases}
    \mathbb{1}_j & \text{if } j \neq k\\
    J_k & \text{if } j = k,
    \end{cases} \\
    \left[\sum_{k=1}^n \alpha_k \mathcal{J}_k, \mathcal{E} \right]_+ &= 0,
\end{align}
with $\alpha_k \in \mathbb{C}$. If each $J_k$ is local to a one-dimensional chain, then the sum $\sum_{k} \alpha_k \mathcal{J}_k$ is a Hamiltonian that is local to an $n$-dimensional hypercube. %{\color{red} reference}.

%Ideas for future calculations: Calculate the square root and inverse of $\eta$, then the equivalent Hermitian Hamiltonian. Note: We don't need to. Assume we take the positive square root which commutes with every operator which commutes with $\eta$. Since $H^2$ is Hermitian and commutes with $\eta$, $H^2$ commutes with the square root. So the similar Hermitian operator is a square root of $H^2$.

%POVM = positive operator valued measure

%\begin{proof} PROOF THAT \eta MAPS POS DEF TO POS DEF
%%An operator, $A$, is positive-definite if and only if it can be expressed as $A = \Omega^\dag \Omega$ where $\Omega$ has a bounded inverse. Since $\eta$ is unital, we have the identity
%\begin{align}
%\mathbb{1} &= \eta(\mathbb{1}) \\
%&= \eta(\Omega^{-1} \Omega) \\
%&= \eta(\Omega^{-1}) \eta(\Omega).
%\end{align}
%Thus, $\eta(\Omega^{-1})$ is the bounded inverse of $\eta(\Omega)$. Thus,
%\begin{align}
%\eta(A) = \eta(\Omega^\dag) \eta(\Omega),
%\end{align}
%so $\eta(A)$ is positive-definite.
%\end{proof}

\section{Primer on Representations of $C^*$-Algebras} \label{section:Homomorphisms}

This section summarizes some basic properties of reducible representations of $C^*$-algebras.

\begin{defn} \label{defn:Representation}
A ${}^*$-\textit{homomorphism}, $\phi$, is a linear map whose domain and codomain are ${}^*-$algebras such that 
\begin{itemize}
\item $\phi$ is an \textit{algebra homomorphism}, so that $\phi(a b) = \phi(a) \phi(b)$ for all $a, b \in \text{Dom}(\phi)$. 

\item $\phi$ preserves the involution, so $\phi(x^*) = \phi(x)^*$.
\end{itemize}
A \textit{representation}, $\phi$, of a $C^*$-algebra, $\mathfrak{A}$, on a Hilbert space, $\mathcal{H}$, is a $*$-homomorphism whose codomain is the $C^*$-algebra of bounded operators on $\mathcal{H}$.
\end{defn}

Representations can further be classified by their invariant subspaces as defined below.
\begin{defn}
An \textit{invariant subspace}, $\mathcal{K}$, of a representation of a $C^*$-algebra, $\phi: \mathfrak{A} \to \mathcal{B}(\mathcal{H})$, is a subspace $\mathcal{K} \subseteq \mathcal{H}$ such that for every $a \in \mathfrak{A}$, $\mathcal{K}$ is an invariant subspace\footnote{The invariant subspace of an operator is defined in \cref{invariantSubspaceOperatorDefn}.} of the operator $\phi(a)$. $\phi$ is said to be \textit{reducible} if there exists an invariant subspace other than $\{0\}$ or $\mathcal{H}$, and \textit{irreducible} otherwise. %Invariant subspaces other than $\{0\}$ or $\mathcal{H}$ are referred to as \textit{nontrivial}
\end{defn}

My understanding of why a representation is called "reducible" follows from \cref{representationDecomposition}: Reducible representations can be expressed as a sum of representations whose ranges are strict subsets of their parent representation.

\begin{lemma} \label{Lemma:-Invariant-Subspace-Homomorphism}
If $\mathcal{K}$ is an invariant subspace of a $*$-homomorphism, $\phi: \mathfrak{A} \to \mathcal{B}(\mathcal{H})$, then $\mathcal{K}^\perp$ is an invariant subspace as well. $\mathcal{K}$ is a closed invariant subspace of a $*$-homomorphism, $\phi: \mathfrak{A} \to \mathcal{B}(\mathcal{H})$ if and only if $\mathcal{K}^\perp$ is an invariant subspace as well.
\end{lemma}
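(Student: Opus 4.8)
The plan is to reduce both assertions to the single elementary fact that if a bounded operator $T$ satisfies $T\mathcal{K} \subseteq \mathcal{K}$, then its adjoint satisfies $T^\dag \mathcal{K}^\perp \subseteq \mathcal{K}^\perp$, and then to feed this the defining property of a $*$-homomorphism, namely $\phi(a^*) = \phi(a)^\dag$ together with the fact that $\mathfrak{A}$ is closed under its involution. First I would prove the implication in the opening sentence. Fix $a \in \mathfrak{A}$ and $y \in \mathcal{K}^\perp$; the goal is to show $\phi(a) y \in \mathcal{K}^\perp$, i.e.\ $\braket{x | \phi(a) y} = 0$ for every $x \in \mathcal{K}$. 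Using the adjoint convention of \cref{adjoint} (antilinear in the first slot) and the $*$-homomorphism property, one computes
\begin{align}
\braket{x | \phi(a) y} = \braket{\phi(a)^\dag x | y} = \braket{\phi(a^*) x | y} = 0,
\end{align}
where the last equality holds because $a^* \in \mathfrak{A}$, so invariance of $\mathcal{K}$ gives $\phi(a^*) x \in \mathcal{K}$, which is orthogonal to $y$. Since $a \in \mathfrak{A}$ and $y \in \mathcal{K}^\perp$ were arbitrary, $\mathcal{K}^\perp$ is invariant. I would emphasize that this direction requires no closedness hypothesis on $\mathcal{K}$.

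For the equivalence in the second sentence, the forward implication is exactly what was just established, so only the converse needs attention: assume $\mathcal{K}$ is a closed subspace and $\mathcal{K}^\perp$ is invariant. Applying the first part to the subspace $\mathcal{K}^\perp$ (which is legitimate, as it too is just a subspace) shows that $(\mathcal{K}^\perp)^\perp$ is invariant. Because $\mathcal{K}$ is closed, the double orthogonal complement collapses, $(\mathcal{K}^\perp)^\perp = \mathcal{K}$, and hence $\mathcal{K}$ itself is invariant. I would remark that closedness enters only here: for a non-closed $\mathcal{K}$ the identity $(\mathcal{K}^\perp)^\perp = \mathrm{cl}(\mathcal{K})$ would only return the closure, which is why the "if and only if" is stated for closed $\mathcal{K}$.

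I do not anticipate a genuine obstacle; the mathematical content is a one-line inner-product computation plus the standard double-complement identity. The only places demanding care are bookkeeping ones: invoking the correct adjoint convention so that $\phi(a)^\dag = \phi(a^*)$ comes out right, using that the involution maps $\mathfrak{A}$ into itself so that $\phi(a^*)$ genuinely inherits the invariance of $\mathcal{K}$, and keeping track of the fact that closedness of $\mathcal{K}$ is needed purely for the converse half of the equivalence.
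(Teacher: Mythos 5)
Your proof is correct and follows essentially the same route as the paper's: the identical one-line inner-product computation $\braket{x|\phi(a)y} = \braket{\phi(a^*)x|y} = 0$ using $\phi(a)^\dag = \phi(a^*)$ for the first statement, and the double-complement identity $(\mathcal{K}^\perp)^\perp = \mathrm{cl}(\mathcal{K})$ for the equivalence. You simply spell out the converse direction (and where closedness enters) more explicitly than the paper, which compresses that step into a single sentence.
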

\begin{proof}
To prove the first statement, suppose $\psi \in \mathcal{K}^\perp, \varphi \in \mathcal{K}$. Then, for every $a \in \mathfrak{A}$, we have
\begin{align}
\braket{\psi|\phi(a^*) \varphi} &= 0 \\
&= \braket{\phi(a^*)^\dag \psi|\varphi} \\
&= \braket{\phi(a) \psi|\varphi}. \label{invariantSubspaceComplement}
\end{align}
Since \cref{invariantSubspaceComplement} holds for every $\varphi \in \mathcal{K}$, it follows that $\phi(a) \psi \in \mathcal{K}^\perp$ for all $a \in \mathfrak{A}$ and for all $\psi \in \mathcal{K}^\perp$. Thus, $\mathcal{K}^\perp$ is an invariant subspace of $\phi$. 

The second statement follows from the first and the identity $((\mathcal{K})^\perp)^\perp = \text{cl}(\mathcal{K})$.
\end{proof}
%{\color{red} Consider removing this proof and replacing it with a reference, since it's rather trivial.}

\begin{proposition} \label{representationDecomposition}
Consider a representation of a $C^*$-algebra, $\phi:\mathfrak{A} \to \mathcal{B}(\mathcal{K})$. Suppose there is a decomposition of the Hilbert space into a finite number, $n$, of orthogonal closed invariant subspaces of $\phi$, $\mathcal{K}_{i \in \{1, \dots, n\}} \subseteq \mathcal{K}$. More explicitly, assume
\begin{align}
\mathcal{K} = \gls{span}(\cup_{i} \mathcal{K}_i),
\end{align}
where $\mathcal{K}_i$ are closed invariant subspaces of $\phi$ such that 
\begin{align}
\mathcal{K}_i \subseteq \mathcal{K}_j^\perp &\quad& \forall i \neq j.
\end{align}
Define $\phi_{i}:\mathfrak{A} \to \mathcal{B}(\mathcal{K})$ as
\begin{align}
\phi_{i}(a) := \phi(a) \mathscr{P}_{\mathcal{K}_i} &\quad& \forall a \in \mathfrak{A}, \label{phi-i}
\end{align}
where $\mathscr{P}_V$ denotes the unique orthogonal projection whose range is the closed vector space $V$. Let $h: \{1, \dots, n\} \to \mathfrak{A}$ be a function which associates an element of $\mathfrak{A}$ to every invariant subspace $\mathcal{K}_i$.
The maps $\phi_{i}$ satisfy the following properties for all $i,j \in \{1, \dots, n\}$ and $a,b \in \mathfrak{A}$:
\begin{enumerate}
\item $\phi = \sum_{k = 1}^n \phi_{k}$. \label{reducibleRepItem1} \\
\item $\mathscr{P}_{\mathcal{K}_i}$ commutes with\footnote{Another version of this statement is that $\mathscr{P}_{\mathcal{K}_i}$ is in the commutant of the range of $\phi$ for all $i$.} $\phi(a)$.\\
\item $\phi_{i}(a) \phi_j(b) = \delta_{ij} \phi_i (ab)$. \label{reducibleRepProductItem} \\
\item $\phi_i$ is a representation of $\mathfrak{A}$ on $\mathcal{K}$ \\
\item $||\sum_{k=1}^n \phi_k(h(k)) || = \sup_{k \in \{1, \dots, n\}} || \phi_k(h(k)) ||$.
\end{enumerate} 
\end{proposition}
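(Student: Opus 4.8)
The plan is to prove the five claims in sequence, since each builds on the previous ones. Claims \ref{reducibleRepItem1} through \ref{reducibleRepProductItem} are essentially bookkeeping consequences of the orthogonal-decomposition hypothesis, and the real content is the final norm identity, which I expect to be the main obstacle.

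First I would establish item \ref{reducibleRepItem1}. The hypothesis that $\mathcal{K}$ is spanned by the mutually orthogonal closed invariant subspaces $\mathcal{K}_i$, together with \cref{Lemma:-Invariant-Subspace-Homomorphism}, means the $\mathscr{P}_{\mathcal{K}_i}$ are mutually orthogonal projections summing to the identity, $\sum_k \mathscr{P}_{\mathcal{K}_k} = \mathbb{1}$. Right-multiplying by $\phi(a)$ and invoking the definition \cref{phi-i} gives $\phi = \sum_k \phi_k$ immediately. Next, for item 2, I would show $\mathscr{P}_{\mathcal{K}_i}$ lies in the commutant of $\mathcal{R}(\phi)$: since $\mathcal{K}_i$ is invariant under every $\phi(a)$, we have $\phi(a) \mathscr{P}_{\mathcal{K}_i} = \mathscr{P}_{\mathcal{K}_i} \phi(a) \mathscr{P}_{\mathcal{K}_i}$, and because $\mathcal{K}_i^\perp$ is also invariant (again by \cref{Lemma:-Invariant-Subspace-Homomorphism}), the same holds with $\mathscr{P}_{\mathcal{K}_i}$ replaced by $\mathbb{1} - \mathscr{P}_{\mathcal{K}_i}$, forcing $\phi(a)\mathscr{P}_{\mathcal{K}_i} = \mathscr{P}_{\mathcal{K}_i}\phi(a)$.

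Items \ref{reducibleRepProductItem} and 4 then follow quickly. Using the commutation from item 2 and orthogonality $\mathscr{P}_{\mathcal{K}_i}\mathscr{P}_{\mathcal{K}_j} = \delta_{ij}\mathscr{P}_{\mathcal{K}_i}$, I compute
\begin{align}
\phi_i(a)\phi_j(b) = \phi(a)\mathscr{P}_{\mathcal{K}_i}\phi(b)\mathscr{P}_{\mathcal{K}_j} = \phi(a)\phi(b)\mathscr{P}_{\mathcal{K}_i}\mathscr{P}_{\mathcal{K}_j} = \delta_{ij}\phi(ab)\mathscr{P}_{\mathcal{K}_i} = \delta_{ij}\phi_i(ab),
\end{align}
which is item \ref{reducibleRepProductItem}; the $j=i$ case shows $\phi_i$ is multiplicative, and involution-preservation is inherited from $\phi$ together with the self-adjointness of $\mathscr{P}_{\mathcal{K}_i}$, giving item 4.

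The final identity is the crux. Writing $T := \sum_k \phi_k(h(k))$, I would use that the $\phi_k(h(k)) = \phi(h(k))\mathscr{P}_{\mathcal{K}_k}$ have ranges inside the mutually orthogonal subspaces $\mathcal{K}_k$ and vanish on $\mathcal{K}_k^\perp$ (since $\mathscr{P}_{\mathcal{K}_k}$ annihilates $\mathcal{K}_k^\perp$). Thus $T$ is block-diagonal with respect to the orthogonal decomposition $\mathcal{K} = \bigoplus_k \mathcal{K}_k$: for $\psi = \sum_k \psi_k$ with $\psi_k \in \mathcal{K}_k$, we get $T\psi = \sum_k \phi_k(h(k))\psi_k$ with the summands pairwise orthogonal, so $\|T\psi\|^2 = \sum_k \|\phi_k(h(k))\psi_k\|^2$. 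The operator norm of a block-diagonal operator is the supremum of the norms of its blocks; bounding $\|T\psi\|^2 \le \sup_k\|\phi_k(h(k))\|^2 \sum_k\|\psi_k\|^2 = \sup_k\|\phi_k(h(k))\|^2\,\|\psi\|^2$ gives one inequality, and testing on vectors supported in a single $\mathcal{K}_k$ gives the reverse, establishing $\|T\| = \sup_k\|\phi_k(h(k))\|$. The only subtlety worth care is that $\phi_k(h(k))$ maps $\mathcal{K}_k$ into itself (not merely into $\mathcal{K}$), which follows because $\mathcal{K}_k$ is $\phi$-invariant; this is what makes the block-diagonal structure genuine rather than merely block-triangular.
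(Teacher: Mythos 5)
Your proposal is correct and takes essentially the same approach as the paper's proof: the resolution of the identity $\sum_k \mathscr{P}_{\mathcal{K}_k} = \mathbb{1}$ for item 1, commutation of the projections with the range of $\phi$ driving items 2--4, and the Pythagorean block-diagonal argument for the norm identity, including the key observation that invariance of $\mathcal{K}_k$ under $\phi$ is what makes the block structure genuinely diagonal. The only cosmetic difference is in item 2, where you deduce commutation from the invariance of $\mathcal{K}_i^\perp$ via \cref{Lemma:-Invariant-Subspace-Homomorphism} while the paper inserts $\sum_j \mathscr{P}_{\mathcal{K}_j} = \mathbb{1}$; the two one-line arguments are equivalent.
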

\begin{proof}
\leavevmode 
\begin{enumerate}
\item Since the $\mathcal{K}_i$ are mutually orthogonal and generate $\mathcal{K}$, the projections $\mathscr{P}_{\mathcal{K}_i}$
satisfy\footnote{More generally, the map $P: \mathbb{P}(\{1, \dots, n\})$ defined by $P(S) := \sum_{i \in S} \mathscr{P}_{\mathcal{K}_i}$ is a projection-valued measure, in the sense of  \cref{operator-Valued-Measures}.}
\begin{align}
\sum_{i = 1}^n \mathscr{P}_{\mathcal{K}_i} = \mathbb{1}. \label{projectorsSumToOne}
\end{align}
The equation listed as \cref{reducibleRepItem1} follows from this.

\item Since $\mathcal{K}_i$ is an invariant subspace of $\phi$, it must be the case that $\phi(a):\mathcal{K}_i \to \mathcal{K}_i$ for all $i \in \{1, \dots, n\}$. Thus,
\begin{align}
\phi(a) \mathscr{P}_{\mathcal{K}_i} = \mathscr{P}_{\mathcal{K}_i} \phi(a) \mathscr{P}_{\mathcal{K}_i} &\quad& \forall a \in \mathfrak{A}, i \in \{1, \dots, n\}.
\end{align}
That $\mathscr{P}_{\mathcal{K}_i}$ commutes with $\phi(a)$ follows from the previous equation and \cref{projectorsSumToOne},
\begin{align}
\mathscr{P}_{\mathcal{K}_i} \phi(a) &= \sum_{j = 1}^n \mathscr{P}_{\mathcal{K}_i} \phi(a) \mathscr{P}_{\mathcal{K}_j} \\
&= \mathscr{P}_{\mathcal{K}_i} \phi(a) \mathscr{P}_{\mathcal{K}_i} \\
&= \phi(a) \mathscr{P}_{\mathcal{K}_i} &\quad& \forall a \in \mathfrak{A}, i \in \{1, \dots, n\}.
\end{align}

\item This follows from the definition of $\phi_i$ and the observation that $\mathscr{P}_{\mathcal{K}_i}$ commutes with $\phi(a)$,
\begin{align}
\phi_{i}(a) \phi_j(b) &= \phi_{i}(a) \mathscr{P}_{\mathcal{K}_i} \phi_j(b) \mathscr{P}_{\mathcal{K}_j} \\
&= \phi_{i}(a) \mathscr{P}_{\mathcal{K}_i}  \mathscr{P}_{\mathcal{K}_j}\phi_j(b) \\
&= \delta_{ij} \phi_{i}(a) \phi_j(b) &\quad& \forall a,b \in \mathcal{A}.
\end{align}

\item Proving that $\phi_i$ is a representation requires showing that it preserves the involution and the product. Preservation of the involution follows from the commutation property of the orthogonal projections $\mathscr{P}_i$,
\begin{align}
\phi_i(a^*) &= \phi(a^*) \mathscr{P}_{\mathcal{K}_i} \\
&= \phi(a)^* \mathscr{P}_{\mathcal{K}_i} \\
&= (\phi(a) \mathscr{P}_{\mathcal{K}_i})^* &\quad& \forall a \in \mathfrak{A}, i \in \{1, \dots, n\}.
\end{align}
Preservation of the product follows from the previously demonstrated \cref{reducibleRepProductItem}.

\item
    Given $\psi \in \mathcal{K}$, let $\psi_i := \mathscr{P}_{\mathcal{K}_i} \psi$. Applying the Pythageorean identity\footnote{See \cref{Pythagoras}.} for Hilbert spaces, 
    \begin{align}
        \norm{ \sum_{i = 1}^{n} \phi_{i}(h_i) } &= \sup_{\psi \in \mathcal{K} \setminus \{0\}} \dfrac{\norm{ \sum_{i = 1}^{n} \phi_{i}(h_i) \psi_i }}{\norm{ \sum_{j = 1}^{n} \psi_j }} \\
        &= \sup_{\psi \in \mathcal{K}\in \mathcal{K} \setminus \{0\}} \sqrt{\dfrac{\sum_{i = 1}^{n} \norm{\phi_{i}(h_i) \psi_i}^2}{\sum_{j = 1}^{n} \norm{\psi_j }^2}} \\
        &\leq \sup_{i \in S} \norm{ \phi_i(h_i) }.
    \end{align}
    Equality is achieved with a suitable choice of $\psi \in \mathcal{K}_i$, where $i$ is one of the indices which extremizes the supremum.
    
    \end{enumerate}
\end{proof}

A trivial example of a decomposition of the form used in the above proposition is the decomposition $\mathcal{K}_1 = \mathcal{K}$, $\mathcal{K}_2 = \{0\}$. Lemma~\ref{Lemma:-Invariant-Subspace-Homomorphism} implies that every finite-dimensional $\mathcal{K}$ can be decomposed as a direct sum of irreducible representations.

The following example of a representation of the algebra of $2 \times 2$ matrices will be used in subsequent sections.
\begin{ex} \label{2x2-Matrix-Algebra-Representation}
Let $\mathcal{H}$ be a Hilbert space, and $\mathcal{U} \in \mathcal{B}(\mathcal{H})$ be a unitary on $\mathcal{H}$. This unitary defines a representation of the $C^*$-algebra of $2 \times 2$ complex matrices, $\mathfrak{M}_2(\mathbb{C})$, on the algebra of $2 \times 2$ matrices with bounded operator-valued elements\footnote{See \cref{matrixAlgebraExample} for more details.}, $\mathfrak{M}_2(\mathcal{B}(\mathcal{H})) \simeq \mathcal{B}(\mathcal{H}) \oplus \mathcal{B}(\mathcal{H})$. Specifically, this representation is
\begin{align}
\phi_{\mathcal{U}}\left(\begin{pmatrix}
a & b \\
c & d
\end{pmatrix} \right) = \begin{pmatrix}
a \,\mathbb{1} & b \,\mathcal{U} \\
c \,\mathcal{U}^\dag & d \,\mathbb{1}
\end{pmatrix}.
\end{align}
Suppose $V$ is an invariant subspace of $\mathcal{U}$. Since $V^\perp$ is an invariant subspace\footnote{See \cref{Invariant-Subspace-Of-Adjoint}.} of $\mathcal{U}^\dag$, $V^\perp \oplus V$ is an invariant subspace of $\phi_{\mathcal{U}}$.
\demo
\end{ex}
\subsection{Unital Maps}
This section reviews some facts about unital maps that I use later in this chapter.

\begin{defn}
A \textit{unital} map, $\phi$, is one whose domain and codomain are unital associative algebras\footnote{More details on unital algebras can be found in \cref{Unital-Algebras-Section}.} such that $\phi(\mathbb{1}) = \mathbb{1}$.
\end{defn}

An interpretation of an operator as pseudo-Hermitian or quasi-Hermitian requires associating them with a corresponding bijective or positive-definite intertwining operators respectively. One nice feature about unital ${}^*$-homomorphisms is that they preserve the notions of invertibility and positivity, as demonstrated in the following lemma.
\begin{lemma}
If $\phi$ is a unital ${}^*$-homomorphism whose domain and co-domain are $C^*$-algebras, then $\phi$ maps invertible positive elements into invertible positive elements. In particular, if $a \in \text{GL}(\text{Dom}(\phi))$ is invertible, then
\begin{align}
\phi(a)^{-1} = \phi(a^{-1}).
\end{align}
\end{lemma}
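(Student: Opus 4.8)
The plan is to handle the two assertions separately — preservation of invertibility and preservation of positivity — each following directly from one of the defining properties of a unital ${}^*$-homomorphism, and then to combine them.

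First I would establish the invertibility claim, which simultaneously yields the displayed identity $\phi(a)^{-1} = \phi(a^{-1})$. Starting from $a a^{-1} = a^{-1} a = \mathbb{1}$ in $\text{Dom}(\phi)$, I apply $\phi$ and invoke the fact that $\phi$ is an algebra homomorphism together with unitality, $\phi(\mathbb{1}) = \mathbb{1}$, to obtain
\begin{align}
\phi(a) \phi(a^{-1}) = \phi(a a^{-1}) = \phi(\mathbb{1}) = \mathbb{1} = \phi(a^{-1}) \phi(a).
\end{align}
Thus $\phi(a)$ is invertible in the codomain with two-sided inverse $\phi(a^{-1})$.

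Next I would establish preservation of positivity. Recall that, as defined in \cref{Quasi-Hermitian-C*}, an element $p$ of a $C^*$-algebra is positive precisely when it can be written $p = b^* b$ for some $b$. Given such a factorization, I apply $\phi$ and use both that $\phi$ preserves the product and that $\phi$ preserves the involution to get
\begin{align}
\phi(p) = \phi(b^* b) = \phi(b^*) \phi(b) = \phi(b)^* \phi(b),
\end{align}
which is manifestly of the form $x^* x$ in the codomain, hence positive there. Combining the two steps, if $a$ is an invertible positive element then $\phi(a)$ is positive by the second computation and invertible by the first, so $\phi(a)$ is an invertible positive element, as claimed.

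I do not anticipate a genuine obstacle: the argument is purely algebraic. The only point requiring care is to use the $b^* b$ characterization of positivity from \cref{Quasi-Hermitian-C*} rather than a spectral one, since a spectral characterization would additionally invoke the standard (but less elementary) fact that a unital ${}^*$-homomorphism between $C^*$-algebras is spectrum-nonincreasing; with the $b^* b$ definition no such input is needed.
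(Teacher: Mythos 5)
Your proof is correct, and both halves are exactly the standard arguments: the paper in fact states this lemma \emph{without} any proof, so there is no written proof to diverge from. Your reasoning matches the way the paper uses the lemma later — in the proof of \cref{representationTheoryTheorem} the inverse of $\phi(m)$ is asserted to be $\phi(m^{-1})$, and positivity of $\phi(m)$ is said to follow "from positivity of $m$ and the assumption that $\phi$ is unital," which is precisely your two computations $\phi(a)\phi(a^{-1}) = \phi(\mathbb{1}) = \mathbb{1} = \phi(a^{-1})\phi(a)$ and $\phi(b^*b) = \phi(b)^*\phi(b)$. Your closing remark is also well taken: the alternative route via spectra — using the paper's \cref{homomorphism-Contraction}, which records $\sigma(\phi(a)) \subseteq \sigma(a)$ for unital ${}^*$-homomorphisms, combined with the spectral characterization of positivity in \cref{thm-Characterize-Positive} — would prove the same lemma but rests on the less elementary spectrum-nonincreasing fact, whereas your use of the $b^*b$ definition (stated in \cref{positivityDef} and in \cref{Quasi-Hermitian-C*}) keeps the argument purely algebraic. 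One pedantic point worth noting: the displayed identity in the lemma is stated for arbitrary invertible $a$, not just positive ones, and you correctly proved it at that level of generality before combining it with the positivity step.
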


%Add example that eigenspaces are invariant subspaces.

\section{Pseudo-Hermiticity and Representation Theory} \label{section:PseudoHermFromRep}

%Possible generalization: "Super completely positive"

Given a set of $m$-self-adjoint elements in a $C^*$-algebra and representation of this algebra on a Hilbert space, $\mathcal{K}$, the theorems presented in this section generate pseudo-Hermitian operators and their corresponding intertwining operators on this Hilbert space. Furthermore, unital representations will generate quasi-Hermitian operators if $m$ is positive and invertible.

To avoid repetitious definitions, I introduce the setting of the subsequent theorems and examples here:
\begin{defn} \label{Representation-Theory-Setting}
\leavevmode 
\begin{itemize}
\item $\mathfrak{A}$ is a $C^*$-algebra and $\mathcal{H}$ is a separable Hilbert space. \\
\item $\phi$ is a representation of $\mathfrak{A}$ on a separable Hilbert space, $\mathcal{K}$.\\
\item $\mathcal{K}$ decomposes as a direct sum over a finite number, $n$, of mutually orthogonal closed invariant subspaces of $\phi$, denoted by $\mathcal{K}_{i \in \{1, \dots, n\}}$. Explicitly, 
\begin{align}
\mathcal{K}_i &\subseteq \mathcal{K}^\perp_j &\quad& \forall i \in \{1, \dots, n\} \setminus \{j\} \\
\mathcal{K} &= \gls{span}\left(\bigcup\limits_{i \in \{1, \dots, n\}} \mathcal{K}_i \right).
\end{align}
\item $\phi_i:\mathfrak{A} \to \mathcal{B}(\mathcal{K})$ are representations defined by $\phi_i(a) = \phi(a) \mathscr{P}_{\mathcal{K}_i}$, where $\mathscr{P}_V$ denotes the orthogonal projection onto the closed linear subspace $V$. The properties of the representations $\phi_i$ are detailed in \cref{representationDecomposition}. \\
\item $\mathcal{R}$ denotes the range of a linear map and $'$ denotes the commutant of a subset in $\mathcal{B}(\mathcal{K})$. 
\item $h_{k \in \{1, \dots, n\}} \in \mathfrak{A}$ is a sequence of elements in $\mathfrak{A}$. The central object of interest is the operator 
\begin{align}
H := A + \sum_{k=1}^n \phi_k(h_k).\label{homoHam}
\end{align}
The operator $A$ is assumed to be in the commutant of the range of $\phi$, or more explicitly, $A \in \mathcal{R}(\phi)'.$
\end{itemize}
\end{defn}

\begin{theorem} \label{representationTheoryTheorem}
Consider the setting outlined in \cref{Representation-Theory-Setting}. Let $\Omega, \Omega^\dag \in \phi_i(\mathfrak{A})'$ for all $i \in \{1, \dots, n\}$. Assume the sequence $h_{i \in \{1, \dots, n\}} \in \mathfrak{A}$ satisfies the pseudo-Hermiticity condition $m h_i = h_i^* m$ for some invertible Hermitian element $m = m^*$, and assume $A \in \mathcal{R}(\phi)'$ satisfies the intertwining relation
\begin{align}
\Omega^\dag \Omega A = A^\dag \Omega^\dag \Omega.
\end{align} 
Then,
\begin{enumerate}
\item $H$ satisfies the intertwining relation
\begin{align}
\Omega^\dag \phi(m) \Omega H = H^\dag \Omega^\dag \phi(m) \Omega. \label{pseudoHerm-Representation}
\end{align}
\\
\item
If $\Omega$ is injective, $m$ is positive and invertible, and $\phi$ is a unital representation, then $\phi(m)$ is bijective and positive, and $H$ is quasi-Hermitian.
\end{enumerate}
\end{theorem}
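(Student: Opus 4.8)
The plan is to show that the intertwining operator produced in the first part, namely $\eta := \Omega^\dag \phi(m) \Omega$, is a bounded positive-definite operator; combined with the intertwining relation \cref{pseudoHerm-Representation} already in hand, this verifies that $H$ meets the criteria of \cref{quasiHermDefn} and is therefore quasi-Hermitian. The argument splits into first establishing that $\phi(m)$ is bijective and strictly positive, and then propagating strict positivity through $\Omega$ using injectivity.

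First I would verify that $\phi(m)$ is bijective and positive. Since $m \geq 0$, it has a self-adjoint positive square root, and because $\phi$ preserves both the product and the involution, $\phi(m) = \phi(m^{1/2})^2 = \phi(m^{1/2})^\dag \phi(m^{1/2}) \geq 0$. Invertibility follows from the unitality of $\phi$: from $m m^{-1} = m^{-1} m = \mathbb{1}$ we obtain $\phi(m)\phi(m^{-1}) = \phi(m^{-1})\phi(m) = \phi(\mathbb{1}) = \mathbb{1}$, so $\phi(m)^{-1} = \phi(m^{-1})$ and $\phi(m)$ is bijective. (This is precisely the content of the lemma on unital ${}^*$-homomorphisms preserving invertibility and positivity.) A bounded positive operator that is invertible has spectrum contained in $[\delta, \infty)$ for some $\delta > 0$, hence $\phi(m) \geq \delta \mathbb{1}$; this strict lower bound is the key quantitative fact I will use.

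Next I would pass to $\eta = \Omega^\dag \phi(m) \Omega$. Boundedness is immediate, since $\Omega, \Omega^\dag \in \mathcal{B}(\mathcal{K})$ (they lie in the commutants $\phi_i(\mathfrak{A})'$) and $\phi(m) \in \mathcal{B}(\mathcal{K})$, so $\eta$ is a product of bounded operators. Self-adjointness follows from $\phi(m)^\dag = \phi(m)$. For positive-definiteness, given a nonzero vector $\psi \in \mathcal{K}$ I would compute $\braket{\psi|\eta\,\psi} = \braket{\Omega\psi|\phi(m)\,\Omega\psi} \geq \delta\, \|\Omega\psi\|^2$, and since $\Omega$ is injective we have $\Omega\psi \neq 0$, so this quantity is strictly positive; thus $\eta > 0$. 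Invoking \cref{pseudoHerm-Representation}, $\eta H = H^\dag \eta$, so $\eta$ is a bounded positive-definite metric for $H$, establishing quasi-Hermiticity.

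The main subtlety to handle carefully is the infinite-dimensional gap between ``injective'' and ``bounded below'': because $\Omega$ is only assumed injective, $\eta$ is positive-definite but need not itself be bounded below (invertible). I would emphasize that \cref{quasiHermDefn} demands only a bounded positive-definite metric, so injectivity of $\Omega$ is exactly enough; were one to additionally require the bounded-below metric needed for the quasi-Hermitian quantum-theory axioms, one would have to strengthen the hypothesis to $\Omega$ bounded below, at which point $\eta \geq \delta \, \|\Omega^{-1}\|^{-2} \, \mathbb{1}$ would follow.
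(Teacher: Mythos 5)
There is a genuine gap: your proof establishes only the second assertion and takes the first---the intertwining relation \cref{pseudoHerm-Representation}---as ``already in hand.'' But that relation is a conclusion of the theorem, not a hypothesis, and it is precisely where all of the structural hypotheses get used. One must show that $H - A = \sum_k \phi_k(h_k)$ is $\phi(m)$-self-adjoint, which requires the machinery of \cref{representationDecomposition}: writing $\phi(m) = \sum_i \phi_i(m)$ and using the orthogonality property $\phi_i(a)\phi_j(b) = \delta_{ij}\phi_i(ab)$ together with the hypothesis $m h_j = h_j^* m$ and the fact that each $\phi_i$ preserves the involution, one computes
\begin{align}
\phi(m)\sum_{j=1}^n \phi_j(h_j) = \sum_{j=1}^n \phi_j(m\, h_j) = \sum_{j=1}^n \phi_j(h_j)^\dag\, \phi_j(m) = \left(\sum_{j=1}^n \phi_j(h_j)\right)^{\dag} \phi(m).
\end{align}
The commutant hypotheses $\Omega, \Omega^\dag \in \phi_i(\mathfrak{A})'$ are then needed to slide $\phi(m)$ and $H-A$ past $\Omega$ and upgrade this to $\Omega^\dag\phi(m)\Omega$-self-adjointness, while the $A$ term is handled by combining $A \in \mathcal{R}(\phi)'$ (so $A$ commutes with $\phi(m)$) with the assumed relation $\Omega^\dag\Omega A = A^\dag \Omega^\dag\Omega$. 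None of this appears in your proposal; without it, $\eta := \Omega^\dag\phi(m)\Omega$ has not been shown to satisfy $\eta H = H^\dag \eta$, so your positivity argument has nothing to attach to. Even if your goal were only item 2, item 1 would still have to be proved first, since quasi-Hermiticity of $H$ means exhibiting a metric that satisfies the Dieudonn\'e relation of \cref{quasiHermDefn}.

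The portion you do prove is correct and in fact more careful than the paper's own treatment of item 2: the paper stops at ``$\phi(m)$ is bijective and positive'' after noting $\phi(m)^{-1} = \phi(m^{-1})$, whereas you verify strict positivity of $\eta$ itself via the spectral lower bound $\phi(m) \geq \delta\,\mathbb{1}$ combined with injectivity of $\Omega$, and you correctly isolate the infinite-dimensional distinction between positive-definite (which injectivity of $\Omega$ delivers, and which suffices for \cref{quasiHermDefn}) and bounded below (which would require $\Omega$ bounded below). That refinement is worth keeping---but only as the second half of a complete proof.
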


\begin{proof}
\leavevmode 
\begin{enumerate}
\item
Observe that 
\begin{align}
\Omega^\dag \phi(m) \Omega A = A^\dag \Omega^\dag \phi(m) \Omega
\end{align}
immediately follows from the fact that $A$ and $\Omega$ are in the commutant, $A \in \mathcal{R}(\phi)'_{sa}$. Thus, establishing the intertwining relation \cref{pseudoHerm-Representation} is equivalent to showing $H-A$ is $\Omega^\dag \phi(m) \Omega$-self-adjoint. Since $\Omega$ is in the commutants $\phi_i(\mathfrak{A})$ for all $i \in \{1, \dots, n\}$, this is equivalent to showing $H-A$ is $\phi(m)$-self-adjoint. This follows from the properties of $\phi$ discussed in \cref{representationDecomposition},
\begin{align}
\phi(m) \left(\sum_{j=1}^n \phi_j(h_j)\right) &= \left(\sum_{i=1}^n \phi_i(m)\right) \left(\sum_{j=1}^n \phi_j(h_j) \right)  \\
&= \sum_{i=1}^n \sum_{j=1}^n \delta_{ij} \phi_j(m h_j) \\
&= \sum_{i=1}^n \phi_i(h^*_i m) \\
&= \sum_{i=1}^n \phi_i(h_i)^\dag \phi_i(m) \\
&= \sum_{i=1}^n \sum_{j=1}^n \delta_{ij} \phi_i(h_i)^\dag \phi_j(m) \\
&= \left(\sum_{j=1}^n \phi_j(h_j)\right)^\dag \phi(m).
\end{align}

\item
If $\phi$ is unital, then the inverse of $\phi(m)$ is $\phi(m^{-1})$. Since $m^{-1}$ is bounded and unital representations are isometries, $\phi(m)$ is bounded and has bounded inverse, so $\phi(m)$ is bijective. Positivity of $\phi(m)$ follows from positivity of $m$ and the assumption that $\phi$ is unital. %{\color{red} Add references to lemmas}.
\end{enumerate}
\end{proof}

The following examples provide applications of \cref{representationTheoryTheorem} to models of physical interest, namely the models studied in \cite{Shi2022,Barnett2023,MyFirstPaper}. Both examples will assume $\Omega = \mathbb{1}$.

\begin{ex} \label{ex:Shi2022}
This example will build a $2n \times 2n$-dimensional pseudo-Hermitian matrix and intertwiner from a set of $2 \times 2$ matrices. Following \cref{2x2-Matrix-Algebra-Representation}, consider the representation of the algebra of $2 \times 2$ matrices on $\mathbb{C}^n \oplus \mathbb{C}^n$,
\begin{align}
\eta := \phi_{\mathbb{1}_n}: \begin{pmatrix}
a & b \\
c & d
\end{pmatrix} \to 
\begin{pmatrix}
a \mathbb{1}_n & b \mathbb{1}_n \\
c \mathbb{1}_n & d \mathbb{1}_n
\end{pmatrix},
\end{align}
where $\mathbb{1}_n$ denotes the identity matrix in $\mathbb{C}^n$. 
This representation can be written as a direct sum of irreducible representations over the invariant subspaces
\begin{align}
\mathcal{K}_{k \in \{1, \dots, n\}} := \gls{span}\{e_k, e_{n+k} \}.
\end{align}
In particular\footnote{An irreducible representation defined by $\eta_k$ is its restriction to the subspace $\gls{span}\{e_k, e_{n+k}\}$.},
\begin{align}
\eta_k \left(\begin{pmatrix}
a & b \\
c & d
\end{pmatrix} \right) \sum_{j = 1}^{2n} \alpha_j e_j := (a \alpha_k + b \alpha_{n+k}) e_k + (c \alpha_k + d \alpha_{n+k}) e_{n+k}.
\end{align}
The commutant of $\eta$ is the set of all maps of the form $B \oplus B$ where $B \in \mathfrak{M}_n(\mathbb{C}^d)$. Let 
\begin{align}
h_k := \kappa_k \begin{pmatrix}
0 & 1 \\
\gamma^2 & 0
\end{pmatrix} \label{Shi-Pre-Hamiltonian}
\end{align}
for $k \in \{1, \dots, d\}$, $\kappa_k \in \mathbb{R}$, $\gamma^2 \in \mathbb{R}$. All of these $h_k$ have a common intertwining operator,
\begin{align}
M = \begin{pmatrix}
\gamma^2 & 0 \\
0 & 1 
\end{pmatrix},
\end{align}
which is positive-definite whenever $\gamma \in \mathbb{R} \setminus \{0\}$. Thus, given a $d \times d$ Hermitain matrix, $J = J^\dag \in \mathfrak{M}_d(\mathbb{C})$, the matrix considered by \cite{Shi2022}
\begin{align}
H = J \oplus J + \sum_k \eta_k(h_k) \label{Shi-Hamiltonian}
\end{align}
is quasi-Hermitian for all $\gamma \in \mathbb{R} \setminus \{0\}$ with the metric operator
\begin{align}
\eta(M) = (\gamma^2 \mathbb{1}_n )\oplus (\mathbb{1}_n).
\end{align}
\demo
\end{ex}

\begin{ex} \label{2011Example}
In the same spirit as the previous example, this example constructs a $2n \times 2n$-dimensional pseudo-Hermitian operator given a set of $2 \times 2$ matrices. This time, consider the representation, $\eta := \phi_{\mathcal{P}}$, defined by the unitary exchange matrix, $\mathcal{P}_n \in \mathfrak{M}_n(\mathbb{C})$ with the elements
\begin{align}
\mathcal{P}_n e_i = e_{n-i+1}.
\end{align} 
The invariant subspaces of $\eta$ are 
\begin{align}
\mathcal{K}_{k \in \{1, \dots, n\}} := \gls{span} \{e_k, e_{2n-k+1}\},
\end{align}
and the corresponding maps $\eta_k$ are 
\begin{align}
\eta_k \left(\begin{pmatrix}
a & b \\
c & d
\end{pmatrix} \right) \sum_{j = 1}^{2n} \alpha_j e_j := (a \alpha_k + b \alpha_{2n-k+1}) e_k + (c \alpha_k + d \alpha_{2n-k+1}) e_{2n-k+1}.
\end{align}
Let $h_k$ be the most general traceless $2 \times 2$ pseudo-Hermitian matrix, as characterized by \cref{2x2PseudoHerm-Theorem} from the introduction chapter,
\begin{align}
h_k = \delta_{kn} \left(\vec{\alpha} \cdot \vec{\sigma} + \mathfrak{i} \vec{\beta} \cdot \vec{\sigma} \right),
\end{align}
where $\vec{\alpha}, \vec{\beta} \in \mathbb{R}^3$ satisfy $\vec{\alpha} \cdot \vec{\beta} = 0$, and $\vec{\sigma}$ is the Pauli vector. A one-parameter family of $2 \times 2$ intertwiners, $M(\zeta)$, associated to $h_k$ is
\begin{align}
M(\zeta) = \zeta \vec{\alpha} \cdot \vec{\sigma} + (\vec{\alpha} \cdot \vec{\alpha} )\mathbb{1} + (\vec{\beta} \times \vec{\alpha}) \cdot \vec{\sigma}.
\end{align}
$M(\zeta)$ is positive-definite if $(1-\zeta^2) (\vec{\alpha} \cdot \vec{\alpha}) > (\vec{\beta} \cdot \vec{\beta}))$. Then the $2n \times 2n$ matrix
\begin{align}
H = \eta_n(H_n) + J \oplus (P J P)
\end{align}
is pseudo-Hermitian with the intertwiner $\eta \circ M(\zeta)$ for all $J = J^\dag$. This intertwiner is positive-definite whenever $M(\zeta)$ is positive-definite. If $J$ is tridiagonal and $h_n = \mathfrak{i} \gamma \vec{\sigma}_3 + \vec{\sigma}_1$, then $H$ is the Hamiltonian studied in \cref{nearestNeighbour} and presented in the references \cite{MyFirstPaper,Barnett2023}, and $M$ is the metric determined in \cref{homomorphismMetric}.
\demo
\end{ex}

\section{A Commutative Case} \label{Commutative-Section}

The previous section addressed the question of finding pseudo-Hermitian operators using representation theory. A subset of the domain of unbroken antilinear symmetry is given by the quasi-Hermitian case, which is associated to a positive-definite metric operator. This section discusses scenarios where broken antilinear symmetry occurs. In particular, the central result of this section, \cref{PT-Breaking-Theorem-Representation-Theory}, determines a necessary condition for the operator $H$ of \cref{Representation-Theory-Setting} to have a purely imaginary spectrum.

In the interest of generality, theorem~\ref{PT-Breaking-Theorem-Representation-Theory} constructs a Hamiltonian using the reducible nature of a generic representation. A simpler result, given in the next lemma~\ref{lemma:imSpec}, with a far simpler proof, applies if this reducible nature is not utilized. 

\begin{lemma} \label{lemma:imSpec}
Suppose $\phi$ is a unital ${}^*$-homomorphism whose domain and co-domain are $C^*$-algebras. Let $h \in {\normalfont \text{Dom}}(\phi)$ and let $A$ be self-adjoint and in the commutant of the range of $\phi$, so $A \in \mathcal{R}(\phi)'$. Then
\begin{align}
\sigma(A + \phi(h)) \subseteq \left\{\epsilon + \lambda\,|\,(\epsilon,\lambda) \in \sigma(h) \times \sigma(A) \right\}.
\end{align}
\end{lemma}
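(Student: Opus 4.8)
The plan is to reduce the statement to two standard facts: that a unital ${}^*$-homomorphism does not enlarge the spectrum, and that the spectrum of the sum of two commuting elements of a unital Banach algebra is contained in the Minkowski sum of their individual spectra. Denote the codomain of $\phi$ by $\mathfrak{B}$, so that both $A$ and $\phi(h)$ are elements of $\mathfrak{B}$. First I would note that $A$ and $\phi(h)$ commute: $\phi(h) \in \mathcal{R}(\phi)$ by construction, while $A \in \mathcal{R}(\phi)'$ by hypothesis, so $A \phi(h) = \phi(h) A$.

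Next I would record the spectral inclusion $\sigma(\phi(h)) \subseteq \sigma(h)$. This is immediate from the unital lemma established just above: if $\epsilon \notin \sigma(h)$ then $h - \epsilon \mathbb{1}$ is invertible in $\mathfrak{A}$, hence $\phi(h) - \epsilon \mathbb{1} = \phi(h - \epsilon \mathbb{1})$ is invertible in $\mathfrak{B}$ with inverse $\phi\big((h - \epsilon \mathbb{1})^{-1}\big)$, so $\epsilon \notin \sigma(\phi(h))$.

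The core of the argument is the commuting-sum inclusion $\sigma(A + \phi(h)) \subseteq \sigma(A) + \sigma(\phi(h))$. To obtain it I would choose, via Zorn's lemma, a maximal commutative subalgebra $\mathcal{D} \subseteq \mathfrak{B}$ containing both $A$ and $\phi(h)$; maximality forces $\mathcal{D}$ to be norm-closed and unital, hence a commutative unital Banach algebra. The essential point is spectral permanence: for every $x \in \mathcal{D}$ one has $\sigma_{\mathcal{D}}(x) = \sigma_{\mathfrak{B}}(x)$, because if $x$ is invertible in $\mathfrak{B}$ then $x^{-1}$ commutes with everything that commutes with $x$, in particular with all of $\mathcal{D}$, so maximality places $x^{-1}$ inside $\mathcal{D}$. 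Applying the Gelfand representation of the commutative Banach algebra $\mathcal{D}$, every point of $\sigma_{\mathcal{D}}(A + \phi(h))$ has the form $\chi(A + \phi(h)) = \chi(A) + \chi(\phi(h))$ for some character $\chi$, with $\chi(A) \in \sigma_{\mathcal{D}}(A)$ and $\chi(\phi(h)) \in \sigma_{\mathcal{D}}(\phi(h))$; combined with spectral permanence this yields the claimed inclusion.

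Chaining the three facts gives $\sigma(A + \phi(h)) \subseteq \sigma(A) + \sigma(\phi(h)) \subseteq \sigma(A) + \sigma(h)$, which is precisely the asserted set $\{\epsilon + \lambda : (\epsilon, \lambda) \in \sigma(h) \times \sigma(A)\}$. The only delicate step, and the one I would take care over, is the spectral-permanence claim for the maximal commutative subalgebra, since it is what legitimizes computing all three spectra interchangeably in $\mathcal{D}$ and in $\mathfrak{B}$; note that this step needs only the Banach-algebra structure, so no appeal to normality or to the $C^*$ identity is required.
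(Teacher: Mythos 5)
Your proposal is correct and follows essentially the same route as the paper, which proves the lemma by citing exactly your two facts: that $A$ and $\phi(h)$ commute, and that $\sigma(\phi(h)) \subseteq \sigma(h)$ for unital ${}^*$-homomorphisms (via \cite[Thm.~4.1.8]{KadisonRingroseI}), with the commuting-sum inclusion $\sigma(A+\phi(h)) \subseteq \sigma(A) + \sigma(\phi(h))$ taken as standard. Your only addition is a self-contained and correct proof of that last inclusion via a maximal commutative subalgebra, spectral permanence, and the Gelfand representation, which the paper leaves implicit.
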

\begin{proof}
This follows from two facts: first, that $A$ and $\phi(h)$ commute; second, that $\sigma(\phi(h)) \subseteq \sigma(h)$ for all unital ${}^*$-homomorphisms \cite[Thm. 4.1.8]{KadisonRingroseI}.
\end{proof}

The above lemma~\ref{lemma:imSpec} can be applied to a previously introduced example, as summarized below.

\begin{ex}[continues=ex:Shi2022]
Consider the model introduced in \cite{Shi2022}. Assume $\kappa_k = \kappa \neq 0$ and $\gamma^2 < 0$. The spectrum of $h_k$ defined in \cref{Shi-Pre-Hamiltonian} is $\pm \gamma$, which in this case is purely imaginary. Thus, the imaginary part of every eigenvalue in spectrum of $H$ defined in \cref{Shi-Hamiltonian} is $\pm \gamma$.
\demo
\end{ex}

The following theorem characterizes the invariant subspaces, and consequently, the symmetries of a commutative case of $H$.
\begin{theorem} \label{Rep-Theory-Direct-Sum-Theorem}
Consider the setting of \cref{Representation-Theory-Setting}. Assume $\phi$ is unital with the domain $\mathfrak{A} = \mathcal{B}(\mathcal{H})$ and suppose the operators $h_k$ are simultaneously diagonalizable
\begin{align}
h_k = S \Lambda(k) S^{-1},
\end{align}
where $\Lambda(k)$ is diagonal the orthonormal basis $\ket{x_j} \in \mathcal{H}$, and $S, S^{-1} \in \mathcal{B}(\mathcal{H})$. 
Let $A = A^\dag \in \mathcal{R}(\phi)'$ be a bounded self-adjoint operator. Then for every $\ket{x_j}$, the spaces 
\begin{align}
R_j := \mathcal{R}(\phi(S \ket{x_j} \bra{x_j} S^{-1})) \subset \mathcal{K}
\end{align}
are invariant subspaces of $H$. 
Consequently, defining the restrictions $H_j = H|_{R_j}$, $H$ admits a factorization as a direct sum,
\begin{align}
H = \oplus_j H|_{R_j}.
\end{align}
\end{theorem}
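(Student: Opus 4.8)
The claim is that the spaces $R_j := \mathcal{R}(\phi(S \ket{x_j}\bra{x_j} S^{-1}))$ are invariant subspaces of $H = A + \sum_{k=1}^n \phi_k(h_k)$, and consequently $H$ decomposes as a direct sum of its restrictions $H|_{R_j}$. The plan is to verify invariance by checking that $H$ commutes with the projection-like operators $\phi(S\ket{x_j}\bra{x_j}S^{-1})$, handling the $A$-term and the $\sum_k \phi_k(h_k)$-term separately, and then to assemble the direct-sum factorization from the fact that the projectors $P_j := S\ket{x_j}\bra{x_j}S^{-1}$ resolve the identity.

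**The commutant argument for $A$.** First I would observe that since $\ket{x_j}$ is an orthonormal basis and $S$ is invertible, the operators $P_j = S\ket{x_j}\bra{x_j}S^{-1}$ satisfy $\sum_j P_j = S(\sum_j \ket{x_j}\bra{x_j})S^{-1} = \mathbb{1}$ and $P_j P_{j'} = \delta_{jj'} P_j$, so each $P_j$ is an idempotent with range a one-dimensional eigenline of each $h_k$. Applying the representation $\phi$, which preserves products, gives $\phi(P_j)\phi(P_{j'}) = \delta_{jj'}\phi(P_j)$, so the $\phi(P_j)$ are complementary idempotents and $R_j = \mathcal{R}(\phi(P_j))$. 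For the $A$-term, since $A \in \mathcal{R}(\phi)'$ by hypothesis and $\phi(P_j) \in \mathcal{R}(\phi)$, we have $A\,\phi(P_j) = \phi(P_j)\,A$; hence $A$ maps $R_j$ into $R_j$.

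**The representation term.** The main content is showing $\sum_k \phi_k(h_k)$ preserves $R_j$. Here I would use that $h_k = S\Lambda(k)S^{-1}$ is diagonal in the $\ket{x_j}$ basis, so $h_k P_j = \lambda_{k,j} P_j = P_j h_k$ where $\lambda_{k,j}$ is the $j$-th diagonal entry of $\Lambda(k)$; that is, each $P_j$ commutes with every $h_k$ in $\mathfrak{A}$. Now I must relate $\phi_k(h_k) = \phi(h_k)\mathscr{P}_{\mathcal{K}_k}$ to $\phi(P_j)$. By \cref{representationDecomposition}, $\mathscr{P}_{\mathcal{K}_k}$ lies in the commutant of $\mathcal{R}(\phi)$, so it commutes with $\phi(P_j)$; and since $\phi$ is multiplicative, $\phi(h_k)\phi(P_j) = \phi(h_k P_j) = \phi(P_j h_k) = \phi(P_j)\phi(h_k)$. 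Combining these two commutations yields $\phi_k(h_k)\phi(P_j) = \phi(P_j)\phi_k(h_k)$ for each $k$, whence the full sum commutes with $\phi(P_j)$ and therefore preserves $R_j$. Together with the $A$-term this establishes $H R_j \subseteq R_j$. The direct-sum factorization $H = \oplus_j H|_{R_j}$ then follows formally from $\sum_j \phi(P_j) = \phi(\mathbb{1}) = \mathbb{1}$ (using unitality of $\phi$) together with the orthogonality relations, exactly as in \cref{representationDecomposition}, so that $\mathcal{K} = \bigoplus_j R_j$ and $H$ respects this decomposition.

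**Anticipated obstacle.** The subtle point is that the $R_j$ need not be \emph{mutually orthogonal} subspaces, since $S$ is only invertible and not unitary, so $\phi(P_j)$ is an idempotent but generally not an orthogonal projection. Thus the ``direct sum'' in the conclusion is an algebraic direct-sum decomposition of $\mathcal{K}$ into complementary invariant subspaces rather than an orthogonal one, and the clean Pythagorean-norm machinery of \cref{representationDecomposition} does not apply verbatim. I expect the main care to be in stating the factorization correctly at this non-orthogonal level: I would lean on the idempotent relations $\phi(P_j)\phi(P_{j'}) = \delta_{jj'}\phi(P_j)$ and $\sum_j \phi(P_j) = \mathbb{1}$ to exhibit $\mathcal{K}$ as an internal direct sum of the $R_j$ and to confirm that $H$ leaves each summand invariant, giving the block-diagonal form $H = \oplus_j H|_{R_j}$ with respect to this (possibly skew) decomposition.
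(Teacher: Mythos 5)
Your proof is correct and takes essentially the same route as the paper's: invariance of $R_j$ under $A$ via the commutant hypothesis, invariance under $\sum_k \phi_k(h_k)$ via multiplicativity of $\phi$, the relation $h_k P_j = P_j h_k$ for $P_j = S\ket{x_j}\bra{x_j}S^{-1}$, and the properties of the $\phi_k$ from \cref{representationDecomposition}, with the direct sum following because the idempotents $\phi(P_j)$ are mutually annihilating and resolve the identity. Your anticipated obstacle is consistent with the paper's conventions, which define a ``projector'' as a mere idempotent (orthogonality requiring self-adjointness), so the conclusion is indeed the algebraic, possibly non-orthogonal internal direct sum you describe.
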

\begin{proof} 
Let $\ket{\psi} \in R_j$ be arbitrary. 
The operators $\phi(S \ket{x_j} \bra{x_j} S^{-1})$ are mutually commuting projectors.
Since $A$ commutes with each of these projectors by supposition,
\begin{align}
A \ket{\psi} &= A \phi(S \ket{x_j} \bra{x_j} S^{-1}) \ket{\psi} \\
&= \phi(S \ket{x_j} \bra{x_j} S^{-1}) A \ket{\psi} \in R_j.
\end{align}
Thus, we only need to show that $R_j$ is an invariant subspace of $\sum_{k = 1}^n \phi_k(h_k)$ to show that it is an invariant subspace of $H$. This follows from the properties of $\phi_k$ listed in \cref{representationDecomposition}. %{\color{red} Add details?}

That $H = \oplus_j H|_{R_j}$ follows from the facts that the spaces $R_j$ are linearly independent and span the Hilbert space, $\mathcal{H} = \oplus R_j$.
\end{proof}
\begin{corollary}
$\sum_j \alpha_j \phi(S \ket{x_j} \bra{x_j} S^{-1})$ is a linear symmetry of $H$.
\end{corollary}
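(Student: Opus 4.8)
The corollary follows almost immediately from \cref{Rep-Theory-Direct-Sum-Theorem} once the right observations are assembled. The plan is to exhibit the proposed operator $C := \sum_j \alpha_j \phi(S \ket{x_j}\bra{x_j} S^{-1})$ as a linear combination of operators that have already been shown to commute with $H$, and hence to conclude $[C, H]_- = 0$. The natural route is to recognize each summand $\phi(S \ket{x_j}\bra{x_j} S^{-1})$ as the orthogonal projection (or, more carefully, the idempotent) onto the invariant subspace $R_j$ appearing in the theorem, so that $C$ is diagonal with respect to the direct-sum decomposition $\mathcal{H} = \oplus_j R_j$.

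First I would record that $\ket{x_j}\bra{x_j}$ is a family of mutually orthogonal rank-one projections in $\mathcal{B}(\mathcal{H})$ summing to $\mathbb{1}$, so $S\ket{x_j}\bra{x_j}S^{-1}$ is a family of mutually commuting idempotents summing to $\mathbb{1}$. Applying the unital ${}^*$-homomorphism $\phi$ preserves products and the unit, so the $\phi(S\ket{x_j}\bra{x_j}S^{-1})$ remain mutually commuting idempotents with $\sum_j \phi(S\ket{x_j}\bra{x_j}S^{-1}) = \phi(\mathbb{1}) = \mathbb{1}$, and their ranges are exactly the $R_j$ of the theorem. Next I would invoke the central content of \cref{Rep-Theory-Direct-Sum-Theorem}: each $R_j$ is an invariant subspace of $H$, and $H$ factorizes as the direct sum $\oplus_j H|_{R_j}$. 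An operator that acts as a scalar $\alpha_j$ on each $R_j$ therefore commutes with any operator that is block-diagonal in the same decomposition.

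The key computation is then to verify $C H = H C$ block by block. Since $C = \sum_j \alpha_j \phi(S\ket{x_j}\bra{x_j}S^{-1})$ acts as multiplication by $\alpha_j$ on $R_j$, and $H$ maps $R_j$ into $R_j$, for any $\ket{\psi} \in R_j$ we have $C H \ket{\psi} = \alpha_j H \ket{\psi} = H (\alpha_j \ket{\psi}) = H C \ket{\psi}$, using $H\ket\psi \in R_j$ in the first equality. As the $R_j$ span $\mathcal{H}$, this establishes $[C,H]_- = 0$ on all of $\mathcal{H}$. Linearity of $C$ is manifest, completing the claim that $C$ is a linear symmetry of $H$ in the sense of \cref{symmetryDefn}.

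I do not anticipate a serious obstacle here, since the heavy lifting is done by \cref{Rep-Theory-Direct-Sum-Theorem}; the only point requiring mild care is the distinction between orthogonal projections and general idempotents. Because $S$ need not be unitary, $S\ket{x_j}\bra{x_j}S^{-1}$ is an idempotent but generally not self-adjoint, so $R_j$ is an invariant subspace rather than an orthogonal summand in the Hilbert-space inner product. This causes no difficulty for the commutation argument, which only uses that $H$ preserves each $R_j$ and that $C$ is scalar on each $R_j$; one simply works with the algebraic direct sum $\mathcal{H} = \oplus_j R_j$ guaranteed by the theorem rather than an orthogonal one.
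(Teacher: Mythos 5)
Your proof is correct and follows exactly the route the paper intends: the corollary is stated as an immediate consequence of \cref{Rep-Theory-Direct-Sum-Theorem}, and your argument—observing that the $\phi(S\ket{x_j}\bra{x_j}S^{-1})$ are mutually annihilating idempotents so that the operator acts as the scalar $\alpha_j$ on each invariant subspace $R_j$, then checking commutation with $H$ block by block over $\mathcal{H} = \oplus_j R_j$—is precisely the implicit one-line deduction spelled out in full. Your remark distinguishing non-self-adjoint idempotents from orthogonal projections (since $S$ need not be unitary) is an accurate and worthwhile point of care.
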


\begin{theorem} \label{PT-Breaking-Theorem-Representation-Theory}
Consider the setting introduced in \cref{Representation-Theory-Setting}.
Suppose $\phi$ is unital representation of $\mathfrak{A} = \mathcal{B}(\mathcal{H})$ on $\mathcal{K}$ such that $\mathcal{H}$ and $\mathcal{K}$ are finite-dimensional. Consider a sequence of simultaneously diagonalizable operators whose spectrum is a subset of the imaginary axis,
\begin{align}
h_k = S \Lambda(k) S^{-1},
\end{align}
such that $\sigma(\mathfrak{i} h_k) \in \mathbb{R}$ and $\Lambda(k)$ is diagonal in an orthonormal basis, $\ket{x_j}$, of $\mathcal{H}$. For every fixed $j$, assume either $\braket{x_j|\Lambda(k) | x_j} \geq 0$ for all $k \in \{1, \dots, n\}$, or $\braket{x_j|\Lambda(k) | x_j} \leq 0$ for all $k \in \{1, \dots, n\}$.
Let $A = A^\dag \in \mathcal{R}(\phi)'$. 
Let 
\begin{align}
\tilde{\Lambda}_j := \max_{k \in \{1, \dots, n\}} \braket{x_j|\Lambda(k) |x_j}.
\end{align}
Then, a set containing the spectrum of $H$ is given in \cref{Rep-Theory-Inclusion}. A consequence is that every real eigenvalue of $H : = A + \sum_{k=1}^n \phi_k(h_k)$
is simultaneously a real eigenvalue of $A$.
\end{theorem}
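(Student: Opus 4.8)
The plan is to reduce the problem to the invariant subspaces $R_j$ furnished by \cref{Rep-Theory-Direct-Sum-Theorem} and then run a numerical-range argument on each block. Since that theorem gives $H = \oplus_j H|_{R_j}$, the point spectrum satisfies $\sigma_p(H) = \bigcup_j \sigma_p(H|_{R_j})$, and every eigenvector of $H$ lies in a single summand $R_j$. Hence it suffices to show that if $\lambda \in \mathbb{R}$ is an eigenvalue of $H|_{R_j}$ with eigenvector $v \in R_j \setminus \{0\}$, then $A v = \lambda v$; since $A$ preserves $R_j$ and restricts there to $A|_{R_j}$, this exhibits $\lambda$ as a real eigenvalue of $A$.

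First I would compute the restriction of each term of $H$ to $R_j$. Writing $P_j = \phi(S\ket{x_j}\bra{x_j}S^{-1})$ and $\lambda_{jk} := \braket{x_j|\Lambda(k)|x_j}$, the identity $h_k\, S\ket{x_j}\bra{x_j}S^{-1} = \lambda_{jk}\, S\ket{x_j}\bra{x_j}S^{-1}$ yields $\phi(h_k) P_j = \lambda_{jk} P_j$, so $\phi(h_k)$ acts as the scalar $\lambda_{jk}$ on $R_j = \mathcal{R}(P_j)$. Because each $\mathscr{P}_{\mathcal{K}_k}$ commutes with $\phi(\mathfrak{A})$ (\cref{representationDecomposition}), it commutes with $P_j$ and hence preserves $R_j$; combining these two facts gives $\phi_k(h_k)|_{R_j} = \lambda_{jk}\,\mathscr{P}_{\mathcal{K}_k}|_{R_j}$ and therefore
\begin{align}
H|_{R_j} = A\big|_{R_j} + \sum_{k=1}^n \lambda_{jk}\,\mathscr{P}_{\mathcal{K}_k}\big|_{R_j}.
\end{align}
Since $\sigma(h_k) \subset \mathfrak{i}\mathbb{R}$, each $\lambda_{jk}$ is purely imaginary; writing $\lambda_{jk} = \mathfrak{i}\mu_{jk}$ with $\mu_{jk} \in \mathbb{R}$ gives $H|_{R_j} = A_j + \mathfrak{i}G_j$, where $A_j := A|_{R_j}$ and $G_j := \sum_k \mu_{jk}\,\mathscr{P}_{\mathcal{K}_k}|_{R_j}$. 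Both are self-adjoint on $R_j$, because $A$ and each $\mathscr{P}_{\mathcal{K}_k}$ are self-adjoint and leave $R_j$ invariant, so $R_j$ is reducing for them and the restrictions inherit self-adjointness.

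Next I would exploit the sign hypothesis. For fixed $j$ the numbers $\mu_{jk}$ all share one sign, so $G_j$ is a same-signed combination of orthogonal projections and is therefore semidefinite (positive if the $\mu_{jk}\geq 0$, negative if $\leq 0$). Now suppose $(A_j + \mathfrak{i}G_j)v = \lambda v$ with $\lambda \in \mathbb{R}$ and $v \neq 0$. Pairing with $v$ and using self-adjointness of $A_j$ and $G_j$,
\begin{align}
\lambda \braket{v|v} = \braket{v|A_j v} + \mathfrak{i}\braket{v|G_j v},
\end{align}
where $\braket{v|A_j v}$ and $\braket{v|G_j v}$ are real. Comparing imaginary parts forces $\braket{v|G_j v} = 0$, and semidefiniteness of $G_j$ then gives $G_j v = 0$ (since $\braket{v|G_j v}=0$ with $\pm G_j \geq 0$ implies $(\pm G_j)^{1/2} v = 0$). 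Consequently $H|_{R_j} v = A_j v = \lambda v$, so $v$ is an eigenvector of $A$ with the real eigenvalue $\lambda$, completing the reduction.

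The main obstacle is bookkeeping rather than depth: one must verify that restriction to the (possibly non-orthogonal) direct summands $R_j$ is legitimate, i.e. that $R_j$ is a reducing subspace for the self-adjoint operators $A$ and $\mathscr{P}_{\mathcal{K}_k}$ so that the restricted operators are genuinely self-adjoint and the pairing argument applies, and that the sign hypothesis on $\braket{x_j|\Lambda(k)|x_j}$ is exactly what delivers the semidefiniteness of $G_j$. Finite-dimensionality of $\mathcal{H}$ and $\mathcal{K}$ guarantees that the spectrum is pure point and that the $R_j$ span $\mathcal{K}$, so no closure or completeness issues intervene.
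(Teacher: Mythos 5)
Your argument for the real-eigenvalue consequence is correct, and it takes a genuinely different route from the paper. The paper first passes to the similar operator $H' = \phi(S)^{-1} H \phi(S)$ (to avoid a condition-number prefactor), compares each block $H'|_{R_j}$ to the auxiliary operator $\tilde{H}_j = A|_{R_j} + \tilde{\Lambda}_j \mathbb{1}_{R_j}$, bounds $\|H'|_{R_j} - \tilde{H}_j\| \leq |\tilde{\Lambda}_j|$ using the sign hypothesis, and invokes the Bauer--Fike \cref{BauerFike}; the real-eigenvalue statement then follows from the tangent-circle geometry of the resulting disks, which touch the real axis only at points of $\sigma(A)$. You instead observe that $\phi(h_k) P_j = \lambda_{jk} P_j$ makes each block exactly $H|_{R_j} = A_j + \mathfrak{i} G_j$ with $A_j$ self-adjoint and $G_j$ semidefinite (here the mutual orthogonality of the $\mathscr{P}_{\mathcal{K}_k}$ and the sign condition on the $\mu_{jk}$ do the work), and then a standard imaginary-part/numerical-range argument forces $G_j v = 0$ for any real-eigenvalue eigenvector. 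This avoids Bauer--Fike and the similarity transform entirely, and your checks that invariance of $R_j$ under the self-adjoint operators $A$ and $\mathscr{P}_{\mathcal{K}_k}$ suffices for self-adjointness of the restrictions (no orthogonality of $P_j$ needed) are exactly right. One small slip: it is false in general that every eigenvector of $H$ lies in a single summand $R_j$ (eigenvectors for an eigenvalue shared between blocks need not), but you only need that every eigenvalue of $H$ admits \emph{some} eigenvector in \emph{some} $R_j$, which the block decomposition does give, so this does not damage the proof.

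There is, however, a genuine omission relative to the full statement: the theorem asserts two conclusions, the first being that $\sigma(H)$ is contained in the explicit union of disks of \cref{Rep-Theory-Inclusion}, with the real-eigenvalue statement phrased only as "a consequence." Your proof establishes the consequence directly but never produces the inclusion set. Nor does your structure yield it for free: from $H|_{R_j} = A_j + \mathfrak{i} G_j$ one can extract, via the numerical range, only a horizontal-strip bound $\text{Im}\,\sigma(H|_{R_j}) \subseteq [0, \tilde{\mu}_j]$ (or its reflection), which does not localize the real parts of the non-real eigenvalues near $\sigma(A)$ the way the Bauer--Fike disks centred at $\lambda_A + \tilde{\Lambda}_j$ with radius $|\tilde{\Lambda}_j|$ do. To prove the statement as written you would still need the paper's perturbative step: compare $H'|_{R_j}$ with $\tilde{H}_j$, bound the difference in norm by $|\tilde{\Lambda}_j|$ using the sign hypothesis, and apply \cref{BauerFike}. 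With that supplement your argument would in fact be a strengthening, since it proves the consequence without routing it through the disk geometry.
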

\begin{proof}
The strategy used to prove this theorem is to treat $H$ as a perturbation of a simpler operator, $\tilde{H}$, that has a closed-form expression for the imaginary part of its eigenvalues. The Bauer-Fike \cref{BauerFike} \cite{Bauer1960} will then be used to approximate the spectrum of $H$ to be "near" the spectrum of $\tilde{H}$. Before doing this, I will express $H$ as a direct sum of operators on subspaces of $\mathcal{K}$.

The bound on eigenvalues given by the Bauer-Fike theorem contains a factor of the condition number of a diagonalizing similarity transform corresponding to the unperturbed operator; to avoid this prefactor, it is wise to work the similar operator 
\begin{align}
H' := \phi(S)^{-1} H \phi(S),
\end{align}
where $\phi(S)^{-1} = \phi(S^{-1})$ exists since $\phi$ is a unital representation. Since $A$ is in the commutant of the range of $\phi$, and due to the properties of $\phi_k$ listed in \cref{representationDecomposition}, another expression for $H'$ is 
\begin{align}
H' &= A + \sum_{k = 1}^n \phi_k(S^{-1} h_k S) \\
&= A + \sum_{k = 1}^n \phi_k(\Lambda(k)).
\end{align}
By \cref{Rep-Theory-Direct-Sum-Theorem}, we can write
\begin{align}
H' = \oplus H'|_{R_j},
\end{align}
where 
\begin{align}
R_j := \mathcal{R}(\phi(\ket{x_j} \bra{x_j})).
\end{align}

Define the operators $\tilde{H}_j:R_j \to R_j$ by
\begin{align}
\tilde{H}_j = A|_{R_j} + \tilde{\Lambda}_j \mathbb{1}_{R_j}.
\end{align}
Since $A$ is self-adjoint, and the real part of every matrix element of $\Lambda(k)$ equals zero, every eigenvalue of $\tilde{H}_j$ has real and imaginary parts of the form
\begin{align}
\lambda \in \sigma(\tilde{H}_j) \, &\Rightarrow \, \text{Re}(\lambda) \in \sigma(A) \\
\lambda \in \sigma(\tilde{H}_j) \, &\Rightarrow \, \text{Im}(\lambda) = \text{Im} \left( \tilde{\Lambda}_j \right). \label{HtildeSpec}
\end{align}

Consider the decomposition
\begin{align}
H'|_{R_j} = \tilde{H}_j + (H'|_{R_j} - \tilde{H}_j).
\end{align}
An upper bound for the norm of the second term may be deduced using the product inequality of the operator norm, given in \cref{productInequality}, the properties of $\phi_k$ given in \cref{representationDecomposition}, the fact that $\phi$ is a contraction\footnote{See \cref{homomorphism-Contraction}.}, and the fact that the operator norm is axis-oriented in an orthonormal basis. Explicitly,
\begin{align}
\norm{H'|_{R_j} - \tilde{H}_j} &= \norm{\sum_{k = 1}^n \phi_k(\Lambda(k) - \tilde{\Lambda}_j \mathbb{1})|_{R_j}} \\
&= \norm{\sum_{k = 1}^n \phi_k(\Lambda(k) - \tilde{\Lambda}_j \mathbb{1}) \phi(\ket{x_j} \bra{x_j})} \\
&\leq \norm{\sum_{k = 1}^n \phi_k(\Lambda(k)\ket{x_j} \bra{x_j} - \tilde{\Lambda}_j\ket{x_j} \bra{x_j})} \\
&\leq \sup_{k \in \{1, \dots, n\}} ||\phi_k(\Lambda(k)\ket{x_j} \bra{x_j} - \tilde{\Lambda}_j\ket{x_j} \bra{x_j})|| \\
&\leq \sup_{k \in \{1, \dots, n\}} ||\phi(\Lambda(k)\ket{x_j} \bra{x_j} - \tilde{\Lambda}_j\ket{x_j} \bra{x_j}) \mathscr{P}_{\mathcal{K}_k}|| \\
&\leq \sup_{k \in \{1, \dots, n\}} ||\Lambda(k)\ket{x_j} \bra{x_j} - \tilde{\Lambda}_j\ket{x_j} \bra{x_j}|| \\
&\leq |\tilde{\Lambda}_j|,
\end{align}
where the last inequality follows from the sign condition which was imposed on $\Lambda$.

%JAKE-RE-READ THIS...

 Using this norm inequality in the Bauer-Fike \cref{BauerFike} along with the spectral inclusion result for $\tilde{H}_j$ given in \cref{HtildeSpec}, we find a set which contains all the eigenvalues of $H'$,
\begin{align}
\sigma(H') = \sigma(H) \subseteq \bigcup_j \bigcup_{\lambda_A \in \sigma(A)} \{z \in \mathbb{C} \,|\,|z - \lambda_A - \tilde{\Lambda}_j|  \leq |\tilde{\Lambda}_j| \} \label{Rep-Theory-Inclusion}.
\end{align}
That every real eigenvalue of $H$ is contained in the spectrum of $A$ is a consequence of the circular geometry of the inclusion set given above.
\end{proof}

I will apply the previous theorem to specific models in \cref{General-Maximal-Breaking-Section}. I will conclude this section by analysing an example where \cref{PT-Breaking-Theorem-Representation-Theory} does not hold. This example emphasizes necessity of the assumption of commutativity of $h_k$ in \cref{PT-Breaking-Theorem-Representation-Theory}. 

\begin{ex} \label{non-commuting-representation-example}
Consider a special case, $\eta:\mathfrak{M}_2(\mathbb{C}) \to \mathfrak{M}_4(\mathbb{C})$, of the homomorphism defined in \cref{2x2-Matrix-Algebra-Representation},
\begin{align}
\eta\left(\begin{pmatrix}
a & b \\
c & d
\end{pmatrix} \right) &= \begin{pmatrix}
a & 0 & 0 & b \\
0 & a & b & 0 \\
0 & c & d & 0 \\
c & 0 & 0 & d
\end{pmatrix}.
\end{align}
This representation is reducible. Following \cref{representationDecomposition}, it admits a decomposition
\begin{align}
\eta &= \eta_1 + \eta_2, \\
\eta_1\left(\begin{pmatrix}
a & b \\
c & d
\end{pmatrix} \right)&:=  \begin{pmatrix}
a & 0 & 0 & b \\
0 & 0 & 0 & 0 \\
0 & 0 & 0 & 0 \\
c & 0 & 0 & d
\end{pmatrix}, &\quad&
\eta_2\left(\begin{pmatrix}
a & b \\
c & d
\end{pmatrix} \right):=  \begin{pmatrix}
0 & 0 & 0 & 0 \\
0 & a & b & 0 \\
0 & c & d & 0 \\
0 & 0 & 0 & 0
\end{pmatrix}.
\end{align}
This representation will generate a $4 \times 4$ pseudo-Hermitian Hamiltonian from a pair of $2 \times 2$ pseudo-Hermitian Hamiltonians. For the $2 \times 2$ Hermitian intertwining operator, I will choose
\begin{align}
m &= \begin{pmatrix}
1 & -\mathfrak{i} \gamma \\
\mathfrak{i} \gamma & 1
\end{pmatrix},
\end{align}
where $\gamma \in \mathbb{R}$ is a parameter.
The following two $2 \times 2$ matrices\footnote{$h_1$ is the qubit Hamiltonian exhibited in the introduction chapter in \cref{PTqubit}.} are pseudo-Hermitian
\begin{align}
h_1 = \begin{pmatrix}
0 & 1 \\
1 & 0
\end{pmatrix} m = \begin{pmatrix}
\mathfrak{i} \gamma & 1 \\
1 & -\mathfrak{i} \gamma
\end{pmatrix} &\quad& h_2 = \begin{pmatrix}
0 & -\mathfrak{i} \\
\mathfrak{i} & 0
\end{pmatrix} m = \begin{pmatrix}
\gamma & -\mathfrak{i} \\
\mathfrak{i} & \gamma
\end{pmatrix},
\end{align}
since they are a Hermitian operator composed with the intertwiner \cite{Carlson1965}. Their spectrum is readily deduced,
\begin{align}
\sigma(h_1) = \sigma(h_2) = \left\{-\sqrt{1 - \gamma^2}, \sqrt{1 + \gamma^2} \right\}.
\end{align}
One Hermitian element in the commutant of the range of $\eta$ is  
\begin{align}
A = \begin{pmatrix}
0 & t & 0 & 0 \\
t & 0 & 0 & 0 \\
0 & 0 & 0 & t \\
0 & 0 & t & 0
\end{pmatrix} \in \mathcal{R}(\eta)',
\end{align}
where $t \in \mathbb{R}$ is a parameter.
Thus, by \cref{representationTheoryTheorem}, the Hamiltonian given in \cref{Representation-Theory-Setting},
\begin{align}
H &= A + \eta_1(h_1) + \eta_2(h_2)  \\
  &= \begin{pmatrix}
  i \gamma & t & 0 & 1 \\
  t & \gamma & -\mathfrak{i} & 0 \\
  0 & \mathfrak{i} & \gamma & t \\
  1 & 0 & t & -\mathfrak{i} \gamma
  \end{pmatrix}, \label{4x4Hex}
\end{align}
is pseudo-Hermitian with the intertwiner $\eta(m)$. Since $m$ is positive-definite if and only if $|\gamma| < 1$, $H$ is quasi-Hermitian and has a real spectrum for $|\gamma| < 1$. 

Despite the fact that $h_1$ and $h_2$ have a purely imaginary spectrum for $|\gamma| > 1$, $H$ can still have a real spectrum!

\begin{figure}[!ht]
\centering
\includegraphics[width = \textwidth]{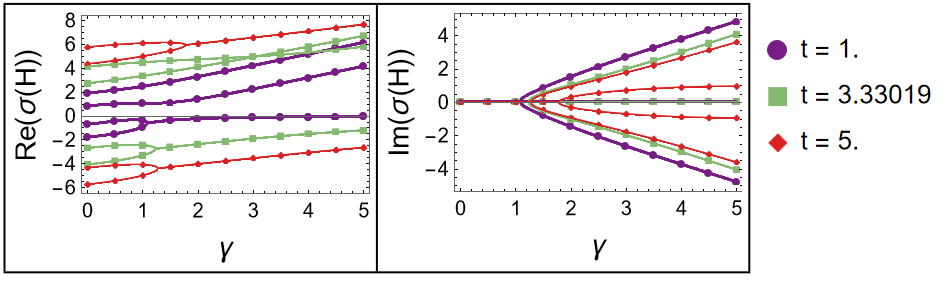}
\caption{Real and imaginary parts of the spectrum of $H$ given by \cref{4x4Hex}. Observe that the spectrum is real for all $|\gamma| < 1$, as predicted by \cref{representationTheoryTheorem}.}
\end{figure}

Figure~\ref{fig:EPContourTactics} will display the exceptional contour of $H$. This exceptional contour can be determined in closed-form by computing the discriminant of the characteristic polynomial; the resulting algebraic curve is
\begin{align}
0 &= 4 t^6 \gamma^2 (6 - 5\gamma^2) + 4 t^8 (\gamma^2-2)^2 - 4 t^2 \gamma^2 (\gamma^2 - 4) (\gamma^2 - 1)^2 \nonumber \\
  &- 4 \gamma^4 (\gamma^2 - 1)^3 + t^4 (16 - 40 \gamma^2 + 45 \gamma^4 - 22 \gamma^6 + \gamma^8). \label{Tactics-EP-Contour}
\end{align}

\begin{figure}[!ht]
\centering
\includegraphics[width = .5 \textwidth]{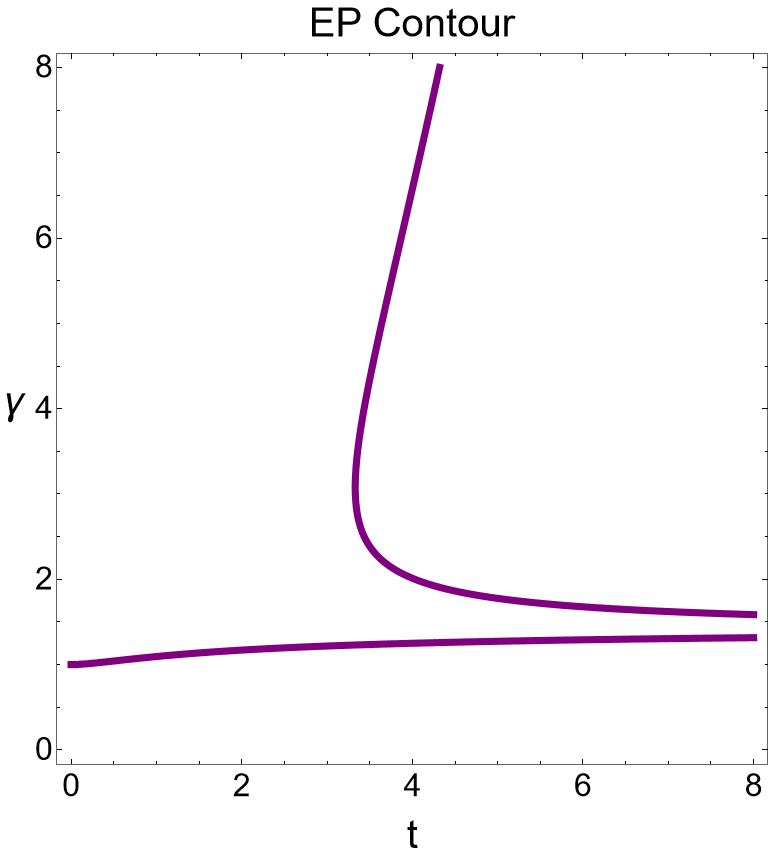}
\caption{Exceptional contour of $H$. This is the zero set of \cref{Tactics-EP-Contour}. Note the exceptional point at $(t, \gamma) = \left(\sqrt{(11 + 5 \sqrt{5})/2},\sqrt{5 + 2 \sqrt{5}}\right)$ has a vertical tangent and the inclusion of the limit $(t,\gamma) = (0,1)$.}
\label{fig:EPContourTactics}
\end{figure}
\demo
\end{ex}

\subsection{Maximal Symmetry Breaking of Even Matrices} \label{General-Maximal-Breaking-Section}

This section applies \ref{PT-Breaking-Theorem-Representation-Theory} from the last section to some matrices with an even number of rows and columns. 

\begin{ex}[continues=2011Example]
This example satisfies all the assumptions required for \cref{PT-Breaking-Theorem-Representation-Theory} to hold. Thus, in the regime of broken antilinear symmetry, where $\text{disc}_x(\text{det}(x \mathbb{1}_2 - h_n)) < 0$, the only real eigenvalues of $H$ are those which are independent of the perturbation $h_n$. 

An alternative way to derive this is directly from the characteristic polynomial. Let $J_{n-1} \in \mathfrak{M}_{n-1}(\mathbb{C})$ denote the submatrix formed by removing the last row and column, which corresponds to the index $n$, from $J$. Then, the characteristic polynomial of $H$ is 
\begin{align}
\text{det}(\lambda \mathbb{1}_{2n} - H) &= -\text{disc}_x(\text{det}(x \mathbb{1}_2 - h_n)) \left(\text{det}(\lambda \mathbb{1}_{n-1} - J_{n-1}) \right)^2 \nonumber \\
&+\left(\text{tr}(h)/2 \, \text{det} (\lambda \mathbb{1}_{n-1} - H_{n-1}) + \text{det}(\lambda \mathbb{1}_{n-1} - J_{n}) \right)^2.
\end{align}
%Review this sentence.
When the eigenvalues of $h_n$ are purely imaginary, or equivalently, when $\text{disc}_x(\text{det}(x \mathbb{1}_2 - h_n)) < 0$, $\lambda$ is a real eigenvalue of $H$ if and only $\lambda$ is eigenvalue of both $J$ and $J_{n-1}$. Alternatively, $\lambda$ is a real eigenvalue of $H$ only if $\lambda$ is an eigenvalue of $J \oplus P J P$, as predicted by \cref{PT-Breaking-Theorem-Representation-Theory}. Equivalently, $\lambda$ is a real eigenvalue if and only if it remains in the spectrum for every choice of $h_n$. 

Suppose $J$ has an eigenvalue, $\lambda \in \sigma(J)$, with algebraic multiplicity greater than one. By the Cauchy interlacing \cref{thm:CauchyInterlacing} \cite{Hwang2004}, $\lambda$ must also be in $\sigma(J_{n-1})$, so $\lambda$ is an eigenvalue which is insensitive to changes in $h_n$.
\demo
\end{ex}

The remainder of this section summarizes two cases where the constant eigenvalues of the Hamiltonian from \cref{2011Example} can be determined analytically.

\begin{ex}[continues=2011Example]
Let $J$ be an irreducible tridiagonal matrix. By \cref{cor:CauchyTridiag}, $\sigma(J) \cap \sigma(J_{n-1}) = \emptyset$, so there are no constant eigenvalues.
\demo
\end{ex}

The next example considers circulant matrices, whose spectrum has a simple expression.
\begin{defn}
A \textit{circulant} matrix, $A \in \mathbb{C}^{n \times n}$, is one of the form
\begin{align}
A = \begin{pmatrix}
c_0 & c_{n-1} & \dots & c_2 & c_1 \\
c_1 & c_0 & c_{n-1} &  & c_1 \\
\vdots & c_1 & c_0 & \ddots & \vdots \\
c_{n-2} & & \ddots & \ddots & c_{n-1} \\
c_{n-1} & c_{n-2} & \dots & c_1 & c_0
\end{pmatrix}. \label{eq:Circulant-Defn}
\end{align}
To give a definition which is perhaps more explicit, consider the congruence relation $i \sim j \, \Leftrightarrow (i-j)/n \in \mathbb{Z}$, and let $[i]_n \in \mathbb{Z}/n \mathbb{Z}$ denote the set of integers in the equivalence class of $i \in \mathbb{Z}$. Then a matrix is circulant if and only if it can be expressed in the form
\begin{align}
A_{ij} = c_{[i-j]_n}.
\end{align}
\end{defn}

As a tangent, the Cauchy interlacing theorem for circulant matrices is a consequence of the fact that the derivatives of a function interlace the function. To realize this, understand the relationship between the derivative of the characteristic polynomial of a circulant matrix and its principle minors \cite[Thm. 2.1]{Kushel2016}.
\begin{theorem} \label{circulantDerivative}
Given a circulant matrix, $A \in \mathfrak{M}_n(\mathbb{C})$, let $A_{n-1} \in \mathfrak{M}_{n-1}(\mathbb{C})$ be a matrix formed by deleting a row and column sharing the same index. Then, letting $p_A$ denote the characteristic polynomial of $A$, we have 
\begin{align}
n p_{A_{n-1}}(x) = \frac{d}{dx} p_A(x).
\end{align} 
\end{theorem}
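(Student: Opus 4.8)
The plan is to combine two facts: a general differentiation formula for characteristic polynomials, and the cyclic symmetry of circulant matrices. First I would establish the classical identity that, for any matrix $M \in \mathfrak{M}_n(\mathbb{C})$, the derivative of its characteristic polynomial is the sum of the characteristic polynomials of its principal $(n-1)\times(n-1)$ submatrices:
\begin{align}
\frac{d}{dx} p_M(x) = \sum_{i=1}^n p_{M[\hat{i}]}(x),
\end{align}
where $M[\hat{i}]$ denotes $M$ with its $i$-th row and column removed. This follows by applying the chain rule to $p_M(x) = \det(x\mathbb{1}_n - M)$: since $\partial(x\mathbb{1}_n - M)_{ab}/\partial x = \delta_{ab}$, the derivative equals the sum of the diagonal cofactors of $x\mathbb{1}_n - M$, and the $(i,i)$ cofactor is exactly $\det((x\mathbb{1}_n - M)[\hat{i}]) = p_{M[\hat{i}]}(x)$.

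Second, I would show that when $A$ is circulant every principal submatrix $A[\hat{i}]$ has the same characteristic polynomial, so that the sum above collapses to $n$ copies of a single polynomial. The key input is that a circulant matrix commutes with the cyclic shift permutation $S$ defined by $S e_i = e_{i+1 \bmod n}$; a direct check using $A_{ij} = c_{[i-j]_n}$ gives $(SA)_{ij} = A_{i-1,j} = c_{[i-1-j]_n} = A_{i,j+1} = (AS)_{ij}$, hence $SAS^{-1} = A$. Writing $\sigma$ for the underlying cyclic permutation, the invariance $A_{ab} = A_{\sigma^{-1}(a),\sigma^{-1}(b)}$ identifies the submatrix $A[\hat{i}]$ with $A[\widehat{i-1}]$ after relabelling the retained index set $\{1,\dots,n\}\setminus\{i\}$ by $\sigma^{-1}$ onto $\{1,\dots,n\}\setminus\{i-1\}$. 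Since this relabelling is a simultaneous permutation of rows and columns — that is, conjugation by a permutation matrix on $\mathbb{C}^{n-1}$ — it preserves the characteristic polynomial, giving $p_{A[\hat{i}]} = p_{A[\widehat{i-1}]}$. Iterating shows all $n$ principal submatrices share the common characteristic polynomial $p_{A_{n-1}}(x)$.

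Combining the two steps yields $p_A'(x) = \sum_{i=1}^n p_{A[\hat{i}]}(x) = n\, p_{A_{n-1}}(x)$, which is the claim. I expect the main obstacle to be the bookkeeping in the second step: one must verify carefully that the shift symmetry really maps the submatrix deleting index $i$ to the submatrix deleting index $i-1$ up to a \emph{consistent} row-and-column relabelling, since only simultaneous relabellings are similarities that preserve the spectrum. An alternative route through the eigenvalue factorization $p_A(x) = \prod_j (x - \lambda_j)$ with $\lambda_j = \sum_k c_k \omega^{-jk}$ is tempting, but it would require independently identifying the spectrum of each $A[\hat{i}]$, which is less direct than the cofactor-plus-symmetry argument, so I would not pursue it.
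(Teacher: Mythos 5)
Your proof is correct, and both steps hold up under scrutiny: the identity $\frac{d}{dx}\det(x\mathbb{1}-M)=\operatorname{Tr}\bigl(\operatorname{adj}(x\mathbb{1}-M)\bigr)=\sum_{i}p_{M[\hat{i}]}(x)$ is the standard Jacobi-type formula, and your use of $SAS^{-1}=A$ to identify $A[\hat{i}]$ with $A[\widehat{i-1}]$ is sound — in particular, you were right to flag that the induced relabelling of the retained indices is generally \emph{not} order-preserving (e.g.\ $1\mapsto n$ under $\sigma^{-1}$), but since it acts simultaneously on rows and columns it is conjugation by a permutation matrix on $\mathbb{C}^{n-1}$ and hence preserves the characteristic polynomial. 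Note that the paper itself supplies no proof of this statement: it is quoted with a citation to Theorem 2.1 of the Kushel--Tyaglov reference, so there is no in-text argument to compare against. Your cofactor-plus-symmetry route is a clean, self-contained verification, and as a bonus your second step establishes something the theorem statement implicitly assumes — that $p_{A_{n-1}}$ is independent of which index is deleted, so the notation $A_{n-1}$ is well defined. Your instinct to avoid the eigenvalue-factorization route was also sensible: identifying the spectrum of each principal submatrix directly is harder than the symmetry argument, since the submatrices of a circulant are not themselves circulant.
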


\begin{ex}[continues=2011Example]
Let $J$ be a circulant matrix with matrix elements given in \cref{eq:Circulant-Defn}. The eigenvalue problem for circulant matrices has a closed-form solution \cite{got1911demonstration}. Explicitly, let 
\begin{align}
\omega_{n,j} := e^{2 \mathfrak{i} j \pi/n} &\quad& j \in \{0, \dots, n-1\}
\end{align}
denote the set of $n$-th roots of unity. Then $\psi_j(k) = \omega^j_{n,k}$ is an eigenvector of $A$, with the eigenvalue 
\begin{align}
\lambda_k = c_0 + \sum_{j = 1}^{n-1} c_{n-j} \omega_{n,k}^{j}.
\end{align}
By Hermiticity of $J$, it follows that the eigenvalues of $J$ associated to $j \in \left\{1, \dots, \floor{\frac{n-1}{2}} \right\}$ always have an algebraic multiplicity greater than one and, thus, correspond to constant eigenvalues of $H$. The eigenvalue associated to $j = 0$ has algebraic multiplicity equal to one and, therefore, corresponds to an imaginary eigenvalue of $H$ when the spectrum of $h_n$ is complex.
\demo
\end{ex}

%PT non-Heritian, couple it to its time reverse, add Hermitian coupling between them.

\chapter{Local Observables} \label{LocalityChapter}

%------------------------------------------%
Can quasi-Hermitian frameworks for isolated quantum systems predict physical phenomena outside the scope of traditional, Hermitian models?

Every prediction made by a quasi-Hermitian quantum theory on a finite-dimensional Hilbert space can be made in the context of a Hermitian quantum theory. This equivalence is derived by either performing a similarity transform which maps the algebra of observables to an algebra of Hermitian observables, as in \cref{Williams1969Corollary} of \cite{williams1969operators}, or by re-defining the physical inner product to be one defined by a positive-definite metric operator, as in \cref{etaSpace} and \cite{QuasiHerm92}.

However, an open question is whether \textit{local} quasi-Hermitian quantum theory contains physics beyond the capabilities of local Hermitian quantum theory. The previously mentioned methods for converting quasi-Hermitian quantum theory into Hermitian quantum theory alter notions of locality. For instance, the similarity transform defined by the square root of the metric operator is a nonlocal map, and generically maps local quasi-Hermitian Hamiltonians into nonlocal Hermitian Hamiltonians \cite{Korff2008}. Additionally, the metric operator can be \textit{entangled}. Thus, quasi-Hermitian inner product structures can generalize traditionally utilized algebraic structures associated with a notion of locality, such as an inner product generated by a tensor product factorization.
%ADD REFERENCES TO KORFF RED

This chapter analyses local observable algebras in quasi-Hermitian quantum theories. My findings are summarized below:

\begin{itemize}
\item Local observable algebras are constructed in both the tensor product model of locality and the fermionic model of locality defined in \cite{BravyiKitaev}. The generic feature of non-Hermitian observable algebras is that they have smaller dimensions than their Hermitian counterparts and do not contain nontrivial observables for some spatial subsystems. In particular, the fermionic analysis is applied to two of the toy models introduced earlier, in \cref{toyModelsChapter}. This analysis is presented in \cite{Barnett_2021}.

\item I demonstrate how the expectation values of local quasi-Hermitian observables can be equivalently computed by applying a modified state to local Hermitian observables in \cref{Local-Equivalence-Section}. Thus, the Tsirelson bound \cite{Cirelson1980} for Bell's inequality violations of quantum theory holds for quasi-Hermitian theories as well. Nonlocal games generalize the types of correlations studied by Bell \cite{nonlocalGames} and are reviewed for Hermitian quantum theory \cref{Nonlocal-Games-Review}. A more general consequence of the equivalence of local expectation values between quasi-Hermitian and Hermitian quantum theories is that the value of a quasi-Hermitian strategy for a nonlocal game is equivalent to the value of a Hermitian strategy of a nonlocal game. I emphasize that the Hermitian state is not similar to a quasi-Hermitian state, and thus, the expectation values of nonlocal operators cannot be equivalently computed with this state.
\end{itemize}

\section{Introduction to Locality in Quantum Theory} \label{localityIntroSection}

%Is \Lambda common notation?

Portrayals of reality must assign meaning to the concept of space. The Dirac-von Neumann axioms of quantum theory, as presented in \cref{DiracVonNeumann} and \cite{dirac1981principles,vonNeumannBook,vonNeumannBookEnglish}, are defined in general terms that do not directly refer to space. In this section, I seek to codify some implementations of the concept of space in quantum theory. 

Locality principles in physics argue that for sufficiently short time-scales, the behaviour of objects is dictated their immediate surroundings. In many-body systems, objects and their surroundings can be interpreted as subsystems of the total system. Thus, implementing locality principles in mathematical models of a system can be considered a special case of the problem of decomposing this system into subsystems.

A \textit{lattice} introduces natural algebraic relationships between subsystems. Intuition demands clarifying the meaning of the operations of intersecting and merging two subsystems. Mathematically, these operations are realized as the \textit{meet} and \textit{join} operations on a lattice, as defined below.
\begin{defn}
A \textit{lattice} is a set, $L$, together with two binary commutative operations, $\wedge$ and $\vee$, referred to as the \textit{meet} and \textit{join}, respectively. These operations satisfy \textit{absorption laws},
\begin{align}
a \vee (a \wedge b) = a &\quad& \forall a,b \in L, \\
a \wedge (a \vee b) = a &\quad& \forall a,b \in L.
\end{align}
\end{defn}
Note that lattices in this sense need not be discrete.

A partial order can be defined on every lattice by
\begin{align}
a \leq b  \text{ if } a = a \wedge b.
\end{align}
Intuitively, the partial ordering of subsystems characterizes whether one subsystem can be contained in another; if two subsystems satisfy $S \leq T$, then $S$ is contained in $T$. Since the entire system contains every subsystem, lattices associated to physical systems must have a \textit{greatest element}, denoted by $\Lambda$, that corresponds to the entire system. Furthermore, we will assume the existence of a \textit{least element}, $0_L$, that corresponds to the empty subsystem. Explicitly, these elements satisfy
\begin{align}
0_L \leq S \leq \Lambda &\quad& \forall S \in L.
\end{align}
and the lattice is assumed to be \textit{bounded}.

Observables are measurable quantities which characterize a system. Often, an observable is best interpreted as being a property of a subsystem as opposed to the entire system. For example, when modelling the universe, the observable which corresponds to "the sum of the momentum of all your house cats" is best thought of as being a property of the subsystem which is "your home". When an observable pertains to a subsystem, this observable is referred to as \textit{local} to that subsystem. A measurement of an observable local to the subsystem $S$ is referred to as a \textit{local measurement} in $S$.

In the algebraic formulation of quantum theory, observables are elements of a $C^*$-algebra\footnote{$C^*$\textit{-algebras} are defined in \cref{c*Defn}. In practice, it is sufficient to think of a $C^*$-algebra as a collection of bounded operators on a Hilbert space.}. The $C^*$-algebra which is generated\footnote{See \cref{generated-Algebra-defn}.} by the observables which are local to a subsystem, $S$, is called the \textit{algebra of observables local to} $S$ and will be denoted by $\mathfrak{A}_S$. Two physical considerations limit the possibilities for what kinds of observable algebras model locality:
\begin{enumerate}
\item \textbf{Isotony}:
Given two spatial subsystems, $S, T \in L$, such that $S$ is a subsystem of $T$, all local measurements that can be performed in $S$ can also be performed in $T$. Equivalently, all observables in $S$ must be observables in $T$. This constraint is referred to as \textit{isotony}, and it is imposed mathematically on the algebras of observables as
\begin{align}
S \leq T \, &\Rightarrow \,\mathfrak{A}_S \subseteq \mathfrak{A}_T. \label{isotony}
\end{align}

%MOVE THE CAUSALITY STUFF TO AFTER 
\item \textbf{Subsystem independence}:
Observers local to spatially disjoint subsystems must be able to simultaneously perform local measurements without their results influencing one another. To ensure this, everywhere except in \cref{HaagKastler}, we impose the \textit{subsystem independence constraint} on the algebras $\mathfrak{A}_S$: %I'm pretty sure this isn't a net since it's defined on a set which is not a topological space in general. Although in the case where all S have finite cardinality, it is in fact good enough.
\begin{align}
S \wedge T = 0_L \, &\Rightarrow \, \forall a_S \in \mathfrak{A}_S, a_T \in \mathfrak{A}_T, [a_S, a_T]_- = 0. \label{disjointCommute}
\end{align}
Consequently, the only way for spatially disjoint observers to influence each other is through interaction. 
\end{enumerate}
Local observable algebras satisfying \cref{isotony} and \cref{disjointCommute} are generally referred to as the \textit{commuting operator model}. A more refined construction is that of \textit{quasi-local algebras} \cite[\S 2.6]{bratteli1987operator}.

An observable is called \textit{extensively local} to $S$, if for every $T \leq S$ such that $T \neq S$, the observable is not local to $T$. 

The remainder of this section compiles a collection of commuting operator models. With the exception of \cref{HaagKastler}, all examples work with a particular type of lattice, a $\sigma$-\textit{algebra}, defined below.
\begin{definition} \label{sigmaAlg}
A $\sigma$-\textit{algebra} over the set $X$ is a collection of subsets, $\Sigma \subseteq \mathbb{P}(X)$, satisfying
\begin{enumerate}
\item $X \in \Sigma$ 
\item $E \in \Sigma \, \Rightarrow \, X \setminus E \in \Sigma$ 
\item If $\{E_i\}_{i \in \gls{N}} \subseteq \Sigma$, then $\cup_{i \in \gls{N}} E_i \in \Sigma$.
\end{enumerate}
\end{definition}
In a $\sigma$-algebra, the meet and join are the set intersection and union, respectively. Furthermore, the greatest and least elements of a $\sigma$-algebra are $\Lambda = X$ and $0_\Sigma = \emptyset$, respectively.

\begin{ex} \label{finite-lattice-locality}
\textbf{Particle on a Finite Lattice}: The Hilbert space is $\mathbb{C}^n \otimes \mathcal{H}_{\text{int}}$. Physically, this example models a particle localized to a collection of $n$ spatial sites, with an internal degree of freedom represented by $\mathcal{H}_{\text{int}}$. Space is labelled by a subset of integers, $\Lambda = \{1, \dots, n\}$. The set of subsystems is the set of all subsets of space, $\Sigma = \mathbb{P}(\Lambda)$. The algebras of local observables are
\begin{align}
\mathfrak{A}_S = \{M \in \mathfrak{M}_n(\mathbb{C}) \,|\, M_{ij} \neq 0 \text{ only if } i, j \in S \} \otimes \mathcal{B}(\mathcal{H}_{\text{int}}),
\end{align}
where $\mathfrak{M}_n(\mathbb{C})$ denotes the algebra of $n \times n$ matrices with complex entries and $S \in \Sigma$. For example, for  $n = 2$ and $\mathcal{H}_{\text{int}} = \mathbb{C}$, there are four local observable algebras, associated to the subsets $\Sigma \in \{ \emptyset, \{1\}, \{2\}, \{1,2\} \}$. These algebras are 
\begin{align}
\mathfrak{A}_\emptyset &=\{0 \} \\
\mathfrak{A}_{\{1\}} &= \text{span} \left\{\begin{pmatrix}
1 & 0 \\
0 & 0
\end{pmatrix} \right\} \\
\mathfrak{A}_{\{2\}} &= \text{span} \left\{\begin{pmatrix}
0 & 0 \\
0 & 1
\end{pmatrix} \right\} \\
\mathfrak{A}_{\{1,2\}} &= \mathfrak{M}_2(\mathbb{C}).
\end{align}

%FINISH CHAPTER ONE, SEND THAT AND TABLE OF CONTENTS SOON!
\demo
\end{ex}

\begin{ex}
\textbf{One-Dimensional Continuum Particle}: Space is parametrized by the reals, $\Lambda = \mathbb{R}$. The set of spatial subsystems, $\Sigma$, is the $\sigma$-algebra of measurable subsets of the reals satisfying the Carathéodory condition, as introduced in \cite{caratheodory1914lineare,Edgar2019}. In particular, $\Sigma$ contains the Borel $\sigma$-algebra generated by the Euclidean topology on $\mathbb{R}$. Denote the Lebesgue measure as $\lambda$, as introduced in \cite{Lebesgue1902}. Let $\mathcal{L}^2(\mathbb{R})$ denote the space of square integrable functions, that is, measurable maps $\psi \in \text{Bor}(\mathbb{R}, \mathbb{C})$ satisfying
\begin{align}
\int_{\mathbb{R}} |\psi|^2 \, d\lambda < \infty.
\end{align}
Let $L^2(\mathbb{R})$ denote the set of equivalence classes, $[\psi]$, of $\mathcal{L}^2(\mathbb{R})$, where two functions are defined to be equivalent if they are equal almost everywhere with respect to $\lambda$. $L^2(\mathbb{R})$ is a Hilbert space with the inner product
\begin{align}
\braket{[f]\,|\,[g]} = \int_{\mathbb{R}} f^* \cdot g \, d \lambda.
\end{align}
The physical Hilbert space is $L^2(\mathbb{R}) \otimes \mathcal{H}_{\text{int}}$, where $L^2(\mathbb{R})$ models the spatial degree of freedom for our one-dimensional particle and $\mathcal{H}_{\text{int}}$ models internal degrees of freedom.

For every subset $S \in \Sigma$, let $L^2(S)$ denote the subset of $L^2(\mathbb{R})$ whose elements are equivalent to a function whose set-theoretic support\footnote{The set-theoretic support of a function whose codomain is $\mathbb{C}$ is the preimage of $\mathbb{C} \setminus \{0\}$.} is contained in $S$. The subalgebra of linear operators local to $S$, $\mathfrak{A}_S$, is 
\begin{align}
\mathfrak{A}_S = \{\mathscr{P}_{L^2(S)} A \mathscr{P}_{L^2(S)} : A \in \mathcal{B}(L^2(\mathbb{R}))\} \otimes \mathcal{B}(\mathcal{H}_{\text{int}}),
\end{align}
where $\mathscr{P}_{L^2(S)}$ denotes the orthogonal projection onto $L^2(S)$.
\demo
\end{ex}

\begin{ex} \label{LocalityExampleTPM}
\textbf{Tensor Product Model}:
In this thesis, the greatest element of every lattice in the tensor product model, $\Lambda$, is assumed to be finite. The $\sigma$-algebra of subsystems is the power set of $\Lambda$. The Hilbert space factorizes as a tensor product, $\mathcal{H} = \otimes_{i \in \Lambda} \mathcal{H}_i$. For each subsystem, $S \subseteq \Lambda$, the algebra of local operators is
\begin{align}
\mathfrak{A}_S := \bigotimes_{i \in \Lambda} \begin{cases}
\mathcal{B}(\mathcal{H}_i) & \text{if } i \in \Lambda \\
\text{span} \{\gls{id}_i\} & \text{if } i \in \Lambda \setminus S,
\end{cases} \label{TPM-Local-Observables}
\end{align}
where $\gls{id}_{i}$ is the identity operator acting on $\mathcal{H}_{i}$. A discussion on the relation between the subsystem independence constraint and the \textit{no-signalling principle} in the tensor product model can be found in \cite{noSignalling}.

The tensor product model has a more confined structure than the generic commuting operator model. However, if the local observable algebras can be realized on a finite-dimensional Hilbert space, a tensor product factorization of the local observable algebras can be derived. Instead of assuming the existence of a tensor product factorization, we can instead suppose that:
\begin{enumerate}
\item The local observable algebras satisfy the isotony and subsystem independence constraints of \cref{isotony,disjointCommute}. \\
\item The local observable algebras generate the entire space of operators on $\mathcal{H}$, or more explicitly,
\begin{align}
\vee_{i \in \Lambda} A_{\{i\}} = \mathcal{B}(\mathcal{H}). \label{algebras-generating-Hilbert-Space}
\end{align} 
Physically, this constraint implies that there are no superselection rules. \\
\item Every local observable algebra is unital with the same identity element. \\
\item The local observable algebras $\mathfrak{A}_S$ and $\mathfrak{A}_{\Lambda \setminus S}$ are \textit{statistically independent} \cite{Haag1964}, which means for every state $\phi_S$ on $\mathfrak{A}_S$ and every state $\phi_{\Lambda \setminus S}$ on $\mathfrak{A}_{\Lambda \setminus S}$, there exists a unique state, $\phi_\Lambda$ on $\mathfrak{A}_\Lambda$ such that the restrictions of $\phi_\Lambda$ to the domains $\mathfrak{A}_S$ and $\mathfrak{A}_{\Lambda \setminus S}$ are $\phi_S$ and $\phi_{\Lambda \setminus S}$, respectively.
\end{enumerate}
If the above suppositions hold, then the local observable algebras are isomorphic to a set of local observable algebras in the tensor product model \cite{Roos1970,ObservableLocality}.
\demo
\end{ex}

The algebras in the tensor product model are examples of type $I$ von Neumann factors. Thus, more general notions of locality are obtained by considering type $II$ or type $III$ von Neumann factors, is often done in Algebraic quantum field theory \cite{Witten2018}.

\begin{ex} \label{Lattice-Fermion-Locality-Defn}
\textbf{Lattice Fermions}:
In this thesis, the greatest element of every lattice, $\Lambda$, for systems of fermions assumed to be a set with finite cardinality. The $\sigma$-algebra of subsystems is the power set of $\Lambda$. For every site $i \in \Lambda$, we assume the existence of a \textit{creation} operator, $\hat{a}^\dag_i \in \mathfrak{A}$, and an \textit{annihilation} operator, $\hat{a}_i \in \mathfrak{A}$, satisfying the \textit{canonical anticommutation relations},\footnote{Basic facts about the creation and annihilation operators are reviewed in \cref{FermionsIntro}.}
\begin{align}
[\hat{a}^\dag_i, \hat{a}_j]_+ = \delta_{ij} \gls{id} &\enspace& [\hat{a}_i, \hat{a}_j]_+ = 0. \label{CAR-normal}
\end{align}
Intuitively, the role of these operators is to insert or remove fermions from the spatial site corresponding to their index.
An explicit construction of such operators is given by the Jordan-Wigner transformation; see \cref{JordanWignerProposition} and \cite{JordanWigner,WignerCollectedWorks} for details. The Jordan-Wigner transformation unitarily maps the creation and annihilation operators to a Hilbert space with a tensor product decomposition, so we may apply \cref{LocalityExampleTPM} in this Hilbert space to induce a notion of locality. An alternative definition, which follows directly from the canonical anticommutation relations, was given in \cite{BravyiKitaev}.
The local observable algebras associated to $S$ are generated by a linear combinations of terms that are products of creation and annihilation operators with indices in $S$. In each term, the sum of the number of creators and the number of annihilators is required to be even, 
\begin{align} 
\mathfrak{A}_{S} = \text{span} \left\{ 
\left(\prod\limits_{i \in A} \hat{a}^\dag_i \right)\left( \prod\limits_{j \in B} \hat{a}^{}_j \right):  \begin{array}{l} A,B \subseteq \mathbb{P}(S) \, \text{ and }\\ \gls{card}(A)+\gls{card}(B) \equiv 0 \Mod{2} \end{array} \right\}.
\end{align}
The reason why only products with an even number of terms are considered is to achieve subsystem independence; anticommuting operators may not commute, but even products of them do.
\demo
\end{ex}

\begin{ex} \label{HaagKastler}
\textbf{Algebraic Quantum Field Theory}:
A successful concept in modern physics has been the unification of space and time into one object: \textit{spacetime}. Spacetime is typically equipped with a \textit{causal structure}, which classifies pairs of points as \textit{spacelike}, \textit{timelike}, or \textit{lightlike} separated from one another. Consider the example of Minkowski spacetime, which is the manifold $\mathbb{R}^{1,d}$ for $d \in \gls{Z+}$ equipped with an indefinite metric
\begin{align}
\eta = \text{diag}(-c^2, 1, \dots, 1),
\end{align}
where $c$ denotes the speed of light.
Given a curve $\gamma: \mathbb{R} \to \mathbb{R}^{1,d}$, we say $\gamma$ is \textit{causal} if for every $\tau \in \mathbb{R}$, $\braket{\gamma(\tau)| \eta \gamma(\tau)} \leq 0$. Two subsets, $S,T \subseteq \mathbb{R}^{1,d}$, are \textit{spacelike separated} if there does not exist a causal curve with one point in $S$ and one point in $T$.

%TO-DO: This sentence is choppy.
The \textit{Haag-Kastler axioms}, introduced in \cite{Haag1964} and summarized in \cite{rejzner2016perturbative}, are a set constraints on local observable algebras associated with Minkowski spacetime. The Haag-Kastler axioms are defined on subsets which are elements of a net as opposed to elements of a $\sigma$-algebra. In addition to isotony, the following assumptions on the algebras of observables are imposed:

\begin{itemize}
\item \textbf{Causality}: There is no causal relation between measurements of observables local to spacelike separated domains. Consequently, given two observables localized to spacelike separated domains, it must be possible to associate definite measurement outcomes to both of them. Thus,  
\begin{align}
S \text{ is spacelike separated from } T\,\Rightarrow\, [a_S, a_T]_- = 0 &\quad& \forall a_S \in \mathfrak{A}_S, a_T \in \mathfrak{A}_T\label{einsteinLocality}.
\end{align}
In the non-relativistic limit $c \to \infty$, assuming $\mathfrak{A}_S$ has no dependence on time reduces \cref{einsteinLocality} to the subsystem independence constraint of \cref{disjointCommute}. 

\item \textbf{Poincaré Covariance}: A strongly continuous unitary representation\footnote{This is defined in \cref{DiracVonNeumann}.}, $U$, of the Poincaré group exists on $\mathcal{H}$ such that 
\begin{align}
\mathfrak{A}_{g S} = U(g) \mathfrak{A}_S U(g)^ \dag &\quad& \forall g \in \mathbb{R}^{1,d} \rtimes O(1,d).
\end{align}

\item \textbf{Spectrum Condition}: The spectrum of the generator of space-time translations is contained in the closed forward lightcone.

\item \textbf{Time-Slice Axiom}: The algebra of an open neighborhood of a Cauchy surface of a region coincides with the algebra of the entire region. Physically, this axiom implies that the initial value problem is well-posed.

\item \textbf{Vacuum}: There exists a vector, $\Omega \in \mathcal{H}$, which is Poincaré invariant and \textit{cyclic} for all $\mathfrak{A}_S$ with $S \neq \emptyset$, which means $\{a \Omega \,|\, a \in \mathfrak{A}_S\}$ is dense in $\mathcal{H}$.
\end{itemize}
\demo
\end{ex}

%\begin{ex}
%\textbf{Continuum fermions}
%
%See the Bratteli book
%
%\end{ex}

\subsection{Quasi-Hermitian Locality} \label{sec:QH-Locality}

Every model of locality naturally admits quasi-Hermitian generalizations. This section displays such generalizations. 

Let $\eta \in \mathfrak{A}^+_\Sigma$ be strictly-positive. Then, the algebra whose elements are 
\begin{align}
\mathfrak{A}_S(\eta) := \mathfrak{A}_S \cap \eta^{-1} \mathfrak{A}_S \eta
\end{align}
is a $C^*$-algebra with the involution $\dag_\eta$ defined by 
\begin{align}
a^\dag_{\eta} := \eta^{-1} a^\dag \eta &\quad& \forall a \in \mathfrak{A}_S(\eta)
\end{align}
and the norm defined by 
\begin{align}
||a||_{\eta} := ||\eta^{1/2} a \eta^{-1/2}|| &\quad& \forall a \in \mathfrak{A}_S(\eta).
\end{align}
If $\mathfrak{A}_S$ satisfies either isotony or the subsystem independence constraint, then the algebras $\mathfrak{A}_S(\eta)$ will satisfy isotony or subsystem independence, respectively. Local quasi-Hermitian observable algebras are, thus, consistently defined by $\mathfrak{A}_S(\eta)$. Intriguingly, $\mathfrak{A}_S(\eta)$ is the algebra generated by the set of quasi-Hermitian operators contained in $\mathfrak{A}_S$.

\begin{ex}[continues=finite-lattice-locality] \label{quasi-Hermitian-Lattice-Locality} %Check references to this
\textbf{Quasi-Hermitian Particle on a Finite Lattice}: Recall that the Hilbert space is $\mathbb{C}^n \otimes \mathcal{H}_{\text{int}}$; the lattice is a subset of integers, $\Lambda = \{1, \dots, n\}$; and the set of spatial subsystems is $\Sigma = \mathbb{P}(\Lambda)$. Let $\eta \in \text{GL}_n(\mathbb{C})^+$ be a positive-definite matrix. 
The local observable algebras are defined using vector spaces associated to $\eta$. To define these spaces, let $\gls{ei}$ denote the canonical basis of $\mathbb{C}^n$, which is the set of basis vectors with elements $(e_i)_j = \delta_{ij}$, where $\delta$ is the Kronecker delta. 
For every $S \in \Sigma$, let $V_S(\eta)$ denote the direct sum of every invariant subspace of $\eta$ contained in $\text{span}\{e_i\,|\,i \in S\}$. A more explicit definition of $V_S(\eta)$ is 
\begin{align}
V_S(\eta) = \ker \left(\mathscr{P}_{\text{span}\{e_i\,|\,i \in \Lambda \setminus S\}} \eta|_{\text{span}\{e_i\,|\,i \in S\}} \right),
\end{align}
where $\mathscr{P}_X$ denotes the orthogonal projection onto a closed linear subspace $X$.

The local observable algebras are\footnote{$\mathcal{B}(V_S(\eta), \mathbb{C})$ is the dual space of $V_S(\eta)$, which is the set of continuous linear functional on $V_S(\eta)$.}
\begin{align}
\mathfrak{A}_S := \mathcal{B}(V_S(\eta)) \otimes \mathbb{1}_{\Lambda \setminus S}.
\end{align}
The real vector space of quasi-Hermitian elements of $\mathfrak{A}_S$ is
\begin{equation}
\text{span}_{\mathbb{R}} \left\{ \ket{\psi} \bra{\psi} \eta\,:\, \ket{\psi} \in V_S(\eta)
\right\},
\end{equation}
where $\text{span}_{\mathbb{R}}$ denotes the real span.
\demo
\end{ex}
\section{Nonlocal Games} \label{Nonlocal-Games-Review}

By design, measurements local to one subsystem cannot influence measurement results local to a disjoint subsystem. However, it is still possible that the results of measurements local to disjoint subsystems may be correlated. John Bell famously demonstrated that quantum theory exhibits correlations that are not present in any locally deterministic theory, such as classical mechanics \cite{Bell1964}. The CHSH inequality \cite{Clauser1969} is an example of a bound on what correlations can be produced in locally deterministic theories. In this context, Boris Tsirelson showed that quantum theory satisfies a weaker bound on correlations \cite{Cirelson1980}. 

The framework of nonlocal games is a platform designed to address what kinds of correlations between disjoint subsystems are possible \cite{nonlocalGames}. The CHSH inequality can be realized as a particular nonlocal game, the \textit{CHSH game}. 

Consider two spatially separated observers, Alice and Bob, who are the players of a nonlocal game. In the setting of a nonlocal game, Alice and Bob are asked to work independently to accomplish a desired outcome. 

A referee approaches Alice and Bob separately, asking them questions $s$ and $t$, respectively. The questions are randomly sampled from input sets $\mathcal{I}_A$ and $\mathcal{I}_B$, where the probability of receiving the pair $s,t$ is $\pi(s,t)$. After receiving their respective questions, Alice and Bob return answers $a$ and $b$ to the referee, which are elements of answer sets $\mathcal{O}_A, \mathcal{O}_B$. When choosing their answers, Alice and Bob are not allowed to communicate; any collaboration on their part must take place before they know what questions they will be asked.
For the combined question, $(s,t)$, there are combinations of answers, $(a,b)$, which 
are "correct" and combinations which are "incorrect". More formally, there is a function, $V:\mathcal{I}_A \times \mathcal{I}_B \times \mathcal{O}_A \times \mathcal{O}_B \rightarrow \{0,1\}$, where the combinations $V(s,t;a,b) = 1$ are correct and the combinations $V(s,t;a,b) = 0$ are incorrect. 

The goal of the players is to produce correct outcomes as often as possible. The players are given time to prepare a strategy. Thus, they know what questions they could be asked, $\mathcal{I}_A$, $\mathcal{I}_B$; they know what their chances of being asked a particular combination of questions is, $\pi$; and they know the correct answers for each pair of questions, or equivalently the function $V$. As part of their strategy, they are allowed to prepare a shared quantum state, $\psi$, but they cannot communicate with each other once they know which question they have to address. Their answers can, thus, be informed by measurements on the shared quantum states, which we realize as commuting projective operator valued measures $A^a_s \in \mathcal{B}(\mathcal{H})$ on Alice's subsystem and $B^b_t \in \mathcal{B}(\mathcal{H})$ on Bob's subsystem. 

Therefore, the players aim to develop a strategy which optimizes the \textit{quantum value}, $\omega$, of the game, $G$,
\begin{align}
\omega_{\text{co}}(G) = \sup_{A, B, \psi} \sum_{s,t} \sum_{a,b} \pi(s,t) V(s,t|a,b) \braket{\psi|A^a_s  B^b_t \psi},
\end{align}
where even the Hilbert space structure, $\mathcal{H}$, is allowed to vary (For instance, Alice and Bob could share an entangled qutrit as opposed to an entangled qubit).

For example, the CHSH game is represented with binary questions and answers, 
\begin{align}
\mathcal{I}_A = \mathcal{I}_B = \mathcal{O}_A = \mathcal{O}_B = \{0,1\},
\end{align} 
and
\begin{align}
V(s,t;a,b) = \begin{cases}
1 & \text{ if } a \oplus b = s \wedge t \\
0 & \text{ otherwise}.
\end{cases}
\end{align}
A restatement of the possibility quantum violations of Bell's inequality is that the quantum value of the CHSH game, which is $\cos^2(\pi/8)$, is larger than this game's classical value, which is $3/4$ \cite{nonlocalGames}.

A special case of quantum strategies to nonlocal games arises from assuming the local operations performed by the players can be realized in the tensor product model. I denote the corresponding quantum value associated with these strategies as $\omega_{\text{ten}}$. One rather illuminating result in the study of nonlocal games is the existence of a game, $G$, such that the strict inequality $\omega_{\text{ten}}(G) < \omega_{\text{co}}(G)$ holds \cite{Ji2021}.

A central result of this thesis is that the values of nonlocal games in quasi-Hermitian quantum theory cannot exceed the values of nonlocal games in traditional, Hermitian quantum theory; this is proven in \cref{Local-Equivalence-Section}.

%\url{https://arxiv.org/pdf/1909.11619.pdf}

\section{Quasi-Hermitian Tensor Product Model} \label{TPMSection}

This section generalizes the tensor product of locality, presented in \cref{LocalityExampleTPM}, to quasi-Hermitian quantum theories.

The Hilbert space is assumed to factorize as a tensor product over a finite lattice, $\Lambda$, 
\begin{align}
\mathcal{H} = \bigotimes\limits_{j \in \Lambda} \mathcal{H}_j.
\end{align} 
For notational simplicity, for every subset, $S \subseteq \Lambda$, I will denote
\begin{align}
\mathcal{H}_S := \bigotimes\limits_{j \in S} \mathcal{H}_j.
\end{align}
This Hilbert space is endowed with a bijective positive metric operator, $\eta = \Omega^\dag \Omega$. Following \cref{sec:QH-Locality}, the local quasi-Hermitian observable algebras are the algebras generated by local quasi-Hermitian operators.

I define the \textit{quasi-local algebra} to the algebra generated by algebras associated to individual sites,
\begin{align}
\mathfrak{A}_{\text{loc}} := \vee_{i \in S} \mathfrak{A}_{\{i\}}.
\end{align} 
In contrast with the Hermitian tensor product model \cite{ObservableLocality}, $\mathfrak{A}_{\text{loc}}$ need not be equal to $\mathcal{B}(\mathcal{H})$.

A useful characterization of local quasi-Hermitian observable algebras follows from the \textit{operator-Schmidt decomposition} \cite{operatorSchmidt} of $\eta$,
\begin{align}
\eta = \sum_i \chi_i \bigotimes\limits_{j \in \Lambda} \eta_{i, j}, \label{operator-Schmidt-69}
\end{align}
where $\eta_{i,j}$  are Hermitian and are orthonormal under the Frobenius inner product\footnote{The Frobenius inner product is defined in \cref{hilbertSchmidtHilbertAlg}.},
\begin{align}
\text{Tr} (\eta_{i,k} \eta_{j,k}) = \delta_{i,j}.
\end{align} 
The number of nonzero terms in the operator-Schmidt decomposition of an operator is referred to as its \textit{Schmidt rank}.
In general, an operator-Schmidt decomposition of $\eta$ such that $\eta_{i,j}$ are strictly-positive operators for all $i,j$ does not exist. Examples of such $\eta$ include entangled operators, as defined in \cite{werner1989quantum}. In spite of the generic lack of a tensor product factorization of the metric, adjoints can be computed locally, as displayed in the following theorem.
\begin{theorem} \label{Adjoint-Equivalence-Theorem}
Let $\varphi_{i}:\mathfrak{A}_{\{i\}} \to \mathbb{C}$ for $i \in \Lambda$ be states, which are normalized positive linear functionals, as defined in \cref{stateDef}. Suppose $\eta$ is a strictly-positive element of $\mathfrak{A}$. Define 
\begin{align}
\xi_j &:= \sum_i \left(\prod_{k \in \Lambda \setminus \{j\}} \varphi_k(\eta_{i, k}) \right)\eta_{i, j}, \\
\xi &:= \bigotimes\limits_{j \in \Lambda} \xi_j.
\end{align}
Then for all $i \in \Lambda$ and $a \in \mathfrak{A}_{\{i\}}$,  %Generalize to A_loc
\begin{align}
\eta^{-1} a^\dag \eta = \xi^{-1} a^\dag \xi. \label{adjoint-Equality}
\end{align}
Furthermore, $\xi$ is positive-definite. Thus, $\mathfrak{A}_{\text{loc}}$ is a $C^*$-algebra either when the adjoint and norm are defined with $\eta$ or when the adjoint and norm defined with $\xi$.
\end{theorem}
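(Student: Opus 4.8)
The plan is to recognize the product operator $\xi$ as a partial evaluation of $\eta$ against the reference states on every factor except the one where $a$ lives, and to exploit that this partial evaluation is a bimodule map over each single-site algebra. For each $i\in\Lambda$ I would introduce the slice map
\begin{align}
\Phi_i := \mathrm{id}_i\otimes\bigotimes_{k\in\Lambda\setminus\{i\}}\varphi_k : \mathfrak{A}\to\mathfrak{A}_{\{i\}},
\end{align}
which acts as the identity on the $i$-th tensor factor and evaluates $\varphi_k$ on every other factor. Feeding the operator-Schmidt decomposition \eqref{operator-Schmidt-69} into $\Phi_i$ and using linearity reproduces the defining formula for $\xi_i$ (the Schmidt coefficients being carried along), so that $\Phi_i(\eta)=\xi_i$ and $\xi=\bigotimes_i\Phi_i(\eta)$. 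The structural fact I would isolate first is that $\Phi_i$ is an $\mathfrak{A}_{\{i\}}$-bimodule map: for $c\in\mathfrak{A}_{\{i\}}$ and arbitrary $x\in\mathfrak{A}$ one has $\Phi_i(cx)=c\,\Phi_i(x)$ and $\Phi_i(xc)=\Phi_i(x)\,c$, which is immediate by expanding $x$ in a product basis of the $i$ versus not-$i$ split and pulling the $(\bigotimes_{k\ne i}\varphi_k)$-values, which are scalars, outside.

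With this in hand, the core identity \eqref{adjoint-Equality} follows from the intertwining form of the quasi-Hermitian adjoint. For $a$ local to site $i$ set $b:=\eta^{-1}a^\dagger\eta$, so that $a^\dagger\eta=\eta\,b$. The essential input, and the place where the locality of the model enters, is that for $a$ in the local quasi-Hermitian algebra of \cref{sec:QH-Locality} the adjoint $b$ is again local to site $i$; otherwise $\eta^{-1}a^\dagger\eta$ need not be local and the identity cannot hold, since $\xi^{-1}a^\dagger\xi$ is manifestly local for the product operator $\xi$. Applying $\Phi_i$ to both sides of $a^\dagger\eta=\eta b$ and invoking the bimodule property, with $a^\dagger$ pulled out on the left and $b$ pulled out on the right, collapses the relation to the single-site equation
\begin{align}
a^\dagger\xi_i=\xi_i\,b.
\end{align}
Once $\xi_i$ is invertible this rearranges to $b=\xi_i^{-1}a^\dagger\xi_i$, and since $\xi=\bigotimes_j\xi_j$ acts trivially on every factor other than $i$, this is exactly $b=\xi^{-1}a^\dagger\xi$.

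For positive-definiteness of $\xi$ I would argue factor by factor. Since $\Phi_i$ is built from states it is a positive map, and $\xi_i=\Phi_i(\eta)$ is the evaluation of a strictly-positive operator against $\bigotimes_{k\ne i}\varphi_k$. Representing the states by density operators $\rho_k$ in the finite-dimensional setting gives, for any nonzero $\psi\in\mathcal{H}_i$,
\begin{align}
\braket{\psi|\xi_i|\psi}=\mathrm{Tr}\!\left[\left(\ket{\psi}\bra{\psi}\otimes\bigotimes_{k\ne i}\rho_k\right)\eta\right]>0,
\end{align}
because $\eta$ is bounded below by a positive multiple of the identity while the traced operator is positive and nonzero. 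Hence each $\xi_i>0$, so $\xi=\bigotimes_i\xi_i>0$, which in particular supplies the invertibility of $\xi_i$ used above.

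Finally, the $C^*$-algebra statement follows by propagating \eqref{adjoint-Equality} from generators to the whole algebra. Both $\dagger_\eta$ and $\dagger_\xi$ are antilinear antiautomorphisms, so an identity holding on the single-site generators of $\mathfrak{A}_{\text{loc}}=\vee_i\mathfrak{A}_{\{i\}}$ holds on the generated algebra; thus $\dagger_\eta$ and $\dagger_\xi$ coincide on all of $\mathfrak{A}_{\text{loc}}$. Because $\xi$ is a product operator, $\dagger_\xi$ respects the tensor factorization and maps $\mathfrak{A}_{\text{loc}}$ into itself, so $\mathfrak{A}_{\text{loc}}$ is closed under the common involution; equipping it with $\|\cdot\|_\xi$ (equivalently $\|\cdot\|_\eta$) and invoking the general construction of \cref{Quasi-Hermitian-C*} with $\Omega=\xi^{1/2}$ (respectively $\eta^{1/2}$) identifies it as a sub-$C^*$-algebra in either involution. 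I expect the main obstacle to be the bookkeeping of the second paragraph: verifying that the quasi-Hermitian locality hypothesis genuinely forces $\eta^{-1}a^\dagger\eta$ to remain on site $i$, so that the slice map collapses the intertwining relation \emph{exactly} rather than only after tracing. This is the step that separates the true operator identity from the weaker statement that local expectation values agree.
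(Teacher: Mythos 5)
Your proposal is correct and takes essentially the same route as the paper's own proof: both apply the slice map $\mathrm{id}_i\otimes\bigotimes_{k\neq i}\varphi_k$ against the operator-Schmidt decomposition to collapse the intertwining relation to the single-site identity $a_i^\dag\,\xi_i=\xi_i\,b_i$, both derive positivity of each $\xi_i$ by evaluating the strictly-positive $\eta$ against product states, and both settle the $C^*$ claim by combining the adjoint equality with the norm-equivalence construction of \cref{Quasi-Hermitian-C*}. The only differences are cosmetic: you handle a general $a$ with local $\dag_\eta$-adjoint in one stroke via the bimodule property of the slice map, where the paper first proves the identity for quasi-Hermitian $a$ and then extends by splitting a general element into quasi-Hermitian and anti-quasi-Hermitian parts, and your explicit observation that $\eta^{-1}a^\dag\eta$ must itself remain local to site $i$ (i.e., $a$ lies in the quasi-Hermitian local algebra of \cref{sec:QH-Locality}) makes precise the hypothesis the paper uses implicitly.
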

\begin{proof}
The proof starts by proving \cref{adjoint-Equality}.

Consider first the case of a local quasi-Hermitian operator, $a = \eta^{-1} a^\dag \eta \in \mathfrak{A}_{\{i\}}$. This $a$ admits a tensor product decomposition,
\begin{align}
a = \bigotimes\limits_{j \in \Lambda} \begin{cases}
a_i & j = i \\
\mathbb{1} & j \neq i
\end{cases}
\end{align}
for some $a_i \in \mathcal{B}(\mathcal{H}_i)$. From the quasi-Hermiticity condition,
\begin{align}
\left(\otimes_{j \in \Lambda} \begin{cases}
\mathbb{1} & j = i \\
\omega_{j} & j \neq i
\end{cases} \right) (a_i) &= 
\left(\otimes_{j \in \Lambda} \begin{cases}
\mathbb{1} & j = i \\
\omega_{j} & j \neq i
\end{cases} \right) (\eta^{-1} a_i^\dag \eta) \, \Leftrightarrow\\
a_i &= \xi_i^\dag a_i^\dag \xi_i. 
\end{align}
Thus, the corresponding $a = \eta^{-1} a^\dag \eta \in \mathfrak{A}_{\{i\}}$ satisfies
\begin{align}
\eta^{-1} a^\dag \eta &= \xi^{-1} a^\dag \xi.
\end{align}
Consider now a general local quasi-Hermitian observable algebra element, $a \in \mathfrak{A}_{\{i\}}$. This element can be expressed as a sum of a quasi-Hermitian and an anti-quasi-Hermitian part,
\begin{align}
a &= \left(\dfrac{a + a^\dag}{2}\right) + \left(\mathfrak{i} \dfrac{a - a^\dag}{2 \mathfrak{i}} \right).
\end{align}
Since \cref{adjoint-Equality} was demonstrated for each term in the above expansion, this identity must hold for every $a$.

That $\xi$ is positive-definite follows from showing $\xi_i$ is positive-definite. Considering the definition of a positive-definite operator on a Hilbert space with the Gelfand-Naimark theorem in mind, we understand that $\xi_i$ is positive-definite if and only if the valuation $\omega(\xi_i)$ is greater than zero for every state $\omega:\mathfrak{A}_i \to \mathbb{C}$. The valuation satisfies
\begin{align}
\omega(\xi_i) = \left( \bigotimes\limits_{j \in \Lambda} \begin{cases}
\omega & j = i \\
\varphi_j & j \neq i
\end{cases}\right)(\eta),
\end{align}
which is greater than zero since states are completely positive and $\eta$ is positive-definite. Thus, $\xi$ is positive-definite.

That $\mathfrak{A}_{\text{loc}}$ is $C^*$-algebra with the adjoint defined by $\xi$ follows from the equivalence of adjoints, which was established already, and equivalence of norms defined by $\eta$ and $\xi$, which is discussed in\footnote{Technically, \cref{Quasi-Hermitian-C*} established the equivalence of the norms defined by $\eta$ and $\mathbb{1}$ and the equivalence of the norms defined by $\mathbb{1}$ and $\xi$. The equivalence between $||\cdot||_{\eta^{1/2}}$ and $||\cdot||_{\xi^{1/2}}$ is, thus, established by transitivity.} \cref{Quasi-Hermitian-C*}.

\end{proof}

The above theorem tells us about the structure of local quasi-Hermitian observable algebras, but it does not describe what elements are contain in them. Every such algebra trivially contains the span of the identity operator, but such an observable contains no physical information, as a measurement of such an observable always yields the degenerate eigenvalue. The nontrivial case is addressed in the next theorem.

\begin{theorem} \label{quasilocal_theorem}
Assume the metric admits an operator-Schmidt decomposition of the form \cref{operator-Schmidt-69}. Consider a spatial subsystem $S \subseteq \Lambda$.
The following are equivalent:
\begin{enumerate}
\item $a$ is a quasi-Hermitian element in $\mathfrak{A}_S$ which is not a multiple of the identity operator.
\item There exists a simultaneous solution, $a_S \in \mathcal{B}(\mathcal{H}_S)$, to the operator equations
\begin{align}
\left(\bigotimes\limits_{j \in S}\eta_{i,j} \right) a_S = a_S^\dag \left(\bigotimes\limits_{j \in S}\eta_{i,j} \right)&\quad& \forall i: \chi_i > 0, \label{block metrics}
\end{align}
where $a_S$ is not a multiple of the identity operator. Furthermore, $a_S \otimes \mathbb{1}_{\Lambda \setminus S}$ is quasi-Hermitian. 
\end{enumerate}
In the case where $a_S$ is diagonalizable, a third equivalent condition is:

3. Given any operator-Schmidt decomposition of $\eta$, the operators $\{\otimes_{j \in S}\eta_{i,j}\}$ are simultaneously reducible under a invertible transformation of the form
\begin{align}
U^{\dag} \left(\otimes_{j \in S} \eta_{i,j} \right) U = 
\begin{pmatrix}
\eta^i_R & 0 \\
0 & \eta^i_{S\setminus R}
\end{pmatrix}, \label{reduce metric}
\end{align}
where $U$ is an invertible operator on $\mathcal{H}_S$.
\end{theorem}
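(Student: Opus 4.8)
The plan is to establish the equivalences by a direct analysis of the quasi-Hermiticity condition $\eta^{-1} a^\dag \eta = a$ for operators $a \in \mathfrak{A}_S$, exploiting the tensor product factorization $a = a_S \otimes \mathbb{1}_{\Lambda \setminus S}$ that such a local operator must admit. First I would prove the equivalence (1) $\Leftrightarrow$ (2). Given $a = a_S \otimes \mathbb{1}_{\Lambda \setminus S}$, the quasi-Hermiticity relation reads $\eta\, (a_S \otimes \mathbb{1}) = (a_S^\dag \otimes \mathbb{1})\, \eta$. Substituting the operator-Schmidt decomposition \eqref{operator-Schmidt-69} and using that $a_S$ acts as the identity on the complementary tensor factor, the sites in $\Lambda \setminus S$ can be projected out. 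Concretely, I would contract both sides against a resolution of the identity on the complementary factors (or equivalently apply partial traces against the dual basis of the $\{\eta_{i,k}\}_{k \in \Lambda \setminus S}$, which are orthonormal under the Frobenius inner product), which isolates each term and yields the family of equations \eqref{block metrics}, one for each $i$ with $\chi_i > 0$. The orthonormality condition $\text{Tr}(\eta_{i,k}\eta_{j,k}) = \delta_{i,j}$ is exactly what makes this projection clean, since it guarantees the complementary factors are linearly independent and cannot conspire to cancel. The converse direction reassembles these equations back into the single relation for $\eta$, and the non-triviality of $a$ transfers directly to non-triviality of $a_S$.

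Next I would establish (2) $\Leftrightarrow$ (3) under the diagonalizability hypothesis. Condition \eqref{block metrics} says that $a_S$ is simultaneously $\bigl(\bigotimes_{j \in S} \eta_{i,j}\bigr)$-self-adjoint for every $i$ with $\chi_i > 0$. By \cref{Williams1969Corollary} applied in each $\eta_i := \bigotimes_{j \in S}\eta_{i,j}$-space, or more directly by the structure theory of pseudo-Hermitian matrices in \cref{pseudoEquivThm}, a diagonalizable $a_S$ that is $\eta_i$-self-adjoint for all relevant $i$ commutes with the products $\eta_i^{-1}\eta_{i'}$ (recall from the $\mathcal{C}$-symmetry discussion that the ratio of two intertwiners of the same operator commutes with that operator). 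Since $a_S$ is diagonalizable and commutes with this commuting family of operators built from the $\eta_i$, there is a common block structure: $a_S$ being non-trivial forces the existence of a proper invariant subspace $R \subseteq \mathcal{H}_S$ that is simultaneously reducing for all the $\eta_i$, which is precisely the block-diagonal form \eqref{reduce metric} under the invertible $U$ that implements the simultaneous reduction. Conversely, if all $\otimes_{j\in S}\eta_{i,j}$ are simultaneously reducible as in \eqref{reduce metric}, then any operator that is block-scalar in the $U$-basis (acting as distinct scalars on $R$ and on $S \setminus R$) solves \eqref{block metrics} and is not a multiple of the identity, producing the required non-trivial $a_S$.

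The main obstacle I anticipate is the (2) $\Rightarrow$ (3) direction: translating the \emph{existence} of a non-scalar simultaneous $\eta_i$-self-adjoint solution into a \emph{simultaneous} block reduction of the metrics themselves. The subtlety is that each individual $\eta_i$ need not be positive-definite (only $\eta$ is), so I cannot simply invoke a joint spectral theorem in a single Hilbert space; instead I must work with the indefinite $\eta_i$-inner products and verify that the eigenprojections of the diagonalizable $a_S$ are $\eta_i$-self-adjoint for \emph{every} $i$ simultaneously. The diagonalizability of $a_S$ is essential here — it guarantees the spectral projections exist and are polynomials in $a_S$, hence inherit the simultaneous intertwining relations — and this is exactly why condition (3) is only asserted under that hypothesis. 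I would verify that the common eigenprojections of $a_S$ furnish the reducing subspace $R$ and that conjugating by the diagonalizing transformation $U$ for $a_S$ puts each $\eta_i$ into the block form \eqref{reduce metric}, with the off-diagonal blocks forced to vanish precisely by the $\eta_i$-self-adjointness constraints.
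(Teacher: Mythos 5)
Your proposal is correct and follows essentially the same route as the paper: the equivalence of (1) and (2) is obtained exactly as you describe, by applying the dual functionals (partial traces against the Frobenius-orthonormal complementary Schmidt factors) to project the quasi-Hermiticity relation onto each term, and (2) $\Leftrightarrow$ (3) is proven by diagonalizing $a_S = U D U^{-1}$, observing that the intertwining relations force the matrix elements of $U^\dag\bigl(\bigotimes_{j\in S}\eta_{i,j}\bigr)U$ between distinct eigenspaces of $D$ to vanish, and conversely building a block-scalar $a_S$ with two distinct eigenvalues from the reduction \eqref{reduce metric}. One small caveat: your aside that $a_S$ "commutes with the products $\eta_i^{-1}\eta_{i'}$" presumes the individual Schmidt operators $\eta_i := \bigotimes_{j\in S}\eta_{i,j}$ are invertible, which is not guaranteed (only $\eta$ itself is), but this step is dispensable since your final verification — conjugating by the diagonalizing transformation and letting the $\eta_i$-self-adjointness constraints kill the off-diagonal blocks — is precisely the paper's argument and needs no inverses.
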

\begin{proof}
Proof of $\textit{1}\Leftrightarrow \textit{2}$: Assume $a \in \mathfrak{A}_S$.
%\end{equation}
Let $\phi_{i,S}: \mathcal(B)(\mathcal{H}_{S}) \rightarrow \mathbb{C}$ denote a linear functional satisfying 
\begin{equation}
\phi_{i,S} \left(\bigotimes\limits_{k \in \Lambda}\eta_{j,k} \right) = \delta_{ij}.
\end{equation}
Applying the map $\phi_{i,S} \otimes \mathbbm{1}_{\Lambda \setminus S}$ to both sides of the quasi-Hermiticity condition for $a$
imposes the constraints \cref{block metrics} on $a_S$. In the case where $\eta$ is a Hilbert-Schmidt operator, this map is realized as taking a partial trace over $\Lambda \setminus S$ after multiplication by ${\eta_{i,\Lambda \setminus S}}^{\dag} \otimes \mathbbm{1}$. Conversely, if $a_S$ satisfies \cref{block metrics}, $a = a_S \otimes \mathbb{1}_{\Lambda \setminus S}$ is quasi-Hermitian with respect to $\eta$.

Proof of $\textit{1} \Rightarrow \textit{3}$: Using \cref{diagonalizable}, and noting that $a$ is diagonalizable if and only if $a_S$ is diagonalizable%\footnote{This can be proven for instance by showing there is a complete set of eigenstates of $O$ which can be decomposed a tensor product: an eigenstates' Schmidt number can be greater than one if and only if $O_B$ is degenerate, where the degenerate eigenstates are the Schmidt decomposition's states}
, $a_S$ has a diagonalization $a_S = S D S^{-1}$, with $D = D^{\dag}$. Substituting this into \cref{block metrics},%, and multiplying on the left and right hand sides by factors of $S$ and $S^{\dag}$, \cref{block metrics} is equivalent to
\begin{equation}
U^{\dag} \left(\otimes_{j \in S} \eta_{i,j} \right) U D = D U^{\dag} \left(\otimes_{j \in S} \eta_{i,j} \right) U.
\end{equation}
For $a$ to be distinct from a multiple of the identity, there must be at least two distinct diagonal elements in $D$. Denoting two such elements as $d_1 \neq d_2$, given two associated eigenvectors $\ket{d_1}$ and $\ket{d_2}$ satisfying $D \ket{d_i} = d_i$ for $i \in \{1,2\}$, the matrix elements of $U^{\dag} \left(\otimes_{j \in S} \eta_{i,j} \right) U$ vanish
\begin{equation}
\braket{d_1|U^{\dag} \left(\otimes_{j \in S} \eta_{i,j} \right) U D|d_2} = \braket{d_1|D U^{\dag} \left(\otimes_{j \in S} \eta_{i,j} \right) U|d_2} = 0.
\end{equation}
Thus, $U^{\dag} \left(\otimes_{j \in S} \eta_{i,j} \right) U$ is reducible to the eigenspaces of $D$, completing this direction of the proof.
 
Proof of $\textit{3} \Rightarrow \textit{1}$: If $S$ satisfying \cref{reduce metric} exists, then the equation a local quasi-Hermitian observable satisfies, \cref{block metrics}, can be rewritten as %(multiplying on the left and right by $S^{\dag}$ and $S$)
\begin{equation}
\begin{pmatrix}
\eta^i_R & 0 \cr
0 & \eta^i_{S \setminus R}
\end{pmatrix} (U^{-1} a_S U) = (U^{-1} a_S U)^{\dag} \begin{pmatrix}
\eta^i_R & 0 \cr
0 & \eta^i_{S \setminus R}
\end{pmatrix}.
\end{equation}
A nontrivial solution for $a_S$ can be constructed with distinct eigenvalues $d_1, d_2$,
\begin{equation}
a_S = U \begin{pmatrix}
d_1 \mathbbm{1}_R & 0 \\
0 & d_2 \mathbbm{1}_{S \setminus R}
\end{pmatrix} U^{-1}.
\end{equation}
%MAYBE USE KRONECKER SUM NOTATION ABOVE?
\end{proof}

While \cref{quasilocal_theorem} tells us about nontrivial observables, it does not tell us what the criterion for the existence of extensive observables is.

\begin{corollary} \label{corollary:Schmidt-Rank}
Assuming $\dim(\mathcal{H}_S) < \infty$, if the Schmidt rank of the metric is greater than $( \dim(\mathcal{H}_S)-1)^2 + 1$, no local observables exist in the algebra $\mathfrak{A}_S(\eta^{1/2})$ other than multiples of the identity.
\end{corollary}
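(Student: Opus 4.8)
The plan is to count dimensions. By \cref{quasilocal_theorem}, a nontrivial local observable in $\mathfrak{A}_S(\eta^{1/2})$ exists if and only if the system of operator equations \cref{block metrics} admits a solution $a_S \in \mathcal{B}(\mathcal{H}_S)$ that is not a multiple of the identity, where the metric in question is $\eta^{1/2}$ rather than $\eta$. The key observation is that each constraint in \cref{block metrics}, namely $g_i a_S = a_S^\dagger g_i$ with $g_i := \bigotimes_{j \in S} (\eta^{1/2})_{i,j}$ Hermitian, is a condition of the form ``$a_S$ is $g_i$-self-adjoint'' — but since this is required for each $i$ with $\chi_i > 0$ simultaneously, the solution space is the intersection of the spaces of $g_i$-self-adjoint operators over all active Schmidt terms.

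First I would fix $d := \dim(\mathcal{H}_S)$ and let $r$ denote the Schmidt rank of the metric restricted to $S$, i.e.\ the number of terms $i$ with $\chi_i > 0$ in the operator-Schmidt decomposition. For a single fixed Hermitian invertible $g_i$, the set $\{a_S : g_i a_S = a_S^\dagger g_i\}$ is a real vector space of $g_i$-self-adjoint operators; by \cref{pseudoEquivThm}, item (\ref{Item:Product}), this is exactly the set of products $G g_i$ with $G = G^\dagger$, so its real dimension equals $\dim_{\mathbb{R}}(\{\text{Hermitian } d\times d \text{ matrices}\}) = d^2$. The space always contains the identity (and more generally all real multiples of $\mathbb{1}$), since $g_i \mathbb{1} = \mathbb{1}^\dagger g_i$ trivially. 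The strategy is then to bound the dimension of the intersection $\bigcap_{i:\chi_i>0} \{a_S : g_i a_S = a_S^\dagger g_i\}$ from above: each additional independent constraint beyond the first generically cuts down the solution space. More precisely I would argue that the common solution space has real dimension at most $d^2 - (r-1)\cdot(\text{codimension contributed per constraint})$, and that once $r$ is large enough this forces the intersection down to the one-real-dimensional span of $\mathbb{1}$.

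The cleaner route, which I would pursue, is to linearize directly. Write each constraint as a real-linear map on the real $2d^2$-dimensional space $\mathcal{B}(\mathcal{H}_S)$ (viewed over $\mathbb{R}$): the map $a_S \mapsto g_i a_S - a_S^\dagger g_i$ is $\mathbb{R}$-linear, and its kernel is the $g_i$-self-adjoint operators, of real dimension $d^2$; hence each single constraint has a $d^2$-dimensional kernel and a $d^2$-dimensional image inside the real $(d^2)$-dimensional space of anti-$g_i$-Hermitian outputs. To get the bound $(d-1)^2 + 1$, I would show that if the $g_i$ are in ``general enough'' position relative to one another, the only operators simultaneously $g_i$-self-adjoint for all $i$ in the collection are the scalars — and that the number of independent conditions needed to force this is governed by the count $(d-1)^2$. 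The natural way to see $(d-1)^2$ appear: the space of operators commuting with all the $g_i^{-1} g_k$ (the relevant intertwiners, as in the $\mathcal{C}$-symmetry discussion following \cref{generative}) must be trivial modulo scalars, and the commutant of a generic single pair of Hermitian matrices already cuts the $d^2$-dimensional operator space down by $(d-1)^2$ when no nontrivial common invariant structure survives. Once the Schmidt rank $r$ exceeds $(d-1)^2 + 1$, the accumulated constraints overdetermine the system, leaving only multiples of $\mathbb{1}$.

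The main obstacle I anticipate is making the phrase ``general enough position'' rigorous — i.e.\ proving that a Schmidt rank exceeding $(d-1)^2+1$ genuinely forces the intersection to collapse, rather than merely doing so for generic metrics. The subtlety is that the $g_i = \bigotimes_{j\in S}(\eta^{1/2})_{i,j}$ are not arbitrary Hermitian matrices: they are constrained to arise from the operator-Schmidt decomposition of a genuine positive-definite $\eta^{1/2}$, and they are mutually orthonormal in the Frobenius inner product. I would exploit exactly this orthonormality: the $r$ operators $g_i$ are linearly independent in the real $d^2$-dimensional Frobenius space of Hermitian matrices, so a Schmidt rank larger than $(d-1)^2+1$ means the $g_i$ span a subspace of Hermitian matrices of dimension exceeding $(d-1)^2+1$, i.e.\ of codimension less than $d^2 - (d-1)^2 - 1 = 2d - 2$ in the full $d^2$-dimensional Hermitian space. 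The final step is a counting argument showing that any operator $a_S$ that is $g$-self-adjoint for every $g$ in such a large subspace of Hermitian matrices must be a scalar, which I expect to follow by testing against rank-one Hermitian elements $\ket{\psi}\bra{\psi}$ lying in the span and deducing that $a_S$ preserves every such one-dimensional subspace, hence is scalar. Verifying that the large span necessarily contains enough rank-one (or spectrally rich) elements to pin down $a_S$ is where the bookkeeping will be most delicate.
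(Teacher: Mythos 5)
Your overall instinct -- turn the statement into a dimension count, using the linear independence (Frobenius orthonormality) of the Schmidt operators -- is exactly the mechanism of the paper's proof, but your proposed final step fails. You want to conclude that any $a_S$ which is $g$-self-adjoint for every $g$ in a span of Hermitian matrices of real dimension exceeding $(\dim(\mathcal{H}_S)-1)^2+1$ must be scalar, by testing against rank-one elements $\ket{\psi}\bra{\psi}$ in the span. A large subspace of Hermitian matrices need not contain \emph{any} rank-one element: for $d = \dim(\mathcal{H}_S) = 2$, the traceless subspace $\text{span}_{\mathbb{R}}\{\sigma_1,\sigma_2,\sigma_3\}$ has dimension $3 > (d-1)^2+1 = 2$, yet $\vec{v}\cdot\vec{\sigma}$ has eigenvalues $\pm\sqrt{\vec{v}\cdot\vec{v}}$ and so is never rank one. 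Real dimension counts do not force intersection with the rank-one variety, so the ``delicate bookkeeping'' you flag at the end is not merely delicate -- the route is blocked. Two further unjustified steps: you assume each Schmidt operator $g_i$ is invertible to get the per-constraint kernel dimension $d^2$ (only $\eta$ as a whole is positive-definite; individual Schmidt terms can be singular and indefinite), and any commutant-style count needs $a_S$ diagonalizable, which you never supply (it does hold: $a_S\otimes\mathbb{1}$ is quasi-Hermitian for the positive-definite $\eta$, hence similar to a Hermitian operator by \cref{Williams1969Corollary}). There is also a notational slip: the algebra $\mathfrak{A}_S(\eta^{1/2})$ carries the involution defined by the metric $\eta = (\eta^{1/2})^\dag\eta^{1/2}$, so the relevant Schmidt operators are those of $\eta$, not of $\eta^{1/2}$.

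The paper closes the argument by running your dimension count in the opposite direction, fixing the observable rather than the constraints, which avoids rank-one elements entirely. Suppose a nontrivial local observable exists; by \cref{quasilocal_theorem} (condition 3, legitimate since $a_S$ is diagonalizable as above) there is a single invertible $U$ so that every $U^\dag(\otimes_{j\in S}\eta_{i,j})U$ is block diagonal as in \cref{reduce metric}. Hermitian matrices of that block form constitute a real subspace of dimension $|R|^2+(d-|R|)^2 = d^2 - 2|R|(d-|R|)$, maximized at $|R|=1$ where it equals $(d-1)^2+1$; since $g \mapsto U^\dag g U$ is a real-linear bijection, all the Schmidt operators lie in a common subspace of real dimension at most $(d-1)^2+1$. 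Their Frobenius orthonormality makes them linearly independent, so the Schmidt rank is at most $(d-1)^2+1$ -- the contrapositive of the corollary, with no genericity hypothesis needed. If you want to salvage your intersection-of-kernels framing, replace the rank-one test by this observation: for nontrivial diagonalizable $a_S = UDU^{-1}$ with real $D$ of multiplicities $m_1,\dots,m_k$ ($k\geq 2$), the Hermitian solutions of $g a_S = a_S^\dag g$ are exactly $U^{-\dag}(\text{Hermitian commutant of } D)U^{-1}$, of real dimension $\sum_i m_i^2 \leq (d-1)^2+1$, and this count requires no invertibility of the individual $g_i$.
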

\begin{proof}
The proof proceeds by contradiction. Assume a nontrivial local observable exists in $\mathfrak{A}_S(\eta^{1/2})$. By \cref{quasilocal_theorem}, the Schmidt operators, $\otimes_{i \in S}\eta_{i,S}$, must be simultaneously reducible. Let $|R|$ denote the dimension of the one of the invariant subspaces of $U^\dag \eta_{i,S} U$ as given in \cref{reduce metric}. The decomposition fixes $2(\dim(\mathcal{H}_S)-|R|)|R|$ matrix elements of each $\eta^i_B$ in a suitable basis, which is minimized by a block of size $|R| = 1$. This leaves $(\dim(\mathcal{H}_S)-1)^2+1$ unfixed parameters in $\otimes_{i \in S}\eta_{i,S}$. If the dimension of $\text{span} \{ \otimes_{i \in S}\eta_{i,S} \}$ exceeds this bound, the Schmidt operators must be linearly dependent, a contradiction.
\end{proof}

\subsection{Schmidt Rank Less Than Three}
This section demonstrates that local quasi-Hermitian observables always exist when the metric operator, $\eta \in \mathcal{B}(\mathcal{H}_A \otimes \mathcal{H}_B)$, has Schmidt rank less than three. 

The case where the Schmidt rank of the metric equals one is trivial. Suppose the Schmidt rank of the metric equals two. Then, by \cite{Cariello2014,DelasCuevas2019}, the metric is separable and can be expressed as 
\begin{align}
\eta = \sigma_1 \otimes \tau_1 + \sigma_2 \otimes \tau_2,
\end{align}
where $\sigma_i, \tau_i$ are positive semi-definite. Furthermore, at least one of each pair of $\sigma_1, \tau_2$ and $\sigma_2, \tau_1$ must be positive-definite. If not, then consider $v \otimes w$ to be an element of the null space of $\sigma_1 \otimes \tau_2$ or $\sigma_2 \otimes \tau_1$, respectively. Then, $\eta v \otimes w = 0$, a contradiction. 

Assume without loss of generality that $\sigma_1$ and $\tau_2$ are positive-definite. Then, the operators $\mathbb{1} \otimes \tau_2^{-1} \tau_1$ and $\sigma_1^{-1} \sigma_2 \otimes \mathbb{1}$ are local quasi-Hermitian observables.

\subsection{$\mathcal{PT}$-Symmetry and Local Observables}

Suppose the metric operator has an injective antilinear symmetry, $\mathcal{PT}$. This holds for the common choice of metric operator $\eta = \mathcal{PC}$ associated with a $\mathcal{C}$-symmetry \cite{bender2002complex}. If $\mathcal{PT} \mathfrak{A}_{S_1} \mathcal{PT} \subseteq \mathfrak{A}_{S_2}$ for some subsets $S_1, S_2 \in \Lambda$ and if $\mathfrak{A}_{S_1}$ contains a nontrivial quasi-Hermitian observable, $O$, then the nontrivial quasi-Hermitian observable $\mathcal{PT} O \mathcal{PT}$ is local to $S_2$.

Later examples consider a spatial system with $n$ sites, $\Lambda = \{1, \dots, n\}$, where $\mathcal{PT}$ maps $\mathcal{H}_i$ to $\mathcal{H}_{n-i+1}$.

%Add chapter/section titles to tops of page. Somehow make section and chapter titles stand out over subsections?

\section{Local Equivalence of Quasi-Hermitian and Hermitian Kinematics} \label{Local-Equivalence-Section}

This section demonstrates how local expectation values in either the commuting operator model or in the tensor product model of quasi-Hermitian theory can be equivalently computed using Hermitian observables and states in the same model of locality. An immediate corollary of this is that introduction of quasi-Hermitian strategies cannot increase the values of nonlocal games; this is readily deduced from the definition of the value of a nonlocal game.

\subsection{Commuting Operator Model}

Consider the commuting operator model, where the only requirements placed on the local observable algebras are the isotony and subsystem independence constraints of \cref{isotony,disjointCommute}. The isometric $*$-isomorphism defined by
\begin{align}
\phi_{\eta^{-1/2}}(a) := \eta^{1/2} a \eta^{-1/2} \label{isometry2}
\end{align}
that was considered in the introduction chapter in \cref{quasi-Herm-C*-automorphism} is all that is needed to construct the equivalent state. This state is 
\begin{align}
\varphi_H = \varphi \circ \phi_{\eta^{1/2}}.
\end{align}
The algebra automorphism $\phi_{\eta^{-1/2}}:\mathfrak{A}(\eta^{1/2}) \to \mathfrak{A}$ maps commuting operator algebras into commuting operator algebras.

\subsection{Tensor Product Model} \label{TPM-Equiv-State-Section}

Proving local equivalence in the tensor product model is trickier than the demonstration for the commuting operator model given in the previous section. The similarity transform $\phi_{\eta^{-1/2}}$, of \cref{isometry2}, maps algebras which factorize as a tensor product into algebras which no longer factorize with this tensor product. 

The following table summarizes the steps performed to demonstrate local equivalence in the tensor product model.
\begin{table}[htp!]
\centering
\begin{tabular}{ll}
$\varphi$     & $:\mathfrak{A}(\Omega) \to \mathbb{C}$        \\
$\downarrow$  & (restriction)                                 \\
$\varphi|_{\mathfrak{A}_{\text{loc}}(\Omega)}$ & $:\mathfrak{A}_{\text{loc}}(\Omega) \to \mathbb{C}$ \\
$\downarrow$  & (Hahn-Banach)                                 \\
$\varphi_\xi$ & $:\mathfrak{A}(\xi^{1/2}) \to \mathbb{C}$     \\
$\downarrow$  & (local similarity)                            \\
$\varphi_H$   & $:\mathfrak{A} \to \mathbb{C}$
\end{tabular}
\caption{Summary of steps performed to prove local equivalence.}
\end{table}

The following discussion fills in the details for the steps summarized above.
\begin{itemize}
\item For a quasi-Hermitian state $\varphi$ on $\mathfrak{A}(\Omega)$, consider the restricted functional 
$\varphi|_{\mathfrak{A}_{\text{loc}}(\Omega)}: \mathfrak{A}_{\text{loc}}(\Omega) \rightarrow \mathbb{C}$.
Note, by \cref{Adjoint-Equivalence-Theorem}, adjoints can be equivalently computed using either the involution defined by $\Omega$ or an involution defined by a positive-definite operator admitting a tensor product factorization, $\xi$. The restricted state $\varphi|_{\mathfrak{A}_{\text{loc}}(\Omega)}$, thus, maps positive elements of the form $a^{\dag_\xi} a$ with $a \in \mathfrak{A}_{\text{loc}}(\Omega)$ to positive numbers. Therefore, $\varphi|_{\mathfrak{A}_{\text{loc}}} (\Omega)$ be interpreted as a state on a $C^*$-algebra with an involution defined by $\xi$.
\item
The Hahn-Banach theorem \cite[prop. 2.3.24]{bratteli1987operator} guarantees the existence of an extension of the state $\varphi|_{\mathfrak{A}_{\text{loc}}(\Omega)}$ on $\mathfrak{A}(\xi^{1/2})$, $\varphi_\xi$, which is also a state. I emphasize that for all $a \in \mathfrak{A}_{\text{loc}}(\Omega)$, the expectation values $\varphi_\xi(a) = \varphi(a)$ are equivalent.
\item
Consider the isometric $*$-isomorphism $\phi_{\xi^{-1/2}}:\mathfrak{A}(\xi) \to \mathfrak{A}$ given in \cref{isometry2}. This isomorphism is interpreted as a \textit{local} similarity transform, since $\xi$ admits a tensor product factorization. Following \cite{williams1969operators}, note $\phi_{\xi^{-1/2}}$ maps a local quasi-Hermitian element with metric $\xi$ to a local Hermitian element. Thus, I define a state on the Hermitian algebra,
\begin{align}
\varphi_H := \varphi_\xi \circ \phi_{\xi^{1/2}}.
\end{align}
This state is equivalent to $\varphi_\xi$ in the sense that the expectation value of a local observable $a \in \mathfrak{A}_{\text{loc}}(\Omega)$ can be equivalently computed as the expectation value of a local Hermitian observable, $\varphi_H(\phi_{\xi^{-1/2}}(a))$.
\end{itemize}
In summary, the expectation value of any local quasi-Hermitian observable can equivalently be computed as the expectation value of the same local operator in the new algebra, $\mathfrak{A}(\xi^{1/2})$, using the state $\varphi_\xi$.

%
%We need two sets of proofs. One that the commuting model doesn't contain any extra correlations, and one that the quantum spatial model doesn't contain anything extra.
%
%In the quantum spatial model, the algebra is assumed to factorize as a tensor product. A special case of algebras which factorize as tensor products are those commutative algebras which are additive and satisfy the property (S) of Roos, but that may not be necessary?
%
%Regardless, a general commuting operator model only assumes the alice operators $A$ are assumed to live in some subalgebra, $\mathfrak{A}_A$, and the bob measurement operators $B$ are in some other commuting subalgebra, $\mathfrak{A}_B$. Perform a measurement 

\subsection{Construction of the Equivalent State}

The proof of the existence of the equivalent Hermitian state in the tensor product model given in \cref{TPM-Equiv-State-Section} is non-constructive. This section discusses under what circumstances an explicit equivalent state can be computed.

Suppose there exists an onto, completely-positive projection, $\Phi:\mathfrak{A}(\xi^{1/2}) \to \mathfrak{A}_{\text{loc}}$. The onto condition forces $\Phi(a) = a$ whenever $a \in \mathfrak{A}_{\text{loc}}$.% Elaborate?
Then, given a state $\varphi:\mathfrak{A}(\Omega) \to \mathbb{C}$, an explicit construction for the Hermitian state would be
\begin{align}
\omega_H := \omega \circ \Phi \circ \phi_{\xi^{1/2}}.
\end{align}
While this certainly would be nice, $\Phi$ only exists if $\mathfrak{A}_{\text{loc}}$ is injective in the category whose objects are operator systems and whose morphisms are completely positive maps \cite[prop. 15.1]{paulsen2002completely}. Not all $C^*$ algebras are injective \cite{Huruya1985}. However, Arveson's extension theorem guarantees the injectivity of the $C^*$-algebra of bounded operators on a Hilbert space, $\mathcal{B}(\mathcal{H})$ \cite{Arveson1969}.

A simple example where a completely-positive projection admits a closed-form expression occurs when the domain and codomain are algebras of matrices over $\mathbb{C}$. Explicitly, consider the canonical basis elements, $E_{ij}$, of the algebra of $n \times n$ matrices,
\begin{align}
(E_{ij})_{kl} := \delta_{ik} \delta_{jl}.
\end{align}
This basis is orthonormal under the Frobenius inner product. Thus, complete-positivity of the map $\Phi:\mathfrak{M}_n(\mathbb{C}) \to \mathfrak{M}_m(\mathbb{C})$ with $m \leq n$ defined by
\begin{align}
\Phi(a) := \sum_{i, j = 1}^m E_{ij} \text{Tr}(E_{ji} a)
\end{align}
follows from Choi's theorem\footnote{A similar computation was performed in \cite{Breuer}.}~\ref{Choi-Theorem} \cite{Choi1975}.

\section{Locality in Quasi-Hermitian Free Fermions} \label{FermionObservableAlgebraSection}

This section considers the framework of local fermions given in \cref{Lattice-Fermion-Locality-Defn} and \cite{BravyiKitaev}.

\subsection{Primer on Fermions} \label{FermionsIntro}

The building blocks of fermionic models in quantum theory are presented in this section. A reader who is familiar with this topic can safely skip this section, although I advise reviewing the notation of \cref{annihilation-vector} and \cref{defn:vacuum}. A more thorough introduction to free fermions in Hermitian quantum theory is given in \cite{nielsen2005fermionic} and an introduction geared towards the mathematically inclined reader with a curiosity for infinite-dimensional settings is given in \cite[\S 5.2]{Bratteli1997}. The latter introduction is based on the original work of \cite{Cook1953}.

%----------------%
Suppose the objective of modelling a single particle which resides in some physical space has been achieved. A subsequent goal would be to model a system with more than one particle residing in the same space. In the case where the particles are \textit{indistinguishable}, their states need to satisfy \textit{exchange statistics}. Fermions are particles which satisfy \textit{Fermi-Dirac statistics}, which means a multi-particle state picks up a negative sign when a pair of its single-particle states are swapped. A physical consequence of these statistics is that fermions obey the \textit{Pauli exclusion principle} \cite{Pauli1925,Pauli1940}: two fermions cannot co-exist in the same state. 

A mathematical setting adapted to fermionic systems is the \textit{canonical anticommutation relations algebra}, or \textit{CAR algebra} for short, over a complex inner product space \cite{Cook1953}. In this framework, multi-particle states are generated by applying \textit{creation operators} associated to single-particle states on the \textit{vacuum}. The reason why the CAR algebra implements the Pauli exclusion principle is because acting with a creation operator associated with the same state twice will result in the zero vector, so by design, no two particles can be in the same state.

Rather than directly working in the CAR algebra, this thesis only utilizes representations of the CAR algebra as operators on a Hilbert space. That means the following definition of the CAR algebra is sufficient:
\begin{defn} \label{defn-CAR}
Let $X$ denote a complex inner product space. Then, a \textit{representation of the CAR algebra} over $X$ on a Hilbert space, $\mathcal{H}$, is an antilinear map, $\hat{a}:X \to \mathcal{B}(\mathcal{H})$, which satisfies
\begin{align}
\hat{a}(v) \hat{a}(w) + \hat{a}(w) \hat{a}(v) &= 0, \label{general-CAR-1} \\
\hat{a}(v)^\dag \hat{a}(w) + \hat{a}(w) \hat{a}(v)^\dag &= \braket{v|w} \mathbb{1} \label{general-CAR-2}
\end{align}
for all $v, w \in X$. An operator in $\mathcal{R}(\hat{a})$ is referred to as an \textit{annihilation operator} and an operator in $\mathcal{R}(\hat{a})^\dag$ is referred to as a \textit{creation operator}.
\end{defn}

Explicit representations of the CAR algebra are given on \textit{Fock space} \cite{Fock1932}; for details, see \cite[Chp. 5]{Bratteli1997}.

The most basic state in a many-body Hilbert space is the \textit{vacuum}, which is defined for fermionic systems below.
\begin{defn} \label{defn:vacuum}
A \textit{vacuum} is a state in the kernel of $\hat{a}(v)$ for all $v \in \text{Dom}(\hat{a})$. Every CAR algebra representation has at least one vacuum. Starting in \cref{section-second-quantize} and continuing for the rest of this thesis, the vacuum is assumed to be \textit{unique} and will be denoted by $\ket{\emptyset}$. 
\end{defn}

The remainder of this section considers the finite-dimensional case with $X = \mathbb{C}^n$. Letting \gls{ei} denote the canonical basis of $\mathbb{C}^n$, I define the shorthand 
\begin{align}
\hat{a}_i := \hat{a}(e_i), &\quad& \hat{a}^\dag_i := \hat{a}(e_i)^\dag,
\end{align}
where $i$ is an element of $\dbrac{n}$, which is
\begin{align}
\dbrac{n} := \{1, \dots, n\}.
\end{align}
The anticommutation relations of \cref{general-CAR-1,general-CAR-2} applied to $\hat{a}_i$ generates the familiar canonical-anticommutation-relations, as displayed in \cref{CAR-normal}. By linearity, the action of $\hat{a}$ on an arbitrary vector is a linear combination of the $\hat{a}_i$,
\begin{align}
\hat{a}\left(\sum_{i = 1}^n v_i e_i \right) = \sum_{i = 1}^n v^*_i \hat{a}_i. \label{annihilation-vector}
\end{align}

Representations of the CAR algebra over finite-dimensional Hilbert spaces are constructed using the \textit{Jordan-Wigner transform} \cite{JordanWigner,WignerCollectedWorks}, which is discussed below. 
\begin{proposition}[Jordan-Wigner Transform] \label{JordanWignerProposition}
A representation of the CAR algebra, $\hat{a}:\mathbb{C}^n \to \mathcal{B}(\mathcal{H})$, exists on a finite-dimensional Hilbert space, $\mathcal{H}$, if and only if the dimension of $\mathcal{H}$ is an integer multiple of $2^{n}$. Furthermore, denoting this integer multiple as $m$, there exists a unitary, $U:\mathcal{H} \to \mathbb{C}^{m 2^{n}}$, referred to as the \textit{Jordan-Wigner transform}, such that 
\begin{align}
U \hat{a}_j U^\dag &= \left(\prod_{k < j} Z_k \right) \sigma_j \otimes \mathbb{1}_{\mathbb{C}^{(m-1) 2^{n}}}\\
U \hat{a}^\dag_j U^\dag &= \left(\prod_{k < j} Z_k \right) \sigma^\dag_j \otimes \mathbb{1}_{\mathbb{C}^{(m-1) 2^{n}}},
\end{align}
where 
\begin{align}
Z_k := \bigotimes_{j = 1}^{n} \begin{cases}
\begin{pmatrix}
-1 & 0 \\
0 & 1
\end{pmatrix} & \text{ if } j = k \\
\mathbb{1} & \text{ if } j \neq k,
\end{cases} &\quad& 
\sigma_k := \bigotimes_{j = 1}^n \begin{cases}
\begin{pmatrix}
0 & 1\\
0 & 0
\end{pmatrix} & \text{ if } j = k \\
\mathbb{1} & \text{ if } j \neq k.
\end{cases}
\end{align}
\end{proposition}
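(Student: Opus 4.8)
The statement to prove is the Jordan-Wigner Proposition, which has two parts: a characterization of when a CAR representation exists on a finite-dimensional Hilbert space (the dimension must be an integer multiple of $2^n$), and an explicit unitary conjugating the abstract annihilation/creation operators into the standard tensor-product form with the string operators $Z_k$ and ladder operators $\sigma_k$.

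The plan is to first verify the explicit formulas by direct computation, then use them to pin down the dimension count. Let me define the candidate operators $c_j := \left(\prod_{k < j} Z_k \right) \sigma_j \otimes \mathbb{1}$ on $\mathbb{C}^{2^n} \otimes \mathbb{C}^m$, and check that they satisfy \cref{general-CAR-1,general-CAR-2}. The key algebraic inputs are that $Z_k$ and $\sigma_k$ act nontrivially only on the $k$-th tensor factor, that $\{\sigma, \sigma^\dag\} = \mathbb{1}$ and $\sigma^2 = 0$ on a single qubit, and that $Z$ anticommutes with $\sigma$ on the same site ($Z \sigma = -\sigma$, $\sigma Z = \sigma$) while commuting with operators on other sites. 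The sign bookkeeping for $\{c_i, c_j\}$ with $i < j$ is the heart of this part: the string of $Z$'s attached to $c_j$ overlaps the site $i$ where $c_i$'s ladder operator lives, and the anticommutation $Z_i \sigma_i = -\sigma_i Z_i$ produces exactly the minus sign needed to force the anticommutator to vanish. I would carry out this check for the ordered case $i < j$ and note the $i > j$ case follows by the antisymmetry of the desired relation.

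Next I would establish existence and the dimension constraint. The forward direction is immediate: the explicit operators above furnish a representation on any $\mathbb{C}^{m 2^n}$, so a representation exists whenever $\dim \mathcal{H}$ is a multiple of $2^n$. For the converse, I would invoke the standard structure theory: the $c_j, c_j^\dag$ generate a copy of the CAR algebra over $\mathbb{C}^n$, which is a finite-dimensional $C^*$-algebra isomorphic to the full matrix algebra $\mathfrak{M}_{2^n}(\mathbb{C})$ (this is the well-known fact that the CAR algebra on $n$ modes is simple with a unique irreducible representation of dimension $2^n$, realized on the $2^n$-dimensional Fock space of \cref{defn:vacuum} and \cite{Cook1953,Bratteli1997}). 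Any representation of a full matrix algebra $\mathfrak{M}_{2^n}(\mathbb{C})$ on a finite-dimensional Hilbert space is unitarily equivalent to a direct sum of $m$ copies of the defining representation, which forces $\dim \mathcal{H} = m \cdot 2^n$ and simultaneously produces the conjugating unitary $U:\mathcal{H} \to \mathbb{C}^{m 2^n}$ intertwining the abstract operators with the explicit Jordan-Wigner form. Thus the dimension count and the existence of $U$ come together from the representation theory of matrix algebras.

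The main obstacle I anticipate is the sign bookkeeping in the anticommutator computation, which must be done carefully across the three regimes $i < j$, $i = j$, and $i > j$, tracking how each $Z_i$ in the string commutes or anticommutes with the ladder operator it encounters; a clean way to organize this is to observe that $c_i c_j$ and $c_j c_i$ differ precisely by moving one $\sigma_i$ past one $Z_i$, contributing a single factor of $-1$. A secondary subtlety is justifying the uniqueness-up-to-multiplicity of the irreducible CAR representation without re-deriving Fock space from scratch; here I would lean on the cited simplicity of the CAR algebra and the classification of representations of finite-dimensional $C^*$-algebras, rather than reconstructing it. Everything else — antilinearity of $\hat{a}$ and the reduction of \cref{general-CAR-1,general-CAR-2} to the single-mode relations \cref{CAR-normal} via \cref{annihilation-vector} — is routine once the explicit operators are in hand.
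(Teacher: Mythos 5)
Your proof is correct, but there is nothing in the paper to compare it against: the thesis states this proposition without proof, citing \cite{JordanWigner,WignerCollectedWorks}, so you have in effect supplied the missing argument. Your route is the standard one and it is complete: direct verification that the string operators satisfy the CAR relations (your single-site identities $Z\sigma=-\sigma$, $\sigma Z=\sigma$, $\sigma^2=0$, $\sigma\sigma^\dag+\sigma^\dag\sigma=\mathbb{1}$ are right, and the one overlapping $Z_i$ in the longer string is indeed the entire source of the sign for $i<j$), followed by the representation theory of $\mathfrak{M}_{2^n}(\mathbb{C})$, which delivers the dimension constraint and the intertwining unitary $U$ in one stroke. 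Two points deserve to be made explicit if you write this out. First, the classification you invoke---every finite-dimensional representation of a full matrix algebra is a multiple of the defining one---holds for \emph{unital} representations; a general $*$-homomorphism $\pi:\mathfrak{M}_{2^n}(\mathbb{C})\to\mathcal{B}(\mathcal{H})$ can carry an extra zero summand on $(\mathbb{1}-\pi(\mathbb{1}))\mathcal{H}$, which would break the count $\dim\mathcal{H}=m2^n$. This is excluded here because \cref{general-CAR-2} with $v=w$ a unit vector forces the generated algebra to contain $\mathbb{1}_{\mathcal{H}}$, so the induced representation of $\mathfrak{M}_{2^n}(\mathbb{C})$ is automatically unital (and faithful, by simplicity); you use this tacitly and should say so. Second, you silently and correctly read the multiplicity factor as $\mathbb{1}_{\mathbb{C}^m}$ on $\mathbb{C}^{2^n}\otimes\mathbb{C}^m\cong\mathbb{C}^{m2^n}$; as printed in the statement, $\sigma_j\otimes\mathbb{1}_{\mathbb{C}^{(m-1)2^n}}$ is dimensionally inconsistent, which is a typo in the paper rather than a defect in your argument.
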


The Jordan-Wigner transform provides a correspondence between a system of qubits and a system of free fermion system and, thus, can be applied to models of coupled qubits such as the XY model \cite{Lieb1961} and the transverse field Ising model \cite{Pfeuty1970}. There is some freedom in choosing the ordering of which qubit operators correspond to which fermion operators. For example, for every permutation, $p: \dbrac{n} \to \dbrac{n}$, there exists another transform, $U_p$, such that
\begin{align}
U_p \hat{a}_{p(i)} U^\dag_p &= \left(\prod_{j < i} Z_j \right) \sigma_i \otimes \mathbb{1}_{\mathbb{C}^{(m-1) 2^n}} \label{JP1} \\
U_p \hat{a}^\dag_{p(i)} U^\dag_p &= \left(\prod_{j < i} Z_j \right) \sigma^\dag_i \otimes \mathbb{1}_{\mathbb{C}^{(m-1) 2^n}}. \label{JP2}
\end{align}
Given a representation of the CAR algebra over a finite-dimensional Hilbert space with dimension $n$, if this representation has only one vacuum, by \cref{JordanWignerProposition}, the second-quantized Hilbert space is isomorphic to $\mathbb{C}^{2^n}$.

The set of creation and annihilation operators generate the entire space of operators on $\mathcal{H}$,
\begin{align}
\mathcal{B}(\mathcal{H}) = \text{span} \left \lbrace \left(\prod_{i \in S_1} \hat{a}^\dag_i \right) \left(\prod_{j \in S_2} \hat{a}^{}_j\right): S_1, S_2 \subseteq \dbrac{n}\right \rbrace. \label{OperatorSpanBasic}
\end{align}
Equivalently, the space of operators on $\mathcal{H}$ can be generated from linear combinations of creation and annihilation operators, so long as the linear combinations arise from vectors which form a linearly independent basis of $\mathbbm{C}^n = \text{span}\{v^\mu|\mu \in \dbrac{n} \} = \text{span} \{{w^\nu|\nu \in \dbrac{n} }\}$,
\begin{align}
\mathcal{B}(\mathcal{H}) = \text{span} \left \lbrace
\left(\prod\limits_{\mu \in S_1} \hat{a}(v^\mu)^\dag\right) \left( \prod\limits_{\nu \in S_2} \hat{a}(w^\nu)\right): S_1, S_2 \subseteq \dbrac{n} \right \rbrace. \label{OperatorSpan}
\end{align}

\subsection{Second-Quantized Pseudo-Hermiticity} \label{section-second-quantize}

Second-quantization is a procedure that takes a fixed quantum theory, which is referred to as \textit{first-quantized}, and maps it into a \textit{second-quantized} theory. The quintessential application of second-quantization is to generate quantum theories of non-interacting, or free, particles. While a first-quantized quantum theory describes the physics governing a single particle, its second-quantization describes a physical theory whose states are superpositions of multi-particle states. Second-quantization can generate models with either Bosonic or Fermionic statistics. I only analyse the fermionic case in this thesis.

In contrast with a generic many-body problem, many physical features of free fermionic models can be computed with computational resources which scale polynomially, rather than exponentially, with system size. Examples of such features include the ground state energy \cite{nielsen2005fermionic}, correlation functions, and entanglement entropies \cite{FreeFermionEntanglement}. Since free fermions can be solved using significantly less work than their interacting counterparts, their analysis can generate an understanding of more complicated models, such as the Hubbard model \cite{Hubbard}, via perturbation theory. 
%Examples of $\mathcal{PT}$-symmetric free fermions have been well-studied \cite{InfiniteLattice, jin2009solutions, Babbey, Korff2008, MyFirstPaper, PTRing}. What in these papers actually pertains to the free fermions???

The objective of this section is to find intertwining operators of second-quantized pseudo-Hermitian Hamiltonians. This was essentially achieved in \cite[prop. 4]{Mintchev1980}, where the problem was first solved distinguishable particles in Fock space, then solved for bosonic particles using bosonic symmetrization procedure. In contrast to \cite[prop. 4]{Mintchev1980}, I derive the intertwining operator for second-quantized fermionic Hamiltonians directly from the CAR algebra; I do not assume an explicit representation, such as one on a symmetrized Fock space.

Denote the Hilbert space of the first-quantized theory by $\mathcal{H}_1$. Second-quantization is performed using a representation of the CAR algebra over $\mathcal{H}_1$ on $\mathcal{H}$, as defined in \cref{defn-CAR}, and two maps, $d \Gamma$ and $\Gamma$, with domain $\mathcal{B}(\mathcal{H}_1)$ and codomain $\mathcal{B}(\mathcal{H})$. The map $d\Gamma$ defines Hamiltonians which are non-interacting, second-quantized, and which commute with the number operator\footnote{Due to the presence of pair creation and annihilation terms, general free-fermionic Hamiltonians do not commute with the number operator. While the general case includes physically interesting models, such as the Kitaev chain \cite{Kitaev2001}, this thesis only considers the particle-number conserving case.}. As will be seen in \cref{prop-second-quantized-pseudoHerm}, second-quantized time-evolution and intertwining operators associated to the second-quantized Hamiltonian are determined through $\Gamma$. Explicitly, $d \Gamma$ and $\Gamma$ are
\begin{align}
d \Gamma(H) &= \sum_{i} \hat{a}(H \psi_i)^\dag\hat{a}(\psi_i), \\
\Gamma &= e^{d \Gamma},
\end{align}
where $\psi_i$ denotes an orthonormal basis of $\mathcal{H}_1$. While the definition of $d \Gamma$ provided here requires a basis, $d \Gamma$ is independent of which orthonormal basis is chosen.
%\begin{lemma}
%Given any densely-defined operator $H$ on $\mathcal{H}_0$ and any two orthonormal bases, $\psi_i$ and $\phi_i$, of $\mathcal{H}_0$ such that $\psi_i, \phi_i \in \text{Dom}(H) \subseteq \mathcal{H}_1$,
%\begin{align}
%\sum_{i} \hat{a}^\dag(H \psi_i)\hat{a}(\psi_i) = \sum_{i} \hat{a}^\dag(H \phi_i)\hat{a}(\phi_i).
%\end{align}
%\end{lemma}
%\begin{proof}
%{\color{red} To be done.}
%\end{proof}

The next goal of this section is to quantify the algebraic properties of $d \Gamma$ and $\Gamma$. I'll start with the description of $d \Gamma$, which turns out to be an involutive Lie algebra homomorphism, a kind of structure-preserving map. More explicitly, $d \Gamma$ satisfies three key properties:
\begin{align}
[d \Gamma(h_1), d \Gamma(h_2)]_- &= d\Gamma([h_1,h_2]_-) &\quad& \forall h_1, h_2 \in \mathcal{B}(\mathcal{H}_1) \\
d \Gamma(\alpha h_1 + \beta h_2) &= \alpha \,d \Gamma(h_1) + \beta \,d \Gamma(h_2) &\quad& \forall h_1, h_2 \in \mathcal{B}(\mathcal{H}_1), \alpha, \beta \in \mathbb{C} \\
d \Gamma(h^{}_1)^\dag &= d\Gamma(h_1^\dag) &\quad& \forall h_1 \in \mathcal{B}(\mathcal{H}_1).
\end{align}
\begin{theorem} \label{prop-second-quantized-pseudoHerm}
If $M = M^\dag \in {\normalfont \text{GL}}(\mathcal{B}(\mathcal{H}_1))$ is an intertwining operator for $H \in \mathcal{B}(\mathcal{H}_1)$, then $\Gamma(\log M)$ is an invertible, Hermitian intertwining operator for $d \Gamma(H)$. If $M$ is positive-definite, then $\Gamma(\log M)$ is a metric operator rendering $d \Gamma(H)$ quasi-Hermitian.
\end{theorem}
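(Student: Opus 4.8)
The plan is to set $m := \log M$ and $\eta := \Gamma(\log M) = e^{d\Gamma(m)}$, and to reduce every claim about $\eta$ to a statement about the one-particle operator $M$ using the three structural properties of $d\Gamma$ recorded just above the theorem: linearity, the Lie-homomorphism identity $[d\Gamma(h_1), d\Gamma(h_2)]_- = d\Gamma([h_1,h_2]_-)$, and involution-preservation $d\Gamma(h)^\dag = d\Gamma(h^\dag)$. Invertibility is immediate, since $\eta$ is an operator exponential with inverse $e^{-d\Gamma(m)} = \Gamma(-\log M)$.

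For the intertwining relation I would apply the Hadamard lemma $e^{A} B e^{-A} = \sum_{k=0}^\infty \frac{1}{k!}[A,B]^{(k)}$ (with $[A,B]^{(0)}=B$ and $[A,B]^{(k)}=[A,[A,B]^{(k-1)}]$) to $A = d\Gamma(m)$ and $B = d\Gamma(H)$. Induction on the Lie-homomorphism property gives $[d\Gamma(m), d\Gamma(H)]^{(k)} = d\Gamma\big([m,H]^{(k)}\big)$, so linearity lets me resum termwise:
\begin{align}
\eta\, d\Gamma(H)\, \eta^{-1} = d\Gamma\!\left(\sum_{k=0}^\infty \tfrac{1}{k!}[m,H]^{(k)}\right) = d\Gamma(e^m H e^{-m}) = d\Gamma(M H M^{-1}).
\end{align}
The pseudo-Hermiticity hypothesis $M H M^{-1} = H^\dag$ together with involution-preservation then yields $\eta\, d\Gamma(H)\, \eta^{-1} = d\Gamma(H^\dag) = d\Gamma(H)^\dag$, i.e. $\eta\, d\Gamma(H) = d\Gamma(H)^\dag \eta$. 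In finite dimensions the interchange of $d\Gamma$ with the convergent series is automatic.

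The delicate point—and the step I expect to be the main obstacle—is Hermiticity of $\eta$ when $M$ is merely Hermitian and invertible rather than positive-definite. Here $\eta^\dag = e^{d\Gamma(m)^\dag} = e^{d\Gamma(m^\dag)}$, and $m^\dag = (\log M)^\dag$ need not equal $m$: diagonalizing $M = \sum_j \mu_j P_j$ with $\mu_j \in \mathbb{R}\setminus\{0\}$, the principal logarithm obeys $(\log M)^\dag = \log M - 2\pi\mathfrak{i}\, Q$, where $Q := \sum_{j:\mu_j<0} P_j$ projects onto the negative eigenspaces. The key observations are that $[m,Q]_-=0$ (both are functions of $M$), so $[d\Gamma(m), d\Gamma(Q)]_- = d\Gamma([m,Q]_-) = 0$, and that $d\Gamma(Q)$ is the number operator counting occupancy of the negative eigenspace of $M$, whose spectrum lies in $\mathbb{N}$. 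Hence $e^{-2\pi\mathfrak{i}\, d\Gamma(Q)} = \mathbb{1}$ and
\begin{align}
\eta^\dag = e^{d\Gamma(m) - 2\pi\mathfrak{i}\, d\Gamma(Q)} = e^{d\Gamma(m)}\, e^{-2\pi\mathfrak{i}\, d\Gamma(Q)} = e^{d\Gamma(m)} = \eta.
\end{align}
This integrality of the particle number is precisely what renders the branch ambiguity of $\log M$ invisible after second quantization.

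Finally, when $M$ is positive-definite the argument specializes and yields the metric property directly: $m=\log M$ is Hermitian, so $d\Gamma(m)$ is self-adjoint by involution-preservation, and therefore $\eta = e^{d\Gamma(m)}$ is positive-definite and invertible. Being a positive-definite intertwining operator for $d\Gamma(H)$, it is a metric operator in the sense of \cref{quasiHermDefn}, which renders $d\Gamma(H)$ quasi-Hermitian. Note this case also re-derives Hermiticity as a special instance of the general argument above, with $Q=0$.
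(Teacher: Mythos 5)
Your proof is correct, and its core step --- establishing the intertwining relation via the Hadamard lemma $e^{X} Y e^{-X} = e^{\mathrm{ad}_X} Y$, pushed through the Lie-algebra homomorphism property of $d\Gamma$ to obtain $\Gamma(X)\, d\Gamma(H)\, \Gamma(-X) = d\Gamma(e^{X} H e^{-X})$ with $X = \log M$ --- is exactly the paper's argument (the paper invokes this identity as the infinitesimal Baker--Campbell--Hausdorff formula), as is your handling of the positive-definite case. Where you genuinely go beyond the paper is the Hermiticity of $\Gamma(\log M)$ when $M$ is merely Hermitian and invertible: the theorem asserts it, but the paper's proof never addresses it, silently passing over the fact that the principal logarithm of a Hermitian operator with negative eigenvalues is non-Hermitian. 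Your resolution --- writing $(\log M)^\dag = \log M - 2\pi \mathfrak{i}\, Q$ with $Q$ the spectral projection onto the negative eigenspace of $M$, noting that $[d\Gamma(\log M), d\Gamma(Q)]_- = d\Gamma([\log M, Q]_-) = 0$ since both are functions of $M$, and then killing the phase via $e^{-2\pi \mathfrak{i}\, d\Gamma(Q)} = \mathbb{1}$ because the fermionic number operator $d\Gamma(Q)$ has spectrum in $\mathbb{N}$ --- is sound and supplies precisely the missing ingredient; the observation that integrality of particle number makes the branch ambiguity of $\log M$ invisible after second quantization is a real addition to the paper's proof. The only caveats are ones the paper shares rather than resolves: in infinite dimensions the termwise resummation and the factorization $e^{A+B} = e^{A} e^{B}$ need boundedness or commuting-normality arguments (your pieces $d\Gamma(\ln|M|)$ and $\pi\, d\Gamma(Q)$ are commuting self-adjoint operators, so this goes through), and strict positivity of $\eta$ as a bounded-below metric is automatic only in finite dimensions, which is the setting you explicitly flag.
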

\begin{proof}
%A very useful commutation relation is 
%\begin{align}
%[d \Gamma(H), (\hat{a}(f))^\dag]_- = (\hat{a}(H f))^\dag. \label{eqn-dGamma-Commutator}
%\end{align}
%This commutator can be used to compute the action of $\Gamma$ on $\hat{a}^\dag$. 

For every $X \in \mathcal{B}(\mathcal{H})$, define $\text{ad}_X:\mathcal{B}(\mathcal{H}) \to \mathcal{B}(\mathcal{H})$ by 
\begin{align}
\text{ad}_X(Y) := [X,Y]_- &\quad& \forall Y \in \mathcal{B}(\mathcal{H}).
\end{align}
%Thus, the commutation relation \cref{eqn-dGamma-Commutator} implies 
%\begin{align}
%\Gamma(M) \hat{a}(f)^\dag \Gamma(-M) = \hat{a}(e^M f)^\dag.
%\end{align}
%Taking an adjoint,
%\begin{align}
%\Gamma(-M) \hat{a}(f) \Gamma(M) = \hat{a}(e^{M} f)^\dag. \label{eqn-Gamma-On-a}
%\end{align}
Since $d \Gamma$ is an involutive Lie algebra homomorphism, the \textit{infinitesimal Baker-Campbell-Hausdorff formula} \cite[eq. (2.1)]{muger2019notes}, which is
\begin{align}
e^X Y e^{-X} = e^{\text{ad}_X} Y,
\end{align}
implies
\begin{align}
\Gamma(X) \, d \Gamma(H) \, \Gamma(-X) = d \Gamma(e^{X^{}} H e^{-X}).
\end{align}
That $\Gamma(\log M)$ is an intertwining operator follows from the previous formula with the choice $X = \log M$. When $M$ is positive-definite, since $d \Gamma$ is a positive map, $\Gamma(\log M)$ is also positive-definite.
\end{proof}

The exponential form of the intertwining operator of a second-quantized Hamiltonian may seem ad-hoc. An alternative definition is that the intertwining operator, $\eta:= \Gamma(\log M)$, is the unique solution to the equations
\begin{align}
\eta \hat{a}(\psi)^{\dag} &= \hat{a}(M \psi)^{\dag} \eta \label{reduced metric to metric} \\
\eta \ket{\emptyset} &= \ket{\emptyset}.
\end{align} 
If $M$ is invertible, an expression for the action of $\eta$ on the annihilation operators is 
\begin{align}
\eta \hat{a}(\psi) &= \hat{a}(M^{-1} \psi) \eta.
\end{align}

%STUDY EVOLUTION OF REDUCED NUMBER OPERATORS IN TIME?
For every invertible first-quantized intertwining operator, $M = M^\dag \in \text{GL}(\mathcal{B}(\mathcal{H}_1))$, the \textit{number operator},
\begin{align}
\hat{n}:&= \sum_i \hat{a}(\psi_i)^{\dag} \hat{a}(\psi_i),
\end{align}
is pseudo-Hermitian with the intertwining operator $\Gamma(\log M)$.
Thus, given two second-quantized Hamiltonians related by a chemical potential, $d\Gamma(H') = d\Gamma(H) + \mu \hat{n}$, the spaces of their second-quantized intertwining operators are identical.

When $\mathcal{H}_1 = \mathbb{C}^n$, a neat formula for matrix-elements of $\eta$ is determined by the minors of the first-quantized metric, $M \in \mathfrak{M}_n(\mathbb{C})$. Given a subset, $S \subseteq \dbrac{n}$, the matrix elements of $\eta$ are
\begin{align}
\ket{S} &:= \prod_{i \in S} a^\dag_i \ket{\emptyset} \\
\braket{S|\eta|S'} &= \begin{cases}
\det M_{S S'} & \text{ if } \text{card} (S) = \text{card} (S')\\
0 & \text{otherwise},
\end{cases}
\end{align}
where $M_{S S'}$ denotes the matrix formed by taking all rows and columns with indices in $S$ and $S'$ respectively.

\subsection{Second-Quantized Hamiltonian Eigensystems}
With second-quantization of pseudo-Hermitian systems established in the previous section, we are now ready to address the eigenvalue problem associated to second-quantized Hamiltonians. Notably, the eigensystem of the second-quantized Hamiltonian $d \Gamma(H)$ is determined by the eigensystem of $H$. For simplicity, this section is devoted to the finite-dimensional case with $\text{Dom}(\hat{a}) = \mathbb{C}^n$.

A subset of the spectrum of $d \Gamma(H)$ is determined by the spectrum of $H$,
\begin{align}
\left\{\sum_{\lambda \in S} \lambda \,|\, S \in \mathbb{P}(\sigma(H)) \right\} \subseteq \sigma(d\Gamma(H)),
\end{align}
where equality is obtained if $H$ is diagonalizable. The eigenspaces of $d \Gamma(H)$ can be described after introducing some notation. Given a vector space, $V \subseteq \mathbb{C}^n$, let 
\begin{align}
\mathcal{H}_V := \text{span}
\left\{ \left(\prod\limits_{i = 1}^{\dim{V}} \hat{a}(v_i)^\dag \right) \ket{\emptyset} \,|\, v \in \times_{i=1}^{\dim V} V \right\},
\end{align}
where $\times$ denotes the Cartesian product, so $v$ refers to a tuple of vectors in the above expression.
Given a subset, $S \subseteq \sigma(H)$, of eigenvalues of $H$, a corresponding set of eigenvectors of $d \Gamma(H)$ is generated by corresponding eigenvectors of $H$,
\begin{align}
\mathcal{H}_{\oplus_{\lambda \in S}\ker(\lambda \mathbb{1} - H)} \subseteq \ker\left(\left(\sum\limits_{\lambda \in S} \lambda \right) \mathbb{1} - d \Gamma(H)\right).
\end{align}
Two special cases of eigenvalue/eigenvector pairs of the second-quantized Hamiltonian are
\begin{align}
\ket{\emptyset} &\in \ker(d \Gamma(H)) \\
\left(\prod_{i = 1}^n \hat{a}^\dag_i \right) \ket{\emptyset} &\in \ker \left(d \Gamma(H) - \sum_{\lambda \in \sigma(H)} \lambda \mathbb{1} \right).
\end{align} 

Assume $H$ is diagonalizable, so it can be expressed as
\begin{align}
H_{ij} = \sum_k U_{ik} \epsilon_k U^{-1}_{kj}.
\end{align}
In this case, a set of \textit{pseudo-Fermionic operators} \cite{Bagarello2022Book} is used to diagonalize the Hamiltonian. These operators are
\begin{align}
\hat{c}_k &= \sum_i U_{ik} \hat{a}^\dag_i,\\
\hat{d}_k &= \sum_j U^{-1}_{kj} \hat{a}_j,
\end{align}
which satisfy the relations 
\begin{align}
[\hat{c}_i,\hat{d}_j]_+ &= \delta_{ij} \mathbbm{1}, \\
[\hat{c}_i,\hat{c}_j]_+ &= 0 = [\hat{d}_i,\hat{d}_j], 
\end{align}
and which diagonalize the Hamiltonian via 
\begin{align}
d\Gamma(H) &= \sum_{k} \epsilon_k \hat{c}_k \hat{d}_k.
\end{align}
Note the vacuum is in the kernel of every $\hat{d}_i$.

Assuming $H$ is quasi-Hermitian, a metric for $d \Gamma(H)$ can be constructed from a set of eigenstates of $d \Gamma(H)^\dag$ \cite{BiOrthogonal}. In particular, one class of metrics for free fermions is
\begin{align}
\eta &= \sum_{S} \kket{S_\Xi} \bbra{S_\Xi} \\
\kket{S_\Xi} &:= \prod_{i \in S} \sum_{j} \Xi_{ij}^{1/2} \hat{d}^\dag_k \ket{\emptyset}, \label{free_fermion_metric}
\end{align}
where $S$ is a subset of indices, $S \subseteq \dbrac{n}$, and $\Xi$ is Hermitian, positive-definite, and commutes with $\text{diag}(\epsilon)$, its role is to select a possibly different set of eigenvectors of $H$. In this case, we have 
\begin{align}
M = (U \Xi U^\dag)^{-1}.
\end{align}

%Matrix elements are determinants of M_{S S'}, this doesn't matter though

Motivated by \cite{Korff2008}, this metric can be equivalently expressed as the unique solution to the operator equations
\begin{align}
\eta \hat{c}_k &= \sum_l \Xi_{k l} \hat{d}^{\dag}_l \eta, \label{rule1}\\
\eta \ket{\emptyset} &= \ket{\emptyset}. \label{rule2}
\end{align}
%\begin{align}
%\eta d_l = \sum_{k} \Lambda^{-1}_{lk} c^{\dag}_k \eta, \label{rule1 herm conj}
%\end{align}
The condition \cref{rule2} is imposed for Hermiticity: if $\eta$ satisfies \cref{rule1} as well as Hermiticity,% then \cref{rule1 herm conj} holds, thus
\begin{equation}
0 = \eta \hat{d}_l \ket{\emptyset} = \sum_{k} \Xi^{-1}_{lk} \hat{c}^{\dag}_k \eta \ket{\emptyset}.
\end{equation}
Multiplying by $\Xi U^{-1}$, this reduces to
\begin{align}
\hat{a}_i \eta \ket{\emptyset} = 0 &\quad& \forall i \in \dbrac{n}.
\end{align} 
Since the vacuum is unique, $\eta \ket{\emptyset}$ must be a multiple of the vacuum, $\eta \ket{\emptyset} = \eta_{00} \ket{\emptyset}$. Rescaling the metric preserves expectation values, so $\eta_{00}$ can be set to 1.

The metric of \cref{free_fermion_metric} is not the most general metric associated with $d \Gamma(H)$; if there are degeneracies in the spectrum of $d \Gamma(H)$ which are not present\footnote{One simple example where this happens is when the first-quantized Hamiltonian has a chiral symmetry, so that the eigenvalues of $H$ come in pairs $(\epsilon, -\epsilon)$.} in $H$, then there are metric operators associated with $d \Gamma(H)$ which are not of the form \cref{free_fermion_metric}. Furthermore, a general metric can have distinct actions in the different eigenspaces of the number operator.

The operators defined by
\begin{align}
\hat{n}_k &:= \hat{c}_k \hat{d}_k
\end{align}
are examples of quasi-Hermitian observables.

\subsection{Local Observable Algebras} \label{free fermion locality}
A naive attempt to construct local observable algebras for models of fermions would be to use the Jordan-Wigner transform to map the problem into a tensor product model, and then  
apply results that hold in that case, such as \cref{quasilocal_theorem}. There are two issues with this approach. One is simply aesthetic, it would be desirable to generate the local observable algebras directly from the canonical anticommutation relations, rather than from a Hilbert space representation. A more severe issue is that the Hilbert space dimension scales exponentially with lattice size. In the likely scenario where numerics are needed to solve resulting operator equations, deriving local observable algebras can become impossible even for relatively small system sizes.

This section characterizes local observable algebras associated to systems with $n$ quasi-Hermitian free fermions. 

The technique is to reduce the problem to one in the first-quantized setting. A relationship between first and second-quantized quasi-Hermitian free fermions is detailed in \cref{FermionsIntro}. The computations in this section will be performed with the metric operator defined in \cref{reduced metric to metric}.

A correspondence between first-quantized observables algebras and second-quantized observable algebras is desired. A first step towards this correspondence comes from the observation that if $o$ is quasi-Hermitian with the metric $M$ 
if and only if 
\begin{equation}
O = \sum_{ij} o_{i j} \hat{a}^\dag_i \hat{a}_j \label{proj_obs} 
\end{equation}
%SAME EQ. IF YOU INSERT A PROJECTOR...
is quasi-Hermitian observable with respect to a metric $\eta$ which reduces to $M$ via \cref{reduced metric to metric}. 
The subclass of operators of the form \cref{proj_obs} which are local to a subsystem $A \subset \dbrac{n}$ are those satisfying $o_{ij} = 0$ if $i \in A^c$ or $j \in A^c$, where 
\begin{align}
A^c := \dbrac{n}\setminus A
\end{align}
denotes the complement of $A$. Note $o$ is local in the sense of \cref{finite-lattice-locality}. Extensively local observables satisfy the additional constraint that for all $i \in A$, there exists $j \in A$ such that either $o_{ij} \neq 0$ or $o_{ji} \neq 0$. Let us refer to such matrices $o$ as extensively local reduced observables. The existence of extensively local reduced observables turns out to be necessary for the existence of extensively local quasi-Hermitian observables for the free fermion metric of \cref{reduced metric to metric}, as will be shown shortly.

Before proving this result, I will elaborate on extensively local reduced observables. 
Since extensively local reduced observables are block matrices, let us define some notation relating to block decompositions of matrices. Let $M^{AB}$ denote the block of matrix elements $M_{ij}$ with $i \in A, j \in B$. In particular (rearranging columns and rows in $M$ as necessary),
\begin{equation}
M = \begin{pmatrix}
M^{AA} & M^{A A^c} \\ M^{A^c A} & M^{A^c A^c}
 \end{pmatrix},
\end{equation}
and $M^{\{i\} \dbrac{n}}$ denotes the $i^\text{th}$ row of $M$. When necessary, the matrix elements $M^{AB}$ will be considered as an operator mapping $\text{span} \{e_i:i\in B\}$ to $\text{span} \{e_i:i\in A\}$, where $\gls{ei}$ denotes the canonical basis of $\mathbb{C}^n$.

In addition, let $K(A) := \dim \ker M^{A^c A}$, and let $\{w^\mu | \mu \in \dbrac{K(A)}\}$ denote a basis of $\ker M^{A^c A}$. The most general local quasi-Hermitian reduced observable was determined in \cref{quasi-Hermitian-Lattice-Locality}; $K(A)$ is a measure of how many local observables are in subsystem $A$.

The next theorem relates the properties of observables containing an arbitrary number of products of the creation and annihilation operators to the properties of $\ker M^{A^c A}$.
\begin{theorem} \label{polynomial theorem}
Extensively local observables in subsystem $A$ which are quasi-Hermitian with respect to the metric of \cref{reduced metric to metric} exist if and only if 
\begin{equation} 
K(A) := \dim \ker M^{A^c A} > \dim \ker M^{S^c S} \label{Kernel equation} 
\end{equation} for all proper subsets $S \subsetneq A$. In addition, an operator, $O \in \mathfrak{A}_A(\eta^{1/2})$, is a local quasi-Hermitian observable if and only if it can be expressed of the form
\begin{equation}
O = \sum_{{\mtiny S_1, S_2}} O_{S_1 S_2} \left(\prod_{\mu \in S_1} \hat{a}(w^\mu)^\dag\right) \left( \prod_{\nu \in S_2} \hat{a}(M^{AA} w^\nu) \right)\label{GeneralBKQH},
\end{equation}
where $S_1, S_2 \in \mathbbm{P}(\dbrac{K(A)})$, $O_{S_1 S_2} = O^*_{S_2 S_1}$, $O_{S_1 S_2} = 0$ when $|S_1|+|S_2| \equiv 1 \mod 2$, and $\{w^\mu|\mu \in \dbrac{K(A)} \}$ is a basis of $\ker M^{A^c A}$. %CONVENIENT CHOICE: which is orthonormal under the inner product of $M^{-1}_{AA}$.
%NO v^\dag M in input to a() because of ^* in definition
\end{theorem}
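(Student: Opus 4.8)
The plan is to reduce everything to the first-quantized metric $M$ and the structure of $\ker M^{A^c A}$, then bootstrap from the quadratic observables of the form \cref{proj_obs} to the general even-degree observables of \cref{GeneralBKQH}. First I would establish the claim for quadratic observables $O = \sum_{ij} o_{ij} \hat{a}^\dag_i \hat{a}_j$ that are extensively local to $A$. Using the characterization of local quasi-Hermitian reduced observables from \cref{quasi-Hermitian-Lattice-Locality} (whose quasi-Hermitian elements are $\text{span}_{\mathbb{R}}\{\ket{\psi}\bra{\psi} M : \ket{\psi} \in V_A(M)\}$ with $V_A(M) = \ker(\mathscr{P}_{A^c} M|_{\text{span}\{e_i : i \in A\}})$), I would identify $V_A(M)$ with $M^{AA}(\ker M^{A^c A})$ — this is the key translation, since $\ker M^{A^c A}$ parametrizes exactly those vectors $w$ supported on $A$ whose image under $M$ stays in $\text{span}\{e_i : i \in A\}$. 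The dimension count $K(A) := \dim \ker M^{A^c A}$ then measures the available local quasi-Hermitian reduced observables, and the strict-inequality condition \cref{Kernel equation} is precisely the statement that a local quasi-Hermitian observable exists which is \emph{not} already supported on a proper subset $S \subsetneq A$, i.e. that extensively local observables exist.

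Next I would pass from the reduced (first-quantized) picture to the second-quantized picture. The correspondence noted before \cref{proj_obs} shows that $o$ being quasi-Hermitian with metric $M$ is equivalent to $O = \sum_{ij} o_{ij} \hat{a}^\dag_i \hat{a}_j$ being quasi-Hermitian with metric $\eta = \Gamma(\log M)$; this follows from \cref{prop-second-quantized-pseudoHerm} together with the defining intertwining relations \cref{reduced metric to metric} for $\eta$. For the general form \cref{GeneralBKQH} I would verify directly that each monomial $\big(\prod_{\mu \in S_1} \hat{a}(w^\mu)^\dag\big)\big(\prod_{\nu \in S_2} \hat{a}(M^{AA} w^\nu)\big)$ is local to $A$ (since $w^\mu \in \ker M^{A^c A} \subseteq \text{span}\{e_i : i \in A\}$ and $M^{AA} w^\nu$ is supported on $A$ by construction) and, using the relation $\eta \hat{a}(\psi) = \hat{a}(M^{-1}\psi)\eta$, that it transforms correctly under $\dag_\eta$. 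Concretely, $\eta^{-1}\hat{a}(w^\mu)^\dag \eta = \hat{a}(M w^\mu)^\dag$ and $\eta^{-1}\hat{a}(M^{AA}w^\nu)\eta = \hat{a}(M^{-1}M^{AA}w^\nu)$; the fact that $M w^\mu = M^{AA} w^\mu$ (because $w^\mu \in \ker M^{A^c A}$) makes the creation and annihilation pieces dual to each other, so the Hermiticity-under-$\dag_\eta$ constraint is exactly $O_{S_1 S_2} = O^*_{S_2 S_1}$. The even-parity condition $O_{S_1 S_2} = 0$ for $|S_1| + |S_2|$ odd is forced by the subsystem-independence requirement built into the fermionic locality model of \cref{Lattice-Fermion-Locality-Defn}.

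For completeness (the ``only if'' direction of the spanning claim) I would argue that every local quasi-Hermitian observable must be expressible in the form \cref{GeneralBKQH}. Here I would use \cref{OperatorSpan}: the full operator algebra is spanned by products of creation and annihilation operators built from \emph{any} linearly independent bases of $\mathbb{C}^n$, so I am free to choose a basis adapted to the decomposition $\mathbb{C}^n = \ker M^{A^c A} \oplus (\text{complement})$ on the $A$ sites together with a basis of the $A^c$ sites. An operator local to $A$ can only involve the $A$-site creators/annihilators; imposing quasi-Hermiticity with respect to $\eta$ then, term by term in the number-operator grading, forces the annihilation parts to be built from $M^{AA}w^\nu$ and the creation parts from $w^\mu$, recovering \cref{GeneralBKQH}. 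The main obstacle I anticipate is the careful bookkeeping in this last step: disentangling the intertwining action of $\eta$ across products of several creation and annihilation operators, and showing that the constraint does not merely permit but \emph{requires} the $\ker M^{A^c A}$-adapted basis — essentially showing that any quasi-Hermitian local element whose reduced part lies outside $\ker M^{A^c A}$ would fail to be extensively local or would violate quasi-Hermiticity. I expect this to hinge on the anticommutation relations \cref{general-CAR-1,general-CAR-2} reducing the multi-particle constraint to the single-particle kernel condition, much as in the quadratic case, so the strict inequality \cref{Kernel equation} propagates to all degrees.
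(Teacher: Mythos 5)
Your proposal is correct and follows essentially the same route as the paper's own proof: expand an arbitrary element of $\mathfrak{A}_A$ via \cref{OperatorSpan} in vectors supported on $A$, conjugate by $\eta$ using the intertwining relations of \cref{reduced metric to metric} so that locality of the conjugated operator forces the creation vectors into $\ker M^{A^c A}$ and the annihilation vectors into $\ker (M^{-1})^{A^c A} = M^{AA}(\ker M^{A^c A})$, and then count kernel dimensions against proper subsets to settle extensiveness. Two harmless slips worth fixing: $V_A(M)$ is $\ker M^{A^c A}$ itself, not $M^{AA}(\ker M^{A^c A})$ (the latter only enters on the annihilation side), and your displayed relations such as $\eta^{-1} \hat{a}(w^\mu)^\dag \eta = \hat{a}(M w^\mu)^\dag$ are really the $\eta(\cdot)\eta^{-1}$ conjugates --- the correct $\dag_\eta$ versions swap $M \leftrightarrow M^{-1}$ --- but since the swap is consistent, the creation/annihilation duality you extract and the resulting constraint $O_{S_1 S_2} = O^*_{S_2 S_1}$ are unaffected.
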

\begin{proof}
Let $O_A \in \mathfrak{A}_A \setminus \{0\}$. Using \cref{OperatorSpan}, for every $O_A$, there exists linearly independent sets of vectors $\{f^\mu| \mu \in \dbrac{\mathcal{F}}\}, \{g^\nu|\nu \in \dbrac{\mathcal{G}}\}\subset \mathbbm{C}^n$ satisfying $f^\mu,g^\nu \in \text{span}\{e_i:i\in A\}$ such that
\begin{align}
O_A &= \sum_{S_1 \in \mathbbm{P}(\dbrac{\mathcal{F}})}\sum_{S_2 \in \mathbbm{P}(\mathcal{G})} O_{S_1 S_2} \left(\prod_{\mu \in S_1} \hat{a}(f^\mu)^\dag\right) \left( \prod_{\nu \in S_2} \hat{a}(g^\nu) \right)
, \label{sum}
\end{align}
where $O_{S_1 S_2} \in \mathbbm{C}$. Without loss of generality, $f^\mu$ and $g^\nu$ can be chosen such that every such vector appears in the sum in \cref{sum} at least once. The quasi-Hermiticity condition applied to $O_A$ implies
\begin{align}
O^\dag &= \eta O \eta^{-1} \\
&= \sum_{S_1 \in \mathbbm{P}(\dbrac{\mathcal{F}})}\sum_{S_2 \in \mathbbm{P}(\dbrac{\mathcal{G})}} O_{S_1 S_2} \left( \prod_{\mu \in S_1} \hat{a}(M f^\mu)^\dag \right) \left( \prod_{\nu \in S_2} a(M^{-1} g^\nu) \right).
\end{align}
Requiring $O^\dag \in \mathfrak{A}_A$ results in the following vector identities 
\begin{equation}
f^\mu \in \ker M^{A^c A},\quad g^\mu \in \ker {M^{-1}}^{A^c A}.
\end{equation}
Note $2\times 2$ matrix inversion results in the following kernel identity,
\begin{equation}
\ker {M^{-1}}^{A^c A} = M^{A A}( \ker M^{A^c A}).
\end{equation}
Thus, expressing $f,g$ in terms of a basis $\{w^\mu|\mu\in \dbrac{K(A)}\}$ of $\ker M^{A^c A}$, the operator $O_A$ can be re-expressed in the form of \cref{GeneralBKQH}. The constraint $O_{S_1 S_2} = O^*_{S_2 S_1}$ follows from demanding quasi-Hermiticity.

Importantly, note that extensively local reduced observables exist if and only if $K(A) > K(S)$ for all proper subsets $S \subsetneq A$. Intuitively, the vectors in $\ker M^{S^c S}$ correspond to observables local to $S$, so vectors in $\ker M^{A^c A}$ which do not belong to any subset $\ker M^{S^c S}$ cannot correspond to an observable local to any subset $S$, thus, their corresponding observables are extensively local to $A$.
\end{proof}

Due to the potentially nonlocal string of $Z$ factors in the Jordan-Wigner transform, an observable dictated by the previous theorem in \cref{proj_obs} is not necessarily local in the sense given by a tensor product structure. Generalizing \cref{polynomial theorem} to address the notion of locality defined via a Jordan-Wigner transform is not so simple. However, consider the case of a \textit{connected} subsystem: so that for all $i, j \in A$, every integer $k \in \dbrac{n}$ satisfying $i<k<j$ is also in the subsystem $k \in A$. In this case, the inserted $Z$ factors in \cref{proj_obs} are local, so an observable which is local to a connected subsystem is additionally local in the tensor product model defined via the Jordan-Wigner transform with the map $p$. Note for every subsystem $A$ containing local quasi-Hermitian observables, there exists a permutation which maps it into a connected subsystem, so by \cref{JP1,JP2}, there exists a Jordan-Wigner transform such that there are local quasi-Hermitian observables in the corresponding tensor product model.

Notice that altering the diagonal entries of the metric bears no impact on the existence of observables local to a subsystem, since the diagonal entries never contribute to $M^{A^c A}$.

Simple examples of reduced metrics with an analytical understanding of locality are presented below.

Suppose the reduced metric block reduces to a set $S$, so 
\begin{align}
M_{ij} = 0 &\quad& \forall i \in S, \, \forall j \notin S. \label{M block}
\end{align}
Then, there exists an observable which is both extensively local and extensively local in the tensor product model defined by any Jordan-Wigner transform, $U_p$, in the subsystem $S$, 
\begin{equation}
\hat{n}_S = \sum_{i \in S} \hat{a}^\dag_i \hat{a}_i. \label{how many particles}
\end{equation}
This observable is not quasi-Hermitian if \cref{M block} does not hold in $S$. In particular, this implies that diagonal reduced metrics have observables in every subsystem. 

%i -> f(i)
%f(i) -> i = f(f(i))

A special case of reduced metrics are those which reduce as a direct sum over $1\times 1$ and $2\times 2$ block matrices. Equivalently, there exists an \textit{associated involution}, $f:\dbrac{n}\rightarrow \dbrac{n}, f \circ f = 1$, such that $M_{i j} \neq 0 \, \Leftrightarrow \, i = f(j) \, \text{or} \, i = j$. The following corollary clarifies the relationship between the involution $f$ and the local observable algebras. 
\begin{corollary} \label{Involution theorem}
Given a reduced metric which decomposes into $1 \times 1$ and $2 \times 2$ blocks, extensively local observables exist in subsystems, $A \subset \dbrac{n}$, if and only if the subsystem's image under the reduced metric's associated involution, $f$, is itself $f(A) = A$. In addition, the most general local observable in this case is
\begin{align}
O = \sum_{{\mtiny S_1,S_2 \in \mathbbm{P}(\dbrac{A})}} O_{S_1 S_2} \left(\prod_{i \in S_1} \hat{a}(e_i)^\dag\right) \left( \prod_{j \in S_2} \hat{a}(M^{AA} e_j) \right)\label{General2x2BKQH},
\end{align}
where $O_{S_1 S_2} = O^*_{S_2 S_1}$. 
\end{corollary}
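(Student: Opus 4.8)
The plan is to specialize Theorem~\ref{polynomial theorem} to the case where the reduced metric $M$ decomposes into $1 \times 1$ and $2 \times 2$ blocks, and to show that in this situation the abstract kernel condition $K(A) > K(S)$ for all proper $S \subsetneq A$ reduces to the clean combinatorial criterion $f(A) = A$. First I would exploit the block structure of $M$: since $M$ is a direct sum of blocks indexed by the orbits of the involution $f$, the off-diagonal block $M^{A^c A}$ inherits a corresponding decomposition. The key observation is that a basis vector $e_i$ with $i \in A$ lies in $\ker M^{A^c A}$ precisely when the only nonzero entries of the $i$-th column of $M$ are confined to rows indexed by $A$. Given the $1\times 1$/$2 \times 2$ structure, column $i$ has nonzero entries only in rows $i$ and $f(i)$, so $e_i \in \ker M^{A^c A}$ if and only if $f(i) \in A$.

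Next I would compute $K(A) = \dim\ker M^{A^c A}$ explicitly. Because $M$ is block-diagonal under $f$ and the blocks are at most $2 \times 2$, the kernel of $M^{A^c A}$ is spanned exactly by those $e_i$ with $i \in A$ and $f(i) \in A$; there is no mixing between distinct orbits of $f$. Hence $K(A) = \gls{card}\{i \in A : f(i) \in A\}$. From this formula the equivalence follows: if $f(A) = A$, then adding to a proper subset $S$ any index whose $f$-partner is also gained strictly increases $K$, so the strict inequality $K(A) > K(S)$ holds for all proper $S \subsetneq A$, yielding extensively local observables. Conversely, if $f(A) \neq A$, there is some $i \in A$ with $f(i) \notin A$ (or its partner lies outside), and one checks $K$ fails to strictly increase across the relevant proper subset, so no extensively local observable exists. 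I would then read off the explicit form~\eqref{General2x2BKQH} by substituting the canonical basis $w^\mu = e_\mu$ for $\mu \in A$ (valid since $\ker M^{A^c A}$ is spanned by coordinate vectors in this case) directly into the general expression~\eqref{GeneralBKQH} of Theorem~\ref{polynomial theorem}, with the Hermiticity constraint $O_{S_1 S_2} = O^*_{S_2 S_1}$ carried over verbatim.

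The main obstacle I anticipate is verifying the reverse direction of the iff cleanly, namely confirming that when $f(A) \neq A$ the strict-increase condition of~\eqref{Kernel equation} genuinely fails for some proper subset rather than merely being delicate to state. The subtlety is that $K(A)$ counts paired indices, and one must identify the correct witnessing subset $S$: taking $S = A \setminus \{i\}$ for a suitable unpaired or half-paired $i$ should do it, but one must be careful when $i$ and $f(i)$ are both in $A$ versus only one of them. Checking that the chosen basis $\{e_\mu\}$ of $\ker M^{A^c A}$ correctly reproduces all extensively local observables (and not merely the ones local to proper subsystems) requires tracking the $|S_1|+|S_2|$ parity constraint and ensuring that the column images $M^{AA} e_j$ remain supported in $A$, which again follows from $f(A) = A$.
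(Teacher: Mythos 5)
Your proposal is correct and takes essentially the same approach as the paper: both proofs rest on the observation that the $1\times1$/$2\times2$ block structure forces $\ker M^{A^c A} = \text{span}\{e_i : i \in A,\, f(i) \in A\}$, so that $f(A) = A$ gives $M^{A^c A} = 0$ with $K(A) = \text{card}(A)$ strictly exceeding $K(S)$ for every proper subset, while an $i \in A$ with $f(i) \notin A$ annihilates the $i$-th coordinate of every kernel vector, and the explicit form \cref{General2x2BKQH} follows by substituting the canonical basis into \cref{GeneralBKQH}. The only cosmetic difference is in the converse: the paper argues directly that $v_i = 0$ and $(M^{AA}v)_i = 0$ render any observable of the form \cref{General2x2BKQH} local to $A \setminus \{i\}$ and hence non-extensive, whereas you reach the same conclusion by computing $K(A \setminus \{i\}) = K(A)$ and invoking the criterion \cref{Kernel equation} of \cref{polynomial theorem} --- two phrasings of the same kernel computation.
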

\begin{proof}
The construction of extensively local observables in subsystems closed under $f$ follows from $M^{A^c A} = 0$, so that $\ker M^{A^c A} = \text{span}\{e_i :i \in A\}$.

Given a site $i \in A$ whose dual satisfies $f(i) \notin A$, and a vector $v \in \ker M^{A^cA}$, the equation $M^{\{f(i)\} A} v = 0$ immediately implies $v_i = 0$, and the involution symmetry of $M$ implies $(M^{AA} v)_i = 0$. As a consequence, any observable of the form \cref{General2x2BKQH} is local to the subsystem $S = A\setminus\{i\}$, and, therefore, is not extensive.
\end{proof}
%M^{A' A}_{ki} o'_{ij} = 0. Take i \in A. If f(i) \notin A, o'_{ij} = 0. By hermiticity, o'_{ij} = 0 whenever f(i) and f(j) \notin A.
%o_ik = o'_ij M_jk = \alpha o'_ij + \beta o'_i f(j). k was in A to start with, so f(
%By positive definiteness, must all the diagonal entries be positive in this case? Yes, because $M$ block decomposes to blocks of the form $(i, f(i))$, and the 2x2 blocks, consider ((b,a),(abar,0)), Det < 0, so there is a nonzero eigenvalue

%Suppose this involution exists. Then there should be a symmetry of $M$ associated to it, since $M$ block decomposes into $1 + aZ + bX + cY$. Is this symmetry also a symmetry of $H$? Stay tuned. 

The special case of observables localized at a single site is quite simple to analyse.
\begin{corollary} \label{single site theorem}
Quasi-Hermitian observables, with respect to the metric of \cref{reduced metric to metric}, which are extensively local to a single site, $i$, exist if and only if the reduced metric is a block matrix, $M_{ij} = M_i \delta_{ij}$.
\end{corollary}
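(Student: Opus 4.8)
The plan is to apply \cref{polynomial theorem} to the special case where the subsystem consists of a single site, $A = \{i\}$. The corollary claims that extensively local quasi-Hermitian observables (with respect to the free-fermion metric of \cref{reduced metric to metric}) exist at site $i$ if and only if the reduced metric $M$ has vanishing off-diagonal entries in the $i$-th row and column, i.e.\ $M_{ij} = M_i \delta_{ij}$. The natural strategy is to specialize the kernel condition \cref{Kernel equation} of \cref{polynomial theorem} to this case and interpret what it says.

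First I would observe that for $A = \{i\}$, the only proper subset is $S = \emptyset$, for which $K(\emptyset) = \dim\ker M^{\emptyset\,\emptyset} = 0$ trivially. Hence \cref{polynomial theorem} reduces to the statement that extensively local observables exist at site $i$ if and only if $K(\{i\}) = \dim\ker M^{\{i\}^c \{i\}} > 0$. The block $M^{\{i\}^c \{i\}}$ is precisely the column vector formed by the off-diagonal entries $(M_{ji})_{j \neq i}$ of the $i$-th column of $M$, viewed as a map from the one-dimensional space $\text{span}\{e_i\}$ into $\text{span}\{e_j : j \neq i\}$. Since this is a map out of a one-dimensional domain, its kernel is nontrivial if and only if the map is the zero map, which happens exactly when $M_{ji} = 0$ for all $j \neq i$.

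Next I would invoke Hermiticity of the reduced metric, $M = M^\dagger$, which is assumed throughout since $M$ is an intertwining operator of the form in \cref{reduced metric to metric}. Hermiticity forces $M_{ij} = M_{ji}^*$, so the vanishing of the $i$-th column off-diagonal entries ($M_{ji} = 0$ for all $j \neq i$) is equivalent to the vanishing of the $i$-th row off-diagonal entries ($M_{ij} = 0$ for all $j \neq i$). Combining these two facts gives exactly the block-diagonal condition $M_{ij} = M_i \delta_{ij}$ asserted in the corollary. I expect the main (and only mild) obstacle to be stating cleanly that $\dim\ker M^{\{i\}^c\{i\}} > 0 \Leftrightarrow M^{\{i\}^c\{i\}} = 0$ for a one-dimensional domain, and then translating the row/column vanishing into the full diagonal statement via Hermiticity; neither step is computationally heavy, so the proof is essentially a one-line specialization of \cref{polynomial theorem} together with the observation that a single site admits only the empty proper subset.
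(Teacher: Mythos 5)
Your proposal is correct and follows essentially the same route as the paper: both directions amount to specializing the kernel criterion of \cref{polynomial theorem} to $A=\{i\}$, where the only proper subset is $\emptyset$ and the block $M^{\{i\}^c\{i\}}$ is a map out of a one-dimensional space, so $K(\{i\})>0$ iff the off-diagonal entries of the $i$-th column vanish (and, by Hermiticity of $M$, of the $i$-th row as well). The only cosmetic difference is that the paper exhibits the observable explicitly in the block-reduced case as the number operator $\hat{n}_{\{i\}}$ of \cref{how many particles}, whereas you obtain the same conclusion abstractly from the kernel dimension; your write-up is in fact slightly more careful, since the paper's phrase ``$\ker M^{A^c A} = \emptyset$'' should read $\ker M^{A^c A} = \{0\}$.
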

\begin{proof}
If $M$ block reduces, $\hat{n}_{\{i\}}$ is an observable. If $M$ does not block reduce, $\ker M^{A^c A} = \emptyset$, and there are no observables.
\end{proof}

Corollary~\eqref{Involution theorem} is quite strong when applied to the toy model introduced in \cref{nearestNeighbour}, that is a one-dimensional lattice model with a pair of non-Hermitian defect potentials at the center of the chain. To summarize earlier results, a one-parameter family of reduced metric decomposes into parity blocks,
\begin{equation}
M_{ij} \neq 0 \Leftrightarrow i = \bar{j},
\end{equation}
where $\bar{j} := n-j+1$. Denote this parameter as $\beta$.
Its matrix elements are given by the following recurrence relations
\begin{align}
\begin{pmatrix}
M_{mm} & M_{m \, m+1} \\
M_{m+1 \, m} & M_{m+1 \, m+1}
\end{pmatrix}
&= \begin{pmatrix}
1 & \displaystyle\frac{\beta - i  \gamma}{t_m}\cr
\displaystyle\frac{\beta + i  \gamma}{t^{*}_m} & 1
\end{pmatrix} \nonumber
\\
M_{ii} &= \displaystyle\frac{t_{n-i}}{t_i} M_{i+1\, i+1} \nonumber \\
M_{i \bar{i}} = M^*_{\bar{i} i} &= \displaystyle\frac{t_{n-i}}{t^*_i} M_{i+1\, n-i} \nonumber \\ M_{\bar{i} \bar{i}}   &= \displaystyle\frac{t_{n-i}}{t_i} M_{n-i\, n-i} \label{badass metric},
\end{align}
where $\gamma \in \mathbb{R}, t_i \in \mathbb{C}$ are physical parameters which appear in the Hamiltonian. Positivity of the metric requires $\gamma^2 + \beta^2 < |t_m|^2$. This operator block decomposes into parity sectors. 
Thus, for the metric of \cref{badass metric} with either $\gamma \neq 0$ or $\beta \neq 0$, \textit{extensively local observables exist in and only in Parity symmetric subsystems.} In the case of $\gamma = \beta = 0$, the metric is diagonal, so there are extensively local observables in \textit{every} subsystem.
In this case, the off-diagonal elements of the Hamiltonian are irrelevant in determining which subsystems contain local observables.  

\subsection{One-Dimensional Chain with Farthest Impurities} \label{ApplicationOfTheorems}

This section will apply the construction of local observable algebras associated a general model of free fermions, given in \cref{polynomial theorem}, to a $\mathcal{PT}$-symmetric tight-binding model studied in \cref{uniformSection}. This is a one-dimensional lattice model with uniform hopping-amplitudes, open boundary conditions, and a pair of non-Hermitian defect potentials at the edges of the chain. The important feature needed here is the matrix elements of the reduced metric 
\cite{farImpurityMetric,Ruzicka2015},
\begin{equation}
M_{ij} = \begin{cases}
\,\,\,\,\,1 &  i = j \\
-\mathfrak{i} \, \gamma \, \left(\Delta - \mathfrak{i} \gamma\right)^{j-i-1} &  i < j \\
\,\,\,\,\,\mathfrak{i} \,\gamma \, \left(\Delta + \mathfrak{i} \gamma \right)^{i-j-1} &  i > j
\end{cases}, \label{not positive2}
\end{equation}
where $\gamma, \Delta$ are dimensionless real parameters\footnote{In the notation of \cref{uniformSection}, I am setting $t = 1$.}. 
Positivity of this metric operator is examined in \cref{pseudoHermCriticalChain}. To simplify select equations, I will define
\begin{align}
z := \Delta + \mathfrak{i} \gamma,
\end{align}
and I will use the shorthand $|S| := \text{card}(S)$ for the cardinality of a set.

For this model, the existence of observables local to a subsystem is related to whether the subsystem is connected. Some related notation is defined in the following paragraph:

Consider the graph $G_A = (A, E_A)$ with vertices $A$ and edges $E_A = \{(i, i+1): i, i+1 \in A\}$. Let $\mathfrak{C}_A$ denote the set of connected components of $G_A$. A distance between components, $d_A:\mathfrak{C}_A \times \mathfrak{C}_A: \rightarrow \mathbbm{R}$, is defined as
\begin{align}
d_A(C_1, C_2) = \min \left\{d(i,j):i \in C_1, j \in C_2 \right\},
\end{align}
where $d$ is the geodesic distance in $G_{\dbrac{n}}$. Intuitively, $d_A$ measures the number of sites between $C_1$ and $C_2$. Next, denote the \textit{leftmost}, $C_L$, and \textit{rightmost}, $C_R$, or collectively \textit{edge} components of $G_A$ to be the connected components containing $\min A$ and $\max A$, respectively.
Lastly, define
\begin{align}
A_{<i} &= \{k<i:i \in A\}, \\ A_{>i} &= \{k>i:i\in A\}, \label{A<} \\
v_{<i} &= \sum_{j<i} v_j e_j, \\ v_{>i} &= \sum_{j>i} v_j e_j, \label{v<}
\end{align} 
where the sums are set to zero if they sum over an empty set.

When $z^* z = 1$, the extensively local observables are comparatively simple to construct. An additional special feature of the case $z^* z = 1$ is that the spectrum is exactly computable \cite{Willms2008}. The eigenvalues are displayed in \cref{closedFormEvalsTable}.

\begin{proposition} \label{UnitDiskLocality}
For quasi-Hermitian theories with respect to the metric of \cref{not positive} and $z^* z = 1$, extensively local observables in subsystem $A$ exist if and only if either $G_A$ contains no connected components with exactly one site or $A = \{1,n\} \cup B$, where $B$ contains no connected components with exactly one site.
\end{proposition}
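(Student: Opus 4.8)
The plan is to reduce everything to a computation of the first-quantized kernel dimension $K(A)=\dim\ker M^{A^cA}$ and then feed it into the existence criterion of \cref{polynomial theorem}, which says that extensively local observables live in $A$ exactly when $K(A)>\dim\ker M^{S^cS}$ for every proper $S\subsetneq A$. First I would record the elementary nesting fact that $\ker M^{S^cS}\subseteq\ker M^{A^cA}$ whenever $S\subseteq A$ (a vector killed by the tall block for $S$ is, after padding by zeros, killed by the tall block for $A$, since $A^c\subseteq S^c$). This makes $K$ monotone under inclusion within a fixed ambient set, so $\max_{S\subsetneq A}K(S)=\max_{i\in A}K(A\setminus\{i\})$, and the criterion collapses to checking single-site deletions. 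Throughout I would assume $\gamma\neq 0$ (otherwise $M=\mathbb{1}$ and the statement degenerates) and $z\neq\pm\mathfrak{i}$, the latter because at $z^2=-1$ the metric is only positive-semidefinite (see \cref{pseudoHermCriticalChain}) and the theory fails to be quasi-Hermitian.

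The core step is an exact formula for $K(A)$ on the unit circle. Writing $z=\Delta+\mathfrak{i}\gamma$ and using $z^*=z^{-1}$ when $z^*z=1$, the kernel equation $\sum_{j\in A}M_{ij}v_j=0$ for $i\in A^c$ simplifies dramatically: after the invertible rescaling $w_j:=z^{-j}v_j$ it becomes $\sum_{j\in A,\,j<i}w_j=z^2\sum_{j\in A,\,j>i}w_j$ for every $i\in A^c$. Letting $T=\sum_{j\in A}w_j$, each equation reads $P_i:=\sum_{j\in A,\,j<i}w_j=\tfrac{z^2}{1+z^2}T$, so the prefix sum $P_i$ must be the same constant on all of $A^c$. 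Since $P_i$ only sees complete connected components of $A$ lying to the left of $i$, this is a linear system in the component sums $W_1,\dots,W_{m}$ (with $m=m(A)$ the number of maximal runs of consecutive sites in $A$); the $|A|-m$ internal degrees of freedom inside the components are unconstrained. I would then solve this small system: the interior-gap equations force the middle component sums to vanish and tie $W_1$ to $W_m$, while a boundary gap (present iff $1\notin A$ or $n\notin A$) forces $T=0$ and hence collapses all $W_r$ to $0$. The upshot is
\begin{align}
K(A)=|A|-m(A)+d_W(A),\qquad d_W(A)=\begin{cases}1 & \{1,n\}\subseteq A\\ 0 & \text{otherwise.}\end{cases}
\end{align}
I expect this to be the main obstacle: keeping the boundary bookkeeping straight so that the extra ``$+1$'' appears precisely when both endpoints belong to $A$, and checking the degenerate small-$m$ cases, is where all the real content sits.

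With the formula in hand the criterion becomes combinatorial. Deleting a site $i$ from $A$ changes $|A|-m(A)$ by $-2$ if $i$ is interior to a component, by $-1$ if $i$ is a non-singleton endpoint, and by $0$ if $\{i\}$ is a singleton component (there $m$ drops together with $|A|$). Hence $K(A)>K(A\setminus\{i\})$ can fail only when $\{i\}$ is a singleton component and the $d_W$ term does not decrease, i.e. exactly when (a) $\{i\}$ is an interior singleton, or (b) $i=1$, $\{1\}$ is a singleton, and $n\notin A$, or (c) $i=n$, $\{n\}$ is a singleton, and $1\notin A$. Therefore extensively local observables exist iff none of (a),(b),(c) occurs: $A$ has no interior singleton component, and a singleton $\{1\}$ (resp. $\{n\}$) is permitted only when $n\in A$ (resp. $1\in A$).

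Finally I would repackage this into the two stated cases. If $A$ has no one-site component at all, conditions (a)--(c) are vacuous and observables exist, which is the first alternative. If $A$ does carry a singleton component, then (by the absence of interior singletons) it must be $\{1\}$ or $\{n\}$, and the endpoint conditions force both $1\in A$ and $n\in A$; setting $B=A\setminus\{1,n\}$, the only surviving requirement is that no singleton component of $A$ lies inside $B$, that is $A=\{1,n\}\cup B$ with $B$ containing no one-site component. Conversely each of these two configurations satisfies (a)--(c), closing the equivalence.
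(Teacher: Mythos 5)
Your proof is correct, and it takes a genuinely different route from the paper's. The paper argues locally: for necessity it inspects only the two or three rows of $M^{A^cA}$ adjacent to a singleton component (rows $i-1,i+1$ for an interior singleton $\{i\}$; rows $2,n$ for a singleton $\{1\}$ with $n\notin A$), uses $z^*z=1$ to force $\mathfrak{i}\gamma v_i=0$, and concludes that $\ker M^{A^cA}$ is absorbed into $\ker M^{S^cS}$ with $S=A\setminus\{i\}$, so \cref{polynomial theorem} rules out extensive observables; for sufficiency it exhibits explicit kernel vectors, namely the orthogonal complement of a single vector $(1,z^*,\dots)^\dag$ for connected sets (giving $K(C)=|C|-1$) and $(1,z^{n-3})^\intercal$ for $\{1,n\}$, and takes linear combinations. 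You instead compute $K(A)$ in closed form for \emph{every} subsystem at once: the rescaling $w_j=z^{-j}v_j$ does turn the kernel equations into the prefix-sum condition $P_i=\tfrac{z^2}{1+z^2}T$ on $A^c$ (I checked the bookkeeping: with both endpoints in $A$ the internal gaps force the middle component sums to vanish and leave $W_1=z^2W_m$, while any boundary gap forces $T=0$ and kills all $W_r$), yielding $K(A)=|A|-m(A)$, plus one exactly when $\{1,n\}\subseteq A$ — a formula consistent with both kernels the paper computes. Combined with the padding monotonicity $\ker M^{S^cS}\subseteq\ker M^{A^cA}$, which the paper uses implicitly but never states, this reduces the criterion of \cref{polynomial theorem} to single-site deletions and makes the endgame purely combinatorial; your deletion analysis (interior deletion drops $K$ by two, non-singleton endpoint by one or two, singleton deletion by zero unless it destroys the endpoint bonus) is correct. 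What your route buys: an exact count of local degrees of freedom rather than bare existence, uniform treatment of all configurations, and explicit tracking of the hypotheses $\gamma\neq 0$ and $z\neq\pm\mathfrak{i}$ that the paper needs silently (its row reduction divides by $\mathfrak{i}\gamma$ and by $1+z^2$, and positive-definiteness of $M$ fails at $z=\pm\mathfrak{i}$ by \cref{pseudoHermCriticalChain}). What the paper's route buys: necessity without solving any global linear system, and concrete kernel vectors.

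One caution on your final repackaging, which is really an imprecision in the proposition's wording rather than in your argument: when the singleton is $\{1\}$ and $n$ lies in a larger component, e.g.\ $A=\{1,n-1,n\}$, the disjoint choice $B=A\setminus\{1,n\}=\{n-1\}$ is a one-site graph even though observables exist — your formula gives $K(A)=2$ versus $K\leq 1$ for every maximal proper subset, and this agrees with the paper's converse, whose $B$ ("a union of connected components with at least two sites") is allowed to overlap $\{1,n\}$. So "no one-site component of $B$" must be read as you intend it, with components computed in $G_A$. Your derived criterion — no interior singleton components, and a singleton endpoint permitted only when the opposite endpoint belongs to $A$ — is the correct unambiguous form of the statement.
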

\begin{proof}
Suppose $A$ contains a connected component with exactly one site, $i$. Suppose there exists $v \in K(A)$. Assuming $i \notin \{1,n\}$, the kernel equation, $\sum_j M^{A^c A}_{ij} v_j = 0$, for indices $i-1,i+1$ is
\begin{align}
\begin{pmatrix}
z^{-2} & -1 & {z^*}^{2}  \\
1 & 1 & 1
\end{pmatrix} \begin{pmatrix}
M^{\{i+1\} A_{<i}}\, v_{<i} \\
\mathfrak{i} \gamma\, v_i \\
M^{\{i+1\} A_{>i}}\, v_{>i}
\end{pmatrix} = 0.
\end{align}

For the second case of the proposition, suppose $i = 1 \in A$, but $2, n \notin A$, $n>3$ (the cases $n = 2,3$ are trivial and follow from \cref{single site theorem}).  The other case, $i = n \in A, 1, n-1 \notin A$, follows from $\mathcal{PT}-$symmetry. Then the 
kernel equations at sites $2, n$ are
\begin{equation}
\begin{pmatrix}
1 & 1 \\
z^{n-2} & -(z^{*})^{4-n}
\end{pmatrix}
\begin{pmatrix}
\mathfrak{i} \gamma \, v_1 \\
M^{\{2\} A_{>1}}\, v_{>1}
\end{pmatrix} = 0.
\end{equation}

In all cases mentioned above, since $z^* z = 1$, these equations imply $v_i = 0$. In addition, note $M^{\{i\} A\setminus\{i\}} v= 0$ since
\begin{align}
M^{\{i\}\, A\setminus\{i\}} v = z^{-1} M^{\{i+1\} A_{<i}}\, v_{<i} + z^{*} M^{\{i+1\} A_{>i}}\, v_{>i} = 0.
\end{align}

Thus, $M^{S^c S} v = 0$ for $S = A \setminus \{i\}$. Thus, by \cref{polynomial theorem}, if either $A$ has a single-site connected component between the endpoints of the lattice, or exactly one of $1,n$ is in $A$, no extensively local observables exist in $A$.

The converse follows from explicit construction of extensively local observables. If $C$ is a connected subset of $A$, then $\ker M^{C^c C} = ((1, z^*, \dots {z^*}^{|C|})^\dag)^\perp$, so $K(C) = |C|-1$. If $C = \{1, n\}$, then $\left(1, z^{n-3} \right)^\intercal \in \ker M^{C^c C}$. As a consequence, extensively local observables exist in every connected subset of $\dbrac{n}$, as well as the subset $\{1,n\}$. Taking suitable linear combinations of the above vectors demonstrates the existence of extensively local observables in the subsystem $A = \{1,n\} \cup B$, where $B$ is a union of connected components with at least two sites.
\end{proof}

The remainder of the section is devoted to the case $z^* z \neq 1$. The final result is summarized in proposition~\eqref{m=1Final}. Some examples of subsystems containing local observables are shown in \cref{ExampleRegions}. 

\begin{figure}[h!]
\centering
\includegraphics[width = 120mm]{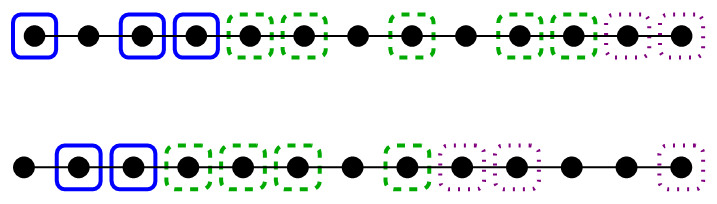}
\caption{An example of an $n = 13$ chain, depicted in black, with non-Hermitian impurities ($\Delta + \mathfrak{i} \gamma, \Delta - \mathfrak{i} \gamma$) at the edges of the chain with $|z| \neq 1$ and $\gamma \neq 0$, and hopping amplitudes $t_i = 1$. The top chain demonstrates three subsystems, shown in different colours and line styles, which contain extensively local quasi-Hermitian observables with respect to the metric of \protect\cref{not positive}. The bottom chain shows three subsystems which do not contain extensively local observables. Notably, the subsystem marked with a green, dashed line in the bottom chain does have local observables, but they are also local to the collection of its leftmost three sites. In addition, the solid blue and dashed green subsystems in the top chain only contain extensively local observables if $|z| \neq 1$.}
\label{ExampleRegions}
\end{figure}

To simplify the analysis, we start with the special case where the subsystem is connected.

\begin{lemma} \label{ConnectedRegions}
For a quasi-Hermitian model with the metric of \cref{not positive} for $|z| \neq 1$,
every subset $C \subset \dbrac{n}$ such that $G_C$ is connected with at least three sites contains extensively local observables. In addition, $\{1,2\}$ and $\{n-1,n\}$ contain extensively local observables. No other connected subgraph contains local observables. 
%In addition, if $|C| \geq 2$,
%\begin{align}
%K(C) = |C|-2+1_A(\{1,n\}),
%\end{align}
%where $1_A$ is the indicator function.
%All Bravyi-Kitaev local observables in $C$ are simultaneously local under the $\iota_1$ Jordan-Wigner transform.
\end{lemma}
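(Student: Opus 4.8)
The plan is to reduce everything to computing $K(A) = \dim\ker M^{A^c A}$ for connected blocks and their subsets, and then to invoke the criterion of \cref{polynomial theorem}: extensively local observables exist in $A$ if and only if $K(A) > K(S)$ for every proper subset $S \subsetneq A$. Throughout I take $\gamma \neq 0$ (for $\gamma = 0$ the metric of \cref{not positive2} reduces to the identity and every subsystem trivially carries observables), so $z \neq 0$ and every off-diagonal entry of $M$ is nonzero.

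First I would establish an ``exterior collapse'': for any subset $S$ and any row index $i \in S^c$ lying to the left of all of $S$ (that is, $i < \min S$), one has $M_{ij} = -\mathfrak{i}\gamma(z^*)^{j-i-1}$ for every $j \in S$, so after dividing by the common factor $-\mathfrak{i}\gamma(z^*)^{-i-1}$ the equation $(M^{S^c S}v)_i = 0$ becomes the single, $i$-independent condition (L): $\sum_{j\in S}(z^*)^j v_j = 0$. Dually every row $i > \max S$ collapses to the single condition (R): $\sum_{j\in S} z^{-j} v_j = 0$. The decisive point is the independence of (L) and (R) as functionals on $\operatorname{span}\{e_j : j \in S\}$: they are proportional exactly when $(z^*)^j = \lambda z^{-j}$ for all $j\in S$ and a fixed $\lambda$, i.e. when $|z|^{2j}$ is constant on $S$, which for $|z|\neq 1$ forces $|S|\le 1$. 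This is the one place the hypothesis $|z|\neq 1$ enters, and it is the heart of the argument.

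With these facts, computing $K(C)$ for a connected block $C=\{a,\dots,b\}$ is immediate, since $C$ has no interior gaps: the rows $i<a$ contribute exactly (L) (present iff $a>1$) and the rows $i>b$ contribute exactly (R) (present iff $b<n$). Hence $\operatorname{rank} M^{C^c C}$ equals the number of present-and-independent exterior conditions, giving $K(C)=|C|-2$ for an interior block of size $\ge 2$, $K(C)=|C|-1$ for a block touching exactly one end of the lattice, and $K(C)=0$ for any single site, the last recovering \cref{single site theorem}. The comparison with proper subsets then needs only one-sided bounds. For an interior $C$ with $|C|\ge 3$ we have $K(C)=|C|-2$, while any proper $S\subsetneq C$ is again interior with $|S|\le |C|-1$; if $|S|\ge 2$ both (L) and (R) are present and independent, so $\operatorname{rank} M^{S^c S}\ge 2$ and $K(S)\le |S|-2\le |C|-3$, and if $|S|\le 1$ then $K(S)=0\le|C|-3$. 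Thus $K(S)<K(C)$ and extensively local observables exist.

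For a block touching the left end, $C=\{1,\dots,b\}$ with $b<n$ and $|C|\ge 2$, we have $K(C)=|C|-1$, while every proper $S\subsetneq C$ still satisfies $\max S<n$, so (R) is present, $\operatorname{rank} M^{S^c S}\ge 1$ and $K(S)\le|S|-1\le|C|-2<K(C)$; this settles $\{1,2\}$, its mirror $\{n-1,n\}$ (by the reflection $i\mapsto n-i+1$ combined with complex conjugation, which preserves $M$ and interchanges the two ends), and all edge blocks of size $\ge 3$. Finally, the only connected subgraphs not covered are single sites and interior two-site blocks $\{i,i+1\}$ with $2\le i\le n-2$; each has $K(C)=0$, so $\ker M^{C^c C}=\{0\}$ and by \cref{polynomial theorem} the sole local observable is a multiple of the identity, i.e. there are no nontrivial local observables. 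The main obstacle is pinning down the \emph{exact} rank of $M^{C^c C}$ in the step above, in particular the proportionality criterion for (L) and (R) that forces $|z|\neq 1$; once that is in hand, the subset comparison reduces to the crude rank lower bounds just used.
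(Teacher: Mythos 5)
Your proposal is correct and follows essentially the same route as the paper's proof: you compute $K(C)=\dim\ker M^{C^c C}$ by observing that all rows of $M^{C^c C}$ exterior to a connected block collapse to at most two conditions given by geometric sequences in $z^*$ and $z^{-1}$, obtain $K(C)=|C|-2+1_{C^c}(\{1\})+1_{C^c}(\{n\})$, and conclude extensivity by comparing with proper subsets via \cref{polynomial theorem}. If anything, you are more explicit than the paper on two points it leaves implicit — that $|z|\neq 1$ is precisely what makes the left and right exterior functionals independent for $|S|\geq 2$, and that the comparison $K(S)<K(C)$ holds even for disconnected proper subsets because interior gap rows can only increase the rank — so no gap here.
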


\begin{proof} 

By \cref{single site theorem}, if $|C| = 1$, there are no nontrivial local observables.

In the case of connected subsystems, $\dim \ker M^{C^c C}$ is easy to find, since the rows labelled by indices to the left of $C$ are all multiples of each other and, similarly, all rows labelled by indices to the right of $C$ are multiples of each other. Note the set of rows of $M^{C^c C}$ to the left or right of $C$ does not exist if either $1 \in C$ or $n \in C$, so in these cases, $K(C)$ increases by one. Thus,
\begin{align}
\ker M^{C^c C} &= \text{span} \left\{\begin{array}{l}
1_{C^c}(\{1\})(1, z^*, \dots, {z^*}^{|C|})^\dag,
1_{C^c}(\{n\})({z}^{|C|}, \dots, z, 1)^\dag
\end{array} \right\}^\perp,\\
K(C) &= |C|-2+1_A(\{1\})+1_A(\{n\}),
\end{align}
where $1_S:\mathbbm{P}(\dbrac{n})\rightarrow \mathbbm{P}(\dbrac{n})$ is the indicator function,
\begin{align}
1_S(T) = 1-\delta_{T\cap S \,\emptyset}.
\end{align}
$K(C)$ is nonzero if and only if $|C| \geq 3$, $C = \{1,2\}$, or $C = \{n-1,n\}$, proving that these subsystems are the only connected subsystems with local observables.

Note that removing any number of sites from $C$ necessarily reduces $K(C)$, so the subsystem $C$ also contains extensively local observables. 
\end{proof}

If a subsystem is a union of disjoint connected subsystems of the form above, there are observables which are extensively local to said subsystem. It only remains to check subsystems which have an isolated site or pair of sites.

\begin{lemma} \label{conds}
Observables which are extensively local to a subsystem, $A \subseteq \dbrac{n}$, and quasi-Hermitian with the metric of \cref{not positive} exist only if the following conditions on its connected components, $A = \cup \mathfrak{C}_A$, are met:

\begin{enumerate} 
%These bullet points don't handle small n very well. 
\item If $A$ contains a connected component, $C \in \mathfrak{C}_A$ with $|C|\leq2$, and $C \notin \{\{1,2\},\, \{n-1,n\}\}$, then $A$ must contain at least one more connected component, so $A \neq C$. 
\item For every connected component with a single site, $C = \{i\}$, then $i-2, i+2 \in A$ when $i-2, i+2 \in \dbrac{n}$. Intuitively, this connected component is separated from the rest of the subsystem by at most one site from both the left and the right. \label{isolated}
\item The connected components with two sites, $|C| = 2$, satisfy $\min_{C^c \in \mathfrak{C}_{A-C}} d_A(C, C^c) \leq 2,$ unless $C = \{1,2\}$ or $C = \{n-1,n\}$. Intuitively, this connected component is separated from the rest of the subsystem by at most one site from either the left or the right. 
\item The edge components may have a single site only if that component is $C_L = \{1\}$ or $C_R = \{n\}$. Otherwise, $|C_{L,R}| > 1$.
\end{enumerate}

The set of all subsets $A \subseteq \dbrac{n}$ satisfying the above criteria will be referred to as $\mathcal{R}$.
\end{lemma}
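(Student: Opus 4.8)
The plan is to argue the contrapositive of this necessity statement through \cref{polynomial theorem}. By that theorem, extensively local observables in $A$ exist if and only if $K(A) > K(S)$ for every proper subset $S\subsetneq A$, where $K(A) = \dim\ker M^{A^c A}$. Hence it suffices to show that whenever one of the four conditions is violated, there is a proper subset $S\subsetneq A$ with $K(S)\geq K(A)$. The main reduction is the identification $K(S) = \dim\{v\in\ker M^{A^c A}: v|_{A\setminus S} = 0,\ M^{(A\setminus S)\,A}v = 0\}$, obtained by extending kernel vectors of $M^{S^c S}$ by zero; it turns each case into the task of showing that every $v\in\ker M^{A^c A}$ vanishes on the ``over-isolated'' sites of the offending component $C$ and additionally satisfies the auxiliary row constraints coming from those sites, so that deleting $C$ leaves $\dim\ker$ unchanged.

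The computational engine is a uniform description of the kernel constraints. Writing $z = \Delta + \mathfrak{i}\gamma$ and, for $v$ supported on $A$, $L_k := \sum_{j\in A,\,j<k} z^{-j}v_j$ and $R_k := \sum_{j\in A,\,j>k}(z^*)^{j}v_j$, the metric of \cref{not positive} gives, for each $k\in A^c$,
\begin{equation}
\sum_{j\in A}M_{kj}v_j = \mathfrak{i}\gamma\left(z^{k-1}L_k - (z^*)^{-k-1}R_k\right),
\end{equation}
so that $v\in\ker M^{A^c A}$ is equivalent to $z^{k-1}L_k = (z^*)^{-k-1}R_k$ for all $k\in A^c$ (recall $\gamma\neq0$). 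Two observations drive everything. First, if $k,k+1$ are consecutive indices of $A^c$ lying in a single gap of $A$ (so $L_k,R_k$ are unchanged across them), dividing the two constraints yields $z z^* = 1$; since $|z|\neq1$ this forces $L_k = R_k = 0$. Second, the boundary rows behave like a free half of a gap: row $1$ has $L_1 = 0$ and hence forces $R_1 = 0$, and row $n$ has $R_n = 0$ and forces $L_n = 0$.

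With these tools the four cases are handled in turn. Condition 1 is immediate from \cref{ConnectedRegions}, where $K(A) = 0$. For conditions 2, 3 and 4 the over-isolated component $C$ (a single site, a pair, or a misplaced edge component) is flanked on each side either by a gap of length $\geq 2$ or by the boundary; applying the gap/boundary lemma on the left produces $L = 0$ just left of $C$ and on the right produces $R = 0$ just right of $C$, and the one row of $A^c$ immediately adjacent to $C$ then supplies one further relation. Combining these forces the entries $v_j$ for $j\in C$ to satisfy a homogeneous system whose coefficient is nonzero precisely because $|z|\neq1$ and $z\neq0$ (for instance $1+(z^*)^{-2}$ in the single-site case and $z^*-z^{-1} = \bar z(1-|z|^{-2})$ in the two-site case), whence $v|_C = 0$ for every $v\in\ker M^{A^c A}$. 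A short check, reusing the same $L = R = 0$ identities, shows the auxiliary constraints $M^{\{i\}A}v = 0$ for $i\in C$ hold automatically, so by the reduction above $K(A\setminus C) = K(A)$, exhibiting the required witness $S = A\setminus C$. The cases differ only in how many entries must be killed and in whether a boundary row or a genuine gap supplies the one-sided constraint; right-handed configurations follow from left-handed ones by the $\mathcal{PT}$-symmetry $i\mapsto n-i+1$.

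The main obstacle I anticipate is bookkeeping rather than conceptual: one must verify that in each geometric configuration there really are two consecutive $A^c$-indices (or a usable boundary row) on the appropriate side, handling the near-boundary subcases ($\min A = 2$, two-site components touching $\{1,2\}$ or $\{n-1,n\}$, and so on) uniformly with the interior gaps, and one must confirm the auxiliary rank-recovery condition $M^{\{i\}A}v = 0$ so that reinstating the deleted rows does not raise $\dim\ker$. Both steps hinge on the single inequality $|z|\neq1$, which is exactly what fails on the unit circle treated separately in \cref{UnitDiskLocality}.
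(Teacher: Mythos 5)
Your proposal is correct and takes essentially the same route as the paper's proof: you reduce via \cref{polynomial theorem} to showing every $v \in \ker M^{A^c A}$ vanishes on the offending component $C$ and satisfies the reinstated row constraints (so $S = A\setminus C$ witnesses $K(S) \geq K(A)$), using kernel conditions at exactly the rows the paper uses ($i\pm1, i\pm2$, resp. $i-2,i-1,i+2,i+3$, resp. the boundary rows), with $|z|\neq 1$ as the decisive inequality and $\mathcal{PT}$-symmetry for the mirrored configurations. Your $L_k, R_k$ telescoping and the two-consecutive-gap-rows observation are a clean uniform repackaging of the paper's explicit orthocomplement conditions (the spans $X$ in \cref{conds}'s proof), but not a different argument.
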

\begin{proof}

\leavevmode
\begin{enumerate}
\item This follows from lemma~\ref{ConnectedRegions}.
\item Assume $\{i\}$ is a connected component of $A$. Assume $v \in \ker M^{A^c A}$. Then, using the notation of \cref{A<,v<}, 
\begin{align}
V^\intercal &:= \left(M^{\{i-1\} A_{<i}}\, v_{<i}, \mathfrak{i} \gamma\, v_i, M^{\{i-1\} A_{>i}}\, v_{>i} \right)\\
V &\in X := \text{span}\left\{\begin{array}{l}\arraycolsep=1.4pt\def\arraystretch{1.8} 1_{A^c}(\{i-2\})\left(z^{-1},-z^*,z^*\right)^\dag\\1_{A^c}(\{i-1\})\left(1, -1, 1\right)^\dag,\\
1_{A^c}(\{i+1\})\left(z^2, 1, (z^{*})^{-2} \right)^\dag\\
1_{A^c}(\{i+2\})\left(z^3,z,(z^*)^{-3} \right)^\dag\end{array}\right\}^\perp.
\end{align}

For a nontrivial solution $V$ to exist, $\dim X \leq 2$, which only happens if either $|z| = 1$ or $i-2,i+2\in A$ when $i-2,i+2\in \dbrac{n}$. 

If all vectors $v \in \ker M^{A^c A}$ satisfy $V = 0$, then  $\ker M^{A^c A} \subset \ker M^{S' S}$ for $S = A \setminus \{i\}$. Therefore, there is no extensively local observable in $A$ in such a case, and extensively local observables exist if and only if criteria~\eqref{isolated} is satisfied.
%\begin{align}
%\left(\sum_{k<i-1, k \in A} M_{i-1 k} v_k \right) + M_{i-1 i} v_i + \left(\sum_{k>i+1, k \in A} M_{i-1 k} v_k \right) &= 0 \Leftrightarrow \\
%\left(\sum_{k<i-1, k \in A} M_{i-1 k} v_k \right) -i \text{Im} \gamma v_i + \left(\sum_{k>i+1, k \in A} M_{i-1 k} v_k \right)
%\\
%\left(\sum_{k<i-1, k \in A} M_{i+1 k} v_k \right) + M_{i+1 i} v_i + \left(\sum_{k>i+1, k \in A} M_{i+1 k} v_k \right) &= 0 \Leftrightarrow \\ 
%\gamma^2 \left(\sum_{k<i-1, k \in A} M_{i-1 k} v_k \right) + i \text{Im} \gamma v_i+ (\gamma^{*})^{-2}\left(\sum_{k>i+1, k \in A} M_{i-1 k} v_k \right) &= 0.
%\end{align}
\item Suppose $i, i+1 \in A$, and assume $n>4$, since $n=4$ reduces to \cref{ConnectedRegions}. 
Consider the kernel conditions $(M^{A^cA}v)_j = 0$ for the choices of $j\in \{i-2, i-1, i+2, i+3\}\cap \dbrac{n}$:

\begin{align}
V^\intercal &:= \left(M^{\{i-2\} A_{<i}} v_{A_{<i}},
\mathfrak{i} \gamma v_i,
\mathfrak{i} \gamma v_{i+1},
M^{\{i-2\} A_{>i+1}} v_{A_{>i+1}}\right), \\
V \in X &:=\text{span} \left\{
\begin{array}{l}
1_{A^c}(\{i-2\})\left(1, -z^*, -{z^*}^2, 1\right)^\dag \\
1_{A^c}(\{i-1\})\left(z, -1, -z^{*}, {z^{*}}^{-1}\right)^\dag \\
1_{A^c}(\{i+2\})\left(z^4, z, 1, {z^{*}}^{-4}\right)^\dag \\
1_{A^c}(\{i+3\})\left(z^5, z^2, z, {z^{*}}^{-5}\right)^\dag
\end{array} \right\}^\perp. \label{|C|=2}
\end{align}
For $V$ to be nontrivial, $\dim X \leq 3$. This only happens when $|z| = 1$, or $i-2\in A$ when $i-2 \in \dbrac{n}$, or $i+3 \in A$ when $i+3 \in \dbrac{n}$. The same logic from the proof of criteria~\eqref{isolated} demonstrates that extensively local observables exist only if a nontrivial $V$ exists, proving this case.

\item Suppose $i$ is the leftmost site, $i-1 \in \dbrac{n}$, and $i+1 \in A^c$. Then, \begin{align}
\begin{pmatrix}
-1 & 1 \\
1 & {z^{*}}^{-2}
\end{pmatrix}
\begin{pmatrix}
\mathfrak{i} \gamma v_i\\
M^{\{i-1\} A_{>i}} v_{>i}
\end{pmatrix} = 0.
\end{align}
Thus, $v_i = M^{\{i-1\} A_{>i}} v_{>i} = 0$, so $M^{\{i\} A} v = 0$. Consequently, $\ker M^{A^c A} \subset \ker{M^{S^c S}}$ with $S = A \setminus \{i\}$, so by \cref{polynomial theorem}, there are no extensively local observables in $A$. The case where $i$ is the rightmost site follows from $\mathcal{PT}$ symmetry.
\end{enumerate}
\end{proof}
The remainder of this section is dedicated to showing that subsystems $A$ satisfying the enumerated criteria of lemma~\eqref{conds}, $A \in \mathcal{R}$, do contain extensively local observables.

\begin{proposition} \label{m=1Final}
Extensively local observables to a subsystem $A=\cup \mathfrak{C}_A \subseteq \dbrac{n}$, which are quasi-Hermitian with the metric of \cref{not positive}, exist if and only if the conditions of lemma~\eqref{conds} are met.
\end{proposition}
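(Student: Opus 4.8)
The ``only if'' direction is precisely the content of lemma~\eqref{conds}, so it remains to establish the ``if'' direction: whenever $A\in\mathcal{R}$, extensively local observables exist. By \cref{polynomial theorem} it suffices to check that $K(A)=\dim\ker M^{A^cA}$ is strictly larger than $K(S)$ for every proper subset $S\subsetneq A$. Since $\ker M^{S^cS}\subseteq\ker M^{A^cA}$ whenever $S\subseteq A$, the entire problem reduces to a single cleaner statement: it is enough to produce, for each site $i\in A$, a kernel vector $v\in\ker M^{A^cA}$ with $v_i\neq 0$. Indeed, any proper $S$ omits some $i\in A$, and a kernel vector nonvanishing at that $i$ lies in $\ker M^{A^cA}$ but not in $\ker M^{S^cS}\subseteq\text{span}\{e_j:j\in S\}$, forcing $K(S)<K(A)$. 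Thus the proposition amounts to showing that $\ker M^{A^cA}$ has full support on $A$.

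The plan is to exploit the geometric-series structure of $M$. Writing $z=\Delta+\mathfrak{i}\gamma$, a direct reading of \cref{not positive2} shows that every row $M^{\{k\}A}$ with $k<\min A$ is a scalar multiple of the fixed vector $\mathbf{L}:=((z^*)^{j})_{j\in A}$, and every row with $k>\max A$ is a scalar multiple of $\mathbf{R}:=(z^{-j})_{j\in A}$. Hence $\ker M^{A^cA}$ is carved out inside $\mathbb{C}^{A}$ only by $\mathbf{L}$ (present unless $1\in A$), by $\mathbf{R}$ (present unless $n\in A$), and by the finitely many rows $M^{\{k\}A}$ indexed by interior gap sites $k\in A^c$ with $\min A<k<\max A$, each of which mixes a left tail $(\mathfrak{i}\gamma z^{k-j-1})_{j<k}$ with a right tail $(-\mathfrak{i}\gamma(z^*)^{j-k-1})_{j>k}$. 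I would then build the required vectors component by component: on each connected component of size $\geq 3$, and on the edge blocks $\{1,2\}$, $\{n-1,n\}$, $\{1\}$, $\{n\}$, the local kernel already computed in \cref{ConnectedRegions} supplies vectors of full support, while the distance bounds of lemma~\eqref{conds} — an interior single site separated from the rest of $A$ by at most one site on each side, and a pair separated by at most one site on at least one side — guarantee that the intervening gap rows impose strictly fewer independent constraints than there are coordinates in the affected block, leaving freedom to keep the block's entries nonzero. Here the hypothesis $z^*z\neq1$ is essential: it is exactly what makes the relevant $2\times2$ and $3\times3$ coupling minors nonsingular, since those minors degenerate precisely when $|z|=1$, as in the proofs of lemma~\eqref{conds} and \cref{UnitDiskLocality}.

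Finally I would glue these block solutions into a global vector $v\in\ker M^{A^cA}$ with $v_i\neq0$ for the chosen $i$, fixing its value on the block containing $i$ and propagating across the at-most-distance-two gaps, the permitted spacing ensuring the propagation equations are solvable; carrying this out for every $i\in A$ yields full support and, via \cref{polynomial theorem}, the desired observables. The main obstacle is the bookkeeping for subsystems that chain together several small components and gaps, where one must confirm that adjacent single-site and two-site components, each sitting within the allowed distance of its neighbours, do not \emph{collectively} over-constrain $\ker M^{A^cA}$. The cleanest way to control this is to track the rank increment contributed by each gap site as the components are appended from left to right, and to verify that under conditions (1)--(4) of lemma~\eqref{conds} the total rank of the constraint rows never reaches $|A|$ minus the support demanded at any individual site. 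This rank accounting, together with the $z^*z\neq1$ non-degeneracy, is where the genuine work lies.
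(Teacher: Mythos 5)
Your reduction is sound as far as it goes: kernel vectors of $M^{S^cS}$ extend by zero to kernel vectors of $M^{A^cA}$ (the rows indexed by $A^c \subseteq S^c$ are a subset of those indexed by $S^c$), so a kernel vector with $v_i \neq 0$ at some $i \in A \setminus S$ does force $K(S) < K(A)$, and full support of $\ker M^{A^cA}$ would deliver the ``if'' direction through \cref{polynomial theorem}. Your observation that components of size at least three carry full-support local kernels is also correct, precisely because $|z| \neq 1$ lets a combination $\alpha (z^*)^j + \beta z^{-j}$ vanish at no more than one $j$. But two things break down after that. First, a concrete error: the singleton edge blocks $\{1\}$ and $\{n\}$ do \emph{not} ``supply vectors of full support'' --- for $A = \{1\}$ every row $M_{k1} = \mathfrak{i}\gamma z^{k-2}$ is nonzero, so $K(\{1\}) = 0$ (consistent with corollary~\eqref{single site theorem}); a kernel vector nonvanishing at site $1$ exists only as a global vector on all of $A$, coupling across a gap to the next component, which is exactly the part you never construct. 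Second, and decisively, the ``rank accounting'' you defer to the final paragraph is not bookkeeping --- it is the entire content of the ``if'' direction, and your propagation picture is misleading: a gap row at $k$ has left tail $(\mathfrak{i}\gamma z^{k-j-1})_{j<k}$ and right tail $(-\mathfrak{i}\gamma (z^*)^{j-k-1})_{j>k}$ running over \emph{all} of $A$, so the constraints are global rather than block-local, and ``fixing a value on the block containing $i$ and propagating across the at-most-distance-two gaps'' is not a well-defined local procedure.

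The paper sidesteps both problems by arguing in the opposite direction: rather than building full-support vectors, it zero-extends kernel vectors from sub-blocks (legitimate since rows of $M^{A^cA}$ restricted to columns in $A_i$ are rows of $M^{A_i^c A_i}$) and runs an induction on the maximal component size via the sets $\mathcal{R}_k$. The base case $\mathcal{R}_2$ is handled by directly counting independent rows, giving $K(A) \geq |\mathfrak{C}_2| - |\mathfrak{C}_1| - 1 + 1_A(\{1\}) + 1_A(\{n\})$, and the inductive step uses an explicit decomposition $A = B_1 \cup B_2 \cup B_3$ into pieces lying in $\mathcal{R}_{k-1}$ or connected pieces of size at least three covered by \cref{ConnectedRegions}. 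Extensivity is then obtained not from kernel support but by descent: an observable local to a proper subset would, by repeated failure of the criteria of lemma~\eqref{conds}, be pushed down to a single site, contradicting corollary~\eqref{single site theorem}. To repair your argument you would need either to prove your full-support claim for every $A \in \mathcal{R}$ (a strictly stronger statement than \cref{polynomial theorem} requires) by an honest induction along the chain of components and gaps, or to adopt the zero-extension mechanism --- at which point you are essentially reconstructing the paper's proof.
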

\begin{proof}

Define 
\begin{align}
\mathfrak{C}_k &= \{C \in \mathfrak{C}_A: |C| \leq k\}\\
\mathcal{R}_k &= \{A \in \mathcal{R}: \mathfrak{C}_A = \mathfrak{C}_k\}.
\end{align}
Intuitively, $\mathcal{R}_k$ is the set of all subsystems $A \in \mathcal{R}$ whose connected components contain at most $k$ sites. Note $\mathcal{R}_{n} = \mathcal{R}$. I will use an inductive argument to prove that for each $\mathcal{R}_k$, every element contains extensively local observables. 

Consider first the base case where $A \subset \mathcal{R}_2$, where all connected components have cardinality at most 2. Note $|\mathfrak{C}_2|-|\mathfrak{C}_1|$ denotes the number of connected components in $\mathfrak{C}_A$ with cardinality exactly two. A simple argument by counting the number of linearly independent rows in $M^{A^c A}$ results in the identity
\begin{align}
K(A) \geq |\mathfrak{C}_2|-|\mathfrak{C}_1| -1 + 1_A(\{1\}) + 1_A(\{n\}).
\end{align}
Thus, local observables exist for all $A \subset \mathcal{R}_2$.

The following proves by contradiction that the observables constructed above are extensively local. Suppose such an observable is not extensive, but is local to $S \subset A$ with cardinality at most $k$. This subset does not satisfy the criteria of lemma~\eqref{conds}, so the observable must be local to a subset $S$ with cardinality at most $k-1$. Repeating this argument inductively until $k = 1$ would imply the existence of a observable local to a single site, which contradicts corollary~\eqref{single site theorem}. 

To prove the inductive hypothesis, I demonstrate that for every $A \in \mathcal{R}_k$, with $k>2$, there exists a decomposition $A = \cup_i A_i$ such that either $A_i \in \mathcal{R}_{k-1}$, or $A_i$ is a union of connected components, $A_i = \cup C_{k,2} \subseteq \dbrac{n}$ with $|C_{k,2}| \geq 3$. Since $A_i$ either is assumed to have extensively local observables in the first case, or known to have extensively local observables by lemma~\eqref{ConnectedRegions} in the latter case, $A$ must have extensively local observables.

Suppose $A$ has $l$ connected components $C \in \mathfrak{C}_A$ with cardinality $|C| = k$. We will express $A$ as a union of the form $A = B_1 \cup B_2 \cup B_3$ such that $B_1, B_2 \in \mathcal{R}_k$, $B_3$ is connected with cardinality $|A_3|>2$, and $B_1$ and $B_2$ combined have $l-1$ connected components with cardinality $|C| = k$. An inductive argument on $l$, thus, constructs the decomposition $A_i$ from the previous paragraph.

Pick one connected component $C \in \mathfrak{C}_A$ with cardinality $|C| = k$. The construction of $B_1, B_2, B_3$ splits into four cases:

If $A = C$, the construction $B_1 = B_2 = \emptyset, B_3 = C$ is trivial.

If $\min_{C^c \in \mathfrak{C}_{A-C}} d_A(C, C^c)\geq 3$, then the sets $B_1 = A-C, B_2 = \emptyset$ must satisfy the axioms of lemma~\eqref{conds}, and $B_3 = C$ is of the desired form.

If there is a unique set $C_1$ such that $d(C, C_1) = 2$, set $B_3 = C$, $B_2 = \emptyset$. In the case where $\max C_1 < \min C$, set $B_1 = A_{<\min C + 2}$, else, set $B_1 = A_{>\max C - 2}$.

In the final case, there are two sets $C_1, C_2$ such that $d(C,C_1) = d(C,C_2) = 2$. Without loss of generality, assume $\max C_1 < \min C < \max C < \min C_2$. Set $B_3 = C$, $B_1 = A_{<\min C + 2}$, $B_2 = A_{>\max C - 2}$.
\end{proof}

\chapter{Conclusions} \label{conclusionChapter}

Three salient themes emerge from the previous chapters:
\begin{itemize}
\item Quasi-Hermitian frameworks generate new notions of locality in quantum theory. The modified inner product used in quasi-Hermitian quantum theory can be entangled, and local observable algebras have reduced dimension when compared to their Hermitian counterparts. This generalization is tame in the sense that no quasi-Hermitian strategy for a nonlocal game can outperform a Hermitian strategy.

\item Algebraic geometry yields insight into simple pseudo-Hermitian operators with parametric dependence. In particular, higher order exceptional points generically occur at cusp points of algebraic curves of exceptional points of lower order.

\item Representation theory and abstract algebraic considerations generate nontrivial examples of pseudo-Hermitian operators and their corresponding conserved quantities from simple examples.
\end{itemize}

Following these observations, various nontrivial examples of conserved quantities, eigensystems, and inner product structures associated to non-Hermitian operators were determined in \cref{Tactics,toyModelsChapter}. I hope researchers will find these examples illuminating and apply them, perhaps as toy models, to problems of mathematical or physical interest.

%Question for myself: What is the spectral theorem of a Hermitian operator in an indefinite metric space? Is it true that there is a modified projection valued measure, where $P_\lambda = P^{[\dag]}_{\lambda^*}$?
%% 
%Open questions which this thesis brings up:
%
%TO-DO: Go through thesis and replace $i$ with $\mathfrak{i}$.

%Fix n=5 label on cusp point figure.

The following sections provide suggestions for future research which thesis informs or inspires.

\section{Time}
Nearly every result presented in this thesis pertains to kinematics associated to non-Hermitian quantum theory. In this section, I summarize three questions which emerge from contemplating the material of this thesis with time in mind.

\begin{itemize}
\item Exciting avenues for new fundamental physics come from features present in an $\eta$-inner product which are not present in traditional inner product structures. One interesting possibility is to include time-dependence in $\eta$ \cite{timeDependentInnerProd}. In particular, one problem which I believe is outstanding is the correspondence between time-evolution and Hamiltonian generators in the case where the metric is time-dependent. In other words, is there a generalization of Stone's theorem \cite{Stone1930,Stone1932} to the case with a time-dependent inner product structure?

\item The central structure discussed in \cref{LocalityChapter} is the local observable algebra. Given local observable algebras and a distinguished separating vector, a natural notion of local time evolution is given by Tomita and Takesaki's modular automorphisms \cite{Takesaki1970,TakesakiII}. This time evolution has been the subject of recent interest in quantum field theory \cite{Borchers2000}. What do modular automorphism groups associated to quasi-Hermitian observable algebras look like?

\item Non-Hermitian time evolution gives rise to phenomenon not found in traditional quantum theory. For example, the \textit{Leggett-Garg inequality} (LGI) quantifies correlations between measurements taken place at three different times \cite{Leggett1985}. Non-Hermitian time evolution provides algebraically maximal LGI violations \cite{Karthik2021,JoglekarLGI}. This contrasts with Bell's inequality violations \cite{Bell1964,Clauser1969,Cirelson1980,Hensen2015,Shalm2015,Giustina2015} and nonlocal games \cite{nonlocalGames,Ji2021}, which quantify correlations between different points in space. Is there a merger of these concepts, a spatio-temporal measure of non-classicality, which classifies what new physics is given by non-Hermitian Hamiltonians? A related notion is the \textit{Lieb-Robinson bound} \cite{Lieb1972}, which yields an emergent light-cone structure in quantum many-body problems.
\end{itemize}

\section{Entanglement}
Chapter~\ref{LocalityChapter} demonstrated that entangled quasi-Hermitian states cannot be used as a resource to outperform Hermitian strategies to nonlocal games, which are a generalization of Bells' inequalities. The more general question of quantifying entanglement in non-Hermitian quantum theory is of recent interest. For instance, entanglement entropy is analysed in a non-Hermitian setting in \cite{modak2021eigenstate}. To make a connection with a question from the previous section, a relative modular operator can be constructed which defines a measure for entanglement entropy in quantum field theory. Addressing the question of modular automorphism groups in quasi-Hermitian quantum theory would, thus, have physical consequences to the question of entanglement.

Perhaps more radical than analysing entanglement of states is introducing entanglement in the inner product structure of quasi-Hermitian theory. Following \cite{werner1989quantum}, in the tensor product model, I define an entangled metric operator as one which cannot be expressed as a sum of tensor products of local metric operators. A simple example of an entangled metric operator is one of the Werner states \cite{werner1989quantum} for a two-qubit system,
\begin{align}
\eta &= \frac{p}{3} P^{+} + \frac{2(1-p)}{3}P^{-},
\end{align}
where 
\begin{align}
P^{\pm} &= \frac{1}{2}\left(\mathbb{1} \pm \sum_{i,j = 1}^2 \ket{i} \bra{j} \otimes \ket{j} \bra{i} \right)
\end{align} 
and $p \in (0,1)$ is a tunable parameter. 
As a consequence of \cref{corollary:Schmidt-Rank}, the only local observables which are quasi-Hermitian with respect to this metric are the trivial multiples of the identity operator. Are there examples where the metric is entangled and there are nontrivial local observable algebras?

\section{Exceptional Points}
Pseudo-Hermitian operators exhibit unique symmetry breaking properties at exceptional points. Perturbative expressions for eigenvalues near exceptional points are given by \textit{Puiseux series} as opposed to the familiar Taylor series from Rayleigh-Schr{\"o}dinger perturbation theory. This enhanced sensitivity of eigenvalues at exceptional points has been observed experimentally in optical cavities and has applications to sensing technology \cite{Hodaei2017,Chen2017}. Thus, a systematic characterization of exceptional points is of both mathematical and physical interest. 

This thesis displays the insight that higher order exceptional points generically correspond to singular points of contours of exceptional points. This was demonstrated using a perturbative argument for the case of third order exceptional points. I imagine this readily generalizes to $n$-th order exceptional points, in which case the higher order exceptional points appear on a ridge of an algebraic variety of dimension $n-1$, as exemplified for instance in the non-Hermitian SSH chain studied in \cref{EP Surface Section}.

Of course, I emphasize the above insight is one among many characterizing exceptional points \cite{Motzkin1955,Moiseyev1980,Kato1995,Klaiman2008}.

Some questions generated by my curiosity regarding exceptional points include the following:
\begin{itemize}
\item Finite-dimensional matrices which are polynomial functions of their parameters have at most a finite number of exceptional and diabolical points \cite{Kato1995}. However, operators acting on infinite-dimensional Hilbert spaces can have an infinite set of exceptional points \cite{bender1998real}. This set can have limit points. When do limit points occur? What constraints must be satisfied for the exceptional points to be isolated? 

\item Can set-theoretic properties, such as cardinality, of exceptional points in an infinite-dimensional setting be classified? For instance, the model of \cite{bender1998real} has a countably infinite set of exceptional points. Can the set of exceptional points be dense? Can a continuous operator-valued map have an uncountably infinite set of exceptional points? If a measure is defined on a $\sigma$-algebra of the domain of this map, such the Borel $\sigma$-algebra, can the set of exceptional points have nonzero measure? The last question is particularly interesting if the measure of every countable set is zero, as is the case for the Lebesgue measure.
\end{itemize}
\section{Generalizations}

How can the results of this thesis be generalized? This section summarizes observations and questions which may yield new results.

\begin{itemize}
\item Sections~\ref{Constant Evalues} and \ref{General-Maximal-Breaking-Section} display examples where a subset of the spectrum of a Hamiltonian is independent of its non-Hermitian perturbation. This property has interesting physical consequences relating to the efficiency of transport \cite{ortega2019mathcal}. In some cases, but not all, this phenomena can be understood from the appearance of nodes in the eigenvectors via the Hellmann-Feynman theorem \cite{schrodinger1926quantisierung,Gttinger1932,Hellmann1933,Feynman1939}. I suspect a Su-Schrieffer-Heeger chain \cite{SSH}, with a pair of non-Hermitian perturbations at a suitable choice of sites $(m_1,m_2)$, also exhibits this behaviour. The tools from section~\ref{tridiagEsysSection} should be sufficient to analyse this problem and confirm my suspicion.

\item Chapter~\ref{Tactics} characterizes a large class of pseudo-Hermitian operators. In this chapter, I demonstrate that a small number of examples included in this class are the models studied in \cite{MyFirstPaper,Shi2022,Barnett2023} in examples~\ref{ex:Shi2022} and \ref{2011Example}. Are other examples from existing literature included in this class? What interesting new models can be analysed with \cref{representationTheoryTheorem} in hand? For example, using the representation of $2 \times 2$ matrices given in \cref{2x2-Matrix-Algebra-Representation}, choosing a unitary other than the identity or exchange matrices will yield examples which are distinct from examples~\ref{ex:Shi2022} and \ref{2011Example}. Perhaps more radically, we could consider representations of the algebra of $n \times n$ matrices, or even a $C^*$-algebra with no finite-dimensional representations. Furthermore, I imagine \cref{representationTheoryTheorem} can be generalized to reducible representations which are an infinite sum of smaller representations.

\item
An unusual phase transition which can occur for $\mathcal{PT}$-symmetric operators is one exhibiting what has been referred to as \textit{maximal symmetry breaking} \cite{MyFirstPaper}, where all of the eigenvalues gain a nonzero imaginary part as one tunes a parameter past an exceptional point. Representation theory allows for an understanding of why some $\mathcal{PT}$-symmetric operators exhibit maximal symmetry breaking, as demonstrated in \cref{Commutative-Section} and exemplified in \cref{nearestNeighbour}. 

I know of two examples which exhibit maximal symmetry breaking which I do not know how to express in the form given by \cref{homoHam}. The first example, which is discussed in \cite{Graefe2008} and \cref{su(2)tridiagExample}, is given by $S_x + i \gamma S_z$, where $S_x$ and $S_z$ are spin-$s$ representations of their corresponding generators in the Lie algebra $\mathfrak{su}(2)$. The second example is the $4 \times 4$ matrix pencil given by
\begin{align}
H(\gamma) = \begin{pmatrix}
i \gamma & 1 & 0 & 0\\
1 & i \gamma & 1 & 0 \\
0 & 1 & -i \gamma & 1 \\
0 & 0 & 1 & -i \gamma
\end{pmatrix},
\end{align}
which has second order exceptional points at $\gamma = \pm \sqrt{5}/4$. All four eigenvalues of $H(\gamma)$ have a nonzero imaginary part when $|\gamma| > \sqrt{5}/4$. Can either of these examples be expressed in the form given by \cref{homoHam}? Furthermore, with these examples in mind, it is reasonable to ask what the necessary and sufficient condition for maximal symmetry breaking is.

\item As demonstrated by \cref{non-commuting-representation-example}, given a set of matrices, $h_k$, which do not commute and a corresponding pseudo-Hermitian model generated by \cref{representationTheoryTheorem}, the $\mathcal{PT}$-symmetry breaking threshold can be greater than the threshold for the individual matrices $h_k$. Can a lower bound be placed on this threshold? I imagine this bound would involve the commutation properties of the $h_k$.
%Is there an interplay between this stuff and things which are present in other extensions of quantum theory, such as the Jordan extension, or GPTs?
\end{itemize}
%\section{Measures}
%
%Calculations and theorems are often motivated by understanding the general case of a mathematical object. 
%
%Statement characterizing how generic properties of operators are:
%
%How often is an antilinear symmetry unbroken? Bauer-Fike says something about this in the case of a curve containing a Hermitian point.
%
%How often is it the case that the cusp <-> higher order EP correspondance holds?
%
%Non-Hermitian Gleason theorem?

\section{Geometry}

One strategy I have utilized is to consider the set of exceptional points as an algebraic variety. I suspect we have only scratched the surface of connections between algebraic geometry and the perturbation theory of pseudo-Hermitian operators. Furthermore, I presume the more general setting of \textit{analytic geometry} and the study of \textit{analytic spaces} should apply to the study of holomorphic non-Hermitian operator-valued functions. A correspondence between algebraic and analytic geometry is given by the \textit{GAGA theorem} \cite{Serre1956}, which could potentially find applications to non-Hermitian perturbation problems.

\section{Bosons and Fermions}

Bosons and fermions are the generic types of particles \cite{Pauli1940}. Non-interacting, or free, models are analytically and numerically tractable. Section~\ref{FermionObservableAlgebraSection} examines quasi-Hermiticity in the context of free fermions. The construction of the inner product presented in this section begs for further examination and generalization. Some immediate questions include the following:

\begin{itemize}
\item Free models with pair creation and annihilation terms in the Hamiltonian, such as the \textit{Kitaev quantum wire} \cite{Kitaev2001}, have captivated the interest of physicists. In particular, the topological properties \cite{Chiu2016,Kawabata2019} of Kitaev's quantum wire means it can act as a superconductor, and the edge states of systems with Majorana fermions can possibly be used to design fault-tolerant qubits \cite{Nayak2008}. Non-Hermitian generalizations of the quantum wire have been explored \cite{Menke2017,KaustubhKitaev}. Using a Bogoljubov transformation\cite{Bogoljubov1958}, can the intertwining operators and local observable algebras studied in \cref{FermionObservableAlgebraSection} be generalized to models with pair creation and annihilation terms?

\item The number operator is a quasi-Hermitian observable associated to the metric operator utilized in \cref{FermionObservableAlgebraSection}. A result of \cite{QuasiHerm92} is that an irreducible set of observables uniquely defines a metric operator. If the Hamiltonian and the number operator are assumed to be simultaneously quasi-Hermitian, what constraint on the space of metrics is imposed?

\end{itemize}

%======================================================================

%======================================================================

% B I B L I O G R A P H Y
% -----------------------

% The following statement selects the style to use for references.  It controls the sort order of the entries in the bibliography and also the formatting for the in-text labels.
%\bibliographystyle{abbrv}
% This specifies the location of the file containing the bibliographic information.  

\cleardoublepage % This is needed if the book class is used, to place the anchor in the correct page,
                 % because the bibliography will start on its own page.
                 % Use \clearpage instead if the document class uses the "oneside" argument
\phantomsection  % With hyperref package, enables hyperlinking from the table of contents to bibliography             
% The following statement causes the title "References" to be used for the bibliography section:
\renewcommand*{\bibname}{References}

% Add the References to the Table of Contents
\addcontentsline{toc}{chapter}{\textbf{References}}
\printbibliography
% Tip 5: You can create multiple .bib files to organize your references. 
% Just list them all in the \bibliogaphy command, separated by commas (no spaces).

% The following statement causes the specified references to be added to the bibliography% even if they were not 
% cited in the text. The asterisk is a wildcard that causes all entries in the bibliographic database to be included (optional).
%\nocite{*}

% The \appendix statement indicates the beginning of the appendices.
\appendix
% Add a title page before the appendices and a line in the Table of Contents
%\chapter*{APPENDICES}

%\addcontentsline{toc}{chapter}{APPENDICES}
%======================================================================
 
\chapter{Mathematical Preliminaries} \label{functionalAnalysisAppendix}
\addcontentsline{toc}{chapter}{Appendix A - Mathematical Preliminaries}

The goal of this appendix is to review mathematical constructs used in quantum theory and in this thesis. Topics which are essential but not reviewed here include calculus, linear algebra, complex numbers, and point-set topology.

Textbooks which introduce subsets of the mathematics discussed in this appendix in the context of quantum theory include \cite{vonNeumannBook,vonNeumannBookEnglish,hall2013quantum,moretti2013spectral,rejzner2016perturbative}. Textbooks which introduce functional analysis on its own merits include \cite{Istratescu1981,zaanen1953linear,riesz2012functional}. Textbooks devoted to the study of operator algebras include \cite{Abramovich2002,KadisonRingroseI,bratteli1987operator,Bratteli1997,TakesakiI,TakesakiII}. A classic book on perturbation theory of linear operators that also introduces various tools from functional analysis is \cite{Kato1995}, 

A graphical summary of the mathematical spaces studied in this appendix is given in \cref{graphOfSpaces}.

\section{Functional Analysis} \label{funcAnalSection}
\subsection{Normed Spaces}
\begin{defn} \label{normedSpace}
Let $X$ be a vector space over the field $\mathbb{F}$, where $\mathbb{F}$ is either $\mathbb{R}$ or $\mathbb{C}$. A map $||\cdot||:X \rightarrow \mathbb{R}$ is a \textit{norm} on $X$, and $(X, ||\cdot||)$ is a \textit{normed space}, if and only if for all $v, w \in X$ and $\lambda \in \mathbb{F}$,
\begin{enumerate}
\item $||v|| \geq 0$ \label{normSemidef}\\
\item $||v|| = 0 \Rightarrow v = 0$. \label{normdef}\\
\item $||\lambda v|| = |\lambda| \, ||v||$ \label{homogeneity}\\
\item $||v + w|| \leq ||v|| + ||w||$ \label{triangle} 
\end{enumerate}
Condition \eqref{triangle} is referred to as the \textit{triangle inequality}, the constraint \eqref{homogeneity} is the \textit{homogeneity} property. Together, conditions \eqref{normSemidef} and \eqref{normdef} are referred to as \textit{positive-definiteness}.
\end{defn}
\begin{defn}
Given a normed space, $(X, ||\cdot||)$, an \textit{open ball} is a set of form 
\begin{align}
\mathscr{B}_{||\cdot||}(a;r) := \{x \in X \, | \, ||x - a|| < r\}, \label{openBall}
\end{align} 
with $r \in \mathbb{R}$ and $a \in X$.
The \textit{norm topology} is the collection of subsets of $X$ which can be expressed as a union of open balls. Topological properties in normed spaces, such as closure, convergence, and continuity, are defined via the norm topology.
\end{defn}
\begin{defn} \label{defn:Banach}
A \textit{sequence} is a function whose domain is the natural numbers. The values of a sequence, $x$, for the input $n$ is typically denoted with a subscript, such as $x_n$. A \textit{Cauchy sequence}, $x$, is a sequence whose codomain is a normed space such that for every $\epsilon > 0$, there exists an $N_\epsilon$ such that for all $n, m \geq N_\epsilon$, $||x_n -x_m||< \epsilon$.
A normed space is called \textit{complete}, or a \textit{Banach space}, if every Cauchy sequence has a limit in the Banach space.
\end{defn}

\begin{definition}
Two norms, $||\cdot||_i:X \to \mathbb{R}$ with $i \in \{1, 2\}$, are called \textit{equivalent} if there exists $a, b > 0$ such that for every $x \in X$,
\begin{align}
a ||x||_1 < ||x||_2 < b ||x||_1.
\end{align}
Two norms are equivalent if and only if their norm topologies are the same \cite[Thm. 1.7.1]{Narici2010}.
\end{definition}

\begin{ex} \label{pnorm}
Given the field of either the real or complex numbers, $\mathbb{F} \in \{\mathbb{R}, \mathbb{C} \}$, and an integer $n \in \mathbb{Z}_+$, consider the coordinate space $\mathbb{F}^n$. An example of a norm on $\mathbb{F}^n$ is
one of the $p$\textit{-norms}, with $1 \leq p \leq +\infty$, which are defined by
\begin{align}
||z||_p^p &:= \sum_{i = 1}^n |z_i|^p &\quad& \forall p \in [1, \infty[ \\
||z||_\infty &:= \sup_{i \in \{1, \dots, n\}} |z_i|.
\end{align} 
Every norm on $\mathbb{F}^n$ is equivalent \cite[Thm. 1.7.1]{Narici2010}. Furthermore, every normed space on $\mathbb{F}^n$ is a Banach space. The \textit{Euclidean topology} is the induced norm topology.
\demo
\end{ex}

\begin{defn}
An \textit{isometry} is a norm preserving map between two normed spaces. Explicitly, if $(X, ||\cdot||_X)$ and $(Y, ||\cdot||_Y)$ are normed spaces, then $\iota:X \to Y$ is an isometry if $||\cdot||_Y \circ \iota = ||\cdot||_X$.
\end{defn}

\subsection{Spaces with Hermitian Forms}

\begin{definition} \label{hermitianForm}
Given a vector space, $V$, over $\mathbb{C}$, a map $Q:V \times V \to \mathbb{C}$ is a \textit{Hermitian sesquilinear form} if and only if $Q$ satisfies the following properties
\begin{itemize}
\item \textit{Conjugate Symmetry}:
\begin{align}
Q(x,y) =  Q(y,x)^*.
\end{align}

\item \textit{Linearity}: 
\begin{align}
Q(z, \alpha x + \beta y) &= \alpha Q(z,x) + \beta Q(z,y) &\enspace& \forall \alpha, \beta \in \mathbb{C}, \, x,y,z \in V. 
\end{align}
\end{itemize}
\end{definition}

I follow the definition of an indefinite inner product given in \cite[\S 2.1]{Gohberg2005}.
\begin{definition} \label{indefiniteInnerProduct}
Consider a Hermitian sesquilinear form, $[\cdot| \cdot]: V \times V \rightarrow \mathbb{C}$, over a complex vector space, $V$, as defined in \cref{hermitianForm}. This space is an \textit{indefinite inner product space} if $[\cdot| \cdot]$ is \textit{non-degenerate}, which means for every $x \in V \setminus \{0\}$, there exists a $y \in V \setminus \{0\}$ such that $[x|y] \neq 0.$ The form $[\cdot| \cdot]$ referred to as the \textit{indefinite inner product}.
\end{definition}

\begin{definition} \label{innerProduct}
An \textit{inner product space} is a complex vector space, $V$, equipped with a map, $\braket{\cdot| \cdot}: V \times V \rightarrow \mathbb{C}$, referred to as the \textit{inner product}, which is a Hermitian sesquilinear form which is also \textit{positive-definite}, so that 
\begin{align}
x \in V \setminus \{0\}\, \Rightarrow \, \braket{x|x} > 0.
\end{align}
Some treatments refer to inner product spaces as \textit{unitary spaces} \cite{silberstein1962symmetrisable,Kato1995}.
\end{definition}

\begin{proposition}
Every inner product space, $V$, is a normed space in the \textit{induced norm}, the map $||\cdot||:V \to \mathbb{R}$ defined by
\begin{align}
||\psi|| := \sqrt{\braket{\psi|\psi}}.
\end{align}
\end{proposition}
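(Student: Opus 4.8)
The plan is to verify that the induced norm satisfies the four defining properties of a norm listed in \cref{normedSpace}. The first three follow directly from positive-definiteness and the sesquilinear structure, whereas the triangle inequality requires an auxiliary estimate, which I expect to be the crux.

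First I would dispatch non-negativity and definiteness together. By conjugate symmetry, $\braket{\psi|\psi} = \braket{\psi|\psi}^*$ is real; positive-definiteness gives $\braket{\psi|\psi} > 0$ for $\psi \neq 0$, while sesquilinearity forces $\braket{0|0} = 0$. Hence $\braket{\psi|\psi} \geq 0$ for every $\psi$, so $\|\psi\| = \sqrt{\braket{\psi|\psi}}$ is a well-defined non-negative real number, and it vanishes precisely when $\braket{\psi|\psi} = 0$, i.e. when $\psi = 0$. Homogeneity is then a one-line computation using antilinearity in the first argument and linearity in the second,
\begin{align}
\|\lambda \psi\|^2 = \braket{\lambda\psi|\lambda\psi} = \overline{\lambda}\,\lambda\,\braket{\psi|\psi} = |\lambda|^2\,\|\psi\|^2,
\end{align}
and taking square roots yields $\|\lambda\psi\| = |\lambda|\,\|\psi\|$.

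The hard part will be the triangle inequality, and the strategy is to first establish the \emph{Cauchy--Schwarz inequality} $|\braket{\psi|\phi}| \leq \|\psi\|\,\|\phi\|$. I would obtain this by expanding the non-negative quantity $\braket{\psi - t\phi|\psi - t\phi} \geq 0$ for a well-chosen scalar $t$: assuming $\phi \neq 0$ and setting $t = \braket{\phi|\psi}/\braket{\phi|\phi}$, the cross terms combine so that the expression reduces to $\braket{\psi|\psi} - |\braket{\phi|\psi}|^2/\braket{\phi|\phi} \geq 0$, which rearranges to the claim, the case $\phi = 0$ being trivial. With Cauchy--Schwarz in hand, I would expand
\begin{align}
\|\psi + \phi\|^2 = \|\psi\|^2 + 2\,\mathrm{Re}\braket{\psi|\phi} + \|\phi\|^2 \leq \|\psi\|^2 + 2\,\|\psi\|\,\|\phi\| + \|\phi\|^2 = \left(\|\psi\| + \|\phi\|\right)^2,
\end{align}
using $\mathrm{Re}\braket{\psi|\phi} \leq |\braket{\psi|\phi}|$, and conclude by taking square roots. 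The only subtlety to watch is the sign convention: since the inner product is antilinear in the \emph{first} slot, the conjugates appearing in the expansion and in the optimizing choice of $t$ must be tracked carefully, but this is bookkeeping rather than a genuine obstacle.
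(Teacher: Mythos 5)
Your proof is correct and complete: the first three norm axioms follow exactly as you say from conjugate symmetry, positive-definiteness, and sesquilinearity, and your derivation of Cauchy--Schwarz with the optimizing choice $t = \braket{\phi|\psi}/\braket{\phi|\phi}$ correctly tracks the antilinear-in-the-first-slot convention used throughout the thesis (\cref{innerProduct}). For comparison, the paper states this proposition in the appendix as a standard fact \emph{without} proof, so there is no argument to measure yours against; the route you chose --- verifying the axioms of \cref{normedSpace} and reducing the triangle inequality to Cauchy--Schwarz via $\|\psi+\phi\|^2 = \|\psi\|^2 + 2\,\mathrm{Re}\braket{\psi|\phi} + \|\phi\|^2$ --- is the canonical one any textbook treatment would supply.
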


\begin{defn} \label{HilbertSpaceDefn}
A \textit{Hilbert space} is an inner product space which is complete in the induced norm \cite[chp. II \S 1]{vonNeumannBook,vonNeumannBookEnglish}. Topological properties in Hilbert space, such as closure, convergence, and continuity are defined with respect to the induced norm topology.
\end{defn}

%\begin{lemma}
%A vector subspace, $\mathcal{K}$, of a Hilbert space, $\mathcal{H}$, is a Hilbert space if and only if it's closed in the norm topology on $\mathcal{H}$. (might be able to drop this).
%\end{lemma}

%said to be -> called

\begin{defn}
Two vectors in an inner product space are \textit{orthogonal} if their inner product equals zero.
\end{defn}

\begin{defn} \label{OrthoComplement-Defn}
Given a subset, $S$, of an inner product space, $V$, %Perhaps define ortho-complement for Q
the \textit{orthogonal complement} of $S$ is the linear subspace of elements in $V$ which are orthogonal to every element of $S$, or more explicitly,
\begin{align}
S^\perp := \{v \in V \,|\, \braket{v|s} = 0 \, \forall s \in S\}.
\end{align}
\end{defn}

\begin{defn}
Given an inner product space, $V$, a subset, $S \subset V$ is called \textit{orthonormal} if its elements are mutually orthogonal and all have norm one. More explicitly, a set is orthonormal if for every $\psi, \phi \in S$, $\braket{\psi|\phi} = \delta^\phi_{\psi}$ holds. Every orthonormal set is linearly independent. An \textit{orthonormal basis} is an orthonormal subset of a Hilbert space, $S \subsetneq \mathcal{H}$, such that $S^\perp = \{0\}$. An equivalent definition of an orthonormal basis is an orthonormal subset satisfying $\text{cl}(\text{span}(S)) = \mathcal{H}$
\end{defn}
\begin{ex} \label{canonicalBasis}
The $n$-dimensional \textit{complex coordinate space}, $\mathbb{C}^n$, which is the set of ordered $n$-tuples of complex numbers, is a Hilbert space with the inner product 
\begin{align}
\braket{z|w} := \sum_{i = 1}^n z_i^*  w_i.
\end{align} 
One orthonormal basis of $\mathbb{C}^n$ is the \textit{canonical basis}, which is the set of tuples, $\{\gls{ei}\,|\,i \in \{1, \dots, n\} \}$, whose components are given by
\begin{align}
(e_i)_j = \delta_{ij},
\end{align}
where $\delta_{ij}$ is the Kronecker delta.
\demo
\end{ex}

\begin{ex}
The set of $n \times n$ matrices with elements in $\mathbb{C}$, $\mathfrak{M}_n(\mathbb{C})$ is a Hilbert space. The inner product is the \textit{Frobenius} or \textit{Hilbert-Schmidt} inner product, $\braket{\cdot|\cdot}_{F}: \mathfrak{M}_n(\mathbb{C}) \times \mathfrak{M}_n(\mathbb{C}) \to \mathbb{C}$, defined by
\begin{align}
\braket{M|N}_{F} := \text{Tr} (M^\dag N), \label{hilbertSchmidtHilbertAlg}
\end{align}
where $\dag$ denotes complex conjugate transposition.
\demo
\end{ex}

\begin{defn}
A topological space is \textit{separable} if and only if it has a countable dense subset. In particular, a Hilbert space is separable if and only if it has a countable orthonormal basis.
\end{defn}

\begin{theorem}[Pythagoras] \label{Pythagoras}
Given a sequence of vectors in a Hilbert space, $\psi_i \in \mathcal{H}$ for $i \in S$ with $S$ countable, suppose these vectors are mutually orthogonal, so $\braket{\psi_i|\psi_j} = \delta_{ij} \braket{\psi_i|\psi_i}$. If $\sum_{i \in S} ||\psi_i||^2 <+\infty$ converges, then $\sum_{i \in S} \psi_i \in \mathcal{H}$ converges, and 
\begin{align}
\bigg| \bigg| \sum_{i \in S} \psi_i \bigg| \bigg|^2 = \sum_{i \in S} ||\psi_i||^2.
\end{align}
\end{theorem}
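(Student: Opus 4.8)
The plan is to reduce the countably infinite case to the elementary finite case and then invoke completeness of $\mathcal{H}$. First I would dispose of the finite-index situation: if $S$ is finite, expanding the inner product bilinearly and using the orthogonality hypothesis $\braket{\psi_i|\psi_j} = \delta_{ij}\braket{\psi_i|\psi_i}$ gives $\|\sum_{i\in S}\psi_i\|^2 = \sum_{i,j\in S}\braket{\psi_i|\psi_j} = \sum_{i\in S}\|\psi_i\|^2$ with no convergence issue. This finite identity is the workhorse for everything that follows.

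For countably infinite $S$, enumerate $S = \{i_1,i_2,\dots\}$ and form the partial sums $S_N := \sum_{k=1}^{N}\psi_{i_k}$. The key step is to show $(S_N)$ is Cauchy. For $M>N$ the vectors $\psi_{i_{N+1}},\dots,\psi_{i_M}$ are mutually orthogonal, so the finite Pythagorean identity yields $\|S_M - S_N\|^2 = \sum_{k=N+1}^{M}\|\psi_{i_k}\|^2$. Since $\sum_k \|\psi_{i_k}\|^2$ is a convergent series of nonnegative reals by hypothesis, its partial sums are Cauchy in $\mathbb{R}$, so these tails tend to $0$ as $N,M\to\infty$. Hence $(S_N)$ is Cauchy in $\mathcal{H}$, and because a Hilbert space is complete in its induced norm (see \cref{HilbertSpaceDefn,defn:Banach}), $S_N$ converges to a limit we denote $\sum_{i\in S}\psi_i \in \mathcal{H}$.

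The norm identity then follows from continuity of the norm together with one more application of the finite identity: $\|\sum_{i\in S}\psi_i\|^2 = \lim_{N}\|S_N\|^2 = \lim_{N}\sum_{k=1}^{N}\|\psi_{i_k}\|^2 = \sum_{i\in S}\|\psi_i\|^2$. Thus both the convergence claim and the displayed equality are established.

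The only genuinely delicate point — which I regard as a remark rather than a true obstacle — is well-definedness of the notation $\sum_{i\in S}\psi_i$, that is, independence of the limit from the chosen enumeration of $S$. Because the scalar series $\sum_i \|\psi_i\|^2$ has nonnegative terms it converges unconditionally, so the Cauchy estimate above holds for \emph{every} enumeration; a standard tail-comparison argument then shows all enumerations yield the same vector limit. Since the theorem asserts only the existence of the sum and the value of its norm, I would confine this to a one-line remark and not develop it in full.
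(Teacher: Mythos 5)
Your proof is correct and complete. Note that the paper states \cref{Pythagoras} without any proof at all --- it appears as a background fact in the functional-analysis appendix --- so there is no argument of the paper's to compare against; yours is the canonical one that would fill that gap: the finite Pythagorean identity by bilinear expansion of the inner product, the Cauchy tail estimate $\|S_M - S_N\|^2 = \sum_{k=N+1}^{M}\|\psi_{i_k}\|^2$, completeness of $\mathcal{H}$ (the paper's \cref{HilbertSpaceDefn,defn:Banach}, exactly as you cite), and continuity of the norm. Your closing remark on enumeration-independence is also well judged: since the terms $\|\psi_i\|^2$ are nonnegative, the scalar series converges unconditionally, and the same tail estimate then transfers unconditional convergence to the vector series, so the notation $\sum_{i \in S}\psi_i$ is well defined. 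The only cosmetic quibble is that your partial-sum notation $S_N$ collides with the index set $S$ from the theorem statement; renaming one of them would avoid confusion.
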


\subsection{Operators}
\begin{definition} \label{operatorDefn}
In this thesis, an \textit{operator}, $A$, is either a linear or an antilinear function whose domain and codomain are vector spaces over the field $\mathbb{C}$. A function, $A$, is \textit{linear} if
\begin{align}
A(\alpha x + \beta y) = \alpha A(x) + \beta A(y) &\enspace& \forall x,y \in \text{Dom}(A), \, \alpha,\beta \in \mathbb{C},
\end{align}
or \textit{antilinear} if 
\begin{align}
A(\alpha x + \beta y) = \alpha^* A(x) + \beta^* A(y) &\enspace& \forall x,y \in \text{Dom}(A), \, \alpha,\beta \in \mathbb{C}. \label{antilinearDefn}
\end{align}
Antilinear functions are also referred to as \textit{conjugate linear} in some references.
An operator whose codomain is $\mathbb{C}$, $A:\text{Dom}(A) \rightarrow \mathbb{C}$, is referred to as a \textit{functional}.
An operator, $A$, whose domain, $\text{Dom}(A)$, is a dense subset of a topological vector space\footnote{A \textit{topological vector space} over a topological field is a vector space and a topological space where vector addition and scalar multiplication are continuous functions in the product topologies. A \textit{topological field} is a topological space and a field such that the field operations are continuous. A normed space in the norm topology is an example of a topological vector space. Normed spaces are vector spaces over the topological field $\mathbb{C}$ with the Euclidean topology.}, $X$, is called \textit{densely defined}. Every operator in this thesis is densely defined. 

Given $\psi \in \text{Dom}(A)$, I use the shorthand $A \psi$ for $A(\psi)$.
\end{definition}

%TO-DO: DO I NEED TO DEFINE TOPOLOGICAL SPACE? Make sure I don't reference DUAL OF HILBERT SPACE?

\begin{defn} \label{boundedOperators}
An operator, $A:X \to Y$, whose domain, $X$, and codomain, $Y$, are normed spaces is \textit{bounded} if there exists $M > 0$ such that for every $x \in X$, $||A x||_Y < M ||x||_X$. The set of all bounded linear operators from $X$ to $Y$ is denoted $\mathcal{B}(X,Y)$, and the set of all bounded linear operators from $X$ to $X$ is denoted $\mathcal{B}(X)$.

Densely defined bounded operators on a Banach space admit unique extensions whose domain is the entire Banach space. %Thus, we will always assume the domain of a bounded operator is
\end{defn}

\begin{theorem}
An operator is continuous in the norm topology if and only if it is bounded.
\end{theorem}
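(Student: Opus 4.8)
The plan is to prove both implications directly from the definitions in \cref{normedSpace} and \cref{boundedOperators}, exploiting throughout the fact that for a (anti)linear operator every scalar that actually enters the argument is \emph{real}, so the distinction between the linear and antilinear cases in \cref{operatorDefn} is immaterial and no complex conjugates intervene. I would open with a single remark to this effect and then dispatch the two directions.

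First I would establish that boundedness implies continuity. Assuming $A$ is bounded, fix $M > 0$ with $\|Ax\|_Y \leq M\|x\|_X$ for all $x \in X$. Since $A$ is (anti)linear and the coefficients $+1$ and $-1$ are real, one has $A(x - x_0) = Ax - Ax_0$ for any $x, x_0 \in X$. Given $\epsilon > 0$, the choice $\delta = \epsilon/M$ yields
\begin{align}
\|Ax - Ax_0\|_Y = \|A(x-x_0)\|_Y \leq M\|x-x_0\|_X < \epsilon
\end{align}
whenever $\|x - x_0\|_X < \delta$, which is precisely continuity at $x_0$. As $x_0$ is arbitrary, $A$ is continuous on $X$.

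For the converse, continuity implies boundedness, it suffices to use continuity at the origin, noting $A0 = 0$ by (anti)linearity. Taking $\epsilon = 1$, continuity furnishes a $\delta > 0$ such that $\|x\|_X < \delta$ forces $\|Ax\|_Y < 1$. For an arbitrary nonzero $x$, the rescaled vector $\tfrac{\delta}{2\|x\|_X}\,x$ has norm $\delta/2 < \delta$; applying the bound and pulling out the positive real scalar $\tfrac{\delta}{2\|x\|_X}$ (again conjugation-free) gives
\begin{align}
\frac{\delta}{2\|x\|_X}\,\|Ax\|_Y < 1 \quad \Longrightarrow \quad \|Ax\|_Y < \frac{2}{\delta}\,\|x\|_X.
\end{align}
Setting $M = 2/\delta$ establishes boundedness, the case $x = 0$ holding trivially.

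There is no real obstacle here; the result is elementary. The only points worth stating explicitly are that continuity at a single point is equivalent to global continuity for a (anti)linear map (the structural fact that makes the first direction a one-line estimate), and the observation flagged above that every invoked scalar is real, so the antilinear case requires no separate treatment.
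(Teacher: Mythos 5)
Your proof is correct, and it is the standard textbook argument: the Lipschitz estimate $\|Ax - Ax_0\|_Y \leq M\|x - x_0\|_X$ for one direction, and continuity at the origin plus rescaling by a positive real scalar for the other. The paper states this theorem without any proof at all (it is quoted as a known fact immediately after the definition of boundedness), so there is no authorial argument to compare against; your write-up simply supplies the omitted standard proof, and your remark that all scalars involved are real — so the antilinear case of the paper's definition of an operator requires no separate treatment — is exactly the right observation to make the statement cover both cases at once.
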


\begin{defn}
Given normed spaces, $X, Y$, the \textit{operator norm} of a bounded operator is 
\begin{align}
||A|| &:= \sup \left\{\frac{||A x||_Y}{||x||_X} \, \vert \, x \in X \setminus \{0\} \right\}.
\end{align}
The operator norm is a norm on $\mathcal{B}(X,Y)$. $\mathcal{B}(X,Y)$ with the operator norm is a Banach space whenever $X$ and $Y$ are Banach spaces.
\end{defn}

\begin{ex}
Consider the set of linear maps $A:\mathbb{C}^m \to \mathbb{C}^n$. A bijective map from linear maps to matrices, $A_{ij} \in \mathbb{C}^m \otimes \mathbb{C}^n$, is realized via 
\begin{align}
A e_i = \sum_{j = 1}^n A_{ji} \tilde{e}_j
\end{align}
where \gls{ei} and $\tilde{e}_j$ are the canonical bases of $\mathbb{C}^m$ and $\mathbb{C}^n$ respectively, given in \cref{canonicalBasis}. The map $A$ is bounded regardless of what norm is specified on $\mathbb{C}^n$ and $\mathbb{C}^m$. Considered as a map from $(\mathbb{C}^m, ||\cdot||_p)$ to $(\mathbb{C}^n,||\cdot||_p)$with $p \in \{1, 2, \infty\}$ the operator norm of $A \in \mathcal{B}(\mathbb{C}^d)$ is 
\begin{align}
||A||_1 &= \max_{j \in \{1, \dots, n\}} \sum_{i \in \{1, \dots, m\} }|A_{ij}|\\
||A||_2 &= \sqrt{\max \sigma(A^\dag A)} \label{spectralNorm}\\
||A||_\infty &= \max_{i \in \{1, \dots, m\}} \sum_{j \in \{1, \dots, n\} }|A_{ij}| = ||A^T||_1,
\end{align}
where the definition of $||A||_2$ relies on the notion of its spectrum, which is defined in \cref{spectrumDefn}.
The $||\cdot||_2$ operator norm is referred to as the \textit{spectral norm}.
\demo
\end{ex}

\begin{defn}
A bounded operator, $A$, is a \textit{contraction} if $||A|| \leq 1$.
\end{defn}

\begin{theorem}[Riesz Representation \cite{riesz1907,frechet1907}] \label{Riesz}
Let $\mathcal{H}$ be a Hilbert space, and $\varphi \in \mathcal{B}(\mathcal{H}, \mathbb{C})$ be a continuous linear functional. There exists a unique $f_\varphi \in \mathcal{H}$ such that for all $\psi \in \mathcal{H}$,
\begin{align}
\varphi(\psi) = \braket{f_{\varphi}|\psi}.
\end{align}
Additionally, 
\begin{align}
||f_\varphi|| = ||\varphi||.
\end{align} %Check French / doi's...
\end{theorem}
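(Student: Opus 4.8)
The plan is to establish existence, uniqueness, and the norm identity separately, with the existence argument carrying the real content. First I would dispose of the trivial case $\varphi = 0$ by taking $f_\varphi = 0$. Assuming $\varphi \neq 0$, I would pass to the kernel $N := \ker \varphi$. Because $\varphi$ is bounded, hence continuous, $N$ is the preimage of the closed set $\{0\}$ and is therefore a closed linear subspace of $\mathcal{H}$. Since $\varphi$ is not identically zero, $N$ is a proper subspace, and the orthogonal decomposition of a Hilbert space along a closed subspace (the projection theorem, a consequence of completeness) guarantees that $N^\perp \neq \{0\}$. This is the only place completeness of $\mathcal{H}$ enters, and it is precisely what distinguishes a Hilbert space from a general inner product space.

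Next I would fix a nonzero vector $z \in N^\perp$ and manufacture the representative by the standard device: for an arbitrary $\psi \in \mathcal{H}$, set $u := \varphi(\psi)\, z - \varphi(z)\, \psi$. Linearity of $\varphi$ gives $\varphi(u) = 0$, so $u \in N$ and hence $\braket{z|u} = 0$. Expanding this inner product, using that it is antilinear in the first slot and linear in the second, yields
\begin{align}
\varphi(\psi)\braket{z|z} = \varphi(z)\braket{z|\psi}.
\end{align}
Dividing by $\braket{z|z} = ||z||^2 \neq 0$ and absorbing the resulting constant into the first slot, I would define
\begin{align}
f_\varphi := \frac{\overline{\varphi(z)}}{||z||^2}\, z,
\end{align}
so that $\varphi(\psi) = \braket{f_\varphi|\psi}$ for every $\psi \in \mathcal{H}$, as required.

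For uniqueness, if $f_1$ and $f_2$ both represent $\varphi$, then $\braket{f_1 - f_2|\psi} = 0$ for all $\psi$; choosing $\psi = f_1 - f_2$ forces $||f_1 - f_2||^2 = 0$, so $f_1 = f_2$. For the norm identity I would argue by two inequalities: the Cauchy--Schwarz inequality gives $|\varphi(\psi)| = |\braket{f_\varphi|\psi}| \leq ||f_\varphi||\,||\psi||$, whence $||\varphi|| \leq ||f_\varphi||$; conversely, evaluating at $\psi = f_\varphi$ gives $\varphi(f_\varphi) = ||f_\varphi||^2$, so $||\varphi|| \geq |\varphi(f_\varphi)|/||f_\varphi|| = ||f_\varphi||$ when $f_\varphi \neq 0$ (and both norms vanish when $f_\varphi = 0$). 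Combining the two inequalities yields $||f_\varphi|| = ||\varphi||$. The main obstacle is conceptual rather than computational: everything hinges on producing a nonzero element of $N^\perp$, which is exactly where completeness is indispensable; the remaining steps are routine manipulations of the inner product together with Cauchy--Schwarz.
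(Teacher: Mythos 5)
Your proof is correct and complete: the kernel-decomposition argument is the standard one, your uniqueness and norm computations are sound, and you correctly placed the complex conjugate in $f_\varphi = \overline{\varphi(z)}\,z/||z||^2$ so as to respect the paper's physicist convention that $\braket{\cdot|\cdot}$ is antilinear in the first slot. Note, however, that the paper itself states this theorem without proof, citing only the original works of Riesz and Fr\'echet, so there is no in-paper argument to compare against; your identification of the projection theorem (hence completeness) as the sole nontrivial ingredient is exactly the right conceptual emphasis for a self-contained proof.
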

%
%Consider adjusting this definition to work in $\eta$-spaces.

\begin{definition}[Adjoint] \label{adjoint}
Suppose $A:\text{Dom}(A) \to \mathcal{H}_2$ is a densely defined operator, $\text{Dom}(A) \subseteq \mathcal{H}_1$, and $\mathcal{H}_1$ and $\mathcal{H}_2$ are Hilbert spaces with the inner products $\braket{\cdot|\cdot}_1$ and $\braket{\cdot|\cdot}_2$ respectively. Let $\text{Dom}(A^\dag)$ be the space of all $\phi \in \mathcal{H}_2$ such that the linear functional 
\begin{align}
\text{Dom}(A) \ni \psi \mapsto \begin{cases}
\braket{\phi |A \psi}_2 & \text{ if } A \text{ is linear } \\
\braket{\phi |A \psi}_2^* & \text{ if } A \text{ is antilinear }
\end{cases}
\end{align} 
is bounded. Define the \text{adjoint} of $A$ to be the operator $A^\dag: \text{Dom}(A^\dag) \to \mathcal{H}_1$ such that $A^\dag \phi \in \mathcal{H}_1$ is the unique vector given by the Riesz Representation \cref{Riesz} satisfying 
\begin{align}
\braket{A^\dag \phi|\psi}_1 = \begin{cases}
\braket{\phi|A \psi}_2 & \text{ if } A \text{ is linear }\\
\braket{\phi|A \psi}_2^* & \text{ if } A \text{ is antilinear }
\end{cases}. \label{adjointDefn}
\end{align} 
An operator is called \textit{self-adjoint} or \hypertarget{link:Hermitian}{\textit{Hermitian}} if $A = A^\dag$.
\end{definition}

\begin{ex}
If $A \in \mathcal{B}(\mathbb{C}^d)$, then its adjoint is the complex conjugate transpose,
\begin{align}
(A^\dag)_{ij} = A_{ji}^*. \label{conjugateTranspose}
\end{align}
\demo
\end{ex}

\begin{defn}
Given a Hilbert space, $\mathcal{H}$, a \textit{projector}, $P$, is a linear idempotent satisfying
\begin{align}
P P = P.
\end{align}
An \textit{orthogonal projector} is a self-adjoint projector.
\end{defn}

%call them linear subspaces instead of vector subspaces?
\begin{theorem} \label{orthoProjection}
For every closed linear subspace of a Hilbert space, $\mathcal{K}$, there exists a unique orthogonal projection whose range equals $\mathcal{K}$. I refer to this projector as $\mathscr{P}_\mathcal{K}$.
\end{theorem}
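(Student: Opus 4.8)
The plan is to reduce the statement to the classical \emph{orthogonal decomposition} of a Hilbert space relative to a closed subspace, and then read off both existence and uniqueness of $\mathscr{P}_\mathcal{K}$ from that decomposition. Concretely, I would first establish that for every $x \in \mathcal{H}$ there is a unique closest point $y \in \mathcal{K}$, that is, a unique $y$ minimizing $||x - w||$ over $w \in \mathcal{K}$, and that this minimizer is characterized by the orthogonality condition $x - y \in \mathcal{K}^\perp$. Granting this, I would define $\mathscr{P}_\mathcal{K}(x) := y$ and verify the remaining algebraic properties.

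First I would prove existence of the minimizer. Let $d = \inf_{w \in \mathcal{K}} ||x - w||$ and choose a sequence $w_n \in \mathcal{K}$ with $||x - w_n|| \to d$. The \emph{parallelogram law}, $||a + b||^2 + ||a - b||^2 = 2||a||^2 + 2||b||^2$, applied to $a = x - w_n$ and $b = x - w_m$ yields
\begin{align}
||w_n - w_m||^2 &= 2||x - w_n||^2 + 2||x - w_m||^2 - 4 \left\| x - \tfrac{w_n + w_m}{2} \right\|^2 \\
&\leq 2||x - w_n||^2 + 2||x - w_m||^2 - 4 d^2,
\end{align}
where I used $\tfrac{w_n + w_m}{2} \in \mathcal{K}$, so that its distance to $x$ is at least $d$. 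The right-hand side tends to $0$, so $(w_n)$ is Cauchy; by completeness of $\mathcal{H}$ (\cref{HilbertSpaceDefn}) it converges to some $y$, and since $\mathcal{K}$ is closed, $y \in \mathcal{K}$ with $||x - y|| = d$. The orthogonality $x - y \in \mathcal{K}^\perp$ then follows by the standard first-order variational argument: for any $w \in \mathcal{K}$ and scalar $t$, the map $t \mapsto ||x - y - t w||^2$ is minimized at $t = 0$, which forces $\braket{x - y | w} = 0$. Uniqueness of $y$ is immediate from the same parallelogram estimate applied to two candidate minimizers.

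Next I would set $\mathscr{P}_\mathcal{K}(x) = y$ and check that it is an orthogonal projection with range $\mathcal{K}$. Linearity follows from uniqueness of the decomposition $x = y + (x-y)$ with $y \in \mathcal{K}$ and $x - y \in \mathcal{K}^\perp$; idempotence holds because $\mathscr{P}_\mathcal{K} x \in \mathcal{K}$ is its own closest point; self-adjointness follows by writing $\braket{\mathscr{P}_\mathcal{K} x | x'} = \braket{\mathscr{P}_\mathcal{K} x | \mathscr{P}_\mathcal{K} x'}$ (since $x' - \mathscr{P}_\mathcal{K} x' \in \mathcal{K}^\perp$ and $\mathscr{P}_\mathcal{K} x \in \mathcal{K}$) and noting the right-hand side is conjugate-symmetric in $x, x'$; and $\mathcal{R}(\mathscr{P}_\mathcal{K}) = \mathcal{K}$ because $\mathscr{P}_\mathcal{K}$ fixes $\mathcal{K}$ pointwise and maps into $\mathcal{K}$. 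Finally, for uniqueness I would let $P$ be any orthogonal projection with $\mathcal{R}(P) = \mathcal{K}$. Then for each $x$ we have $Px \in \mathcal{K}$, and for every $z$ the computation $\braket{x - Px | Pz} = \braket{x | Pz} - \braket{x | P^\dag P z} = \braket{x | Pz} - \braket{x | P^2 z} = 0$ shows $x - Px \in \mathcal{K}^\perp$; hence $Px$ realizes the same orthogonal decomposition as $\mathscr{P}_\mathcal{K} x$, whence $P = \mathscr{P}_\mathcal{K}$.

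The main obstacle is the existence of the closest point: everything downstream is bookkeeping, but the minimizer exists only because the parallelogram law converts the minimizing sequence into a Cauchy sequence and completeness of $\mathcal{H}$ then supplies its limit. This is precisely where both the Hilbert-space (rather than merely inner-product-space) hypothesis and the closedness of $\mathcal{K}$ are essential, so I would isolate the parallelogram estimate as the single analytic input and treat the algebraic verifications as routine consequences of the resulting decomposition.
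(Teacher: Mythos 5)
Your proof is correct. There is nothing in the paper to compare it against: the paper states this result in its mathematical preliminaries appendix without any proof, and your argument is precisely the canonical one (the Hilbert projection theorem), with the parallelogram law correctly isolated as the single analytic input that uses completeness of $\mathcal{H}$ and closedness of $\mathcal{K}$. One cosmetic remark: in the variational step establishing $x - y \in \mathcal{K}^\perp$, let $t$ range over complex scalars (or test against both $w$ and $\mathfrak{i}w$), since real $t$ alone only forces the real part of $\braket{w|x-y}$ to vanish; boundedness of $\mathscr{P}_\mathcal{K}$, implicit in the paper's notion of a self-adjoint projector, also follows at once from the Pythagorean identity $||x||^2 = ||\mathscr{P}_\mathcal{K}x||^2 + ||x - \mathscr{P}_\mathcal{K}x||^2$.
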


%Theorem: For every self-adjoint operator, its range is orthogonal to its kernel.

%(choose to use the language of either projection or projector)

%Question for my own curiosity: Is an operator bounded below if and only if it's bounded and invertible?

%\begin{defn}
%An operator, $A: \text{Dom}(A) \to Y$ is \textit{closable}, or \textit{preclosed}, if and only if for every sequence $x_n \in \text{Dom}(A)$ such that $x_n \to 0$, if $T x_n$ is a convergent sequence, $T x_n$ converges to 0. \cite[p.155]{KadisonRingroseI} {\color{red} Add definition of closed.}
%\end{defn}

\begin{defn} \label{diagonalizable}
A linear operator, $D$, is \textit{diagonal} in the basis $x_i$ if and only if the basis elements are eigenvectors of $D$. A linear operator is \hypertarget{link:Diagonable}{\textit{diagonalizable}} in the basis $x_i$ if and only if it is similar to a diagonal operator. An operator is diagonalizable in a finite-dimensional space if and only if it is diagonalizable in every linearly independent basis. An operator on a finite-dimensional space is \textit{defective} if and only if it is not diagonalizable.
\end{defn}
\begin{proposition}[Spectral Theorem for Compact Normal Operators]
Every compact normal operator is unitarily diagonalizable.
\end{proposition}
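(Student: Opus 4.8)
The assertion that a compact normal operator $T \in \mathcal{B}(\mathcal{H})$ is unitarily diagonalizable means that $\mathcal{H}$ admits an orthonormal basis consisting of eigenvectors of $T$; the unitary $U$ carrying this basis onto a fixed reference basis then conjugates $T$ to a diagonal operator, as in \cref{diagonalizable}. The plan is therefore to construct such an orthonormal eigenbasis. I would reduce the problem to the compact self-adjoint case by writing $T = A + \mathfrak{i} B$ with $A = \tfrac{1}{2}(T + T^\dag)$ and $B = \tfrac{1}{2\mathfrak{i}}(T - T^\dag)$, both of which are compact (as real linear combinations of $T$ and $T^\dag$) and self-adjoint. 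The aim is then to diagonalize $A$ and $B$ simultaneously and read off eigenvectors of $T$.

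The single structural fact about normality I would isolate first is that $TT^\dag = T^\dag T$ is equivalent to $[A,B]_- = 0$: expanding $TT^\dag$ and $T^\dag T$ shows their difference equals $-2\mathfrak{i}[A,B]_-$. Commutation is exactly what will make each eigenspace of $A$ invariant under $B$, driving the refinement step below. I would also record the companion identity $\ker(\lambda \mathbb{1} - T) = \ker(\bar\lambda \mathbb{1} - T^\dag)$, which follows from the norm equality $\|(\lambda\mathbb{1} - T)x\| = \|(\bar\lambda\mathbb{1} - T^\dag)x\|$ for all $x$; one expands both sides and uses $\|Tx\| = \|T^\dag x\|$ (a consequence of $\langle x|T^\dag T x\rangle = \langle x|TT^\dag x\rangle$) to see the cross terms coincide. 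This identity guarantees that eigenspaces of $T$ belonging to distinct eigenvalues are orthogonal, which is what certifies that the basis assembled at the end is genuinely orthonormal.

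The main obstacle is the existence of even a single eigenvector, since in infinite dimensions the characteristic-polynomial argument of the finite-dimensional case is unavailable and compactness is essential. I would dispatch this through the spectral theorem for compact self-adjoint operators applied to $A$ (and to $B$). Its crux — which I would either cite from the general self-adjoint \cref{SpectralTheorem} specialized to the compact setting, or prove directly from the variational characterization $\|A\| = \sup_{\|x\|=1}|\langle x|A x\rangle|$ — is that a nonzero compact self-adjoint operator possesses an eigenvalue of modulus $\|A\|$: take a unit sequence $x_n$ with $\langle x_n|A x_n\rangle \to \mu$ where $|\mu| = \|A\|$, use compactness to pass to a subsequence along which $A x_n$ converges, and show $\|(A - \mu\mathbb{1})x_n\| \to 0$, so that $\mu$ is an eigenvalue. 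Peeling off eigenspaces on successive orthogonal complements — each finite-dimensional for a nonzero eigenvalue, with the eigenvalues accumulating only at $0$ — then yields an orthonormal eigenbasis of $A$.

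Finally I would diagonalize $B$ within this basis. Because $[A,B]_- = 0$, every eigenspace of $A$ is invariant under $B$, and the restriction of $B$ to such a space is again compact and self-adjoint, hence carries its own orthonormal eigenbasis; applying this on each $A$-eigenspace (including the possibly infinite-dimensional $\ker A$, where the same compact self-adjoint spectral theorem applies) produces an orthonormal basis of simultaneous eigenvectors of $A$ and $B$. Each such vector is then an eigenvector of $T = A + \mathfrak{i} B$, which completes the unitary diagonalization. The only bookkeeping point to verify is that this simultaneous refinement remains an orthonormal basis of all of $\mathcal{H}$, and this follows because the $A$-eigenspaces already decompose $\mathcal{H}$ as an orthogonal direct sum.
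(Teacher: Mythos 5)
Your proof is correct, but there is nothing in the paper to compare it against: the proposition is stated in the appendix as a standard fact, with no proof supplied. Your route — the Cartesian decomposition $T = A + \mathfrak{i}B$, the observation that normality of $T$ is equivalent to $[A,B]_- = 0$ (indeed $TT^\dag - T^\dag T = -2\mathfrak{i}[A,B]_-$, as you compute), the norm-attaining eigenvalue plus peeling construction for the compact self-adjoint case, and the refinement of each $A$-eigenspace by $B$ (finite-dimensional off the kernel, with $B|_{\ker A}$ again compact and self-adjoint by the invariance you verify) — is a complete and standard argument. The more common textbook alternative works with $T$ directly: for a normal operator the spectral radius equals the operator norm (cf.\ \cref{thm:Spectral-Radius} together with the $C^*$-identity), so a nonzero compact normal $T$ has an approximate eigenvector at modulus $||T||$ that compactness upgrades to a genuine eigenvector, and one then peels eigenspaces of $T$ itself, using exactly the identity you record, $\ker(\lambda\mathbb{1} - T) = \ker(\lambda^*\mathbb{1} - T^\dag)$, to check that each eigenspace and its orthogonal complement are invariant under both $T$ and $T^\dag$. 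That version avoids invoking the self-adjoint case twice; yours buys the convenience of quoting the compact self-adjoint spectral theorem as a black box and makes the commutation mechanics explicit. The delicate points are all accounted for in your write-up: $B$-invariance of $A$-eigenspaces from $[A,B]_- = 0$, finite dimensionality of eigenspaces for nonzero eigenvalues of a compact operator, the residual subspace of the peeling landing in $\ker A$, and orthonormality of the assembled simultaneous eigenbasis, so that the resulting basis change is unitary in the sense of \cref{diagonalizable}.
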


\begin{defn}\label{invariantSubspaceOperatorDefn}
Consider a set of operators, $\mathcal{S}$, on a vector space. A subspace $V$ is an \textit{invariant subspace} if the image of $V$ under every element $s \in \mathcal{S}$ is contained in $V$, or more explicitly, $s(V) \subseteq V$. If $\mathcal{S}$ consists of a single operator, then $V$ is an invariant subspace of that operator.
\end{defn}

\begin{ex}
If the domain and codomain of an operator are equal, then this domain is an invariant subspace of the operator. The eigenspaces of an operator are always invariant subspaces, as well as the set containing the zero vector.
\end{ex}

\begin{lemma} \label{Invariant-Subspace-Of-Adjoint}
If $V$ is an invariant subspace of a bounded operator on a Hilbert space, $A:\mathcal{H} \to \mathcal{H}$, then $V^\perp$ is an invariant subspace of this operator's adjoint, $A^\dag$.
\end{lemma}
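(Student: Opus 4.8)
The statement to prove is \cref{Invariant-Subspace-Of-Adjoint}: if $V$ is an invariant subspace of a bounded operator $A:\mathcal{H}\to\mathcal{H}$, then $V^\perp$ is an invariant subspace of $A^\dag$. This is a short, standard fact, and the plan is to argue directly from the definition of the orthogonal complement together with the defining property of the adjoint.

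First I would take an arbitrary vector $w \in V^\perp$ and aim to show that $A^\dag w \in V^\perp$; by \cref{OrthoComplement-Defn} this amounts to verifying $\braket{A^\dag w \,|\, v} = 0$ for every $v \in V$. The natural move is to push the adjoint across the inner product using \cref{adjointDefn}, which gives $\braket{A^\dag w \,|\, v} = \braket{w \,|\, A v}$ (taking $\mathcal{H}_1 = \mathcal{H}_2 = \mathcal{H}$ and $A$ linear in the definition of the adjoint). Since $V$ is an invariant subspace of $A$ by hypothesis, the vector $A v$ lies in $V$. But $w \in V^\perp$, so $w$ is orthogonal to every element of $V$, and in particular $\braket{w \,|\, A v} = 0$. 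Chaining these equalities yields $\braket{A^\dag w \,|\, v} = 0$ for all $v \in V$, which is exactly the statement that $A^\dag w \in V^\perp$.

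Because $w \in V^\perp$ was arbitrary, this shows $A^\dag(V^\perp) \subseteq V^\perp$, which by \cref{invariantSubspaceOperatorDefn} is the definition of $V^\perp$ being an invariant subspace of $A^\dag$, completing the argument. I would note in passing that boundedness of $A$ guarantees $A^\dag$ is defined on all of $\mathcal{H}$ (see \cref{boundedOperators} and \cref{adjoint}), so that $A^\dag w$ is genuinely a well-defined element of $\mathcal{H}$ for every $w$; this is why the statement is phrased for bounded operators rather than merely densely defined ones.

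There is no real obstacle here — the proof is a two-line computation once the definitions are unwound. The only point requiring the slightest care is the direction in which the adjoint is transferred: one must apply \cref{adjointDefn} with $A^\dag w$ in the first slot, so that the inner product $\braket{A^\dag w \,|\, v}$ becomes $\braket{w \,|\, A v}$, placing $Av$ (which lands in $V$) against $w$ (which annihilates $V$). No closedness of $V$ is needed for this inclusion, which is worth remarking since the companion \cref{Lemma:-Invariant-Subspace-Homomorphism} earlier in the excerpt did invoke closedness for its two-directional statement.
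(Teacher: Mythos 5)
Your proof is correct and is exactly the standard argument: the paper states \cref{Invariant-Subspace-Of-Adjoint} without proof, but its companion \cref{Lemma:-Invariant-Subspace-Homomorphism} in the main text is proved by the very same adjoint-transfer computation $\braket{A^\dag w|v} = \braket{w|Av} = 0$, so you have matched the paper's approach. Your closing remarks — that boundedness of $A$ ensures $A^\dag$ is everywhere defined, and that no closedness of $V$ is needed for the one-directional inclusion — are both accurate.
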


\subsection{Invertibility}

This section discusses the inverses of operators. Firstly, I remind the reader of what invertibility means for functions.
\begin{defn}
A function, $f:X \to Y$, is 
\begin{itemize}
\item \textit{surjective}, or \textit{onto}, if the range of $f$ is $Y$. \\
\item \textit{injective}, or \textit{one-to-one}, if $f(x_1) = f(x_2) \, \Rightarrow x_1 = x_2$ \\
\item \textit{bijective} if $f$ is injective and surjective.
\end{itemize}
\end{defn}
A function is bijective if and only if it has an inverse function, $f^{-1}:Y \to X$, such that $f \circ f^{-1} = \text{id}_Y$ and $f^{-1} \circ f = \text{id}_X$, where $\text{id}_S$ denotes the identity map on a set $S$.

The next goal of this section is to analyse invertibility of linear maps. Injectivity is discussed in the next lemma.
\begin{lemma}
A linear map is injective if and only if its kernel consists of only the zero vector. 
\end{lemma}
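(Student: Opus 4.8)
The plan is to prove both implications of the biconditional directly, using only the linearity of the map and the definition of the kernel as $\ker(A) = \{v \in \mathrm{Dom}(A) \,:\, A v = 0\}$. Let $A$ be a linear map. For the forward direction, I would assume $A$ is injective and show $\ker(A) = \{0\}$. Since $A$ is linear, $A(0) = 0$, so $0 \in \ker(A)$; conversely, if $v \in \ker(A)$, then $Av = 0 = A(0)$, and injectivity forces $v = 0$. Hence $\ker(A) = \{0\}$.

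For the reverse direction, I would assume $\ker(A) = \{0\}$ and deduce injectivity. The key step is to exploit linearity to convert equality of images into membership in the kernel: if $A x_1 = A x_2$, then linearity gives $A(x_1 - x_2) = A x_1 - A x_2 = 0$, so $x_1 - x_2 \in \ker(A) = \{0\}$, whence $x_1 = x_2$. This establishes that $A$ is one-to-one.

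There is no serious obstacle here; the result is entirely elementary. The only point worth emphasizing is that both directions genuinely rely on linearity (the forward direction uses $A(0)=0$, and the reverse uses that $A$ preserves the difference $x_1 - x_2$), so the argument would fail for a general antilinear map only in notation but not in substance — the same computation works verbatim since antilinear maps also satisfy $A(0) = 0$ and $A(x_1 - x_2) = A x_1 - A x_2$. I would therefore present the two implications as a short paragraph each, with no appeal to any deeper structure of the underlying vector spaces.

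\begin{proof}
Let $A$ be a linear map. Suppose first that $A$ is injective. By linearity $A(0) = 0$, so $0 \in \ker(A)$. If $v \in \ker(A)$, then $A v = 0 = A(0)$, and injectivity yields $v = 0$; thus $\ker(A) = \{0\}$.

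Conversely, suppose $\ker(A) = \{0\}$. If $A x_1 = A x_2$, then by linearity
\begin{align}
A(x_1 - x_2) = A x_1 - A x_2 = 0,
\end{align}
so $x_1 - x_2 \in \ker(A) = \{0\}$. Hence $x_1 = x_2$, and $A$ is injective.
\end{proof}
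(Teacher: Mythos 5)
Your proof is correct and is the standard elementary argument; the paper itself states this lemma without proof, so your two-implication computation (using $A(0)=0$ for the forward direction and $A(x_1-x_2)=Ax_1-Ax_2$ for the converse) supplies exactly what is omitted. Your side remark that the same argument works verbatim for antilinear maps is also accurate, since antilinearity still gives $A(0)=0$ and preserves differences.
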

More can be said for bounded linear maps between Banach spaces. A first result, which can be generalized to topological vector spaces, is given below
\begin{theorem}[Banach bounded inverse theorem] \label{BoundedInverseTheorem}
A bijective bounded linear operator whose domain and codomain are Banach spaces has a bounded inverse {\normalfont \cite[Thm. 14.5.1]{Narici2010}}.
\end{theorem}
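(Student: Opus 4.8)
The plan is to deduce this theorem from the \emph{open mapping theorem}, which asserts that a surjective bounded linear operator between Banach spaces maps open sets to open sets. Granting this, the result follows quickly. Let $T: X \to Y$ be bijective and bounded with $X, Y$ Banach. Its set-theoretic inverse $T^{-1}: Y \to X$ exists and is linear precisely because $T$ is a linear bijection. For any open set $U \subseteq X$, the preimage of $U$ under $T^{-1}$ is $(T^{-1})^{-1}(U) = T(U)$, which is open by the open mapping theorem; hence $T^{-1}$ is continuous, and by the equivalence of continuity and boundedness for linear maps stated earlier in this appendix, $T^{-1}$ is bounded, so $T^{-1} \in \mathcal{B}(Y, X)$.

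The substance therefore lies in proving the open mapping theorem for a surjection $T:X \to Y$. First I would reduce the statement to a single containment: it suffices to show that the image of the open unit ball $\mathscr{B}_{||\cdot||_X}(0;1)$ contains an open ball $\mathscr{B}_{||\cdot||_Y}(0;\delta)$ about the origin, since translation and positive scaling, both of which $T$ respects by linearity, then upgrade this to openness of $T(U)$ for every open $U$. Next, because $T$ is surjective and every point of $X$ lies in some ball about the origin, $Y = \bigcup_{n \in \mathbb{Z}_+} \text{cl}(T(\mathscr{B}_{||\cdot||_X}(0;n)))$. Since $Y$ is complete, the Baire category theorem guarantees that one of these closed sets has nonempty interior; rescaling by $1/n$ shows $\text{cl}(T(\mathscr{B}_{||\cdot||_X}(0;1)))$ has nonempty interior, and a symmetrization and convexity argument, exploiting that the unit ball is balanced and convex, relocates this interior point to the origin, yielding $\mathscr{B}_{||\cdot||_Y}(0;r) \subseteq \text{cl}(T(\mathscr{B}_{||\cdot||_X}(0;1)))$ for some $r > 0$.

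The hard part will be removing the closure: upgrading the inclusion $\mathscr{B}_{||\cdot||_Y}(0;r) \subseteq \text{cl}(T(\mathscr{B}_{||\cdot||_X}(0;1)))$ to $\mathscr{B}_{||\cdot||_Y}(0;r/2) \subseteq T(\mathscr{B}_{||\cdot||_X}(0;1))$, and this is where completeness of $X$, not merely of $Y$, is essential. Given a target $y$ with $||y||_Y < r/2$, I would build an approximating sequence by iteration: choose $x_1$ with $||x_1||_X < 1/2$ and $||y - T x_1||_Y < r/4$, then approximate the residual $y - T x_1$ by the same scaled inclusion to obtain $x_2$ with $||x_2||_X < 1/4$ and $||y - T x_1 - T x_2||_Y < r/8$, and continue. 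The increments decay geometrically, so the partial sums of $\sum_k x_k$ form a Cauchy sequence in $X$; completeness of $X$ supplies a limit $x = \sum_k x_k$ with $||x||_X < 1$, and continuity of $T$ forces $T x = y$. This establishes the required containment, hence the open mapping theorem, and the boundedness of $T^{-1}$ then follows as in the first paragraph.
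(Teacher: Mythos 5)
Your proposal is correct, and it is the classical proof: the paper itself offers no argument for this theorem, deferring entirely to the citation of Narici--Beckenstein, and the route you take---open mapping theorem via the Baire category theorem applied to $Y = \bigcup_{n} \text{cl}\left(T(\mathscr{B}_{||\cdot||_X}(0;n))\right)$, a balancedness/convexity argument to center the interior point at the origin, the geometric-series iteration using completeness of $X$ to remove the closure, and finally continuity of $T^{-1}$ from openness of $T$---is exactly the standard argument found in the cited reference. Your remark correctly isolating where each completeness hypothesis enters (completeness of $Y$ for Baire, completeness of $X$ for the limit of the approximating series) is accurate and worth keeping.
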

One useful class of operators that helps clarify when an inverse exists is the class of \textit{bounded below} operators. 
\begin{defn}
An operator $A$ between normed spaces is \textit{bounded below} if 
\begin{align}
\text{inf} \left\{\frac{||A \psi||}{||\psi||} \, | \, \psi \in \text{Dom}(A) \setminus \{0\} \right\} > 0. \label{boundedBelow}
\end{align}
\end{defn}
An equivalent characterization of bounded below operators is as follows.
\begin{theorem} \label{thm-bounded-below-inverse}
A bounded linear operator between Banach spaces is bounded below if and only if this operator is injective and has a closed range. {\normalfont \cite[Thm. 2.5]{Abramovich2002}}
\end{theorem}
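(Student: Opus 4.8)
The plan is to prove both implications directly, with the reverse direction leaning on the Banach bounded inverse theorem, \cref{BoundedInverseTheorem}. For the forward direction, I would assume $A$ is bounded below, so that by \cref{boundedBelow} there exists a constant $c > 0$ with $\|Ax\| \geq c\|x\|$ for all $x$ in the domain. Injectivity is then immediate: if $Ax = 0$, the inequality forces $\|x\| = 0$, hence $x = 0$. Closedness of the range is where completeness of the domain enters. Given a sequence $y_n = Ax_n$ in $\mathcal{R}(A)$ converging to some $y$, the sequence $(y_n)$ is Cauchy, and the lower bound gives $\|x_n - x_m\| \leq c^{-1}\|Ax_n - Ax_m\| = c^{-1}\|y_n - y_m\|$, so $(x_n)$ is Cauchy in the domain. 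Since the domain is a Banach space, $x_n \to x$ for some $x$, and continuity (boundedness) of $A$ yields $y = \lim Ax_n = Ax \in \mathcal{R}(A)$.

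For the reverse direction, I would assume $A$ is injective with closed range and produce the lower bound. The key maneuver is to shrink the codomain: since $\mathcal{R}(A)$ is a closed linear subspace of the Banach space $Y$, it is itself a Banach space, and the corestriction $A : X \to \mathcal{R}(A)$ is a bijective bounded linear map between Banach spaces. Applying \cref{BoundedInverseTheorem} gives that its inverse is bounded, so there is $M > 0$ with $\|A^{-1}y\| \leq M\|y\|$ for all $y \in \mathcal{R}(A)$. Substituting $y = Ax$ produces $\|x\| = \|A^{-1}Ax\| \leq M\|Ax\|$, equivalently $\|Ax\| \geq M^{-1}\|x\|$, which is exactly the bounded-below condition with constant $c = M^{-1} > 0$.

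The main obstacle is not any single computation but correctly setting up the reverse direction so that the bounded inverse theorem is applicable: the raw map $A : X \to Y$ is in general not surjective, so one cannot invoke \cref{BoundedInverseTheorem} until the codomain has been replaced by the closed range, at which point $A$ becomes a bijection onto a Banach space. I would take care to emphasize that the closedness hypothesis is precisely what guarantees $\mathcal{R}(A)$ is complete, and that completeness of both $X$ and $\mathcal{R}(A)$ is indispensable — without it the theorem fails, mirroring the role of completeness already noted in the counterexample $Ae_n = \tfrac{1}{n}e_n$ on $\ell^2$ discussed earlier in the excerpt, whose range is dense but not closed and which is injective yet not bounded below.
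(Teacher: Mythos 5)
Your proof is correct: the forward direction's Cauchy-sequence argument and the reverse direction's corestriction of $A$ onto its closed (hence complete) range, followed by an application of \cref{BoundedInverseTheorem}, is exactly the standard argument — the paper itself offers no proof, deferring entirely to the citation \cite[Thm.~2.5]{Abramovich2002}, whose proof proceeds the same way. Your observation that the operator $Ae_n = \tfrac{1}{n}e_n$ on $\ell^2$ (injective, dense but non-closed range, not bounded below) witnesses the necessity of the closed-range hypothesis is a nice touch that connects back to the paper's own counterexample.
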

Since a surjective map between Banach spaces necessarily has a closed range, a bounded bijective linear operator is necessarily bounded below. The next result regards bounded linear operators on Hilbert spaces.
\begin{theorem}
A bounded linear operator on a Hilbert space, $A \in \mathcal{B}(\mathcal{H})$, is bijective if and only if $A$ and $A^\dag$ are bounded below.
\end{theorem}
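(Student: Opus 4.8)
The plan is to prove the two implications separately, leaning on the two inversion results already established in this section, namely the Banach bounded inverse \cref{BoundedInverseTheorem} and the characterization of bounded below operators in \cref{thm-bounded-below-inverse}, together with the standard Hilbert-space identity relating the range of $A$ to the kernel of its adjoint.

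For the forward direction, I would suppose $A$ is bijective. Then \cref{BoundedInverseTheorem} supplies a bounded inverse $A^{-1} \in \mathcal{B}(\mathcal{H})$, and for every $\psi \in \mathcal{H}$ the product inequality gives $||\psi|| = ||A^{-1} A \psi|| \leq ||A^{-1}|| \, ||A \psi||$, so that $||A\psi|| \geq ||\psi||/||A^{-1}||$ and $A$ is bounded below. To treat $A^\dag$, I would take adjoints in the identities $A A^{-1} = A^{-1} A = \mathbb{1}$ to see that $A^\dag$ is itself bijective with bounded inverse $(A^{-1})^\dag$; the same estimate then shows $A^\dag$ is bounded below as well.

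For the reverse direction, I would assume both $A$ and $A^\dag$ are bounded below. Applying \cref{thm-bounded-below-inverse} to $A$ immediately yields that $A$ is injective and that $\mathcal{R}(A)$ is closed, so the only remaining task is surjectivity. Here I would invoke the identity $\mathcal{R}(A)^\perp = \ker(A^\dag)$, which follows directly from the definition of the adjoint in \cref{adjoint}: a vector $\phi$ is orthogonal to $\mathcal{R}(A)$ exactly when $\braket{\phi | A\psi} = 0$ for all $\psi$, equivalently $\braket{A^\dag \phi | \psi} = 0$ for all $\psi$, equivalently $A^\dag \phi = 0$. Since $A^\dag$ is bounded below it is injective, so $\ker(A^\dag) = \{0\}$ and hence $\mathcal{R}(A)^\perp = \{0\}$. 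Because $\mathcal{R}(A)$ is closed, the projection theorem gives $\mathcal{R}(A) = (\mathcal{R}(A)^\perp)^\perp = \{0\}^\perp = \mathcal{H}$, so $A$ is onto; combined with injectivity, $A$ is bijective.

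The main obstacle is the surjectivity step in the reverse direction: in infinite dimensions injectivity of $A$ alone never forces surjectivity, so the argument genuinely requires both the closed-range consequence of $A$ being bounded below \emph{and} the triviality of $\ker(A^\dag)$. The leverage comes entirely from pairing \cref{thm-bounded-below-inverse} with the orthogonal-complement identity; once the range is known to be closed, the double-complement identity for closed subspaces finishes the proof without any further estimates.
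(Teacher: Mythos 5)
Your proposal is correct and follows essentially the same route as the paper, whose proof is stated as a straightforward corollary of \cref{thm-bounded-below-inverse} together with the identity $\mathcal{R}(A)^\perp = \ker(A^\dag)$ --- exactly the two ingredients you pair in your reverse direction. You simply fill in the details the paper leaves implicit, including the forward direction via the bounded inverse theorem, and every step checks out.
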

\begin{proof}
This is a straightforward corollary of \cref{thm-bounded-below-inverse} and the identity 
\begin{align}
\mathcal{R}(A)^\perp = \ker(A^\dag).
\end{align}
\end{proof}

\section{Associative Algebras}

In this section, we will introduce relevant mathematical structures from functional analysis. More detailed introductions to the theory of $C^*$-algebras can be found in \cite[chp. 2]{bratteli1987operator}, \cite[chp. 2.1]{rejzner2016perturbative}, and \cite{moretti2013spectral,KadisonRingroseI,TakesakiI,TakesakiII}. 

%\subsection{For my own amusement}
%\begin{defn}
%Given two sets, $S$ and $T$, the \textit{Cartesian product} of $s \in S$ and $t \in T$ is the ordered pair $s \times t := (s,t)$ (define ordered pair), and the Cartesian product of $S$ and $T$ is the set of all ordered pairs
%\begin{align}
%S \times T := \{(s,t) \,|\, s \in S \text{ and } t \in T\}
%\end{align}
%\end{defn}
%\begin{defn}
%A \textit{binary product} on $S$ is a map whose domain is $S * S$ and whose codomain is $S$. An element $e \in S$ satisfying $e * s = s$ for all $s \in S$ is called a \textit{left identity}. If $s * e = s$ for all $s \in S$, $e$ is referred to as a \textit{right identity}  
%\end{defn}

\begin{defn}
An \textit{associative algebra}, $\mathfrak{A}$, over $\mathbb{C}$ is a complex vector space equipped with an associative bilinear operation, $\cdot:\mathfrak{A} \times \mathfrak{A} \to \mathfrak{A}$, referred to as the \textit{product}, which distributes over addition. Given $x, y \in \mathfrak{A}$, I use the shorthand $x y$ to denote $x \cdot y$. A \textit{normed algebra} is an associative algebra and a normed space, where the norm must satisfy the \textit{product inequality},
\begin{align}
\forall x, y \in \mathfrak{A}, ||xy|| \leq ||x|| \, ||y||. \label{productInequality}
\end{align}
A \textit{Banach algebra} is a normed algebra and a Banach space in the same norm.
\end{defn}
The reason why the product inequality is imposed is so that the product is continuous in the product norm topology.

\begin{ex}
Given a normed space, $X$, $\mathcal{B}(X)$ is a normed algebra, where the product is function composition, and the norm is the operator norm.
\demo
\end{ex}

\begin{defn} \label{c*Defn}
An \textit{involutive complex algebra}, also referred to as a ${}^*$-\textit{algebra}, $\mathfrak{A}$, is an associative algebra over $\mathbb{C}$ equipped with a map, $*: \mathfrak{A} \rightarrow \mathfrak{A}$, which satisfies the following properties for every $a,b \in \mathfrak{A}$, $\alpha,\beta \in \mathbb{C}$:
\begin{itemize}
\item $*$ is an involution, so $(a^*)^* = a$. \\
\item $*$ is antilinear, so $(\alpha a + \beta b)^* = \overline{\alpha} a^* + \overline{\beta} b^* $, where $\overline{\alpha}$ denotes the complex conjugate of $\alpha$.\\
\item $*$ is an \textit{algebra antihomomorphism}, which means $(a b)^*= b^* a^*$.
\end{itemize}
A $C^*$-\textit{algebra} is a ${}^*$-algebra and a Banach algebra for which the $C^*$ identity holds, namely 
\begin{align}
||a^* a|| = ||a^*|| \, ||a||.
\end{align}
\end{defn}

\begin{ex}
The prototypical example of a $C^*$-algebra is the set of bounded linear operators, $\mathcal{B}(\mathcal{H})$, on a Hilbert space, $\mathcal{H}$, where the involution is the adjoint, and the norm is the operator norm. Understanding this example is crucial to understanding the general case $C^*$-algebra, due to the existence of representations on closed subalgebras of $\mathcal{B}(\mathcal{H})$ given by the Gelfand-Naimark theorem~\ref{GelfandNaimark}.
\demo
\end{ex}

\begin{defn} \label{generated-Algebra-defn}
Given a set of subsets of a $C^*$-algebra, $S_i \in \mathfrak{A}$, the $C^*$-algebra \textit{generated by} the collection $S_i$ is, intuitively, the smallest $C^*$-algebra containing the union of $S_i$. Explicitly, the generated algebra is the intersection of every $C^*$-subalgebra of $\mathfrak{A}$ which contains every $S_i$. This algebra is denoted by $\vee_i S_i$. The case of a $C^*$-algebra generated by two subsets, $S_{1}, S_2$ is denoted by $S_1 \vee S_2$.
\end{defn}

\begin{defn}
Given a subset, $S$, of a ring, $R$, the \textit{commutant}, denoted by $S'$, of $S$ is the set of all elements in $R$ which commute with respect to multiplication with each element of $S$,
\begin{align}
S' := \{r \in R \,|\, \forall s \in S, r s = s r\}.
\end{align}
\end{defn}

\subsection{Homomorphisms}

\begin{defn}
Given two associative algebras, $\mathfrak{A}$ and $\mathfrak{B}$, a linear map, $\phi:\mathfrak{A} \rightarrow \mathfrak{B}$, satisfying
\begin{align}
\phi(xy) &= \phi(x) \phi(y)
\end{align}
is an \textit{algebra homomorphism}. If $\mathfrak{A}$ and $\mathfrak{B}$ are $*$-algebras, a ${}^*$-\textit{homomorphism} is an algebra homomorphism which satisfies
\begin{align}
\phi(x^*) &= \phi(x)^*.
\end{align} 
A bijective ${}^*$-homomorphism is a ${}^*$-\textit{isomorphism}, and a ${}^*$-isomorphism whose codomain is equal to its domain is a ${}^*$-\textit{automorphism}.
\end{defn}

Homomorphisms can be interpreted as structure preserving maps between algebras. Often, one can utilize a homomorphism to map the solution to a problem in one algebra to a problem in another algebra.

%All $*$-isomorphisms between C^* algebras are isometries ... see Takesaki section 5.

\begin{theorem}[Gelfand-Naimark  \cite{gelfand1943imbedding}] \label{GelfandNaimark}
Every $C^*$-algebra is isometrically $*$-isomorphic to a subalgebra of bounded linear operators on a Hilbert space.
\end{theorem}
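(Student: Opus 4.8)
The plan is to construct the \emph{universal representation} of $\mathfrak{A}$ by taking a direct sum of the cyclic representations furnished by the GNS construction (\cref{GNS-Construction}), indexed by the collection of all states on $\mathfrak{A}$, and then to verify that this single representation is both faithful and isometric.

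First I would, for each state $\omega:\mathfrak{A}\to\mathbb{C}$, invoke \cref{GNS-Construction} to obtain a representation $\pi_\omega:\mathfrak{A}\to\mathcal{B}(\mathcal{H}_\omega)$ together with a unit cyclic vector $\xi_\omega\in\mathcal{H}_\omega$ satisfying $\omega(a)=\braket{\pi_\omega(a)\xi_\omega|\xi_\omega}$. Setting $\mathcal{H}:=\bigoplus_\omega\mathcal{H}_\omega$ and $\pi:=\bigoplus_\omega\pi_\omega$, the map $\pi$ is a $*$-homomorphism into $\mathcal{B}(\mathcal{H})$ whose image is a $*$-subalgebra. It then remains only to prove $\|\pi(a)\|=\|a\|$ for every $a\in\mathfrak{A}$, since an isometric $*$-homomorphism is automatically injective, which makes $\pi$ a $*$-isomorphism onto its image.

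The upper bound $\|\pi(a)\|\le\|a\|$ reduces to showing that each $\pi_\omega$ is a contraction, which is the standard fact that $*$-homomorphisms between $C^*$-algebras are norm non-increasing. Adjoining a unit if necessary, a unital homomorphism cannot enlarge the spectrum, since $a-\lambda\mathbb{1}$ invertible forces $\pi_\omega(a)-\lambda\mathbb{1}$ invertible; hence for self-adjoint $h$ the spectral radius theorem (\cref{thm:Spectral-Radius}) gives $\|\pi_\omega(h)\|=\text{spr}(\pi_\omega(h))\le\text{spr}(h)=\|h\|$, and the $C^*$-identity upgrades this to $\|\pi_\omega(a)\|^2=\|\pi_\omega(a^*a)\|\le\|a^*a\|=\|a\|^2$ for all $a$.

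For the matching lower bound, the crucial step, and the one I expect to be the main obstacle, is producing for each $a$ a single state that \emph{norms} the positive element $a^*a$, i.e.\ a state $\omega$ with $\omega(a^*a)=\|a^*a\|$. I would obtain this by noting that $\|a^*a\|\in\sigma(a^*a)$ for a positive element, defining evaluation at the point $\|a^*a\|$ as a state on the commutative $C^*$-subalgebra generated by $a^*a$ (which is isometrically isomorphic to the continuous functions on its spectrum), and then extending it to a state on all of $\mathfrak{A}$ by a Hahn-Banach argument that preserves both normalization and positivity \cite[prop. 2.3.24]{bratteli1987operator}. For this $\omega$ one has $\|\pi_\omega(a^*a)\|\ge\braket{\pi_\omega(a^*a)\xi_\omega|\xi_\omega}=\omega(a^*a)=\|a^*a\|$, whence $\|\pi(a)\|^2=\|\pi(a^*a)\|=\sup_\omega\|\pi_\omega(a^*a)\|=\|a^*a\|=\|a\|^2$. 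Combining the two bounds yields the isometry, completing the identification of $\mathfrak{A}$ with a $*$-subalgebra of $\mathcal{B}(\mathcal{H})$.
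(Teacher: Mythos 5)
Your proposal is correct, but there is nothing in the paper to compare it against: \cref{GelfandNaimark} is stated with a citation to \cite{gelfand1943imbedding} only, and no proof appears anywhere in the thesis. What you have written is the standard universal-representation proof, and it is assembled precisely from the ingredients the paper does supply elsewhere: the GNS construction (\cref{GNS-Construction}), the contractivity of ${}^*$-homomorphisms (\cref{homomorphism-Contraction}, which you correctly re-derive from spectral permanence, the spectral radius \cref{thm:Spectral-Radius}, and the $C^*$-identity), and the positivity-preserving Hahn--Banach extension of states \cite[prop. 2.3.24]{bratteli1987operator}, which the paper itself invokes in \cref{C*AlgebrasIntro} and \cref{TPM-Equiv-State-Section}. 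You also correctly isolate the crux, namely the existence of a norming state, and your chain $\|\pi_\omega(a^*a)\|\geq\braket{\pi_\omega(a^*a)\xi_\omega|\xi_\omega}=\omega(a^*a)=\|a^*a\|$ is valid since $\xi_\omega$ is a unit vector. Two points deserve to be made explicit. First, defining ``evaluation at the point $\|a^*a\|$'' as a state on the commutative subalgebra generated by $a^*a$ presupposes the \emph{commutative} Gelfand--Naimark theorem (that this subalgebra is isometrically ${}^*$-isomorphic to continuous functions on the spectrum, and that $\|a^*a\|\in\sigma(a^*a)$ because $a^*a$ is positive with real spectrum attaining its spectral radius); since the commutative theorem is proved independently of GNS, there is no circularity, but the dependence should be stated. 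If you wish to avoid commutative theory altogether, you can instead extend the norm-one unital functional $\alpha\mathbb{1}+\beta\, a^*a\mapsto\alpha+\beta\|a^*a\|$ from its two-dimensional domain by Hahn--Banach and verify that a norm-one functional with $\varphi(\mathbb{1})=1$ on a $C^*$-algebra is automatically positive. Second, the identity $\|\pi(b)\|=\sup_\omega\|\pi_\omega(b)\|$ for the direct-sum representation, and the well-definedness of $\pi(b)$ as a bounded operator on $\bigoplus_\omega\mathcal{H}_\omega$ (which requires the uniform bound $\sup_\omega\|\pi_\omega(b)\|\leq\|b\|$ established in your second paragraph), merit a sentence each; with those added, the argument is complete.
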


\subsection{Unital Algebras} \label{Unital-Algebras-Section}
The importance of unital algebras stems from their ability to define inverses and, consequently, spectra of algebra elements.

\begin{defn}
A \textit{unital} associative algebra is an associative algebra, $\mathfrak{A}$ with a two-sided identity with respect to the multiplication, $\gls{id}$, satisfying $\gls{id} a = a = a \gls{id}$ for all $a \in \mathfrak{A}$.
\end{defn}

The identity of a unital associative algebra is unique. The identity of a ${}^*$-algebra satisfies 
\begin{align}
\gls{id} = \gls{id}^*.
\end{align}
In a $C^*$-algebra,
\begin{align}
||\gls{id}|| = 1.
\end{align}

It is often safe to assume that an algebra is unital, since for every algebra which is not unital, we can define a unital algebra, $\tilde{\mathfrak{A}}$, such that $\mathfrak{A}$ is isometrically $*$-isomorphic to a subalgebra of $\tilde{\mathfrak{A}}$.

\begin{defn}
Let $\mathfrak{A}$ be an algebra which is not unital. Consider the set of pairs,
\begin{align}
\tilde{\mathfrak{A}} := \{ (\alpha, a) \,| \, \alpha \in \mathbb{C}, a \in \mathfrak{A} \}.
\end{align}
Given $\alpha, \beta \in \mathbb{C}$, $a,b \in \mathfrak{A}$ the following rules define addition and multiplication in $\tilde{\mathfrak{A}}$:
\begin{align}
\beta (\alpha, a) &:= (\alpha \beta, \beta a) \\
(\alpha, a)(\beta, b) &:= (\alpha \beta, \beta a + \alpha b + ab) \\
(\alpha, a) + (\beta, b) &:= (\alpha + \beta, a + b)
\end{align}
The set $\tilde{\mathfrak{A}}$ with addition and multiplication defined as above is an algebra, referred to as the \textit{unitization} of $\mathfrak{A}$. $\tilde{\mathfrak{A}}$ is a unital algebra with the identity $\gls{id} = (1, 0)$. 
\end{defn}

\begin{proposition}
Let $\mathfrak{A}$ be an associative algebra without an identity. 
If $\mathfrak{A}$ is a ${}^*$-algebra, then $\tilde{\mathfrak{A}}$ with the involution
\begin{align}
(\alpha, a)^* := (\alpha^*, a^*)
\end{align}
is a ${}^*$-algebra. If $\mathfrak{A}$ is a normed algebra, then $\tilde{\mathfrak{A}}$ is a normed algebra with the norm
\begin{align}
||(\alpha, a)|| &:= \sup\{||\alpha b + a b|| \,|\, b \in \mathfrak{A} \text{ and } ||b|| = 1\}.
\end{align}
If $\mathfrak{A}$ is a $C^*$-algebra, then so is $\tilde{\mathfrak{A}}$ \cite[prop. 2.1.5]{bratteli1987operator}.
\end{proposition}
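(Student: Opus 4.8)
The first claim, that $\tilde{\mathfrak{A}}$ is a ${}^*$-algebra, is a routine algebraic verification, so I would dispatch it first. The plan is to check the three axioms of an involution directly from the multiplication rule on $\tilde{\mathfrak{A}}$: the map $(\alpha,a)\mapsto(\alpha^*,a^*)$ is visibly an antilinear involution, and the antihomomorphism property reduces to confirming that
\begin{align}
\big((\alpha,a)(\gamma,b)\big)^* = \big(\overline{\alpha\gamma},\, \overline{\gamma}a^* + \overline{\alpha}b^* + b^*a^*\big) = (\gamma,b)^*(\alpha,a)^*,
\end{align}
which follows by expanding both sides against the product $(\alpha,a)(\gamma,b) = (\alpha\gamma,\gamma a + \alpha b + ab)$. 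No analysis is needed here.

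The organizing idea for the remaining two claims is to recognize the proposed norm as the operator norm of the \emph{left-regular representation} $L:\tilde{\mathfrak{A}}\to\mathcal{B}(\mathfrak{A})$ defined by $L_{(\alpha,a)}(b):=\alpha b + ab$; indeed $\|(\alpha,a)\| = \|L_{(\alpha,a)}\|_{\mathrm{op}}$ by definition. A one-line computation shows $L$ is an algebra homomorphism, $L_{x}L_{y} = L_{xy}$, so submultiplicativity, homogeneity, and the triangle inequality for $\|\cdot\|$ are inherited for free from the operator norm on $\mathcal{B}(\mathfrak{A})$. The only normed-algebra axiom that is not automatic is definiteness, i.e.\ injectivity of $L$. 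I would establish this in the $C^*$ setting as follows: if $\alpha b + ab = 0$ for all $b\in\mathfrak{A}$ with $\alpha\neq 0$, then $-\alpha^{-1}a$ is a left identity, and applying the involution shows $-\overline{\alpha}^{-1}a^*$ is a right identity, which forces $\mathfrak{A}$ to be unital, contradicting the hypothesis. Hence $\alpha = 0$, so $aa^*=0$, and the $C^*$ identity in $\mathfrak{A}$ gives $a=0$.

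The technical heart is the $C^*$ identity for the new norm, and this is the step I expect to be the main obstacle. The key is the algebraic identity
\begin{align}
(L_x b)^*(L_x b) = b^*\, L_{x^*x}(b), \qquad x=(\alpha,a),
\end{align}
obtained by expanding both products and comparing with $x^*x = (|\alpha|^2,\, a^*a + \overline{\alpha}a + \alpha a^*)$. For any $b$ with $\|b\|\leq 1$, this identity together with the $C^*$ identity in $\mathfrak{A}$ gives $\|L_x b\|^2 = \|(L_x b)^*(L_x b)\| = \|b^*\,L_{x^*x}(b)\| \leq \|b^*\|\,\|L_{x^*x}\|\,\|b\| \leq \|x^*x\|$; taking the supremum over such $b$ yields $\|x\|^2 \leq \|x^*x\|$. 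Applying the same inequality to $x^*$ in place of $x$ gives $\|x^*\|\leq\|x\|$ and, by symmetry, $\|x\|\leq\|x^*\|$, so $\|x^*\|=\|x\|$. Combining with submultiplicativity, $\|x^*x\|\leq\|x^*\|\,\|x\| = \|x\|^2 \leq \|x^*x\|$, which establishes $\|x^*x\| = \|x^*\|\,\|x\|$.

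It remains to confirm completeness, so that $\tilde{\mathfrak{A}}$ is genuinely a Banach, hence $C^*$, algebra. Here I would argue that $L$ is an isometry onto the subspace $\mathbb{C}\mathbb{1}+L(\mathfrak{A})\subseteq\mathcal{B}(\mathfrak{A})$: the map $a\mapsto L_a$ is isometric on the complete space $\mathfrak{A}$ (again by the $C^*$ identity applied to $L_a(a^*)$), so $L(\mathfrak{A})$ is closed, and adjoining the one-dimensional space $\mathbb{C}\mathbb{1}$ preserves closedness. As a closed subspace of the complete space $\mathcal{B}(\mathfrak{A})$, the image is complete, and hence so is $\tilde{\mathfrak{A}}$ under $\|\cdot\|$.
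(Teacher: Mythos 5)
Your proof is correct and takes essentially the same route as the paper, which does not prove the proposition itself but defers to Bratteli--Robinson [prop.\ 2.1.5]; that reference's argument is precisely your construction, namely realizing the norm as the operator norm of the left-regular representation $L$, verifying the $C^*$ identity via $(L_x b)^*(L_x b) = b^* L_{x^*x}(b)$, and obtaining completeness from the closedness of $\mathbb{C}\mathbb{1} + L(\mathfrak{A})$ in $\mathcal{B}(\mathfrak{A})$. Your choice to verify definiteness only in the $C^*$ setting is also sound judgment rather than an omission: for a general non-unital normed algebra (e.g.\ one with trivial multiplication) the displayed formula is merely a seminorm, so that step genuinely requires the faithfulness of left multiplication that the $C^*$ structure supplies.
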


\begin{defn}
A map, $\phi$, whose domain and codomains are unital associative algebras is called \textit{unital} if $\phi(\gls{id}) = \gls{id}$.
\end{defn}

\begin{ex} \label{leftMultHomo}
Let $\mathfrak{A}$ be a unital normed algebra. Then, the map $\pi_l:\mathfrak{A} \to \mathcal{B}(\mathfrak{A})$ defined by 
\begin{align}
\pi_l(\xi) \eta = \xi \eta
\end{align}
is a unital algebra homomorphism and an isometry.
\demo
\end{ex}

To conclude this section, I discuss the notion of invertibility for unital associative algebras.
\begin{defn} \label{inverseDefn}
Given a unital associative algebra, $\mathfrak{A}$, an element, $a \in \mathfrak{A}$, is \textit{invertible} if there exists an element, $b \in \mathfrak{A}$, such that $a b = b a = \mathbb{1}$, and $b$ is called an \textit{inverse} of $a$. If an element is invertible, its inverse is unique, and is denoted by $a^{-1}$. The set of all invertible elements in $\mathfrak{A}$ is a group, referred to as the \textit{general linear group} of $\mathfrak{A}$, denoted by $\text{GL}(\mathfrak{A})$. The general linear group of a unital associative algebra always contains the identity, since $\mathbb{1}^{-1} = \mathbb{1}$.
%Defn: A group, $G$, is a set with a binary operation, $\cdot G \times G \to G$ such that \cdot is associative, there exists a two sided identity, and every element is invertible.
%Defn: A division ring is a ring where all nonzero elements are invertible.
%Defn: A division algebra is a division ring which is an associative algebra
\end{defn}
%Be careful about using the letter $G$ in the rest of the thesis now that it's protected!

\begin{defn} \label{similarDefn}
Two elements in a unital associative algebra, $a, b \in \mathfrak{A}$, are \textit{similar} if there exists $s \in \text{GL}(\mathfrak{A})$, referred to as a \textit{similarity transform}, such that $s a s^{-1} = b$.
\end{defn}

\begin{lemma}
Suppose $\phi$ is a $*$-homomorphism whose domain and co-domain are unital $C^*$-algebras. Then $\phi$ is unital if and only if $\phi(\mathbb{1})$ is invertible.
\end{lemma}
\begin{proof}
By \cref{inverseDefn}, the identity in the co-domain of $\phi$ is invertible. Thus, we only need to demonstrate that if $\phi(\mathbb{1})$ is invertible, then $\phi(\mathbb{1}) = \mathbb{1}$. This follows from the following brief computation,
\begin{align}
\mathbb{1} &= \phi(\mathbb{1})^{-1} \phi(\mathbb{1}) \\
&= \phi(\mathbb{1})^{-1} \phi(\mathbb{1} \mathbb{1}) \\
&= \phi(\mathbb{1})^{-1} \phi(\mathbb{1}) \phi(\mathbb{1}) \\
&= \phi(\mathbb{1}).
\end{align}
\end{proof}

\subsection{Spectral Theory} \label{Section:Spectral-Theory}

%Define linear span somewhere 

\begin{defn}[Spectrum] \label{spectrumDefn}
For every element of a unital associative algebra, $a \in \mathfrak{A}$, we define its \textit{spectrum}, $\sigma_{\mathfrak{A}}(a)$, as 
\begin{align}
\sigma_{\mathfrak{A}}(a) := \{\lambda \in \mathbb{C} \,|\, \lambda \gls{id} - a \notin \text{GL}(\mathfrak{A})\}.
\end{align}
If $\mathfrak{A}$ is not unital, then we define the spectrum of $a \in \mathfrak{A}$ through unitization,
\begin{align}
\sigma_{\mathfrak{A}}(a) &:= \sigma_{\tilde{\mathfrak{A}}}(a).
\end{align}
If $A$ is an operator on a Banach space, $X$, the \textit{spectrum} of $A$ is the set
\begin{align}
\sigma(A) = \left\{\lambda \in \mathbb{C} | \lambda \mathbb{1} - A \notin \text{GL}(\mathcal{B}(X))\right\}.
\end{align}
The \textit{spectral radius}, $\text{spr}$, of an element of a Banach algebra is the supremum of the norm of the elements in the spectrum of this element,
\begin{align}
\text{spr}(a) := \sup\{|\lambda|\,|\,\lambda \in \sigma(a)\}.
\end{align} 
\end{defn}

The following theorem can be found in \cite[prop. 2.8]{TakesakiI}.
\begin{theorem}[Spectral Mapping] \label{spectralCalculus}
For every element of a unital Banach algebra, $a \in \mathfrak{A}$, and for every polynomial, $p$, the following identity holds:\footnote{The spectral mapping theorem can be generalized to apply to functions which are holomorphic on an open neighbourhood of the spectrum of $a$, although we have no need to do so in this thesis.}
\begin{align}
\sigma_{\mathfrak{A}}(p(a)) &= p(\sigma_{\mathfrak{A}}(a)).
\end{align}
If $a$ is invertible,
\begin{align}
\sigma_{\mathfrak{A}}(a^{-1}) &= (\sigma_{\mathfrak{A}}(a))^{-1}.
\end{align}
If $a$ is an element of a $C^*$-algebra, then
\begin{align}
\sigma(a^*) &= (\sigma(a))^*.
\end{align}
\end{theorem}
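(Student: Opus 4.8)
The plan is to reduce every assertion to a statement about invertibility in the unital algebra (passing to the unitization $\tilde{\mathfrak{A}}$ first if $\mathfrak{A}$ lacks an identity, exactly as in \cref{spectrumDefn}), and then to exploit the algebraic form of the elements $\mu\mathbb{1}-f(a)$. The one genuinely reusable ingredient I would isolate at the outset is the following lemma: in a unital associative algebra, a product $x_1\cdots x_n$ of \emph{pairwise commuting} elements lies in $\text{GL}(\mathfrak{A})$ if and only if each $x_i$ does. The ``if'' direction is immediate by multiplying the inverses in reverse order; for ``only if'', commutativity lets me write, for each $j$, a two-sided inverse of $x_j$ as $(x_1\cdots x_n)^{-1}\prod_{i\neq j}x_i$. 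This lemma is what converts factorizations into spectral statements.

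For the polynomial identity I would fix $\mu\in\mathbb{C}$ and a nonconstant $p$ with leading coefficient $c\neq 0$, and use the fundamental theorem of algebra to factor $p(z)-\mu = c\prod_{i=1}^{n}(z-\lambda_i)$ over $\mathbb{C}$. Substituting $a$ gives $p(a)-\mu\mathbb{1}=c\prod_{i=1}^{n}(a-\lambda_i\mathbb{1})$, a product of commuting factors. By the lemma, $p(a)-\mu\mathbb{1}$ fails to be invertible precisely when some $a-\lambda_i\mathbb{1}$ fails to be invertible, i.e. when some root $\lambda_i$ of $p(\cdot)-\mu$ lies in $\sigma_{\mathfrak{A}}(a)$. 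Since the $\lambda_i$ are exactly the solutions of $p(\lambda)=\mu$, this says $\mu\in\sigma_{\mathfrak{A}}(p(a))$ iff $\mu\in p(\sigma_{\mathfrak{A}}(a))$, which is the claim. The degenerate constant case $p\equiv c$ I would dispatch separately: then $p(a)=c\mathbb{1}$ and $\sigma_{\mathfrak{A}}(p(a))=\{c\}$, while $p(\sigma_{\mathfrak{A}}(a))=\{c\}$ provided $\sigma_{\mathfrak{A}}(a)\neq\emptyset$, which holds by the classical nonemptiness of the spectrum in a unital Banach algebra.

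The inverse and involution identities follow the same template. For $a\in\text{GL}(\mathfrak{A})$ one has $0\notin\sigma_{\mathfrak{A}}(a)$ and $0\notin\sigma_{\mathfrak{A}}(a^{-1})$, so it suffices to treat $\lambda\neq 0$; there the factorization $\lambda\mathbb{1}-a^{-1}=-\lambda\,a^{-1}(\lambda^{-1}\mathbb{1}-a)$, in which $-\lambda\,a^{-1}$ is invertible, shows $\lambda\mathbb{1}-a^{-1}\in\text{GL}(\mathfrak{A})$ iff $\lambda^{-1}\mathbb{1}-a\in\text{GL}(\mathfrak{A})$, giving $\sigma_{\mathfrak{A}}(a^{-1})=(\sigma_{\mathfrak{A}}(a))^{-1}$. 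For the $C^*$-case I would first record that the involution, being a bijective antihomomorphism (\cref{c*Defn}), satisfies $(x^{-1})^{*}=(x^{*})^{-1}$, so $x$ is invertible iff $x^{*}$ is. Applying this to $\lambda\mathbb{1}-a^{*}=(\overline{\lambda}\mathbb{1}-a)^{*}$ yields $\lambda\in\sigma(a^{*})$ iff $\overline{\lambda}\in\sigma(a)$, that is $\sigma(a^{*})=(\sigma(a))^{*}$.

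None of the steps requires heavy machinery, so the ``hard part'' is really conceptual bookkeeping: making the commuting-product lemma precise, and noticing that the polynomial statement secretly leans on two nonelementary facts — algebraic closedness of $\mathbb{C}$ (so the factorization exists) and nonemptiness of the spectrum in a unital Banach algebra (so the constant case and the surjectivity onto $p(\sigma_{\mathfrak{A}}(a))$ behave correctly). I would also take care that all factorizations are written with their factors in a fixed commuting order, and that the reduction to $\tilde{\mathfrak{A}}$ is invoked uniformly, since $\sigma_{\mathfrak{A}}$ is defined through the unitization whenever $\mathfrak{A}$ is non-unital.
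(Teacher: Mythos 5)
Your proof is correct: the commuting-product invertibility lemma, the factorization $p(a)-\mu\mathbb{1}=c\prod_i(a-\lambda_i\mathbb{1})$ via the fundamental theorem of algebra, the identity $\lambda\mathbb{1}-a^{-1}=-\lambda\,a^{-1}(\lambda^{-1}\mathbb{1}-a)$, and the observation $\lambda\mathbb{1}-a^{*}=(\overline{\lambda}\mathbb{1}-a)^{*}$ all hold, and you correctly flag the two nonelementary inputs (algebraic closedness of $\mathbb{C}$ and nonemptiness of the spectrum, needed for the constant case). The paper gives no proof of its own — it simply cites \cite[prop.~2.8]{TakesakiI} — and your argument is essentially the standard one found in that reference, so there is nothing to reconcile.
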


\begin{theorem}[Spectral Radius] \label{thm:Spectral-Radius}
For every element, $a$, of a Banach algebra, 
\begin{enumerate}
\item ${\normalfont \text{spr}}(a) \leq ||a||$, \\
\item ${\normalfont \text{spr}}(a) = \lim_{n \to \infty} ||a^n||^{1/n}$.
\end{enumerate}
\end{theorem}
\begin{corollary}
If $a$ is an element of a $C^*$-algebra, then $||a|| = \sqrt{{\normalfont \text{spr}}(a^* a)}$.
\end{corollary}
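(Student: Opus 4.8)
The plan is to reduce the statement to the fact that, for a self-adjoint element, the spectral radius coincides with the norm, and then to invoke the $C^*$-identity once more.

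First I would observe that $h := a^{*}a$ is self-adjoint, since $(a^{*}a)^{*} = a^{*}(a^{*})^{*} = a^{*}a$. The heart of the argument is to show that $\text{spr}(h) = ||h||$ for an arbitrary self-adjoint element $h = h^{*}$. Using the $C^*$-identity of \cref{c*Defn} together with $h^{*}=h$, one gets $||h^{2}|| = ||h^{*}h|| = ||h^{*}||\,||h|| = ||h||^{2}$, where the final equality is immediate because $h$ is self-adjoint so that $||h^{*}||=||h||$ trivially. Iterating this relation gives $||h^{2^{k}}|| = ||h||^{2^{k}}$ for every $k \in \mathbb{N}$ by induction on $k$. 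Feeding the dyadic subsequence $n = 2^{k}$ into the spectral radius formula $\text{spr}(h) = \lim_{n\to\infty}||h^{n}||^{1/n}$ of \cref{thm:Spectral-Radius} then yields $\text{spr}(h) = \lim_{k\to\infty}\big(||h||^{2^{k}}\big)^{1/2^{k}} = ||h||$.

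Next I would evaluate $||a^{*}a||$ directly: the $C^*$-identity gives $||a^{*}a|| = ||a^{*}||\,||a||$, and since the involution of a $C^*$-algebra is an isometry, $||a^{*}|| = ||a||$ (the same fact used in the proof of \cref{prop:Pencil-QuasiHerm}), we obtain $||a^{*}a|| = ||a||^{2}$. Combining this with the previous paragraph applied to $h = a^{*}a$ gives $\text{spr}(a^{*}a) = ||a^{*}a|| = ||a||^{2}$, and taking the nonnegative square root produces $||a|| = \sqrt{\text{spr}(a^{*}a)}$, as claimed (the case $a=0$ being trivial).

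The only genuinely delicate point is the spectral-radius-equals-norm step for self-adjoint elements. The power identity $||h^{n}|| = ||h||^{n}$ is established only along the dyadic subsequence $n = 2^{k}$, not for all $n$, so the argument must lean on the \emph{existence} of the full limit in $\text{spr}(h) = \lim_{n}||h^{n}||^{1/n}$ guaranteed by \cref{thm:Spectral-Radius}; once that limit is known to exist, evaluating it along the convenient subsequence $2^{k}$ is legitimate and fixes its value. Everything else is a short application of the $C^*$-identity and the isometry of the involution.
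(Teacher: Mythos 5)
Your proof is correct and follows the same route the paper intends: the corollary is stated as a consequence of the spectral radius theorem (\cref{thm:Spectral-Radius}), and the paper's one-line proof points to exactly the ingredient you use, namely that the involution is an isometry, which together with the $C^*$-identity gives $\norm{a^*a} = \norm{a}^2$ and, via the dyadic-power argument, $\text{spr}(a^*a) = \norm{a^*a}$. Your write-up simply supplies the standard details (self-adjointness of $a^*a$, the identity $\norm{h^{2^k}} = \norm{h}^{2^k}$, and the legitimacy of evaluating the existing limit along the subsequence $n = 2^k$) that the paper leaves implicit.
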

\begin{proof}
One proof of this corollary utilizes the fact that the involution in a $C^*$-algebra is an isometry \cite[Thm. 16.1]{Doran2018}.
\end{proof}

\begin{theorem}
If $a \in \mathfrak{A}$ and $a \in \mathfrak{B}$, where $\mathfrak{A}, \mathfrak{B}$ are $C^*$-algebras, then $\sigma_{\mathfrak{A}}(a) = \sigma_{\mathfrak{B}}(a)$. Thus, in the case where $a$ is an element of a $C^*$-algebra, $\mathfrak{A}$, the shorthand $\sigma(a) := \sigma_{\mathfrak{A}}(a)$ can be safely used.
\end{theorem}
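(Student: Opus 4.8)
The plan is to prove the sharper statement of \emph{spectral permanence}: whenever $\mathfrak{C}$ is a $C^*$-subalgebra of a $C^*$-algebra $\mathfrak{A}$ and $a \in \mathfrak{C}$, one has $\sigma_{\mathfrak{C}}(a) = \sigma_{\mathfrak{A}}(a)$. The theorem as stated reduces to this: let $\mathfrak{C}$ be the unital $C^*$-algebra generated by $a$ and $a^*$ (using \cref{generated-Algebra-defn}). This $\mathfrak{C}$ sits isometrically inside both $\mathfrak{A}$ and $\mathfrak{B}$ and is intrinsically the same object in each, so if I can show $\sigma_{\mathfrak{A}}(a) = \sigma_{\mathfrak{C}}(a)$ and $\sigma_{\mathfrak{B}}(a) = \sigma_{\mathfrak{C}}(a)$, the two spectra agree. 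Passing to unitizations $\tilde{\mathfrak{C}} \subseteq \tilde{\mathfrak{A}}$, which share the adjoined identity $\mathbb{1} = (1,0)$, I may assume throughout that the algebras are unital with a common identity, consistent with how the spectrum of a non-unital element was defined.

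The easy inclusion $\sigma_{\mathfrak{A}}(a) \subseteq \sigma_{\mathfrak{C}}(a)$ is immediate: if $\lambda \mathbb{1} - a$ is invertible in $\mathfrak{C}$, its inverse lies in $\mathfrak{C} \subseteq \mathfrak{A}$, so $\lambda \mathbb{1} - a$ is invertible in $\mathfrak{A}$. The content is the reverse inclusion, equivalently the claim that if $x := \lambda \mathbb{1} - a$ is invertible in $\mathfrak{A}$ then $x^{-1} \in \mathfrak{C}$. First I would reduce to self-adjoint elements: by the $C^*$-identity, $x$ is invertible in a $C^*$-algebra if and only if the self-adjoint elements $x^*x$ and $xx^*$ are invertible, and these lie in $\mathfrak{C}$ since $\mathfrak{C}$ is $*$-closed. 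Hence it suffices to prove that a self-adjoint $h \in \mathfrak{C}$ which is invertible in $\mathfrak{A}$ is invertible in $\mathfrak{C}$, i.e.\ $0 \notin \sigma_{\mathfrak{A}}(h) \Rightarrow 0 \notin \sigma_{\mathfrak{C}}(h)$. Next I would record that self-adjoint spectra are real: for $h = h^*$ the element $u = e^{\mathfrak{i} h}$ is unitary, so $\mathrm{spr}(u), \mathrm{spr}(u^{-1}) \leq 1$ forces $\sigma(u) \subseteq \{|z| = 1\}$, and the spectral mapping \cref{spectralCalculus} (in its holomorphic extension) gives $\sigma(u) = e^{\mathfrak{i}\sigma(h)}$, whence $\sigma_{\mathfrak{A}}(h), \sigma_{\mathfrak{C}}(h) \subseteq \mathbb{R}$.

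The crux is a boundary lemma valid for any unital Banach-algebra inclusion $\mathfrak{C} \subseteq \mathfrak{A}$: the topological boundary satisfies $\partial \sigma_{\mathfrak{C}}(h) \subseteq \sigma_{\mathfrak{A}}(h)$. To see this, take $\lambda \in \partial \sigma_{\mathfrak{C}}(h)$ and a sequence $\lambda_n \to \lambda$ with $\lambda_n \notin \sigma_{\mathfrak{C}}(h)$; the inverses $(\lambda_n \mathbb{1} - h)^{-1}$ all lie in the \emph{closed} subalgebra $\mathfrak{C}$. Were $\lambda \notin \sigma_{\mathfrak{A}}(h)$, continuity of inversion on the open set of invertibles in $\mathfrak{A}$ would make these inverses norm-bounded and convergent in $\mathfrak{A}$, with limit $(\lambda\mathbb{1}-h)^{-1}$; closedness of $\mathfrak{C}$ would place this limit in $\mathfrak{C}$, contradicting $\lambda \in \sigma_{\mathfrak{C}}(h)$ (equivalently, the resolvent norm must diverge as $\lambda_n$ approaches a spectral boundary point). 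With this lemma in hand I conclude quickly: since $\sigma_{\mathfrak{C}}(h) \subseteq \mathbb{R}$ is compact, it has empty interior in $\mathbb{C}$, so every point is a boundary point and $\partial \sigma_{\mathfrak{C}}(h) = \sigma_{\mathfrak{C}}(h)$. Thus $\sigma_{\mathfrak{C}}(h) \subseteq \sigma_{\mathfrak{A}}(h)$, which together with the easy inclusion yields equality, and in particular $0 \notin \sigma_{\mathfrak{A}}(h)$ implies $0 \notin \sigma_{\mathfrak{C}}(h)$. This closes the self-adjoint case and, through the reduction above, the general statement. I expect the boundary lemma — specifically justifying the blow-up of the resolvent norm at a spectral boundary point and the use of closedness of $\mathfrak{C}$ — to be the main obstacle; the reality of self-adjoint spectra and the self-adjoint reduction are standard consequences of the $C^*$-identity and the spectral results already established.
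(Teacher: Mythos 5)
The paper itself offers no proof of this theorem: it is stated as standard background (this is the spectral permanence theorem for $C^*$-algebras, found e.g.\ in Murphy or Bratteli--Robinson), so your argument can only be judged on its own terms. On those terms it is the standard textbook proof, correctly executed. The easy inclusion, the reduction of invertibility of $x$ to invertibility of the self-adjoint elements $x^*x$ and $xx^*$ (with $(x^*x)^{-1}x^*$ and $x^*(xx^*)^{-1}$ as one-sided inverses in $\mathfrak{C}$), the reality of self-adjoint spectra via the unitary $e^{\mathfrak{i}h}$ combined with \cref{thm:Spectral-Radius} and the holomorphic extension of \cref{spectralCalculus}, and the boundary lemma $\partial \sigma_{\mathfrak{C}}(h) \subseteq \sigma_{\mathfrak{A}}(h)$ --- proved soundly from continuity of inversion on the invertible group together with closedness of $\mathfrak{C}$ --- are exactly the right ingredients, and the final step exploiting that $\sigma_{\mathfrak{C}}(h) \subseteq \mathbb{R}$ has empty interior in $\mathbb{C}$, hence equals its own boundary, is correct.

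The one place that needs more care is the unit bookkeeping in your opening reduction. Passing to unitizations is exact only when the algebras are non-unital: there the paper's convention defines the spectrum through unitization, and $\tilde{\mathfrak{C}} \subseteq \tilde{\mathfrak{A}}$ genuinely share the adjoined unit $(1,0)$. But when $\mathfrak{A}$ is unital, the paper computes $\sigma_{\mathfrak{A}}(a)$ with $\mathfrak{A}$'s own unit, and this differs from $\sigma_{\tilde{\mathfrak{A}}}(a)$: the embedded copy of $\mathfrak{A}$ is a proper ideal of $\tilde{\mathfrak{A}}$, so no element of it is invertible there, giving $\sigma_{\tilde{\mathfrak{A}}}(a) = \sigma_{\mathfrak{A}}(a) \cup \{0\}$. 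Your reduction therefore proves $\sigma_{\mathfrak{A}}(a) \cup \{0\} = \sigma_{\mathfrak{B}}(a) \cup \{0\}$ rather than the literal equality. This is less a defect of your proof than of the statement: as phrased the theorem is actually false when the two algebras carry different units --- take $\mathfrak{A} = \mathbb{C}^2$, $\mathfrak{B} = \mathbb{C} \oplus \{0\} \subseteq \mathfrak{A}$, $a = (1,0)$; then $a$ is the unit of $\mathfrak{B}$, so $\sigma_{\mathfrak{B}}(a) = \{1\}$ while $\sigma_{\mathfrak{A}}(a) = \{0,1\}$. The clean fix slots directly into your argument: when $\mathfrak{A}$ is unital, take $\mathfrak{C}$ to be the $C^*$-algebra generated by $a$, $a^*$, and the unit of $\mathfrak{A}$ (using \cref{generated-Algebra-defn}) and run your boundary lemma on the unital inclusion $\mathfrak{C} \subseteq \mathfrak{A}$ directly, with no unitization; the theorem then holds verbatim under the tacit, standard hypothesis that $a$ sits in both algebras through a common unital subalgebra, i.e.\ that the units agree. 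You should state that hypothesis explicitly rather than absorb it silently into the unitization step.
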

%If $a \in \mathcal{B}(\mathcal{H})$ for some Hilbert space, I can use the shorthand...
%In all cases in this thesis, the algebra defining the spectrum should be clear, so we drop the subscript.

\begin{ex} \label{matrixAlgebraExample}
For every $C^*$-algebra, $\mathfrak{A}$, and $n \in \mathbb{N}$, we can construct a $C^*$-algebra denoted by $\gls{MnA}$. This algebra is the set of $n \times n$ matrices whose matrix elements are elements of $\mathfrak{A}$, where the product of $M, M' \in \mathfrak{A}$ is given by the matrix product,
\begin{align}
(M M')_{ik} = \sum_{j = 1}^n M_{ij} M'_{jk}.
\end{align} 
Given a matrix with entries $M_{ij}$, the involution in $\gls{MnA}$ is defined as $(M^*)_{ij} = (M_{ji})^*$. The norm is defined via a $*$-homomorphism from $\gls{MnA}$ to $\oplus_{i = 1}^n \mathcal{B}(\mathcal{H})$, where $\mathcal{H}$ is a Hilbert space given by the Gelfand-Naimark theorem~\eqref{GelfandNaimark}, see \cite[p.2-3]{paulsen2002completely} for details. When $\mathfrak{A} = \mathbb{C}$, the norm is the \textit{spectral norm} of \cref{spectralNorm} and the involution is complex conjugate transposition, as in \cref{conjugateTranspose}.

If $\mathfrak{A}$ is a unital $C^*$-algebra, then so is $\gls{MnA}$. In this case, we will use the shorthand 
\begin{align}
\gls{GLnA} := \text{GL}(\mathfrak{M}_n(\mathfrak{A})). \label{GnAdefn}
\end{align}
\demo
\end{ex}

\begin{theorem} \label{homomorphism-Contraction}
If $\phi$ is a ${}^*$-homomorphism whose domain and codomain are $C^*$-algebras, then $\phi$ is a contraction. If $\phi$ is a unital ${}^*$-homomorphism, then $\sigma(\phi(a)) \subseteq \sigma(a)$ for all $a \in \text{Dom}(\phi)$. If $\phi$ is a unital ${}^*$-isomorphism, then $\phi$ is an isometry and $\sigma(\phi(a)) = \sigma(a)$ {\normalfont \cite[Thm. 4.1.8]{KadisonRingroseI}}. %{\color{red} (is the second part true even without the unital condition?)}
\end{theorem}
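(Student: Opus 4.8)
The plan is to establish the three assertions in a cascade: first the spectral inclusion for unital $*$-homomorphisms, then the contraction property by feeding that inclusion into the $C^*$-identity, and finally the isomorphism statement by applying both results symmetrically to $\phi$ and $\phi^{-1}$. The contraction for genuinely non-unital $\phi$ will be recovered at the very end by unitization.

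First I would prove that a unital $*$-homomorphism satisfies $\sigma(\phi(a)) \subseteq \sigma(a)$. This is immediate from the earlier lemma that a unital $*$-homomorphism preserves inverses, $\phi(b)^{-1} = \phi(b^{-1})$: if $\lambda \notin \sigma(a)$, then $\lambda \mathbb{1} - a$ is invertible, so $\phi(\lambda \mathbb{1} - a) = \lambda \mathbb{1} - \phi(a)$ is invertible, whence $\lambda \notin \sigma(\phi(a))$. Passing to complements gives the inclusion, which is exactly the second sentence of the theorem. The key auxiliary fact for the remaining work is that for a self-adjoint $h = h^*$ in any $C^*$-algebra one has $\|h\| = \text{spr}(h)$; this follows from the corollary $\|h\| = \sqrt{\text{spr}(h^*h)}$ to the spectral radius Theorem~\ref{thm:Spectral-Radius} together with the spectral mapping Theorem~\ref{spectralCalculus}, since $\text{spr}(h^2) = \text{spr}(h)^2$.

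Next, assuming for the moment that $\phi$ is unital, I would compute, using the $C^*$-identity and the isometry of the involution,
\[
\|\phi(a)\|^2 = \|\phi(a)^*\phi(a)\| = \|\phi(a^*a)\|.
\]
Because $a^*a$ is self-adjoint, so is $\phi(a^*a)$, and the auxiliary fact converts both norms into spectral radii; the spectral inclusion just proven gives $\text{spr}(\phi(a^*a)) \leq \text{spr}(a^*a)$, and one more application of the auxiliary fact gives $\text{spr}(a^*a) = \|a^*a\| = \|a\|^2$. Chaining these yields $\|\phi(a)\| \leq \|a\|$. For the isomorphism statement I would then apply the first two parts to both $\phi$ and its inverse $\phi^{-1}$, which is again a unital $*$-homomorphism: contraction of $\phi^{-1}$ gives $\|a\| = \|\phi^{-1}(\phi(a))\| \leq \|\phi(a)\|$, which with $\|\phi(a)\| \leq \|a\|$ forces equality, so $\phi$ is an isometry; similarly $\sigma(a) = \sigma(\phi^{-1}(\phi(a))) \subseteq \sigma(\phi(a)) \subseteq \sigma(a)$ forces $\sigma(\phi(a)) = \sigma(a)$.

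The main obstacle is the non-unital case of the contraction statement, where the spectral-inclusion step is not directly available. I would handle it by passing to the unitizations $\tilde{\mathfrak{A}}, \tilde{\mathfrak{B}}$ and the canonically extended map $\tilde\phi(\alpha, a) = (\alpha, \phi(a))$, which is a unital $*$-homomorphism restricting to $\phi$. Since spectra in a $C^*$-algebra agree with those computed in its unitization, the self-adjoint norm formula and the unital estimate above descend to give $\|\phi(a)\| \leq \|a\|$ for the original $\phi$. The point demanding care is verifying that $\tilde\phi$ is genuinely unital and a $*$-homomorphism and that the $C^*$-norm on $\tilde{\mathfrak{B}}$ restricts correctly to $\mathfrak{B}$, so that the adjoined identity contributes nothing spurious to the estimate.
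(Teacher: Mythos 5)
Your proposal is correct, but note that the paper does not prove this theorem at all: it states it with a citation to Kadison--Ringrose (Thm.\ 4.1.8), so there is no in-paper argument to compare against. What you have written is essentially the standard textbook proof from that literature, assembled correctly from ingredients the paper does provide: the inverse-preservation lemma for unital ${}^*$-homomorphisms gives the spectral inclusion $\sigma(\phi(a)) \subseteq \sigma(a)$; the identity $\|h\| = \text{spr}(h)$ for self-adjoint $h$ (via the spectral radius theorem and spectral mapping) converts that inclusion into the norm bound $\|\phi(a)\|^2 = \|\phi(a^*a)\| = \text{spr}(\phi(a^*a)) \leq \text{spr}(a^*a) = \|a\|^2$; and applying both parts to $\phi$ and $\phi^{-1}$ yields the isometry and spectral equality in the isomorphism case. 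All of these steps are sound, including the implicit fact that $\phi^{-1}$ of a bijective ${}^*$-homomorphism is again a unital ${}^*$-homomorphism.

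The one place you flag but do not fully close is the non-unital reduction, and it deserves a sentence more than you give it. The paper's unitization construction (with the left-multiplication norm) is stated only for algebras \emph{without} identity; if $\mathfrak{A}$ or $\mathfrak{B}$ is already unital, that norm formula degenerates to a seminorm, so you should instead observe that adjoining a fresh unit to a unital $C^*$-algebra produces $\mathfrak{A} \oplus \mathbb{C}$ with norm $\max(\|a + \alpha \mathbb{1}\|, |\alpha|)$, which still restricts isometrically to $\mathfrak{A}$. With that in hand, your extension $\tilde\phi(\alpha,a) = (\alpha, \phi(a))$ is a unital ${}^*$-homomorphism in every case, including the genuinely delicate one where both algebras are unital but $\phi(\mathbb{1}) \neq \mathbb{1}$; there the price is only that the inclusion computed through the unitizations reads $\sigma(\phi(a)) \subseteq \sigma(a) \cup \{0\}$, which leaves the spectral radius, and hence the contraction estimate, unchanged. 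Since the paper defines spectra of non-unital algebra elements through unitization anyway, nothing else in your chain of equalities needs modification.
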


\begin{defn} \label{pointSpectrum}
Given an operator on a Banach space, $A:\text{Dom}(A) \to X$, with $\text{Dom}(A) \subseteq X$, $\lambda \in \mathbb{C}$ is referred to as a \textit{eigenvalue} of $A$ if $\{0\} \subsetneq \ker(\lambda \mathbb{1} - A)$. A vector in $\ker(\lambda \mathbb{1} - A)$ is referred to as an \textit{eigenvector} associated to $\lambda$. The set of all eigenvalues of $A$ is referred to as the \textit{point spectrum} of $A$, and is denoted by $\sigma_p(A)$. Denoting the range of $\lambda \mathbb{1} - A$ by $\mathcal{R}(\lambda \mathbb{1} - A)$, we can further partition the point spectrum into a union of disjoint subsets,
\begin{align}
\sigma_{p,1}(A) &= \{\lambda \in \sigma_p(A)\,|\, \text{cl}(\mathcal{R}(\lambda \mathbb{1} - A)) \neq X\}\\
\sigma_{p,2}(A) &= \{\lambda \in \sigma_p(A)\,|\, \text{cl}(\mathcal{R}(\lambda \mathbb{1} - A)) = X\}.
\end{align}
\end{defn}

\begin{proposition}
Given an operator, $A$, on a finite-dimensional Banach space, $\sigma(A) = \sigma_{p,1}(A)$.
\end{proposition}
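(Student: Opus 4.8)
The plan is to prove the two inclusions $\sigma_{p,1}(A) \subseteq \sigma(A)$ and $\sigma(A) \subseteq \sigma_{p,1}(A)$ separately, with the whole argument resting on the elementary fact that on a finite-dimensional space the three notions of injectivity, surjectivity, and invertibility of a linear operator collapse to one, via the rank-nullity theorem. The first inclusion is immediate from the definitions: by \cref{pointSpectrum}, $\sigma_{p,1}(A)$ is by construction a subset of the point spectrum $\sigma_p(A)$, and every eigenvalue lies in the spectrum. Indeed, if $\lambda \in \sigma_p(A)$ then $\ker(\lambda\mathbb{1} - A) \supsetneq \{0\}$, so $\lambda\mathbb{1} - A$ fails to be injective, hence is not an element of $\text{GL}(\mathcal{B}(X))$, placing $\lambda \in \sigma(A)$ by \cref{spectrumDefn}. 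This chain holds in any dimension and requires no finiteness.

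The substantive direction is $\sigma(A) \subseteq \sigma_{p,1}(A)$, and here I would invoke finite-dimensionality twice. First I would take $\lambda \in \sigma(A)$, so that $\lambda\mathbb{1} - A$ is not invertible. On a finite-dimensional space, an operator that is not invertible cannot be injective (were it injective, rank-nullity would force surjectivity and hence invertibility), so $\ker(\lambda\mathbb{1} - A) \neq \{0\}$ and $\lambda \in \sigma_p(A)$. This already shows $\sigma(A) = \sigma_p(A)$ in finite dimensions. Second, I would upgrade membership in $\sigma_p(A)$ to membership in $\sigma_{p,1}(A)$ by controlling the range: since $\dim \ker(\lambda\mathbb{1} - A) \geq 1$, rank-nullity gives $\dim \mathcal{R}(\lambda\mathbb{1} - A) = \dim X - \dim\ker(\lambda\mathbb{1} - A) < \dim X$, so the range is a proper subspace of $X$.

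The only point needing a word of care is the closure appearing in the definition of $\sigma_{p,1}$. I would note that every linear subspace of a finite-dimensional normed space is closed, so $\text{cl}(\mathcal{R}(\lambda\mathbb{1} - A)) = \mathcal{R}(\lambda\mathbb{1} - A)$, and the strict dimension inequality then yields $\text{cl}(\mathcal{R}(\lambda\mathbb{1} - A)) \neq X$, exactly the defining condition for $\sigma_{p,1}(A)$. Combining both inclusions gives the claimed equality. There is no real obstacle in this argument; the result is essentially a repackaging of rank-nullity, and the mild subtlety is simply recognizing that the topological refinement (closed range versus dense range) that distinguishes $\sigma_{p,1}$ from $\sigma_{p,2}$ becomes vacuous once all subspaces are automatically closed, which is precisely why $\sigma_{p,2}(A)$ is empty in the finite-dimensional setting.
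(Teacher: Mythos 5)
Your proof is correct, and the paper in fact states this proposition without any proof, so there is nothing to diverge from: your two-inclusion argument via rank-nullity, together with the observation that every subspace of a finite-dimensional normed space is closed (so the closure in the definition of $\sigma_{p,1}$ is vacuous and $\sigma_{p,2}(A) = \emptyset$), is exactly the standard argument the paper leaves implicit. The only silent step worth being aware of is that in finite dimensions every linear operator is bounded, so algebraic non-invertibility of $\lambda \mathbb{1} - A$ coincides with $\lambda \mathbb{1} - A \notin \text{GL}(\mathcal{B}(X))$ as required by \cref{spectrumDefn}, which your argument uses correctly.
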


\begin{defn} \label{continuousResidual}
Given an operator on a Banach space, $A:\text{Dom}(A) \to X$, with $\text{Dom}(A) \subseteq X$, the \textit{continuous spectrum}, denoted by $\sigma_c(A)$, is the set of all $\lambda \in \sigma(A)$ such that $\lambda \mathbb{1} - A$ is injective, but not surjective, and has dense range. The set of all elements of the spectrum of $A$ which are not included in the point or continuous spectrum is referred to as the \textit{residual spectrum}, and is denoted by $\sigma_r(A)$.
\end{defn}

\subsection{Positivity} \label{section:Positive}

I initiate this section's discussion of positivity in the abstract setting of a $C^*$-algebra. Then, I relate the notions of positivity to the specific $C^*$-algebra of bounded operators on a Hilbert space.

\begin{defn} \label{positivityDef}
An element, $a$, of a $C^*$-algebra, $\mathfrak{A}$, is \textit{normal} if and only if $a a^* = a^* a$, \textit{Hermitian} if and only if $a = a^*$, and \textit{positive} if and only if $a = b^* b$ for some $b \in \mathfrak{A}$. All Hermitian elements are normal, and all positive elements are Hermitian. I use the abbreviation $a \geq 0$ to mean that $a$ is positive and the set of positive elements of $\mathfrak{A}$ is denoted as $\mathfrak{A}^+$.
\end{defn}

\begin{theorem} \label{thm-Characterize-Positive}
The following definitions for a positive element, $a \in \mathfrak{A}^+$, of a $C^*$-algebra, $\mathfrak{A}$, are equivalent {\normalfont \cite[chp. 2.2.2]{bratteli1987operator}}: 
\begin{enumerate}
\item There exists $b \in \mathfrak{A}$ such that $a = b^* b$ (\cref{positivityDef}). 

\item There exists a unique positive element, $b \in \mathfrak{A}^+$, which is the square root of $a$, $a = b^2$.

\item $a$ is normal and $\sigma(a) \subseteq [0, \infty)$.
\end{enumerate}
\end{theorem}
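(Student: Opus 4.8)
The final statement is Theorem~\ref{thm-Characterize-Positive}, which asserts the equivalence of three characterizations of positivity in a $C^*$-algebra. Let me sketch how I would prove this.

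The plan is to prove the cycle of implications $(2)\Rightarrow(1)\Rightarrow(3)\Rightarrow(2)$, treating the uniqueness assertion in $(2)$ separately. The central tool throughout will be the continuous functional calculus for normal elements: for normal $a$, the $C^*$-subalgebra generated by $a$ and the identity is, by the commutative Gelfand representation, isometrically $*$-isomorphic to $C(\sigma(a))$, so each $f\in C(\sigma(a))$ determines an element $f(a)\in\mathfrak A$ with $\sigma(f(a))=f(\sigma(a))$ (consistent with the spectral mapping \cref{spectralCalculus}) and $f(a)^*=\overline f(a)$. Because $\sqrt{\,\cdot\,}$ and the like are uniform limits of polynomials on compact subsets of $\mathbb R$ by Weierstrass, the elements $f(a)$ produced this way genuinely lie in $\mathfrak A$.

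The implication $(2)\Rightarrow(1)$ is immediate: if $a=b^2$ with $b$ positive, then $b$ is Hermitian, so $a=b^*b$. For $(3)\Rightarrow(2)$, a normal element with $\sigma(a)\subseteq[0,\infty)\subseteq\mathbb R$ is Hermitian by the identity $\sigma(a^*)=\sigma(a)^*$ of \cref{spectralCalculus}; applying the functional calculus with $f(t)=\sqrt t$ yields a Hermitian $b:=f(a)$ with $\sigma(b)=\sqrt{\sigma(a)}\subseteq[0,\infty)$ and $b^2=a$. For uniqueness, if $c\in\mathfrak A^+$ also satisfies $c^2=a$, then $c$ commutes with $c^2=a$ and hence with $b$, which is a norm-limit of polynomials in $a$; within the commutative $C^*$-algebra generated by $b$ and $c$, two positive square roots of the same element coincide, giving $b=c$.

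The main obstacle is $(1)\Rightarrow(3)$, i.e. showing that $a=b^*b$ forces $\sigma(a)\subseteq[0,\infty)$. That $a$ is Hermitian, hence normal with $\sigma(a)\subseteq\mathbb R$, is clear from $a^*=(b^*b)^*=a$. My preferred route to nonnegativity would transport the problem to a concrete Hilbert space via the Gelfand-Naimark \cref{GelfandNaimark}: under the isometric $*$-isomorphism $\pi:\mathfrak A\to\mathcal B(\mathcal H)$, which preserves spectra by \cref{homomorphism-Contraction}, one has $\langle\psi|\pi(a)\psi\rangle=\|\pi(b)\psi\|^2\ge 0$ for all $\psi$, and a standard spectral-theory argument shows that a Hermitian operator with nonnegative quadratic form has spectrum in $[0,\infty)$; since $\sigma(a)=\sigma(\pi(a))$, this gives $(3)$.

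As a purely algebraic alternative avoiding the representation, I would use the decomposition $a=a_+-a_-$ with $a_\pm\ge 0$ and $a_+a_-=0$ obtained from functional calculus, set $c:=b\,\sqrt{a_-}$, and compute $c^*c=\sqrt{a_-}\,a\,\sqrt{a_-}=-a_-^2$, so $\sigma(c^*c)\subseteq(-\infty,0]$. Writing $c=x+\mathfrak{i}y$ with $x,y$ Hermitian gives $c^*c+cc^*=2(x^2+y^2)$, whence $cc^*=2(x^2+y^2)-c^*c$ has nonnegative spectrum; combined with the elementary identity $\sigma(cc^*)\setminus\{0\}=\sigma(c^*c)\setminus\{0\}$ this forces $\sigma(c^*c)=\{0\}$, so $\|c^*c\|=\mathrm{spr}(c^*c)=0$ by \cref{thm:Spectral-Radius}, hence $a_-^2=0$ and $a_-=0$. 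The one auxiliary fact needing care on this route is that $x^2+y^2$ has nonnegative spectrum — that the spectrally-defined positive elements form a cone closed under addition — which I would establish via the characterization that a Hermitian $h$ with $\|h\|\le 1$ is positive iff $\|\mathbb 1-h\|\le 1$.
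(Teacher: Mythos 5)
Your proposal is correct, and it is worth noting at the outset that the paper contains no proof of \cref{thm-Characterize-Positive} at all: the statement is simply attributed to \cite[chp. 2.2.2]{bratteli1987operator}. Your algebraic route for $(1)\Rightarrow(3)$ --- decomposing $a = a_+ - a_-$ via functional calculus, forming $c := b\sqrt{a_-}$, computing $c^*c = -a_-^2$, and playing $\sigma(c^*c)\subseteq(-\infty,0]$ against the identity $\sigma(cc^*)\setminus\{0\} = \sigma(c^*c)\setminus\{0\}$ together with additivity of the cone of spectrally nonnegative Hermitian elements --- is precisely the argument of the cited reference, so you have in effect reconstructed the proof the paper defers to, including the correct identification of the one delicate auxiliary fact (the cone lemma via ``$h$ Hermitian with $\|h\|\le 1$ is positive iff $\|\mathbb{1}-h\|\le 1$'').

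Two caveats. First, your ``preferred'' route through \cref{GelfandNaimark} is foundationally circular in the standard development: Gelfand--Naimark is proved via the GNS construction, whose theory of states presupposes exactly the positivity facts being established here (in particular that elements $b^*b$ have nonnegative spectrum). Within this thesis, where \cref{GelfandNaimark} is stated as a black box, the route is formally admissible, but the algebraic alternative is the one that stands on its own and should be led with. Second, in $(3)\Rightarrow(2)$ the justification ``Hermitian by $\sigma(a^*)=\sigma(a)^*$'' is not quite right --- that identity holds for every element and a real spectrum alone does not force $a^*=a$ (consider a nontrivial Jordan block); what gives Hermiticity is normality together with the Gelfand picture you already set up, in which $a^*$ corresponds to $\bar z$ restricted to $\sigma(a)\subseteq\mathbb{R}$. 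The fix is cosmetic, since the needed tool is in your opening paragraph.
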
 

\begin{defn}
An element of a unital $C^*$-algebra is called \textit{strictly-positive} if and only if it is positive and invertible.
\end{defn}

\begin{lemma}
The inverse of a strictly-positive element is also strictly-positive. %{\color{red} Reference?}
\end{lemma}

\begin{defn}
An operator, $\eta$, on a Hilbert space, $\mathcal{H}$, with $\text{Dom}(\eta) \subseteq \mathcal{H}$ is called \textit{positive-definite} if and only if for every $\psi \in \text{Dom}(\eta) \setminus \{0\}$,
\begin{align}
\braket{\psi| \eta \psi} > 0
\end{align}
\end{defn}

\begin{theorem} \label{thm-Characterize-Positive-Hilbert}
A bounded operator on a Hilbert space, $a \in \mathcal{B}(\mathcal{H})$ is:
\begin{enumerate}
\item \textbf{positive} if and only if for every $\xi \in \mathcal{H}$, $\braket{\xi| a \xi} \geq 0$ {\normalfont \cite[\S 104]{riesz2012functional}}. \\
\item \textbf{positive-definite} if and only if it is positive and injective. \\
\item \textbf{strictly-positive} if and only if there exists a positive real number, $c > 0$, such that $\braket{\xi| a \xi} \geq c$ for all $\xi \in \mathcal{H}$. \label{strictly-positive-characterize}
\end{enumerate}
\end{theorem}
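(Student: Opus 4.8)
The plan is to prove all three equivalences by reducing them to the abstract $C^*$-algebraic characterization of \cref{thm-Characterize-Positive} together with standard facts relating the numerical range of a self-adjoint operator to its spectrum, using that $\mathcal{B}(\mathcal{H})$ is a $C^*$-algebra so that \emph{positive} and \emph{strictly-positive} carry the meanings of \cref{positivityDef}. For part (1), the forward implication is immediate: if $a = b^\dag b$ then $\braket{\xi|a\xi} = \braket{b\xi|b\xi} = \|b\xi\|^2 \geq 0$. For the converse I would first note that $\braket{\xi|a\xi} \in \mathbb{R}$ for every $\xi$ forces $a = a^\dag$ via the polarization identity, so $a$ is Hermitian, hence normal with $\sigma(a) \subseteq \mathbb{R}$. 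The crucial step is to identify $m := \inf_{\|\xi\| = 1} \braket{\xi|a\xi}$ as a point of $\sigma(a)$ with $\sigma(a) \subseteq [m, \infty)$; the hypothesis gives $m \geq 0$, so $\sigma(a) \subseteq [0,\infty)$, and then \cref{thm-Characterize-Positive} (condition 3) delivers positivity.

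For part (2), positivity from part (1) plus the hypothesis makes injectivity direct, since $a\xi = 0 \Rightarrow \braket{\xi|a\xi} = 0 \Rightarrow \xi = 0$. Conversely, writing $a = b^2$ with $b = a^{1/2}$ the unique positive square root guaranteed by \cref{thm-Characterize-Positive} (condition 2), the vanishing $\braket{\xi|a\xi} = \|b\xi\|^2 = 0$ forces $b\xi = 0$ and hence $a\xi = 0$, so injectivity upgrades $\braket{\xi|a\xi} \geq 0$ to $\braket{\xi|a\xi} > 0$ for all $\xi \neq 0$, which is positive-definiteness.

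For part (3) — reading the stated condition as $\braket{\xi|a\xi} \geq c\|\xi\|^2$ for all $\xi$, equivalently $\braket{\xi|a\xi} \geq c$ on the unit sphere — the forward direction uses that invertibility of $a$ makes $a^{1/2}$ invertible: by the spectral mapping \cref{spectralCalculus} applied to $p(x) = x^2$, one has $\{s^2 : s \in \sigma(a^{1/2})\} = \sigma(a) \not\ni 0$, so $0 \notin \sigma(a^{1/2})$ and $a^{1/2}$ is bounded below, whence $\braket{\xi|a\xi} = \|a^{1/2}\xi\|^2 \geq \|(a^{1/2})^{-1}\|^{-2}\|\xi\|^2$. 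The converse combines part (1) with a bounded-below argument: Cauchy--Schwarz gives $\|a\xi\|\,\|\xi\| \geq \braket{\xi|a\xi} \geq c\|\xi\|^2$, so $\|a\xi\| \geq c\|\xi\|$, i.e.\ $a$ is bounded below; since $a = a^\dag$ is likewise bounded below, the bijectivity criterion stated just above (a bounded operator is bijective iff it and its adjoint are bounded below) makes $a$ invertible, hence strictly-positive.

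The main obstacle is the spectral/numerical-range bridge underlying part (1): establishing that $\inf_{\|\xi\|=1}\braket{\xi|a\xi}$ is an element of $\sigma(a)$ and bounds the spectrum from below for self-adjoint $a$. This is the only step that is not a routine manipulation, and once it is in place the remaining two parts follow cheaply, since they reuse part (1), the square-root functional calculus, and the already-quoted bijectivity criterion.
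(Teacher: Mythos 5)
Your proposal is correct and takes essentially the same route as the paper, whose entire proof is the remark that part (3) follows from the Banach bounded inverse \cref{BoundedInverseTheorem} (with part (1) delegated to the Riesz--Nagy citation): your converse for part (3) via Cauchy--Schwarz, bounded-below-ness of $a = a^\dag$, and the bijectivity criterion is exactly that remark made explicit, and your reduction of parts (1) and (2) to \cref{thm-Characterize-Positive} and the positive square root is the intended ``straightforward'' content. The spectral/numerical-range bridge you single out --- that $\inf_{\|\xi\|=1}\braket{\xi|a\xi}$ lies in $\sigma(a)$ and bounds it below for self-adjoint $a$ --- is precisely what the cited reference supplies, so leaving it as a standard fact matches the paper's own level of detail (and you were right to read the condition in part (3) as $\braket{\xi|a\xi} \geq c\|\xi\|^2$).
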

\begin{proof}
While the proof is straightforward, I mention that \cref{strictly-positive-characterize} follows from the Banach bounded inverse \cref{BoundedInverseTheorem}.
\end{proof}

\subsection{Complete Positivity}

Completely positive maps are used in physics to define quantum channels.
One textbook devoted to complete positivity is \cite{paulsen2002completely}.

\begin{defn}
A map, $Q: \mathfrak{A} \rightarrow \mathfrak{B}$, where $\mathfrak{A}, \mathfrak{B}$ are both $C^*$-algebras\footnote{Completely positive maps are defined in the same way for maps whose domain is an \textit{operator system}, which is a $*$-closed vector subspace of a $C^*$-algebra.}, is called \textit{positive} if $a \geq 0$ implies $Q(a) \geq 0$.
$Q$ is \textit{completely positive} if for all $n \in \gls{Z+}$, the map $Q_n: \gls{MnA} \rightarrow \mathfrak{M}_n(\mathfrak{B})$ defined by $(Q_n(M))_{ij} = (Q(M_{ij}))$ is positive.
\end{defn}

\begin{ex}
$^*$-homomorphisms are completely positive.
\demo
\end{ex}

\begin{ex} \label{cpexample}
Given a unital $C^*$-algebra, $\mathfrak{A}$, and an element $b \in \mathfrak{A}$, the map $Q: \mathfrak{A} \to \mathfrak{A}$ defined by $Q(a) = b^* a b$ is completely positive. 
\demo
\end{ex}

When $b$ from \cref{cpexample} is a unitary on a Hilbert space, corresponding to time evolution, a physicist would identify the operation $Q(a)$ as the effect of evolving $a$ in the Heisenberg picture.

\begin{thm}[Choi's theorem \cite{Choi1975}] \label{Choi-Theorem}
$\Phi:\mathfrak{M}_n(\mathbb{C}) \to \mathfrak{M}_m(\mathbb{C})$ is completely positive if and only if \begin{align}
\sum_{i = 1}^n \sum_{j = 1}^n E_{ij}\otimes \Phi(E_{ij}) \geq 0,
\end{align}
where $E_{ij} \in \mathfrak{M}_n(\mathbb{C})$ is the matrix with entries $(E_{ij})_{kl} := \delta_{ik} \delta_{jl}$.
\end{thm}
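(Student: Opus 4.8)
The plan is to recognize $C_\Phi := \sum_{i,j} E_{ij} \otimes \Phi(E_{ij}) \in \mathfrak{M}_n(\mathfrak{M}_m(\mathbb{C}))$ as the \emph{Choi matrix} of $\Phi$ and to prove the two implications separately. The forward direction is the easy one. First I would observe that the "input" matrix $\sum_{i,j} E_{ij} \otimes E_{ij}$ is manifestly positive: setting $\omega := \sum_{k=1}^n e_k \otimes e_k \in \mathbb{C}^n \otimes \mathbb{C}^n$, a one-line computation gives $\omega\omega^\dagger = \sum_{k,l} E_{kl} \otimes E_{kl}$, which is positive as the rank-one element $|\omega\rangle\langle\omega|$. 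Reading $\mathfrak{M}_n(\mathfrak{M}_n(\mathbb{C}))$ blockwise, $C_\Phi$ is exactly the image of $\omega\omega^\dagger$ under the ampliation $\Phi_n$ appearing in the definition of complete positivity, i.e.\ the map acting on each of the $n^2$ blocks by $\Phi$. Hence if $\Phi$ is completely positive, the $N=n$ instance tells us $\Phi_n$ is positive, so $C_\Phi = \Phi_n(\omega\omega^\dagger) \geq 0$.

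For the converse I would manufacture a Kraus decomposition of $\Phi$ out of the spectral decomposition of $C_\Phi$. Assuming $C_\Phi \geq 0$, \cref{thm-Characterize-Positive} (or directly the spectral theorem) lets me write $C_\Phi = \sum_k v_k v_k^\dagger$ with $v_k \in \mathbb{C}^n \otimes \mathbb{C}^m$. The key device is to \emph{reshape} each $v_k$ into an operator $V_k : \mathbb{C}^n \to \mathbb{C}^m$, defined uniquely by $v_k = \sum_i e_i \otimes (V_k e_i)$. The central computation is then to extract the $(i,j)$-block of $C_\Phi$ in two ways: from $v_k v_k^\dagger = \sum_{i,j} E_{ij} \otimes \big(V_k e_i (V_k e_j)^\dagger\big)$ one reads off that the $(i,j)$-block of $\sum_k v_k v_k^\dagger$ is $\sum_k V_k E_{ij} V_k^\dagger$, whereas by construction the $(i,j)$-block of $C_\Phi$ is $\Phi(E_{ij})$. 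Equating blocks yields $\Phi(E_{ij}) = \sum_k V_k E_{ij} V_k^\dagger$ for all $i,j$, and since $\{E_{ij}\}$ spans $\mathfrak{M}_n(\mathbb{C})$, linearity upgrades this to $\Phi(A) = \sum_k V_k A V_k^\dagger$ for every $A$.

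To finish, each summand $A \mapsto V_k A V_k^\dagger$ is completely positive: this is the content of \cref{cpexample}, with $V_k^\dagger$ and $V_k$ playing the roles of $b$ and $b^*$, and the argument given there (that the blockwise ampliation is $M \mapsto (\mathbb{1} \otimes V_k)\, M\, (\mathbb{1} \otimes V_k)^\dagger$, hence positivity-preserving) goes through verbatim even when the domain and codomain dimensions differ. A finite sum of completely positive maps is completely positive, so $\Phi$ is completely positive, closing the argument.

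I expect the main obstacle to be organizational rather than conceptual: pinning down the index bookkeeping of the reshaping $v_k \leftrightarrow V_k$ and verifying cleanly that the $(i,j)$-block of $v_k v_k^\dagger$ equals $V_k E_{ij} V_k^\dagger$. This is precisely where the tensor-factor conventions must be fixed once and for all — which copy of $\mathbb{C}^n$ carries the block index $E_{ij}$ — and kept consistent with the blockwise definition of $\Phi_n$; a mismatched convention here would silently transpose or conjugate the Kraus operators without otherwise breaking the computation.
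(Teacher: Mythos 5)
The paper does not actually prove this theorem --- it is stated with a citation to Choi's original article --- and your argument is precisely the standard proof from that reference (Choi matrix as the ampliation applied to $\omega\omega^\dagger = \sum_{k,l} E_{kl}\otimes E_{kl}$ for the forward direction, spectral decomposition plus reshaping $v_k = \sum_i e_i \otimes (V_k e_i)$ into Kraus operators for the converse), so there is no divergence to report. Your proof is correct, including the point you flag yourself: the block conventions you fix are consistent with the statement's ordering $E_{ij}\otimes\Phi(E_{ij})$, and the rectangular adaptation of \cref{cpexample} to $A \mapsto V_k A V_k^\dagger$ goes through exactly as you claim.
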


\begin{defn} \label{stateDef}
A \textit{normalized} linear functional on a $C^*$-algebra, $\varphi:\mathfrak{A}\rightarrow \mathbb{C}$, satisfies $||\varphi|| = 1$. A positive normalized linear functional is referred to as a \textit{state}.
\end{defn}
\begin{lemma}
A positive linear functional on a unital $C^*$-algebra is a state if and only if the functional is unital.
\end{lemma}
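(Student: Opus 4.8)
The statement to prove is: \emph{A positive linear functional on a unital $C^*$-algebra is a state if and only if the functional is unital.} Recall that a state is defined (\cref{stateDef}) as a positive normalized linear functional, where normalized means $||\varphi|| = 1$, and unital means $\varphi(\mathbb{1}) = 1$. So the content of the lemma is precisely the equivalence, for a positive linear functional $\varphi$ on a unital $C^*$-algebra, between $||\varphi|| = 1$ and $\varphi(\mathbb{1}) = 1$.

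\textbf{Approach.} The plan is to prove the sharper underlying fact that for \emph{any} positive linear functional $\varphi$ on a unital $C^*$-algebra, the operator norm is attained at the identity, i.e. $||\varphi|| = \varphi(\mathbb{1})$. Once this identity is established, the lemma is immediate: $||\varphi|| = 1$ holds if and only if $\varphi(\mathbb{1}) = 1$, which is exactly the claim that $\varphi$ is a state if and only if it is unital. First I would record the two elementary facts I will lean on. The first is that $\varphi(\mathbb{1}) \geq 0$ and is real, since $\mathbb{1} = \mathbb{1}^* \mathbb{1}$ is positive (\cref{positivityDef}) and positive functionals send positive elements to nonnegative reals. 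The second is the Cauchy--Schwarz inequality for positive functionals: the sesquilinear form $(a,b) \mapsto \varphi(a^* b)$ is positive semidefinite, hence satisfies $|\varphi(a^* b)|^2 \leq \varphi(a^* a)\,\varphi(b^* b)$ for all $a, b \in \mathfrak{A}$.

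\textbf{Key steps.} The inequality $\varphi(\mathbb{1}) \leq ||\varphi||$ is trivial since $||\mathbb{1}|| = 1$ in a unital $C^*$-algebra. For the reverse inequality $||\varphi|| \leq \varphi(\mathbb{1})$, I would take an arbitrary $a \in \mathfrak{A}$ with $||a|| \leq 1$ and apply Cauchy--Schwarz with the choices $b = a$ and the left slot filled by $\mathbb{1}$, giving
\begin{align}
|\varphi(a)|^2 = |\varphi(\mathbb{1}^* a)|^2 \leq \varphi(\mathbb{1}^* \mathbb{1})\,\varphi(a^* a) = \varphi(\mathbb{1})\,\varphi(a^* a).
\end{align}
Now $a^* a$ is positive with $\sigma(a^* a) \subseteq [0, ||a||^2] \subseteq [0,1]$, so $\mathbb{1} - a^* a$ is a positive element (its spectrum lies in $[0,1]$ by the spectral mapping \cref{spectralCalculus} and the characterization of positivity via the spectrum in \cref{thm-Characterize-Positive}). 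Positivity of $\varphi$ then yields $\varphi(\mathbb{1} - a^* a) \geq 0$, i.e. $\varphi(a^* a) \leq \varphi(\mathbb{1})$. Combining, $|\varphi(a)|^2 \leq \varphi(\mathbb{1})^2$, so $|\varphi(a)| \leq \varphi(\mathbb{1})$ for every $a$ in the closed unit ball. Taking the supremum over such $a$ gives $||\varphi|| \leq \varphi(\mathbb{1})$, and hence $||\varphi|| = \varphi(\mathbb{1})$.

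\textbf{Main obstacle.} The genuinely nontrivial ingredient is the Cauchy--Schwarz inequality for positive functionals, which itself requires knowing that a positive linear functional need not a priori be assumed bounded or even symmetric in the form $(a,b) \mapsto \varphi(a^*b)$; one must verify that this form is a positive semidefinite Hermitian form so that the classical Cauchy--Schwarz argument applies. The subtlety is that $\varphi(a^* b)$ need not be symmetric, but the form is conjugate-symmetric because $\varphi(a^* b) = \overline{\varphi(b^* a)}$ follows from positivity (expanding $\varphi((a + \lambda b)^*(a + \lambda b)) \geq 0$ over scalars $\lambda$). I would spend the bulk of the argument establishing this Hermitian-form property and the resulting inequality, after which the norm identity and the final equivalence follow by the short chain of estimates above. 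Everything else---the positivity of $\mathbb{1} - a^* a$ and the reality of $\varphi(\mathbb{1})$---is routine given the spectral characterization of positivity already available in \cref{thm-Characterize-Positive}.
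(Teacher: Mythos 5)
Your proof is correct. The paper states this lemma without any proof, so there is no in-paper argument to compare against; what you give is the standard textbook argument, establishing the sharper identity $||\varphi|| = \varphi(\mathbb{1})$ via the Cauchy--Schwarz inequality for the positive semidefinite Hermitian form $(a,b) \mapsto \varphi(a^* b)$, together with the positivity of $\mathbb{1} - a^* a$ for $||a|| \leq 1$ (cf.\ \cite[prop.\ 2.3.11]{bratteli1987operator}). One small ordering point: since a positive linear functional is not a priori bounded, the ``trivial'' inequality $\varphi(\mathbb{1}) \leq ||\varphi||$ should come second --- the Cauchy--Schwarz estimate $|\varphi(a)| \leq \varphi(\mathbb{1})$ on the unit ball simultaneously proves boundedness and $||\varphi|| \leq \varphi(\mathbb{1})$, after which $\varphi(\mathbb{1}) \leq ||\varphi||$ follows from $||\mathbb{1}|| = 1$; you flag exactly this subtlety in your final paragraph, so the argument is complete.
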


The following theorem characterizes states on the algebra of finite-dimensional matrices as \textit{density matrices}, which are positive trace-class operators with trace equal to one.
\begin{theorem}
$\varphi$ is a state on $\mathfrak{M}_n(\mathbb{C})$ if and only if there exists $\rho \in \mathfrak{M}_n(\mathbb{C})^+$ satisfying $\text{Tr}(\rho) = 1$ such that 
\begin{align}
\varphi(a) = {\normalfont \text{Tr} {}{}} (\rho a). \label{finiteDimensionalState}
\end{align} 
\end{theorem}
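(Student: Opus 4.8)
The plan is to prove the two implications separately, using the Hilbert space structure of $\mathfrak{M}_n(\mathbb{C})$ under the Frobenius inner product $\braket{M|N}_F = \text{Tr}(M^\dag N)$ together with the characterization of positive operators in \cref{thm-Characterize-Positive-Hilbert}. Throughout I would exploit cyclicity of the trace and the fact that $\varphi$, being a state on a unital $C^*$-algebra, is unital (by the lemma that a positive linear functional is a state iff it is unital).

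For the easy direction ($\Leftarrow$), I would assume $\rho \in \mathfrak{M}_n(\mathbb{C})^+$ with $\text{Tr}(\rho)=1$ and verify that $\varphi(a) := \text{Tr}(\rho a)$ is a state. Linearity is immediate from linearity of the trace. For positivity, I would invoke the unique positive square root $\sigma \in \mathfrak{M}_n(\mathbb{C})^+$ with $\rho = \sigma^2$ guaranteed by \cref{thm-Characterize-Positive}, and use cyclicity to write $\text{Tr}(\rho a) = \text{Tr}(\sigma a \sigma)$; when $a \geq 0$ the matrix $\sigma a \sigma$ is positive, since $\braket{v|\sigma a \sigma v} = \braket{\sigma v| a\, \sigma v} \geq 0$ for every $v$, and the trace of a positive matrix is nonnegative. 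Finally $\varphi(\mathbb{1}) = \text{Tr}(\rho) = 1$ shows $\varphi$ is unital, so the positivity--unitality lemma confirms $\varphi$ is a state.

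For the main direction ($\Rightarrow$), I would begin from the observation that $\varphi$ is a linear functional on the finite-dimensional Hilbert space $(\mathfrak{M}_n(\mathbb{C}), \braket{\cdot|\cdot}_F)$, hence bounded, so the Riesz Representation \cref{Riesz} yields a unique $f_\varphi$ with $\varphi(a) = \braket{f_\varphi|a}_F = \text{Tr}(f_\varphi^\dag a)$. Setting $\rho := f_\varphi^\dag$ then gives $\varphi(a) = \text{Tr}(\rho a)$, and $\text{Tr}(\rho) = \varphi(\mathbb{1}) = 1$ follows from unitality of the state. It remains only to show $\rho \in \mathfrak{M}_n(\mathbb{C})^+$.

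The substantive step, and the one I expect to demand the most care, is establishing positivity of $\rho$; I would split this into Hermiticity and nonnegativity. For Hermiticity, I would use that a positive functional satisfies $\varphi(a^*) = \overline{\varphi(a)}$, so $\varphi(a) \in \mathbb{R}$ whenever $a = a^*$; combined with the identity $\overline{\text{Tr}(\rho a)} = \text{Tr}(\rho^\dag a)$ for Hermitian $a$ (which follows from $\overline{\text{Tr}(M)} = \text{Tr}(M^\dag)$ and cyclicity), this forces $\text{Tr}((\rho - \rho^\dag)a) = 0$ for all Hermitian $a$. Writing $\rho - \rho^\dag = \mathfrak{i} K$ with $K$ Hermitian and testing against $a = K$ gives $\text{Tr}(K^2) = 0$, whence $K = 0$ and $\rho = \rho^\dag$. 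For nonnegativity, I would note that each rank-one matrix $v v^\dag$ is positive, so $0 \leq \varphi(v v^\dag) = \text{Tr}(\rho\, v v^\dag) = \braket{v|\rho v}$ for every $v \in \mathbb{C}^n$; by \cref{thm-Characterize-Positive-Hilbert} this means $\rho \geq 0$. Thus $\rho \in \mathfrak{M}_n(\mathbb{C})^+$ has unit trace and represents $\varphi$, completing the argument.
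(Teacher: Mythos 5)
Your proposal is correct and takes essentially the same route as the paper: the forward direction uses the Riesz representation theorem in the Frobenius inner product to produce $\rho$ and unitality of the state to get $\text{Tr}(\rho) = 1$, while the converse is a direct verification (the paper via the spectral decomposition $\rho = \sum_k p_k \ket{k}\bra{k}$, you via the unique positive square root and cyclicity of the trace — an equivalent computation). If anything, your explicit Hermiticity argument for $\rho$ is slightly more careful than the paper's, which jumps from ``$\rho$ is not positive'' to the existence of $\ket{\xi}$ with $\braket{\xi|\rho|\xi} < 0$ without first establishing $\rho = \rho^\dag$; your step testing $\text{Tr}((\rho - \rho^\dag)a) = 0$ against $a = K$ closes that small gap.
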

\begin{proof}
One proof is given in \cite{finiteC*}. I present an alternative here.

Given a linear functional of the form \cref{finiteDimensionalState}, the assumption $\text{Tr}(\rho) = 1$ implies $\varphi$ is normalized. By the spectral theorem for positive matrices, there exists a decomposition $\rho = \sum_k p_k \ket{k}\bra{k}$, where $p_k \geq 0$ and $\ket{k} \in \mathbb{C}^n$ is an orthonormal basis, so the proof that $\varphi$ is positive follows from $\text{Tr}(\rho M) = \sum_k p_k \braket{k| M | k}$. 

For the other direction of the proof, note $\mathfrak{M}_n(\mathbb{C})$ is a Hilbert space with the Frobenius inner product, which is defined in \cref{hilbertSchmidtHilbertAlg}. Thus, by the Riesz representation \cref{Riesz}, every linear functional $\varphi: \mathfrak{M}_n(\mathbb{C}) \to \mathbb{C}$ can be expressed as 
\begin{align}
\varphi(M) = \braket{\rho|M}_{HS}
\end{align}
for some $\rho \in \mathfrak{M}_n(\mathbb{C})$. For $\varphi$ to be normalized, we must have $\text{Tr}(\rho) = 1$. Assuming $\varphi$ is positive implies $\rho \geq 0$, as proven by contradiction: If $\rho$ is not positive, then there exists $\ket{\xi} \in \mathbb{C}^n$ such that 
\begin{align}
\varphi(\ket{\xi} \bra{\xi}) &= \text{Tr} (\rho \ket{\xi} \bra{\xi}) \\
&= \braket{\xi| \rho |\xi} < 0,
\end{align}
which is a contradiction since $\ket{\xi} \bra{\xi}$ is positive.
\end{proof}

\section{Measure Theory}
\begin{definition} \label{sigmaAlgAppendix}
A $\sigma$-\textit{algebra} over the set $X$ is a collection of subsets, $\Sigma \subseteq \mathbb{P}(X)$, satisfying
\begin{enumerate}
\item $X \in \Sigma$ 
\item $E \in \Sigma \, \Rightarrow \, X \setminus E \in \Sigma$ 
\item If $\{E_i\}_{i \in \gls{N}} \subseteq \Sigma$, then $\cup_{i \in \gls{N}} E_i \in \Sigma$.
\end{enumerate}
\end{definition}

\begin{definition} \label{Borel}
The $\sigma$-algebra \textit{generated by} a collection of subsets, $S \in \mathbb{P}(X)$, is the intersection of every $\sigma$-algebra over $X$ which contains every element of $S$. Given a topological space, $X$, the \textit{Borel} $\sigma$-\textit{algebra}, denoted by $\mathfrak{B}_X$, is the $\sigma$-algebra generated by the topology on $X$. 
\end{definition}

\begin{ex}
The Borel $\sigma$-algebra generated by a discrete topological space is the discrete topology.
\demo
\end{ex}

\begin{definition}
Given a $\sigma$-algebra, $\Sigma$, a \textit{positive measure} is a map, $\mu:\Sigma \to [0, +\infty]$ which satisfies $\mu(\emptyset) = 0$ and 
\begin{align}
\mu(\sum_{i \in \gls{N}} E_i) = \sum_{i \in \gls{N}} \mu(E_i) &\quad& \text{if } \{E_i\}_{i \in \gls{N}} \subseteq \Sigma \text{ and } E_i \cap E_j = \emptyset \text{ when } i \neq j.
\end{align}
\end{definition}

\begin{ex}
The \textit{Lebesgue measure} is a measure over the $\sigma$-algebra of subsets of $\mathbb{R}^n$ satisfying the \textit{Carathéodory criterion}, which contains the Borel $\sigma$-algebra generated by the Euclidean topology on $\mathbb{R}^n$. The Lebesgue measure was introduced in \cite{Lebesgue1902} and Carathéodory's criterion was introduced in \cite{caratheodory1914lineare,Edgar2019}.
\demo
\end{ex}

\begin{definition} \label{operator-Valued-Measures}%(From \cite[p. 48-49]{Paulsen}, \cite[Def. 13.25]{moretti2013spectral}). 
Let $\Sigma$ be a $\sigma$-algebra and let $\mathcal{H}$ be a Hilbert space. An operator valued measure, $E$, is a map, $E: \Sigma \to \mathcal{B}(\mathcal{H})$ that is weakly countably additive. More precisely, if $B_i \in \Sigma$ is a countable collection of disjoint sets with union $\cup_i B_i = B$, then
\begin{align}
\braket{E(B) x, y} = \sum_i \braket{E(B_i) x, y}. \label{weakAdditive}
\end{align}
for all $x, y \in \mathcal{H}$. Furthermore,
\begin{itemize}
%\item $E$ is $\mathcal{B}(\mathcal{H})$-\textit{valued} if $E(\mathfrak{B}_X) \subseteq \mathcal{B}(\mathcal{H})$.
%
%\item $E$ is \textit{bounded} if $\sup\{||E(B)|| \, | \, B \in \mathfrak{B}_X\} < \infty$.
%
%\item $E$ is \textit{regular} if for all $x,y$, the complex measure given by $\mu_{x,y} = \braket{E(B) x, y}$ is regular.
%
%\item $E$ is \textit{positive} if $E(B) \geq 0$ for all $B$.

\item $E$ is \textit{self-adjoint} if $E(B)$ is self-adjoint for all $B$.

\item $E$ is \textit{spectral} if $E(B_1 \cap B_2) = E(B_1) E(B_2)$ for all $B_1, B_2 \in \Sigma$.

\item $E$ is \textit{projection-valued} if it is self-adjoint, spectral, and $E(X) = \gls{id}$.
\end{itemize}
\end{definition}

\begin{ex}
Let $\mathfrak{X}$ be an orthonormal basis of a Hilbert space. The induced norm topology on $\mathfrak{X}$ is the discrete topology on $\mathfrak{X}$. The map $E: \mathbb{P}(B) \to \mathcal{B}(\mathcal{H})$ defined by $E(S) := \mathscr{P}_{\text{cl}(\text{span}(S)}$ is a projective valued measure.
\demo
\end{ex}

\begin{ex}
Given two Hilbert spaces, $\mathcal{H}$ and $\mathcal{K}$, a unital $*$-homomorphism, $\pi: \mathcal{B}(\mathcal{H}) \to \mathcal{B}(\mathcal{K})$, and a projective valued measure $E: \mathfrak{B}_X \to \mathcal{B}(H)$, the map $\pi \circ E$ is a projective valued measure.
\demo
\end{ex}

\textit{Integration} is a linear map on (measurable) functions over measure spaces. Integration can be performed with respect to measures either or with respect to projective-valued measures. No integrals with respect to operator-valued measures are explicitly performed in this thesis; thus, instead of explicitly defining integration, I refer the reader to a reference \cite[\S 8.3.2]{moretti2013spectral}.
%LINE750SpectralTODO
\begin{theorem}[Spectral Theorem] \label{SpectralTheorem}
For every self-adjoint linear operator on a Hilbert space, $A$, there exists a unique projection-valued measure, $P_A:\mathfrak{B}_{\sigma(A)} \to \mathcal{B}(\mathcal{H})$ such that
\begin{align}
A = \int_{\sigma(A)} \lambda \, d P_A(\lambda).
\end{align} 
Furthermore, the spectrum of $A$ is a subset of the reals, 
\begin{align}
\sigma(A) \subseteq \mathbb{R}.
\end{align}
{\normalfont \cite[Thm. 10.4]{hall2013quantum}}
\end{theorem}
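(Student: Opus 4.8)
The plan is to establish the result first for bounded self-adjoint operators and then bootstrap to the possibly unbounded case via the Cayley transform. For reality of the spectrum in the bounded case, I would fix $\lambda = a + \mathfrak{i} b$ with $a,b \in \mathbb{R}$ and $b \neq 0$, and use the self-adjointness identity
\begin{align}
||(A - \lambda \mathbb{1})\psi||^2 = ||(A - a\mathbb{1})\psi||^2 + b^2 ||\psi||^2 \geq b^2 ||\psi||^2,
\end{align}
which shows $A - \lambda \mathbb{1}$ is bounded below; applying the same estimate to $(A - \lambda \mathbb{1})^\dag = A - \lambda^* \mathbb{1}$ shows its range is dense, so by the characterization of bijectivity through bounded-below operators stated earlier, $A - \lambda \mathbb{1}$ is invertible and $\lambda \notin \sigma(A)$. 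Hence $\sigma(A) \subseteq \mathbb{R}$.

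Next I would construct the functional calculus. The $C^*$-subalgebra of $\mathcal{B}(\mathcal{H})$ generated by $A$ and $\mathbb{1}$ is commutative, so the commutative form of the Gelfand-Naimark \cref{GelfandNaimark} identifies it isometrically with $C(\sigma(A))$ by a $*$-isomorphism carrying the identity function to $A$; this yields the continuous functional calculus $f \mapsto f(A)$. To pass to bounded Borel $f$, I would, for each $\psi, \phi \in \mathcal{H}$, apply the Riesz representation theorem to the bounded functional $f \mapsto \braket{\psi|f(A) \phi}$ on $C(\sigma(A))$, obtaining a complex Borel measure $\mu_{\psi,\phi}$, and then define
\begin{align}
\braket{\psi|f(A) \phi} := \int_{\sigma(A)} f \, d\mu_{\psi,\phi}.
\end{align}

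The projection-valued measure is then $P_A(B) := \mathbb{1}_B(A)$ for $B \in \mathfrak{B}_{\sigma(A)}$. I would verify the defining properties from the definition of a projection-valued measure: self-adjointness holds since $\mathbb{1}_B$ is real-valued, the spectral property $P_A(B_1 \cap B_2) = P_A(B_1) P_A(B_2)$ follows from $\mathbb{1}_{B_1 \cap B_2} = \mathbb{1}_{B_1}\mathbb{1}_{B_2}$ together with multiplicativity of the calculus, $P_A(\sigma(A)) = \mathbb{1}$ is immediate, and weak countable additivity follows from dominated convergence applied to the $\mu_{\psi,\phi}$. Approximating the identity function by simple functions uniformly on the compact set $\sigma(A)$ then yields $A = \int_{\sigma(A)} \lambda \, dP_A(\lambda)$. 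Uniqueness would follow because any representing measure reproduces the continuous calculus on polynomials in $\lambda$, hence on all of $C(\sigma(A))$ by Stone-Weierstrass, and a monotone-class argument forces agreement on indicator functions.

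Finally, for an unbounded self-adjoint $A$, I would pass through the Cayley transform $U = (A - \mathfrak{i}\mathbb{1})(A + \mathfrak{i}\mathbb{1})^{-1}$, which is unitary precisely because $A$ is self-adjoint; applying the bounded spectral theorem to the normal operator $U$ gives a projection-valued measure on the unit circle, which I transport back to $\mathbb{R}$ under the homeomorphism $\lambda \mapsto (\lambda - \mathfrak{i})/(\lambda + \mathfrak{i})$, and the reality statement $\sigma(A) \subseteq \mathbb{R}$ transfers automatically since the transported measure is supported in $\mathbb{R}$. The main obstacle I anticipate is the domain bookkeeping in this last step: one must show that the operator $\int_{\mathbb{R}} \lambda \, dP_A(\lambda)$ reconstructed from the transported measure has domain exactly $\text{Dom}(A)$ and agrees with $A$ there. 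This is delicate because $1$ belongs to $\sigma(U)$ only in a limiting sense and the integrand $\lambda$ is unbounded, so the required estimates on $\int |\lambda|^2 \, d\mu_{\psi,\psi}$ must be matched carefully against membership in $\text{Dom}(A)$.
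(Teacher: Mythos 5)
The paper does not actually prove this statement: it is quoted as background with the citation to \cite[Thm. 10.4]{hall2013quantum} standing in for the proof, so there is no internal argument to compare yours against. Judged on its own merits, your outline follows the standard and correct route — the bounded-below estimate for reality of the spectrum (correctly combined with \cref{thm-bounded-below-inverse} and the characterization of bijectivity via $A$ and $A^\dag$ both bounded below), the continuous functional calculus from commutative Gelfand theory, the Borel extension through the matrix-coefficient measures $\mu_{\psi,\phi}$, the projection-valued measure via indicator functions, and the Cayley transform for the unbounded case — and this is essentially the proof strategy of the cited reference and its peers.

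Two repairs are needed. First, and this is the only genuine logical gap: in the final step you apply ``the bounded spectral theorem to the normal operator $U$,'' but the theorem you constructed applies only to bounded \emph{self-adjoint} operators, and a unitary is not self-adjoint. The fix is cheap — your Gelfand argument used only that the $C^*$-algebra generated by $A$ and $\mathbb{1}$ is commutative, which holds equally when $A$ is any normal element — so you should state and prove the bounded case for normal operators from the outset; as written, the Cayley-transform step does not literally follow from what precedes it. Relatedly, before the transported measure lives on $\mathbb{R}$ at all, you must verify $P_U(\{1\}) = 0$; this follows from $U - \mathbb{1} = -2\mathfrak{i}(A + \mathfrak{i}\mathbb{1})^{-1}$ being injective together with the fact that, for a normal operator, the point mass of its PVM projects onto the corresponding eigenspace. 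Your closing remark that the domain bookkeeping (matching $\int |\lambda|^2 \, d\mu_{\psi,\psi} < \infty$ against membership in $\text{Dom}(A)$) is the crux is accurate, and flagging it honestly rather than waving it away is the right call. Second, two citation mismatches against this paper's toolkit: the ``Riesz representation theorem'' you invoke for the functional $f \mapsto \braket{\psi|f(A)\phi}$ on $C(\sigma(A))$ is the Riesz--Markov theorem for measures on compact Hausdorff spaces, not the Hilbert-space \cref{Riesz} stated in the appendix; and \cref{GelfandNaimark} as stated in this paper is the noncommutative embedding theorem, whereas you need the commutative characterization $\mathfrak{A} \cong C(X)$, which the paper never states. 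Neither is a mathematical error, but your proof as written leans on results the paper does not supply in the form you cite them.
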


\section{Chebyshev Polynomials} \label{Chebyshev Appendix}
This section tabulates the formulas satisfied by the Chebyshev polynomials that are applied in this thesis. Many more properties of Chebyshev polynomials can be located in resources such as \cite{abramowitz1972handbook,olver2010nist,mason2002chebyshev}. The Chebyshev polynomials are sequences of orthogonal %\footnote{The orthogonality relations satisfied by the Chebyshev polynomials are not used in this thesis.} 
polynomials. All four polynomials, $P_n(x)$, are solutions to the linear recurrence relation
\begin{align}
P_n(x) = 2 x P_{n-1}(x) - P_{n-2}(x),
\end{align}
where $n \in \mathbb{Z}$ and $x \in \mathbb{C}$.
These polynomials merely differ in prescriptions for the initial conditions of the recurrence relation. The four kinds of Chebyshev polynomials and their root sets, $P_n^{-1}(\{0\})$, are displayed in \cref{ChebyshevTable}.

\begin{table*}[htp!]
\centering
\begingroup
\setlength{\tabcolsep}{8pt} % Default value: 6pt
\renewcommand{\arraystretch}{2} % Default value: 1
\begin{tabular}{|m{3cm}|l|l|l|l|}
\hline
Kind, $P_n$
& First, $T_n$
& Second, $U_n$
& Third, $V_n$
& Fourth, $W_n$ \\
\hhline{|=|=|=|=|=|}
$P_n(\cos \theta)$
& $\cos(n \theta)$
& $\dfrac{\sin((n+1) \theta)}{\sin \theta}$ 
& $\dfrac{\cos((n+1/2) \theta)}{\cos(\theta/2)}$
& $\dfrac{\sin((n+1/2) \theta)}{\sin(\theta/2)}$ \\
\hline
$P_n\left(\dfrac{z + z^{-1}}{2}\right)$
& $\dfrac{z^n + z^{-n}}{2}$
& $\dfrac{z^{n+1}-z^{-n-1}}{z - z^{-1}}$
& $\dfrac{z^{n+1/2}+z^{-n-1/2}}{z^{1/2} + z^{-1/2}}$
& $\dfrac{z^{n+1/2}-z^{-n-1/2}}{z^{1/2} - z^{-1/2}}$ \\
\hline
\parbox{3cm}{$\arccos(P_n^{-1}(\{0\}))$ \\ $k \in \{1, \dots, n\}$}
& $\dfrac{(2k-1) \pi}{n} $ 
& $\dfrac{k \pi}{n+1}$ 
& $\dfrac{(2 k-1) \pi}{2 n+1} $ 
& $\dfrac{2 k \pi}{2 n+1} $ \\
\hline
$P_n(1)$
& $1$ 
& $n+1$ 
& $1$ 
& $2n+1$ \\
\hline
$P_n(-1)$
& $(-1)^n$ 
& $(-1)^n (n+1)$
& $(-1)^n (2n+1)$
& $(-1)^n$ \\
\hline
\end{tabular}
\centering
\caption{Definitions and roots of the Chebyshev polynomials, see \cite[chp. 2]{mason2002chebyshev} for more details. A root for a Chebyshev polynomial is derived from an entry in the second row by applying the cosine.}
\label{ChebyshevTable}
\endgroup
\end{table*}

Further linear relations between the Chebyshev polynomials include
\begin{align}
2T_n(x) &= U_n(x) - U_{n-2}(x)                &\quad& \text{\cite[eq. (1.7)]{mason2002chebyshev}}\\
V_n(x) &= U_n(x) - U_{n-1}(x)             &\quad& \text{\cite[eq. (1.17)]{mason2002chebyshev}} \label{ChebyshevV} \\
W_n(x) &= U_n(x) + U_{n-1}(x)             &\quad& \text{\cite[eq. (1.18)]{mason2002chebyshev}} \label{ChebyshevW} \\
U_{n \pm 1}(x) &= x U_n(x) \pm T_{n+1}(x) &\quad&
\end{align}
Consequently, properties of the any kind of Chebyshev polynomial can often be determined from corresponding properties of Chebyshev polynomials of the second kind. 

The derivative is
\begin{align}
U_n'(x) &= \begin{cases}
\dfrac{-nx U_n(x)+(n+1) U_{n+1}(x)}{1-x^2}& \text{if } x \neq \pm 1 \\
(\pm 1)^{n+1}\left( \dfrac{n(n+1)(n+2)}{3}\right)& \text{if } x = \pm 1
\end{cases} 
\end{align}

An interesting composition identity of the Chebyshev polynomials is\footnote{This formula appeared as a textbook problem before the publication of both references mentioned here, see \cite[\S 1.5]{mason2002chebyshev}.}
\cite[Lemma 3]{Zhang2004}\cite[Thm. 5]{Brandi2020}
\begin{align}
U_{mk-1}(x) = U_{k-1} \left(T_m (x) \right) U_{m-1} \left(x \right),
\end{align}

The resultant between two Chebyshev polynomials of the second kind is \cite{Jacobs2011,Louboutin2013}
\begin{align}
&\text{Res}_x\left(U_n(x), U_m(x)\right)\nonumber\\ &= \begin{cases}
0 & \text{if } \text{gcd}(m+1,n+1) > 1 \\
(-1)^{mn/2} * 2^{mn} & \text{if } \text{gcd}(m,n) = 1
\end{cases}.
\end{align}

Some nifty product formulae include
\begin{align}
U_{n-m-1}(x) &= U_{n-1}(x) U_{m}(x) - U_{m-1}(x) U_n(x)\\
U_{2m}(x) &= U_{m}^2(x) - U_{m-1}^2(x).
\end{align} 

A series expansion is
\begin{align}
U_n(x) &= \sum^{\floor{n/2}}_{i = 0} (-1)^m \frac{(n-m)!}{m! (n-2m)!} (2x)^{n-2m}.
\end{align}

\section{Proof of Theorem~\ref{unbrokenRealSpectrum}} \label{UnbrokenTheoremProof}

As a reminder, one of the claims proven in \cref{pseudoEquivThm} is that a matrix is similar to a matrix with real elements if and only if it has an involutive antilinear symmetry.

This section references the Jordan normal form of an operator and the notion of unbroken antilinear symmetry. In the interest of remaining self-contained, I review these concepts before their usage.
\begin{theorem}[Jordan Normal Form]
Given a matrix $A \in \mathfrak{M}_n(\mathbb{C})$, there exists a set of projectors, $P_\lambda$, and a set of nilpotents\footnote{An operator is \textit{nilpotent} if there exists a finite power of this operator which equals zero.}, $D_\lambda$, such that {\normalfont \cite[chp. I \S 5]{Kato1995}}
\begin{align}
A = \sum_{\lambda \in \sigma(A)} \lambda P_\lambda + D_\lambda.
\end{align}
The projectors and nilpotents satisfy the identities
\begin{align}
P_\lambda D_{\lambda'} &= D_\lambda P_{\lambda'} = \delta^\lambda_{\lambda'} P_\lambda \\
D_\lambda D_{\lambda'} &= \delta^\lambda_{\lambda'} D^2_\lambda \\
P_\lambda P_{\lambda'} &= \delta^\lambda_{\lambda'} P_\lambda \\
\sum_{\lambda \in \sigma(A)} P_\lambda &= \mathbb{1}.
\end{align}
\end{theorem}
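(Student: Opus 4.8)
The plan is to construct the eigenprojections $P_\lambda$ and eigennilpotents $D_\lambda$ directly from the resolvent, following the contour-integral method of Kato (Ch.~I \S 5). Write $R(z) := (z\mathbb{1} - A)^{-1}$, which by \cref{spectrumDefn} is defined for every $z$ in the resolvent set $\mathbb{C}\setminus\sigma(A)$. Since $A$ is a finite matrix, $\sigma(A)$ is the finite set of roots of $\det(z\mathbb{1}-A)$, and $R(z)$ is a $\mathfrak{M}_n(\mathbb{C})$-valued function that is holomorphic off $\sigma(A)$ with an isolated pole at each $\lambda\in\sigma(A)$. For each eigenvalue I would fix a small, positively oriented circle $\Gamma_\lambda$ enclosing $\lambda$ and no other spectral point, and define
\begin{align}
P_\lambda := \frac{1}{2\pi\mathfrak{i}} \oint_{\Gamma_\lambda} R(z)\, dz, \qquad D_\lambda := \frac{1}{2\pi\mathfrak{i}} \oint_{\Gamma_\lambda} (z-\lambda) R(z)\, dz.
\end{align}
The central computational tool is the first resolvent identity $R(z)-R(w) = -(z-w)R(z)R(w)$, valid for $z,w\notin\sigma(A)$, which in particular shows that resolvents at distinct points commute.

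First I would establish idempotency and orthogonality $P_\lambda P_{\lambda'} = \delta^\lambda_{\lambda'}P_\lambda$. For $\lambda\neq\lambda'$ I would take $\Gamma_\lambda,\Gamma_{\lambda'}$ disjoint, multiply the defining integrals, substitute the resolvent identity for $R(z)R(w)$, and observe that each resulting iterated integral vanishes because the inner integrand is holomorphic inside the pertinent contour. For $\lambda=\lambda'$ I would use two nested circles $\Gamma_\lambda,\Gamma'_\lambda$ about the same eigenvalue, apply the resolvent identity, and evaluate the two iterated integrals by the residue theorem; one contributes $P_\lambda$ and the other vanishes, giving $P_\lambda^2=P_\lambda$. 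Next I would prove $\sum_\lambda P_\lambda = \mathbb{1}$: summing the $P_\lambda$ amounts to integrating $R(z)$ over a contour enclosing all of $\sigma(A)$, which deforms to a large circle, and inserting the Neumann expansion $R(z)=\sum_{k=0}^\infty A^k z^{-k-1}$ valid for large $|z|$ leaves only the $k=0$ residue, namely $\mathbb{1}$.

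For the nilpotents, since $A$ commutes with $R(z)$ for every $z$ it commutes with each $P_\lambda$, and a direct manipulation gives $D_\lambda = (A-\lambda\mathbb{1})P_\lambda = P_\lambda(A-\lambda\mathbb{1})P_\lambda$; the relations $P_\lambda D_{\lambda'}=D_\lambda P_{\lambda'}=\delta^\lambda_{\lambda'}D_\lambda$ and $D_\lambda D_{\lambda'}=\delta^\lambda_{\lambda'}D_\lambda^2$ then all reduce to the already-proven projector identities. For nilpotency I would show, by the same nested-contour argument used for idempotency, that $D_\lambda^k = \frac{1}{2\pi\mathfrak{i}}\oint_{\Gamma_\lambda}(z-\lambda)^k R(z)\,dz$; because $R(z)$ has a pole of \emph{finite} order at $\lambda$ (the Laurent expansion of a rational matrix function has finitely many principal-part terms), for $k$ large enough the integrand is holomorphic across $\Gamma_\lambda$ and the integral vanishes, so $D_\lambda^{m_\lambda}=0$. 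Assembling everything yields the claim, since
\begin{align}
A = A\sum_{\lambda\in\sigma(A)} P_\lambda = \sum_{\lambda\in\sigma(A)} \big(\lambda\mathbb{1} + (A-\lambda\mathbb{1})\big)P_\lambda = \sum_{\lambda\in\sigma(A)} \big(\lambda P_\lambda + D_\lambda\big).
\end{align}

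The main obstacle will be the contour bookkeeping that converts the resolvent identity into the algebraic relations, especially verifying $P_\lambda^2=P_\lambda$ and deriving the integral formula for $D_\lambda^k$ that forces nilpotency; these are the only steps requiring genuine care, the rest being formal consequences. As an alternative I could instead prove the statement purely algebraically: since $\mathbb{C}$ is algebraically closed the minimal polynomial factors as $\prod_\lambda (x-\lambda)^{m_\lambda}$ with pairwise coprime factors, so the primary decomposition theorem gives $\mathbb{C}^n=\bigoplus_\lambda\ker(A-\lambda\mathbb{1})^{m_\lambda}$, the $P_\lambda$ being the projections onto these generalized eigenspaces (obtained from a B\'ezout partition of unity among the coprime factors) and $D_\lambda:=(A-\lambda\mathbb{1})P_\lambda$ being nilpotent by construction; there the main obstacle shifts to proving coprimality and the partition-of-unity step.
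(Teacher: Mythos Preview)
Your proposal is correct and follows precisely the resolvent contour-integral method of Kato, Ch.~I \S 5, which is exactly what the paper cites; the paper itself does not supply an independent proof of this theorem but merely states it as background with that citation. One small remark: the identity $P_\lambda D_{\lambda'}=D_\lambda P_{\lambda'}=\delta^\lambda_{\lambda'}D_\lambda$ that you write is the correct one, whereas the right-hand side $\delta^\lambda_{\lambda'}P_\lambda$ appearing in the stated theorem is evidently a typographical slip.
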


\begin{defn}
Given an antilinear operator on a Hilbert space, $\Theta$, a linear operator, $A: \text{Dom}(A) \to \mathcal{H}$, is called $\Theta$-\textit{unbroken} if and only if $\Theta$ is an antilinear symmetry of $A$ and the eigenspaces of $A$ are invariant subspaces of $\Theta$,
\begin{align}
\Theta \ker (\lambda \mathbb{1} - A) \subseteq \ker (\lambda \mathbb{1} - A)  &\quad& \forall \lambda \in \sigma_p(A).
\end{align}
\end{defn}

\begin{thm}
The spectrum of a matrix, $A \in \mathfrak{M}_n(\mathbb{C})$, is a subset of the reals, $\sigma(A) \subset \mathbb{R}$, if and only if there exists a involutive antilinear symmetry, $\Theta$, so that $\Theta^2 = \gls{id}$, such that $A$ is $\Theta$-unbroken. If $A$ is $\Theta$-unbroken for one invertible antilinear symmetry, then $A$ is $\Theta$-unbroken for all invertible antilinear symmetries.
\end{thm}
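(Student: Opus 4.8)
The plan is to reduce the entire statement to two facts already in hand: part~(2) of \cref{PT-Unbroken-Lemma}, which asserts that for a linear map with an \emph{injective} antilinear symmetry $\Theta$ the eigenspace $\ker(\lambda\mathbb{1}-A)$ is $\Theta$-invariant if and only if $\lambda\in\mathbb{R}$; and the equivalence, proven in \cref{pseudoEquivThm}, between similarity to a real matrix and the existence of an involutive antilinear symmetry. The key observations that glue these together are that an involution $\Theta$ (with $\Theta^2=\mathbb{1}$) is automatically injective, so \cref{PT-Unbroken-Lemma} applies to it, and that in finite dimensions the spectrum is purely point spectrum, $\sigma(A)=\sigma_p(A)$.

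For the ``only if'' direction I would start from a $\Theta$-unbroken matrix $A$ with $\Theta^2=\mathbb{1}$. By the definition of $\Theta$-unbroken every eigenspace is $\Theta$-invariant, so \cref{PT-Unbroken-Lemma}(2) forces each $\lambda\in\sigma_p(A)$ to be real; since $\sigma(A)=\sigma_p(A)$ for matrices, this gives $\sigma(A)\subset\mathbb{R}$. For the ``if'' direction, assume $\sigma(A)\subset\mathbb{R}$. Passing to the Jordan normal form, reality of every eigenvalue makes the Jordan matrix real, so $A$ is similar to a matrix with real entries; by \cref{pseudoEquivThm} this produces an involutive antilinear symmetry $\Theta$, $\Theta^2=\mathbb{1}$. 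This $\Theta$ is injective, and because every eigenvalue is real, \cref{PT-Unbroken-Lemma}(2) shows every eigenspace is $\Theta$-invariant, which is exactly the assertion that $A$ is $\Theta$-unbroken.

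The final clause follows by the same mechanism. If $A$ is $\Theta$-unbroken for one invertible (hence injective) antilinear symmetry, the ``only if'' argument already yields $\sigma(A)\subset\mathbb{R}$; then for any other invertible antilinear symmetry $\Theta'$ of $A$, injectivity of $\Theta'$ together with reality of all eigenvalues and \cref{PT-Unbroken-Lemma}(2) makes every eigenspace $\Theta'$-invariant, so $A$ is $\Theta'$-unbroken as well. I do not anticipate a deep obstacle here: the only step requiring genuine care is the existence of the involutive symmetry in the ``if'' direction, and this is precisely what \cref{pseudoEquivThm} supplies, so the work amounts to verifying that a matrix with real spectrum meets its hypotheses---namely that such a matrix is similar to a real matrix, which is immediate from the Jordan form.
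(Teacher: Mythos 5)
Your proposal is correct and takes essentially the same route as the paper's proof: reality of the spectrum yields a real Jordan normal form, hence an involutive antilinear symmetry via \cref{pseudoEquivThm}, while both the unbroken-implies-real direction and the final clause about arbitrary invertible antilinear symmetries reduce, exactly as in the paper, to part~(2) of \cref{PT-Unbroken-Lemma} applied to an injective $\Theta$. The only cosmetic difference is that where you re-invoke that lemma to verify $\Theta$-invariance of eigenspaces for real eigenvalues, the paper writes out the equivalent kernel computation $(\lambda\mathbb{1}-A)\,\Theta\ker(\lambda\mathbb{1}-A)=\Theta(\lambda\mathbb{1}-A)\ker(\lambda\mathbb{1}-A)=\{0\}$ directly.
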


\begin{proof}
First, I show that matrices with real eigenvalues have an unbroken antilinear symmetry satisfying $\Theta^2 = \gls{id}$, and that every antilinear symmetry of such a matrix is unbroken. Every matrix is similar to its Jordan normal form \cite[\S 7.3]{hoffmann1971linear}. In the case where a matrix has a real spectrum, its Jordan normal form is real, so by \cref{pseudoEquivThm}, a matrix with real eigenvalues has an antilinear symmetry satisfying $\Theta^2 = \gls{id}$.

For my next trick, I prove that a matrix with real eigenvalues, $A$, is $\Theta$-unbroken for every invertible antilinear $\Theta$ satisfying $\Theta A= A \Theta$. This is because for every eigenvalue $\lambda \in \sigma_p(A)$, since $\lambda$ is real by supposition,
\begin{align}
\{0\} &= (\lambda \mathbb{1} - A) \ker(\lambda \mathbb{1} - A) \\
&= \Theta (\lambda \mathbb{1} - A) \ker(\lambda \mathbb{1} - A) \\
&= (\lambda \mathbb{1} - A) \Theta \ker(\lambda \mathbb{1} - A).
\end{align}
Thus, $\ker (\lambda \mathbb{1} - A)$ is an invariant subspace of $\Theta$ for all $\lambda \in \sigma_p(A)$, so $\Theta$ is unbroken by definition. 

The second direction of the proof in the case where all eigenvalues of $A$ are simple is given in \cite{bender1999pt}. The general case follows from the definition of unbroken symmetry and \cref{PT-Unbroken-Lemma-2} of \cref{PT-Unbroken-Lemma}.
\end{proof}

\pagebreak

\begin{figure}
\includegraphics[height=.6\paperheight]{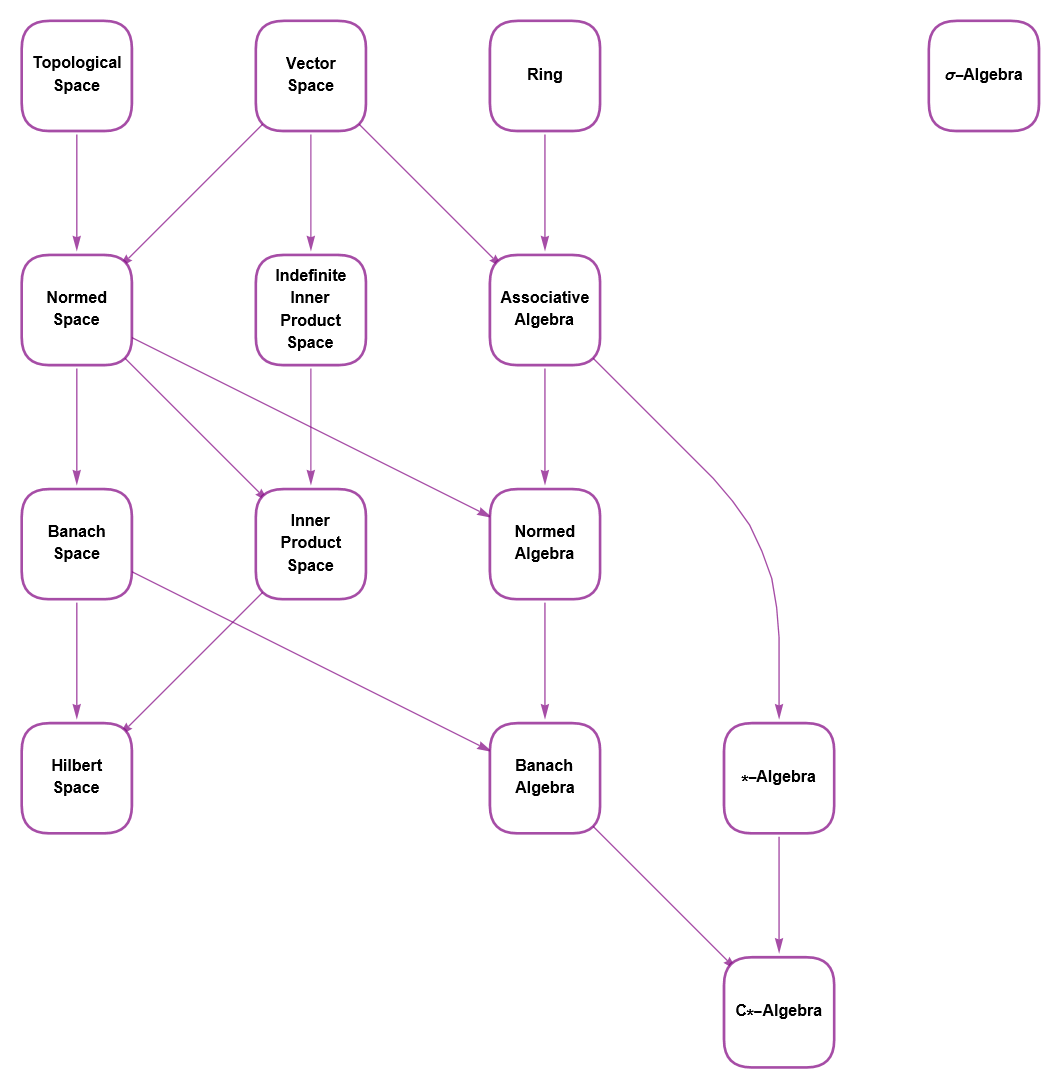}
\caption{This graph summarizes the mathematical spaces studied in this appendix and their relationships. An arrow pointing from one space to another indicates that all elements of the latter space are included in the former.}
\label{graphOfSpaces}
\end{figure}

\pagebreak

%\include{Fermions}

%\include{MaximalSymmetryBreaking}

%======================================================================
% Tip 4: Example of how to get a shorter chapter title for the Table of Contents 
%======================================================================
%\section{Using the GUI}
%Properties of Matab plots can be adjusted from the plot window via a graphical interface. Under the Desktop menu in the Figure window, select the Property Editor. You may also want to check the Plot Browser and Figure Palette for more tools. To adjust properties of the axes, look under the Edit menu and select Axes Properties.
%
%To set the figure size and to save as PDF or other file formats, click the Export Setup button in the figure Property Editor.

%----------------------------------------------------------------------
% END MATERIAL
%----------------------------------------------------------------------

\end{document}